\def\l{{\mathbf{l}}}
\newtheorem{theorem}{Theorem}[chapter]
\newtheorem{lemma}[theorem]{Lemma}
\newtheorem{lemme}[theorem]{Lemma}
\newtheorem{exemple}[theorem]{Example}
\newtheorem{proposition}[theorem]{Proposition}
\newtheorem{corollary}[theorem]{Corollary}
\theoremstyle{definition}
\newtheorem{definition}[theorem]{Definition}
\newtheorem{notation}[theorem]{Notation}
\theoremstyle{remark}
\newtheorem{remarque}[theorem]{Remark}
\newtheorem{conjecture}[theorem]{Conjecture}
\numberwithin{section}{chapter}
\numberwithin{equation}{chapter}
\begin{document}

\frontmatter

\title{Planar Markovian Holonomy Fields}

%    Remove any unused author tags.

%    author one information
\author{Franck Gabriel}
\address{UPMC, 4 Place Jussieu, 75005 Paris (France)}
\thanks{Supported by the ``Contrats Doctoraux du minist\`{e}re fran\c{c}ais de la recherche"  and the ERC grant, ``Behaviour near criticality,Ó held by M. Hairer.}
\curraddr{Mathematics Institute, University Of Warwick, Gibbet Hill Rd, Coventry CV4 7AL, (United-Kingdom)}
\email{franck.gabriel@normalesup.org}

%    author two information
%\author{}
%\address{}
%\curraddr{}
%\email{}
%\thanks{}

%    \date is required; it is the date received by the editor.
\date{10/24/2016}

\subjclass[2010]{81T13, 81T40, 60G60, 60G51, 60G09, 20F36, 57M20, 81T27, 28C20, 58D19, 60B15}
%    Recognition of the 2010 edition of the Mathematics Subject
%    Classification requires a version of amsbook.cls from July 2009
%    or later.  If "2010" is not recognized, please upgrade.

\keywords{random field, random holonomy, Yang-Mills measure, L\'{e}vy process on compact Lie groups, braid group, de Finetti's theorem, planar graphs, continuous limit}

\dedicatory{}

\maketitle

\tableofcontents

\begin{abstract}
This text defines and studies planar Markovian holonomy fields which are processes  indexed by paths on the plane which takes their values in a compact Lie group. These processes behave well under the concatenation and orientation-reversing operations on paths. Besides, they satisfy some independence and invariance by area-preserving homeomorphisms properties. A symmetry arises in the study of planar Markovian holonomy fields: the invariance by braids. For finite and infinite random sequences the notion of invariance by braids is defined and we prove a new version of the de-Finetti's theorem. This allows us to construct a family of planar Markovian holonomy fields called the planar Yang-Mills fields. We prove that any regular planar Markovian holonomy field is a planar Yang-Mills field. Planar Yang-Mills fields can be partitioned into three categories according to their degree of symmetry: we study some equivalent conditions in order to classify them. Finally, we recall the notion of (non planar) Markovian holonomy fields defined by Thierry L\'{e}vy. Using the results previously proved, we compute the spherical part of any regular Markovian holonomy field. 
\end{abstract}

%    Include unnumbered chapters (preface, acknowledgments, etc.) here.

\chapter*{Introduction}

Yang-Mills theory is a theory of random connections on a principal bundle, the law of which satisfies some local symmetry: the gauge symmetry. It was introduced in the work of Yang and Mills, in 1954, in \cite{YM54}. Since then, mathematicians have tried to formulate a proper quantum Yang-Mills theory. The construction on a four dimensional manifold for any compact Lie group is still a challenge: we will focus in this article on the $2$-dimensional quantum Yang-Mills theory. On a formal level, a Yang-Mills measure is a measure on the space of connections which looks like: 
 \begin{align*}
 e^{-\frac{1}{2}S_{{\sf YM}}(A)} DA, 
 \end{align*}
 where $S_{{\sf YM}}(A)$ is the Yang-Mills action of the connection $A$, which is the $L^2$ norm of the curvature, and $DA$ is a translation invariant measure on the space of connections. Yet, many problems arise with this formulation, the main of which is that the space of connections can not be endowed with a translation invariant measure. It took some time to understand which space could be endowed by a well-defined measure. 

One possibility to handle this difficulty in a probabilistic way is to consider holonomies of the random connections along some finite set of paths: thus, after the works of Gross \cite{Gro85}, \cite{Gro88}, Driver \cite{Dri89}, \cite{Dri91} and Sengupta \cite{Sen92}, \cite{Sen97a} who constructed the Yang-Mills field for a small class of paths but for any surface, it was well understood that the Yang-Mills measure was a process indexed by some nice paths. Their construction uses the fact that the holonomy process under the Yang-Mills measure should satisfy a stochastic differential equation driven by a Brownian white-noise curvature. The Yang-Mills measure has to be constructed on the multiplicative functions from the set of paths to a Lie group, that is the set of functions which have a good behavior under concatenation and orientation-inversion of paths. This idea was already present in the precursory work of Albeverio, H{\o}egh-Krohn and Holden (\cite{Albquestion}, \cite{Alb}, \cite{AHKH88}, \cite{AHKH86}). 
  
In \cite{L1}, \cite{Levythese} and \cite{Levy}, L\'{e}vy gave a new construction. This construction allowed him to consider any compact Lie groups, any surfaces and any rectifiable paths. Besides, it allowed him to generalize the definition of Yang-Mills measure to the setting where, in some sense, the curvature of the random connection is a conditioned L\'{e}vy noise. The idea was to establish the rigorous discrete construction, as proposed by E. Witten in \cite{Wit91} and \cite{Wit92} and to show that one could take a continuous limit. 

The discrete construction was defined by considering a perturbation of a uniform measure, the Ashtekar-Lewandowski measure, by a density. The continuous limit was established using the general Theorem $3.3.1$ in \cite{Levy}. This theorem must be understood as a two-dimensional Kolmogorov's continuity theorem and one should consider it as one of the most important theorem in the theory of two-dimensional holonomy fields. In the article \cite{FGA}, G. C\'{e}bron, A. Dahlqvist and the author show how to use this theorem in order to construct generalizations of the master field constructed in \cite{SenguptaA} and \cite{Levymaster}.  

In the seminal book \cite{Levy}, L\'{e}vy defined also Markovian holonomy fields. This is the axiomatic point of view on Yang-Mills measures, seen as families of measures, indexed by surfaces which have a good behavior under chirurgical operations on surfaces and are invariant under area-preserving homeomorphisms. The importance of this notion is that Yang-Mills measures are Markovian holonomy fields. It is still unknown if any regular Markovian holonomy field is a Yang-Mills measure but this work is a first step in order to prove so. 

The axiomatic formulation of the Markovian holonomy fields allows us to understand L\'{e}vy processes as one-dimensional planar Markovian holonomy fields. 

\section*{L\'{e}vy processes and planar Markovian holonomy fields}
\label{levyprocplanar}
Let $G$ be a compact Lie group. If ${\sf dim}(G)\geq 1$, we endow the group $G$ with a bi-invariant Riemannian distance $d_G$. If $G$ is a finite group, we endow it with the distance $d_{G}(x,y) = \delta_{x,y}$. There exist two notions of L\'{e}vy processes depending on the definitions of the increments: left increments $Y_{t}Y_s^{-1}$ or right increments $Y_s^{-1}Y_{t}$. We will fix the following convention: in this article, a L\'{e}vy process on $G$ is a c\`{a}dl\`{a}g process with independent and stationary right increments which begins at the neutral element. In fact one can use a weaker definition and forgot about the c\`{a}dl\`{a}g property and define a L\'{e}vy process as a continuous in probability family of random variables $(Y_t)_{t \in \mathbb{R}^{+}}$ such that for any $t>s\geq 0$:
\begin{itemize}
\item $Y_s^{-1}Y_t$ has same law as $Y_{t-s}$, 
\item $Y_s^{-1}Y_t$ is independent of $\sigma(Y_u, u<s)$, 
\item $Y_0 = e$ a.s. 
\end{itemize}

Let $Y$ be a L\'{e}vy process on $G$. Let us denote by $\mathcal{D}(\mathbb{R})$ the set of integrable smooth densities on $\mathbb{R}$. For any $vol \in \mathcal{D}(\mathbb{R})$, one can define a measure $\mathbb{E}_{vol}$ on $G^{\mathbb{R}}$ such that, under $\mathbb{E}_{vol}$, the canonical projection process $\left(X_t\right)_{t \in \mathbb{R}}$ has the law of $\left(Y_{vol(]-\infty,t])}\right)_{t \in \mathbb{R}}$. The family $\big(\mathbb{E}_{vol}\big)_{vol \in \mathcal{D}}$ satisfies three properties: 
\begin{description}
\item[-Area-preserving increasing homeomorphism invariance] Let us consider $\psi$, an increasing homeomorphism of $\mathbb{R}$. Let $vol$ and $vol'$ be two smooth densities in $\mathcal{D}(\mathbb{R})$. Let us suppose that $\psi$ sends $vol$ on $vol'$. The mapping $\psi$ induces a measurable mapping from $G^{\mathbb{R}}$ to itself which we will denote also by $\psi$ and which is defined by: $$\psi((x_t)_{t \in \mathbb{R}}) = \left(x_{\psi(t)}\right)_{t \in \mathbb{R}}.$$
It is then easy to see that $ \mathbb{E}_{vol} = \mathbb{E}_{vol'} \circ \psi^{-1}$. For example, for any real $t \in \mathbb{R}$ and any bounded function $f$ on $G$: 
\begin{align*}
\mathbb{E}_{vol'}\big[f(X_{\psi(t)})\big] &= \mathbb{E}\big[f (Y_{vol'(]-\infty,\psi(t)])})\big] = \mathbb{E}\big[f (Y_{vol(]-\infty,t])})\big] = \mathbb{E}_{vol}\big[f(X_t)\big].
\end{align*}
\item[-Independence] Let $vol$ be a smooth density in $\mathcal{D}(\mathbb{R})$. Let $[s_0,t_0]$ and $[s_1,t_1]$ be two disjoint intervals. Under $\mathbb{E}_{vol}$, $\sigma\big((X_s^{-1}X_t), s_0\leq s<t\leq t_0 \big)$ is independent of $\sigma\big((X_s^{-1}X_t), s_1\leq s<t\leq t_1 \big)$. 
\item[-Locality property] Let $vol$ and $vol'$ be two smooth densities in $\mathcal{D}(\mathbb{R})$. Let  $t_0$  be a real such that $vol_{\mid]-\infty,t_0]} = vol'_{\mid]-\infty,t_0]}$. The law of $(X_t)_{t \leq t_0}$ is the same under $\mathbb{E}_{vol}$ as under $\mathbb{E}_{vol'}$. 
\end{description}

Let us consider a family of measures $\left(\mathbb{E}_{vol}\right)_{vol \in \mathcal{D}(\mathbb{R})}$ on $G^{\mathbb{R}}$; we say that it is stochastically continuous if, for any ${vol \in \mathcal{D}(\mathbb{R})}$, for any sequence $(t_n)_{n \in \mathbb{N}}$, if $t_n$ converges to $t \in \mathbb{R} \cup \{-\infty\}$, $\mathbb{E}_{vol}\left(d_G(X_{t_n}, X_t)\right) \underset{n \to \infty}{\longrightarrow} 0$, where we recall that $(X_t)_{t \in \mathbb{R}}$ is the canonical projection process and where, by convention, $X_{-\infty}$ is the constant function equal to the neutral element $e$. If  $\left(\mathbb{E}_{vol}\right)_{vol \in \mathcal{D}(\mathbb{R})}$ is stochastically continuous and satisfies the three axioms stated above then there exists a L\'{e}vy process $(Y_t)_{t \in \mathbb{R}^{+}}$ such that, for any smooth density $vol$ in $\mathcal{D}(\mathbb{R})$, the canonical projection process $\left(X_t\right)_{t \in \mathbb{R}}$ has the law of $\left(Y_{vol(]-\infty,t])}\right)_{t \in \mathbb{R}}$. 

With these axioms in mind, looking in Section \ref{secdefplanarHF} at the definitions of planar Markovian holonomy fields, the reader can understand why we can consider L\'{e}vy processes as one-dimensional planar Markovian holonomy fields. The surprising fact that we will prove in this paper is that the family of regular two-dimensional planar Markovian holonomy fields is not bigger than the set of one-dimensional planar Markovian holonomy fields.

\section*{Braids}
The most innovative idea of this paper is to introduce for the very first time the braid group in the study of two-dimensional Yang-Mills theory. This is also one of the main ingredient in the article \cite{FGA}.

The braid group is an object which possesses different facets: a combinatorial, a geometric and an algebraic one. One can introduce the braid group using geometric braids: this construction allows us to have a graphical and combinatorial framework to work with. Since it is the most intuitive construction, we quickly present it so that the reader will be familiar with these objects. 

\begin{proposition}
For any $n \geq 2$, let the conÞguration space $C_{n} (\mathbb{R}^{2})$ of n indistinguishable points in the plane be $\big((\mathbb{R}^{2})^{n}\setminus\Delta \big) / \mathfrak{S}_{n}$ where $\Delta$ is the union of the hyperplanes $\{ \mathbf{x} \in (\mathbb{R}^{2})^{n},\ \mathbf{x}_{i} = \mathbf{x}_{j}\}$.
The fundamental group of the configuration space $C_{n} (\mathbb{R}^{2})$ is the braid group with $n$ strands $\mathcal{B}_{n}$: 
\begin{align*}
\mathcal{B}_{n} = \pi_{1}\big(C_{n}(\mathbb{R}^{2})\big).
\end{align*}
\end{proposition}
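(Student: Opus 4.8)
The plan is to take the geometric definition of $\mathcal{B}_n$ announced in the surrounding text — isotopy classes of geometric braids with $n$ strands, under vertical stacking — and to produce an explicit group isomorphism with $\pi_1\big(C_n(\mathbb{R}^2)\big)$. This is the classical theorem of Fox and Neuwirth; its genuine content is the dictionary between topological loops in the unordered configuration space and combinatorial braids, so I would concentrate on making that dictionary precise rather than on any hard computation.

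First I would set up the covering space structure. Write $F_n(\mathbb{R}^2) = (\mathbb{R}^2)^n \setminus \Delta$ for the ordered configuration space. Since the removed subspaces have real codimension $2$, the space $F_n(\mathbb{R}^2)$ is path-connected, and the symmetric group $\mathfrak{S}_n$ acts on it freely (no two coordinates ever coincide) and properly discontinuously (it is finite and acts freely on a Hausdorff manifold). Hence the quotient map $F_n(\mathbb{R}^2) \to C_n(\mathbb{R}^2)$ is a regular covering with deck transformation group $\mathfrak{S}_n$, yielding the short exact sequence $1 \to \pi_1\big(F_n(\mathbb{R}^2)\big) \to \pi_1\big(C_n(\mathbb{R}^2)\big) \to \mathfrak{S}_n \to 1$. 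It is exactly the freedom to permute the indistinguishable points that produces the full braid group rather than its pure subgroup $\pi_1\big(F_n(\mathbb{R}^2)\big)$.

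Next I would identify loops with geometric braids. Fix a base configuration of $n$ distinct points. By the path-lifting property of the covering, a loop $\gamma$ in $C_n(\mathbb{R}^2)$ based there is the same data as a continuous family $t \mapsto \{\gamma_1(t),\dots,\gamma_n(t)\}$ of pairwise-distinct $n$-point sets returning to the base configuration as a set (possibly permuting its points). Recording the graph $\{(\gamma_i(t),t)\}$ inside $\mathbb{R}^2 \times [0,1]$ produces exactly a geometric braid: $n$ disjoint strands, automatically monotone in the height coordinate $t$, joining the base points at height $0$ to a permutation of them at height $1$. This assignment is a bijection onto geometric braids up to reparametrisation, and concatenation of loops corresponds precisely to vertical stacking of braids, i.e. to the group law of $\mathcal{B}_n$.

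Finally I would match the two equivalence relations: a homotopy rel endpoints of loops in $C_n(\mathbb{R}^2)$ is, read through the graph construction, the same as a level-preserving ambient isotopy of the corresponding strands, and conversely. Checking that homotopic loops give isotopic braids and vice versa, and that basepoints and orientations are handled consistently, then upgrades the bijection to the desired group isomorphism $\pi_1\big(C_n(\mathbb{R}^2)\big) \cong \mathcal{B}_n$. The step I expect to be the main obstacle is precisely this homotopy-versus-isotopy comparison: one must verify that a two-parameter deformation of a loop, which a priori need not respect the monotone-in-height normal form, can always be straightened into an isotopy of braids without creating strand collisions, and that the resulting correspondence is well defined on equivalence classes.
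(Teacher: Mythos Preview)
The paper does not actually prove this proposition. It is stated in the Introduction as a classical result (the theorem of Fox and Neuwirth) and serves purely as motivation: immediately after the statement the paper explains how a loop in $C_n(\mathbb{R}^2)$ can be viewed as a geometric braid, and then moves on to braid diagrams and the generators $\beta_i$. No argument is given; the proposition is treated as background.

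Your proposal is the standard Fox--Neuwirth argument and is correct in outline. The covering $F_n(\mathbb{R}^2)\to C_n(\mathbb{R}^2)$, the resulting short exact sequence, and the dictionary between loops in $C_n(\mathbb{R}^2)$ and geometric braids are exactly the right ingredients. Since the paper offers no proof, there is nothing to compare approaches against; your write-up simply supplies what the paper takes for granted.
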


Every continuous loop ${\gamma}$ in $C_{n}(\mathbb{R}^{2})$ parametrized by $[0,1]$ and based at the point $\big((1,0), ... , (n,0)\big)$ can be seen as $n$ continuous functions $\gamma_{j} \in \mathcal{C}\big([0,1], \mathbb{R}^{2}\big)$ such that, if we set $\sigma:j \mapsto \gamma_{j}(1)$ for any $j\in \{1,É, n\}$, the following conditions hold:
\begin{align*}
&1\text{-\ } \forall j \in \{1,..., n\}, \gamma_j(0) = (j,0), \\
&2\text{-\ } \sigma \in \mathfrak{S}_{n},\\
&3\text{-\ } \forall\ t \in [0,1], \forall j\neq j', \gamma_{j}(t) \neq \gamma_{j'}(t). 
\end{align*}
The function $\gamma_j$ is given by the image of $\gamma$ by the projection $\pi_j: \left(\mathbb{R}^{2}\right)^{n}\!\to~\mathbb{R}^{2}.$ We call $\gamma$ a geometric braid since if we draw the $(\gamma_{j})_{j=1}^{n}$ in $\mathbb{R}^{3}$, we obtain a physical braid. One can look at Figure \ref{tressephy} to have an illustration of this fact. 
\begin{figure}
 \centering
  \includegraphics[width=165pt]{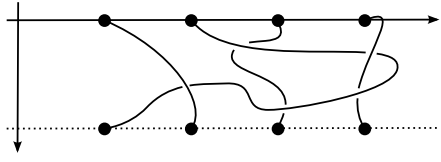}
 \caption{A physical braid $\beta$.}
 \label{tressephy}
\end{figure}

With this point of view, the composition of two braids is just obtained by gluing two geometric braids, taking then the equivalence class by isotopy of the new braid as shown in Figure \ref{multifig}. In this paper, we will take the convention that, in order to compute $\beta_1\beta_2$, one has to put the braid $\beta_2$ above the braid $\beta_1$.

\begin{figure}
 \centering
  \includegraphics[width=165pt]{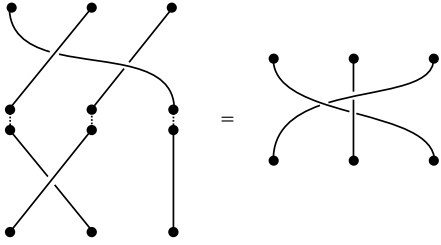}
 \caption{The multiplication of two braids.}
 \label{multifig}
\end{figure}

As we see in Figure \ref{diagramfig}, one can represent a braid by a two dimensional diagram (or, to be correct, classes of equivalence of two-dimensional diagrams) that we call $n$-diagrams. This representation can remind the reader the representation of any permutation by a diagram, yet, in this representation of braids, one remembers which string is above an other at each crossing. It is a well-known result that any $n$-diagram represents a unique braid with $n$-strands.

\begin{figure}
 \centering
  \includegraphics[width=110pt]{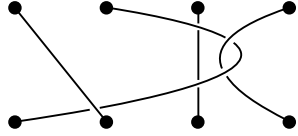}
 \caption{A two dimensional diagram representation of $\beta$.}
 \label{diagramfig}
\end{figure}

 Thus, in order to construct a braid, we only have to construct a $n$-diagram. Besides, every computation can be done with the $n$-diagrams.

For any $i \in \{1,É, n-1\}$, let $\beta_{i}$ be the equivalence class of $(\gamma_{j}^{i})_{j=1}^{n}$ defined by: 
\begin{align*}
&\forall k \in \{1,É, n\}\setminus \{i, i+1\},\ \forall t \in [0,1],\ \gamma^{i}_{k}(t) = (k,0),\\
&\forall t \in [0,1],\ \gamma_{i}^{i}(t) \ \ \ \!= \left(i+\frac{1}{2}\right)-\frac{1}{2}e^{i\pi t},\\
&\forall t \in [0,1],\ \gamma_{i+1}^{i}(t)= \left(i+\frac{1}{2}\right)+\frac{1}{2}e^{i\pi t}, 
\end{align*}
with the usual convention $\mathbb{R}^{2} \simeq \mathbb{C}$. As any braid can be obtained by braiding two adjacent strands, the family $\left(\beta_{i}\right)_{i=1}^{n-1}$ generates $\mathcal{B}_{n}$. 

\begin{figure}%[here]
 \centering
  \includegraphics[width=165pt]{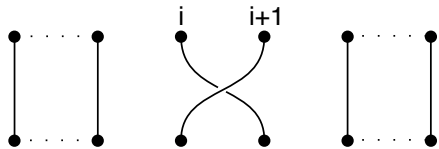}
 \caption{The elementary braid $\beta_{i}$.}
  \label{elbraid}
\end{figure}

\section*{Layout of the article}

Since the theory of Markovian holonomy fields is a newborn theory which mixes geometry, representation, probabilities, we recall all the tools we need and try to make this paper accessible to any people from any domain of mathematics. This paper is in the same time an introduction and a sequel to \cite{Levy}. The reader shouldn't be surprised that we copy some of the definitions of \cite{Levy} as any reformulation wouldn't have been as good as L\'{e}vy's formulation. 

In Section \ref{sectionpathetautre}, we recall the classical notions: paths, multiplicative functions, É Besides, we supply a lack in \cite{Levy}: we decided to develop the notion of random holonomy fields, as it might be possible, in the future, that some general random holonomy fields of interest would not be Markovian holonomy fields. Thus, any proposition in \cite{Levy} that could be applied to random holonomy fields is stated in this setting. We study the projection of random holonomy fields on the set of gauge-invariant random holonomy fields and, in the gauge-invariant setting, we explain how to restrict and extend the structure group. At last, we develop the loop paradigm which, in particular, implies the new Proposition \ref{exten4}.

The Section \ref{Sectiongraph} is devoted to the theory of planar graphs and the notion of $\mathbb{G}-\mathbb{G}'$ piecewise diffeomorphisms. One of the main results is Corollary \ref{injectiondansn} which states that any generic planar graph can be seen, via such diffeomorphism,  as a sub-graph of the $\mathbb{N}^{2}$-graph.

Using the previous sections, we define in Section \ref{planarmarkovianholonomyfields}, four different notions of planar Markovian holonomy fields. Under some regularity condition, it will be proved in the paper that the four notions are essentially equivalent. These objects are processes, indexed by paths drawn on the plane, which are gauge-invariant, invariant under area-preserving homeomorphisms, which satisfy a weak independence property and a locality property. We consider the questions of restriction and extension of the structure group for planar Markovian holonomy fields.

The equivalence between the notions of weak discrete and weak continuous planar Markovian holonomy field is then proved in Section \ref{weakconstructibility}, using a theorem of Moser and Dacorogna. 

In Section \ref{Sectiongroupofreducedloop} we define the group of reduced group, as L\'{e}vy did in \cite{Levy}, and obtain a generalization of L\'{e}vy's work in the planar case. This allows us to exhibit general families of loops which generate the group of reduced loops of any planar graph.  

Two sections are devoted to the link between braids and probabilities: Sections \ref{Braids} and \ref{Braids2}. In the first one, after explaining an algebraic definition of the braid group, we show how the Artin's theorem can be applied on the group of reduced loops. We define the notion of invariance by braid for finite sequences of random variables. Section \ref{Braids2} is devoted to the geometric point of view on braids and to a de-Finetti-Ryll-Nardzewski's theorem for random infinite sequences which are invariant under the action of the braid groups. Under an assumption of independence of the diagonal-conjugacy classes, one can characterize the invariant by diagonal conjugation braidable sequences which are sequences of i.i.d. random variables. In the end of the section, we apply these results to processes.

In Sections \ref{planaryangmills},  \ref{carsec} and \ref{Charac}, the reader can find the main results about planar Markovian holonomy fields. Section \ref{Braids}, on finite braid-invariant sequences of random variables, allow us in Section \ref{planaryangmills} to construct, for any L\'{e}vy process which is self-invariant by conjugation, a planar Yang-Mills field associated with it. This construction differs from all the previous constructions since it uses neither the notion of uniform or Ashtekar-Lewandowski measure nor the notion of stochastic differential equations. This allows us to consider any self-invariant by conjugation L\'{e}vy processes, where before, one had to consider L\'{e}vy processes with density with respect to the Haar measure and which were invariant by conjugation by the structure group $G$. In Section \ref{carsec} and \ref{Charac}, using the results of Section \ref{Braids2}, we prove that any regular planar Markovian holonomy field is a planar Yang-Mills field. Besides, we show that one can characterize their degree of symmetry according to the law of the holonomy associated to simple loops.

Since any regular planar Markovian holonomy field is a planar Yang-Mills field, is it possible to show that any Markovian holonomy field is a Yang-Mills field? In Section \ref{Markov}, we answer partly to this question. First we recall the notion of Markovian holonomy fields. The free boundary condition expectation is constructed and allows us to make a bridge between Markovian holonomy fields and planar Markovian holonomy fields. The results shown previously allows us to prove Theorem \ref{theo}: the spherical part of a regular Markovian holonomy field is equal to the spherical part of a Yang-Mills field.

In order to get a more accurate idea of the results shown in this article and the different notions defined in it, one can refer to the diagram Page \pageref{diagram}. \\

{\em Acknowledgements.} The author would like to gratefully thank A. Dahlqvist and G. C\'{e}bron for the useful discussions, his PhD advisor Pr. T. L\'{e}vy for his helpful comments and his postdoctoral supervisor, Pr. M. Hairer, for giving him the time to finalize this article. Also, he wishes to acknowledge the help provided by A. Bouthier who suggested using Jordan's theorem (Theorem 8.8), and by P. and M.-F. Gabriel for proof-reading the English.

The first version of this work has been made during the PhD of the author at the university Paris 6 UPMC. This final version of the paper was completed during his postdoctoral position at the University of Warwick where the author is supported by the ERC grant, ÒBehaviour near criticalityÓ, held by Pr. M. Hairer.

\mainmatter
%    Include main chapters here.

\part{Basic Notions}

\chapter[Background]{Backgrounds: Paths, Random Multiplicative Functions on Paths}

Let $M$ be either a smooth compact surface (possibly with boundary) or the plane $\mathbb{R}^{2}$. A {\em{measure of area}} on $M$ is a smooth non-vanishing density on $M$, that is, a Borel measure which has a smooth positive density with respect to the Lebesgue measure in any coordinate chart. It will often be denoted by $vol$. We call $(M,vol)$ a {\em measured surface}. We endow $M$ with a Riemannian metric $\gamma$ and we will denote by $\gamma_0$ the standard Riemannian metric on $\mathbb{R}^{2}$.

\label{sectionpathetautre}
\section{Paths}

The notion of paths that we will use in this paper is given in the following definition.

\begin{definition}
 A parametrized path on $M$ is a continuous curve $c: [0,1] \to M$ which is either constant or Lipschitz continuous with speed bounded below by a positive constant. 
\end{definition}
Two parametrized paths can give the same drawing on $M$ but with different speed and we will only consider equivalence classes of paths. 
\begin{definition}Two parametrized paths on $M$ are equivalent if they differ by an increasing bi-Lipschitz homeomorphism of $[0,1]$. An equivalence class of parametrized paths is called a path and the set of paths on $M$ is denoted by $P(M)$. 
\end{definition}

Actually, the notion of path does not depend on $\gamma$ since the distances defined by two different Riemannian metric are equivalent. Two parametrized paths $pp_{1}$ and $pp_{2}$ which represent the same path $p$ share the same endpoints. It is thus possible to define the endpoints of $p$ as the endpoints of any representative of $p$. If $p$ is a path, by $\underline{p}$ (resp. $\overline{p}$) we denote the starting point (resp. the arrival point) of $p$. From now on, we will not make any difference between a path $p$ and any parametrized path $pp\in p$. 
\begin{definition}
A path is simple either if it is injective on $[0,1]$ or if it is injective on $[0,1[$ and $\underline{p} = \overline{p}$. 
\end{definition}

Later, we will need the following subset of paths. 
\begin{definition}
\label{defensemble}
We define ${\sf Aff}_\gamma(M)$ to be the set of paths on $M$ which are piecewise geodesic paths with respect to $\gamma$.
\end{definition}
The set of paths ${\sf Aff}_{\gamma_0}(\mathbb{R}^{2})$ will be simply denoted by ${\sf Aff}(\mathbb{R}^{2})$. An other set of paths will be very important for our study: the set of loops.  
\begin{definition}
A loop $l$ is a path such that $\underline{l} = \overline{l}$. A smooth loop is a loop whose image is an oriented smooth $1$-dimensional submanifold of $M$. The set of loops is denoted by $L(M)$. Let $m$ be a point of $M$. A loop $l$ is based at $m$ if $\underline{l} = m$. The set of loops based at $m$ is denoted by $L_{m}(M)$. 
\end{definition}

We can define the inverse and concatenation operations on paths. Let $p_1$ and $p_2$ be paths, let $pp_{1}$ and $pp_2$ be representatives of these paths and let us suppose that $\overline{p_{1}} = \underline{p_{2}}$. The inverse of $p_1$, denoted by $p_1^{-1}$, is the equivalence class of the parametrized path $t \mapsto pp_{1}(1-t).$ The concatenation of $p_{1}$ and $p_{2}$ denoted by $p_{1}p_{2}$ is the equivalence class of the parametrized path: 
$$
 pp_{1}.pp_{2}: t  \mapsto \left\{
    \begin{array}{ll}
      pp_{1}(2t) & \mbox{if } t \leq 1/2,  \\
      
      pp_{2}(2t-1) & \mbox{if } t > 1/2. 
    \end{array}
\right.
$$
\begin{definition}
A set of paths $P$ is connected if any couple of endpoints of elements of $P$ can be joined by a concatenation of elements of $P$. 
\end{definition}

Using concatenation we can introduce a relation on the set of loops. 

\begin{definition}
\label{equiv}
Let $P$ be a set of paths. Two loops $l$ and $l'$ are elementarily equivalent in $P$ if there exist three paths, $a, b, c \in P$ such that $\{l, l' \} = \{ab, acc^{-1}b\}$. We say that $l$ and $l'$ are equivalent in $P$ if there exists a finite sequence $l=l_{0},  ..., l_{n}=l'$ such that $l_{i}$ is elementarily equivalent to $l_{i+1}$ for any $i \in \{0,  ..., n-1\}.$ We will write it $l \simeq_P l'$. 
\end{definition}

\begin{definition}
A lasso is a loop $l$ such that one can find a simple loop $m$, the meander, and a path $s$, the spoke, such that $l = s m s^{-1}$.  
\end{definition}

A loop has a well-defined origin and orientation. A cycle is a loop in which one forgets about the endpoint. In a non-oriented cycle, the endpoint and the orientation are forgotten.

\begin{definition}
\label{cycle}
We say that two loops $l_{1}$ and $l_{2}$ are related if and only if they can be decomposed as: $l_{1} = cd$, $l_{2}=dc$, with $c$ and $d$ two paths. The set of equivalence classes for the relation defined on $L(M)$ is the set of cycles. The operation of inversion is compatible with this equivalence. A non-oriented cycle is a pair $\{l, l^{-1} \}$ where $l$ is a cycle. Besides, a cycle is simple if any loop which represents it is simple and it is said smooth if any loop which represents it is smooth. 
\end{definition}

We need a notion of convergence of paths in order to define the continuity of random holonomy fields. The definition makes use of the Riemannian metric $\gamma$, yet the notion of convergence with fixed endpoints will not depend on the choice of the Riemannian metric. We denote by $d_{\gamma}$ the distance on $M$ which is associated with~$\gamma$. 

\begin{definition}
Let $p_{1}$ and $p_{2}$ be two paths of $M$. Let $\ell(p_{1})$ (resp. $\ell(p_{2})$) be the length of the path $p_{1}$ (resp. $p_{2}$). We define the distance between $p_{1}$ and $p_{2}$ as: 
$$d_{l}(p_{1}, p_{2}) = \inf_{pp_1\in p_1, pp_2 \in p_2} \sup_{t \in [0,1]} \left[d_{\gamma}(pp_{1}(t), pp_{2}(t)) \right] + | \ell(p_{1})-\ell(p_{2})|.$$
The topology induced by $d_{l}$ does not depend on the choice of $\gamma$. 

Let $(p_{n})_{n \geq 0}$ be a sequence of paths on $M$. Let $p$ be a path on $M$. The sequence $(p_{n})_{n \geq 0}$ converges to $p$ with fixed endpoints if and only if: 
\begin{itemize}
\item $d_{l}(p_{n}, p) \to 0$ as $n\to +\infty$, 
\item $\forall n \geq 0$, $\overline{p_{n}} = \overline{p}$ and $\underline{p_{n}} = \underline{p}$.
\end{itemize}
\end{definition}

We will see that the convergence with fixed endpoints behaves well when one considers images of paths by bi-Lipschitz homeomorphisms. Let us consider $\psi$ a locally bi-Lipschitz homeomorphism from $\mathbb{R}^{2}$ to itself. 
\begin{lemme}
\label{lem:bilip}
Let $p$ be a path on the plane, the image of $p$, $\psi(p)$, is also a path. 
\end{lemme}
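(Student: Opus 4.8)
The plan is to reduce everything to the single analytic fact that $\psi$ is globally bi-Lipschitz on the compact image of $p$, and then transport the resulting length estimates through the arc-length parametrization. Fix a parametrized representative $c:[0,1]\to\mathbb{R}^{2}$ of $p$. If $c$ is constant then $\psi\circ c$ is constant, hence a path, so I may assume $c$ is Lipschitz with speed bounded below by some $v_0>0$; write $L$ for its Lipschitz constant and $K=c([0,1])$, a compact subset of $\mathbb{R}^{2}$.

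\textbf{Key step.} First I would prove that the restriction $\psi|_K$ is bi-Lipschitz, i.e.\ that there are constants $0<m\le M<\infty$ with $m|x-y|\le|\psi(x)-\psi(y)|\le M|x-y|$ for all $x,y\in K$. Around each point of $K$ choose a ball on which $\psi$ is bi-Lipschitz (local hypothesis), extract a finite subcover of $K$, and let $\delta>0$ be a Lebesgue number: any two points of $K$ at distance $<\delta$ lie in a common ball, hence obey the minimum/maximum of the finitely many local constants. For pairs with $|x-y|\ge\delta$ I would invoke compactness of $\{(x,y)\in K\times K:\ |x-y|\ge\delta\}$ together with the injectivity of the homeomorphism $\psi$: the continuous, strictly positive function $(x,y)\mapsto|\psi(x)-\psi(y)|$ attains a positive minimum there, while $\psi(K)$ has finite diameter, so dividing by $|x-y|\in[\delta,\operatorname{diam}K]$ gives two-sided linear bounds. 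Combining the two regimes yields the desired $m,M$.

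\textbf{Transporting the estimate.} For any subinterval $[s,t]\subset[0,1]$ and any subdivision $s=u_0<\dots<u_n=t$, applying the bi-Lipschitz bounds termwise to $\sum_i|\psi(c(u_{i+1}))-\psi(c(u_i))|$ and passing to the supremum over subdivisions gives
\[
m\,\ell\big(c|_{[s,t]}\big)\ \le\ \ell\big((\psi\circ c)|_{[s,t]}\big)\ \le\ M\,\ell\big(c|_{[s,t]}\big).
\]
Since $c$ has speed in $[v_0,L]$, its length function $s_c(t)=\ell(c|_{[0,t]})$ satisfies $v_0(t-s)\le s_c(t)-s_c(s)\le L(t-s)$; hence the length function $s_\phi$ of $\phi:=\psi\circ c$ obeys $mv_0\,(t-s)\le s_\phi(t)-s_\phi(s)\le ML\,(t-s)$. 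In particular $s_\phi$ is a strictly increasing bi-Lipschitz homeomorphism of $[0,1]$ onto $[0,\ell_\phi]$ with $\ell_\phi\ge mv_0>0$, so $\phi$ is rectifiable and has no constant subarc.

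\textbf{Conclusion.} I would then reparametrize $\phi$ by arc length, $\hat\phi=\phi\circ s_\phi^{-1}:[0,\ell_\phi]\to\mathbb{R}^{2}$, which is the classical unit-speed representative ($|\hat\phi'|=1$ a.e., Lipschitz constant $1$); precomposing with the affine rescaling $[0,1]\to[0,\ell_\phi]$, an increasing bi-Lipschitz homeomorphism, produces a parametrized path on $[0,1]$ in the class of $\psi\circ c$. Because $s_c^{-1}$ and $s_\phi^{-1}$ are bi-Lipschitz, the outcome does not depend on the chosen representative $c$, so $\psi(p)$ is a well-defined path. The only genuinely delicate point is the Key step — upgrading local bi-Lipschitz control to a global estimate on $K$ — where both compactness and the injectivity coming from the homeomorphism assumption are essential, since local bi-Lipschitzness alone would not prevent the lower bound from degenerating for far-apart points.
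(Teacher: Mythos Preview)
Your proof is correct. The paper's own argument is far terser: it simply asserts that the only thing to check is that $\psi(p)$ has finite length, and observes that this follows immediately from the variational formula $\ell(c)=\sup_{t_1<\cdots<t_n}\sum_i|c(t_i)-c(t_{i+1})|$ together with the local Lipschitz bound for $\psi$ on the compact image. It does not spell out the compactness argument upgrading local to global bi-Lipschitz, nor does it mention the lower speed bound or the reparametrization.

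Your treatment is thus a genuine expansion rather than a different route: the core analytic input (global Lipschitz control of $\psi$ on the compact set $c([0,1])$) is the same, but you make explicit exactly what the paper leaves to the reader --- in particular the two-sided estimate $m\,\ell(c|_{[s,t]})\le\ell((\psi\circ c)|_{[s,t]})\le M\,\ell(c|_{[s,t]})$ and the resulting bi-Lipschitzness of the length function, which is what actually guarantees that $\psi\circ c$ has speed bounded below and hence is (or is equivalent to) a parametrized path in the sense of the paper. One minor remark: once you have established that $s_\phi$ satisfies $mv_0(t-s)\le s_\phi(t)-s_\phi(s)\le ML(t-s)$, the curve $\psi\circ c$ is already Lipschitz with essential speed in $[mv_0,ML]$, so it is itself a parametrized path and the subsequent arc-length reparametrization is not strictly needed.
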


\begin{proof}
We only have to prove that $\psi(p)$ has a finite length. This is a consequence of the fact that if $c: [0,1] \to \mathbb{R}^{2}$ is a continuous function, the length of $c$ is given by $\sup_{\{t_1, ...,t_n | t_1<...<t_n\} \subset [0,1]} \sum_{i=1}^{n-1} |c(t_i)-c(t_{i+1})|$.
\end{proof}

\begin{lemme}
\label{lem:bilip2}
Let $(p_n)_{n\in \mathbb{N}}$ be a sequence of paths which converges to a path $p$ with fixed endpoints. The sequence $\left( \psi(p_n)\right)_{n \in \mathbb{N}}$ converges to $\psi(p)$ with fixed endpoints. 
\end{lemme}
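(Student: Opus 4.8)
The plan is to show two things: that $d_l(\psi(p_n), \psi(p)) \to 0$, and that the endpoints of $\psi(p_n)$ are constant and equal to those of $\psi(p)$. The second point is immediate: since $\underline{p_n} = \underline{p}$ and $\overline{p_n} = \overline{p}$ for all $n$, applying the fixed map $\psi$ gives $\underline{\psi(p_n)} = \psi(\underline{p}) = \underline{\psi(p)}$ and likewise for arrival points. So all the work is in controlling $d_l$. By Lemma~\ref{lem:bilip}, each $\psi(p_n)$ and $\psi(p)$ is genuinely a path, so the quantity $d_l(\psi(p_n),\psi(p))$ is well-defined.

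**First I would** handle the supremum (uniform) term. Recall $d_l$ splits as a uniform-distance term plus a length-difference term. Since $\psi$ is locally bi-Lipschitz and the paths $p_n$ converge with fixed endpoints to $p$, their images all lie in a bounded region of $\mathbb{R}^2$ (the lengths $\ell(p_n)$ converge, hence are bounded, and the starting points are fixed, so the traces stay in a fixed compact set $K$). On $K$, $\psi$ is genuinely Lipschitz with some constant $L_K$. Choosing parametrizations $pp_n, pp$ that nearly realize the infimum in $d_l(p_n,p)$, we get
\begin{align*}
\sup_{t}\, d_\gamma\big(\psi(pp_n)(t),\, \psi(pp)(t)\big) \;\leq\; L_K \sup_t d_\gamma\big(pp_n(t), pp(t)\big),
\end{align*}
which tends to $0$ as $n\to\infty$ because the right-hand side is bounded by $L_K\, d_l(p_n,p) \to 0$.

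**The length term** is where the main obstacle lies. I must show $\ell(\psi(p_n)) \to \ell(\psi(p))$, and this does not follow from a Lipschitz bound alone, since Lipschitz maps can only bound lengths from above, not force convergence. The natural approach is to use the formula for length as a supremum over subdivisions from Lemma~\ref{lem:bilip}, together with the uniform convergence already established and a lower semicontinuity/upper-bound sandwich. Concretely, upper semicontinuity of $\limsup_n \ell(\psi(p_n))$ relative to $\ell(\psi(p))$ is the delicate direction: one uses the bi-Lipschitz property to compare, on the compact set $K$, the length of $\psi(p_n)$ with that of $p_n$ through the local bi-Lipschitz constants, and then invokes $\ell(p_n) \to \ell(p)$. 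I expect the cleanest route is to cover $K$ by finitely many charts on which $\psi$ has two-sided Lipschitz bounds $\lambda_i \leq |\psi(x)-\psi(y)|/|x-y| \leq \Lambda_i$ with $\Lambda_i/\lambda_i$ close to $1$ after refining, localize the subdivision sums to these charts, and pass to the limit. This metric-geometry bookkeeping is the genuinely technical step; once $\ell(\psi(p_n)) \to \ell(\psi(p))$ is in hand, combining it with the uniform estimate gives $d_l(\psi(p_n),\psi(p)) \to 0$, completing the proof.
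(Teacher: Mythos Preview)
Your outline correctly identifies the structure: endpoints are immediate, the uniform term follows from local Lipschitzness on a compact set containing all the traces, and the length convergence $\ell(\psi(p_n)) \to \ell(\psi(p))$ is where the content lies. However, your proposed attack on the length term has a genuine gap. You write that one can cover $K$ by charts on which the bi-Lipschitz ratio $\Lambda_i/\lambda_i$ is close to $1$ ``after refining''. This is false for general locally bi-Lipschitz homeomorphisms: the linear map $\psi(x,y) = (2x,y)$ is globally bi-Lipschitz with $\Lambda/\lambda = 2$ on \emph{every} neighborhood, at every scale. Bi-Lipschitz maps are not asymptotically conformal, so no refinement forces the ratio toward $1$. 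Without that, your localization-and-sandwich scheme does not close: different tangent directions of $p_n$ versus $p$ get stretched by different amounts under $\psi$, and the hypothesis $\ell(p_n) \to \ell(p)$ alone says nothing about how those directions are distributed.

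The paper proceeds differently. It first invokes Lemma~1.2.17 of \cite{Levy} to pass to constant-speed parametrizations, under which $p_n \to p$ uniformly. Local Lipschitzness then gives both uniform convergence $\psi(p_n(\cdot)) \to \psi(p(\cdot))$ and a uniform bound $R$ on the speeds of all the $\psi(p_n)$ and of $\psi(p)$. Length convergence is then extracted from Lemma~1.2.18 of \cite{Levy} together with the triangle inequality --- so the mechanism is to reduce to uniformly-Lipschitz parametrized curves converging uniformly and then cite a black-box lemma tailored to that regime, rather than any chart-by-chart estimate on $\psi$ itself.
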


\begin{proof}
Let us consider $(p_n)_{n\in \mathbb{N}}$ and $p$ which satisfy the conditions of the lemma. L\'{e}vy proved in Lemma $1.2.17$ of \cite{Levy}, that $p_n$ converges to $p$ uniformly when the paths are parametrized at constant speed. Let us denote by $(p_n(t))_{t \in [0,1]}$ and $(p(t))_{t \in [0,1]}$ these parametrized paths. Since $\psi$ is locally Lipschitz, $\left(\psi(p_n(t))\right)_{t \in [0,1]}$ converges uniformly to $\left(\psi(p(t))\right)_{t \in [0,1]}$ when $n$ goes to infinity and there exists a real $R$ such that the speed of $\psi(p_n)$ for any integer $n$ and the speed of $\psi(p)$ are bounded by $R$. An application of Lemma $1.2.18$ of \cite{Levy} and the triangular inequality allows us to assert that the length of $\psi(p_n)$ converges to the length of $\psi(p)$. This proves that $d_l(\psi(p_n), \psi(p))$ converges to zero when $n$ goes to infinity.  
\end{proof}

Using this notion of convergence, one can define a notion of density. The following lemma was proved by L\'{e}vy in Proposition $1.2.12$ of \cite{Levy}.

\begin{lemme}
\label{dense}
The set of paths ${\sf Aff}_\gamma(M)$ is dense in $P(M)$ for the convergence with fixed endpoints. 
\end{lemme}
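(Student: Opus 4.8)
The plan is to approximate an arbitrary $p \in P(M)$ by geodesic polygons built on finer and finer subdivisions of the parameter interval, and to check that such polygons converge to $p$ with fixed endpoints by controlling separately the two summands of $d_l$. If $p$ is constant it already belongs to ${\sf Aff}_\gamma(M)$, so I assume $p$ non-constant and fix a Lipschitz representative $p\colon[0,1]\to M$; its image $K=p([0,1])$ is compact. Using compactness of $K$ I would first extract a radius $\delta>0$ such that any two points of $M$ at $d_\gamma$-distance less than $\delta$ are joined by a minimizing geodesic whose length equals their distance (in the case $M=\mathbb{R}^2$ this is immediate, the geodesics being line segments). For each $n$ choose a subdivision $0=t^n_0<\dots<t^n_{N_n}=1$ with mesh tending to $0$; by uniform continuity of $p$, for $n$ large all consecutive points $p(t^n_i),p(t^n_{i+1})$ are within $\delta$, and I let $p_n$ be the concatenation of the minimizing geodesics joining them. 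Then $p_n\in{\sf Aff}_\gamma(M)$ and, since $p(t^n_0)=\underline p$ and $p(t^n_{N_n})=\overline p$, it has the same endpoints as $p$; reparametrizing $p_n$ by arc length turns it into a genuine path with constant (hence bounded below) speed.

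The first step is the uniform estimate. Choosing the representative of $p_n$ that traverses its $i$-th geodesic over the interval $[t^n_i,t^n_{i+1}]$ and comparing with $p$ on the same interval, both $p(t)$ and $p_n(t)$ stay within $d_\gamma(p(t^n_i),p(t^n_{i+1}))\le \mathrm{Lip}(p)\cdot\mathrm{mesh}$ of the point $p(t^n_i)$, so $\sup_t d_\gamma(p(t),p_n(t))\le 2\,\mathrm{Lip}(p)\cdot\mathrm{mesh}\to 0$; this makes the first summand of $d_l(p_n,p)$ vanish, the passage to admissible bi-Lipschitz reparametrizations being routine and already implicit in Lemmas $1.2.17$--$1.2.18$ of \cite{Levy} invoked above.

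The heart of the matter, and the step I expect to be the main obstacle, is the convergence of lengths $\ell(p_n)\to\ell(p)$, i.e. the vanishing of the second summand of $d_l$. By construction $\ell(p_n)=\sum_i d_\gamma(p(t^n_i),p(t^n_{i+1}))$, and by the Riemannian analogue of the length formula recalled in the proof of Lemma \ref{lem:bilip} one has $\ell(p)=\sup_{D}\sum d_\gamma$ over all subdivisions $D$, with every such polygonal sum bounded above by $\ell(p)$. Refining a subdivision only increases its sum, so given $\varepsilon>0$ I fix a subdivision $D_0$ whose sum exceeds $\ell(p)-\varepsilon$; comparing a fine subdivision $D$ with $D\cup D_0$, the insertion of the finitely many points of $D_0$ changes the polygonal sum by at most $2\,|D_0|\,\omega(\mathrm{mesh})$, where $\omega$ is the modulus of continuity of $p$. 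Letting the mesh tend to $0$ this correction vanishes, whence $\ell(p_n)\to\ell(p)$. Combining the two steps gives $d_l(p_n,p)\to 0$ with fixed endpoints, which is the asserted density; the only genuinely delicate point is this passage to the limit ``$\mathrm{mesh}\to 0$'' in the length formula (as opposed to the supremum over all subdivisions), while the geodesic and Lipschitz estimates are comparatively routine.
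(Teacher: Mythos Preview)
Your argument is correct and is the standard geodesic-polygon approximation; the paper itself gives no proof beyond citing Proposition~1.2.12 of \cite{Levy}, where essentially this construction appears. The two-step control of $d_l$---uniform closeness via matched parametrizations on each subinterval, then length convergence using the supremum formula together with your refinement/modulus-of-continuity comparison $\sum_{D\cup D_0}\le \sum_D + 2|D_0|\,\omega(\mathrm{mesh})$---is exactly the right structure, and your identification of the length step as the only non-routine point is accurate.
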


 One has to be careful when working with the convergence with fixed endpoints. For example, the set of paths whose images are concatenation of horizontal and vertical segments is {\em not} dense in $P(\mathbb{R}^{2})$. Indeed, one condition in order to have the convergence with fixed endpoints is that the length of the paths converges to the length of the limit path. But, for any path $p$ which can be written as a concatenation of horizontal and vertical segments, $\ell( p) \geq || \overline{p}- \underline{p}||_{1},$ where $|| . ||_{1}$ is the usual $L^{1}$ norm on $\mathbb{R}^{2}$, yet this inequality does not hold for a general path $p$.

\section{Measures on the set of multiplicative functions}
\label{multi}
In this section, the presentation differs from the one of \cite{Levy}: new definitions and new results already appear in this section. 

From now on, except if specified, $G$ is a compact Lie group, with the usual convention that a compact Lie group of dimension $0$ is a finite group. The neutral element will be denoted by $e$. We endow $G$ with a bi-invariant Riemannian distance $d_G$. If $G$ is a finite group, we endow it with the distance $d_G(x,y) = \delta_{x,y}$. We denote by $\mathcal{M}(G)$ the space of finite Borel positive measures on $G$. 

\subsection{Definitions}
Let $P$ be a subset of $P(M)$ and let $L$ be a set of loops in~$P$. 

\begin{definition}
A function $h$ from $P$ to $G$ is multiplicative if and only if: 
\begin{itemize}
\item $h(c^{-1}) = h(c)^{-1}$ for any path $c$ in $P$ such that $c^{-1} \in P$, 
\item $h(c_{1}c_{2}) = h(c_{2}) h(c_{1})$ for any paths $c_{1}$ and $c_{2}$ in $P$ which can be concatenated and such that $c_{1}c_{2} \in P$.
\end{itemize}
We denote by $\mathcal{M}ult(P, G)$ the set of multiplicative functions from $P$ to $G$. A function from $L$ to $G$ is pre-multiplicative over $P$ if and only if:
\begin{itemize}
\item it is multiplicative, 
\item for any $l$ and $l'$ in $L$ which are equivalent in $P$, we have: $h(l)=h(l')$.
\end{itemize}
We denote by $\mathcal{M}ult_P(L, G)$ the set of pre-multiplicative functions over $P$. 
\end{definition}

We will often make the following slight abuse of notation. 
\begin{notation}
\label{notation}
Let $c$ be a path in $P$. If a multiplicative function $h$ is not specified in a formula, $h(c)$ will stand for the function on $\mathcal{M}ult(P, G)$: 
\begin{align*}
h(c): \mathcal{M}ult(P, G) &\to G \\
h\ \ \ \ \ &\mapsto h(c).
\end{align*}
\end{notation}

The notion of equivalence of loops, as stated in Definition \ref{equiv}, is important due to the following remark. 

\begin{remarque}
\label{equalityequiv}
Let $h$ be in $\mathcal{M}ult(P, G)$ and let $l, l'$ be loops in $P$. A simple induction and the multiplicative property of $h$ imply that if $l \simeq_P l'$ then $h(l) = h(l')$. 
\end{remarque}

Let $P$ be a set of paths and let $Q$ be a freely generating subset of $P$ in the sense that: 
\begin{itemize}
\item any path in $P$ is a finite concatenation of elements of $Q$, 
\item no element of $Q$ can be written as a non-trivial finite concatenation of paths in $Q \cup Q^{-1}$, 
\item $Q\cap Q^{-1} = \emptyset$. 
\end{itemize}
Then we have the identification: 
\begin{align}
\label{lequationedgeparadigme}
\mathcal{M}ult(P, G) \simeq G^{Q}.
\end{align}
This is the {\em edge paradigm} for multiplicative functions. The novelty of the approach we have in this paper is to put the emphasis on the {\em loop paradigm} for gauge-invariant random holonomy fields. The first paradigm is interesting for general random holonomy fields on surfaces, yet the second seems to be more appropriate for gauge-invariant random holonomy fields on the plane.

\begin{remarque}
\label{extendG} All the following definitions and propositions deal with multiplicative functions on a set of paths $P$. All of them extend to $G^{T}$, with $T \subset \mathbb{R}$. Indeed, if $P=\cup_{r \in T}\{c_r, c_r^{-1}\}$, with $c_r$ being the path on the plane based at $0$ and going clockwise once around the circle of center $(0,r)$ and radius $r$, then $\mathcal{M}ult(P, G) \simeq G^{T}.$ 
\end{remarque}

We will now endow the space of multiplicative functions with a $\sigma$-field in order to be able to speak about measures on $\mathcal{M}ult(P,G)$. 
\begin{definition}
The Borel $\sigma$-field $\mathcal{B}$ on $\mathcal{M}ult(P, G)$ is the smallest $\sigma$-field such that for any paths $c_{1},  ..., c_{n}$ and any continuous function $f: G^{n} \to \mathbb{R}$, the mapping $h \mapsto f\left(h(c_{1}),  ..., h(c_{n})\right)$ is measurable. 
\end{definition}

\begin{definition}
A random holonomy field $\mu$ on the set $P$ is a measure on $\left(\mathcal{M}ult(P, G),\mathcal{B}\right)$.  If $P = P(M)$, we call it a random holonomy field on $M$.\end{definition}

Let $\mu$ be a random holonomy field on $P$: the weight of $\mu$ is $\mu({\mathbbm{1}})$. One can define a regularity notion for random holonomy fields. 
\begin{definition}
\label{cont}
A random holonomy field $\mu$ on $P$ is stochastically continuous if for any sequence $(p_{n})_{n \geq 0}$ of elements of $P$ which converges with fixed endpoints to $p \in P$, 
\begin{align}
\label{conv0}
\int_{\mathcal{M}ult(P, G)} d_{G}\left(h(p_{n}), h(p)\right) \mu(dh) \underset{n \to \infty}{\longrightarrow} 0. 
\end{align}
The measure $\mu$ is locally stochastically $\frac{1}{2}$-H\"{o}lder continuous if for any compact set $S\subset M$, for any measure of area $vol$ on $M$, there exists $K>0$ such that for any simple loop $l \in P$ bounding a disk $D$ such that $l\subset S$: 
\begin{align}
\label{Hold}
\int_{\mathcal{M}ult(P, G)} d_{G}\left(e,h(l)\right) \mu (dh) \leq K \sqrt{vol(D)}, 
\end{align}
where $e$ is the neutral element of $G$.

A family of random holonomy fields, $(\mu)_{\mu\in \mathcal{F}}$, with each $\mu$ defined on some set $P_{\mu}$, is uniformly locally stochastically $\frac{1}{2}$-H\"{o}lder continuous if the constant $K$ in equation (\ref{Hold}) is independent of the random holonomy field in $\mathcal{F}$. 
\end{definition}

\subsection{Construction of random holonomy fields I }
Let us review the two mains results on which the construction of random holonomy fields is based.

\begin{notation}
\label{restriction}
Let $J$ and $K$ be two subsets of $P(M)$ such that $J \subset K$. The restriction function from $\mathcal{M}ult(K, G)$ to $\mathcal{M}ult(J, G)$ will be denoted by $\rho_{J, K}$. If $M \subset M'$ are two surfaces, we denote by $\rho_{M, M'}$ the restriction function $\rho_{P(M),P(M')}$. The notation is set such that for any $J \subset K \subset L \subset P(M)$, $\rho_{J, K} \circ \rho_{K, L} = \rho_{J, L}.$ 
\end{notation}

The fact that $G$ is a compact group allows us to construct measures on the set of multiplicative functions by taking projective limits of random holonomy fields on finite subsets of paths. This behavior is very different from what can be observed for Gaussian measures on Banach spaces. Indeed, in \cite{Levy}, Proposition 2.2.3, L\'{e}vy proved, when $\mathcal{F}$ is a collection of finite subsets of $P$, the next proposition using an application of Carath\'{e}odory's extension theorem. We give a proof based on the Riesz-Markov's theorem, proof which shows clearly why we only consider compact groups. 

\begin{proposition}
\label{limitproj}
Let $\mathcal{F}$ be a collection of subsets of paths on $M$. We denote by $P$ their union. Suppose that, when ordered by the inclusion, $\mathcal{F}$ is directed: for any $J_1$ and $J_2$ in $\mathcal{F}$, there exists $J_3 \in \mathcal{F}$ such that $J_1 \cup J_2 \subset J_3$. For any $J \in \mathcal{F}$, let $m_J$ be a probability measure on $\left(\mathcal{M}ult(J, G),\mathcal{B}\right)$. Assume that the probability spaces $\left(\mathcal{M}ult(J, G),\mathcal{B},m_J\right)$ endowed with the restriction mappings $\rho_{J, K}$ for $J \subset K$ form a projective system. This means that for any $J_1$ and $J_2$ in $\mathcal{F}$ such that $J_1 \subset J_2$, one has $$m_{J_1} = m_{J_2} \circ \rho_{J_1,J_2}^{-1}.$$
Then there exists a unique probability measure $m$ on $\left(\mathcal{M}ult(P, G),\mathcal{B}\right)$ such that for any $J \in \mathcal{F}$, $$m_J = m \circ \rho_{J, P}^{-1}.$$
\end{proposition}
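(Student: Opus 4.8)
The plan is to realize $m$ as the Radon measure attached, via the Riesz--Markov theorem, to a positive linear functional assembled from the $m_J$'s, the whole argument hinging on the fact that $\mathcal{M}ult(P,G)$ is compact precisely because $G$ is. First I would fix the topological setup: since $G$ is compact, $G^P$ is compact by Tychonoff, and $\mathcal{M}ult(P,G)$ is a closed subset of $G^P$ because each defining relation $h(c^{-1}) = h(c)^{-1}$ and $h(c_1 c_2) = h(c_2) h(c_1)$ is a closed condition on the coordinates; hence $\mathcal{M}ult(P,G)$ is compact Hausdorff. The evaluation maps $h \mapsto h(c)$ are continuous and jointly separate points, so the algebra $\mathcal{A}$ of cylinder functions $h \mapsto f\big(h(c_1),\ldots,h(c_n)\big)$, with $c_i \in P$ and $f \in C(G^n)$, is a point-separating unital subalgebra of $C\big(\mathcal{M}ult(P,G)\big)$, dense by Stone--Weierstrass.

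Next I would define the functional. Given a cylinder function $\phi(h) = f\big(h(c_1),\ldots,h(c_n)\big)$, directedness of $\mathcal{F}$ together with $P = \bigcup \mathcal{F}$ lets me choose $J \in \mathcal{F}$ containing all the $c_i$, and I set
\[
\Lambda(\phi) = \int_{\mathcal{M}ult(J, G)} f\big(h(c_1),\ldots,h(c_n)\big)\, m_J(dh).
\]
The key verification is that $\Lambda$ does not depend on the choice of $J$: given two admissible choices I pass to a common upper bound in $\mathcal{F}$ and invoke the compatibility $m_{J_1} = m_{J_2} \circ \rho_{J_1, J_2}^{-1}$, which is exactly what forces the two integrals to agree. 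Linearity is immediate, positivity and $\Lambda(\mathbbm{1}) = 1$ follow from the $m_J$ being probability measures, and consequently $\Lambda$ has norm one on $\mathcal{A}$, so it extends uniquely to a positive normalized linear functional on $C\big(\mathcal{M}ult(P,G)\big)$.

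The Riesz--Markov theorem then produces a unique Radon probability measure $m$ with $\int \phi\, dm = \Lambda(\phi)$; restricting $m$ to the $\sigma$-field $\mathcal{B}$ gives the sought measure, and testing against cylinder functions shows $m \circ \rho_{J, P}^{-1} = m_J$ for every $J \in \mathcal{F}$. For uniqueness I would note that the cylinder sets form a $\pi$-system generating $\mathcal{B}$, on which any measure satisfying the marginal conditions necessarily agrees with $m$; Dynkin's $\pi$--$\lambda$ theorem then yields equality on all of $\mathcal{B}$.

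I expect the main obstacle to be the well-definedness of $\Lambda$, that is, checking that the projective compatibility genuinely renders the defining integral independent of the chosen $J$; the remainder is a routine assembly of Tychonoff, Stone--Weierstrass, and Riesz--Markov. The conceptual crux, and the reason compactness of $G$ is indispensable, is that compactness of $\mathcal{M}ult(P,G)$ upgrades a merely finitely additive set function into a genuine countably additive measure for free, with no tightness or inner-regularity estimate required.
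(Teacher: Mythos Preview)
Your proposal is correct and follows essentially the same approach as the paper: compactness of $\mathcal{M}ult(P,G)$ via Tychonoff, density of cylinder functions via Stone--Weierstrass, construction of a positive linear functional from the $m_J$ whose well-definedness comes from the projective compatibility, and then Riesz--Markov. The paper's proof is slightly terser on uniqueness, but your $\pi$--$\lambda$ remark is a fine addition.
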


\begin{proof}
We endow $G^{P}$ with the product topology. As an application of Tychonoff's theorem it is a compact space. A consequence of this is that $\mathcal{M}ult(P,G)$, endowed with the restricted topology, is also a compact space as it is closed in $G^{P}$. Besides, the $\sigma$-field $\mathcal{B}$ is the Borel $\sigma$-field on $\mathcal{M}ult(P,G)$. 
Let us consider $A$ the set of cylinder continuous functions, that is the set of functions $\mathbf{f}: \mathcal{M}ult(P,G) \to \mathbb{R}^{+}$ of the form: 
\begin{align*}
\mathbf{f}: h \mapsto f\left(h(p_1), ..., h(p_n)\right), 
\end{align*}
for some $ n \in \mathbb{N}$, some $p_1,...,p_n \in P$ and some continuous function $f: G^{n} \to \mathbb{R}$.

The set $A$ is a subalgebra of the algebra $C\left(\mathcal{M}ult(P,G),\mathbb{R}\right)$ of real-valued continuous functions on $\mathcal{M}ult(P,G)$. This subalgebra separates the points of $\mathcal{M}ult(P,G)$ and contains a non-zero constant function. Due to the Stone-Weierstrass's theorem, $A$ is dense in $C\left(\mathcal{M}ult(P,G),\mathbb{R}\right)$. Any function $\mathbf{f}$ in $A$ depends only on a finite number of paths, so that there exists some $J \in \mathcal{F}$ such that $\mathbf{f}$ can be seen as a continuous function on $\mathcal{M}ult(J,P)$. We define: 
\begin{align*}
m(\mathbf{f}) = m_J(\mathbf{f}),
\end{align*}
which does not depend on the chosen $J \in \mathcal{F}$ thanks to the projectivity and multiplicative properties. 

We have defined a positive linear functional $m$ on $A$, the norm of which is bounded by the total weight of any of the measures $(m_J)_{J \in \mathcal{F}}$. Thus $m$ can be extended on $C\left(\mathcal{M}ult(P,G),\mathbb{R}\right)$ and an application of the Riesz-Markov's theorem allows us to consider $m$ as a measure on $\left(\mathcal{M}ult(P,G), \mathcal{B}\right)$. This is the projective limit of $(m_J)_{J \in \mathcal{F}}$. 
\end{proof}

The notion of locally stochastically $\frac{1}{2}$-H\"{o}lder continuity allows us to have an extension theorem from some subsets of paths to their closure, as shown in the proof of Corollary 3.3.2 of \cite{Levy}. 

\begin{theorem}
\label{exten2}
Let $\mu_{{\sf Aff}_\gamma(M)}$ be a random holonomy field on ${\sf Aff}_\gamma(M)$. If it is locally stochastically $\frac{1}{2}$-H\"{o}lder continuous then there exists a unique stochastically continuous random holonomy field $\mu$ on $M$ such that: $$\mu_{{\sf Aff}_\gamma(M)} = \mu \circ \rho_{{\sf Aff}_\gamma(M), P(M)}^{-1}.$$ 
\end{theorem}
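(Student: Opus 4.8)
The plan is to extend the field from the dense set ${\sf Aff}_\gamma(M)$ to $P(M)$ by a two-dimensional Kolmogorov-type argument, exactly in the spirit of L\'evy's Theorem $3.3.1$ and Corollary $3.3.2$ of \cite{Levy}, and then to recover the measure itself via the projective limit of Proposition \ref{limitproj}. First I would fix, for each $p \in P(M)$, a sequence $(p_n)_{n\geq 0}$ in ${\sf Aff}_\gamma(M)$ converging to $p$ with fixed endpoints; such a sequence exists by Lemma \ref{dense}, and we may arrange $\underline{p_n}=\underline p$ and $\overline{p_n}=\overline p$ for all $n$. The goal is to show that the $G$-valued random variables $h(p_n)$ form a Cauchy sequence in $L^1(\mu_{{\sf Aff}_\gamma(M)}; d_G)$, so that $h(p) := \lim_n h(p_n)$ is well defined $\mu_{{\sf Aff}_\gamma(M)}$-almost surely and in $L^1$.

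The heart of the matter is the estimate controlling $\int d_G(h(p_n),h(p_m))\,d\mu_{{\sf Aff}_\gamma(M)}$. Since $p_n$ and $p_m$ share endpoints, bi-invariance of $d_G$ and multiplicativity give
\begin{align*}
d_G\big(h(p_n),h(p_m)\big) = d_G\big(e, h(p_m)^{-1}h(p_n)\big) = d_G\big(e, h(p_n p_m^{-1})\big),
\end{align*}
so I must bound the expected distance of the holonomy of the closed piecewise geodesic loop $p_n p_m^{-1}$ from $e$. This loop is in general not simple, so the plan is to decompose the small-area region it encloses into finitely many cells bounded by simple loops lying in a fixed compact set, to write $h(p_n p_m^{-1})$ as an ordered product of the corresponding simple-loop holonomies, and to apply the local $\frac12$-H\"older hypothesis (\ref{Hold}) to each cell together with the triangle inequality. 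The main obstacle is that the H\"older exponent $\frac12$ is critical: a naive uniform dyadic subdivision produces, at scale $2^{-n}$, about $2^{n}$ cells each of area $O(4^{-n})$, whose square roots sum to a quantity that is merely bounded rather than summable in $n$. Overcoming this is precisely the content of L\'evy's continuity theorem, and I would invoke its area-control argument (Theorem $3.3.1$ of \cite{Levy}) to obtain $\int d_G(e,h(p_np_m^{-1}))\,d\mu \to 0$ as $n,m \to \infty$, yielding the $L^1$-Cauchy property.

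Once the limits exist, I would check that $h(p)$ does not depend on the approximating sequence by interlacing two such sequences into a single one, which still converges with fixed endpoints and hence has a single $L^1$-limit. Multiplicativity of the extended function is preserved under these limits: choosing compatible approximations $p_n \to p$ and $q_n \to q$ with $\overline{p_n}=\underline{q_n}$ gives $p_n q_n \to pq$ and $p_n^{-1}\to p^{-1}$ inside ${\sf Aff}_\gamma(M)$, so passing to the limit in $h(p_nq_n)=h(q_n)h(p_n)$ and $h(p_n^{-1})=h(p_n)^{-1}$ shows the extension lies in $\mathcal{M}ult(P(M),G)$. The resulting limiting finite-dimensional distributions are consistent under restriction, so Proposition \ref{limitproj}, applied to the directed family of finite subsets of $P(M)$, produces a measure $\mu$ on $\left(\mathcal{M}ult(P(M),G),\mathcal{B}\right)$ satisfying $\mu\circ\rho_{{\sf Aff}_\gamma(M),P(M)}^{-1}=\mu_{{\sf Aff}_\gamma(M)}$.

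Finally, stochastic continuity of $\mu$ in the sense of (\ref{conv0}) follows from the same area estimate, now applied to an arbitrary convergent sequence $p_n \to p$ in $P(M)$ through a diagonal approximation by piecewise geodesic paths. For uniqueness, I would argue that any stochastically continuous extension is determined by its values on the dense set ${\sf Aff}_\gamma(M)$: given paths $p_1,\dots,p_k \in P(M)$ and piecewise geodesic approximations $p_i^{(n)}\to p_i$, stochastic continuity forces $h(p_i^{(n)})\to h(p_i)$ jointly in probability, so the joint law of $\left(h(p_1),\dots,h(p_k)\right)$ is the limit of laws already determined by $\mu_{{\sf Aff}_\gamma(M)}$; since cylinder functions of this form generate $\mathcal{B}$, two stochastically continuous extensions agreeing on ${\sf Aff}_\gamma(M)$ must coincide.
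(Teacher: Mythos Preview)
Your proposal is correct and follows essentially the same route as the paper: the paper does not give an independent argument for this theorem but simply defers to the proof of Corollary~3.3.2 of \cite{Levy}, and your sketch is an accurate outline of that argument, correctly identifying the critical role of L\'evy's Theorem~3.3.1 for the area-control estimate that makes the $\frac12$-H\"older exponent work.
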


\subsection{Gauge-invariance }
For any subset $P$ of $P(M)$, a natural group acts on $\mathcal{M}ult(P, G)$, the {\em gauge group}, that we are going to describe. Let us fix a subset $P$ of $P(M)$ which will stay fixed until the end of the chapter. 

\begin{definition}
\label{actiongauge}
Let $V = \{x\in M, \exists\ \! p \in P, x = \underline{p}\text{ or }x = \overline{p}\}$ be the set of endpoints of $P$. We define the partial gauge group associated with $P$ by setting $J_{P}~= G^{V}$. If $P=P(M)$, this group is called the gauge group of $M$. 
The group $J_P$ acts by gauge transformations on the space $\mathcal{M}ult\left(P,G\right)$: if $j \in J_P$, the action of $j$ on $h \in \mathcal{M}ult(P, G)$ is given by: 
\begin{align*}
\forall c \in P, (j \bullet h)(c) = j_{\overline{c}}^{-1} h(c) j_{\underline{c}}. 
\end{align*}
\end{definition}

Let $Q=\{c_{1},  ..., c_{n}\}$ be a finite set of paths on $M$. Looking only at the evaluation on $c_{i}$ for $i\in \{1,  ..., n \}$, we have the inclusion: $\mathcal{M}ult(Q, G) \subset G^{n}$. The gauge action of $J_Q$ on $\mathcal{M}ult(Q,G)$ extends naturally to an action on $G^{n}$ by $(j \bullet \mathbf{g})_{i} = j_{\overline{c_{i}}}^{-1} \mathbf{g}_{i} j_{\underline{c}_{i}}$ for any $i\in \{1, ...,n\}$. 

\begin{remarque}
\label{diagconj}
If $l_{1},  ..., l_{n}$ are loops based at a point $m$, the partial gauge group is nothing but $G$ and the corresponding action on $G^{n}$ is the diagonal conjugation: $$j \bullet (g_{1},  ..., g_{n}) = (j^{-1} g_{1}j ,  ..., j^{-1} g_{n} j).$$
We denote by $\big[(g_1, ..., g_n)\big] $ the equivalence class of $(g_1, ..., g_n)$ in $G^{n}$ under the diagonal conjugation action. 
\end{remarque}

We now define a sub-$\sigma$-field of $\mathcal{B}$, the {\em invariant $\sigma$-field.}
\begin{definition}
On $\mathcal{M}ult(P, G)$, the invariant $\sigma$-field, denoted by $\mathcal{I}$, is the smallest $\sigma$-field such that for any paths $c_{1},  ..., c_{n}$ in $P$ and any continuous function $f: G^{n} \to \mathbb{R}$ invariant under the action of $J_{\{c_{1},  ..., c_{n}\}}$ on $G^{n}$ defined after Definition \ref{actiongauge}, the mapping $h \mapsto f\left(h(c_{1}),  ..., h(c_{n})\right)$ is measurable. 
\end{definition} 

Let us remark that if $M$ is the disjoint union of two smooth compact surfaces, $M = M_{1} \sqcup M_{2}$ then $\mathcal{M}ult\left(P(M),G\right) \simeq \mathcal{M}ult(P(M_{1}),G) \times \mathcal{M}ult(P(M_{2}),G)$. Besides, let $\mathcal{I}$ (respectively $\mathcal{I}_{1}$, $\mathcal{I}_{2}$) be the invariant $\sigma$-field on $\mathcal{M}ult\left(P(M),G\right)$ (respectively $\mathcal{M}ult\left(P(M_{1}),G\right)$, $\mathcal{M}ult\left(P(M_{2}),G\right)$). We have $\mathcal{I} \simeq \mathcal{I}_{1} \otimes \mathcal{I}_{2}.$

Locally bi-Lipschitz homeomorphisms between surfaces give rise to some examples of functions which are measurable with respect to the Borel and the invariant $\sigma$-fields. Given $M$ and $M'$ two smooth compact surfaces, suppose that we are given a locally bi-Lipschitz homeomorphism $\psi$ from $M$ to $M'$, we can construct, for any $h$ in $\mathcal{M}ult\left(P(M'),G\right)$, a natural multiplicative function $\psi^{*}h$ on $M$: 
\begin{align*}
\left(\psi^{*}h\right) ( p) = h\left( \psi( p)\right), \forall p \in P(M).
\end{align*}
This defines a function $\psi^{*}: \mathcal{M}ult\left(P(M'),G\right) \to \mathcal{M}ult\left(P(M),G\right).$ The function $\psi^{*}$ is measurable for the Borel and the invariant $\sigma$-fields. From now on, we denote also by $\psi$ the application $\psi^{*}$.

On the invariant $\sigma$-field on $\mathcal{M}ult(P,G)$, any measure is of course invariant by the gauge transformations. Explicitly, for any measure $\mu$ on $\left(\mathcal{M}ult(P, G), \mathcal{I}\right)$, for any measurable continuous function $f$ from $\left(\mathcal{M}ult(P, G), \mathcal{I}\right)$ to $\mathbb{R}$ and for any $j \in J_P$:
\begin{align}
\label{inv2}
\int_{\mathcal{M}ult(P, G)} f(j \bullet h) d\mu(h) = \int_{\mathcal{M}ult(P, G)} f( h) d\mu(h). 
\end{align}
The following definition is less trivial as the following class of gauge-invariant measures is not equal to the collection of all measures. 

\begin{definition}
Let $\mu$ be a random holonomy field on $P$. We say that $\mu$ is invariant under gauge transformations if and only if the Equality (\ref{inv2}) holds for any continuous function $f$ from $\left(\mathcal{M}ult(P, G),\mathcal{B}\right)$ to $\mathbb{R}$ and any $j \in J_P$.
\end{definition}

\begin{remarque}
\label{haarchemin}
Let $\mu$ be a gauge-invariant random holonomy field on $P$. Let $p$ a path in $P$ which is not a loop: $\underline{p} \neq \overline{p}$. Then under $\frac{\mu}{\mu({\mathbbm{1}})}$, $h( p)$ has the law of a Haar random variable. Indeed, applying the gauge transformation which is equal to $1$ everywhere except at $\underline{p}$ or $\overline{p}$, where its value is set to be an arbitrary element of $G$, we see that the law of $h( p)$ is invariant by left- and right-multiplication. 
\end{remarque}

There exists a one-to-one correspondence between measures on $\left(\mathcal{M}ult( P, G),\mathcal{I}\right)$ and gauge-invariant measures on $\left(\mathcal{M}ult( P, G),\mathcal{B}\right)$. The next proposition is similar to the results of \cite{Baez}. For any positive integer $n$, for any continuous function $f$ on $G^{n}$ and any set of paths $\{c_1, ..., c_n\}$ in $P$, we define the function $\hat{f}_{J_{c_1, ..., c_n}}$ such that, for any $g_1, ..., g_n$ in $G$: 
\begin{align}
\label{moyennisee}
\hat{f}_{J_{c_1, ..., c_n}}(g_1,...,g_n) = \int_{J_{c_1, ..., c_n}} f\left(j \bullet (g_1, ..., g_n)\right)dj, 
\end{align}
where $dj$ is the Haar measure on $J_{c_1, ..., c_n}$. 

\begin{notation}
If $\mu$ is a finite measure on a measurable space $(\Omega, \mathcal{A})$ and if $\mathcal{B} \subset \mathcal{A}$ is a sub-$\sigma$-field, by $\mu_{\mid \mathcal{B}}$, we denote the image of $\mu$ by the identity map: $(\Omega, \mathcal{A}) \to (\Omega, \mathcal{B})$. 
\end{notation}

\begin{proposition}
\label{exten}
For any measure $\mu$ on $\left(\mathcal{M}ult( P, G),\mathcal{I}\right)$, there exists a unique gauge-invariant random holonomy field on $P$ which will be denoted either by $\hat{\mu}$ or $\mu^{\widehat { }}$, such that $\hat{\mu}_{\mid \mathcal{I}} = \mu.$
\end{proposition}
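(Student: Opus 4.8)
The plan is to construct $\hat\mu$ by a gauge-averaging procedure dual to the one recorded in (\ref{moyennisee}), and then to recover it through the Riesz--Markov machinery already set up in the proof of Proposition \ref{limitproj}. Recall that $\mathcal{M}ult(P,G)$ is compact and that the continuous cylinder functions form a dense subalgebra $A$ of $C(\mathcal{M}ult(P,G),\mathbb{R})$. For a cylinder function $\mathbf{f}(h)=f(h(c_1),\dots,h(c_n))$ I would set $\hat{\mathbf f}(h)=\hat f_{J_{c_1,\dots,c_n}}(h(c_1),\dots,h(c_n))$, the gauge-average of (\ref{moyennisee}). Since $G$ and $J_{c_1,\dots,c_n}$ are compact and the action is continuous, $\hat{\mathbf f}$ is again a continuous cylinder function; it is moreover invariant under the gauge action, hence measurable for $\mathcal{I}$. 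I then define $L(\mathbf f)=\mu(\hat{\mathbf f})$. The first point to check is that this does not depend on the chosen representation of $\mathbf f$: enlarging the family $c_1,\dots,c_n$ by further paths only enlarges the gauge group by factors of the form $G^{V'}$ indexed by vertices $V'$ on which $\mathbf f$ does not depend, and integrating a constant against the Haar probability measure on those factors leaves $\hat{\mathbf f}$ unchanged. Thus $L$ is a well-defined linear functional on $A$.

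Next I would verify that $L$ is positive and bounded, so that it extends and yields a measure. Averaging preserves positivity and $\|\hat{\mathbf f}\|_\infty\le\|\mathbf f\|_\infty$, so $L(\mathbf f)\ge 0$ when $\mathbf f\ge 0$ and $|L(\mathbf f)|\le\mu(\mathbbm{1})\,\|\mathbf f\|_\infty$ (here $\mu$ is finite, as are all our random holonomy fields). By Stone--Weierstrass density and the Riesz--Markov theorem, exactly as in Proposition \ref{limitproj}, there is a unique finite Borel measure $\hat\mu$ on $\left(\mathcal{M}ult(P,G),\mathcal B\right)$ with $\hat\mu(\mathbf f)=L(\mathbf f)$ for every continuous $\mathbf f$.

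It then remains to identify $\hat\mu$. For gauge-invariance, given $j_0\in J_P$ I would compute $\widehat{\mathbf f\circ j_0}$, where $(\mathbf f\circ j_0)(h)=\mathbf f(j_0\bullet h)$: using the composition rule $a\bullet(b\bullet h)=(ba)\bullet h$, which follows directly from Definition \ref{actiongauge}, together with the bi-invariance of the Haar measure on $J_{c_1,\dots,c_n}$, one gets $\widehat{\mathbf f\circ j_0}=\hat{\mathbf f}$, whence $\hat\mu(\mathbf f\circ j_0)=\mu(\hat{\mathbf f})=\hat\mu(\mathbf f)$; density then upgrades this to (\ref{inv2}) for all continuous $f$. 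For the identity $\hat\mu_{\mid\mathcal I}=\mu$, observe that if $\mathbf f$ is already gauge-invariant then $\hat{\mathbf f}=\mathbf f$, so $\hat\mu(\mathbf f)=\mu(\mathbf f)$; since the invariant continuous cylinder functions form an algebra containing the constants and generating $\mathcal I$, a functional monotone-class argument yields equality of the two finite measures on $\mathcal I$.

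Finally, for uniqueness, if $\nu$ is any gauge-invariant random holonomy field with $\nu_{\mid\mathcal I}=\mu$, then for every continuous cylinder $\mathbf f$ the gauge-invariance of $\nu$ and Fubini give $\nu(\mathbf f)=\int_{J_{c_1,\dots,c_n}}\nu(\mathbf f\circ j)\,dj=\nu(\hat{\mathbf f})$; as $\hat{\mathbf f}$ is $\mathcal I$-measurable this equals $\mu(\hat{\mathbf f})=\hat\mu(\mathbf f)$, and density forces $\nu=\hat\mu$. The only genuinely delicate points are the representation-independence of $L$ and the invariance computation $\widehat{\mathbf f\circ j_0}=\hat{\mathbf f}$; both reduce to the factorization and bi-invariance of the Haar measure on $J_{c_1,\dots,c_n}$ together with the composition rule of the gauge action, so I expect the well-definedness of $L$ to be the main step requiring care.
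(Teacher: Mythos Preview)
Your proof is correct and follows essentially the same gauge-averaging construction as the paper: define $\hat\mu(\mathbf f)=\mu(\hat f_{J_{c_1,\dots,c_n}})$ and realize this as a measure via Riesz--Markov. The only differences are presentational: the paper applies Riesz--Markov on each finite $G^n$ and then invokes Proposition~\ref{limitproj} for the projective limit, whereas you apply Riesz--Markov directly on the compact space $\mathcal{M}ult(P,G)$ (which is precisely how Proposition~\ref{limitproj} is itself proved); and the paper defers uniqueness to Proposition~\ref{unicite1}, while your direct Fubini argument $\nu(\mathbf f)=\nu(\hat{\mathbf f})=\mu(\hat{\mathbf f})$ is self-contained and arguably cleaner since it does not require the connectivity hypotheses on $P$ that Proposition~\ref{unicite1} uses.
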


\begin{proof}
The uniqueness of $\hat{\mu}$ follows from the upcoming Proposition \ref{unicite1}. Let us prove its existence. 
We will define $\hat{\mu}$ by the fact that for any measurable function $f: G^{n} \to \mathbb{R}^{+}$ and any $n$-tuple $c_{1}$, ..., $c_{n}$ of elements of $P$: 
\begin{align*}
\hat{\mu}\left( f \left(h(c_{1}),  ..., h(c_{n}) \right) \right) = \mu \left( \hat{f}_{J_{c_{1},  ..., c_{n}}} \left(h(c_{1}),  ..., h(c_{n})\right) \right).
\end{align*}

Let us consider a finite set of paths in $P$, $P_1 = \{ c_{1},  ..., c_{n}\}$. Let us consider the natural inclusion $\iota: \mathcal{M}ult\left(P_1, G\right) \subset G^{n}$ given by the evaluations on $c_1, ..., c_n$. The equalities $\hat{\mu}_{P_1}(f) = \mu \left( \hat{f}_{J_{c_{1},  ..., c_{n}}} \left(h(c_{1}),  ..., h(c_{n})\right) \right)$ for any continuous function on $G^{n}$ define a linear positive functional on $C(G^{n})$. By compactness of $G^{n}$, applying the theorem of Riesz-Markov, it gives a measure $\hat{\mu}_{P_1}$ on $G^{n}$, the support of which is easily seen to be a subset of $\iota\left(\mathcal{M}ult\left(P_1, G\right)\right)$. We can thus look at the induced measure on $\mathcal{M}ult\left(P_1, G\right)$ denoted by $\hat{\mu}_{\mid\mathcal{M}ult(P_1, G)}$. The family of measures $\left(\hat{\mu}_{\mid\mathcal{M}ult(P_1, G)}\right)_{P_1\subset P, \#P_1 <\infty}$ forms a projective family of measures for the inclusion of sets. Thus, by Proposition \ref{limitproj}, it defines a measure on $\left(\mathcal{M}ult\left(P, G\right),\mathcal{B}\right)$.
\end{proof}

Let us introduce a notion which will be important in the definition of planar markovian holonomy fields. Let $\mu$ be a random holonomy field on $P$. 
\begin{definition}
\label{Iindependence}
Let $P_1$ and $P_2$ be two families of paths in $P$. We will say that $\left(h( p)\right)_{p \in P_1}$ and $\left(h( p)\right)_{p \in P_2}$ are $\mathcal{I}$-independent if and only if, for any finite family $(p_{1}^{i})_{i=1}^{n}$ in $P_1$, any finite family $(p_{2}^{i})_{i=1}^{m}$ in $P_2$ and any continuous function $f: G^{n} \to \mathbb{R}$ (resp. $g: G^{m} \to \mathbb{R}$) invariant under the action of $J_{\{p_{1}^{1}, ...,p_{1}^{n}\}}$ (resp. $J_{\{p_{2}^{1}, ...,p_{2}^{m}\}}$), the following equality holds: 
\begin{align}
\label{indep}
\mu\left[f\left((h(p_{1}^{i}))_{i=1}^{n}\right)g\left((h(p_{2}^{j}))_{j=1}^{m}\right)\right] \!=\! \mu\left[f\left((h(p_{1}^{i}))_{i=1}^{n}\right)\right] \mu \left[g\left((h(p_{2}^{j}))_{j=1}^{m}\right)\right].
\end{align}
This is equivalent to say that under $\mu$, the two $\sigma$-fields $\sigma\left(h( p): p \in P_1\right) \cap \mathcal{I}$ and $\sigma\left(h( p): p \in P_2\right) \cap \mathcal{I}$ are independent. 
\end{definition}

Let us remark that the invariant $\sigma$-field on $G^{2}$ which we denote by $\mathcal{I}_{(2)}$ is, in general, different from the product $\mathcal{I} \otimes \mathcal{I}$ where $\mathcal{I}$ is the invariant $\sigma$-field of $G$. When $G = \mathfrak{S}_3$, this fact is implied by the following assertion: $$11 = \#\left\{\left[(\sigma,\sigma')\right], (\sigma,\sigma') \in \mathfrak{S}_3^{2} \right\}\neq \left(\#\left\{\left[\sigma\right], \sigma \in \mathfrak{S}_3 \right\}\right)^{2}=9.$$ In particular, if $(X, Y)$ is a random vector such that $X$ is $\mathcal{I}$-independent of $Y$, the knowledge of the laws of the random conjugacy classes $[X]$ and $[Y]$ {\em does not allow us} to reconstruct the law of the random diagonal conjugacy class $[(X, Y)]$. In the following remark, we will see that in some very special cases, the $\mathcal{I}$-independence is equivalent to the independence. 

\begin{remarque}
\label{I-indep-et-indep-normale}
Let $P_1$ and $P_2$ be two sets of paths such that their sets of endpoints $V_{P_1}$ and $V_{P_2}$ are disjoint. The two families $\left(h( p)\right)_{p \in P_1}$ and $\left(h( p)\right)_{p \in P_2}$, defined on $\left(\mathcal{M}ult(P,G), \mathcal{B}, \mu\right)$, are $\mathcal{I}$-independent if and only if they are independent. 

We only have to prove that the $\mathcal{I}$-independence implies the independence. Let us suppose that they are $\mathcal{I}$-independent. If $f$ and $g$ are real-valued continuous functions on $G^{n}$ and $G^{m}$ respectively, we denote by $f \otimes g$ the function from $G^{n} \times G^{m}$ to $\mathbb{R}$ defined by: $$f \otimes g(x_{1}, ..., x_{n},x_{n+1}, ...,x_{n+m}) = f(x_{1},  ..., x_{n}) g(x_{n+1},...,x_{n+m}).$$ With this notation and the notation (\ref{moyennisee}), since the two families $P_1$ and $P_2$ have disjoint sets of endpoints, $$\widehat{\left(f\otimes g\right)}_{J_{P_1 \cup P_2}} = \widehat{f}_{J_{P_1}} \otimes \widehat{g}_{J_{P_2}},$$ 
where the partial gauge group was defined in Definition \ref{actiongauge}. 
Thus, using the gauge-invariance of $\mu$, 
\begin{align*}
\mu\left[ f \left((h(p ))_{p \in P_1}\right) g \left((h(p ))_{p \in P_2}\right) \right] 
&= \mu\left[\left(f\otimes g\right)\left((h(p ))_{p \in P_1},(h( p))_{p \in P_2}\right) \right] \\
&= \mu\left[\widehat{\left(f\otimes g\right)}_{J_{P_1 \cup P_2}} \left((h(p ))_{p \in P_1},(h( p))_{p \in P_2} \right) \right] \\
&= \mu\left[\widehat{f}_{J_{P_1}} \otimes \widehat{g}_{J_{P_2}} \left((h(p ))_{p \in P_1},(h( p))_{p \in P_2} \right) \right]\\
&= \mu\left[\widehat{f}_{J_{P_1}} \left((h(p ))_{p \in P_1} \right)\right] \mu \left[ \widehat{g}_{J_{P_2}} \left((h(p ))_{p \in P_2} \right)\right] \\
&= \mu\left[f \left((h(p))_{p \in P_1} \right)\right] \mu \left[g \left((h( p))_{p \in P_2} \right)\right].
\end{align*}
This proves that the two families $\left(h( p)\right)_{p \in P_1}$ and $\left(h( p)\right)_{p \in P_2}$ are independent. 
\end{remarque}

  Let us introduce the main ingredient in order to construct gauge-invariant random holonomy fields: the {\em loop paradigm} for multiplicative functions. From now on, $P$ will be connected, stable by concatenation and inversion, $m$ is an endpoint of $P$ and we recall that $L_m$ is the set of loops in $P$ based at $m$. 
  \begin{lemme}
  \label{multpara}
The {\em loop paradigm} for the multiplicative functions is: 
\begin{align}
\label{looppara}
\mathcal{M}ult(P,G)/J_{P} \simeq \mathcal{M}ult_{P}(L_m,G)/J_{L_m}.
\end{align}
\end{lemme}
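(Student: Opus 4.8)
The plan is to construct two explicit mutually inverse bijections between the orbit spaces $\mathcal{M}ult(P,G)/J_P$ and $\mathcal{M}ult_{P}(L_m,G)/J_{L_m}$. I would begin with the restriction map $\rho\colon \mathcal{M}ult(P,G)\to \mathcal{M}ult_{P}(L_m,G)$, $h\mapsto h_{|L_m}$. This indeed lands in $\mathcal{M}ult_{P}(L_m,G)$: the restriction of a multiplicative function is multiplicative, and by Remark \ref{equalityequiv} it takes equal values on loops of $L_m$ that are equivalent in $P$, which is precisely pre-multiplicativity. For $j\in J_P$ and $l\in L_m$ one has $\underline l=\overline l=m$, so $(j\bullet h)(l)=j_{\overline l}^{-1}h(l)j_{\underline l}=j_m^{-1}h(l)j_m$. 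Recalling from Remark \ref{diagconj} that $J_{L_m}=G$ acts on $\mathcal{M}ult_{P}(L_m,G)$ by conjugation, this says $\rho(j\bullet h)=j_m\bullet\rho(h)$, so $\rho$ is equivariant for the projection $J_P\to J_{L_m}=G$, $j\mapsto j_m$, and descends to a map $\bar\rho$ between the orbit spaces.

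Next I would build the inverse using connectedness of $P$ to choose, for each endpoint $x\in V$, a \emph{spoke} $s_x\in P$ from $m$ to $x$, with $s_m$ a fixed loop based at $m$ (the constant one when available). Given $\phi\in\mathcal{M}ult_{P}(L_m,G)$, define $\Psi(\phi)(c)=\phi\big(s_{\underline c}\,c\,s_{\overline c}^{-1}\big)$; the argument lies in $L_m$ since $P$ is stable under concatenation and inversion. Compatibility with inversion is immediate from $\phi(l^{-1})=\phi(l)^{-1}$. The crucial point is multiplicativity: for $c_1\colon x\to y$ and $c_2\colon y\to z$, the concatenated loop $(s_x c_1 s_y^{-1})(s_y c_2 s_z^{-1})$ contains the backtrack $s_y^{-1}s_y$, so by Definition \ref{equiv} it is equivalent in $P$ to $s_x c_1 c_2 s_z^{-1}$; since $\phi$ is pre-multiplicative it assigns them the same value, whence
\begin{align*}
\Psi(\phi)(c_2)\,\Psi(\phi)(c_1)=\phi(s_y c_2 s_z^{-1})\,\phi(s_x c_1 s_y^{-1})=\phi\big((s_x c_1 s_y^{-1})(s_y c_2 s_z^{-1})\big)=\phi(s_x c_1 c_2 s_z^{-1})=\Psi(\phi)(c_1 c_2).
\end{align*}
Thus $\Psi(\phi)\in\mathcal{M}ult(P,G)$. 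Moreover $\Psi$ is equivariant: for $g\in J_{L_m}=G$ acting by conjugation, $\Psi(g\bullet\phi)(c)=g^{-1}\Psi(\phi)(c)g=(\tilde g\bullet\Psi(\phi))(c)$, where $\tilde g\in J_P$ is the constant gauge equal to $g$; hence $\Psi$ descends to $\bar\Psi$ on the orbit spaces.

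Finally I would check that $\bar\rho$ and $\bar\Psi$ are mutually inverse by computing the two compositions. For $l\in L_m$, $\Psi(\phi)(l)=\phi(s_m l s_m^{-1})=\phi(s_m)^{-1}\phi(l)\phi(s_m)$, so $\rho(\Psi(\phi))=\phi(s_m)\bullet\phi$ lies in the same $J_{L_m}$-orbit as $\phi$, giving $\bar\rho\circ\bar\Psi=\mathrm{id}$. Conversely, letting $j^h\in J_P$ be the gauge with $j^h_x=h(s_x)$, one computes $\Psi(\rho(h))(c)=h\big(s_{\underline c}\,c\,s_{\overline c}^{-1}\big)=h(s_{\overline c})^{-1}h(c)h(s_{\underline c})=(j^h\bullet h)(c)$, so $\Psi(\rho(h))=j^h\bullet h$ is gauge-equivalent to $h$ and $\bar\Psi\circ\bar\rho=\mathrm{id}$. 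Therefore $\bar\rho$ is a bijection with inverse $\bar\Psi$, which is the asserted identification \eqref{looppara}. I expect the main obstacle to be the well-definedness of the reconstruction $\Psi$, that is, its multiplicativity: this is exactly the step where pre-multiplicativity (invariance under equivalence in $P$, Definition \ref{equiv}) is indispensable to cancel the backtrack $s_y^{-1}s_y$, and one must keep the concatenation convention $h(c_1c_2)=h(c_2)h(c_1)$ and the conjugation action consistent throughout all the computations.
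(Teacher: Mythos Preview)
Your proof is correct and follows essentially the same approach as the paper: restriction in one direction, and in the other direction a reconstruction via chosen spokes $s_x$ (the paper calls them $q_v$) together with pre-multiplicativity to handle the backtrack cancellation. The paper simply fixes $s_m$ to be the trivial path, which makes $\bar\rho\circ\bar\Psi=\mathrm{id}$ immediate, whereas you allow a general $s_m$ and absorb the resulting conjugation by $\phi(s_m)$ into the $J_{L_m}$-orbit; both are fine.
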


\begin{proof}
There exists a natural restriction function: 
\begin{align*}
r: \mathcal{M}ult(P,G)/J_{P} \to  \mathcal{M}ult_{P}(L_m,G)/J_{L_m}. 
\end{align*}
Let us show that there exists an application 
\begin{align*}
\iota: \mathcal{M}ult_{P}(L_m,G)/J_{L_m} \to \mathcal{M}ult(P,G)/J_{P}, 
\end{align*}
such that $r \circ \iota = id$ and $\iota \circ r=id$.  

The proof uses the ideas used in order to prove Lemma 2.1.5 of \cite{Levy}. For any endpoint $v$ of $P$, let $q_{v}$ be a path in $P$ joining $m$ to $v$. This is possible since we supposed that $P$ was connected. We set $q_{m}$ to be the trivial path. Then, for any path $p$ in $P$ we define $l( p) = q_{\underline{p}}p q_{\overline{p}}^{-1}$. One can look at the Figure \ref{Construcfig} to have a better understanding of $l( p)$. For any $h$ in $\mathcal{M}ult_P(L_m, G)$, we define for any path $p$, $$\iota(h)( p) = h( l ( p ) ). $$
This is a multiplicative function. Let us show, for example, that it is compatible with the concatenation operation. For any $h \in \mathcal{M}ult_P(L_m, G)$ and any paths $p$ and $p'$ in $P$ such that $\overline{p} = \underline{p'}$, the following sequence of equalities holds: 
\begin{align*}
\iota(h)( p p') &= h\left(l( p p' )\right) = h( q_{\underline{p}}p p' q_{\overline{p'}}^{-1}) = h( q_{\underline{p}}p q_{\overline{p}}^{-1} q_{\underline{p'}} p' q_{\overline{p'}}^{-1}) \\&= h( q_{\underline{p'}} p' q_{\overline{p'}}^{-1}) h( q_{\underline{p}}p q_{\overline{p}}^{-1}) =h\left(l(p')\right) h\left(l(p )\right)= \iota(h) (p') \iota(h)(p ), 
\end{align*}
where in the third equality we used the fact that $h$ is an element of $\mathcal{M}ult_{P}(L_m,G)$ and not only in $\mathcal{M}ult(L_m,G)$. 

Thus $\iota$ is an application from $\mathcal{M}ult_P(L_m,G)$ to $\mathcal{M}ult(P,G)$. This application $\iota$ defines a function, that we will also call $\iota$ from $\mathcal{M}ult_{P}(L_m,G)/J_{L_m}$ to $\mathcal{M}ult(P,G)/J_{P}$. Indeed, if $j \in J_{L_m}\simeq G$, $h \in \mathcal{M}ult_P(L_m,G)$ and $p \in P$: 
\begin{align*}
\iota(j \bullet h) (p ) = j \bullet h(l( p)) = j(m)^{-1} h(l( p)) j(m) = \tilde{j} \bullet \iota(h) (p ), 
\end{align*} 
where $\tilde{j}$ is the constant function which is equal to $j$. 
Let us show that $\iota \circ r = id$: for any $h\in \mathcal{M}ult(P,G)$,
\begin{align*}
\left(\iota(r(h))( p ) \right)_{p \in P}= r(h)(l( p)) &=\left( h( q_{\underline{p}} p q_{\overline{p}}^{-1}) \right)_{p \in P}\\&= \left( h(q_{\overline{p}})^{-1} h( p) h(q_{\underline{p}}) \right)_{p \in P},
\end{align*}
thus, in $\mathcal{M}ult(P,G)/J_P$, we have the equality $\left(\iota(r(h))( p ) \right)_{p \in P} = (h(p))_{p \in P}$. The equality $r \circ \iota = id$ is even easier. 
\end{proof}

\begin{figure}%%[here]
 \centering
  \includegraphics[width=130pt]{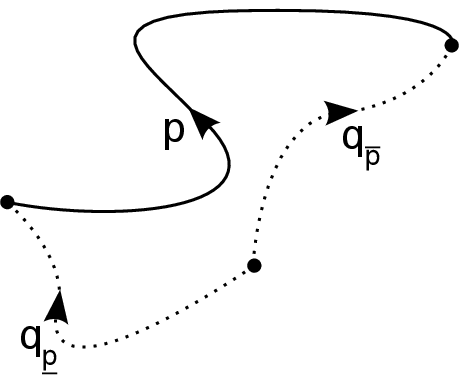}
 \caption{Construction of the loop $l(p )$.}
 \label{Construcfig}
\end{figure}
 
From the proof of Lemma \ref{multpara}, one also gets the following lemma. 
\begin{lemme}
\label{inv3}
There exists an application: 
\begin{align*}
\iota: \mathcal{M}ult_P(L_m, G) \to \mathcal{M}ult(P, G)
\end{align*}
which is measurable for the Borel $\sigma$-field and such that, for any loop $l\in L_m$, the following diagram is commutative: 
\begin{align*}
 \xymatrix{
  {\mathcal{M}ult_P(L_m, G)} \ar[rr]^\iota \ar[dr]_{h(l)} && \mathcal{M}ult(P, G) \ar[dl]^{h(l)} \\
  & \ G &&
 }
 \end{align*}
\end{lemme}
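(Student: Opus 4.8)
The plan is to reuse verbatim the map $\iota$ already constructed in the proof of Lemma \ref{multpara}, and then to verify only the two additional properties: measurability and commutativity of the triangle. Recall that for a fixed choice of paths $q_v \in P$ joining $m$ to each endpoint $v$ of $P$ (with $q_m$ the trivial path), one sets $l(p) = q_{\underline{p}}\, p\, q_{\overline{p}}^{-1}$ and defines $\iota(h)(p) = h(l(p))$ for every $h \in \mathcal{M}ult_P(L_m, G)$ and every $p \in P$. The proof of Lemma \ref{multpara} already establishes that $\iota(h)$ is a genuine multiplicative function on $P$ and that $\iota$ is a well-defined application $\mathcal{M}ult_P(L_m, G) \to \mathcal{M}ult(P, G)$ at the level of the function spaces (before passing to quotients), so nothing there needs to be reproved.

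For measurability, I would argue directly from the definition of the Borel $\sigma$-field on the target. That $\sigma$-field is by definition the smallest one making the cylinder maps $h \mapsto f(h(c_1), \dots, h(c_n))$ measurable, for continuous $f: G^n \to \mathbb{R}$ and paths $c_1, \dots, c_n \in P$; hence it suffices to show each composite $h \mapsto f(\iota(h)(c_1), \dots, \iota(h)(c_n))$ is measurable on $\mathcal{M}ult_P(L_m, G)$. By the definition of $\iota$, this composite equals $h \mapsto f(h(l(c_1)), \dots, h(l(c_n)))$, and each $l(c_i)$ lies in $L_m$; but this is precisely one of the generating cylinder functions for the Borel $\sigma$-field on $\mathcal{M}ult_P(L_m, G)$, so it is measurable. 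This yields the measurability of $\iota$.

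For the commutativity of the triangle, I must check that $\iota(h)(l) = h(l)$ for every loop $l \in L_m$. Since $l$ is based at $m$, both $\underline{l}$ and $\overline{l}$ equal $m$, so $l(l) = q_m\, l\, q_m^{-1}$ with $q_m$ the trivial path. Because $q_m$ is constant and $q_m q_m = q_m$, multiplicativity gives $h(q_m) = h(q_m)^2$, hence $h(q_m) = e$. The convention $h(c_1 c_2) = h(c_2)h(c_1)$ then yields $h(q_m\, l\, q_m^{-1}) = h(q_m)^{-1} h(l) h(q_m) = h(l)$, so that $\iota(h)(l) = h(l(l)) = h(l)$, which is exactly the asserted commutativity.

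I do not anticipate a genuine obstacle: the entire content is carried by the construction in Lemma \ref{multpara}, and the only points requiring care are to phrase measurability through the generators of $\mathcal{B}$ rather than attempting a set-theoretic description of $\iota$, and to handle the trivial spoke $q_m$ correctly so that each loop based at $m$ is sent to itself at the level of holonomies.
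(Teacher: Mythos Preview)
Your proof is correct and follows exactly the approach the paper intends: the paper's own proof of this lemma is the single sentence ``From the proof of Lemma \ref{multpara}, one also gets the following lemma,'' so you have simply made explicit the measurability check via cylinder generators and the commutativity check via $h(q_m)=e$, both of which are routine once $\iota$ is borrowed from Lemma \ref{multpara}.
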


An other consequence of Lemma \ref{multpara} is Lemma $2.1.5$ in \cite{Levy} given below. 

\begin{lemme}
\label{inv1}
For any paths $c_1, ..., c_n$ in $P$ and any measurable function $f: G^{n} \to \mathbb{R}$ invariant under the action of $J_{c_1, ..., c_n}$ on $G^{n}$, there exist  $n$ loops $l_1, ..., l_n$ in $P$ based at $m$ and a measurable function $\tilde{f}: G^{n} \to \mathbb{R}$ invariant under the diagonal action of $G$ such that:
\begin{align*}
f\left(h(c_1), ..., h(c_n)\right) = \tilde{f}\left(h(l_1), ..., h(l_n)\right).
\end{align*}
\end{lemme}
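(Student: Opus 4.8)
The plan is to reuse the explicit loop construction from the proof of Lemma \ref{multpara}. For each endpoint $v$ of $P$ I would fix a path $q_v \in P$ joining $m$ to $v$ (with $q_m$ trivial), which is possible since $P$ is connected, and set $l_i = l(c_i) = q_{\underline{c_i}} c_i q_{\overline{c_i}}^{-1}$; each $l_i$ is a loop in $P$ based at $m$. The multiplicative property, together with the convention $h(c_1 c_2) = h(c_2) h(c_1)$, then gives, for every $h \in \mathcal{M}ult(P, G)$,
$$h(l_i) = h(q_{\overline{c_i}})^{-1}\, h(c_i)\, h(q_{\underline{c_i}}).$$

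Next I would recognize the right-hand side as a gauge transformation. Writing $V$ for the finite set of endpoints of $c_1, \ldots, c_n$, the group $J_{c_1, \ldots, c_n} = G^{V}$ acts on $G^{n}$ by $(j \bullet \mathbf{g})_i = j_{\overline{c_i}}^{-1} \mathbf{g}_i j_{\underline{c_i}}$. For a fixed $h$, set $\mathbf{g} = (h(c_1), \ldots, h(c_n))$ and define $j \in G^{V}$ by $j_v = h(q_v)$. Comparing with the displayed formula shows $(j \bullet \mathbf{g})_i = h(l_i)$ for every $i$, so the invariance of $f$ under $J_{c_1, \ldots, c_n}$ yields
$$f\left(h(c_1), \ldots, h(c_n)\right) = f\left(j \bullet \mathbf{g}\right) = f\left(h(l_1), \ldots, h(l_n)\right)$$
for every $h$, which is exactly the desired identity with $\tilde f = f$. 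Measurability of $\tilde f$ is then inherited from that of $f$.

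It remains to check that $\tilde f = f$ is invariant under the diagonal conjugation action, and here lies the only point that deserves care: one must observe that diagonal conjugation is itself a \emph{sub-action} of the $J_{c_1, \ldots, c_n}$-action. Embedding $G \hookrightarrow G^{V}$ diagonally by $g \mapsto (g)_{v \in V}$, the formula $(j \bullet \mathbf{g})_i = j_{\overline{c_i}}^{-1} \mathbf{g}_i j_{\underline{c_i}}$ collapses to $g^{-1} \mathbf{g}_i g$, which is precisely diagonal conjugation (cf. Remark \ref{diagconj}). Since $f$ is invariant under all of $G^{V}$, it is in particular invariant under this diagonal copy of $G$, hence under diagonal conjugation. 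The only genuinely delicate steps I would double-check are the bookkeeping of the multiplicativity convention in the computation of $h(l_i)$ and the identification $j_v = h(q_v)$ realizing $(j \bullet \mathbf{g})_i = h(l_i)$; there is no serious obstacle once one notices that a single function serves simultaneously as the $G^{V}$-invariant original $f$ and its diagonally invariant avatar $\tilde f$.
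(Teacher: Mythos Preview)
Your proof is correct and is exactly the argument the paper has in mind: the paper does not spell out a proof of Lemma~\ref{inv1} but simply says it is ``another consequence of Lemma~\ref{multpara}'' and identifies it with Lemma~2.1.5 of \cite{Levy}, and your construction of $l_i = q_{\underline{c_i}} c_i q_{\overline{c_i}}^{-1}$ together with the observation that $\tilde f = f$ already works (via the diagonal embedding $G \hookrightarrow G^{V}$) is precisely how one unpacks that reference.
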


Lemma \ref{inv1} allows us to reduce the family of variables that $\mathcal{I}$ has to make measurable: we only have to look at finite collections of loops based at the same point. This leads us to the Definition $2.1.6$ of \cite{Levy}, which, in our case, is a result and not a definition.

\begin{proposition}
The invariant $\sigma$-field $\mathcal{I}$ on $\mathcal{M}ult(P,G)$ is the smallest $\sigma$-field such that for any positive integer $n$, any loops $l_{1},  ..., l_{n}$ based at $m$ and any continuous function $f: G^{n} \to \mathbb{R}$ invariant under the diagonal action of $G$, the mapping $h \mapsto f(h(l_{1}),  ..., h(l_{n}))$ is measurable. 
\end{proposition}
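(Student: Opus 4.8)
The plan is to prove the two inclusions between $\mathcal{I}$ and the $\sigma$-field described in the statement, which I will call $\mathcal{I}_m$: the smallest $\sigma$-field making measurable all maps $h \mapsto f(h(l_1),\dots,h(l_n))$ with $l_1,\dots,l_n \in L_m$ and $f\colon G^n \to \mathbb{R}$ continuous and invariant under the diagonal action of $G$.

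\emph{First inclusion $\mathcal{I}_m \subseteq \mathcal{I}$.} Loops based at $m$ are in particular paths in $P$, and by Remark \ref{diagconj} the partial gauge group $J_{\{l_1,\dots,l_n\}}$ attached to a finite family of loops all based at $m$ is exactly $G$ acting by diagonal conjugation. Hence a continuous function invariant under the diagonal action of $G$ is precisely a continuous function invariant under $J_{\{l_1,\dots,l_n\}}$, so each generating map of $\mathcal{I}_m$ is literally one of the generating maps appearing in the definition of $\mathcal{I}$. This gives $\mathcal{I}_m \subseteq \mathcal{I}$.

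\emph{Second inclusion $\mathcal{I} \subseteq \mathcal{I}_m$.} I take a generator $h \mapsto f(h(c_1),\dots,h(c_n))$ of $\mathcal{I}$, where $c_1,\dots,c_n$ are paths in $P$ and $f$ is continuous and invariant under $J_{\{c_1,\dots,c_n\}}$. By Lemma \ref{inv1} there exist loops $l_1,\dots,l_n \in L_m$ and a measurable diagonally-invariant function $\tilde f\colon G^n \to \mathbb{R}$ with $f(h(c_1),\dots,h(c_n)) = \tilde f(h(l_1),\dots,h(l_n))$. The only point requiring care is that $\tilde f$ is merely measurable, not continuous, so it is not directly a generator of $\mathcal{I}_m$. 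To handle this I pass to the compact metric quotient $Q = G^n/G$ by the diagonal conjugation action, with continuous orbit map $\pi\colon G^n \to Q$. Continuous diagonally-invariant functions on $G^n$ are exactly the pullbacks $g \circ \pi$ of functions $g \in C(Q)$; since $Q$ is a compact metric space, $C(Q)$ generates its Borel $\sigma$-field. Consequently the map $h \mapsto \pi\big(h(l_1),\dots,h(l_n)\big) \in Q$ is $\mathcal{I}_m$-measurable, because its composition with each $g \in C(Q)$ is a generator of $\mathcal{I}_m$. Writing $\tilde f = \bar f \circ \pi$ with $\bar f$ Borel on $Q$, which is legitimate as a measurable invariant function factors through the orbit space of a compact group action on a compact metric space, I obtain that $h \mapsto \tilde f(h(l_1),\dots,h(l_n)) = \bar f\big(\pi(h(l_1),\dots,h(l_n))\big)$ is $\mathcal{I}_m$-measurable. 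Thus every generator of $\mathcal{I}$ is $\mathcal{I}_m$-measurable, whence $\mathcal{I} \subseteq \mathcal{I}_m$, and combining the two inclusions gives $\mathcal{I} = \mathcal{I}_m$.

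The main obstacle is the factoring of the non-continuous invariant function $\tilde f$ supplied by Lemma \ref{inv1} through the continuous diagonally-invariant functions; the rest is routine bookkeeping on generators. I expect this step to reduce cleanly to the two standard facts that on a compact metric space the continuous functions generate the Borel $\sigma$-field, and that Borel functions invariant under a compact group action factor measurably through the (compact metric) orbit space.
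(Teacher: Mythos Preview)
Your proof is correct and follows the same route as the paper, which simply states that the proposition is a consequence of Lemma \ref{inv1}. The extra work you do to pass through the quotient $Q=G^n/G$ is sound but can be bypassed: inspecting the proof of Lemma \ref{inv1} (via Lemma \ref{multpara}), the loops are $l_i=q_{\underline{c_i}}\,c_i\,q_{\overline{c_i}}^{-1}$ and the identity $f(h(c_1),\dots,h(c_n))=f(h(l_1),\dots,h(l_n))$ holds with $\tilde f=f$ itself; moreover the constant gauge transformations lie in $J_{\{c_1,\dots,c_n\}}$, so $f$ is already invariant under diagonal conjugation, hence continuous and diagonally invariant as required.
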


Another consequence of Lemma \ref{inv1} is the following proposition.

\begin{proposition}
\label{unicite1}
If $\mu$ and $\nu$ are two stochastically continuous gauge-invariant random holonomy fields on $P$, the two following assertions are equivalent: 
\begin{enumerate}
\item $\mu$ and $\nu$ are equal, 
\item there exist an endpoint $m$ of $P$ and $A_m$ a dense subset of $L_m$ for the convergence with fixed endpoints, such that for any integer $n$, any $n$-tuple of loops $l_{1},  ..., l_{n}$ in $A_m$ and any continuous function $f: G^{n} \to \mathbb{R}$ invariant under the diagonal action of $G$,  
\begin{align*}
\int_{\mathcal{M}ult(P, G)}f\left(h(l_{1}),  ..., h(l_{n})\right) d\mu(h) = \int_{\mathcal{M}ult(P, G)}f\left(h(l_{1}),  ..., h(l_{n})\right) d\nu(h). 
\end{align*}
\end{enumerate}
\end{proposition}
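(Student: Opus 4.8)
The plan is to prove the equivalence by establishing the nontrivial implication $(2)\Rightarrow(1)$, since $(1)\Rightarrow(2)$ is immediate (take $A_m = L_m$ and restrict attention to any endpoint $m$). The overall strategy is to show that the hypothesis in $(2)$, which only controls loops lying in a dense subset $A_m$, suffices to determine $\mu$ and $\nu$ on all of $\mathcal{I}$, and then to invoke uniqueness of measures agreeing on the invariant $\sigma$-field together with gauge-invariance.

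First I would reduce the problem from arbitrary paths to loops based at $m$. By Lemma \ref{inv1}, any $\mathcal{I}$-measurable cylinder test function of the form $f(h(c_1),\dots,h(c_n))$, with $f$ invariant under $J_{c_1,\dots,c_n}$, can be rewritten as $\tilde f(h(l_1),\dots,h(l_n))$ for suitable loops $l_1,\dots,l_n \in L_m$ and $\tilde f$ invariant under the diagonal conjugation action of $G$. Consequently, proving that $\mu$ and $\nu$ assign the same integral to every such $\tilde f(h(l_1),\dots,h(l_n))$ with $l_i \in L_m$ forces $\mu_{\mid\mathcal{I}} = \nu_{\mid\mathcal{I}}$. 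Since both measures are gauge-invariant, Proposition \ref{exten} (uniqueness of the gauge-invariant extension $\hat\mu$ of a measure on $\mathcal{I}$) then yields $\mu = \nu$ on all of $\mathcal{B}$, completing the deduction of $(1)$.

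The remaining and central task is therefore to pass from loops in the dense subset $A_m$ to arbitrary loops in $L_m$. Here I would use stochastic continuity. Given arbitrary loops $l_1,\dots,l_n \in L_m$ and a continuous diagonally invariant $f$, I would approximate each $l_i$ by a sequence $(l_i^{(k)})_{k}$ in $A_m$ converging to $l_i$ with fixed endpoints; density of $A_m$ guarantees this. By the stochastic continuity of $\mu$ (Definition \ref{cont}), $\int d_G(h(l_i^{(k)}), h(l_i))\,d\mu(h) \to 0$, so $h(l_i^{(k)}) \to h(l_i)$ in $L^1(\mu)$ and hence, along a subsequence, $\mu$-almost surely; the same holds for $\nu$. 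Since $f$ is continuous and $G$ is compact (so $f$ is bounded and uniformly continuous), dominated convergence gives
\begin{align*}
\int f\big(h(l_1),\dots,h(l_n)\big)\,d\mu(h) = \lim_{k\to\infty} \int f\big(h(l_1^{(k)}),\dots,h(l_n^{(k)})\big)\,d\mu(h),
\end{align*}
and likewise for $\nu$. The hypothesis in $(2)$ makes the right-hand integrals agree for every $k$, so the limits agree, which is exactly what the reduction step requires.

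The step I expect to be the main obstacle is the simultaneous approximation of the tuple $(l_1,\dots,l_n)$: I must ensure that all $n$ loops can be approximated by loops in $A_m$ based at the same point $m$ while keeping the diagonal-invariant structure intact, rather than approximating each loop in isolation. The cleanest way around this is to work directly with the joint convergence $h(l_i^{(k)}) \to h(l_i)$ in $L^1(\mu)$ for each fixed $i$, which is enough to get joint almost-sure convergence of the vector $(h(l_1^{(k)}),\dots,h(l_n^{(k)}))$ along a common subsequence by a diagonal extraction; continuity of $f$ on $G^n$ then transfers this to convergence of the integrals. A secondary technical point is that $f$ need only be assumed continuous (not uniformly so), but compactness of $G$ makes every continuous $f$ bounded and uniformly continuous, so no additional care is needed there.
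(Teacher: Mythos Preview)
Your approach is essentially the one the paper intends: it presents Proposition~\ref{unicite1} as a direct consequence of Lemma~\ref{inv1} without spelling out a proof, and your reduction (Lemma~\ref{inv1} to pass to loops at $m$, then stochastic continuity plus density of $A_m$ to pass from $A_m$ to $L_m$) is exactly what is meant.

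There is, however, a circularity in your write-up. You invoke the uniqueness part of Proposition~\ref{exten} to pass from $\mu_{\mid\mathcal{I}}=\nu_{\mid\mathcal{I}}$ to $\mu=\nu$, but in the paper the proof of that uniqueness statement explicitly says ``The uniqueness of $\hat{\mu}$ follows from the upcoming Proposition~\ref{unicite1}.'' So you are using Proposition~\ref{unicite1} to prove itself. The fix is immediate and does not require Proposition~\ref{exten}: once you know $\mu$ and $\nu$ agree on $\mathcal{I}$, take any continuous $f:G^n\to\mathbb{R}$ and paths $c_1,\dots,c_n$, and use gauge-invariance of both measures together with the averaging map~\eqref{moyennisee} to write
\[
\mu\big(f(h(c_1),\dots,h(c_n))\big)=\mu\big(\hat f_{J_{c_1,\dots,c_n}}(h(c_1),\dots,h(c_n))\big)=\nu\big(\hat f_{J_{c_1,\dots,c_n}}(\cdots)\big)=\nu\big(f(\cdots)\big),
\]
the middle equality holding because $\hat f_{J_{c_1,\dots,c_n}}$ is $\mathcal{I}$-measurable. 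This gives $\mu=\nu$ on cylinder functions, hence on $\mathcal{B}$. Replace your citation of Proposition~\ref{exten} by this two-line computation and the argument is complete.
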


If the random holonomy fields are not stochastically continuous, the proposition still holds if one replaces $A_m$ by $L_m$.

\begin{remarque}
\label{changementdepoint}
The first consequence of this proposition is the {\em change of base point invariance property} of gauge-invariant random holonomy fields. For the sake of simplicity, let us consider $\mu$ a gauge-invariant random holonomy field on $M$. Let us consider a bijection $\psi: M \to M$ and let us consider for any point $x$ of $M$, $p_x$ a path from $\psi(x)$ to~$x$. Then the random holonomy field which has the law of: 
\begin{align*}
\left(h(p_{x})\right)_{x \in M} \bullet \left[ h( p)\right]_{p \in P}
\end{align*}
under $\mu$, is still gauge-invariant and the last proposition shows that $\mu$ and the new random holonomy field are equal. Thus, for any paths $p_1, ... p_n $, we have the equality in law: 
\begin{align*}
\left(\left[\left(h(p_{x})\right)_{x \in M} \bullet \left[ h( p)\right]_{p \in P}\right] (p_i) \right)_{i=1}^{n}= \left(h(p_i)\right)_{i=1}^{n}\end{align*}
under $\mu$. For example, if $l_1, ..., l_n$ are $n$ loops based at $m$ and if $s$ is a path from $m'$ to $m$, under a gauge-invariant measure $\mu$, $\left(h(sl_1s^{-1}), ..., h(sl_ns^{-1})\right)$ has the same law as $\left(h(l_1), ..., h(l_n)\right)$. 
\end{remarque}

\subsection{Construction of random holonomy fields II: the gauge-invariant case }
\label{subsectionconstruction}
In this section, for the sake of simplicity, we will suppose that $M$ is connected. However, all results could be easily extended to the non-connected case. Thanks to Lemma \ref{multpara} and Proposition \ref{exten}, constructing a gauge-invariant random holonomy field $\mu$ becomes easier. We recall that $P$ is a connected set of paths, stable by concatenation and inversion. 

\begin{proposition}
\label{crea1}
Let $m$ be an endpoint of $P$. Suppose that for any finite subset $L$ of loops in $P$ based at $m$, we are given a gauge-invariant measure $\mu_L$ on $\mathcal{M}ult_P(L, G)$ such that, when endowed with the natural restriction functions, $\left((\mathcal{M}ult_P(L, G),\mathcal{B}), \mu_L\right)$ is a projective family. Then there exists a unique gauge-invariant random holonomy field $\mu$ on $P$ such that for any finite subset  $L$  of loops in $P$ based at $m$, one has: $$\mu_L = \mu \circ \rho_{L, P}^{-1}.$$
\end{proposition}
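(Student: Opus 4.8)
The plan is to reduce the construction of a gauge-invariant random holonomy field on $P$ to the projective-limit machinery of Proposition \ref{limitproj}, combined with the loop paradigm of Lemma \ref{multpara} and the reconstruction of Proposition \ref{exten}. The data we are handed is a projective family of gauge-invariant measures $\mu_L$ indexed by finite subsets $L$ of $L_m$, and the goal is to produce a single gauge-invariant field $\mu$ on all of $P$ restricting correctly. The main conceptual point is that we must pass from measures indexed by \emph{loops based at} $m$ to a measure on $\mathcal{M}ult(P,G)$ indexed by \emph{all paths} in $P$; this is exactly what Lemma \ref{multpara} and its measurable refinement Lemma \ref{inv3} are designed to handle.

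First I would use the hypothesis that each $\mu_L$ is gauge-invariant on $\mathcal{M}ult_P(L,G)$ to descend it to a measure on the invariant $\sigma$-field, i.e.\ on $\mathcal{M}ult_P(L,G)/J_{L_m}$. Because the restriction maps between finite loop-sets are compatible with the gauge action, the family $\left(\mu_L\right)_L$ being projective passes to a projective family of measures on the quotients $\mathcal{M}ult_P(L,G)/J_{L_m}$. Now I would invoke Proposition \ref{limitproj} applied to the collection $\mathcal{F}$ of finite subsets of $L_m$ (which is directed by union): this yields a unique projective-limit measure $\nu$ on $\left(\mathcal{M}ult_P(L_m,G),\mathcal{B}\right)$, or rather on its invariant quotient, such that $\nu \circ \rho_{L,L_m}^{-1} = \mu_L$ for every finite $L$. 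One should check the hypotheses of Proposition \ref{limitproj}: $G$ compact, the measures are probability (or uniformly bounded) measures, and the restriction maps form a projective system, all of which are inherited directly from the assumption.

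Next I would transport $\nu$ across the loop paradigm. By Lemma \ref{multpara} we have the identification $\mathcal{M}ult(P,G)/J_P \simeq \mathcal{M}ult_P(L_m,G)/J_{L_m}$, and by Lemma \ref{inv3} the map $\iota$ implementing one direction is measurable for the Borel $\sigma$-fields and commutes with every evaluation $h(l)$ for $l \in L_m$. Pushing $\nu$ forward through the measurable section $\iota$ produces a measure on $\left(\mathcal{M}ult(P,G),\mathcal{B}\right)$ defined on the invariant $\sigma$-field $\mathcal{I}$; concretely, this gives a measure $\mu^{\mathcal{I}}$ on $\left(\mathcal{M}ult(P,G),\mathcal{I}\right)$ whose evaluation statistics on loops based at $m$ match $\nu$. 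I would then apply Proposition \ref{exten} to $\mu^{\mathcal{I}}$ to obtain a genuinely gauge-invariant random holonomy field $\hat{\mu} = \mu$ on $\left(\mathcal{M}ult(P,G),\mathcal{B}\right)$ with $\hat{\mu}_{\mid\mathcal{I}} = \mu^{\mathcal{I}}$. The required compatibility $\mu \circ \rho_{L,P}^{-1} = \mu_L$ then follows because the restriction of $\mu$ to loops based at $m$, read through $\mathcal{I}$, recovers $\nu$ and hence each $\mu_L$; here one uses that each $\mu_L$ is gauge-invariant, so it is determined by its restriction to the invariant $\sigma$-field by the one-to-one correspondence of Proposition \ref{exten}.

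Finally, uniqueness is the cleanest part: any gauge-invariant field $\mu'$ with the same restrictions $\mu' \circ \rho_{L,P}^{-1} = \mu_L$ agrees with $\mu$ on all integrals $\int f(h(l_1),\dots,h(l_n))$ for loops $l_i \in L_m$ and $f$ diagonally invariant, and Proposition \ref{unicite1} (in its non-stochastically-continuous form, taking $A_m = L_m$) forces $\mu' = \mu$. The step I expect to be the genuine obstacle is verifying that the pushforward of $\nu$ by the section $\iota$ is well defined \emph{independently of the auxiliary choices} made inside Lemma \ref{multpara}, namely the connecting paths $q_v$ from $m$ to each endpoint $v$: different choices of $q_v$ change $\iota$ by a gauge transformation, so I must argue that after descending to the invariant $\sigma$-field the resulting measure does not depend on these choices. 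This is precisely where gauge-invariance of the $\mu_L$ is indispensable, and I would isolate it as the key verification, the rest being a routine assembly of the cited propositions.
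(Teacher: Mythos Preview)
Your proposal is correct and follows essentially the same route as the paper: take the projective limit over finite loop sets to obtain a gauge-invariant measure $\mu_{L_m}$ on $\mathcal{M}ult_P(L_m,G)$, push it forward via the measurable section $\iota$ of Lemma \ref{inv3}, restrict to $\mathcal{I}$, and then apply Proposition \ref{exten} to obtain the gauge-invariant field $\mu = \left((\mu_{L_m}\circ\iota^{-1})_{\mid\mathcal{I}}\right)^{\widehat{}}$, with uniqueness coming from Proposition \ref{unicite1}. The only cosmetic difference is that the paper does not first descend to quotients before taking the projective limit; and your concern about independence from the choice of paths $q_v$ is handled exactly as you anticipate, by the passage through $\mathcal{I}$.
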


\begin{proof}
The uniqueness of such a measure comes from a direct application of Proposition \ref{unicite1}. 

Let us prove the existence of the measure $\mu$. Let $L_m$ be the set of loops in $P$ based at $m$.  Using a slight modification of Proposition \ref{limitproj}, we can consider the projective limit $\mu_{L_m}$ of $(\mu_{L_f})_{L_f \subset L_m, \#L_f<\infty}$, defined on $\left(\mathcal{M}ult_P(L_m, G),\mathcal{B}\right)$ and which is gauge-invariant. The set $P$ satisfies the assumptions of Lemma \ref{inv3}: let us consider a measurable application $\iota$ from $\mathcal{M}ult_P(L_m, G)$ to $\mathcal{M}ult(P, G)$ given by this lemma. We define the measure: 
\begin{align*}
\mu = \left((\mu_{L_m}\circ\iota^{-1})_{\mid\mathcal{I}}\right)^{\widehat{}}, 
\end{align*}
where we remind the reader that $(\ )^{\ \!\widehat{}}\ $ is the notation for the extension of measures from the invariant $\sigma$-field to the Borel $\sigma$-field given by Proposition \ref{exten}. By definition, it is defined on the Borel $\sigma$-field on $\mathcal{M}ult(P, G)$ and it is gauge-invariant. 

If $L$ is a finite subset of loops in $P$ based at $m$, thanks to the definitions of $\iota$ and $\mu_{L_m}$, $(\mu_{L_m} \circ \iota^{-1}) \circ \rho^{-1}_{L, P} = \mu_{L}$. The gauge-invariance of $\mu_L$ implies that $\left({(\mu_L)}_{\mid \mathcal{I}}\right)^{\widehat{}}=\mu_L$. This leads us to the conclusion: $\mu_L = \mu \circ \rho_{L, P}^{-1}$.
\end{proof}

In particular, if we combine this proposition with Theorem~\ref{exten2}, we get the following result. 
\begin{proposition}
\label{exten4}
Let $\gamma$ be a Riemannian metric on $M$, let $m$ be a point of $M$. Suppose that for any finite subset $L$ of loops in ${\sf Aff}_\gamma(M)$ based at $m$, we are given a gauge-invariant measure $\mu_L$ on $\mathcal{M}ult_P(L, G)$ such that $\left(\left(\mathcal{M}ult_P(L, G),\mathcal{B}\right), \mu_L\right)$ is a projective family of uniformly locally stochastically $\frac{1}{2}$-H\"{o}lder continuous random holonomy fields. Then there exists a unique stochastically continuous gauge-invariant random holonomy field $\mu$ on $M$ such that for any finite subset $L$ of loops in $P(M)$ based at $m$, one has: $$\mu_L = \mu \circ \rho_{L, P(M)}^{-1}.$$
\end{proposition}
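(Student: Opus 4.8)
The plan is to read this off from Proposition \ref{crea1} and Theorem \ref{exten2}: the former produces a gauge-invariant field on the piecewise-geodesic paths, and the latter extends it to all rectifiable paths.

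First I would set $P = {\sf Aff}_\gamma(M)$ and check the standing assumptions of Proposition \ref{crea1}. Since $M$ is connected, ${\sf Aff}_\gamma(M)$ is a connected set of paths; it is stable under concatenation and inversion (a concatenation or reversal of piecewise geodesics is again piecewise geodesic), and $m$ is one of its endpoints. The data of the statement is precisely a projective family $\big((\mathcal{M}ult_P(L,G),\mathcal{B}),\mu_L\big)$ of gauge-invariant measures indexed by the finite sets $L$ of loops of ${\sf Aff}_\gamma(M)$ based at $m$, so Proposition \ref{crea1} yields a unique gauge-invariant random holonomy field $\mu_{{\sf Aff}_\gamma(M)}$ on ${\sf Aff}_\gamma(M)$ with $\mu_L = \mu_{{\sf Aff}_\gamma(M)} \circ \rho_{L,{\sf Aff}_\gamma(M)}^{-1}$ for every such $L$.

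Next I would verify that $\mu_{{\sf Aff}_\gamma(M)}$ is locally stochastically $\frac12$-H\"older continuous, which is exactly the hypothesis needed to feed it into Theorem \ref{exten2}. For a simple loop $l$ based at $m$ and bounding a disk $D \subset S$ this is immediate: taking $L = \{l\}$, the restriction identity gives $\int d_G(e,h(l))\, d\mu_{{\sf Aff}_\gamma(M)} = \int d_G(e,h(l))\, d\mu_L \le K\sqrt{vol(D)}$, with $K$ the uniform constant supplied by the hypothesis. The delicate step — and the one I expect to be the main obstacle — is transferring this to a simple loop $l$ based at an arbitrary point $m' \neq m$, since the hypothesis only controls loops based at $m$. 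Here I would choose a spoke $s \in {\sf Aff}_\gamma(M)$ from $m$ to $m'$ and use multiplicativity to write $h(sls^{-1}) = h(s)^{-1}h(l)h(s)$, a conjugate of $h(l)$; because $d_G$ is bi-invariant it is conjugation-invariant, so $d_G(e,h(l)) = d_G(e,h(sls^{-1}))$ pointwise, with $sls^{-1}$ based at $m$. Combined with the change-of-base-point invariance of gauge-invariant fields (Remark \ref{changementdepoint}), the left-hand side of (\ref{Hold}) depends only on the law of the conjugacy class of the holonomy of the underlying cycle, which is insensitive to the base point; this is what reduces the estimate for $l$ to the estimate at a based-at-$m$ representative bounding the same disk. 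Making this reduction airtight — controlling the lasso $sls^{-1}$ rather than a genuine simple loop at $m$ — is the crux of the argument.

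With $\mu_{{\sf Aff}_\gamma(M)}$ shown to be locally stochastically $\frac12$-H\"older continuous, Theorem \ref{exten2} provides a unique stochastically continuous random holonomy field $\mu$ on $M$ with $\mu_{{\sf Aff}_\gamma(M)} = \mu \circ \rho_{{\sf Aff}_\gamma(M),P(M)}^{-1}$. To see that $\mu$ is gauge-invariant I would argue by uniqueness: for $j \in J_{P(M)}$, the push-forward $j\bullet\mu$ is again stochastically continuous (the gauge action is an isometry for $d_G$ by bi-invariance and preserves convergence with fixed endpoints), and since the gauge action commutes with restriction and $\mu_{{\sf Aff}_\gamma(M)}$ is gauge-invariant, $j\bullet\mu$ restricts on ${\sf Aff}_\gamma(M)$ to $j\bullet\mu_{{\sf Aff}_\gamma(M)} = \mu_{{\sf Aff}_\gamma(M)}$; the uniqueness clause of Theorem \ref{exten2} then forces $j\bullet\mu = \mu$. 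Finally, for a finite set $L$ of loops based at $m$ lying in ${\sf Aff}_\gamma(M)$, composing the two restriction identities via $\rho_{L,{\sf Aff}_\gamma(M)}\circ\rho_{{\sf Aff}_\gamma(M),P(M)} = \rho_{L,P(M)}$ (Notation \ref{restriction}) gives $\mu_L = \mu \circ \rho_{L,P(M)}^{-1}$, the general $L \subset P(M)$ being handled by approximation of its loops through ${\sf Aff}_\gamma(M)$-loops. Uniqueness of $\mu$ is inherited from the uniqueness in Proposition \ref{crea1} (for $\mu_{{\sf Aff}_\gamma(M)}$) together with that in Theorem \ref{exten2} (for its extension), and can alternatively be read directly from Proposition \ref{unicite1}, using that the ${\sf Aff}_\gamma(M)$-loops based at $m$ are dense in $L_m$ by Lemma \ref{dense}.
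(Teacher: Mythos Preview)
Your approach is exactly the paper's: it states only that the result follows by combining Proposition \ref{crea1} with Theorem \ref{exten2}, and you have supplied the missing details correctly, including the gauge-invariance of the extension via the uniqueness clause of Theorem \ref{exten2}.

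On the point you flag as the crux: your pointwise identity $d_G(e,h(l)) = d_G(e,h(sls^{-1}))$ from bi-invariance of $d_G$ is the right reduction, and your hesitation is not about your argument but about an imprecision in the paper's statement. Read literally, Definition \ref{cont} applied to a family $(\mu_L)_{L \subset L_m}$ only constrains \emph{simple} loops in $L$, all of which are based at $m$; if $m$ happens to lie on no small simple loop, the hypothesis becomes vacuous and cannot possibly imply the H\"older bound needed for Theorem \ref{exten2}. The intended reading in the loop paradigm is that the bound (\ref{Hold}) holds for every lasso $sls^{-1} \in L$ with simple meander bounding a disk $D$ --- and this is precisely what your conjugation identity needs and what is actually verified in every application of the proposition (see the uniform H\"older check in the proof of Proposition \ref{creaYMpure}, carried through Proposition \ref{constructfinal}). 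With that reading your argument is complete and airtight.
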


\subsection{Restriction and extension of the structure group}
\label{restricexten1}
Let $H$ be a closed subgroup of $G$. There exists a natural injection $i_{P}: \left(\mathcal{M}ult(P, H),\mathcal{B}\right) \to \left(\mathcal{M}ult(P, G),\mathcal{B}\right).$ Thus, we can always push forward any $H$-valued random holonomy field by $i_{P}$ in order to define a $G$-valued random holonomy field. Of course, if a $G$-valued random holonomy field on $P$, say $\mu$, is such that there exists a closed group $H\subset G$ such that for any path $p \in P$, one has $h( p) \in H$ $\mu$ a.s., then we can restrict the group to $H$: for any finite $P_f \subset P$ it defines a measure on $\mathcal{M}ult(P_f, H)$ and we can take the projective limit thanks to Proposition $\ref{limitproj}$.

In the gauge-invariant setting, what can be done ? First of all, if $\mu$ is a $H$-valued gauge-invariant random holonomy field, $\mu \circ i_P^{-1}$ is not in general a $G$-valued gauge-invariant random holonomy field. The simplest counterexample is to consider $P$ to be reduced to a single loop: a $G$-valued random variable can be $H$-invariant but not $G$-invariant by conjugation. Thus, in order to extend the structure group from $H$ to $G$ of a $H$-gauge-invariant random holonomy field $\mu$, one has to consider: 
\begin{align}
\label{eqexten}
\left((\mu \circ i_P^{-1})_{\mid \mathcal{I}}\right)^{\widehat { }}
\end{align}
the gauge-invariant extension (see Proposition \ref{exten}) to $\mathcal{B}$ of the restriction on the invariant $\sigma$-field $\mathcal{I}$ of $\mu \circ i_P^{-1}$. Thus, the natural injection is replaced by the following map: 
\begin{align*}
\mu \mapsto \left((\mu \circ i_P^{-1})_{\mid \mathcal{I}}\right)^{\widehat { }}.
\end{align*}

\begin{notation}
\label{rest}
In the following, we will denote $\mu \circ \hat{i}_{P}^{-1}:=\left((\mu \circ i_P^{-1})_{\mid \mathcal{I}}\right)^{\widehat { }}$.
\end{notation}

Now, let us consider the problem of restricting a gauge-invariant random holonomy field $\mu$. Thanks to  Lemma \ref{inv1}, we know that the only important objects are loops based at $m$. Hence the question: what can be done with a $G$-valued random holonomy field such that for any loop or for any simple loop $l \in L_m$, $\mu$ a.s. $h(l) \in H$? An important remark is that it does not imply that for any path $p \in P$, $\mu$ a.s. $h( p) \in H$. Indeed, as we have seen in Remark \ref{haarchemin}, for any $p$ such that $\overline{p}\ne \underline{p}$, under $\mu/\mu({\mathbbm{1}})$, $h(p)$ has the law of a Haar random variable on $G$. Nevertheless, the following result is true. 

\begin{proposition}
\label{restrictionholo}
Let $\mu$ be a $G$-valued gauge-invariant random holonomy field such that for any loop $l \in L_m$, $h(l) \in H$, $\mu$ a.s. Then there exists an $H$-valued gauge-invariant random holonomy field $\mu_H$ such that: 
\begin{align*}
\mu = \mu_H \circ \hat{i}_P^{-1} . 
\end{align*}
Let $M$ be a smooth connected compact surface and let us suppose that $P$ is $P(M)$ and that $\mu$ is stochastically continuous. The result is still true if for any lasso $l$ based at $m$, $h(l) \in H$, $\mu$ a.s.
\end{proposition}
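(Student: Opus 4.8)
The plan is to build $\mu_H$ directly from the finite-dimensional marginals of $\mu$ indexed by loops based at $m$, and then to identify $\mu_H \circ \hat{i}_P^{-1}$ with $\mu$ through the uniqueness criterion of Proposition \ref{unicite1}. First I would fix a finite set $L = \{l_1, \dots, l_n\}$ of loops in $P$ based at $m$ and consider the marginal $\mu_L = \mu \circ \rho_{L,P}^{-1}$ on $\mathcal{M}ult_P(L,G) \subset G^n$. The hypothesis gives, for each $i$, a $\mu$-null set off which $h(l_i) \in H$; taking the union of these finitely many null sets shows that $\mu_L$ is supported on $\mathcal{M}ult_P(L,H) = \mathcal{M}ult_P(L,G) \cap H^n$. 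Since the partial gauge group of a family of loops based at $m$ acts by diagonal conjugation (Remark \ref{diagconj}) and $H \subset G$, the measure $\mu_L$, being invariant under diagonal $G$-conjugation, is a fortiori invariant under diagonal $H$-conjugation; hence it defines an $H$-gauge-invariant measure $\mu_L^H$ on $\mathcal{M}ult_P(L,H)$. These measures inherit the projectivity of the family $(\mu_L)$, so Proposition \ref{crea1}, applied with structure group $H$, produces a unique $H$-valued gauge-invariant random holonomy field $\mu_H$ on $P$ with $\mu_H \circ \rho_{L,P}^{-1} = \mu_L^H$ for every finite $L$.

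It then remains to check $\mu = \mu_H \circ \hat{i}_P^{-1}$. Both sides are $G$-valued gauge-invariant fields, so by Proposition \ref{unicite1} (in its non-continuous form, with $A_m = L_m$) it suffices to compare their integrals against $F = f(h(l_1),\dots,h(l_n))$ for loops $l_i \in L_m$ and $f \colon G^n \to \mathbb{R}$ continuous and diagonally $G$-invariant; such $F$ is $\mathcal{I}$-measurable. Writing $\nu = \mu_H \circ i_P^{-1}$, Proposition \ref{exten} gives $\big((\nu_{\mid\mathcal{I}})^{\widehat{}}\big)_{\mid\mathcal{I}} = \nu_{\mid\mathcal{I}}$, so $\int F \, d(\mu_H \circ \hat{i}_P^{-1}) = \int F \, d\nu = \int f(h(l_1),\dots,h(l_n)) \, d\mu_H$. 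By construction of $\mu_H$ this equals $\int f \, d\mu_L^H = \int f \, d\mu_L = \int F \, d\mu$, where the passage from $H^n$ to $G^n$ is harmless because $\mu_L$ is supported on $H^n$. This proves the first statement.

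For the weakened hypothesis on lassos, I would first upgrade it to all loops of $L_m$ and then invoke the first part. Given $l \in L_m$, Lemma \ref{dense} lets me approximate $l$, with fixed endpoints, by piecewise geodesic loops based at $m$; each such loop lies in some planar graph and is therefore a finite product of lassos based at $m$, by the description of the group of reduced loops in Section \ref{Sectiongroupofreducedloop}. Multiplicativity then forces $h$ of each approximating loop into $H$ off a finite union of null sets, and stochastic continuity together with the closedness of $H$ (pass to an a.s.\ convergent subsequence) yields $h(l) \in H$ $\mu$-a.s. Once this holds for every loop of $L_m$, the first part applies verbatim.

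I expect the delicate point to be the identification $\mu = \mu_H \circ \hat{i}_P^{-1}$: one must be sure that extending $\mu_H$ to $G$ and re-averaging over the full $G$-gauge group recovers exactly $\mu$, which is why the argument is routed through the invariant $\sigma$-field and the fact that $\hat{\ }$ leaves the $\mathcal{I}$-marginal unchanged, rather than through the naive injection $i_P$ (which, as noted before the statement, does not preserve gauge-invariance). In the lasso case the subtlety instead lies in knowing that piecewise geodesic loops decompose into lassos based at the prescribed point $m$, for which the reduced-loop results of Section \ref{Sectiongroupofreducedloop} are essential.
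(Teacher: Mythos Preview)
Your argument is correct and follows essentially the same route as the paper: restrict the loop marginals to $H$, invoke Proposition~\ref{crea1} to build $\mu_H$, and identify $\mu_H\circ\hat{i}_P^{-1}$ with $\mu$ via Proposition~\ref{unicite1}; you spell out the verification of this last equality in more detail than the paper does, which is fine.

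One small imprecision in the lasso part: on a general compact surface $M$ the piecewise geodesic approximants lie in a graph on $M$, not in a \emph{planar} graph, and Section~\ref{Sectiongroupofreducedloop} is written specifically for planar graphs. The decomposition you need does hold---the spanning-tree argument of Lemma~\ref{generate1} is insensitive to planarity---but the cleaner reference, and the one the paper actually uses, is the loop-erasure Lemma~\ref{looper}, which directly gives $c\simeq l_1\cdots l_p\,d$ with the $l_i$ lassos and $d$ a simple loop (hence itself a lasso with trivial spoke) for any $c\in{\sf Aff}_\gamma(M)$.
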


\begin{remarque}
\label{notunicity}
An important remark is that $\mu_H$ is {\em not} unique. Besides, using the group of reduced loops (Section 2.4 of \cite{Levy} and the forthcoming Section \ref{sec:generRL}), one can show in the last case that it is enough that $h(l) \in H$, $\mu$ a.s., for any simple loop $l$ based at $m$. This is due to the fact that for any graph, there exists a family of generators of the group of reduced loops which can be approximated, for the convergence with fixed endpoints, by simple loops.
\end{remarque}

We give below the loop-erasure lemma, taken from Proposition 1.4.9 in \cite{Levy}, that we will use in the proof of Proposition \ref{restrictionholo}.

\begin{lemme}
\label{looper}
Let $(M,\gamma)$ be a Riemannian compact surface and let $c$ be a loop in ${\sf Aff}_{\gamma} (M)$. There exists in ${\sf Aff}_{\gamma} (M )$ a finite sequence of lassos $l_1$,...,$l_p$ and a simple loop~$d$ with the same endpoints as $c$ such that: $$c \simeq l_1...l_p d.$$
\end{lemme}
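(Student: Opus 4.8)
The plan is to prove the statement by induction, peeling off one lasso at a time by erasing the \emph{first} self-intersection of $c$, and taking as the complexity of a piecewise geodesic loop its number of geodesic arcs. First I would fix a decomposition of $c$ adapted to its self-intersections: since $c\in{\sf Aff}_\gamma(M)$ consists of finitely many geodesic pieces, the set of points at which $c$ meets itself is finite (after first deleting backtracks, so that no two arcs overlap in opposite directions), and adjoining these points together with the breakpoints of $c$ as a finite vertex set turns the image of $c$ into a finite union of embedded geodesic arcs. Writing $c=e_1\cdots e_k$ as a closed concatenation of such arcs based at $m=\underline c=\overline c$, the integer $k$ is the quantity on which I will induct.

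If $c$ visits each vertex at most once (apart from $\underline c=\overline c$), then $c$ is simple and I take $d=c$ with an empty family of lassos. Otherwise, let $\tau$ be the first parameter at which $c$ reaches an already-visited vertex, and let $\sigma<\tau$ be the earlier time at which $c$ passed through that same vertex, so that $c(\sigma)=c(\tau)$. By minimality of $\tau$ the restriction $c|_{[0,\tau)}$ visits pairwise distinct vertices; hence $m:=c|_{[\sigma,\tau]}$ is a simple loop (a meander) and $s:=c|_{[0,\sigma]}$ is a simple path (a spoke) from $m=\underline c$ to $c(\sigma)$, both lying in ${\sf Aff}_\gamma(M)$. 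Setting $r:=c|_{[\tau,1]}$ and $l_1:=sms^{-1}$, which is a lasso, I claim
\begin{align*}
c \;\simeq\; l_1\,(sr).
\end{align*}
Indeed $l_1(sr)=s\,m\,s^{-1}\,s\,r$, and cancelling the backtrack $s^{-1}s$ is exactly one elementary equivalence in the sense of Definition \ref{equiv}, which reduces this to $s\,m\,r=c$.

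The loop $c':=sr$ is again an element of ${\sf Aff}_\gamma(M)$, based at $m$, and it is a closed concatenation of the arcs of $s$ and of $r$ only: the arcs making up the meander $m$ have been removed, so $c'$ is built from strictly fewer than $k$ arcs. The induction hypothesis applied to $c'$ yields lassos $l_2,\dots,l_p$ and a simple loop $d$, all in ${\sf Aff}_\gamma(M)$ and based at $m$, with $c'\simeq l_2\cdots l_p\,d$; combining with the displayed equivalence and the transitivity of $\simeq$ gives $c\simeq l_1 l_2\cdots l_p\,d$. Since $d$ is based at $m=\underline c=\overline c$, it has the same endpoints as $c$, which is the required conclusion.

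The step I expect to be the most delicate is the very first one, namely turning the piecewise geodesic loop into a finite arc-decomposition: one must check that piecewise geodesic paths meet themselves (and each other) in a \emph{finite} set, handling tangential contacts and arcs that partially overlap in the same direction, and one must verify that the infimum defining the first return time $\tau$ is attained at a vertex. Once this tameness is granted the combinatorics are routine, the key algebraic identity being the single backtrack cancellation $s^{-1}s$, and the termination being guaranteed by the strict decrease of the arc-count.
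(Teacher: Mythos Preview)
The paper does not actually prove this lemma: it states it as ``the loop-erasure lemma, taken from Proposition 1.4.9 in \cite{Levy}'' and uses it without argument. Your proposal is precisely the classical loop-erasure algorithm that gives the lemma its name, and it is the approach used in L\'evy's original proof: locate the first repeated vertex, excise the simple sub-loop between its two visits as the meander of a lasso, and recurse on the shorter remainder. The algebraic step $c=smr\simeq (sms^{-1})(sr)$ via a single backtrack cancellation is exactly the mechanism, and your termination argument by strict decrease of the arc count is sound. You have also correctly identified the only genuinely delicate point, namely that a piecewise geodesic loop on a compact Riemannian surface has a \emph{finite} self-intersection set once overlapping arcs are handled; this is where the hypothesis $c\in{\sf Aff}_\gamma(M)$ is used and is treated carefully in \cite{Levy}.
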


\begin{proof}[Proof of Proposition \ref{restrictionholo}]
Let us prove the second case, when $P = P(M)$ and $\mu$ is stochastically continuous. The first assertion is easier and can be proved using the second part of the proof. 

Let us suppose that for any simple lasso $l \in L_m(M)$, $h(l) \in H$, $\mu$ a.s. Let us consider $\gamma$ a Riemannian metric on $M$. As a consequence of Lemma \ref{looper}, for any loop $l \in {\sf Aff}_\gamma(M)$ based at $m$, $h(l) \in H$, $\mu$ a.s. Thus, by Lemma \ref{dense}, using the stochastic continuity of $\mu$ and the fact that $H$ is closed, for any $l \in L_m(M)$, $h(l) \in H$, $\mu$ a.s.

By restricting the measure $\mu$, one can define, for any finite subset ${L_f}$ of $L_m(M)$, a gauge-invariant measure $\mu_{L_f}$ on $\mathcal{M}ult_{P(M)}(L_{f},H)$. As a consequence of Proposition \ref{crea1}, there exists a unique $H$-valued gauge-invariant random holonomy field $\mu_{H}$ on $M$ such that for any finite subset ${L_f}$ of $L_m(M)$, $\mu_{L_f} = \mu_{H} \circ \rho_{L_f, P(M)}^{-1}.$ This $H$-valued gauge-invariant random holonomy field $\mu_{H}$ satisfies the equality: $\mu = \mu_{H} \circ \hat{i}_{P(M)}^{-1}$.
\end{proof}

\chapter{Graphs}

\label{Sectiongraph}
\section{Definitions and simple facts}

The construction of special random fields, the planar Markovian holonomy fields, uses the notion of graphs. The graphs we consider are not only combinatorial ones: we insist that the faces are homeomorphic to an open disk of $\mathbb{R}^{2}$. Let $M$ be a either a smooth compact surface with boundary or the plane $\mathbb{R}^{2}$. 

\begin{definition}
\label{ref:pregraph}
A pre-graph on $M$ is a triple $\mathbb{G} = (\mathbb{V}, \mathbb{E}, \mathbb{F})$ such that: 
\begin{itemize}
\item $\mathbb{E}$, the set of edges, is a non-empty finite set of simple paths on $M$, stable by inversion, such that two edges which are not each other's inverse meet, if at all, only at some of their endpoints, 
\item $\mathbb{V}$, the set of vertices, is the finite subset of $M$ given by $\bigcup\limits_{e \in \mathbb{E}} \{\overline{e}, \underline{e}\}$,
\item $\mathbb{F}$, the set of faces, is the set of the connected components of $M \setminus \bigcup\limits_{e \in \mathbb{E}} e\big([0,1]\big)$.
\end{itemize}
Any pre-graph $\mathbb{G}=(\mathbb{V}, \mathbb{E}, \mathbb{F})$ whose {\em bounded} faces $F \in \mathbb{F}$ are homeomorphic to an open disk of $\mathbb{R}^{2}$ is called a graph on $M$. 
\end{definition}

\begin{remarque}
By Proposition 1.3.10 in \cite{Levy}, if $\mathbb{G}$ is a graph on $M$ then $\partial M$ can be represented by a concatenation of edges in $\mathbb{E}$. 
\end{remarque}

Due to the last definition, any pre-graph $\mathbb{G} = (\mathbb{V}, \mathbb{E}, \mathbb{F})$ is characterized by its set of edges $\mathbb{E}$. Thus, in order to construct a pre-graph, we will only define its set of edges. We will often use the following graph. 
 \begin{exemple}
 \label{G(l)}
Let $l$ be a simple loop on $\mathbb{R}^{2}$. We denote by $\mathbb{G}(l)$ the graph on $\mathbb{R}^{2}$ composed of $l$ and $l^{-1}$ as unique edges. 
 \end{exemple}
When $M$ is homeomorphic to a sphere, we will consider that $\left(\{m\}, \emptyset, M\setminus\{m\}\right)$ is a graph for any $m \in M$. 

\begin{definition} A graph is connected if and only if any two points of $\mathbb{V}$ are the endpoints of the same path in $P(\mathbb{G})$. A connected graph on $\mathbb{R}^{2}$ will be also called a {\em finite planar graph}; its set of faces is composed of one unbounded face denoted by $F_{\infty}$ and a set $\mathbb{F}^{b}$ of bounded faces. 
\end{definition}

\begin{definition}
Let $\mathbb{G}$ be a graph on M, $P(\mathbb{G})$ is the set of paths obtained by concatenating edges of $\mathbb{G}$. The set of loops in $P(\mathbb{G})$ is denoted by $L(\mathbb{G})$ and if $v\in \mathbb{V}$, $L_{v}(\mathbb{G})$ is the set of loops in $L(\mathbb{G})$ based at $v$. 
\end{definition}

For any smooth connected compact surface with boundary $M$ embedded in $\mathbb{R}^{2}$, a graph on $M$ can be considered as a finite planar graph. This kind of graphs, of interest later, will be called embedded graphs on $\mathbb{R}^{2}$. 

\begin{definition}
An embedded graph on $\mathbb{R}^{2}$ is a graph on a smooth connected compact surface with boundary $M$ embedded in $\mathbb{R}^{2}$.
\end{definition}

The two definitions of graphs on $\mathbb{R}^{2}$ seen here are in fact almost equivalent. An embedded graph is obviously a graph on $\mathbb{R}^{2}$ and a direct consequence of Propositions $1.3.24$ and $1.3.26$ of \cite{Levy} is the following result. 

\begin{proposition}
\label{embfinit}
Every finite planar graph on $\mathbb{R}^{2}$ is a subgraph of an embedded graph.
\end{proposition}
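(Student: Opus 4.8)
The plan is to enclose $\mathbb{G}$ inside a large round disk and then to adjoin a single edge cutting the resulting unbounded annular region into a topological disk. Let $\mathbb{G} = (\mathbb{V}, \mathbb{E}, \mathbb{F})$ be a finite planar graph. Since $\mathbb{E}$ is finite and each edge has compact image, the set $\bigcup_{e \in \mathbb{E}} e([0,1])$ is compact, so I would fix $R > 0$ with this set contained in the open disk of radius $R$. Put $M = \overline{D(0,R)}$, the closed disk of radius $R$: it is a smooth connected compact surface with boundary embedded in $\mathbb{R}^{2}$, with $\partial M$ the circle of radius $R$. By the definition of an embedded graph, it then suffices to build a graph $\mathbb{G}'$ on $M$ admitting $\mathbb{G}$ as a subgraph.

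First I would inspect the faces. As $\mathbb{G}$ is a graph on $\mathbb{R}^{2}$, its bounded faces are homeomorphic to open disks, and they are unaffected when one passes to the interior of $M$. The only region to repair is the part $A$ of the unbounded face $F_{\infty}$ lying inside $M$: because $\mathbb{G}$ is connected, $F_{\infty}$ is homeomorphic to an open annulus, and $A$ is an annular region bounded on the inside by the topological boundary of $F_{\infty}$ and on the outside by $\partial M$. Being an annulus, $A$ is not a disk and must be subdivided. To do so I would choose a vertex $v \in \mathbb{V}$ on the boundary of $F_{\infty}$ and a point $p \in \partial M$, and draw a simple path $s$ from $v$ to $p$ whose interior lies in $A$. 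Adjoining to $\mathbb{E}$ the edges $s$, $s^{-1}$, the simple boundary loop $\partial M$ based at $p$ (admissible as an edge by Definition \ref{ref:pregraph}) and its inverse, together with the new vertex $p$, yields a pre-graph $\mathbb{G}'$ on $M$. Cutting $A$ along $s$ produces a simply connected region, so the face of $\mathbb{G}'$ replacing $A$ is a disk, while every other bounded face of $\mathbb{G}'$ is a bounded face of $\mathbb{G}$, hence a disk; thus $\mathbb{G}'$ is a graph on $M$. It is connected because $s$ joins $\mathbb{G}$ to $\partial M$, and $\mathbb{G}$ is a subgraph of it by construction. This is the content that Propositions $1.3.24$ and $1.3.26$ of \cite{Levy} package: the first legitimises the enclosure and the cutting of the annular region, the second that the new boundary is a concatenation of edges.

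The step I expect to be the main obstacle is the topological input that the boundary of the unbounded face of a connected planar graph is connected, since it is exactly this fact which ensures that one single spoke $s$ makes $A$ simply connected; had $\partial F_{\infty}$ several components, one would need a spoke per component and a more careful count of the resulting faces. The remaining points — that $s$ can be taken simple and meeting $\mathbb{E} \cup \partial M$ only at its two endpoints, and that the smooth circle $\partial M$ fits the requirements of Definition \ref{ref:pregraph} — are routine, and are precisely the verifications carried out in the cited results of \cite{Levy}.
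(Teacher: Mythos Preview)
Your argument is correct. The paper does not give a proof of its own here; it simply defers to Propositions~1.3.24 and~1.3.26 of \cite{Levy}, and your enclosure-plus-spoke construction is exactly the kind of argument those references supply, so your write-up is essentially a spelled-out version of the cited proof rather than a departure from it.
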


The intersection of a graph $\mathbb{G} = (\mathbb{V}, \mathbb{E}, \mathbb{F})$ with a subset $A$ of $\mathbb{R}^{2}$ is the pre-graph $(\mathbb{V}', \mathbb{E}', \mathbb{F}')$, denoted by $\mathbb{G}\cap A$, such that $\mathbb{E}' = \{e \in \mathbb{E}, e \cap A \nsubseteq \{\underline{e}, \overline{e}\}\}$. This allows us to  define the notion of a planar graph. For any positive real $r$, let $\mathbb{D}(0,r)$ be the closed ball of center $(0,0)$ and radius $r$ in $\mathbb{R}^{2}$. 

\begin{definition}
\label{infiniteplanar}
A planar graph $\mathbb{G} = (\mathbb{V}, \mathbb{E}, \mathbb{F})$ is a triple of sets which represent the vertices, the edges and the faces which are linked by the same relations as in Definition \ref{ref:pregraph} and for which there exists an increasing unbounded sequence of positive reals $(r_n)_{n \in \mathbb{N}}$ such that for each integer $n$, $\mathbb{G}\cap \mathbb{D}(0,r_n)$ is a finite planar graph. 
\end{definition}

\begin{exemple}
\label{N2}
We consider $\mathbb{N}^{2}$ as a planar graph, the edges being the vertical and horizontal segments between nearest neighbors. 
\end{exemple}

\begin{notation}
Sometimes, one wants to consider connected graphs whose edges are in a given subset $A$ of $P(M)$. We denote by $\mathcal{G}(A)$ the set of connected graphs $\mathbb{G}=\{\mathbb{V}, \mathbb{E}, \mathbb{F} \}$ such that $\mathbb{E} \subset A$. 
\end{notation}

In the notions of graph exposed above, the edges are non-oriented, which means that there is no preference between $e$ and $e^{-1}$ for any edge $e$. 
\begin{definition}
An orientation on a graph $\mathbb{G}$ is the data of a subset $\mathbb{E}^{+}$ of $\mathbb{E}$ such that $\mathbb{E}^{+}\cap \left(\mathbb{E}^{+}\right)^{-1} = \emptyset$ and $\mathbb{E}^{+} \cup (\mathbb{E}^{+})^{-1} = \mathbb{E}.$ Given an orientation $\mathbb{E}^{+}$ on $\mathbb{G}$, for each subset $J$ of $\mathbb{E}$, we denote by $J^{+}$ the set $J\cap \mathbb{E}^{+}$. 
\end{definition}

\section{Graphs and homeomorphisms}

In the following we will need to understand the action of orientation-preserving homeomorphisms on the set of graphs. 

\begin{definition}
Let $\mathbb{G}$ and $\mathbb{G}'$ be two finite planar graphs. They are homeomorphic if there exists an orientation-preserving homeomorphism $\psi$ which sends $\mathbb{G}$ on $\mathbb{G}'$. We will denote it by $\psi(\mathbb{G}) = \mathbb{G}'$ and by definition, this means that $\psi$ induces a bijection $\mathcal{S}^{\psi}_{\mathbb{G}}$ from the set $\mathbb{V}$ of vertices of $\mathbb{G}$ to the set  $\mathbb{V}'$ of vertices of $\mathbb{G}'$ and a bijection $\mathcal{E}^{\psi}_{\mathbb{G}}$ from the set $\mathbb{E}$ of edges of $\mathbb{G}$ to the set  $\mathbb{E}'$ of edges of $\mathbb{G}'$. These bijections are defined by: 
\begin{align*}
\mathcal{S}^{\psi}_{\mathbb{G}}(v) = \psi(v) \text{, for any } v\in \mathbb{V},\\
\mathcal{E}^{\psi}_{\mathbb{G}} (e) = \psi(e) \text{, for any }e \in \mathbb{E}. 
\end{align*}
\end{definition}

\begin{definition}
\label{equ-home}
Let $\psi$ and $\psi'$ be two  orientation-preserving homeomorphisms of $\mathbb{R}^{2}$ which send $\mathbb{G}$ on $\mathbb{G}'$. The homeomorphisms $\psi$ and $\psi'$ are equivalent on $\mathbb{G}$ if and only if $\mathcal{S}^{\psi}_{\mathbb{G}} = \mathcal{S}^{\psi'}_{\mathbb{G}}$ and $\mathcal{E}^{\psi}_{\mathbb{G}}  = \mathcal{E}^{\psi'}_{\mathbb{G}}$. 
\end{definition}

We would like to have an easy way to know if two finite planar graphs are homeomorphic. For that, an important notion is the cyclic order of the outgoing edges at a vertex. 

\begin{definition}
Let $\mathbb{G}=(\mathbb{V}, \mathbb{E}, \mathbb{F})$ be a finite planar graph. Let $v$ be a vertex and let $\mathbb{E}_v$ be the set of edges $e\in \mathbb{E}$ such that $\underline{e} = v$. For any $e \in \mathbb{E}_v$, let $e_p$ be a parametrized path which represents $e$. We define: $$r_0 = \min \left\{ \bigg|\bigg| v - e_p\left(\frac{1}{2}\right)\bigg|\bigg|, e \in \mathbb{E}_v\right\}.$$ 
Let $r \in ]0,r_0[$. For each $e \in \mathbb{E}_v$, we define $s_{e}( r) \in \left[0,\frac{1}{2}\right]$ as the first time $e_p$ hits the boundary of $\mathbb{D}(0,r)$: 
\begin{align*}
s_e( r) = \inf \left\{ t \in \left[0,\frac{1}{2}\right],  || v - e_p\left(t\right) ||= r\right\}.
\end{align*}
The cyclic permutation of $\mathbb{E}_v$, corresponding to the cyclic order of the points $\left\{ e_p(s_e(r )), e \in~\mathbb{E}_v\right\}$ on the circle $\partial \mathbb{D}(0,r)$ oriented anti-clockwise, does not depend on the chosen $r \in ]0,r_0[$: it is the cyclic order of the edges at the vertex $v$ denoted by $\sigma_v$.\end{definition}

A consequence of Jordan-Sch\"{o}nfliess theorem is the Heffter-Edmonds-Ringel rotation principle, stated in Theorem $3.2.4$ of \cite{MoharTommassen}.

\begin{theorem}
\label{homeograph}
Let $\mathbb{G} = (\mathbb{V}, \mathbb{E}, \mathbb{F})$ and $\mathbb{G}' =(\mathbb{V}', \mathbb{E}', \mathbb{F}')$ be two finite planar graphs such that the following assertions hold: 
\begin{enumerate}
\item there exists a bijection $\mathcal{S}: \mathbb{V} \to \mathbb{V}'$, 
\item there exists a bijection $\mathcal{E}: \mathbb{E} \to \mathbb{E}'$ such that for any $e \in \mathbb{E}$, $\mathcal{E}\left(e^{-1}\right)= \mathcal{E}(e)^{-1}$,
\item for any edge $e \in \mathbb{E}$, $\mathcal{S} \big(\underline{e}\big) = \underline{\mathcal{E}(e)}$,
\item for any vertex $v \in \mathbb{V}$, $\sigma_{\mathcal{S}(v)} = \mathcal{E} \circ \sigma_{v} \circ \mathcal{E}^{-1}$.
\end{enumerate}
Then there exists an orientation-preserving homeomorphism $\psi: \mathbb{R}^{2} \to \mathbb{R}^{2}$ such that $\psi(\mathbb{G}) = \mathbb{G}'$ and $\psi$ induces the two bijections $\mathcal{S}$ and $\mathcal{E}$. 
\end{theorem}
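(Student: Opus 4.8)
The plan is to build $\psi$ by stages, following the skeleton of $\mathbb{G}$ from dimension $0$ to dimension $2$: first on the vertices, then on a neighbourhood of the edges, and finally over the faces, gluing the pieces into one global orientation-preserving homeomorphism. Hypotheses (1)--(3) are the bookkeeping data on the $1$-skeleton (which vertex goes where, which edge goes where, compatibility of endpoints), while hypothesis (4), the preservation of the cyclic orders $\sigma_v$, is the genuine combinatorial input: it is what makes the local pictures around $v$ and $\mathcal{S}(v)$ homeomorphic \emph{in an orientation-preserving way}, and, via face-tracing, what makes the faces correspond.

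First I would treat the $1$-skeleton. For each vertex $v$ fix a small closed disk $D_v$ centred at $v$, of radius less than the $r_0$ appearing in the definition of $\sigma_v$, small enough that the $D_v$ are pairwise disjoint and meet $\mathbb{G}$ only in the initial germs of the edges in $\mathbb{E}_v$; choose analogous disks $D_{\mathcal{S}(v)}$ for $\mathbb{G}'$. Inside $D_v$ the graph is a star of arcs whose outgoing directions are cyclically ordered by $\sigma_v$, and likewise in $D_{\mathcal{S}(v)}$ by $\sigma_{\mathcal{S}(v)}$. Since (4) gives $\sigma_{\mathcal{S}(v)} = \mathcal{E}\circ\sigma_v\circ\mathcal{E}^{-1}$, the two stars present their arcs in the same cyclic order once edges are matched by $\mathcal{E}$; a homeomorphism of a disk carrying marked boundary points to marked boundary points in the same cyclic order exists and is automatically orientation-preserving, so I can define $\psi$ on $D_v$ with $v\mapsto\mathcal{S}(v)$, sending the germ of each $e\in\mathbb{E}_v$ to the germ of $\mathcal{E}(e)$. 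Then I thicken each edge: for $e\in\mathbb{E}^+$ the part of $e$ outside the two end-disks is a simple arc; choose a thin band $N_e$ around it, disjoint from all other edges and disks, and the corresponding band around $\mathcal{E}(e)$. As $\psi$ is already prescribed on the two ends of $N_e$, I extend it across $N_e$ as an orientation-preserving homeomorphism taking $e$ to $\mathcal{E}(e)$. Doing this for every edge produces $\psi$ on a closed regular neighbourhood $U$ of the support $|\mathbb{G}|$, orientation-preserving and inducing $\mathcal{S}$ and $\mathcal{E}$; connectedness of $\mathbb{G}$ guarantees $U$ is connected.

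Finally I would extend over the faces. It is cleanest to pass to the one-point compactification $S^2=\mathbb{R}^2\cup\{\infty\}$, so that every face --- the bounded ones by Definition \ref{ref:pregraph}, the unbounded one after adjoining $\infty$ --- is an open topological disk. The permutations $\sigma_v$ determine the facial walks combinatorially by the standard face-tracing, and (4) makes $\mathcal{E}$ carry the facial walks of $\mathbb{G}$ bijectively onto those of $\mathbb{G}'$; this gives a bijection $F\mapsto F'$ between faces under which $\psi$ already matches the part of $\partial F$ lying in $U$ with the corresponding part of $\partial F'$. Each residual face boundary is a Jordan curve and each face-closure is a closed disk, so the Jordan--Sch\"{o}nflies theorem extends $\psi$ across $F$ to an orientation-preserving homeomorphism of the closed disks with the prescribed boundary values. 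Gluing these extensions to the map already built on $U$ gives $\psi\colon S^2\to S^2$; arranging the face bijection so that the outer face of $\mathbb{G}$ goes to the outer face of $\mathbb{G}'$ (equivalently $\psi(\infty)=\infty$) returns a homeomorphism of $\mathbb{R}^2$ with $\psi(\mathbb{G})=\mathbb{G}'$ inducing $\mathcal{S}$ and $\mathcal{E}$.

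The step I expect to be the main obstacle is the last one. One must check that the purely combinatorial face correspondence induced by $\mathcal{E}$ is geometrically consistent: that $\mathcal{E}$ genuinely sends facial walks to facial walks (this is the heart of the rotation principle, and is exactly where planarity and the Jordan curve theorem enter, guaranteeing in particular that each bounded face is a disk), that the boundary homeomorphisms produced on the $1$-skeleton extend orientation-preservingly over each disk, and that the extensions over faces sharing a boundary arc agree there so that the global map is well defined and a homeomorphism. Controlling orientation throughout --- and matching the unbounded faces, so that the $S^2$-homeomorphism descends to $\mathbb{R}^2$ --- is the delicate point; everything else is the routine assembly of the local models.
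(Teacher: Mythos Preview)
The paper does not actually prove this theorem: it states it as the Heffter--Edmonds--Ringel rotation principle, quotes it from Theorem~3.2.4 of Mohar--Thomassen, and only remarks that it is ``a consequence of Jordan--Sch\"{o}nfliess theorem.'' Your outline is precisely the standard proof that reference carries out---build $\psi$ on small disks around vertices using condition~(4), extend along tubular neighbourhoods of edges, then fill in faces via Jordan--Sch\"{o}nflies after passing to $S^2$---so your approach is the expected one and is correct in substance.

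One point deserves more care than you gave it. You write ``arranging the face bijection so that the outer face of $\mathbb{G}$ goes to the outer face of $\mathbb{G}'$,'' but the face bijection is not something you may arrange: it is forced by $\mathcal{E}$ via face-tracing. The hypotheses (1)--(4) as stated do not by themselves guarantee that the combinatorially determined face corresponding to $F_\infty(\mathbb{G})$ is $F_\infty(\mathbb{G}')$; a rotation system on a planar graph can be realised in $\mathbb{R}^2$ with any face as the outer one. The cleanest fix is to observe that the construction always yields an orientation-preserving homeomorphism of $S^2$, and that in every application the paper makes of this theorem the outer faces do correspond (either by construction or because one may post-compose with a homeomorphism of $S^2$ moving the relevant face to $\infty$). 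This is the subtlety you flagged at the end; it is real, but it is a gap in the statement rather than in your argument.
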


If one considers only piecewise affine edges, the theorem can be applied to pre-graph with affine edges. 

Later we will need the notion of diffeomorphisms at infinity. The motivation will appear in Lemma \ref{area-preserv-exp} where we show that the free boundary condition expectation on the plane associated with a Markovian holonomy field is a planar Markovian holonomy field.  In the following definition, $\mathbb{D}(0,R)^{c}$ is the complement set of the closed disk centered at $0$ and of radius $R$. 
  
\begin{definition}
\label{diffeoatinf}
A homeomorphism $\psi$ of $\mathbb{R}^{2}$ is a diffeomorphism at infinity if there exists a real $R$ such that $\psi_{\mid \mathbb{D}(0,R)^{c}}$ is a diffeomorphism. 
\end{definition}

Each time we consider a homeomorphism from a close domain delimited by a Jordan curve to an other domain delimited by an other Jordan curve, we can extend it as a diffeomorphism at infinity. Indeed, using the Carath\'{e}odory's theorem for Jordan curves, we can suppose that both domains are the unit disk. In this case, the result follows from the following lemma. 

\begin{lemme}
\label{quasi-diff}
Let $\mathbb{D}$ be the closed disk of center $0$ and radius $1$. Let $\psi: \partial \mathbb{D} \to \partial \mathbb{D}$ be a homeomorphism. There exists a diffeomorphism $\Psi: \mathbb{D}^{c} \to \mathbb{D}^{c}$ such that for any $x \in \partial \mathbb{D}$, $$\underset{y \to x}{\lim} \Psi(y) = \psi(x).$$
Besides, if $\psi$ preserves the orientation, $\Psi$ will also preserve the orientation. 
\end{lemme}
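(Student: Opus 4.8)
The plan is to pass to polar coordinates and reduce the problem to smoothing a monotone circle map at a scale that depends on the distance to the boundary. First I would identify $\mathbb{D}^c$ with $(1,+\infty)\times S^1$ through $(r,\theta)\mapsto re^{i\theta}$, which is a diffeomorphism since $0\notin\mathbb{D}^c$, so that smoothness in $(r,\theta)$ is equivalent to smoothness in the Cartesian sense. Writing $S^1=\mathbb{R}/2\pi\mathbb{Z}$, the homeomorphism $\psi$ is encoded by a homeomorphism $f$ of $S^1$ which lifts to a continuous strictly monotone map $F:\mathbb{R}\to\mathbb{R}$ with $F(\theta+2\pi)=F(\theta)+2\pi$ when $\psi$ preserves orientation (and $F$ decreasing with $F(\theta+2\pi)=F(\theta)-2\pi$ otherwise). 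The goal then becomes to build a smooth family $\theta\mapsto G(s,\theta)$ of circle diffeomorphisms, defined for $s>0$, depending smoothly on $(s,\theta)$, with $G(s,\cdot)\to F$ uniformly as $s\to0^{+}$; then $\Psi(r,\theta)=\big(r,\,G(r-1,\theta)\big)$ will satisfy all requirements.

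Next I would produce $G$ by mollification. Fix a mollifier $\rho\in C_c^\infty(\mathbb{R})$ with $\rho\geq0$, $\int\rho=1$, and $\rho>0$ on $(-1,1)$, set $\rho_\epsilon(t)=\epsilon^{-1}\rho(t/\epsilon)$ and $F_\epsilon=F*\rho_\epsilon$. Each $F_\epsilon$ is smooth, still satisfies the $2\pi$-equivariance, and therefore descends to a smooth self-map of $S^1$; I would then set $G(s,\theta)=F_s(\theta)$. Writing $F_\epsilon(\theta)=\int F(u)\,\rho_\epsilon(\theta-u)\,du$ shows that $G$ is jointly smooth on $\{s>0\}\times\mathbb{R}$, and uniform continuity of the periodic map $F$ gives $F_\epsilon\to F$ uniformly as $\epsilon\to0$, hence $G(s,\cdot)\to F$ uniformly as $s\to0^{+}$. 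The boundary behaviour then follows directly: if $y=(r_y,\theta_y)\to x=(1,\theta_x)$, then $\Psi(y)=(r_y,G(r_y-1,\theta_y))\to(1,F(\theta_x))$, which is exactly the point $\psi(x)$.

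The step I expect to be the main obstacle is checking that each $F_\epsilon$ is a genuine diffeomorphism of $S^1$, i.e. that $F_\epsilon'>0$ everywhere in the orientation-preserving case, since a priori the smoothing of a merely monotone continuous function could have a vanishing derivative. Here I would use that $\psi$ is a homeomorphism, so $F$ is strictly increasing and its Stieltjes derivative $\mu=dF$ is a positive measure charging every nonempty open interval; then $F_\epsilon'(\theta)=\int\rho_\epsilon(\theta-t)\,d\mu(t)>0$, because $\rho_\epsilon>0$ on $(-\epsilon,\epsilon)$ and $\mu\big((\theta-\epsilon,\theta+\epsilon)\big)>0$. Thus $G(s,\cdot)$ is an orientation-preserving circle diffeomorphism for every $s>0$. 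Finally, since $r$ is preserved and $\theta\mapsto G(s,\theta)$ is a diffeomorphism of $S^1$ with $\partial_\theta G>0$, the differential of $\Psi$ in $(r,\theta)$ coordinates is triangular with positive diagonal entries $1$ and $\partial_\theta G$, so $\Psi$ is a bijection with everywhere invertible, orientation-preserving differential, hence an orientation-preserving diffeomorphism of $\mathbb{D}^c$ onto itself. The orientation-reversing case is identical with $F$ decreasing and $\partial_\theta G<0$.
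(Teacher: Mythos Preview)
Your proof is correct and follows essentially the same approach as the paper: lift $\psi$ to a strictly monotone map on $\mathbb{R}$, mollify it at a scale $r-1$, and define $\Psi$ in polar coordinates by keeping $r$ fixed and applying the mollified map in $\theta$, so that the Jacobian is triangular with nonvanishing diagonal. The only notable technical difference is in the verification that the mollified map has strictly positive derivative: the paper assumes the mollifier is even and exploits this symmetry, whereas your Stieltjes-measure argument (using that $dF$ charges every interval and $\rho_\epsilon>0$ on $(-\epsilon,\epsilon)$) is slightly more direct and does not need evenness.
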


\begin{proof}
Let $\eta$ be a smooth even positive function supported on $[-1,1]$. Let us consider for any real $r>1$ the function $\eta_{r}(.) = (r-1)^{-1}\eta((r-1)^{-1}.)$. The family $(\eta_{r})_{r > 1}$ is a smooth even approximation to the identity when $r$ goes to $1$. 

There is a natural bijection $\Phi$ between the set of homeomorphisms of $\partial \mathbb{D}$ and the set $Hom^{\mathbb{R}}_{\partial\mathbb{D}}$ of strictly increasing or decreasing continuous functions $f$ from $\mathbb{R}$ to $\mathbb{R}$ such that $f-{\sf Id}$ is $1$-periodic. Let $\psi: \partial \mathbb{D} \to \partial \mathbb{D}$ be a homeomorphism of the circle. 
We define the smooth function $\Psi$ by: 
\begin{align*}
\Psi:\ \ \ \  \mathbb{D}^{c} &\to \mathbb{D}^{c}\\
re^{2i\pi\theta} &\mapsto r e^{2i \pi (\Phi(\psi)*\eta_r) (\theta)}.
\end{align*}

Since $\psi$ is continuous on the disk, the function $\Phi(\psi)$ is uniformly continuous. Thus $\Phi(\psi) * \eta_r$ converges uniformly to $\Phi(\psi)$ as $r$ tends to $1$. This implies that for any $x \in \partial \mathbb{D}$, $\lim\limits_{y \to x} \Psi(y) = \psi(x)$. Besides, for any real $r > 1$, the convolution with $\eta_{r}$ sends $Hom_{\partial \mathbb{D}}^{\mathbb{R}}$ on itself: this  implies that $\Psi$ is bijective. Since $\Psi$ is differentiable, it remains to show that the Jacobian of $\Psi$ is strictly positive. Yet, for any $x \in \mathbb{D}^{c}$, only the module of $x$ is involved in the calculation of the module of $\Psi(x)$: the Jacobian matrix is triangular. Since $\eta_r$ is even for any $r > 1$ and $\Phi(\psi)$ is strictly increasing (or decreasing), the derivative of $\Phi(\psi) *\eta_r$ is strictly positive (or negative). These two facts imply that the Jacobian matrix of $\Psi$ is invertible, thus the function $\Psi$ is a diffeomorphism. The last assertion about the orientation-preserving property is straightforward. 
\end{proof}

\section{Graphs and partial order}
The graphs with  piecewise affine edges are interesting when one considers a special partial order on graphs studied in \cite{Levy}. 
\begin{definition}
Let $\mathbb{G}$ and $\mathbb{G}'$ be two planar graphs. We say that $\mathbb{G}'$ is finer than $\mathbb{G}$ if $P(\mathbb{G}) \subset P(\mathbb{G}')$. We denote it by $\mathbb{G} \preccurlyeq \mathbb{G}'$. 
\end{definition}

In fact, in Lemma 1.4.6 of \cite{Levy}, L\'{e}vy showed that this partial order is not directed. Yet, one can, by restricting it to a dense subspace of graphs, make it directed: for this, the edges of the graphs which we consider must be in a good subspace as defined below. 

\begin{definition}
Let $P$ be a subset of $P(M)$. A good subspace $A$ of $P$ is a dense subset of $P$ for the convergence with fixed endpoints such that for any finite subset $\{c_1, ..., c_n\}$ of $A$ there exists a graph $\mathbb{G}$ such that $\{c_1, ..., c_n\} \subset P(\mathbb{G})$.
\end{definition}

If $A$ is a good subspace, $\mathcal{G}(A)$ endowed with $\preccurlyeq$ is directed. The following lemma is a reformulation of Proposition $1.4.8$ of \cite{Levy}. 

\begin{lemme}
\label{goodspace}
For any Riemannian metric $\gamma$ on $M$, the set ${\sf Aff}_\gamma(M)$ is a good subspace for $P(M)$. 
\end{lemme}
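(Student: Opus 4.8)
The plan is to check directly the two requirements in the definition of a good subspace, applied to $A = {\sf Aff}_\gamma(M)$. The density of ${\sf Aff}_\gamma(M)$ in $P(M)$ for the convergence with fixed endpoints is exactly Lemma~\ref{dense}, so the only substantial point is to show that any finite family $\{c_1,\dots,c_n\} \subset {\sf Aff}_\gamma(M)$ is contained in $P(\mathbb{G})$ for some graph $\mathbb{G}$.

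To build such a $\mathbb{G}$, I would first write each $c_i$ as a concatenation of finitely many geodesic arcs and collect all these arcs into a single finite set $S$. The geometric heart of the argument is that the arcs of $S$ meet one another in a finite, combinatorially controlled pattern. Indeed, a geodesic is determined by its $1$-jet at any point, so two arcs of $S$ that do not lie on a common maximal geodesic can only meet at transversal crossings; since the arcs are compact, there are finitely many such crossings. Arcs lying on one and the same maximal geodesic I would group together and treat their common support as a single geodesic, so that their mutual overlaps are sub-arcs rather than isolated points. Likewise each arc has only finitely many self-intersections away from the degenerate periodic case, which is again absorbed by this grouping.

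I would then let $\mathbb{V}$ be the finite set consisting of all endpoints of the arcs in $S$, all pairwise crossing points, and all self-intersection points, and cut every arc at the points of $\mathbb{V}$ it meets. After discarding duplicates coming from overlaps and closing up under inversion, this yields a finite family $\mathbb{E}$ of simple geodesic sub-arcs that meet, if at all, only at points of $\mathbb{V}$; by construction each $c_i$ is a concatenation of elements of $\mathbb{E}$, so the resulting triple is a pre-graph carrying $c_1,\dots,c_n$.

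Finally, to promote this pre-graph to a genuine graph I must arrange that its bounded faces are homeomorphic to open disks, which can fail when a bounded complementary region is an annulus or when the configuration is disconnected. Since enlarging the edge set only enlarges $P(\mathbb{G})$, I would add finitely many auxiliary piecewise geodesic edges, joining distinct components and cutting each non-simply-connected bounded face, to obtain a graph $\mathbb{G}$ with $\{c_1,\dots,c_n\} \subset P(\mathbb{G})$. The main obstacle is the finiteness-of-intersections step, which is where the rigidity of geodesics and compactness must be combined and where the degenerate overlap and periodic cases require care; the face-completion step is then routine planar topology. Alternatively, as indicated in the text, the whole statement can be read off from Proposition $1.4.8$ of \cite{Levy}.
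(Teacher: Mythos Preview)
The paper does not give a proof of this lemma at all: it simply states that the lemma is a reformulation of Proposition~1.4.8 of \cite{Levy} and moves on. Your sketch is a faithful reconstruction of the content of that proposition---the finiteness of geodesic intersections via the $1$-jet determination, cutting at crossing points to obtain a pre-graph, and then adding auxiliary edges to force disk faces---and you correctly end by noting the same citation, so your proposal is correct and strictly more detailed than what the paper provides. One small point worth adding in the surface case is that when $M$ has boundary you must also ensure $\partial M$ is covered by edges of $\mathbb{G}$ (cf.\ the remark after Definition~\ref{ref:pregraph}), which is achieved by the same mechanism of adjoining finitely many extra piecewise geodesic arcs.
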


There are other natural examples of good subspaces of $P(M)$. For example, Baez in \cite{Baez} used the good subspace of piecewise real-analytic paths in $P(\mathbb{R}^{2})$ in order to define the Ashtekar and Lewandowski uniform measure. Another example of good subspace is used in the articles \cite{Sen92} and \cite{Sen97a}.

By definition, any path in $M$ can be approximated by a sequence of paths in $A$ if $A$ is a good subspace. But ${\sf Aff}_{\gamma}(M)$ satisfies the stronger property which roughly asserts that $\mathcal{G}({\sf Aff}_{\gamma}(M))$ is ``dense'' for a certain notion in the set of planar graphs. The next theorem is a direct consequence of Proposition $1.4.10.$ in \cite{Levy}. It has to be noticed that, in the proof of Proposition $1.4.10.$ in \cite{Levy}, the measure of area does not have to be the measure of area associated with the chosen Riemannian metric. For the next theorem, let us suppose that $M$ is an oriented compact surface with boundary. 

\begin{theorem}
\label{approxgraph}
Let $\mathbb{G}=(\mathbb{V}, \mathbb{E}, \mathbb{F})$ be a graph on $M$. Let $\gamma$ be a Riemannian metric on $M$ and let $vol$ be a measure of area on $M$. There exists a sequence of finite planar graphs $\big(\mathbb{G}_n=(\mathbb{V}_n, \mathbb{E}_n, \mathbb{F}_n)\big)_{n \in \mathbb{N}}$ in $\mathcal{G}\big({\sf Aff}_{\gamma}\left(M\right)\big)$ such that: 
\begin{enumerate}
\item for any integer $n$, there exists $\psi_n$ an orientation-preserving homeomorphism of $M$ such that $\psi_n(\mathbb{G}) = \mathbb{G}_n$. 
\item $\mathbb{V}_n = \mathbb{V}$,
\item for any edge $e \in \mathbb{E}$, $\psi_n(e)$ converges to $e$ for the convergence with fixed endpoints, 
\item for any face $F \in \mathbb{F}$, $vol(\psi_n(F)) \underset{n \to \infty}{\longrightarrow} vol(F)$. 
\end{enumerate}
\end{theorem}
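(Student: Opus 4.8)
The plan is to reproduce the geodesic approximation of Proposition $1.4.10$ of \cite{Levy} and then to strengthen its area-convergence conclusion so that it holds for the arbitrary measure of area $vol$, rather than only for the Riemannian volume attached to $\gamma$. The construction of the pair $(\mathbb{G}_n,\psi_n)$ and the strengthening of property $(4)$ are essentially independent, and I would treat them separately.

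\emph{Construction of $\mathbb{G}_n$ and $\psi_n$.} Fix an orientation $\mathbb{E}^{+}$ of $\mathbb{G}$. By Lemma \ref{dense}, for each $e \in \mathbb{E}^{+}$ there is a sequence $(e_n)_{n}$ in ${\sf Aff}_\gamma(M)$ converging to $e$ with fixed endpoints; I set $(e^{-1})_n = (e_n)^{-1}$. Since the approximation is with fixed endpoints, the endpoints of $e_n$ are those of $e$, so $\mathbb{V}_n = \mathbb{V}$ automatically. The point is to choose these approximants simultaneously so that, for $n$ large, the family $\{e_n : e \in \mathbb{E}\}$ is again the edge set of a finite planar graph $\mathbb{G}_n$ whose rotation system matches that of $\mathbb{G}$. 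Two facts make this possible. First, the cyclic order $\sigma_v$ at a vertex $v$ is read off from the first-hitting points $s_{e}(r)$ of the edges on a small circle $\partial \mathbb{D}(v,r)$, and these points depend continuously on the edge for the $d_l$-topology; hence for $n$ large the cyclic order of the outgoing $e_n$ at $v$ equals $\sigma_v$. Second, a general-position perturbation keeps two non-inverse edges from meeting anywhere except at shared endpoints. Taking $\mathcal{S} = {\sf Id}_{\mathbb{V}}$ and $\mathcal{E}(e) = e_n$, hypotheses $(1)$--$(4)$ of Theorem \ref{homeograph} hold, so there is an orientation-preserving homeomorphism $\psi_n$ with $\psi_n(\mathbb{G}) = \mathbb{G}_n$ inducing $\mathcal{S}$ and $\mathcal{E}$. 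This gives properties $(1)$, $(2)$ and, since $\psi_n(e) = e_n \to e$, property $(3)$. On a general oriented surface $M$ with boundary one invokes the surface form of the Heffter--Edmonds--Ringel rotation principle, of which Theorem \ref{homeograph} is the planar case; the rotation system together with the fixed genus determines the embedding up to orientation-preserving homeomorphism.

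\emph{Area convergence for an arbitrary $vol$.} Since $\psi_n(e) \to e$ with fixed endpoints, Lemma $1.2.17$ of \cite{Levy} (used already in the proof of Lemma \ref{lem:bilip2}) shows that the constant-speed parametrizations of the boundary edges of each bounded face converge uniformly. Consequently the Lebesgue measure of the symmetric difference $\psi_n(F)\,\triangle\, F$ tends to $0$: away from a shrinking neighbourhood of $\partial F$, interior points of $F$ stay inside $\psi_n(F)$ and exterior points stay outside. Writing $vol = \rho\, d\mathrm{Leb}$ with $\rho$ continuous, hence bounded on the compact $M$, dominated convergence gives $vol(\psi_n(F)) \to vol(F)$, which is property $(4)$. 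Exactly here we use that $vol$ need not be the Riemannian volume of $\gamma$: the argument depends only on the absolute continuity and local boundedness of its density, as noted before the statement.

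\emph{Main obstacle.} The delicate step is the simultaneous, general-position choice of the geodesic approximants that both preserves the rotation system $(\sigma_v)_{v \in \mathbb{V}}$ and avoids spurious crossings, so that the $\mathbb{G}_n$ are genuine graphs (with disk faces) homeomorphic to $\mathbb{G}$ by an orientation-preserving homeomorphism fixing $\mathbb{V}$. The local control of each $\sigma_v$ is a routine continuity argument, but organising the approximations globally -- and, on a surface rather than the plane, invoking the correct rotation principle rather than its planar form \ref{homeograph} -- is where the real work lies; this is precisely the content borrowed from Proposition $1.4.10$ of \cite{Levy}.
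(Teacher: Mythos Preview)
The paper does not give its own proof of this theorem: it simply states that it is a direct consequence of Proposition~$1.4.10$ in \cite{Levy}, together with the remark that in that proof the measure of area need not be the Riemannian volume attached to~$\gamma$. Your sketch is precisely an outline of that proposition---piecewise-geodesic edge approximation with fixed endpoints, preservation of the rotation system, and the Heffter--Edmonds--Ringel principle to produce the homeomorphisms $\psi_n$---and your symmetric-difference argument for property~$(4)$ makes explicit the observation the paper only alludes to, so your approach matches the paper's reference.
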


Another interesting property of $\mathcal{G}({\sf Aff}(\mathbb{R}^{2}))$ is the fact that any generic finite planar graph with piecewise affine edges can be sent by a piecewise smooth application on a subgraph of the $\mathbb{N}^{2}$ planar graph. We will prove this in Section \ref{Univ}, but before, we need to gather a few facts about graphs and triangulations.

\section{Graphs and piecewise diffeomorphisms} 

\begin{definition}
Let $\mathbb{G}$ be a finite planar graph in $\mathcal{G}\big({\sf Aff}(\mathbb{R}^{2})\big)$. It is simple if the boundary of any face of $\mathbb{G}$ is a simple loop. It is a triangulation if any bounded face is a non degenerate triangle. 
\end{definition}

\begin{definition}
Let $\mathbb{G}$ be a finite planar graph in $\mathcal{G}\big({\sf Aff}(\mathbb{R}^{2})\big)$. 
A mesh of $\mathbb{G}$ is a simple graph $\mathbb{G}'$ in $\mathcal{G}\big({\sf Aff}(\mathbb{R}^{2})\big)$ such that $\mathbb{G} \preccurlyeq \mathbb{G}'$. 
A triangulation of $\mathbb{G}$ is a triangulation $\mathbb{T}$ such that $\mathbb{G} \preccurlyeq \mathbb{T}$ and the unbounded face of $\mathbb{T}$ is the unbounded face of~$\mathbb{G}$.  
\end{definition}

Two triangulations are homeomorphic if they are homeomorphic as finite planar graphs. 

\begin{definition}
Let $\mathbb{G}$ and $\mathbb{G'}$ be two finite planar graphs in $\mathcal{G}\big({\sf Aff}(\mathbb{R}^{2})\big)$. A homeomorphism $\phi: \mathbb{R}^{2} \to \mathbb{R}^{2}$ is a $\mathbb{G}-\mathbb{G}'$ piecewise diffeomorphism if the three following assertions hold: 
\begin{enumerate}
\item $\phi(\mathbb{G})= \mathbb{G}'$, 
\item there exists a mesh $\mathbb{G}_0$ of $\mathbb{G}$ (resp $\mathbb{G}_0'$ of $\mathbb{G}'$) such that $\phi(\mathbb{G}_0) = \mathbb{G}'_0$ and for any bounded face $F$ of $\mathbb{G}_0$, $\phi_{\mid F}: F \to \phi(F)$ is a diffeomorphism whose Jacobian determinant is bounded below and above by some strictly positive real numbers and can also be extended on the boundary of $F$, 
\item let $F_{\infty}$ be the unbounded face of $\mathbb{G}_0$. The application $\phi_{\mid F_{\infty}}: F_{\infty} \to \phi(F_{\infty})$ is a diffeomorphism. 
\end{enumerate}
We will say that $\mathbb{G}_0$ is a good mesh for $\phi$. 
\end{definition}

The piecewise diffeomorphisms we will construct will always be of the following form: they will be the extension (using Lemma \ref{quasi-diff} and the discussion before) of a piecewise affine homeomorphism from the interior of a piecewise affine Jordan curve to itself. Recall the definition of equivalence defined in Definition \ref{equ-home}.

\begin{proposition}
\label{triang}
Let $\mathbb{G}_1$ and $\mathbb{G}_2$ be two homeomorphic simple finite planar graphs with piecewise affine edges. Let us choose an  orientation-preserving homeomorphism $\phi: \mathbb{R}^{2} \to \mathbb{R}^{2}$ such that $\phi(\mathbb{G}_{1}) = \mathbb{G}_2$. There exist two triangulations, $\mathbb{T}_1$ of $\mathbb{G}_1$, $\mathbb{T}_2$ of $\mathbb{G}_2$ and an  orientation-preserving $\mathbb{G}_1-\mathbb{G}_2$ piecewise-diffeomorphism $\psi$ such that: 
\begin{enumerate}
\item $\mathbb{T}_1$ is a good mesh for $\psi$, 
\item $\psi$ and $\phi$ are equivalent on $\mathbb{G}_1$, 
\item $\psi(\mathbb{T}_1) = \mathbb{T}_2$. 
\end{enumerate}
Consequently, the set of  orientation-preserving $\mathbb{G}_1-\mathbb{G}_2$ piecewise diffeomorphisms is not empty. 
\end{proposition}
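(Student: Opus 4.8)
The plan is to realize $\psi$ as a piecewise affine homeomorphism on the filled region of $\mathbb{G}_1$, reading off its combinatorics from $\phi$, and then to extend it across the unbounded face by Lemma \ref{quasi-diff}. First I would put the edges into honestly affine form: adjoin to the vertex sets of $\mathbb{G}_1$ and $\mathbb{G}_2$ all the corners of the piecewise affine edges, so that after this refinement every edge is a genuine line segment. Along each edge $e$ of $\mathbb{G}_1$ and its image $\phi(e)$ in $\mathbb{G}_2$ I would then insert extra division points on whichever of the two carries fewer segments, until $e$ and $\phi(e)$ are subdivided into the same number of affine pieces, say $v_0,\dots,v_n$ and $w_0,\dots,w_n$; these division points become common vertices of the triangulations to be built. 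Next I triangulate every bounded (polygonal) face $F$ of $\mathbb{G}_1$ into non-degenerate straight triangles by the polygon triangulation theorem, allowing interior Steiner points. This produces a triangulation $\mathbb{T}_1$ with $\mathbb{G}_1 \preccurlyeq \mathbb{T}_1$ and with the same unbounded face as $\mathbb{G}_1$, since only bounded faces are subdivided.

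The combinatorial type of $\mathbb{T}_1$ is then transported to $\mathbb{G}_2$ through the vertex and edge correspondence induced by $\phi$: the boundary vertices of each face go to the corresponding $w_i$, and I must place the interior Steiner points and draw the diagonals so as to realize the \emph{same} abstract triangulation inside the corresponding face $\phi(F)$ of $\mathbb{G}_2$, now with straight edges and non-degenerate triangles. Granting this, $\mathbb{T}_2$ is a triangulation of $\mathbb{G}_2$ combinatorially isomorphic to $\mathbb{T}_1$, and I define $\psi$ on each closed triangle $T$ of $\mathbb{T}_1$ to be the unique affine map onto the corresponding triangle of $\mathbb{T}_2$. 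Because two affine maps agreeing at the two endpoints of a shared edge agree along the whole segment, these pieces glue into a homeomorphism of the filled region $D_1$ of $\mathbb{G}_1$ (the complement of its unbounded face) onto the filled region $D_2$ of $\mathbb{G}_2$. The correspondence used was exactly the one induced by $\phi$, so $\psi(v)=\phi(v)$ for vertices and $\psi(e)=\phi(e)$ for edges of $\mathbb{G}_1$; hence $\psi$ and $\phi$ are equivalent on $\mathbb{G}_1$ in the sense of Definition \ref{equ-home}, and $\psi(\mathbb{T}_1)=\mathbb{T}_2$. Each affine triangle-to-triangle map has constant nonzero Jacobian, so the assumption on bounded Jacobian holds with $\mathbb{T}_1$ as good mesh. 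Finally, since $\partial D_1$ and $\partial D_2$ are piecewise affine Jordan curves and $\psi_{\mid \partial D_1}$ is an orientation-preserving homeomorphism onto $\partial D_2$, I extend $\psi$ across the common unbounded face as an orientation-preserving diffeomorphism at infinity using Lemma \ref{quasi-diff} and the discussion preceding it. This yields the remaining diffeomorphism condition and completes $\psi$ as an orientation-preserving $\mathbb{G}_1-\mathbb{G}_2$ piecewise diffeomorphism, so in particular the set of such maps is nonempty.

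The main obstacle is the step transporting the triangulation to $\mathbb{G}_2$: a prescribed combinatorial triangulation of a polygon need \emph{not} be realizable by straight diagonals inside a differently shaped, possibly non-convex, polygon whose boundary vertices are fixed, so one cannot simply copy the diagonals of $\mathbb{T}_1$. I would resolve this by not fixing $\mathbb{T}_1$ in advance but choosing the two triangulations jointly, working face by face: for each pair of corresponding polygons $F$ and $\phi(F)$ I would produce \emph{compatible} straight-line triangulations, that is, combinatorially isomorphic triangulations respecting the boundary correspondence, at the cost of inserting matching interior Steiner points on both sides. The existence of such compatible triangulations of two simple polygons with corresponding marked boundaries is the technical heart of the argument (it can be obtained constructively with interior Steiner points, as in the compatible-triangulation constructions of Aronov--Seidel--Souvaine), and the Steiner points added on the $\mathbb{G}_1$ side are simply incorporated as additional vertices of $\mathbb{T}_1$. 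Orientation is preserved throughout because $\phi$ is orientation-preserving, which forces the boundary cyclic orders at corresponding vertices to match; this is what allows the triangle maps and the extension furnished by Lemma \ref{quasi-diff} to be chosen orientation-preserving simultaneously.
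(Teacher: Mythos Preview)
Your proof is correct and follows essentially the same approach as the paper: work face by face, use the Aronov--Seidel--Souvaine compatible-triangulation result (Theorem \ref{Aronov}) to obtain combinatorially matched straight-line triangulations of each pair of corresponding polygons, define $\psi$ triangle-by-triangle as the unique affine map (this is Lemma \ref{trianhom} in the paper), glue, and extend across the unbounded face via Lemma \ref{quasi-diff}. Your edge-subdivision step making corresponding boundaries into $n$-gons with the same $n$ is exactly the preparatory remark the paper makes before Lemma \ref{trianhom}.
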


In order to prove this proposition, we will need the following result proved in the paper of Aronov-Seidel-Souvaine (\cite{Aronov}).

\begin{theorem}
\label{Aronov}
Let $Q_1$ and $Q_2$ be two simple $n$-gons, seen as planar graphs with $n$ vertices. Let us choose an  orientation-preserving homeomorphism $\psi$ which sends $Q_1$ on $Q_2$. Let $T_1$ (resp. $T_2$) be a triangulation of $Q_1$ (resp. $Q_2$). There exists $\hat{T}_1$ (resp. $\hat{T}_2$) a triangulation of $Q_1$ (resp. $Q_2$), finer than $T_1$ (resp. $T_2$) and an  orientation-preserving homeomorphism $\psi'$ such that $\psi$ and $\psi'$ are equivalent on $Q_1$ and $\psi'(\hat{T}_1)=\hat{T}_2$. 
\end{theorem}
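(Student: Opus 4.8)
The plan is to produce, inside each polygon, \emph{straight-line} triangulations $\hat T_1\succeq T_1$ and $\hat T_2\succeq T_2$ that are combinatorially isomorphic by a label- and orientation-preserving isomorphism; the homeomorphism $\psi'$ will then be built by mapping each triangle of $\hat T_1$ affinely onto its partner in $\hat T_2$. The reason this suffices is that affine triangle maps agree on shared edges (edges are determined by the images of their endpoints), so they glue to a homeomorphism $Q_1\to Q_2$, which I extend to $\mathbb{R}^2$ by Lemma \ref{quasi-diff} applied to the induced boundary homeomorphism; it is orientation-preserving because every triangle map preserves the counterclockwise orientation, and it induces the same bijection as $\psi$ on the $n$ labelled boundary vertices and edges, hence is equivalent to $\psi$ on $Q_1$ in the sense of Definition \ref{equ-home}. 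Thus the whole problem reduces to constructing a common combinatorial refinement of $T_1$ and $T_2$ that can be realised with straight edges in both polygons.

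First I would transport the data into a single polygon. Since $\psi$ is an orientation-preserving homeomorphism with $\psi(v_i)=v'_i$, the image $\psi(T_1)$ is a topological (curved-edge) triangulation of $Q_2$ combinatorially isomorphic to $T_1$ and sharing the labelled boundary of $T_2$. After an ambient isotopy of $Q_2$ fixing $\partial Q_2$ pointwise (which does not change the equivalence class of $\psi$), I may assume that the curved edges of $\psi(T_1)$ meet the straight edges of $T_2$ transversally, in finitely many interior points, with no triple points and no crossing at a vertex. The superposition of the two edge sets is then a finite connected planar graph subdividing $Q_2$ into topological cells, and I triangulate each cell by adding diagonals and at most one interior Steiner vertex. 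This yields an abstract triangulated disk $U$ that simultaneously refines $T_2$ and $\psi(T_1)$; pulling $U$ back by $\psi$, it refines $T_1$ as well. By construction $U$ has the labelled $n$-gon as boundary, and it carries a single combinatorial type for both polygons.

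The main obstacle is the straight-line realisation of $U$ in each polygon while keeping the combinatorial type and non-degeneracy. I would handle it triangle by triangle of the original straight triangulations, exploiting convexity. Inside a triangle $t$ of $T_2$, the part of $U$ is a triangulated disk whose boundary is $\partial t$ subdivided by the crossing points lying on the straight edges of $t$; since points placed on the sides of a triangle form, in cyclic order, a convex polygon, this boundary is convex. A triangulation of a convex polygon with interior vertices admits a straight-line realisation with that prescribed convex boundary (Fáry's theorem together with Tutte's barycentric embedding for the interior vertices), and a generic choice of the Steiner positions keeps all triangles non-degenerate, with Jacobian determinants of the eventual affine maps bounded away from $0$. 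Performing this independently in every triangle of $T_2$ realises $U$ as a straight triangulation $\hat T_2\succeq T_2$ of $Q_2$; performing the symmetric construction inside the convex triangles of $T_1$ (applied to the pulled-back complex $\psi^{-1}(U)$) realises the \emph{same} abstract complex as a straight triangulation $\hat T_1\succeq T_1$ of $Q_1$. Because only the geometric positions, not the combinatorics, differ between the two realisations, $\hat T_1$ and $\hat T_2$ are combinatorially isomorphic by a boundary-respecting, orientation-preserving isomorphism.

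Finally I would assemble $\psi'$. Using the combinatorial isomorphism, define $\psi'$ on each triangle of $\hat T_1$ to be the affine map onto the corresponding triangle of $\hat T_2$ respecting the vertex correspondence; the pieces glue to a piecewise-affine homeomorphism $Q_1\to Q_2$, which Lemma \ref{quasi-diff} extends to an orientation-preserving homeomorphism of $\mathbb{R}^2$ (one may alternatively invoke Theorem \ref{homeograph} on $\hat T_1,\hat T_2$ to produce such a homeomorphism inducing the prescribed bijections). By construction $\psi'(\hat T_1)=\hat T_2$, $\hat T_1$ and $\hat T_2$ are finer than $T_1$ and $T_2$ respectively, and $\psi'$ agrees with $\psi$ on the vertices and edges of the boundary $n$-gon $Q_1$, so $\psi$ and $\psi'$ are equivalent on $Q_1$. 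This proves the statement, the only non-routine ingredient being the convexity-based straight-line realisation of the overlay, which is the technical heart of the Aronov--Seidel--Souvaine construction.
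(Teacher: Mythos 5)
The paper does not actually prove this statement: Theorem \ref{Aronov} is quoted from the Aronov--Seidel--Souvaine paper \cite{Aronov} and used as a black box, so there is no internal proof to compare against. Your overlay strategy (superpose $\psi(T_1)$ with $T_2$, triangulate the common refinement, realise it with straight edges in both polygons, and glue affine triangle maps) is the right kind of argument and is close in spirit to the compatible-triangulation construction of the cited paper. However, as written there are two genuine gaps. First, the step ``after an ambient isotopy \ldots the curved edges of $\psi(T_1)$ meet the straight edges of $T_2$ transversally, in finitely many interior points'' is not legitimate for an arbitrary homeomorphism $\psi$: the image of a straight edge under a mere homeomorphism can be a wild (non-rectifiable) arc, for which transversality is not even defined, and it may meet an edge of $T_2$ in infinitely many points; no ambient isotopy repairs this directly. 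You must first replace $\psi$ by a PL (or smooth) homeomorphism inducing the same vertex and edge bijections on $Q_1$ --- which is harmless for the equivalence requirement of Definition \ref{equ-home} and is available in dimension $2$ (Rad\'o/Moise, or via Theorem \ref{homeograph} together with Lemma \ref{trianhom} applied to any pair of triangulations); only then is the overlay a finite planar graph.

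Second, the straight-line realisation step fails as stated. Inside a triangle $t$ of $T_2$ the prescribed boundary (the crossing points on the sides of $t$) is only \emph{weakly} convex: consecutive boundary vertices on a common side of $t$ are collinear. Tutte/Floater embeddings with a weakly convex boundary are valid only under an extra condition ruling out interior edges joining two boundary vertices that lie on the same straight boundary segment; your recipe ``adding diagonals and at most one interior Steiner vertex'' can create exactly such a diagonal between two crossing points on the same side of $t$, and the resulting straight realisation would contain a degenerate (flat) triangle. The ``generic choice of Steiner positions'' does not help because the offending vertices are boundary vertices whose positions are forced. The fix is easy but must be made explicit: triangulate every cell of the overlay by coning from an interior Steiner point, so that no edge of $U$ joins two boundary vertices of a cell; then no chord connects collinear points, the Floater-type embedding is valid, and the rest of your assembly (affine maps gluing along shared edges, orientation preservation, extension to $\mathbb{R}^{2}$ by Lemma \ref{quasi-diff}, and equivalence with $\psi$ on $Q_1$) goes through as you describe.
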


Let $\mathbb{G}_1$, respectively $\mathbb{G}_2$, be a simple graph in $\mathcal{G}\big({\sf Aff}(\mathbb{R}^{2})\big)$ with only one face denoted $F_1$, respectively $F_2$. Let $\psi$ be an orientation-preserving homeomorphism which sends $\mathbb{G}_1$ on $\mathbb{G}_2$. Then there exists a positive integer $n$ such that  $\partial F_1$ and $\partial F_2$ can be seen as two $n$-gons such that, when one considers these $n$-gons as graphs, $\psi$ sends $\partial F_1$ on $\partial F_2$: in order to do so, it is enough to add some vertices on the boundaries of $F_1$ and $F_2$. This remark will allow us to apply Theorem \ref{Aronov} to the faces of simple planar graphs with piecewise affine edges. Let us remark also that the homeomorphism $\psi$ between $\hat{T_1}$ and $\hat{T_2}$ in Theorem $\ref{Aronov}$ can be chosen so that it is affine on each bounded face of $\hat{T_1}$. 

\begin{lemme}
\label{trianhom}
Let $T_1$ and $T_2$ be two triangulations in the plane. If they are homeomorphic, there exists a function $\psi$ defined on the union of the bounded faces of $T_1$ and affine on each bounded face of $T_1$ such that $\psi$ is an  orientation-preserving homeomorphism which sends $T_1$ on~$T_2$. 
\end{lemme}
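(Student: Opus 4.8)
The plan is to define $\psi$ piece by piece, as the affine interpolation of the vertex correspondence on each triangle, and then to verify that these affine pieces glue into a single orientation-preserving homeomorphism.

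First I would fix, using the hypothesis that $T_1$ and $T_2$ are homeomorphic, an orientation-preserving homeomorphism $\phi$ of $\mathbb{R}^2$ with $\phi(T_1)=T_2$, together with the induced bijections $\mathcal{S}:=\mathcal{S}^{\phi}_{T_1}$ on vertices and $\mathcal{E}:=\mathcal{E}^{\phi}_{T_1}$ on edges. Since $\phi$ is an orientation-preserving homeomorphism of the plane it sends bounded faces to bounded faces, so it induces a bijection between the bounded faces of $T_1$ and those of $T_2$. For a bounded face $\tau$ of $T_1$, a non-degenerate triangle with vertices $a,b,c$, the corresponding face $\phi(\tau)$ of $T_2$ is the non-degenerate triangle with corners $\mathcal{S}(a),\mathcal{S}(b),\mathcal{S}(c)$. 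Non-degeneracy gives a unique affine map $\psi_\tau$ of $\mathbb{R}^2$ with $\psi_\tau(a)=\mathcal{S}(a)$, $\psi_\tau(b)=\mathcal{S}(b)$, $\psi_\tau(c)=\mathcal{S}(c)$, and it restricts to a bijection from the closed triangle $\tau$ onto the closed triangle $\phi(\tau)$. Let $R_1$ (resp. $R_2$) denote the union of the closed bounded faces of $T_1$ (resp. $T_2$); I define $\psi$ on $R_1$ by $\psi_{\mid \tau}=\psi_\tau$.

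Next I would check that $\psi$ is a well-defined continuous bijection. Two distinct bounded faces of a triangulation meet, if at all, either along a common edge (a full side of both triangles) or at a single common vertex. An affine map on a triangle is determined by its values at the two endpoints of a side, so on a shared side both $\psi_\tau$ and $\psi_{\tau'}$ equal the affine parametrization of the segment joining the $\mathcal{S}$-images of the two shared vertices; these agree, hence $\psi$ is well defined and continuous. Surjectivity onto $R_2$ is immediate since the faces $\phi(\tau)$ exhaust the bounded faces of $T_2$. For injectivity I would argue by the position of a point $y\in R_2$: if $y$ lies in the interior of a face it has exactly one preimage, coming from the unique face mapped to it; if $y$ lies on a shared edge the two candidate preimages coincide by the matching just used; and if $y$ is a vertex its preimage is the single vertex $\mathcal{S}^{-1}(y)$. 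The decisive point is that $\phi$, being a genuine homeomorphism of the plane, preserves the whole incidence structure --- which faces abut which edge and the cyclic order $\sigma_v$ of edges around each vertex --- so the affinely transported triangles reassemble $T_2$ with no overlaps and no gaps; this is exactly the information encoded by the rotation principle, Theorem \ref{homeograph}. A continuous bijection from the compact set $R_1$ onto the Hausdorff set $R_2$ is a homeomorphism.

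Finally, orientation-preservation follows because $\phi$ preserves the counterclockwise cyclic order of the three corners of each triangle, so each $\psi_\tau$ has strictly positive Jacobian determinant; thus $\psi$ is orientation-preserving. As $\psi$ maps each vertex to the corresponding vertex and each side affinely onto the corresponding side, it satisfies $\psi(T_1)=T_2$ and is affine on every bounded face, as required. The only real difficulty is the injectivity and exact tiling in the third paragraph: everything else is the elementary algebra of affine maps, whereas this step genuinely uses that the abstract vertex- and edge-bijections induced by $\phi$ respect the planar incidence and cyclic-order data of the triangulations.
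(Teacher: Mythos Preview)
Your proof is correct and follows the same approach as the paper: define the unique affine map on each triangle via the vertex bijection induced by the given homeomorphism $\phi$, and then observe that these pieces glue along shared edges because an affine map on a segment is determined by its endpoint values. The paper's own proof is considerably terser, omitting the explicit injectivity and orientation checks that you carefully supply.
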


\begin{proof}
Let $T_1$ and $T_2$ be two homeomorphic triangulations and let $\phi$ be an orientation-preserving homeomorphism of $\mathbb{R}^{2}$ which sends $T_1$ on $T_2$. For any bounded face $F$ of $T_1$, we can find an orientation-preserving  affine map $\psi_{\mid F}$, defined on $F$, such that $\psi_{\mid F}$ and $\phi$ are equivalent on the border $\partial F$, seen as a graph with $3$ vertices. This map is actually unique.

Let us remark that for any triangle $T$, any $x \in T$ and any affine map $F$, $F(x)$ depends only on the image by $F$ of the edge which contains $x$. This allows us to glue the affine maps $\big(\psi_{\mid F}\big)_{F}$ and to get the desired $\psi$.
\end{proof}

We can now prove Proposition \ref{triang}. 

\begin{proof}[Proof of Proposition \ref{triang}]
Let $\mathbb{G}_1$ and $\mathbb{G}_2$ be two simple homeomorphic finite planar graphs with piecewise affine edges. Let $\phi$ be an orientation-preserving homeomorphism such that $\phi(\mathbb{G}_{1}) = \mathbb{G}_2$. For any bounded face $F$ of $\mathbb{G}_1$, $F$ and $\phi(F)$ are simple polygons. 
As any polygon can be triangulated, one consequence of Theorem \ref{Aronov} and Lemma \ref{trianhom} is that there exists $\mathbb{T}_{1,F}$ (resp. $\mathbb{T}_{2,F}$) a triangulation of $F$ (resp. $\phi(F)$) and $\psi_{\mid F}$ a function defined on $F$, affine on each bounded face of $\mathbb{T}_{1,F}$, such that $\psi_{\mid F}$ is an orientation-preserving homeomorphism between $\mathbb{T}_{1,F}$ and $\mathbb{T}_{2,F}$ and such that $\psi_{\mid F}$ and $\phi$ are equivalent on $\partial F$. 
We define $\mathbb{T}_1$ (resp. $\mathbb{T}_2$) as the triangulation obtained by taking the union of all the triangulations $\left(\mathbb{T}_{1,F}\right)_{F}$ ($\left( \mathbb{T}_{2,F} \right)_{F}$). 
As in the proof of Lemma \ref{trianhom}, we can glue the $\psi_{\mid F}$ together: this gives a function $\psi_{\mid F_{\infty}^{c}}$ defined on the complementary of the unbounded face $F_{\infty}$ of $\mathbb{G}_1$. As $\mathbb{G}_1$ is simple, the boundary of $F_{\infty}$ is a Jordan curve. Thus, according to the discussion we had before Lemma \ref{quasi-diff}, we can extend $\psi_{\mid F_{\infty}^{c}}$ on $F_{\infty}$ and the resulting homeomorphism, denoted by $\psi$, is such that $\psi_{\mid F_{\infty}}$ is a diffeomorphism. By construction, $\psi$ is an orientation-preserving  $\mathbb{G}-\mathbb{G}'$ piecewise diffeomorphism, $\phi$ and $\psi$ are equivalent on $\mathbb{G}_1$ and $\psi(\mathbb{T}_1) = \mathbb{T}_2$. 
\end{proof}

\section{The $\mathbb{N}^{2}$ planar graph}\label{Univ}

We have seen after Lemma \ref{dense} that the set of piecewise horizontal or vertical paths is not dense in $P(\mathbb{R}^{2})$ for the convergence with fixed endpoints. In the following, we show that, in some sense, we can always inject any graph in the $\mathbb{N}^{2}$ graph defined in Exemple \ref{N2}. This property is crucial in the study of planar Markovian holonomy fields. Let $\mathbb{G}=(\mathbb{V}, \mathbb{E}, \mathbb{F})$ be a finite planar graph in $\mathcal{G}\big({\sf Aff}\left(\mathbb{R}^{2}\right)\!\big)$. We remind the reader that for any $v \in \mathbb{V}$, $\mathbb{E}_{v}$ is the set of edges $e\in \mathbb{E}$ such that $\underline{e} = v$. 

\begin{definition}
The graph $\mathbb{G}$ is generic if for any vertex $v \in \mathbb{V}$, $\# \mathbb{E}_v \leq 4$.
\end{definition}

It is worth noticing that any finite planar graph can be approximated by a generic graph. This is illustrated in Figure \ref{generaiquefig}.

\begin{lemme}
\label{gene}
Let $v_0$ be a vertex of $\mathbb{G}$. There exists a sequence of generic graphs $\mathbb{G}_n=(\mathbb{V}_n, \mathbb{E}_n, \mathbb{F}_n)$ in $\mathcal{G}\big({\sf Aff}\left(\mathbb{R}^{2}\right)\!\big)$ such that for any $n \geq 0$:
\begin{enumerate}
\item $v_0 \in \mathbb{G}_n$, 
\item there exists an injective function $\mathcal{L}_n: L_v(\mathbb{G}) \to L_v(\mathbb{G}_n)$ such that for any loop $l \in L_{v}(\mathbb{G})$, $\mathcal{L}_n(l)$ converges with fixed endpoints to $l$. 
\end{enumerate}
\end{lemme}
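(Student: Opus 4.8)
The plan is to build $\mathbb{G}_n$ from $\mathbb{G}$ by a local \emph{vertex-splitting} performed in a shrinking neighbourhood of each vertex of degree larger than $4$ (here the degree of $v$ is $\#\mathbb{E}_v$, the number of edge-ends at $v$, since $\mathbb{E}$ is stable by inversion). Fix disjoint closed disks $D_v = \mathbb{D}(v,\epsilon_n)$, one around each vertex $v \in \mathbb{V}$, small enough that $D_v$ meets only the edges of $\mathbb{E}_v$ and each such edge crosses $\partial D_v$ exactly once; let $p_1^v,\dots,p_k^v$ be these crossing points, labelled according to the cyclic order $\sigma_v$, where $k = \#\mathbb{E}_v$. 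Inside $D_v$ I replace the single vertex $v$ by a tiny horizontal ``caterpillar'': a path $w_1^v - w_2^v - \cdots - w_p^v$ of $p = \lceil k/2 \rceil$ new vertices joined by straight segments, with $w_1^v = v$ so that $v_0$ is kept among the new vertices (this gives condition~(1)), and I join each crossing point $p_i^v$ to one of the $w_j^v$ by a straight segment, distributing the $k$ edge-ends so that every $w_j^v$ receives at most two of them while respecting $\sigma_v$ (the ends are attached from ``above'' and ``below'' the caterpillar, read off $\partial D_v$ traversed once). Each new vertex then has degree at most $2+2=4$, so $\mathbb{G}_n$ is generic; since all added segments are affine and the edges of $\mathbb{G}$ are piecewise affine, $\mathbb{G}_n \in \mathcal{G}({\sf Aff}(\mathbb{R}^2))$.

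Because the caterpillar together with its attaching segments is a \emph{tree}, it encloses no bounded region inside $D_v$; hence the faces of $\mathbb{G}_n$ are in bijection with those of $\mathbb{G}$ (each original bounded face merely has its corners near the vertices reshaped) and they remain homeomorphic to open disks, so $\mathbb{G}_n$ is again a finite planar graph. That the cyclic-order-preserving local picture glues to an embedded planar graph can be checked directly, or formalized through Theorem~\ref{homeograph}: the data $(\mathbb{V}_n,\mathbb{E}_n)$ together with the cyclic orders produced by the construction are realized by an orientation-preserving homeomorphism of $\mathbb{R}^2$.

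To define $\mathcal{L}_n$, write a loop $l \in L_{v_0}(\mathbb{G})$ as a concatenation of edges $e_{i_1}\cdots e_{i_r}$. In $\mathbb{G}_n$ each $e_{i_j}$ is represented by its truncation $\tilde e_{i_j}$ (the portion lying outside all the disks $D_v$) prolonged by the attaching segments at both ends; between two consecutive truncated edges, which now arrive at and leave from two possibly different caterpillar vertices of the same $D_v$, I insert the unique reduced path along that caterpillar joining those two vertices. Prepending and appending the caterpillar path of $D_{v_0}$ joining $v_0 = w_1^{v_0}$ to the relevant attaching vertex, I obtain a genuine loop $\mathcal{L}_n(l) \in L_{v_0}(\mathbb{G}_n)$. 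Injectivity follows from the existence of a retraction collapsing each $D_v$ back onto $v$: it sends every caterpillar segment to the constant path at $v$ and each $\tilde e_{i_j}$ back to $e_{i_j}$, thereby recovering $l$ from $\mathcal{L}_n(l)$.

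Finally, as $\epsilon_n \to 0$ the disks $D_v$ shrink to points: the inserted caterpillar paths have length $O(\epsilon_n)$ and the truncations remove length $O(\epsilon_n)$, so, parametrizing at constant speed, $\mathcal{L}_n(l)$ converges uniformly to $l$ and $|\ell(\mathcal{L}_n(l)) - \ell(l)| \to 0$; since the base point $v_0$ is fixed throughout, $d_l(\mathcal{L}_n(l), l) \to 0$ with fixed endpoints, which is condition~(2). I expect the only delicate point to be the bookkeeping of the second paragraph: verifying that the distribution of the edge-ends $p_1^v,\dots,p_k^v$ onto the caterpillar vertices can always be arranged to respect $\sigma_v$ without creating crossings, so that $\mathbb{G}_n$ is genuinely an embedded planar graph with disk faces. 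Once this is settled, genericity, affineness, the well-definedness and injectivity of $\mathcal{L}_n$, and the convergence are all routine.
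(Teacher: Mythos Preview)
The paper does not give a written proof of this lemma; it is asserted as ``illustrated in Figure~\ref{generaiquefig}'', and your vertex-splitting construction (replacing each high-degree vertex by a short path and redistributing the incident edge-ends in cyclic order) is precisely the standard formalisation of what such a figure depicts. Your argument is correct; the planarity bookkeeping you flag is easily settled by making the caterpillar of diameter $o(\epsilon_n)$ so that the attaching segments are nearly radial and hence non-crossing, or simply by allowing them to be piecewise affine rather than straight.
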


\begin{figure}%[here]
 \centering
  \includegraphics[width=200pt]{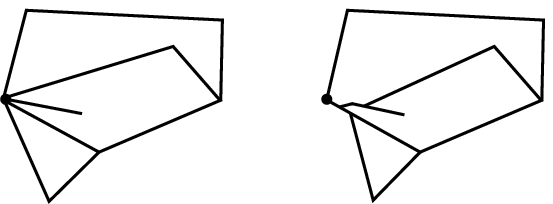}
 \caption{An approximation by a generic graph.}
 \label{generaiquefig}
\end{figure}

The notion of generic graphs was defined so that one could send any of such graph in the $\mathbb{N}^{2}$ planar graph. Let us suppose, until the end of the chapter, that $\mathbb{G}$ is a generic finite planar graph in $\mathcal{G}\big({\sf Aff}\left(\mathbb{R}^{2}\right)\!\big)$.

\begin{proposition}
\label{injection1}
There exists an orientation-preserving homeomorphism of $\mathbb{R}^{2}$, denoted by $\psi$, such that $\psi \left( \mathbb{G} \right)$ is a subgraph of the $\mathbb{N}^{2}$ planar graph. 
\end{proposition}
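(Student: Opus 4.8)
The plan is to reduce the statement to the Heffter--Edmonds--Ringel rotation principle (Theorem \ref{homeograph}): it suffices to exhibit a finite planar graph $\mathbb{H}$ which is a subgraph of the $\mathbb{N}^{2}$-graph, in the sense that every vertex of $\mathbb{H}$ is a lattice point and every edge of $\mathbb{H}$ is a concatenation of edges of $\mathbb{N}^{2}$, together with bijections $\mathcal{S}: \mathbb{V} \to \mathbb{V}(\mathbb{H})$ and $\mathcal{E}: \mathbb{E} \to \mathbb{E}(\mathbb{H})$ satisfying the four compatibility conditions of Theorem \ref{homeograph}. Indeed, once such $\mathbb{H}$, $\mathcal{S}$ and $\mathcal{E}$ are produced, Theorem \ref{homeograph} yields an orientation-preserving homeomorphism $\psi$ of $\mathbb{R}^{2}$ with $\psi(\mathbb{G}) = \mathbb{H}$, so that $\psi(\mathbb{G})$ lies in the $1$-skeleton of $\mathbb{N}^{2}$, as desired.

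First I would assign to each outgoing edge at a vertex a cardinal direction. At each vertex $v \in \mathbb{V}$ of the generic graph $\mathbb{G}$ there are at most four outgoing edges, carrying the anti-clockwise cyclic order $\sigma_v$, while each vertex of $\mathbb{N}^{2}$ carries exactly four outgoing directions --- East, North, West, South --- cyclically ordered anti-clockwise. Since $\#\mathbb{E}_v \le 4$, I can injectively assign the edges of $\mathbb{E}_v$ to these four slots while preserving the cyclic order; this is exactly where the hypothesis that $\mathbb{G}$ is generic is used. This prescribes, for every oriented edge $e$, the cardinal direction in which the lattice path $\mathcal{E}(e)$ must leave $\mathcal{S}(\underline{e})$, and it builds in the identity $\sigma_{\mathcal{S}(v)} = \mathcal{E} \circ \sigma_v \circ \mathcal{E}^{-1}$.

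Next I would produce the embedding $\mathbb{H}$ by an orthogonal grid drawing: place the vertices $\mathbb{V}$ at distinct lattice points and route each edge $e$ as a lattice path $\mathcal{E}(e)$ leaving its two endpoints in the prescribed cardinal directions, these paths being pairwise disjoint except at shared endpoints. A convenient way to organise this is to start from the given planar drawing of $\mathbb{G}$, put the vertices in general position, pass to a sufficiently fine sublattice, place each vertex at a nearby lattice point and replace each edge by a monotone staircase that stays close to the original arc: the prescribed cardinal exits take care of the immediate neighbourhood of each vertex, and the fineness of the lattice together with the disjointness of the original edges keeps the staircases from meeting away from the vertices. One then checks that $\mathbb{H}$ is a genuine finite planar graph realising the same rotation system and the same unbounded face as $\mathbb{G}$.

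The hard part will be the crossing-free routing: turning the informal ``approximate each edge by a nearby staircase'' into pairwise disjoint lattice paths realising \emph{exactly} the combinatorial map of $\mathbb{G}$, with no two paths sharing an interior lattice point and with the cyclic order at each vertex preserved. This is the classical fact that a planar graph of maximal degree at most $4$ admits a planar orthogonal grid drawing realising a prescribed embedding; I would either invoke such a result or prove it by induction on $\#\mathbb{E}$, inserting the edges one at a time into an already constructed grid drawing and passing to a finer sublattice of $\mathbb{N}^{2}$ to free enough room for each new route, using the fact that every bounded face of $\mathbb{G}$ is homeomorphic to a disk (Definition \ref{ref:pregraph}) to locate, at each step, the face in which the new lattice path is to be drawn. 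Once disjointness and order preservation are secured, the four hypotheses of Theorem \ref{homeograph} hold by construction and the conclusion follows.
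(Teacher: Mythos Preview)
Your reduction to Theorem \ref{homeograph} is exactly right, and your identification of cardinal directions at each vertex using the generic hypothesis is the correct starting point. Your proof is valid, but the paper takes a different route that sidesteps the ``hard part'' you identify.

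Instead of constructing the orthogonal grid drawing directly, the paper applies Theorem \ref{homeograph} \emph{twice}. The first application is to pre-graphs of \emph{edge stubs}: from $\mathbb{G}$ one extracts the pre-graph $\mathbb{G}_p$ whose edges are the initial thirds $e_p([0,\frac{1}{3}])$ of the edges of $\mathbb{G}$, and one builds a target pre-graph $\mathbb{G}'_p$ consisting of short lattice stubs (edges of $\frac{1}{3}\mathbb{N}^{2}$) emanating from distinct lattice points $\tilde{v}$, with $\#\mathbb{E}_{\tilde{v}} = \#\mathbb{E}_v$. These pre-graphs are disjoint unions of stars, so matching them so that condition 4 of Theorem \ref{homeograph} holds is trivial, and the theorem (in its version for pre-graphs with affine edges) yields a homeomorphism $\psi$ sending stubs to stubs. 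The image $\mathbb{G}' = \psi(\mathbb{G})$ now has the crucial feature that near each vertex it is already a lattice graph. One then approximates all of $\mathbb{G}'$ in a fine grid $\frac{1}{3^k}\mathbb{N}^{2}$: the near-vertex parts are exact, and the far-from-vertex parts of distinct edges are disjoint compact sets, so for $k$ large enough the approximation $\tilde{\mathbb{G}}$ has no spurious crossings. A second application of Theorem \ref{homeograph}, now between $\mathbb{G}$ and $\tilde{\mathbb{G}}$ (after rescaling by $3^k$), finishes the proof.

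So the paper trades your single invocation of the orthogonal-grid-drawing theorem for two invocations of Theorem \ref{homeograph} plus a coarse approximation argument. Your approach is more direct and leans on standard graph-drawing results; the paper's approach is more self-contained, staying entirely within the Heffter--Edmonds--Ringel framework and avoiding the routing problem by using homeomorphism freedom to straighten the local picture at each vertex \emph{before} approximating.
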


\begin{proof}
For each $v \in \mathbb{V}$, we choose a point $\tilde{v}$ of $\mathbb{N}^{2}$ such that the points $(\tilde{v})_{v \in \mathbb{V}}$ are all distinct. For each of these points, we choose a subset $\mathbb{E}_{\tilde{v}}$ of edges in $\frac{1}{3}\mathbb{N}^{2}$ going out of $\tilde{v}$ such that $\#\mathbb{E}_{\tilde{v}} = \#\mathbb{E}_v$.
We consider the two pre-graphs: 
\begin{enumerate}
\item $\mathbb{G}_p$ with set of edges $\mathbb{E}_p$, such that $\mathbb{E}_p = \{ e_p([0,\frac{1}{3}]), e_p \text{ represents } e \in \mathbb{E}\}, $
\item $\mathbb{G}'_p$ such that the set of edges $\mathbb{E}'_p$ is equal to $ \cup_{v \in \mathbb{V}} \mathbb{E}_{\tilde{v}}$. 
\end{enumerate}
Let us define the application $\mathcal{S}: v \mapsto \tilde{v}$. Because $\#\mathbb{E}_{\tilde{v}} = \#\mathbb{E}_v $ and thanks to the shape of the graphs, we can choose a bijection $\mathcal{E}: \mathbb{E}_p \to \mathbb{E}'_p$ such that the conditions of Theorem $\ref{homeograph}$ hold. Using this theorem, there exists an orientation-preserving homeomorphism $\psi$ such that $\psi(\mathbb{G}_p) = \mathbb{G}'_p$ and $\psi$ induces the two bijections $\mathcal{S}$ and $\mathcal{E}$. Let us define $\mathbb{G}' = \psi(\mathbb{G})$: we approximate $\mathbb{G}'$ in a $\frac{1}{3^k} \mathbb{N}^{2}$. For $k$ big enough this approximation defines a graph $\tilde{\mathbb{G}}$ without new vertices. By construction, the assumptions 1. to 4. of Theorem \ref{homeograph} hold for $\mathbb{G}$ and $\tilde{\mathbb{G}}$. Using a dilatation we can suppose that $k=1$. Using Theorem \ref{homeograph}, there exists an orientation-preserving homeomorphism $\psi$ which sends $\mathbb{G}$ to the subgraph $\tilde{\mathbb{G}}$ of the $\mathbb{N}^{2}$ planar graph. 
\end{proof}

\begin{corollary}
\label{injectiondansn}
There exists a subgraph  $\mathbb{G}'$ of the $\mathbb{N}^{2}$ graph such that the set of orientation-preserving $\mathbb{G}-\mathbb{G}'$ piecewise diffeomorphisms is not empty. 
\end{corollary}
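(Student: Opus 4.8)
The plan is to combine the two constructions already in hand: Proposition~\ref{injection1}, which supplies an orientation-preserving homeomorphism sending $\mathbb{G}$ onto a subgraph of $\mathbb{N}^{2}$, and Proposition~\ref{triang}, which turns a homeomorphism between \emph{simple} piecewise-affine graphs into a genuine piecewise diffeomorphism. So I would first apply Proposition~\ref{injection1} to the generic graph $\mathbb{G}$ to obtain an orientation-preserving homeomorphism $\psi_0$ of $\mathbb{R}^{2}$ such that $\mathbb{G}' := \psi_0(\mathbb{G})$ is a subgraph of the $\mathbb{N}^{2}$ planar graph; in particular $\mathbb{G}'$ has affine (unit-segment) edges and is homeomorphic to $\mathbb{G}$. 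This $\mathbb{G}'$ will be the subgraph announced in the statement, since the piecewise diffeomorphism I eventually build will agree with $\psi_0$ on $\mathbb{G}$, hence send $\mathbb{G}$ onto the same $\mathbb{G}'$.

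The obstacle is that $\mathbb{G}$ is only assumed generic, not simple, so Proposition~\ref{triang} does not apply directly. I would therefore reduce to the simple case by passing to meshes. Concretely, I would choose a simple mesh $\mathbb{G}_0$ of $\mathbb{G}$ with piecewise-affine edges (subdividing each non-simple face of $\mathbb{G}$ by affine chords until every face boundary is a simple loop), and then realize the \emph{same} combinatorial refinement on $\mathbb{G}'$: since the faces of $\mathbb{G}'$ are polygonal and the face structures of $\mathbb{G}$ and $\mathbb{G}'$ correspond under $\psi_0$, I can add matching affine chords inside the corresponding faces of $\mathbb{G}'$ to produce a simple piecewise-affine mesh $\mathbb{G}_0'$ of $\mathbb{G}'$ that is combinatorially isomorphic to $\mathbb{G}_0$ by an isomorphism extending the one induced by $\psi_0$ on $\mathbb{G}$. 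Feeding these matching vertex/edge bijections and cyclic orders into the Heffter--Edmonds--Ringel principle (Theorem~\ref{homeograph}) yields an orientation-preserving homeomorphism $\psi_0'$ with $\psi_0'(\mathbb{G}_0)=\mathbb{G}_0'$ that is equivalent to $\psi_0$ on $\mathbb{G}$. Now $\mathbb{G}_0$ and $\mathbb{G}_0'$ are homeomorphic simple finite planar graphs with piecewise-affine edges, so Proposition~\ref{triang} applies and produces triangulations $\mathbb{T}_1$ of $\mathbb{G}_0$, $\mathbb{T}_2$ of $\mathbb{G}_0'$ and an orientation-preserving $\mathbb{G}_0$--$\mathbb{G}_0'$ piecewise diffeomorphism $\psi$ with $\mathbb{T}_1$ a good mesh for $\psi$, $\psi(\mathbb{T}_1)=\mathbb{T}_2$, and $\psi$ equivalent to $\psi_0'$ (hence to $\psi_0$) on $\mathbb{G}_0$.

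It then remains to check that this $\psi$ is a $\mathbb{G}$--$\mathbb{G}'$ piecewise diffeomorphism. Since each edge of $\mathbb{G}$ is a concatenation of edges of $\mathbb{G}_0$ and $\psi$ agrees with $\psi_0$ on the edges of $\mathbb{G}_0$, we get $\psi(\mathbb{G})=\psi_0(\mathbb{G})=\mathbb{G}'$, giving condition~$(1)$ of the definition. For conditions $(2)$ and $(3)$ I would use the good mesh $\mathbb{T}_1$ itself: as a triangulation of $\mathbb{G}_0$ it is simple, and $\mathbb{G}\preccurlyeq\mathbb{G}_0\preccurlyeq\mathbb{T}_1$ makes $\mathbb{T}_1$ a mesh of $\mathbb{G}$, while $\mathbb{T}_2=\psi(\mathbb{T}_1)$ is correspondingly a mesh of $\mathbb{G}'$; the required bounds on the Jacobian determinant over each bounded face and the diffeomorphism property over the unbounded face are precisely the good-mesh conditions already granted by Proposition~\ref{triang}. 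Hence $\psi$ is an orientation-preserving $\mathbb{G}$--$\mathbb{G}'$ piecewise diffeomorphism, and the set of such maps is non-empty.

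I expect the main difficulty to be exactly the reduction in the middle paragraph: producing the two compatible \emph{piecewise-affine} simple meshes that are related by a homeomorphism extending $\psi_0$, so that Proposition~\ref{triang} (stated only for simple graphs) can be invoked, and then verifying that a piecewise diffeomorphism of the refined graphs descends to one of the original coarser graphs $\mathbb{G}$ and $\mathbb{G}'$. Everything else is bookkeeping with the partial order $\preccurlyeq$ and the equivalence relation of Definition~\ref{equ-home}.
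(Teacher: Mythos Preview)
Your argument is correct but considerably more elaborate than the paper's. The paper's proof is two lines: Proposition~\ref{injection1} gives a subgraph $\mathbb{G}'$ of $\mathbb{N}^2$ homeomorphic to $\mathbb{G}$; the paper then notes that $\mathbb{G}'$ is ``also simple'' and applies Proposition~\ref{triang} directly to the pair $(\mathbb{G},\mathbb{G}')$.

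The discrepancy is exactly the point you flagged: the ambient hypothesis before the corollary only says $\mathbb{G}$ is generic, and Proposition~\ref{triang} requires both graphs to be simple. The paper is tacitly assuming $\mathbb{G}$ is simple (simplicity being a homeomorphism invariant, this is what makes $\mathbb{G}'$ ``also'' simple), and this is harmless in context: the only place the corollary is invoked, in the proof of Theorem~\ref{caracterisation0}, one has already reduced to graphs that are both simple and generic. Your mesh-refinement detour is a legitimate way to cover the non-simple case, and your check that a $\mathbb{G}_0$--$\mathbb{G}_0'$ piecewise diffeomorphism with good mesh $\mathbb{T}_1$ is automatically a $\mathbb{G}$--$\mathbb{G}'$ piecewise diffeomorphism (with the same $\mathbb{T}_1$, since $\mathbb{G}\preccurlyeq\mathbb{G}_0\preccurlyeq\mathbb{T}_1$) is correct. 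So your proof buys a slightly more general statement at the cost of the additional construction of compatible simple meshes; the paper's short argument suffices once one accepts the tacit simplicity assumption.
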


\begin{proof}
Due to Proposition \ref{injection1} there exists a subgraph $\mathbb{G}'$ of the $\mathbb{N}^{2}$ planar graph such that $\mathbb{G}$ and $\mathbb{G}'$ are homeomorphic: the graph $\mathbb{G}'$ is also simple. By Proposition \ref{triang} the set of orientation-preserving $\mathbb{G}-\mathbb{G}'$ piecewise diffeomorphisms is not empty.
\end{proof}

\chapter{Planar Markovian Holonomy Fields}

\label{planarmarkovianholonomyfields}
In this chapter, we define the continuous and discrete planar Markovian holonomy fields: these are families of random holonomy fields on subsets of $P(\mathbb{R}^{2})$ satisfying an area-preserving homeomorphism invariance and an independence property. 
\section{Definitions}
\label{secdefplanarHF}
First, we define the strong and weak notions of (continuous) planar Markovian holonomy fields. We will use the following notation: if $l$ is a simple loop in $\mathbb{R}^{2}$, ${\sf Int(}l)$ will stand for the bounded connected component of $\mathbb{R}^{2}\setminus l$.

\begin{definition}
\label{planarHF}
A $G$-valued strong (continuous) planar Markovian holonomy field is the data, for each measure of area $vol$ on $\mathbb{R}^{2}$ of a gauge-invariant random holonomy field $\mathbb{E}_{ vol}$ on $\mathbb{R}^{2}$ of weight $\mathbb{E}_{vol}(\mathbbm{1}) = 1$, such that the three following axioms hold: 
\begin{description}
\item[$\mathbf{P_{1}}$ ] Let $vol$ and $vol'$ be two measures of area on $\mathbb{R}^{2}$. Let $\psi: \mathbb{R}^{2} \to \mathbb{R}^{2}$ be a locally bi-Lipschitz homeomorphism which preserves the orientation and which sends $vol$ on $vol'$ (i.e. $vol' = vol \circ \psi^{-1}$). The mapping from $\mathcal{M}ult(P(\mathbb{R}^{2}),G)$ to itself induced by $\psi$, denoted also by $\psi$, satisfies: 
$$\mathbb{E}_{vol'} \circ \psi^{-1} = \mathbb{E}_{vol}.$$
Moreover, let $\mathbb{G}$ and $\mathbb{G}'$  be two finite planar graphs, let $\phi: \mathbb{R}^{2} \to \mathbb{R}^{2}$ be a homeomorphism which preserves the orientation, which sends $vol$ on $vol'$ and which sends $\mathbb{G}$ on $\mathbb{G'}$. The mapping from $\mathcal{M}ult(P(\mathbb{G}'),G)$ to $\mathcal{M}ult\big(P(\mathbb{G}),G\big) $ induced by $\phi$, denoted also by $\phi$, satisfies: 
$$\left(\mathbb{E}_{vol'}\right)_{| \mathcal{M}ult(P(\mathbb{G}'), G)} \circ \phi^{-1} = \left(\mathbb{E}_{vol}\right)_{| \mathcal{M}ult(P(\mathbb{G}), G)}.$$

\item[$\mathbf{P_{2}}$ ] For any measure of area $vol$ on $\mathbb{R}^{2}$, for any simple loops $l_1$ and $l_2$ such that ${{\sf Int(}l_1)}$ and ${{\sf Int(}l_2)}$ are disjoint, under $\mathbb{E}_{vol}$, the two families: 
\begin{align*}
&\Big\{h( p), p \in P\big(\overline{{\sf Int(}l_1)}\big)\Big\} \text{ and }\Big\{h( p), p \in P\big(\overline{{\sf Int(}l_2)}\big)\Big\}
\end{align*}
are $\mathcal{I}$-independent. 

\item[$\mathbf{P_{3}}$ ] For any measures of area on $\mathbb{R}^{2}$, $vol$ and $vol'$, if $l$ is a simple loop such that $vol$ and $vol'$ are equal when restricted to the interior of $l$, the following equality holds: 
\begin{align*}
\big(\mathbb{E}_{vol}\big)_{\mid \mathcal{M}ult(P(\overline{{\sf Int(}l)}),G)} = \big(\mathbb{E}_{vol'}\big)_{\mid \mathcal{M}ult(P(\overline{{\sf Int(}l)}),G)}. 
\end{align*}
 \end{description}
\end{definition}

In the study of Markovian holonomy fields, it will be convenient to have the notion of weak (continuous) planar Markovian holonomy fields.

\begin{definition}
\label{weakplanarHF}
A $G$-valued weak (continuous) planar Markovian holonomy field is the data, for each measure of area $vol$ on $\mathbb{R}^{2}$ of a gauge-invariant random holonomy field $\mathbb{E}_{ vol}$ on ${\sf Aff}\left(\mathbb{R}^{2}\right)$ of weight $\mathbb{E}_{vol}(\mathbbm{1}) = 1$, such that the three following axioms hold: 
\begin{description}
\item[$\mathbf{wP_{1}}$] Let $vol$ and $vol'$ be two measures of area on $\mathbb{R}^{2}$. Let $\psi: \mathbb{R}^{2} \to \mathbb{R}^{2}$ be a diffeomorphism at infinity which preserves the orientation and which sends $vol$ on $vol'$ (i.e. $vol' = vol \circ \psi^{-1}$). Let $p_1, ..., p_n$ be paths in ${\sf Aff}\left(\mathbb{R}^{2}\right)$ such that for any $i \in \{1,...,n\}$, $p'_i=\psi(p_i)$ is in ${\sf Aff}\left(\mathbb{R}^{2}\right)$. Then for any continuous function $f: G^{n} \to \mathbb{R}$, 
\begin{align*}
\mathbb{E}_{ vol} \Big[f\big(h(p_1), ..., h(p_n)\big) \Big] = \mathbb{E}_{vol'} \Big[f \big(h(p'_1), ..., h(p'_n) \big)\Big]. 
\end{align*}
\item[$\mathbf{wP_{2}}$ ] For any measure of area $vol$ on $\mathbb{R}^{2}$, for any simple loops $l_1$ and $l_2$ in ${\sf Aff}\left(\mathbb{R}^{2}\right)$ such that $\overline{{\sf Int(}l_1)}$ and $\overline{{\sf Int(}l_2)}$ are disjoint, under $\mathbb{E}_{vol}$, the two families: 
\begin{align*}
&\Big\{h( p), p \in {\sf Aff}\left(\mathbb{R}^{2}\right)\cap P\big(\overline{{\sf Int(}l_1)}\big)\Big\} \text{ and }\Big\{h( p), p \in {\sf Aff}\left(\mathbb{R}^{2}\right) \cap P\big(\overline{{\sf Int(}l_2)}\big)\Big\}
\end{align*}
are independent. 
\item[$\mathbf{wP_{3}}$ ] For any measures of area on $\mathbb{R}^{2}$, $vol$ and $vol'$, if $l$ is a simple loop such that $vol$ and $vol'$ are equal when restricted to the interior of $l$, the following equality holds: 
\begin{align*}
\big(\mathbb{E}_{vol}\big)_{\mid \mathcal{M}ult({\sf Aff}(\overline{{\sf Int(}l)}),G)} = \big(\mathbb{E}_{vol'}\big)_{\mid \mathcal{M}ult({\sf Aff}(\overline{{\sf Int(}l)}),G)}. 
\end{align*}
 \end{description}

\end{definition}

It can seem strange that we replaced the $\mathcal{I}$-independence by the usual independence in $\mathbf{wP_2}$, but this was precisely the point of Remark \ref{I-indep-et-indep-normale}. As a consequence, any strong planar Markovian holonomy field defines, by restriction, a weak planar Markovian holonomy field. We will see later that the two notions are equivalent when we restrict them to stochastically continuous objects. By $G$-valued (continuous) planar Markovian holonomy fields, we will denote the family of $G$-valued strong or weak (continuous) planar Markovian holonomy fields. 

\begin{definition}
A $G$-valued planar Markovian holonomy field $\big(\mathbb{E}_{vol}\big)_{vol}$ is stochastically continuous if, for any measure of area $vol$ on $\mathbb{R}^{2}$, $\mathbb{E}_{vol}$ is stochastically continuous. 
\end{definition}

A discrete counterpart exists for strong planar Markovian holonomy fields.

\begin{definition}
\label{discreteplanarHF}
A $G$-valued strong discrete planar Markovian holonomy field is the data, for each measure of area $vol$, for each finite planar graph $\mathbb{G}$, of a gauge-invariant random holonomy field $\mathbb{E}_{vol}^{\mathbb{G}}$ on $P(\mathbb{G})$ of weight $\mathbb{E}_{vol}^{\mathbb{G}}(\mathbbm{1})=1$, such that the four following axioms hold: 
\begin{description}
\item[$\mathbf{DP_{1}}$ ] Let $vol$ and $vol'$ be two measures of area on $\mathbb{R}^{2}$, let $\mathbb{G}$ and $\mathbb{G}'$ be two finite planar graphs. Let $\psi$ be a homeomorphism  which preserves the orientation, satisfies $\psi(\mathbb{G}) = \mathbb{G}'$ and such that for any $F \in \mathbb{F}^{b}$, $vol(F)= vol'(\psi(F))$. The mapping from $\mathcal{M}ult(P(\mathbb{G}'),G)$ to $\mathcal{M}ult\big(P(\mathbb{G}),G\big) $ induced by $\psi$, denoted also by $\psi$, satisfies: 
$$\mathbb{E}_{vol'}^{\mathbb{G}'} \circ \psi^{-1} = \mathbb{E}_{vol}^{\mathbb{G}}.$$
\item[$\mathbf{DP_{2}}$ ] For any measure of area $vol$ on $\mathbb{R}^{2}$, for any finite planar graph $\mathbb{G}$, for any simple loops $l_1$ and $l_2$ in $P(\mathbb{G})$, such that ${\sf Int(}l_1) \cap {\sf Int(}l_2) = \emptyset$, under $\mathbb{E}_{vol}^{\mathbb{G}}$, the two families: 
\begin{align*}
&\Big\{h(p ), p \in P(\mathbb{G})\cap P\big(\overline{{\sf Int(}l_1)}\big)\Big\} \text{ and }\Big\{h(p ), p \in P(\mathbb{G})\cap P\big(\overline{{\sf Int(}l_2)}\big)\Big\}
\end{align*}
are $\mathcal{I}$-independent. 
\item[$\mathbf{DP_{3}}$ ] For any measures of area on $\mathbb{R}^{2}$, $vol$ and $vol'$, if $l$ is a simple loop such that $vol$ and $vol'$ are equal when restricted to the interior of $l$, if $\mathbb{G}$ is included in $\overline{{\sf Int(}l)}$, then the following equality holds: 
\begin{align*}
\mathbb{E}_{vol}^{\mathbb{G}} = \mathbb{E}_{vol'}^{\mathbb{G}}. 
\end{align*}
\item[$\mathbf{DP_{4}}$] For any measure of area  $vol$ on $\mathbb{R}^{2}$, for any finite planar graphs $\mathbb{G}_{1}$ and $\mathbb{G}_{2}$, such that $\mathbb{G}_{1} \preccurlyeq \mathbb{G}_{2}$: 
$$\mathbb{E}^{\mathbb{G}_{2}}_{vol} \circ \rho_{P(\mathbb{G}_1), P(\mathbb{G}_2)}^{-1} = \mathbb{E}^{\mathbb{G}_{1}}_{vol}, $$
where we remind the reader that $$\rho_{P(\mathbb{G}_1), P(\mathbb{G}_2)}: \mathcal{M}ult\big(P(\mathbb{G}_2),G\big) \to \mathcal{M}ult\big(P(\mathbb{G}_1),G\big)$$ is the restriction map. 
\end{description}
\end{definition}

We will use also the following weak version of discrete planar Markovian holonomy fields. 
\begin{definition}
A $G$-valued weak discrete planar Markovian holonomy field is the data, for each measure of area $vol$, for each finite graph $\mathbb{G}$ in $\mathcal{G}\big({\sf Aff}\left(\mathbb{R}^{2}\right)\!\big)$, of a gauge-invariant random holonomy field $\mathbb{E}_{vol}^{\mathbb{G}}$ on $P(\mathbb{G})$ of weight $\mathbb{E}_{vol}^{\mathbb{G}}(\mathbbm{1})=1$, such that the four following axioms hold: 

\begin{description}
\item[$\mathbf{wDP_1}$] Let $vol$ and $vol'$ be two measures of area on $\mathbb{R}^{2}$, let $\mathbb{G}$ and $\mathbb{G}'$ be two {\em simple} finite planar graphs in $\mathcal{G}\big({\sf Aff}\left(\mathbb{R}^{2}\right)\!\big)$. Let $\psi$ be a $\mathbb{G}-\mathbb{G}'$ piecewise diffeomorphism  which preserves the orientation. Suppose that for any bounded face $F$ of $\mathbb{G}$, $vol(F)= vol'(\psi(F))$. Then the mapping from $\mathcal{M}ult(P(\mathbb{G}'),G)$ to $\mathcal{M}ult\big(P(\mathbb{G}),G\big) $ induced by $\psi$ satisfies: 
$$\mathbb{E}_{vol'}^{\mathbb{G}'} \circ \psi^{-1} = \mathbb{E}_{vol}^{\mathbb{G}}.$$
\item[ $\mathbf{wDP_{2}}$ ] For any measure of area $vol$ on $\mathbb{R}^{2}$, for any finite graph $\mathbb{G}$ in $\mathcal{G}\big({\sf Aff}\left(\mathbb{R}^{2}\right)\!\big)$, for any simple loops $l_1$ and $l_2$ in $P(\mathbb{G})$, such that $\overline{{\sf Int(}l_1)}$ and $\overline{{\sf Int(}l_2)}$ are disjoint, under $\mathbb{E}_{vol}^{\mathbb{G}}$, the two families: 
\begin{align*}
&\Big\{h(p ), p \in P(\mathbb{G})\cap P\big(\overline{{\sf Int(}l_1)}\big)\Big\} \text{ and }\Big\{h(p ), p \in P(\mathbb{G})\cap P\big(\overline{{\sf Int(}l_2)}\big)\Big\}
\end{align*}
are independent. 
\item[$\mathbf{wDP_{3}}$ ] For any measures of area on $\mathbb{R}^{2}$, $vol$ and $vol'$, if $l$ is a simple loop such that $vol$ and $vol'$ are equal when restricted to the interior of $l$, if $\mathbb{G}$ is included in $\overline{\sf Int}(l)$, then the following equality holds: 
\begin{align*}
\mathbb{E}_{vol}^{\mathbb{G}} = \mathbb{E}_{vol'}^{\mathbb{G}}. 
\end{align*}
\item[ $\mathbf{wDP_{4}}$] For any measure of area  $vol$ on $\mathbb{R}^{2}$, for any finite planar graphs $\mathbb{G}_{1}$ and $\mathbb{G}_{2}$ in $\mathcal{G}\big({\sf Aff}\left(\mathbb{R}^{2}\right)\!\big)$, such that $\mathbb{G}_{1} \preccurlyeq \mathbb{G}_{2}$: 
$$\mathbb{E}^{\mathbb{G}_{2}}_{vol} \circ \rho_{P(\mathbb{G}_1), P(\mathbb{G}_2)}^{-1} = \mathbb{E}^{\mathbb{G}_{1}}_{vol}, $$
where again $\rho_{P(\mathbb{G}_1), P(\mathbb{G}_2)}: \mathcal{M}ult\big(P(\mathbb{G}_2),G\big) \to \mathcal{M}ult\big(P(\mathbb{G}_1),G\big)$ is the restriction map. 
\end{description}
\end{definition}

Let us remark that the Axioms $\mathbf{DP_3}$ and $\mathbf{wDP_3}$ can be directly deduced respectively from $\mathbf{DP_1}$ and $\mathbf{wDP_1}$ by considering the identity function of the plane. Yet, in order to have a similar formulation for continuous and discrete objects we preferred to keep them in the definitions.

As for the continuous objects, any $G$-valued strong discrete planar Markovian holonomy field defines, by restriction, a weak discrete planar Markovian holonomy field. 
By $G$-valued discrete planar Markovian holonomy fields, we will denote the family of $G$-valued strong or weak discrete planar Markovian holonomy fields. In any assertion about $G$-valued discrete planar Markovian holonomy fields, the reader will have to understand that, in the case we are working with a weak discrete planar Markovian holonomy field, all the graphs must be in $\mathcal{G}\left({\sf Aff}\left(\mathbb{R}^{2}\right)\!\right)$. From now on, if not specified, all the planar Markovian holonomy fields will be $G$-valued, thus we will omit to specify it.

\begin{remarque}
 \label{graphaff}
 Let $\big(\mathbb{E}^{\mathbb{G}}_{vol}\big)_{\mathbb{G}, vol}$ be a discrete planar Markovian holonomy field. Using Proposition \ref{limitproj}, the Axiom $\mathbf{DP_4}$ or $\mathbf{wDP_4}$ allows us to define for any measure of area $vol$ and any possibly infinite planar graph $\mathbb{G}$, a unique gauge-invariant random holonomy field $\mathbb{E}^{\mathbb{G}}_{vol}$ on $P(\mathbb{G})$ whose weight $\mathbb{E}^{\mathbb{G}}_{vol}(\mathbbm{1})$ is equal to $1$, such that, for any finite planar graph $\mathbb{G}_{f}\preccurlyeq \mathbb{G}$, $\mathbb{E}^{\mathbb{G}}_{vol} \circ \rho_{P(\mathbb{G}_f),P(\mathbb{G})}^{-1} = \mathbb{E}^{\mathbb{G}_f}_{vol}.$

Besides, the family $$\bigg\{\Big(\mathcal{M}ult\big(P\left(\mathbb{G}\right),G\big) , \mathcal{B}, \mathbb{E}^{\mathbb{G}}_{vol}\Big)_{\mathbb{G}\in \mathcal{G}({\sf Aff}(\mathbb{R}^2))}, \big(\rho_{P(\mathbb{G}),P(\mathbb{G}')}\big)_{\mathbb{G},\mathbb{G'}\in \mathcal{G}({\sf Aff}(\mathbb{R}^{2})), \mathbb{G} \preccurlyeq \mathbb{G}'}\bigg\}$$
is a projective family. There exists a unique gauge-invariant random holonomy field on ${\sf Aff}\left(\mathbb{R}^{2}\right)$, whose weight is equal to $1$, which we denote by $\mathbb{E}_{vol}^{{\sf Aff}}$, such that for any finite planar graph $\mathbb{G}_{f} \in \mathcal{G}\big({\sf Aff}(\mathbb{R}^{2})\big)$, $\mathbb{E}_{vol}^{{\sf Aff}} \circ \rho_{P(\mathbb{G}_f),{\sf Aff}\left(\mathbb{R}^{2}\right)}^{-1} = \mathbb{E}^{\mathbb{G}_f}_{vol}.$
\end{remarque}

The notions of area-dependent continuity and locally stochastically $\frac{1}{2}$-H\"older continuity that we are going to define are similar to the notions explained in Definition $3.2.8$ of \cite{Levy} that L\'{e}vy used for discrete Markovian holonomy fields. Let $\big(\mathbb{E}^{\mathbb{G}}_{vol}\big)_{\mathbb{G},vol}$ be a family of random holonomy fields such that for any measure of area $vol$ and any finite planar graph $\mathbb{G}$, $\mathbb{E}^{\mathbb{G}}_{vol}$ is a random holonomy field on $P(\mathbb{G})$.

\begin{definition}
The family $\big(\mathbb{E}^{\mathbb{G}}_{vol}\big)_{\mathbb{G},vol}$ is {\em locally stochastically $\frac{1}{2}$-H\"{o}lder continuous} if for any measure of area $vol$ on $\mathbb{R}^{2}$, $\big(\mathbb{E}^{\mathbb{G}}_{vol}\big)_{\mathbb{G}}$ is a uniformly locally $\frac{1}{2}$-H\"{o}lder continuous family of random holonomy fields. 

It is {\em continuously area-dependent} if, for any sequence of finite planar graphs $\mathbb{G}_n$ which are the images of a common graph $\mathbb{G}$ by a sequence of area-preserving homeomorphisms $\psi_n$ ($\psi_n(\mathbb{G}) =\mathbb{G}_n$) and such that $vol(\psi_n(F))$ tends to $vol (F)$ as $n$ tends to infinity for any bounded face $F$ of $\mathbb{G}$, the following convergence holds: 
 $$\mathbb{E}^{\mathbb{G}_n}_{vol} \circ \psi_n^{-1} \underset{n \to \infty}{\longrightarrow} \mathbb{E}^{\mathbb{G}}_{vol},$$
 where we denote by $\psi_n$ the induced map from $\mathcal{M}ult\big(P(\mathbb{G}_n),G\big)$ to $\mathcal{M}ult\big(P(\mathbb{G}), G\big)$. 
 
 It is {\em regular} if it is locally stochastically $\frac{1}{2}$-H\"{o}lder continuous and continuously area-dependent. 
 \end{definition}

The new notion of stochastic continuity in law is defined as follow. 

\begin{definition}
Let $\left(\mathbb{E}^{\mathbb{G}}_{vol}\right)_{\mathbb{G},vol}$ be a family of random holonomy fields such that for any measure of area $vol$ and any finite planar graph $\mathbb{G}$, $\mathbb{E}^{\mathbb{G}}_{vol}$ is a random holonomy field on $P(\mathbb{G})$. The family $\left(\mathbb{E}_{ vol}^{\mathbb{G}}\right)_{\mathbb{G},vol}$ is stochastically continuous in law if for any measure of area $vol$, for any integer $m$, any finite planar graph $\mathbb{G}$, any sequence of finite planar graphs $(\mathbb{G}_{n})_{n \geq 0}$ and any sequence of $m$-tuples of loops in $\mathbb{G}_n$, $\left((l^{n}_k)_{k=1}^{m}\right)_{n \in \mathbb{N}}$, if there exists a $m$-tuple of loops in $\mathbb{G}$, $(l_k)_{k =1}^{m}$ such that for any $i \in \{1,..,k\}$, $l^{n}_i$ converges with fixed endpoints to $l_i$ when $n$ goes to infinity, then the law of $\big(h(l_k^{n})\big)_{k=1}^{m}$ under $\mathbb{E}^{\mathbb{G}_n}_{vol}$ converges to the law of $\big(h(l_k)\big)_{k=1}^{m}$ under $\mathbb{E}^{\mathbb{G}}_{vol}$ when $n$ goes to infinity. 
\end{definition}

Let us also remark that the Axioms $\mathbf{DP_1}$ and $\mathbf{wDP_1}$ are not discrete versions of $\mathbf{P_1}$ and $\mathbf{wP_1}$ since in $\mathbf{DP_1}$ and $\mathbf{wDP_1}$ we do not require that $vol'$ is the image of $vol$ by $\psi$. Thus, it is not obvious that any planar Markovian holonomy field, when restricted to graphs, defines a discrete planar Markovian holonomy field. For now, we define the notion of constructibility but later we will show that, under some regularity conditions, any planar Markovian holonomy field is constructible.

\begin{definition}
\label{constructible1}
Let $\big(\mathbb{E}_{vol}\big)_{vol}$ be a weak (resp. strong) planar Markovian holonomy field. It is constructible if the family of measures $\left(\left(\mathbb{E}_{vol}\right)_{\mid \mathcal{M}ult(P(\mathbb{G}),G) }\right)_{ \mathbb{G},vol}$ is a weak (resp. strong) discrete planar Markovian holonomy field. 
\end{definition} 

\begin{remarque}
\label{restric}
If $\left(\mathbb{E}_{vol}\right)_{vol}$ is a constructible stochastically continuous planar Markovian holonomy field, its restriction to graphs defines a stochastically continuous in law discrete planar Markovian holonomy field. 
\end{remarque}

We have seen, in Remark \ref{graphaff}, that given a strong discrete planar Markovian holonomy field $\left(\mathbb{E}^{\mathbb{G}}_{vol}\right)_{\mathbb{G}, vol}$, we could define a family of probability measures $\left(\mathbb{E}_{vol}^{\sf Aff}\right)_{vol}$. If $\left(\mathbb{E}^{\mathbb{G}}_{vol}\right)_{\mathbb{G}, vol}$ is locally stochastically $\frac{1}{2}$-H\"{o}lder continuous, so is $\mathbb{E}_{vol}^{\sf Aff}$ for any measure of area $vol$. By Theorem $\ref{exten2}$, one can extend $\mathbb{E}_{vol}^{\sf Aff}$ as a stochastically continuous random holonomy field on $\mathbb{R}^{2}$, denoted by $\mathbb{E}_{vol}$. We have thus defined $\big(\mathbb{E}_{vol}\big)_{vol}$ a family of stochastically continuous random holonomy fields on $\mathbb{R}^{2}$. 

Using a slight modification of Theorem $3.2.9$ in \cite{Levy}, if $\big(\mathbb{E}_{vol}^{\mathbb{G}}\big)_{\mathbb{G}, vol}$ is continuously area-dependent then the family $\big(\mathbb{E}_{vol}\big)_{vol}$ is a stochastically continuous strong planar Markovian holonomy field. Let us explain the only difficult part of this assertion which is to prove that the axiom $\mathbf{P_{1}}$ is valid for $\big(\mathbb{E}_{vol}\big)_{vol}$. 

Using the same arguments as L\'{e}vy used in Proposition $3.4.1$ of \cite{Levy}, if $\big(\mathbb{E}_{vol}^{\mathbb{G}}\big)_{\mathbb{G}, vol}$ is continuously area-dependent then for any finite planar graph $\mathbb{G}$, for any measure of area $vol$, $\mathbb{E}_{vol}^{\mathbb{G}} = \left(\mathbb{E}_{vol}\right)_{| \mathcal{M}ult(P(\mathbb{G}), G)}.$ Let us remark that it is important, in order to prove this assertion, that we consider all the homeomorphisms in the Axiom $\mathbf{DP_1}$. As a consequence, $\big(\mathbb{E}_{vol}\big)_{vol}$ satisfies the second assertion in Axiom $\mathbf{P_1}$. 

It remains to prove the first assertion in Axiom $\mathbf{P_1}$. Let $vol$, $vol'$ and $\psi$ which satisfy the conditions of this first assertion. Let $p_1$,  ..., $p_n$ be paths on the plane and let $f$ be a continuous function on $G^{n}$. We need to prove that: 
\begin{align}
\label{eq:amontrer}
\mathbb{E}_{vol}\left[f(h(p_1),  ..., h(p_n))\right] = \mathbb{E}_{vol'}\left[f(h(\psi(p_1)),  ..., h(\psi(p_n)))\right].
\end{align} 
Let us consider, for any $i \in \{1, ...,n\}$, a sequence of piecewise affine paths $(p_{i}^{j})_{j \in \mathbb{N}}$ which converges with fixed endpoints to $p_i$ when $j$ goes to infinity. Using Lemmas \ref{lem:bilip} and \ref{lem:bilip2}, for any  $i \in \{1, ...,n\}$, $\psi(p_i^{j})$ converges with fixed endpoints to $\psi(p_i)$ when $j$ goes to infinity. Since $\mathbb{E}_{vol}$ is stochastically continuous, it is enough to prove Equation (\ref{eq:amontrer}) when $p_1$,  ..., $p_n$ are piecewise affine paths. But in this case, there exists a graph $\mathbb{G}$ such that  $\{ p_1,  ..., p_n\} \subset P(\mathbb{G})$ and $\psi(\mathbb{G})$ is also a planar graph: Equality (\ref{eq:amontrer}) is a consequence of the already proven second assertion in Axiom $\mathbf{P_1}$. 

In a nutshell, we just proved the following theorem.

\begin{theorem}
\label{exten3}
Let $\big(\mathbb{E}_{vol}^{\mathbb{G}}\big)_{\mathbb{G}, vol}$ be a strong discrete planar Markovian holonomy field. If it is regular then there exists a unique stochastically continuous strong planar Markovian holonomy field $\big(\mathbb{E}_{vol}\big)_{vol}$ such that, for any finite planar graph $\mathbb{G}$ and any measure of area $vol$, $\big(\mathbb{E}_{ vol}^{\mathbb{G}}\big)_{\mathbb{G}, vol}$ is the restriction to $\mathcal{M}ult(P(\mathbb{G}), G)$ of $\mathbb{E}_{vol}$: $\mathbb{E}_{vol} \circ \rho^{-1}_{P(\mathbb{G}), P(M)} = \mathbb{E}^{\mathbb{G}}_{vol}. $
\end{theorem}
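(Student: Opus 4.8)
The plan is to construct each $\mathbb{E}_{vol}$ by the two-step procedure set up just before the statement and then to verify that the three axioms $\mathbf{P_1}$--$\mathbf{P_3}$ survive the passage to the continuous limit. First I would invoke Remark \ref{graphaff} to assemble, for each measure of area $vol$, the projective limit $\mathbb{E}_{vol}^{\sf Aff}$ of the finite-graph measures $(\mathbb{E}_{vol}^{\mathbb{G}})_{\mathbb{G}\in\mathcal{G}({\sf Aff}(\mathbb{R}^2))}$; this is a gauge-invariant random holonomy field on ${\sf Aff}(\mathbb{R}^2)$ of weight $1$. Regularity forces local stochastic $\frac{1}{2}$-H\"older continuity of $\mathbb{E}_{vol}^{\sf Aff}$, so Theorem \ref{exten2} extends it to a unique stochastically continuous gauge-invariant field $\mathbb{E}_{vol}$ on $\mathbb{R}^2$ with $\mathbb{E}_{vol}^{\sf Aff} = \mathbb{E}_{vol}\circ \rho_{{\sf Aff}(\mathbb{R}^2),P(\mathbb{R}^2)}^{-1}$. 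Gauge-invariance and weight $1$ are inherited along the construction.

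The crucial consistency statement is that $\mathbb{E}_{vol}^{\mathbb{G}} = (\mathbb{E}_{vol})_{\mid\mathcal{M}ult(P(\mathbb{G}),G)}$ for \emph{every} finite planar graph $\mathbb{G}$, not only those with piecewise affine edges. I would prove this by the analogue of L\'{e}vy's Proposition $3.4.1$: using Theorem \ref{approxgraph} choose graphs $\mathbb{G}_n \in \mathcal{G}({\sf Aff}(\mathbb{R}^2))$ and orientation-preserving homeomorphisms $\psi_n$ with $\psi_n(\mathbb{G})=\mathbb{G}_n$, $\mathbb{V}_n = \mathbb{V}$, edges converging with fixed endpoints, and face areas $vol(\psi_n(F)) \to vol(F)$. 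For the affine graphs $\mathbb{G}_n$ we already have $\mathbb{E}_{vol}^{\mathbb{G}_n} = (\mathbb{E}_{vol})_{\mid P(\mathbb{G}_n)}$ by construction; continuous area-dependence then forces $\mathbb{E}_{vol}^{\mathbb{G}_n}\circ\psi_n^{-1} \to \mathbb{E}_{vol}^{\mathbb{G}}$, while stochastic continuity of $\mathbb{E}_{vol}$ identifies the limit with $(\mathbb{E}_{vol})_{\mid P(\mathbb{G})}$. This consistency is exactly the assertion that $\big(\mathbb{E}_{vol}^{\mathbb{G}}\big)$ is the restriction of $\mathbb{E}_{vol}$, and, since $\mathbf{DP_1}$ is available for every orientation-preserving homeomorphism sending $\mathbb{G}$ to $\mathbb{G}'$ with matching face areas, it immediately yields the second (graph-level) assertion of $\mathbf{P_1}$. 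Axioms $\mathbf{P_2}$ and $\mathbf{P_3}$ I would then deduce from $\mathbf{DP_2}$ and $\mathbf{DP_3}$ together with stochastic continuity: any path in $P(\overline{{\sf Int(}l_i)})$ is a limit with fixed endpoints of piecewise affine paths lying in that region, so the $\mathcal{I}$-independence of $\mathbf{P_2}$ and the locality equality $\mathbf{P_3}$ pass to the closure (the interplay between $\mathcal{I}$-independence and ordinary independence appearing in the weak versions is handled by Remark \ref{I-indep-et-indep-normale}).

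I expect the \emph{main obstacle} to be the first, global assertion of $\mathbf{P_1}$: invariance of $\mathbb{E}_{vol}$ under an arbitrary orientation-preserving locally bi-Lipschitz homeomorphism $\psi$ with $vol'=vol\circ\psi^{-1}$, applied to arbitrary paths rather than to paths in a fixed graph. The difficulty is that $\psi$ need not map a given graph to a graph, whereas $\mathbf{DP_1}$ only controls graph-to-graph homeomorphisms. I would circumvent this by density: fix paths $p_1,\dots,p_n$ and a continuous $f:G^n\to\mathbb{R}$, approximate each $p_i$ with fixed endpoints by piecewise affine $p_i^j$, and use Lemmas \ref{lem:bilip} and \ref{lem:bilip2} to see that $\psi(p_i^j)$ also converges with fixed endpoints to $\psi(p_i)$. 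Stochastic continuity of both $\mathbb{E}_{vol}$ and $\mathbb{E}_{vol'}$ reduces the target equality $\mathbb{E}_{vol}[f(h(p_1),\dots,h(p_n))] = \mathbb{E}_{vol'}[f(h(\psi(p_1)),\dots,h(\psi(p_n)))]$ to the case of piecewise affine paths; for those, a single finite planar graph $\mathbb{G}$ contains $p_1,\dots,p_n$ and $\psi(\mathbb{G})$ is again a planar graph, so the already-established second assertion of $\mathbf{P_1}$ applies.

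Finally, uniqueness of $(\mathbb{E}_{vol})_{vol}$ would follow from Proposition \ref{unicite1}: a stochastically continuous gauge-invariant field is determined by the joint laws of holonomies along a dense family of loops based at a common point, and those laws are pinned down by the prescribed restrictions $\mathbb{E}_{vol}^{\mathbb{G}}$ via the consistency established above. This chain of reductions is what the phrase ``in a nutshell'' is meant to summarize, and the only genuinely non-formal input is L\'{e}vy's slight modification of Theorem $3.2.9$ used to launch the construction.
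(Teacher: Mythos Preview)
Your proposal is correct and follows essentially the same approach as the paper: the construction via Remark~\ref{graphaff} and Theorem~\ref{exten2}, the consistency $\mathbb{E}_{vol}^{\mathbb{G}} = (\mathbb{E}_{vol})_{\mid P(\mathbb{G})}$ for arbitrary graphs via Theorem~\ref{approxgraph} and continuous area-dependence, the reduction of the first assertion of $\mathbf{P_1}$ to the graph case via Lemmas~\ref{lem:bilip}--\ref{lem:bilip2} and piecewise affine approximation, and uniqueness via Proposition~\ref{unicite1} together with the density Lemma~\ref{dense}. The paper's discussion preceding the theorem statement is exactly this outline, with the verification of $\mathbf{P_2}$ and $\mathbf{P_3}$ subsumed under the phrase ``slight modification of Theorem~3.2.9 in~\cite{Levy}''.
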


The unicity is a consequence of Proposition \ref{unicite1} and Lemma \ref{dense}. 

\begin{remarque}
\label{rem:erreur}
The proof of L\'{e}vy of the axiom $\mathbf{A_{4}}$ page $123$ of \cite{Levy} in the proof of Theorem $3.2.9$ is based on a wrong statement, namely that $\gamma'$ is well defined and is a Riemannian metric. In this article, we corrected this proof by considering the modified Axiom $\mathbf{P_{1}}$ in Definition \ref{planarHF} and Axiom $\mathbf{A_{4}}$ in Definition \ref{Markovcont}. 
\end{remarque}

The next assertion is a consequence of Theorem \ref{exten3} and Remark \ref{restric} when one considers strong discrete planar Markovian holonomy fields. It is a direct consequence of Theorem $\ref{exten2}$ and Remark \ref{restric} for weak discrete planar Markovian holonomy fields. 

\begin{corollary}
\label{coroll:regularity}
Any strong discrete planar Markovian holonomy field which is regular is stochastically continuous in law. 

Any weak discrete planar Markovian holonomy field which is locally stochastically $\frac{1}{2}$-H\"{o}lder continuous is stochastically continuous in law. 
\end{corollary}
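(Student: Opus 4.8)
The plan is to realize both discrete fields as the restriction to graphs of a \emph{single} stochastically continuous random holonomy field on the whole path space, and then to read off convergence of laws from stochastic continuity. The key observation is that the definition of stochastic continuity in law only tests finitely many loops $l^n_k$ which, although they live in different graphs $\mathbb{G}_n$, all have the same law under one fixed continuous extension; and that convergence of loops with fixed endpoints forces convergence of the corresponding holonomies in $L^1(d_G)$, hence in probability and a fortiori in law. Thus the only step requiring genuine care is the \emph{compatibility identity} identifying the discrete field with a restriction; the convergence argument itself is soft.

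For the strong case, let $\big(\mathbb{E}^{\mathbb{G}}_{vol}\big)_{\mathbb{G},vol}$ be a regular strong discrete planar Markovian holonomy field. I would invoke Theorem \ref{exten3} to obtain a unique stochastically continuous strong planar Markovian holonomy field $\big(\mathbb{E}_{vol}\big)_{vol}$ whose restriction to every finite planar graph $\mathbb{G}$ is exactly $\mathbb{E}^{\mathbb{G}}_{vol}$, that is $\mathbb{E}_{vol}\circ\rho^{-1}_{P(\mathbb{G}),P(\mathbb{R}^{2})}=\mathbb{E}^{\mathbb{G}}_{vol}$. In particular $\big(\mathbb{E}_{vol}\big)_{vol}$ is constructible and its restriction to graphs is the field we started from. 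Remark \ref{restric} then states that the restriction to graphs of a constructible stochastically continuous planar Markovian holonomy field is stochastically continuous in law; applied here this is precisely the claim for $\big(\mathbb{E}^{\mathbb{G}}_{vol}\big)_{\mathbb{G},vol}$.

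For the weak case, let $\big(\mathbb{E}^{\mathbb{G}}_{vol}\big)_{\mathbb{G},vol}$ be a weak discrete planar Markovian holonomy field that is locally stochastically $\frac{1}{2}$-H\"{o}lder continuous. By Remark \ref{graphaff} the axiom $\mathbf{wDP_4}$ produces, for each $vol$, a gauge-invariant random holonomy field $\mathbb{E}^{\sf Aff}_{vol}$ on ${\sf Aff}(\mathbb{R}^{2})$ with $\mathbb{E}^{\sf Aff}_{vol}\circ\rho^{-1}_{P(\mathbb{G}_f),{\sf Aff}(\mathbb{R}^{2})}=\mathbb{E}^{\mathbb{G}_f}_{vol}$ for every finite $\mathbb{G}_f\in\mathcal{G}\big({\sf Aff}(\mathbb{R}^{2})\big)$, and the H\"{o}lder hypothesis transfers to $\mathbb{E}^{\sf Aff}_{vol}$. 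Theorem \ref{exten2} (with $M=\mathbb{R}^{2}$, $\gamma=\gamma_0$) then extends it to a unique stochastically continuous random holonomy field $\mathbb{E}_{vol}$ on $\mathbb{R}^{2}$, and composing the two restriction identities yields $\mathbb{E}_{vol}\circ\rho^{-1}_{P(\mathbb{G}),P(\mathbb{R}^{2})}=\mathbb{E}^{\mathbb{G}}_{vol}$ for every finite $\mathbb{G}\in\mathcal{G}\big({\sf Aff}(\mathbb{R}^{2})\big)$. Now I would take any graphs $\mathbb{G}_n$ and $m$-tuples of loops $(l^n_k)_{k=1}^{m}$ in $\mathbb{G}_n$ with $l^n_k$ converging with fixed endpoints to loops $l_k$ of some $\mathbb{G}$. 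Since the law of $\big(h(l^n_k)\big)_{k}$ under $\mathbb{E}^{\mathbb{G}_n}_{vol}$ coincides with its law under $\mathbb{E}_{vol}$, and likewise at the limit, the stochastic continuity of $\mathbb{E}_{vol}$ gives $\int d_G\big(h(l^n_k),h(l_k)\big)\,d\mathbb{E}_{vol}\to 0$ for each $k$, whence $\big(h(l^n_k)\big)_{k=1}^{m}$ converges to $\big(h(l_k)\big)_{k=1}^{m}$ in probability and therefore in law. This is exactly stochastic continuity in law.

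The load-bearing step — the one I would check most carefully, rather than the routine convergence of laws — is the compatibility $\mathbb{E}_{vol}\circ\rho^{-1}_{P(\mathbb{G}),P(\mathbb{R}^{2})}=\mathbb{E}^{\mathbb{G}}_{vol}$ on graphs: in the strong case it is packaged into Theorem \ref{exten3}, while in the weak case it must be obtained by chaining the projective identity of Remark \ref{graphaff} through the H\"{o}lder extension of Theorem \ref{exten2}. Once this identity holds, the fact that the graphs $\mathbb{G}_n$ are \emph{arbitrary} and not assumed homeomorphic to one another poses no obstacle, because every tested holonomy is transported to the single fixed stochastically continuous field $\mathbb{E}_{vol}$ on the full path space, where only the convergence of the loops with fixed endpoints matters.
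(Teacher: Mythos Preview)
Your proposal is correct and follows essentially the same approach as the paper, which states the corollary as a direct consequence of Theorem~\ref{exten3} and Remark~\ref{restric} in the strong case, and of Theorem~\ref{exten2} and Remark~\ref{restric} in the weak case. You have in fact spelled out more detail than the paper does, particularly in the weak case where you make explicit the passage through Remark~\ref{graphaff} and the soft convergence-in-law argument that underlies Remark~\ref{restric}.
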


In the rest of the paper, we will mostly work with stochastically continuous in law discrete planar Markovian holonomy fields. 

\subsection{Example: the index field}
For any parametrized loop $l$, for any $x$ in $\mathbb{R}^{2}\setminus l\big([0,1]\big)$, the index of $l$ with respect to $x$ is defined as the integer: 
\begin{align*}
\mathfrak{n}_l(x)=\frac{1}{2i\pi} \oint_{l} \frac{dz}{z-x}.
\end{align*}
Actually, one needs to approximate uniformly $l$ by piecewise smooth loops and take the limit. An other way to define the index field is by first constructing the $L^{2}$-functions valued non-random holonomy field on ${\sf Aff}(\mathbb{R}^{2})$ which sends $l$ on $\mathfrak{n}_l$: this can be defined as a combinatorial object. Using the $L^{2}$ norm on the set of $\mathbb{N}$-valued functions on the plane and considering the Lebesgue measure on the plane, it is easy to see that this holonomy field is locally stochastically $\frac{1}{2}$-H\"{o}lder continuous. Using Theorem \ref{exten2}, we can extend it in order to get a stochastically continuous non-random planar holonomy field on the plane. This construction shows that $\mathfrak{n}_l$ is square-integrable for any rectifiable loop. This can be obtained also by using the Banchoff-Pohl's inequality proved in \cite{Vogt}. Since $\mathfrak{n}_l$ takes values in $\mathbb{N}$ and since any loop is bounded, $\mathfrak{n}_{l}$ is integrable against any measure of area $vol$. Besides if $l_1$ and $l_2$ are based at the same point, using the additivity of the curve integral, we get: 
\begin{align}
\label{additivitedelindex}
\mathfrak{n}_{l_1l_2} = \mathfrak{n}_{l_1}+\mathfrak{n}_{l_2}. 
\end{align}
\begin{figure}%[here]
 \centering
  \includegraphics[width=100pt]{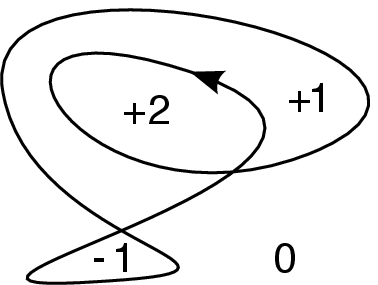}
 \caption{The index of a curve.}
 \label{Indexfig}
\end{figure}
Let $D$ be an element of the Lie algebra $\mathfrak{g}$ of $G$. We can now define the index field driven by $D$. 
\begin{definition}
The index field driven by $D$ is the only planar Markovian holonomy field $\big(\mathbb{E}_{vol}\big)_{vol}$ such that for any measure of area $vol$, any loops $l_1, ..., l_n$ based at the same point and any continuous function $f$ from $G^{n}$ to $\mathbb{R}$ invariant by diagonal conjugation, we have: 
\begin{align*}
\mathbb{E}_{vol}\Big[f\big(h(l_1),..., h(l_n)\big)\Big] = f\left(e^{D \int_{\mathbb{R}^{2}} \mathfrak{n}_{l_1}(x) vol(dx)}, ..., e^{D \int_{\mathbb{R}^{2}} \mathfrak{n}_{l_n}(x) vol(dx)}\right). 
\end{align*}
\end{definition}

The existence of such a planar Markovian holonomy field is due to the fact that one can consider for any finite family of loops $(l_1, ...,l_n)$ based at $0$, the random holonomy field on $(l_1, ...,l_n)$  such that $(h(l_1), ...,h(l_n))$ has the law of $\left(U e^{D \int_{\mathbb{R}^{2}} \mathfrak{n}_{l_1}(x) vol(dx)} U^{-1},  ..., Ue^{D \int_{\mathbb{R}^{2}} \mathfrak{n}_{l_n}(x) vol(dx)}U^{-1}\right)$, where $U$ is a Haar random variable on $G$. It is a gauge-invariant random holonomy field due to the Equation (\ref{additivitedelindex}) and it is actually a measure on $\mathcal{M}ult_{P(\mathbb{R}^{2})}(\{l_1, ...,l_n\},G)$. An application of Proposition \ref{crea1} allows us to conclude. An interesting fact with this planar Markovian holonomy field is that it is stochastically continuous and constructible. It can also be used in order to add a drift to any holonomy field on the plane. Indeed, if $\mu$ is a random holonomy field on the plane, $vol$ be a measure of area and $D$ an element of the center of $\mathfrak{g}$, there exists a planar holonomy field $\mu^{D,vol}$ such that for any loops $l_1, ..., l_n$ based at the same point and any continuous function $f$ from $G^{n}$ to $\mathbb{R}$ invariant by diagonal conjugation, we have: 
\begin{align*}
\mu^{D,vol}\left[f\left(h(l_1),..., h(l_n)\right)\right]\!=\!\mu\!\left[f\!\left(\!e^{D \int_{\mathbb{R}^{2}} \mathfrak{n}_{l_1}(x) vol(dx)} h(l_1), ..., e^{D \int_{\mathbb{R}^{2}} \mathfrak{n}_{l_n}(x) vol(dx) }h(l_n)\!\right)\!\right]\!. 
\end{align*}
Any regularity which holds for $\mu$ holds for $\mu^{D,vol}$. Besides, if $(\mathbb{E}_{vol})_{vol}$ is a planar Markovian holonomy field, so is $(\mathbb{E}^{D,vol}_{vol})_{vol}$.

\section{Restriction and extension of the structure group}

We have seen in Section \ref{restricexten1} how to restrict and extend the structure group of a gauge-invariant random holonomy field: we would like to do the same for planar Markovian holonomy fields. We will work in the setting of discrete planar Markovian holonomy fields, but the upcoming Propositions \ref{extensiondiscret} and \ref{restrictiondiscret} can also be applied to (continuous) planar Markovian holonomy fields. Let $H$ be a closed subgroup of $G$. 

\subsection{Extension}
Let $\big(\mathbb{E}_{vol}^{\mathbb{G}}\big)_{\mathbb{G}, vol}$ be a $H$-valued discrete planar Markovian holonomy field. Recall the notation $\mu \circ \hat{i}_{P}^{-1}$ defined in Notation \ref{rest}. Following Section \ref{restricexten1}, for any finite planar graph $\mathbb{G}$ and any measure of area $vol$, we can see $\mathbb{E}^{\mathbb{G}}_{vol}$ as a $G$-valued gauge-invariant random holonomy field on $P(\mathbb{G})$ by considering $\mathbb{E}^{\mathbb{G}}_{vol} \circ \hat{i}_{P(\mathbb{G})}^{-1}$. It is not difficult to verify next proposition. 

\begin{proposition}
\label{extensiondiscret}
The family $\big(\mathbb{E}^{\mathbb{G}}_{vol} \circ \hat{i}_{P(\mathbb{G})}^{-1}\big)_{\mathbb{G}, vol}$ is a $G$-valued discrete planar Markovian holonomy field. The regularities are the same for the $H$-valued and the $G$-valued random holonomy fields. 
\end{proposition}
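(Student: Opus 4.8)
The plan is to verify the four axioms $\mathbf{DP_1}$--$\mathbf{DP_4}$ (or their weak counterparts) directly for the family $\big(\mathbb{E}^{\mathbb{G}}_{vol} \circ \hat{i}_{P(\mathbb{G})}^{-1}\big)_{\mathbb{G}, vol}$, exploiting the fact that the gauge-invariant extension map $\mu \mapsto \mu \circ \hat{i}_P^{-1}$ from Notation \ref{rest} is built entirely from operations—restriction to $\mathcal{I}$, the averaging $\widehat{(\cdot)}$ of Equation (\ref{moyennisee}), projective limits—that interact naturally with the structural maps appearing in the axioms. The key observation, which I would state first, is that for any continuous $G$-valued $f$ invariant under $J_{c_1,\dots,c_n}$, one has by definition
\begin{align*}
\big(\mathbb{E}^{\mathbb{G}}_{vol} \circ \hat{i}_{P(\mathbb{G})}^{-1}\big)\left[f(h(c_1),\dots,h(c_n))\right] = \mathbb{E}^{\mathbb{G}}_{vol}\left[\widehat{f}_{J_{c_1,\dots,c_n}}(h(c_1),\dots,h(c_n))\right],
\end{align*}
where on the right $h$ takes values in $H$. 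Since the functions $c_i$ and the gauge action are the same for $H$ and $G$, testing the $G$-valued axioms against such invariant $f$ reduces to testing them against the $H$-averaged integrand under the original $H$-valued field.

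For $\mathbf{DP_1}$ I would use that a homeomorphism $\psi$ acts on multiplicative functions by pullback $\psi^*$, which commutes with evaluation on paths and hence with the averaging operator $\widehat{(\cdot)}$ (the gauge group at the endpoints is transported bijectively by $\psi$). Thus the already-verified $H$-valued identity $\mathbb{E}_{vol'}^{\mathbb{G}'} \circ \psi^{-1} = \mathbb{E}_{vol}^{\mathbb{G}}$ passes through the extension. For $\mathbf{DP_3}$ and $\mathbf{DP_4}$ the argument is the same in spirit: both the locality restriction and the refinement restriction $\rho_{P(\mathbb{G}_1),P(\mathbb{G}_2)}$ are just restrictions of the domain, which commute with $\hat{i}^{-1}$ because the invariant $\sigma$-field and the averaging are defined path-by-path; here it is cleanest to invoke that $\hat{i}^{-1}$ is characterized by its action on invariant cylinder functions and that these restriction maps send invariant functions to invariant functions.

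The axiom requiring genuine care is $\mathbf{DP_2}$, the $\mathcal{I}$-independence. Here I would invoke the multiplicativity of the averaging operator over disjoint endpoint sets, exactly the computation in Remark \ref{I-indep-et-indep-normale}: when $P_1$ and $P_2$ have disjoint endpoints, $\widehat{(f\otimes g)}_{J_{P_1\cup P_2}} = \widehat{f}_{J_{P_1}} \otimes \widehat{g}_{J_{P_2}}$. Since the two families of paths lie in disjoint closed disks $\overline{{\sf Int}(l_1)}$ and $\overline{{\sf Int}(l_2)}$, their endpoint sets are disjoint, and the $H$-valued $\mathcal{I}$-independence of $\mathbb{E}^{\mathbb{G}}_{vol}$ factorizes the expectation of the product of averaged integrands; applying the displayed extension formula then yields the $G$-valued $\mathcal{I}$-independence. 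The main obstacle to watch is precisely that the averaging in the $G$-extension is over $J$-valued gauge with values in $G$, not $H$, so one must confirm that the disjointness of endpoint sets makes the group $J_{P_1\cup P_2}=J_{P_1}\times J_{P_2}$ factorize identically whether the structure group is $H$ or $G$—which it does, since $J$ depends only on the endpoint set, not on the structure group beyond which group the averaging variable ranges in. Finally, the assertion that regularity is preserved follows because the $\frac{1}{2}$-Hölder bound (\ref{Hold}) is stated for a single loop $l$, and for a loop $h(l)\in H$ so $\hat{i}^{-1}$ does not change $h(l)$'s law up to conjugation; the $d_G$-distance to $e$ is gauge-invariant, so the constant $K$ carries over unchanged, and continuous area-dependence is preserved because $\hat{i}^{-1}$ is continuous for the relevant weak topology.
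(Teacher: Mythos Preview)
The paper gives no proof here—it simply remarks that the verification is ``not difficult''—so there is nothing to compare against. Your outline is correct in spirit and the commutation arguments for $\mathbf{DP_1}$, $\mathbf{DP_3}$, $\mathbf{DP_4}$ and regularity are fine (for the H\"older bound you should note that $d_G$ and $d_H$ need not agree on $H$, but compactness gives $d_G|_H \le C\, d_H$, which is all you need).

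Your handling of $\mathbf{DP_2}$, however, contains an unnecessary detour that becomes a genuine issue in the \emph{strong} case. You invoke Remark~\ref{I-indep-et-indep-normale} and the factorization $\widehat{(f\otimes g)}_{J_{P_1\cup P_2}} = \widehat f_{J_{P_1}} \otimes \widehat g_{J_{P_2}}$ under the hypothesis that the endpoint sets of $P_1$ and $P_2$ are disjoint. In the strong axiom $\mathbf{DP_2}$ only the \emph{interiors} ${\sf Int}(l_1)$ and ${\sf Int}(l_2)$ are assumed disjoint; the closures $\overline{{\sf Int}(l_i)}$—and hence the endpoint sets—may meet, so your justification breaks down there. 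The fix is that you never needed this factorization in the first place: by definition of $\mathcal{I}$-independence, the test functions $f$ and $g$ are \emph{already} $J^G_{P_1}$-invariant and $J^G_{P_2}$-invariant, and a short check shows $f\otimes g$ is then $J^G_{P_1\cup P_2}$-invariant regardless of shared endpoints. Hence all the averages $\widehat{(\cdot)}$ are the identity on these functions, your displayed extension formula reduces to $(\mathbb{E}^{\mathbb{G}}_{vol}\circ\hat i^{-1})[f\cdot g]=\mathbb{E}^{\mathbb{G}}_{vol}[f\cdot g]$, and the $H$-valued $\mathcal{I}$-independence (with $f|_{H^n}$, $g|_{H^m}$ automatically $J^H$-invariant since $H\subset G$) finishes the job directly.
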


\subsection{Restriction}

\begin{proposition}
\label{restrictiondiscret} 
Let $\big(\mathbb{E}_{vol}^{\mathbb{G}}\big)_{\mathbb{G}, vol}$ be a $G$-valued stochastically continuous in law discrete planar Markovian holonomy field. Let us suppose that for any finite planar graph $\mathbb{G}$, any vertex $v$ of $\mathbb{G}$, any measure of area $vol$ and any simple loop $l \in L_v(\mathbb{G})$, $$h(l) \in H, \ \ \mathbb{E}_{vol}^{\mathbb{G}} \text{ a.s., }$$ then there exists a $H$-valued stochastically continuous in law discrete planar Markovian holonomy field $\big(\tilde{\mathbb{E}}_{vol}^{\mathbb{G}}\big)_{\mathbb{G}, vol} $ such that: 
\begin{align*}
\mathbb{E}_{vol}^{\mathbb{G}} = \tilde{\mathbb{E}}_{vol}^{\mathbb{G}} \circ \hat{i}_{P(\mathbb{G})}^{-1}, 
\end{align*}
for any finite planar graph $\mathbb{G}$ and any measure of area $vol$. 
\end{proposition}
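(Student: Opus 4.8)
The plan is to proceed in three steps: first upgrade the hypothesis from simple loops to all loops, then build the candidate family $\big(\tilde{\mathbb{E}}_{vol}^{\mathbb{G}}\big)_{\mathbb{G},vol}$ from the genuine $H$-valued loop holonomies of the given $G$-field, and finally transport each axiom of a discrete planar Markovian holonomy field from $\big(\mathbb{E}_{vol}^{\mathbb{G}}\big)$ to $\big(\tilde{\mathbb{E}}_{vol}^{\mathbb{G}}\big)$. For the first step I claim that for every finite planar graph $\mathbb{G}$, every vertex $v$, every $vol$ and \emph{every} loop $l \in L_v(\mathbb{G})$ one has $h(l)\in H$ a.s.\ under $\mathbb{E}_{vol}^{\mathbb{G}}$. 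By the loop paradigm it suffices to treat a generating family of the group of reduced loops based at $v$, and by Remark \ref{notunicity} such a family can be chosen so that each generator $\ell$ is a limit, for the convergence with fixed endpoints, of simple loops $\lambda_k$ based at $v$. Each $\lambda_k$ lies in some finite planar graph $\mathbb{G}_k$, so the hypothesis gives $h(\lambda_k)\in H$ a.s.\ under $\mathbb{E}_{vol}^{\mathbb{G}_k}$. Since $\big(\mathbb{E}_{vol}^{\mathbb{G}}\big)$ is stochastically continuous in law, the law of $h(\lambda_k)$ under $\mathbb{E}_{vol}^{\mathbb{G}_k}$ converges to the law of $h(\ell)$ under $\mathbb{E}_{vol}^{\mathbb{G}}$; as $H$ is closed, the portmanteau inequality for closed sets forces $h(\ell)\in H$ a.s. Multiplicativity of $h$ together with Remark \ref{equalityequiv} then propagates this from the generators to all of $L_v(\mathbb{G})$.

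Next, for each $(\mathbb{G},vol)$ I fix a base vertex $v$ and push $\mathbb{E}_{vol}^{\mathbb{G}}$ forward by $h \mapsto (h(l))_{l\in L_v(\mathbb{G})}$, which by the previous step is almost surely a point of $\mathcal{M}ult_{P(\mathbb{G})}(L_v(\mathbb{G}),H)$; call the resulting measure $\nu_{vol}^{\mathbb{G}}$. It is $H$-gauge-invariant because $\mathbb{E}_{vol}^{\mathbb{G}}$ is $G$-gauge-invariant and hence, a fortiori, invariant under $H$-diagonal conjugation. Feeding the finite marginals of $\nu_{vol}^{\mathbb{G}}$ into Proposition \ref{crea1} produces a unique $H$-valued gauge-invariant random holonomy field $\tilde{\mathbb{E}}_{vol}^{\mathbb{G}}$ on $P(\mathbb{G})$ whose loop-holonomy law based at $v$ is exactly $\nu_{vol}^{\mathbb{G}}$. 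This is precisely the restriction furnished by Proposition \ref{restrictionholo}, so $\mathbb{E}_{vol}^{\mathbb{G}} = \tilde{\mathbb{E}}_{vol}^{\mathbb{G}}\circ \hat{i}_{P(\mathbb{G})}^{-1}$. The decisive point is that I do \emph{not} select the (non-unique, see Remark \ref{notunicity}) restriction arbitrarily: by forcing the loop-holonomy law to be the actual $H$-valued law $\nu_{vol}^{\mathbb{G}}$, the field $\tilde{\mathbb{E}}_{vol}^{\mathbb{G}}$ is pinned down and all its defining data are inherited directly from $\mathbb{E}_{vol}^{\mathbb{G}}$.

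It then remains to verify the axioms, and the key mechanism is that $\mathbf{DP_1}$–$\mathbf{DP_4}$ (resp.\ $\mathbf{wDP_1}$–$\mathbf{wDP_4}$) for the $G$-field are equalities of \emph{full} measures, or full independence statements, hence restrict to equalities of the full $H^n$-laws of loop-holonomy tuples, which are exactly what determine the $\tilde{\mathbb{E}}$'s. For $\mathbf{DP_4}$, compatibility of the $\nu_{vol}^{\mathbb{G}}$ along $\mathbb{G}_1\preccurlyeq\mathbb{G}_2$ (choosing a common base vertex) is immediate from $\mathbf{DP_4}$ for the $G$-field, and uniqueness in Proposition \ref{crea1} upgrades it to $\tilde{\mathbb{E}}_{vol}^{\mathbb{G}_2}\circ\rho^{-1} = \tilde{\mathbb{E}}_{vol}^{\mathbb{G}_1}$. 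For $\mathbf{DP_1}$, both $\tilde{\mathbb{E}}_{vol}^{\mathbb{G}}$ and $\tilde{\mathbb{E}}_{vol'}^{\mathbb{G}'}\circ\psi^{-1}$ are $H$-gauge-invariant and, since $\mathbf{DP_1}$ for the $G$-field matches their full loop-holonomy laws, they agree on every $H$-diagonal-invariant function of loops; Proposition \ref{unicite1}, applied with structure group $H$, then gives equality. Axiom $\mathbf{DP_3}$ follows from $\mathbf{DP_1}$ as noted after the definitions, and $\mathbf{DP_2}$ transfers because the $H$-valued loop-holonomy tuples of $\tilde{\mathbb{E}}^{\mathbb{G}}$ share the joint law of those of $\mathbb{E}^{\mathbb{G}}$ (in the weak case one invokes Remark \ref{I-indep-et-indep-normale} to pass between $\mathcal{I}$-independence and independence, reducing paths to loops at a common base point via Lemma \ref{inv1}). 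Stochastic continuity in law of $\big(\tilde{\mathbb{E}}_{vol}^{\mathbb{G}}\big)$ follows identically: the loop-holonomy laws coincide with those of the $G$-field, which converge, and the limit stays supported on $H$ since $H$ is closed.

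The main obstacle is exactly the non-uniqueness of the structure-group restriction recorded in Remark \ref{notunicity}: a naive graph-by-graph application of Proposition \ref{restrictionholo} would not yield a family compatible under restriction, so $\mathbf{DP_4}$ and the homeomorphism invariance $\mathbf{DP_1}$ could fail. The resolution, and the real content of the argument, is the observation that the $G$-field's axioms hold at the level of full measures, so the canonical choice preserving the actual $H$-valued loop holonomies transfers the whole structure coherently; justifying that this choice is legitimate and that the full $H^n$-laws suffice to reconstruct the $\tilde{\mathbb{E}}$'s through Propositions \ref{crea1} and \ref{unicite1} is where care is needed.
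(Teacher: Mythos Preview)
Your construction of $\tilde{\mathbb{E}}_{vol}^{\mathbb{G}}$ is exactly the ``natural restriction'' in Equation~(\ref{restrict}), and the paper explicitly shows, just after stating the proposition and again at page~\pageref{ex:restriction}, that this family \emph{fails} to satisfy the independence axiom $\mathbf{wDP_2}$. The gap is precisely in your transfer of $\mathbf{DP_2}$. You write that you reduce paths to loops at a common base point via Lemma~\ref{inv1} and then invoke Remark~\ref{I-indep-et-indep-normale}; but once the loops share the base point $v$, that remark (which requires \emph{disjoint} endpoint sets) no longer applies. What the $G$-field gives you is $\mathcal{I}$-independence for the $G$-invariant $\sigma$-field, whereas $\tilde{\mathbb{E}}$ must satisfy $\mathcal{I}$-independence for the larger $H$-invariant $\sigma$-field. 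Concretely, under $\mathbb{E}_{vol}^{\mathbb{G}}$ the pair $\big(h(l),h(pl'p^{-1})\big)$ can be correlated (Equation~(\ref{eq:egalitenoninde})); this correlation is neutralised in $\mathbb{E}$ because $h(p)$ is Haar on $G$, but under your $\tilde{\mathbb{E}}$ the path holonomy $h(p)$ is only Haar on $H$, and $H$-conjugation need not kill it. The paper's $\mathfrak{S}_3$ example with $H\simeq\mathbb{Z}/3\mathbb{Z}$ makes this explicit: there $H$ is abelian, so $h(l')=h(pl'p^{-1})$ identically under $\tilde{\mathbb{E}}$, and $(h(l),h(l'))$ inherits the law of $(UAU^{-1},UBU^{-1})$, which is not a product law.

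The paper's proof avoids this obstruction entirely by going through the classification Theorem~\ref{caracterisation}: the given field is a discrete planar Yang--Mills field for some self-invariant L\'evy process $Y$, the simple-loop hypothesis forces $Y_t\in H$ a.s.\ for every $t$, and one then takes for $\tilde{\mathbb{E}}$ the $H$-valued planar Yang--Mills field built from $Y$ via Theorem~\ref{defplanarHFgeneral}. That field satisfies all the axioms by construction, independently of any graph-by-graph restriction procedure.
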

 
The proof of Proposition \ref{restrictiondiscret} relies heavily on a theorem which will be proved later, namely Theorem \ref{caracterisation}, thus it will be given page \pageref{restrictiondiscretpage}. It is more difficult than one might think to prove this proposition because of the non-unicity of the random holonomy field $\mu_H$ in Proposition \ref{restrictionholo}. In fact, one can show in general that the natural choice we made in Proposition \ref{restrictionholo} does not allow one to define a $H$-valued discrete Markovian holonomy field: we will give an exemple page \pageref{ex:restriction} which illustrates this fact. Let us explain the problem which appears when one wants to restrict the structure group of a discrete Markovian holonomy field. Let $\big(\mathbb{E}_{vol}^{\mathbb{G}}\big)_{\mathbb{G}, vol}$ be a $G$-valued stochastically continuous in law discrete planar Markovian holonomy field which satisfies the condition of Proposition~\ref{restrictiondiscret}. Let $\mathbb{G}$ be a finite planar graph and let $vol$ be a measure of area on the plane. It is natural to set: 
\begin{align}
\label{restrict}\tilde{\mathbb{E}}^{\mathbb{G}}_{vol} = \widehat{\left(\left((\mathbb{E}_{vol}^{\mathbb{G}})_{\mid L_v(\mathbb{G})} \circ \iota^{-1}\right)_{\mid \mathcal{I}_{H}}\right)},
\end{align}
where $v$ is any vertex of $\mathbb{G}$, $\iota: \mathcal{M}ult(L_v(\mathbb{G}),H) \to \mathcal{M}ult(P(\mathbb{G}), H)$ is any map given by Lemma \ref{inv3}, $\mathcal{I}_{H}$ is the $H$-invariant $\sigma$-field and  $\ \widehat{}\ $ is the gauge-invariant extension (where the gauge group is now built on $H$) given by Proposition \ref{exten}.

Let $l$ and $l'$ be two simple loops in $P(\mathbb{G})$, with $\underline{l} = v$, such that $\overline{{\sf Int(}l)} \cap \overline{{\sf Int(}l')} =~\emptyset$ as shown in the Figure \ref{restrictionfig}. If the family of measures $\big(\tilde{\mathbb{E}}^{\mathbb{G}}_{vol}\big)_{\mathbb{G}, vol}$ just defined above was a discrete planar Markovian holonomy field, then $h(l)$ and $h(l')$ would be independent under $\tilde{\mathbb{E}}_{vol}^{\mathbb{G}}$. But if $p$ is the path from $v$ to $\underline{l'}$ used to define $\iota$ and if $f_1, f_2$ are two continuous functions on $H$ invariant by conjugation by $H$, we have: 
\begin{align}
\label{eq:egalitenoninde}
\tilde{\mathbb{E}}^{\mathbb{G}}_{vol} \Big(f_1\big(h\left(l\right)\big) f_2\big(h\left(l'\right)\big)\Big)&= \mathbb{E}^{\mathbb{G}}_{vol}\Big({f_1}\big(h(l)\big){f_2}\big(h(pl'p^{-1})\big) \Big).
\end{align}In the r.h.s. appear the two loops $l$ and $pl'p^{-1}$ which are not of null intersection (as they share at least $v$) and only appear functions invariant by conjugation by $H$ and not by $G$. This does not allow us to split the expectation into a product.

\begin{figure}%[here]
 \centering
  \includegraphics[width=150pt]{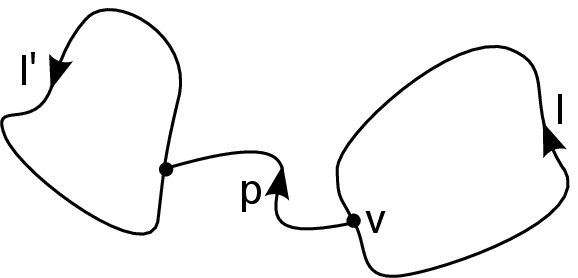}
 \caption{Two simple loops $l$, $l'$ with a path $p$ joining $\underline{l}$ to $\underline{l'}$.}
 \label{restrictionfig}
\end{figure}

\chapter{Weak Constructibility}

\label{weakconstructibility}
In this section, any weak continuous planar Markovian holonomy field is shown to be constructible. Let us state a proposition which is a direct consequence of an important theorem of Moser and Dacorogna in \cite{Moser}. 
Let ${\sf Leb}$ be the Lesbegue measure on $\mathbb{R}^{2}$. 

\begin{proposition}
\label{MoserDac}
Let $Q$ be an open simple $n$-gon in $\mathbb{R}^{2}$. Let $f$ and $g$ be two strictly positive functions on $Q$ which are in $C^{1}(Q) \cap C^{0}(\overline{Q})$. Suppose that: 
\begin{align*}
\int_Q f d{\sf Leb} = \int_Q g d{\sf Leb}. 
\end{align*}
Then there exists $\phi \in {\sf Diff}^{1}(Q) \cap {\sf Diff}^{0}(\overline{Q})$, a homeomorphism of $\overline{Q}$ which restricts to a diffeomorphism of $Q$, such that: 
\begin{align*}
 g. {\sf Leb}_{\mid Q} = \big(f.{\sf Leb}_{\mid Q}\big) \circ \phi^{-1}. 
\end{align*}
and $\phi(x) = x$ for any $x \in \partial Q$. 
\end{proposition}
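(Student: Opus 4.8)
The plan is to rewrite the required identity as a prescribed-Jacobian equation and then invoke the Moser--Dacorogna construction. First I would unwind the pushforward notation: the equality $g\cdot{\sf Leb}_{\mid Q} = (f\cdot{\sf Leb}_{\mid Q})\circ\phi^{-1}$ says that $\phi$ pushes the measure $f\,{\sf Leb}$ forward to $g\,{\sf Leb}$, so that for every Borel set $A\subset Q$ one has $\int_A g\,d{\sf Leb} = \int_{\phi^{-1}(A)} f\,d{\sf Leb}$. Performing the change of variables $x=\phi^{-1}(y)$ and asking this to hold for all $A$ turns the problem into the pointwise equation $g(\phi(x))\det D\phi(x) = f(x)$ on $Q$ (the orientation-preserving branch, so that $\det D\phi>0$), together with the boundary constraint $\phi_{\mid\partial Q}={\sf Id}$. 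This is exactly the form of equation treated by the theorem of Moser and Dacorogna, and the only global compatibility condition attached to it is $\int_Q f\,d{\sf Leb}=\int_Q g\,d{\sf Leb}$, which is precisely our hypothesis.

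To produce $\phi$ I would use the flow (Moser) method. Interpolate by $f_t=(1-t)f+tg$ for $t\in[0,1]$; each $f_t$ is again strictly positive, lies in $C^1(Q)\cap C^0(\overline Q)$, and satisfies $\int_Q f_t\,d{\sf Leb}=\int_Q f\,d{\sf Leb}$ for all $t$, so that $\tfrac{d}{dt}\int_Q f_t=0$ and $\dot f_t=g-f$ has vanishing integral. I would then seek a time-dependent vector field $v_t$ on $Q$ solving the continuity equation $\partial_t f_t+{\sf div}(f_t v_t)=0$, i.e. ${\sf div}(f_t v_t)=f-g$, subject to $v_t=0$ on $\partial Q$. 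Writing $w_t=f_t v_t$, this is the divergence problem ${\sf div}\,w_t=f-g$ with $w_t=0$ on $\partial Q$, whose solvability is guaranteed exactly by the vanishing of $\int_Q(f-g)$; the solution operator here (Bogovskii's operator, which underlies the Moser--Dacorogna proof on a star-shaped or Lipschitz domain such as the polygon $Q$) returns $w_t$ with enough interior regularity that $v_t=w_t/f_t$ is $C^1$ on $Q$ and vanishes at the boundary.

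Integrating the non-autonomous flow $\tfrac{d}{dt}\phi_t=v_t\circ\phi_t$ with $\phi_0={\sf Id}$, I would set $\phi=\phi_1$. Since $v_t$ vanishes on $\partial Q$, the boundary points stay fixed throughout the flow, giving $\phi_{\mid\partial Q}={\sf Id}$; and differentiating the pushforward $(\phi_t)_\#(f\,{\sf Leb})$ in $t$ and using the continuity equation shows $(\phi_t)_\#(f\,{\sf Leb})=f_t\,{\sf Leb}$ for all $t$, so that at $t=1$ we obtain $\phi_\#(f\,{\sf Leb})=g\,{\sf Leb}$, which is the desired identity. The interior $C^1$-regularity of the $v_t$ makes $\phi$ a diffeomorphism of the open set $Q$, while the vanishing of $v_t$ at $\partial Q$ makes $\phi$ extend to a homeomorphism of $\overline Q$ equal to the identity on $\partial Q$, i.e. $\phi\in{\sf Diff}^1(Q)\cap{\sf Diff}^0(\overline Q)$.

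The main obstacle is the non-smoothness of $\partial Q$ at the corners of the polygon: the Moser--Dacorogna theorem is classically stated for domains with smooth boundary, whereas $Q$ is only Lipschitz. The point to check carefully is therefore that the divergence equation can still be solved with boundary value $w_t=0$ and with interior regularity uniform enough in $t$ to integrate the flow and to conclude that $\phi_t$ depends continuously up to $\overline Q$. This is where the weakness of the conclusion helps: we need no derivative control of $\phi$ at the corners, only interior $C^1$-regularity together with continuity up to $\partial Q$ and identity boundary values, and these survive the corner singularities because $Q$ is star-shaped (or a finite union of such pieces, hence a John domain), for which Bogovskii's construction applies.
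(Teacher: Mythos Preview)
Your approach is correct in spirit but takes a genuinely different route from the paper. The paper does not reconstruct the Moser flow argument at all: it simply observes that Dacorogna--Moser \cite{Moser} define, for each $k\geq 1$, a property $(H_k)$ of open domains, prove (their Theorem~7) the prescribed-Jacobian result for any domain satisfying $(H_k)$ and any data $f,g$ with $f+\tfrac{1}{f}$, $g+\tfrac{1}{g}$ bounded, and then show (their Proposition~A.2) that every Lipschitz domain satisfies $(H_k)$. Since an open simple $n$-gon is a bounded Lipschitz domain and the hypotheses $f,g\in C^{1}(Q)\cap C^{0}(\overline{Q})$ strictly positive ensure the needed boundedness, the proposition follows by direct citation. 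Your reconstruction via the interpolation $f_t$, the continuity equation, and the Bogovskii operator is exactly the machinery behind the Dacorogna--Moser theorem, and you correctly identify the corner/Lipschitz issue as the only delicate point; what you gain is a self-contained argument, what you lose is brevity. One caution: your claim that the Bogovskii solution $w_t$ is $C^1$ in the interior with $w_t=0$ on $\partial Q$, uniformly in $t$, is precisely the technical content packaged into property $(H_k)$ in \cite{Moser}, and would need a more careful statement (Bogovskii's operator is usually phrased in Sobolev or H\"older scales, not directly as $C^1(Q)\cap C^0(\overline{Q})\to C^1$ vector fields vanishing on $\partial Q$) to make your sketch into a full proof.
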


\begin{proof}
In \cite{Moser}, page $15$ the authors define for any positive integer $k$, a property $(H_k)$ for open subsets of $\mathbb{R}^{n}$. They show in Theorem $7$ of the same paper, that for any positive integer $k$, any open domain $\Omega$ which satisfies $(H_k)$, any positive functions $f$ and $g$ in $C^{k}(\Omega)$ with $f+\frac{1}{f}$ and $g+\frac{1}{g}$ bounded and which satisfy: 
\begin{align*}
\int_\Omega fd{\sf Leb} = \int_\Omega gd{\sf Leb}, 
\end{align*}
there exists $\phi \in {\sf Diff}^{1}(\Omega) \cap {\sf Diff}^{0}(\overline{\Omega})$ with $\phi(x) = x$ on $\partial \Omega$ such that: 
\begin{align*}
 g. d{\sf Leb}_{\mid \Omega} = \big(f. d{\sf Leb}_{\mid \Omega}\big) \circ \phi^{-1}. 
\end{align*}
Besides, Proposition $A.2$ of the same paper asserts that any domain with Lipschitz boundary satisfies $(H_k)$ for every $k \geq 1$. The proposition follows from this discussion. 
\end{proof}

\begin{theorem}
\label{weakconst}
Any weak planar Markovian holonomy field is constructible. 
\end{theorem}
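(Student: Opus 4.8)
The plan is to verify the four axioms $\mathbf{wDP_1}$--$\mathbf{wDP_4}$ for the restricted family $\mathbb{E}^{\mathbb{G}}_{vol} := \left(\mathbb{E}_{vol}\right)_{\mid \mathcal{M}ult(P(\mathbb{G}),G)}$, with $\mathbb{G}$ ranging over $\mathcal{G}({\sf Aff}(\mathbb{R}^{2}))$. Three of them are essentially immediate. Each $\mathbb{E}^{\mathbb{G}}_{vol}$ is a restriction of the gauge-invariant measure $\mathbb{E}_{vol}$ of weight $1$, hence is gauge-invariant of weight $1$. Axiom $\mathbf{wDP_4}$ holds because $P(\mathbb{G}_1)\subset P(\mathbb{G}_2)$ whenever $\mathbb{G}_1\preccurlyeq\mathbb{G}_2$, so that $\rho_{P(\mathbb{G}_1),P(\mathbb{G}_2)}$ intertwines the two restrictions of $\mathbb{E}_{vol}$. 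Axiom $\mathbf{wDP_2}$ follows from $\mathbf{wP_2}$: the two families occurring in $\mathbf{wDP_2}$ are subfamilies of those in $\mathbf{wP_2}$ (since $P(\mathbb{G})\subset{\sf Aff}(\mathbb{R}^{2})$), and independence passes to subfamilies. Likewise $\mathbf{wDP_3}$ is the further restriction to $P(\mathbb{G})\subset{\sf Aff}(\overline{{\sf Int}(l)})$ of the equality provided by $\mathbf{wP_3}$.

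The whole difficulty is $\mathbf{wDP_1}$, whose hypothesis is far weaker than that of $\mathbf{wP_1}$: we are only handed a $\mathbb{G}-\mathbb{G}'$ piecewise diffeomorphism $\psi$ matching the areas of the bounded faces of $\mathbb{G}$, whereas $\mathbf{wP_1}$ requires a genuinely area-preserving (that is, $vol' = vol\circ\psi^{-1}$) orientation-preserving diffeomorphism at infinity. The strategy is to replace $\psi$ by such a map without altering its effect on holonomies. Precisely, I would construct an orientation-preserving diffeomorphism at infinity $\tilde\psi$ of $\mathbb{R}^{2}$ with $\tilde\psi_{*}vol = vol'$ and $\tilde\psi = \psi$ pointwise on every edge of $\mathbb{G}$. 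Since every $p\in P(\mathbb{G})$ is a concatenation of edges and $\mathbb{G}\in\mathcal{G}({\sf Aff})$, one then has $\tilde\psi(p)=\psi(p)\in{\sf Aff}(\mathbb{R}^{2})$, so that $\mathbf{wP_1}$ applied to $\tilde\psi$ gives, for any $p_1,\dots,p_n\in P(\mathbb{G})$ and any continuous $f$, the equality $\mathbb{E}_{vol}\big[f(h(p_i))\big]=\mathbb{E}_{vol'}\big[f(h(\psi(p_i)))\big]$, which is exactly $\mathbb{E}^{\mathbb{G}'}_{vol'}\circ\psi^{-1} = \mathbb{E}^{\mathbb{G}}_{vol}$.

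To build $\tilde\psi$ I would work face by face on $\mathbb{G}$ and exploit the good mesh $\mathbb{G}_0$ of $\psi$ to recover smoothness. Over each bounded face $F$ of $\mathbb{G}$, $\psi$ is a diffeomorphism on every cell $F_i$ of $\mathbb{G}_0$, so $\psi_{*}(vol_{\mid F})$ has a density that is $C^{1}$ on each polygon $\psi(F_i)$; the face areas match, $vol(F)=vol'(\psi(F))$, hence $\sum_i vol(F_i)=\sum_i vol'(\psi(F_i))$. A preliminary homeomorphism of $\overline{\psi(F)}$ fixing its boundary and dragging the mesh $\{\psi(F_i)\}$ equalizes the cell masses to $vol(F_i)$; then Proposition \ref{MoserDac}, applied on each polygon $\psi(F_i)$ with $f$ the density of $\psi_{*}vol$ and $g$ that of $vol'$, produces a diffeomorphism fixing $\partial\psi(F_i)$ and transporting $\psi_{*}vol$ to $vol'$. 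As each piece fixes its cell boundary, these glue across the mesh into a homeomorphism of $\overline{\psi(F)}$ fixing $\partial\psi(F)=\psi(\partial F)$; composing with $\psi$ defines $\tilde\psi$ on $F$, equal to $\psi$ on $\partial F$ and carrying $vol_{\mid F}$ to $vol'_{\mid\psi(F)}$.

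The main obstacle is the unbounded face $F_\infty$, where both measures have infinite mass and $\psi$ is in general not area-preserving. Here I would exhaust $\psi(F_\infty)$ by polygonal regions $A_n\nearrow\psi(F_\infty)$ and cut $F_\infty$ by a matching exhaustion chosen so that $\psi_{*}vol$ and $vol'$ carry the same finite mass on each annulus $A_{n+1}\setminus A_n$; triangulating these annuli into simple polygons and applying Proposition \ref{MoserDac} on each, with boundary identifications glued consistently via Lemma \ref{quasi-diff}, yields an orientation-preserving diffeomorphism of $\psi(F_\infty)$ that fixes its boundary, transports $\psi_{*}vol$ to $vol'$, and is a diffeomorphism at infinity. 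Composing with $\psi$ and gluing to the bounded-face maps produces the desired $\tilde\psi$. Checking that such an exhaustion can be arranged (matching masses while keeping the annuli polygonal and the successive boundary identifications compatible) and that the resulting global map is genuinely an orientation-preserving diffeomorphism at infinity is the delicate point of the argument.
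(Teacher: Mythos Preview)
Your handling of axioms $\mathbf{wDP_2}$, $\mathbf{wDP_3}$, $\mathbf{wDP_4}$ is fine and matches the paper. For $\mathbf{wDP_1}$, you attempt to build a single map $\tilde\psi$ that pushes $vol$ to $vol'$ globally on $\mathbb{R}^{2}$ and agrees with $\psi$ on the edges. On the bounded faces this is close in spirit to the paper's construction, though the paper does it in two clean Moser--Dacorogna passes: first regularize the non-smooth density of $\psi_{*}vol$ to an auxiliary smooth measure $vol''$ cell by cell on the good mesh $\mathbb{G}_0$, then transport $vol''$ to $vol'$ face by face on $\mathbb{G}'$. Your ``preliminary homeomorphism to equalize cell masses'' is vaguer and risks destroying the $C^{1}$ regularity required by Proposition~\ref{MoserDac} in the second step.

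The genuine gap is the unbounded face. Your exhaustion by polygonal annuli with Moser--Dacorogna on each piece produces at best a homeomorphism that is $C^{1}$ on the open polygons and merely $C^{0}$ across their boundaries (Proposition~\ref{MoserDac} only yields ${\sf Diff}^{1}(Q)\cap{\sf Diff}^{0}(\overline{Q})$); since those boundaries accumulate at infinity, there is no radius beyond which the glued map is a diffeomorphism, and you will not obtain a diffeomorphism at infinity. You flag this as ``the delicate point'', but it is an obstruction rather than a technicality.

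The paper avoids this entirely, and this is the idea you are missing: one does \emph{not} need to match the measures on the unbounded face. Set the correction $\Phi$ equal to the identity on $\psi(F_\infty)$ and use Moser--Dacorogna only inside the bounded part, obtaining $\Phi_{\mid\mathbb{G}'}={\sf Id}$ and $vol'=(\Phi\circ\psi)_{*}vol$ on $(F'_\infty)^{c}$. Because $\mathbb{G}'$ is simple, $\partial F'_\infty$ is a simple loop, so the locality axiom $\mathbf{wP_3}$ already gives $\mathbb{E}^{\mathbb{G}'}_{vol'}=\mathbb{E}^{\mathbb{G}'}_{(\Phi\circ\psi)_{*}vol}$; then $\mathbf{wP_1}$ applied to $\Phi\circ\psi$ (which equals $\psi$, hence is a diffeomorphism, on $F_\infty$) finishes. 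Trading the analytic problem on $F_\infty$ for an invocation of $\mathbf{wP_3}$ is the crux of the argument.
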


\begin{proof}
Let $\left(\mathbb{E}_{vol}\right)_{vol}$ be a weak planar Markovian holonomy field. Let us consider $\big(\mathbb{E}_{vol}^{\mathbb{G}}\big)_{\mathbb{G}, vol}$, the family of random holonomy fields that we get by restricting $\left(\mathbb{E}_{vol}\right)_{vol}$ on $\mathcal{M}ult(P(\mathbb{G}),G)$ for any finite planar graph $\mathbb{G}$ in $\mathcal{G}\left({\sf Aff}\left(\mathbb{R}^{2}\right)\right)$. As explained before Definition \ref{constructible1}, we only have to check that the Axiom $\mathbf{wDP_1}$ is satisfied by $\big(\mathbb{E}_{vol}^{\mathbb{G}}\big)_{\mathbb{G}, vol}$. 

Let $vol$ and $vol'$ be two measures of area on $\mathbb{R}^{2}$. Consider $\mathbb{G}$ and $\mathbb{G}'$ two simple finite planar graphs in $\mathcal{G}\left({\sf Aff}(\mathbb{R}^{2})\right)$. Let $\psi$ be an orientation-preserving $\mathbb{G}-\mathbb{G}'$ piecewise diffeomorphism. Let us suppose that for any bounded face $F$ of $\mathbb{G}$, $vol(F) = vol'(\psi(F))$. 

Let $F'_{\infty}$ be the unbounded face of $\mathbb{G}'$. Let us suppose that we managed to construct an orientation-preserving diffeomorphism at infinity $\Phi$ on $\mathbb{R}^{2}$ such that: 
\begin{align}
\label{condition1}
vol'_{\mid (F'_{\infty})^{c}}&= \left(vol \circ (\Phi \circ \psi)^{-1}\right)_{\mid (F'_{\infty})^{c}}, \\
\label{condition2}
\Phi_{\mid \mathbb{G}'} &= Id_{\mid \mathbb{G}'}.   
\end{align}
As $\mathbb{G}'$ is a simple graph, the boundary of $F'_{\infty}$ is a simple loop. Applying the Axiom $\mathbf{wP_3}$ and using the condition (\ref{condition1}): 
\begin{align*}
\mathbb{E}_{vol'}^{\mathbb{G}'}=\mathbb{E}_{vol\circ (\Phi \circ \psi)^{-1}}^{\mathbb{G}'}. 
\end{align*} 
Yet by condition (\ref{condition2}),  $\mathbb{G}' = \Phi(\mathbb{G}') =  \Phi \circ \psi (\mathbb{G})$. Thus, as an application of Axiom $\mathbf{wP_1}$, we get: 
\begin{align*}
\mathbb{E}_{vol'}^{\mathbb{G}'} =\mathbb{E}_{vol\circ (\Phi \circ \psi)^{-1}}^{\mathbb{G}'}= \mathbb{E}^{\Phi \circ \psi (\mathbb{G})}_{vol \circ (\Phi \circ \psi)^{-1}} = \mathbb{E}^{\mathbb{G}}_{vol}, 
\end{align*}
which is the desired equality.

It remains to construct an orientation-preserving diffeomorphism at infinity $\Phi$ on $\mathbb{R}^{2}$ satisfying the two conditions (\ref{condition1}) and (\ref{condition2}). This will be done by applying twice the Proposition \ref{MoserDac}. Let $\mathbb{G}_0$ be a good mesh for $\psi$. First we regularize the measure $vol \circ \psi^{-1}$, which does not have a smooth density, by applying the proposition for each face of the mesh $\mathbb{G}_0$. Then we transport the resulting measure of area on $vol'$ by applying again Proposition \ref{MoserDac} for each bounded face of $\mathbb{G}'$. 

Let us fix a measure of area $vol''$ such that, for any bounded face $F$ of $\mathbb{G}_0$, $vol''(\psi(F)) = vol(F)$. Let us consider any bounded face $F_0$ of $\mathbb{G}_0$ which is included in a bounded face of $\mathbb{G}$. By definition of a $\mathbb{G}-\mathbb{G}'$ piecewise diffeomorphism and the definition of a good mesh for $\psi$, $\psi_{\mid F_0}$ is a diffeomorphism from $F_0$ to $\psi(F_0)$ which are two simple $n$-gons. Thus, $\widetilde{vol}_{\mid \psi(F_0)} = vol_{\mid F_0} \circ (\psi_{\mid F_0})^{-1}$ defines a measure with strictly positive smooth density on $\psi(F_0)$. Using the condition on the Jacobian determinant of $\psi$, this smooth density can be extended as a strictly positive continuous function on $\overline{\psi(F_0)}$. Using Proposition \ref{MoserDac}, we can consider $\phi_{\mid \psi(F_0)} \in {\sf Diff}^{1}(\psi(F_0)) \cap {\sf Diff}^{0}\left(\overline{\psi(F_0)}\right)$ with $\phi_{\mid \psi(F_0)}(x) = x$ for any $x \in \partial \psi(F_0)$ such that: 
\begin{align*}
 vol''_{\mid \psi(F_0)} = \widetilde{vol}_{\mid \psi(F_0)} \circ (\phi_{\mid \psi(F_0)})^{-1}. 
\end{align*}

Let us finally set $\phi_{\mid \psi(F_{\infty})} = Id_{\mid \psi(F_{\infty})}$, where $F_{\infty}$ is the unbounded face of $\mathbb{G}$. Thanks to  the boundary condition on $\phi_{\mid \psi(F)}$ for any face $F$ of $\mathbb{G}_0$, we can glue together all the homeomorphisms $\phi_{\mid \psi(F)}$ constructed for each face $F$ of $\mathbb{G}_{0}$. It defines an orientation-preserving diffeomorphism at infinity $\phi_1$ on $\mathbb{R}^{2}$ such that $vol'' _{\mid \psi(F_{\infty})^{c}}= (vol \circ (\phi_1 \circ \psi)^{-1})_{\mid \psi(F_{\infty})^{c}}$ and $(\phi_1)_{\mid \mathbb{G}'} = Id_{\mid \mathbb{G}'}$. 

For any bounded face $F$ of $\mathbb{G}'$, we have: 
\begin{align*}
vol''(F) = vol(\psi^{-1}(F)) = vol'(F). 
\end{align*} 
Besides, $\mathbb{G}'$ is a simple graph: we can apply Proposition \ref{MoserDac} for any bounded face $F$ of $\mathbb{G}'$ in order to transport $vol''_{\mid F}$ on $vol'_{\mid F}$. Applying the same arguments (gluing the homeomorphisms as we just did) allows us to construct an orientation-preserving homeomorphism $\phi_2$ such that: 
\begin{align*}
(vol')_{\mid (F'_{\infty})^{c}} &= (vol'' \circ \phi_2^{-1})_{\mid (F'_{\infty})^{c}}\\
(\phi_2)_{\mid \mathbb{G}'} &= Id_{\mid \mathbb{G}'}, 
\end{align*}
where we recall that $F'_{\infty} = \psi(F_{\infty})$ is the unbounded face of $\mathbb{G}'$. The orientation-preserving diffeomorphism at infinity $\Phi = \phi_2 \circ \phi_1$ satisfies the two conditions (\ref{condition1}) and (\ref{condition2}). \end{proof}

\chapter{Group of Reduced Loops}

 \label{Sectiongroupofreducedloop}
In order to construct planar Markovian holonomy fields, we need to study the group of reduced loops. In order to construct a gauge-invariant random holonomy field on $P$, it is enough to construct a measure on $\mathcal{M}ult_P(L,G)$ for any set $L$ of loops of $P$: this was the loop paradigm explained in Lemma \ref{multpara}. Let $\mathbb{G}$ be a finite planar graph, let $v$ be a vertex of $\mathbb{G}$, let $L$ be the set of loops $L_v(\mathbb{G})$ and let $P$ be equal to $P(\mathbb{G})$, then: 
\begin{align*}
\mathcal{M}ult_P(L,G) = {\sf Hom} (\pi_1(\mathbb{G}, v), G^{\vee}), 
\end{align*}
where $G^{\vee}$ is the group based on the same set at $G$, endowed with the multiplication $._{\vee}$ such that $x._{\vee}y = yx$ for any $x,y \in G$ and $\pi_1(\mathbb{G}, v)$ is the fundamental group of $\mathbb{G}$ based at $v$. In this chapter we study the group $\pi_1(\mathbb{G}, v)$. 

\section{Definition and facts}
\label{sec:red}
Let us fix, until the end on the section, a finite planar graph $\mathbb{G} = (\mathbb{V}, \mathbb{E}, \mathbb{F})$ and a vertex $v$ of $\mathbb{G}$. The group of based reduced loops $RL_{v}(\mathbb{G})$ is the fundamental group of $\mathbb{G}$ based at $v$: $RL_{v}(\mathbb{G}) = \pi_{1}(\mathbb{G},v)$. For convenience we define it using a combinatorial point of view, as  L\'{e}vy does in Section 1.3.4 of \cite{Levy}.

Let $l$ be a loop in $P(\mathbb{G})$. Recall the definition of equivalence of paths explained in Definition \ref{equiv}. The equivalence class of $l$ in $P(\mathbb{G})$, denoted by $[l]_{\simeq}$, contains a unique element of shortest combinatorial length, which is said to be reduced. Besides, if $l_{1}$ and $l_{2}$ are two loops in $P( \mathbb{G})$ based at $v$, $[l_{1} l_{2}]_{\simeq}$ depends only on $[l_{1}]_{\simeq}$ and $[l_{2}]_{\simeq}$. Thus, it is equivalent to speak about equivalence classes or about reduced paths and the set of reduced paths is endowed with an internal operation. 

\begin{definition}
The set of reduced loops in $P(\mathbb{G})$ based at $v$ will be denoted by $RL_{v}(\mathbb{G})$. Let $l_{1}$ and $l_{2}$ be two loops in $RL_{v}( \mathbb{G})$, we define $l_{1} \times l_{2} = [l_{1} l_{2}]_{\simeq}$.
\end{definition}
 Endowed with this operation, $RL_{v}( \mathbb{G})$ is a group. The existence of the inverse of a loop $l$ based at $v$ is due to the fact that $[l l^{-1}]_{\simeq} = [1_v]_{\simeq}$, where $1_v$ is the trivial path constant to $v$. 
In the following, we will denote the reduced product of $l_1$ with $l_2$ by $l_1l_2$ rather than $l_1\times l_2$. In the following we will study $RL_{v}( \mathbb{G})$ using families of lassos. Let us state a simple, yet crucial lemma about lassos.

\begin{lemme}
\label{conjug}
Two lassos in $L_{v}(\mathbb{G})$ whose meanders represent the same cycle are conjugated in $RL_v(\mathbb{G})$.
\end{lemme}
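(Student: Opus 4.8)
The plan is to produce an explicit conjugator built from the two spokes and a path joining the base points of the two meanders, and to carry out the whole computation inside $RL_v(\mathbb{G})$, i.e.\ modulo the reduction $\simeq_{P(\mathbb{G})}$, so that every cancellation of the form $cc^{-1}$ or $ss^{-1}$ is legitimate by Remark \ref{equalityequiv}. Write the two lassos as $l_1 = s_1 m_1 s_1^{-1}$ and $l_2 = s_2 m_2 s_2^{-1}$, where each $s_i$ is a spoke in $P(\mathbb{G})$ from $v$ to $x_i := \underline{m_i}$ and each $m_i$ is a simple loop in $P(\mathbb{G})$.

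First I would extract a connector from the hypothesis. Since $m_1$ and $m_2$ represent the same oriented cycle, Definition \ref{cycle} supplies paths $c,d$ in $P(\mathbb{G})$ with $m_1 = cd$ and $m_2 = dc$, where $c$ runs from $x_1$ to $x_2$ and $d$ from $x_2$ to $x_1$. Reducing $c^{-1} m_1 c = c^{-1}(cd)c$ and cancelling $c^{-1}c$ then yields the identity $m_2 = c^{-1} m_1 c$ in the group of reduced loops.

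Second I would substitute. From $l_1 = s_1 m_1 s_1^{-1}$ one gets $m_1 = s_1^{-1} l_1 s_1$ after reduction (a reduced loop based at $x_1$), and feeding this together with $m_2 = c^{-1} m_1 c$ into $l_2 = s_2 m_2 s_2^{-1}$ gives
\begin{align*}
l_2 = s_2 c^{-1} s_1^{-1}\, l_1\, s_1 c s_2^{-1} = g\, l_1\, g^{-1}, \qquad g := s_2 c^{-1} s_1^{-1}.
\end{align*}
Tracking endpoints, $g$ runs $v \to x_2 \to x_1 \to v$, so $g$ is a loop based at $v$ lying in $P(\mathbb{G})$; hence $g \in RL_v(\mathbb{G})$, with $g^{-1} = s_1 c s_2^{-1}$ its inverse in that group. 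This exhibits $l_1$ and $l_2$ as conjugate in $RL_v(\mathbb{G})$, which is exactly the assertion.

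The statement is essentially an algebraic bookkeeping exercise, so there is no deep obstacle; the one point that requires care is the first step, namely that ``represent the same cycle'' really delivers a single connector $c \in P(\mathbb{G})$ from $x_1$ to $x_2$ with $m_2 = c^{-1} m_1 c$. For two simple loops sharing the same image and orientation this is immediate from the single splitting $m_1 = cd$; but if one reads the cycle relation of Definition \ref{cycle} as generating an equivalence through a finite chain of elementary relations, one must compose the successive connectors into one path and check, by a short induction, that this composite stays inside $P(\mathbb{G})$ and that the intermediate rotated loops remain based at vertices of $\mathbb{G}$. Everything else is automatic, provided all cancellations are performed in $RL_v(\mathbb{G})$ rather than at the level of parametrized paths.
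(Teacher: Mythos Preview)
Your proof is correct and follows essentially the same route as the paper: from $m_1=cd$, $m_2=dc$ you build the conjugator $g=s_2c^{-1}s_1^{-1}$, which is exactly the loop the paper calls $p=s'c^{-1}s^{-1}$. Your final worry about chains is unnecessary, since the relation $l_1=cd$, $l_2=dc$ of Definition~\ref{cycle} is already transitive (a rotation composed with a rotation is a rotation), so a single connector $c$ is always available; but this does not affect the validity of your argument.
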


\begin{proof}
Let $l$ and $l'$ be two lassos based at $v$. They can be written as $l=sms^{-1}$ and $l'=s'm's'^{-1}$, where $s$ and $s'$ are respectively the spoke of $l$ and $l'$. As the loops $m$ and $m'$ are related, there exist $c$ and $d$ two paths such that $m=cd$ and $m'=dc$. Let us denote by $p$ the loop $s'c^{-1}s^{-1}$, then $l'=plp^{-1}$.
\end{proof}

\begin{definition}
A loop in $\mathbb{G}$ is called a facial lasso if it is a lasso and its meander represents a non-oriented facial cycle of $\mathbb{G}$.
\end{definition}

The exact definition of facial cycle, an oriented or non-oriented cycle which represents the boundary of a face, is given in \cite{Levy}, in Definition 1.3.13. For any face $F$ of $\mathbb{G}$, we will denote by $\partial F$ both the non-oriented and oriented facial cycles associated with $F$. If we specify that $\partial F$ is oriented, we will consider the anti-clockwise orientation. 

 In the following, we address the problem of creating families of lassos which generate the whole group $RL_v(\mathbb{G})$. The well-known Lemma \ref{generate1} provides a solution of this problem which is not adapted to our context, but will nevertheless be the departure point of our discussion. In order to state it, we need the definition of a spanning tree. 

\begin{definition}
A spanning tree $T$ of $\mathbb{G}$ is a subset of $\mathbb{E}$ such that: 
\begin{itemize}
\item if an edge $e$ is in $T$, $e^{-1}$ is also in $T$, 
\item the set of non degenerate loops in $T$ is empty, 
\item $\mathbb{V} =\big\{ \underline{e}, e \in \mathbb{E}\big\}$.
\end{itemize}
If $\mathbb{G}$ is composed of a unique edge $e$ which is a loop, the set $\{\underline{e}\}$ is considered as a spanning tree of $\mathbb{G}$. A rooted spanning tree is the data of a spanning tree $T$ and a vertex $v$ of $\mathbb{G}$. 
\end{definition}

Let $T$ be a spanning tree of $\mathbb{G}$ rooted at $v$. The restriction of $T$ to a subgraph $\mathbb{G}' = (\mathbb{V}', \mathbb{E}', \mathbb{F}')$ of $\mathbb{G}$ is defined as following: if $\mathbb{E} = \{l, l^{-1}\}$ with $l$ a loop, then $T'=\{\underline{l}\}$ and in the other cases we consider $T'= T \cap  \mathbb{E}'$.

\begin{definition}
\label{def:lassosedge}
Let $u$ and $w$ be two vertices of $\mathbb{G}$. The path $[u, w]_{T}$ is the unique injective path in $T$ joining $u$ to $w$. For any edge $e \in \mathbb{E}$, we set $l_{e, T} = [v,\underline{e}]_{T} e [\overline{e},v]_{T}$. 
\end{definition}

\begin{lemme}
\label{generate1}
Let $\mathbb{E}^{+}$ be an orientation of $\mathbb{G}$. The group $RL_{v}(\mathbb{G})$ is freely generated by the loops $\big\{l_{e, T}: e \in (\mathbb{E} \setminus T)^{+}\big\}$. 
\end{lemme}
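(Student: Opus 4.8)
The plan is to prove this in the classical way one shows that the fundamental group of a graph is free on the edges outside a spanning tree, carried out combinatorially with reduced paths. Throughout I would abbreviate $q_w = [v,w]_T$ for the tree geodesic from $v$ to a vertex $w$, so that by Definition \ref{def:lassosedge} one has $l_{e,T} = q_{\underline e}\, e\, q_{\overline e}^{-1}$. Two preliminary remarks make the statement clean: first, $l_{e^{-1},T} = l_{e,T}^{-1}$, so fixing the orientation $\mathbb{E}^{+}$ loses nothing; second, if $e \in T$ then $l_{e,T}$ is a reduced loop lying entirely in $T$, and since $T$ is a tree it contains no nontrivial reduced loop, whence $l_{e,T} \simeq 1_v$.

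For generation, I would take an arbitrary loop $l = e_1 e_2 \cdots e_n$ in $L_v(\mathbb{G})$ (a concatenation of edges with $\underline{e_1} = \overline{e_n} = v$ and $\overline{e_i} = \underline{e_{i+1}}$) and telescope:
$$l_{e_1,T}\, l_{e_2,T}\cdots l_{e_n,T} = q_{\underline{e_1}}\, e_1\, q_{\overline{e_1}}^{-1} q_{\underline{e_2}}\, e_2\, q_{\overline{e_2}}^{-1}\cdots e_n\, q_{\overline{e_n}}^{-1}.$$
Since $q_{\underline{e_1}} = q_{\overline{e_n}} = q_v = 1_v$ and each interior factor $q_{\overline{e_i}}^{-1} q_{\underline{e_{i+1}}} = q_{\overline{e_i}}^{-1} q_{\overline{e_i}} \simeq 1_v$, this product is $\simeq l$. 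As each $l_{e_i,T}$ is trivial (if $e_i \in T$) or of the form $l_{f,T}^{\pm 1}$ with $f \in (\mathbb{E}\setminus T)^{+}$, the family $\{l_{e,T} : e \in (\mathbb{E}\setminus T)^{+}\}$ generates $RL_v(\mathbb{G})$.

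Freeness is the heart of the argument and the step that needs care. I would take a word $w = l_{f_1,T}^{\varepsilon_1}\cdots l_{f_k,T}^{\varepsilon_k}$ with $f_i \in (\mathbb{E}\setminus T)^{+}$, $\varepsilon_i = \pm 1$, reduced in the sense that no adjacent pair has $f_i = f_{i+1}$ and $\varepsilon_{i+1} = -\varepsilon_i$, and show that the corresponding loop does not reduce to $1_v$. Spelling out $w$ and reducing each interior tree block $q_{\overline{f_i^{\varepsilon_i}}}^{-1} q_{\underline{f_{i+1}^{\varepsilon_{i+1}}}}$ to the tree geodesic $\tau_i = [\overline{f_i^{\varepsilon_i}}, \underline{f_{i+1}^{\varepsilon_{i+1}}}]_T$, the path becomes $q_{\underline{f_1^{\varepsilon_1}}}\, f_1^{\varepsilon_1}\, \tau_1\, f_2^{\varepsilon_2}\, \tau_2 \cdots \tau_{k-1}\, f_k^{\varepsilon_k}\, q_{\overline{f_k^{\varepsilon_k}}}^{-1}$. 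The key point is that this concatenation is already reduced as written: the only edges that could backtrack against a non-tree edge $f_i^{\varepsilon_i}$ are the tree edges bordering it (impossible, since $T$ is inversion-stable and $f_i \notin T$), or, when some $\tau_i$ is trivial, the next non-tree edge $f_{i+1}^{\varepsilon_{i+1}}$; but $f_{i+1}^{\varepsilon_{i+1}} = (f_i^{\varepsilon_i})^{-1}$ together with $f_i, f_{i+1} \in (\mathbb{E}\setminus T)^{+}$ forces $f_i = f_{i+1}$ and $\varepsilon_{i+1} = -\varepsilon_i$, contradicting reducedness. Hence each non-tree letter $f_i$ survives the reduction, so $w \not\simeq 1_v$ whenever $k \geq 1$, and the generating lassos satisfy no nontrivial relation. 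The main obstacle is precisely this bookkeeping: making rigorous that the non-tree letters cannot annihilate one another, which I would phrase as the assertion that the displayed concatenation $q_{\underline{f_1^{\varepsilon_1}}}\, f_1^{\varepsilon_1}\, \tau_1 \cdots f_k^{\varepsilon_k}\, q_{\overline{f_k^{\varepsilon_k}}}^{-1}$ is reduced.
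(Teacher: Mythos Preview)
Your argument is correct. Both the telescoping for generation and the reduced-word analysis for freeness are carried out carefully; the key observation that a non-tree edge $f_i^{\varepsilon_i}$ can only be cancelled by an adjacent tree edge (impossible) or by $f_{i+1}^{\varepsilon_{i+1}}$ when $\tau_i$ is empty (ruled out by reducedness of the word and the orientation condition $\mathbb{E}^+\cap(\mathbb{E}^+)^{-1}=\emptyset$) is exactly the right one.

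The paper takes a different, more algebraic route: it verifies the universal property of free groups by constructing, for any target group $G$ and any assignment $f$ on the putative generators, a multiplicative function on $P(\mathbb{G})$ that sends every tree edge to the identity and every $e\in(\mathbb{E}\setminus T)^+$ to $f(l_{e,T})$; restricted to $L_v(\mathbb{G})$ this is the required homomorphism. This is slick because it leverages the edge-paradigm identification $\mathcal{M}ult(P(\mathbb{G}),G)\simeq G^{\mathbb{E}^+}$ already set up in the paper, but it is a bit elliptic (generation is left implicit: one must notice that applying the construction with $G=RL_v(\mathbb{G})$ and $f$ the identity recovers exactly your telescoping identity). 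Your proof is more self-contained and makes both halves of ``freely generated'' explicit; the paper's proof is shorter and ties in with the multiplicative-function formalism used throughout.
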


\begin{proof}
We only have to prove that $\left( \big\{l_{e, T}: e \in (\mathbb{E} \setminus T)^{+}\big\}, RL_{v}(\mathbb{G})\right)$ satisfies the universal property of free groups: given any function $f$ from $\big\{l_{e, T}: e \in (\mathbb{E} \setminus T)^{+}\big\}$ to a group $G$ there exists a homomorphism $\phi: RL_{v}(\mathbb{G}) \to G$ such that $\phi(l_{e, T}) = f(l_{e, T}),$ for any $e \in (\mathbb{E} \setminus T)^{+}$.

Let $G$ be any group and let $1$ be its neutral element. We recall the Equation (\ref{lequationedgeparadigme}), in Subsection \ref{multi}, which shows that one can construct a multiplicative function from $P(\mathbb{G})$ to $G$ by specifying the value on $\mathbb{E}^{+}$. In the definition of multiplicative functions, we asked that the function reverses the order of multiplication. Only for this proof, we will suppose that it preserves the order. This means that if $g \in \mathcal{M}ult(P(\mathbb{G}),G)$, then for any path $p_1$ and $p_2$ in $P$ which can be concatenated, $g(p_1 p_2) = g(p_1) g(p_2)$. Let $f$ be a function from $\big\{l_{e, T}: e \in (\mathbb{E} \setminus T)^{+}\big\}$ to $G$. We define the element $\phi$ in $G^{\mathbb{E}^{+}}$ by: 
\begin{align*}
\phi(e) = &\left\{
    \begin{array}{ll}
       f(l_{e,T}), & \text{if } e \in (\mathbb{E} \setminus T)^{+}, \\ 
           1, & \mbox{otherwise.}
                 \end{array}
\right.
\end{align*}
This defines an element of $\mathcal{M}ult(P(\mathbb{G}), G)$, called also $\phi$, which restriction on $L_{v}(\mathbb{G})$ induces a homeomorphism from $RL_v(\mathbb{G})$ to $G$. Beside, for any path $p$ in $T$, $\phi( p) = 1$. Let $e$ be any element of $(\mathbb{E} \setminus T)^{+}$: 
\begin{align*}
\phi(l_{e,T}) = \phi([v,\underline{e}]_Te[\overline{e}, v]_T) = \phi([v,\underline{e}]_T) \phi(e) \phi([\overline{e}, v]_T)) = f(l_{e,T}).
\end{align*}
The universal property of free groups holds: $RL_{v}(\mathbb{G})$ is the free group generated by $ \big\{l_{e, T}: e \in (\mathbb{E} \setminus T)^{+}\big\}$. 
\end{proof}

\begin{remarque}
\label{rq:nbfree}
The loops $l_{e, T}$ defined above are lassos and since $\mathbb{G}$ is a finite planar graph, $\# (\mathbb{E} \setminus T)^{+} = \# \mathbb{E}^{+} - \# T =  \# \mathbb{E}^{+} - \# \mathbb{V} + 1 = \# \mathbb{F}^{b}$. 
\end{remarque}

\section{The example of $RL_0({\mathbb{N}^{2}})$}
We define in this section a family of facial lassos in $\mathbb{N}^{2}$ which will be important in Section \ref{carsec}. Even if this family can be studied with the help of the upcoming Proposition \ref{generate}, we give an elementary proof that it generates $RL_0({\mathbb{N}^{2}})$. 

\begin{notation}
\label{not:er}
Let $(i, j)$ and $(k, l)$ be couples of reals such that $i=k$ or $j=l$. We denote by $(i, j)\to (k, l)$ the straight line from $(i, j)$ to $(k, l)$ in $\mathbb{R}^{2}$. If $j=l$ and $k=i+1$ it will also be denoted by $e^{r}_{i, j}$; if $i=k$ and $l=j+1$ it will also be denoted by $e^{u}_{i, j}$.
\end{notation}

\begin{definition}
\label{base}
Let $i$, $j$ be two non negative integers. Let $\partial c_{i, j}$ be the loop in $L( \mathbb{N}^{2})$ defined by: 
\begin{align*}
\partial c_{i, j} &=  (i, j) \to (i+1,j) \to (i+1,j+1) \to (i, j+1) \to (i,j) \\&= e^{r}_{i, j} e^{u}_{i+1,j} (e^{r}_{i, j+1})^{-1}(e^{u}_{i, j})^{-1}.
\end{align*}
Let $p_{i, j}$ be the path in $P( \mathbb{N}^{2})$ defined by: 
$$p_{i, j}= (0,0) \to (i,0) \to (i, j) = e^{r}_{0,0} ... e^{r}_{i-1,0} e^{u}_{i,0} ... e^{u}_{i,j-1}.$$
Let $L_{i, j}$ be the reduced loop based at $0$:
$$L_{i, j}= \big[p_{i, j} \partial c_{i, j} p_{i, j}^{-1}\big]_{\simeq}.$$
\end{definition}
One can refer to Figure \ref{Lij} to have a clear representation of the lasso $L_{i,j}$. 

\begin{figure}%[here]
 \centering
  \includegraphics[width=180pt]{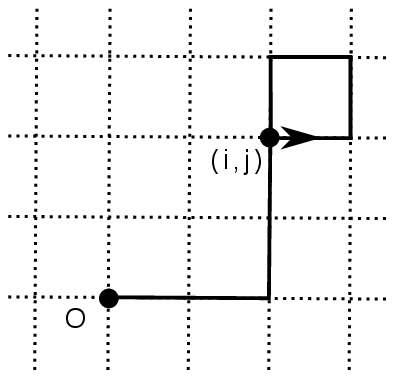}
 \caption{The lasso $L_{i,j}$.}
 \label{Lij}
\end{figure}

\begin{lemme}
\label{generativeexemple}
The family $\big(L_{i, j}\big)_{(i, j) \in \mathbb{N}^{2}}$ is a freely generating subset of $RL_{0}( \mathbb{N}^{2})$.
\end{lemme}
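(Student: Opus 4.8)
The plan is to work inside finite truncations of $\mathbb{N}^{2}$ and then pass to the limit. For integers $n,m\ge 1$ let $\mathbb{G}_{n,m}$ be the subgraph of $\mathbb{N}^{2}$ with vertices $(i,j)$, $0\le i\le n$, $0\le j\le m$. Since every loop and every relation among loops involves only finitely many edges, and since combinatorial reduction proceeds by local edge cancellations, the inclusions induce injections $RL_{0}(\mathbb{G}_{n,m})\hookrightarrow RL_{0}(\mathbb{G}_{n',m'})$ for $n\le n'$, $m\le m'$, and $RL_{0}(\mathbb{N}^{2})=\varinjlim_{n,m} RL_{0}(\mathbb{G}_{n,m})$. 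Each $L_{i,j}$ lies in $\mathbb{G}_{n,m}$ as soon as $i<n$ and $j<m$. Thus it suffices to prove, for every $n,m$, that the finite family $\big\{L_{i,j}: 0\le i<n,\ 0\le j<m\big\}$ is a free basis of $RL_{0}(\mathbb{G}_{n,m})$, in a way compatible with the inclusions.

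On $\mathbb{G}_{n,m}$ I would choose the comb spanning tree $T$ consisting of all vertical edges $e^{u}_{i,k}$ together with the bottom horizontal edges $e^{r}_{k,0}$. This is indeed a spanning tree, and its tree-path from $0$ to $(i,j)$ is exactly $p_{i,j}$; crucially this path does not depend on the chosen truncation. Taking the rightward/upward orientation, its complement is $\{e^{r}_{i,j}: 0\le i<n,\ 1\le j\le m\}$, of cardinality $nm=\#\mathbb{F}^{b}$ in agreement with Remark \ref{rq:nbfree}. By Lemma \ref{generate1} the lassos $\lambda_{i,j}:=l_{e^{r}_{i,j},T}$ freely generate $RL_{0}(\mathbb{G}_{n,m})$.

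The heart of the argument is an explicit telescoping identity relating the two families. A direct computation of the reduced words (the tree-path to $(i,j)$ runs up column $i$, the non-tree edge crosses at height $j$, and the return tree-path runs down column $i+1$ and back along the bottom) gives, within each fixed column $i$,
\begin{align}
\label{telescope}
\lambda_{i,j}=L_{i,j-1}^{-1}L_{i,j-2}^{-1}\cdots L_{i,0}^{-1},\qquad 1\le j\le m.
\end{align}
In particular every $\lambda_{i,j}$ is a product of the $L_{i,j'}$, so the family $\{L_{i,j}\}$ generates $RL_{0}(\mathbb{G}_{n,m})$; conversely, solving (\ref{telescope}) yields $L_{i,0}=\lambda_{i,1}^{-1}$ and $L_{i,j}=\lambda_{i,j}\lambda_{i,j+1}^{-1}$ for $j\ge 1$. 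Hence the passage from $(\lambda_{i,j})$ to $(L_{i,j})$ is a column-by-column triangular substitution, realized by elementary Nielsen transformations, and is therefore an automorphism of the free group. Consequently $\big\{L_{i,j}: 0\le i<n,\ 0\le j<m\big\}$ is again a free basis of $RL_{0}(\mathbb{G}_{n,m})$. (Alternatively, once generation is known, the fact that $nm$ elements generate a free group of rank $nm$ forces them to be a basis, by the Hopfian property of finitely generated free groups.)

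Finally, because the spokes $p_{i,j}$, and hence the loops $L_{i,j}$ and $\lambda_{i,j}$, are independent of the truncation, the free bases obtained for the various $\mathbb{G}_{n,m}$ are nested and compatible with the inclusions. Passing to the direct limit, $\big\{L_{i,j}:(i,j)\in\mathbb{N}^{2}\big\}$ is a free basis of $RL_{0}(\mathbb{N}^{2})$. I expect the main obstacle to be the bookkeeping behind (\ref{telescope}): one must fix the concatenation and orientation conventions for the reduced product and check the cancellations in the reduced words carefully, after which both generation and freeness follow formally.
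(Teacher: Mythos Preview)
Your proof is correct and follows essentially the same route as the paper's: both restrict to finite rectangular truncations, choose the identical comb spanning tree $T$ (all vertical edges plus the bottom row), invoke Lemma~\ref{generate1} for the free basis $\lambda_{i,j}=l_{e^{r}_{i,j},T}$, and establish the same telescoping identity---the paper writes it as $(l_{e^{r}_{i,j},T})^{-1}=L_{i,0}L_{i,1}\cdots L_{i,j-1}$, which is exactly your \eqref{telescope} inverted. The paper then concludes directly from the cardinality argument (your ``alternatively''), whereas you spell out the Nielsen-transformation viewpoint as well; both are valid and equivalent here.
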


\begin{proof} 
We only have to work with the finite planar graph: $$\mathbb{G} = \mathbb{N}^{2} \cap \big\{ (x, y), x \leq k, y \leq k'\big\},$$ where $k$ and $k'$ are any positive integers. We remind the reader that the intersection of a graph with a set was defined before Definition \ref{infiniteplanar}. Lemma \ref{generate1} implies that $RL_{0}(\mathbb{G})$ is a free group of rank $k \times k'$. 
Let $l$ be a loop in $RL_{0}(\mathbb{G})$. We endow the graph $\mathbb{G}$ with the following orientation: from bottom to top, from left to right. Let $T$ be the tree defined by: 
$$T~= \Big \{ \big(e^{u}_{i, j}\big)^{\pm1}, i\in \{0,...,k\}, j \in \{0,..., k'-1\}\Big\} \cup \Big\{ \big(e^{r}_{i,0}\big)^{\pm1}, i\in \{0,..., k-1\}\Big\}.$$ 
The root of $T$ will be chosen to be $(0,0)$. One can look at Figure \ref{arbre} to have a better idea of the graph and the tree we have just constructed. Applying Lemma \ref{generate1} to this situation, $l$ can be written as the reduced concatenation of some elements of $\big\{l_{e, T}^{\pm1}\big\}_{e \in (\mathbb{E}\setminus T)^{+}}$, where $\mathbb{E}$ is the set of edges of ~$\mathbb{G}.$ Moreover $(\mathbb{E} \setminus T)^{+}$ is equal to $\big\{ e_{i, j}^{r}, i \in \{0,..., k-1\}, j \in \{1,..., k'\}\big\}$. Since $$(l_{e^{r}_{i, j},T})^{-1} = L_{i,0} L_{i,1}... L_{i, j-1},$$ the family $\big(L_{i, j}\big)_{(i, j)\in\mathbb{N}^{2}}$ is a generating subset of $RL_{0}(\mathbb{G})$ whose cardinal is $k \times k'$:  it is a freely generating subset of $RL_{0}(\mathbb{G})$. 
\end{proof}

\begin{figure}%[here]
 \centering
  \includegraphics[width=180pt]{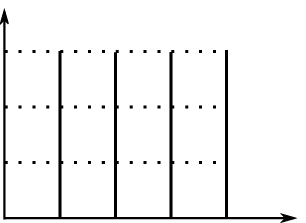}
 \caption{The spanning tree $T$.}
 \label{arbre}
\end{figure}

\section{Family of generators of $RL_{v}(\mathbb{G})$}
 \label{sec:generRL}
In the setting of planar graphs, this section is a generalization of Section $2.9$ in \cite{Levy} about tame generators. The proofs explained here do not use the ideas in \cite{Levy} but rather uses a recursive decomposition of graphs. Let $\mathbb{G}= (\mathbb{V}, \mathbb{E}, \mathbb{F})$ be a finite planar graph, let $v \in \mathbb{V}$ and let $T$ be a spanning tree of $\mathbb{G}$ rooted at $v$. Next definition follows Definition 2.4.6. of \cite{Levy}. Let $(l_{e, T})_{e \in \mathbb{E}}$ be the loops defined in Definition \ref{def:lassosedge}.

\begin{definition}
\label{def:lassogenerat}
Let $F$ be a bounded face of $\mathbb{G}$ and let $c_{F}$ be a simple loop representing the facial non-oriented cycle associated with $F$: it can be written as $c_{F}=e_{1}... e_{n}$. We define the reduced path $\l_{c_{F},T} = l_{e_{1},T}...l_{e_{n},T}$ in $RL_{v}(\mathbb{G})$.
\end{definition}

Let $(c_{F})_{F \in \mathbb{F}^{b}}$ be such that $c_{F}$ is a representative of the non-oriented facial cycle associated with $F$: it is called a {\em family of facial loops of $\mathbb{G}$}. We have defined a new family of loops $\big(\mathbf{l}_{c_{F},T}\big)_{F \in \mathbb{F}^{b}}.$ The difference with Definition $2.4.7$ in \cite{Levy} is that the choice of $c_{F}$ is not given by the choice of $T$, there is freedom to choose the base point of $c_F$.

A remark that we will often use is that, when one changes the root of $T$ from $v$ to an other vertex $v'$, this has the effect to conjugate the family $\big(\mathbf{l}_{c_{F},T}\big)_{F \in \mathbb{F}^{b}}$ by $[v',v]_{T}$. This comes from the fact that, for any spanning tree $T$, any vertices $v$, $v'$ and $v''$, we have the equality in the set of reduced paths, $[v,v'']_{T} = [v,v']_{T} [v',v'']_{T}$. 
Proposition \ref{generate} and Lemma \ref{generate2} give the two most important properties of these families of reduced loops. 
\begin{figure}%[here]
 \centering
  \includegraphics[width=270pt]{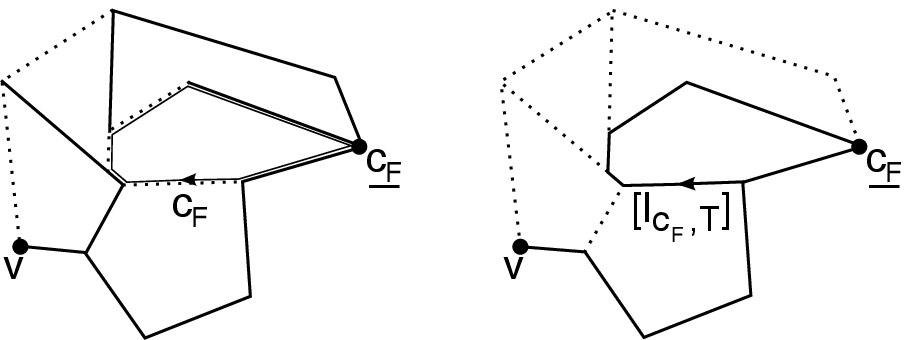}
 \caption{A graph, a spanning tree, a facial cycle: the associated reduced facial lasso. }
\end{figure}

\begin{proposition}
\label{generate}
For any bounded face $F$, $\l_{c_{F},T}$ is a facial lasso based at $v$ whose meander represents the non-oriented facial cycle $\partial F$. Besides, $(\l_{c_{F},T})_{F \in \mathbb{F}^{b}}$ freely generates $RL_{v}(\mathbb{G}).$
\end{proposition}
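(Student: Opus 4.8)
The plan is to treat the two assertions separately: the first follows from a direct telescoping computation, the second by reducing free generation to mere generation and then establishing generation by a recursive decomposition of the planar graph. For the first assertion, I would write $c_F = e_1\cdots e_n$ and expand
$$\l_{c_F,T} = l_{e_1,T}\cdots l_{e_n,T} = [v,\underline{e_1}]_T\, e_1\, [\overline{e_1},v]_T\cdots [v,\underline{e_n}]_T\, e_n\, [\overline{e_n},v]_T.$$
Since $c_F$ is a loop, $\overline{e_i} = \underline{e_{i+1}}$ for each $i$ (indices modulo $n$), so each intermediate pair $[\overline{e_i},v]_T[v,\underline{e_{i+1}}]_T = [\overline{e_i},v]_T[v,\overline{e_i}]_T$ reduces to the trivial path. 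After these cancellations only the first and last tree segments survive, and because $\underline{e_1}=\overline{e_n}$ they are mutually inverse, giving $\l_{c_F,T} = s_F\, c_F\, s_F^{-1}$ with $s_F = [v,\underline{e_1}]_T$. This exhibits $\l_{c_F,T}$ as a lasso with spoke $s_F$ and meander $c_F$; as $c_F$ represents the non-oriented facial cycle $\partial F$, it is by definition a facial lasso whose meander represents $\partial F$.

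For the second assertion I would first record that, by Lemma \ref{generate1}, $RL_v(\mathbb{G})$ is free on $\{l_{e,T} : e \in (\mathbb{E}\setminus T)^+\}$, and by Remark \ref{rq:nbfree} this basis has exactly $\#\mathbb{F}^b$ elements, which is also the number of facial lassos $\l_{c_F,T}$. Since a free group of rank $m$ cannot be generated by $m$ elements unless they form a free basis (free groups are Hopfian), it suffices to prove that $(\l_{c_F,T})_{F\in\mathbb{F}^b}$ generates $RL_v(\mathbb{G})$, i.e.\ that every $l_{e,T}$ lies in the subgroup they span.

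I would prove generation by induction on $\#\mathbb{F}^b$, the case $\#\mathbb{F}^b = 0$ (a tree) being vacuous. For the inductive step, among all cotree edges choose $e_0\in(\mathbb{E}\setminus T)^+$ whose fundamental cycle $l_{e_0,T}$ encloses the fewest bounded faces; a counting argument via Euler's relation (each bounded face inside the enclosed region is accounted for by a cotree edge inside it) forces this minimal region to be a single face $F_0$. Then $\partial F_0$ is exactly the fundamental cycle of $e_0$, carrying $e_0$ as its only cotree edge, so $\l_{c_{F_0},T} = l_{e_0,T}$ independently of the base point chosen for $c_{F_0}$. Removing $e_0$ yields a connected planar graph $\mathbb{G}'$ for which $T$ is still a spanning tree and $\#\mathbb{F}^b(\mathbb{G}') = \#\mathbb{F}^b - 1$, with $F_0$ merging into the face across $e_0$. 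By the induction hypothesis the facial lassos of $\mathbb{G}'$ generate $RL_v(\mathbb{G}') = \langle l_{e,T} : e\neq e_0\rangle$; since the boundary of every face of $\mathbb{G}'$ other than the merged one avoids $e_0$, each such facial lasso is literally a facial lasso of $\mathbb{G}$, while the facial lasso of the merged face is a product of $\l_{c_{F_0},T}=l_{e_0,T}$ with the facial lasso of the other face adjacent to $e_0$. Hence the subgroup $\langle \l_{c_F,T} : F\in\mathbb{F}^b\rangle$ contains every $l_{e,T}$ with $e\neq e_0$, together with $l_{e_0,T}$, and therefore equals $RL_v(\mathbb{G})$.

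The main obstacle will be the bookkeeping in the inductive step: relating the facial lassos of $\mathbb{G}'$ to those of $\mathbb{G}$ requires tracking spokes and base points, since changing the base point of a facial cycle conjugates the corresponding lasso, and conjugating a single free generator can destroy generation. I would control this using Lemma \ref{conjug}, which guarantees that any two lassos with the same meander-cycle are conjugate in $RL_v(\mathbb{G})$, together with the freedom—granted because the induction hypothesis holds for \emph{every} choice of facial loops in $\mathbb{G}'$—to pick base points in $\mathbb{G}'$ compatible with the prescribed ones in $\mathbb{G}$. Verifying that the resulting conjugations are by elements already in the span, so that generation is genuinely preserved, is the delicate point, and it is precisely here that planarity (through the nesting of fundamental cycles) rather than mere combinatorics is used.
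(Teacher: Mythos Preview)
Your first assertion (the telescoping computation showing $\l_{c_F,T}=[v,\underline{c_F}]_T\,c_F\,[v,\underline{c_F}]_T^{-1}$) is correct and is exactly what the paper does. Your reduction of free generation to mere generation via the Hopfian property is also sound and matches the paper's strategy. The existence of a cotree edge $e_0$ whose fundamental cycle bounds a single face is correct as well (it follows cleanly from tree--cotree duality: in the dual spanning tree $T^*$ there is a leaf among the bounded faces).

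The gap is in the inductive step. You claim that, for a suitable choice of $c'_{F'}$, the facial lasso $\l_{c'_{F'},T}$ of the merged face is a product of $\l_{c_{F_0},T}$ and $\l_{c_{F_1},T}$. This is false whenever the prescribed $c_{F_1}$ is based away from both endpoints of $e_0$. Writing $\l_{c_{F_0},T}=l_{e_0,T}^{\pm1}$ and $\l_{c_{F_1},T}=A\,l_{e_0,T}^{\epsilon}B$ with $A,B$ the products of $l_{e_i,T}$ coming from the edges of $\partial F_1$ before and after $e_0$, the admissible choices of $c'_{F'}$ give exactly the cyclic shifts of $AB$ (and their inverses). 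But an elementary normal-form argument in the free group on $x=l_{e_0,T}$ and the other generators shows that $\langle x,\,Ax^{\epsilon}B\rangle\cap\langle\text{other generators}\rangle\subseteq\langle AB\rangle$, and when $A,B$ are nontrivial with disjoint supports no cyclic shift of $AB$ lies in $\langle AB\rangle$ except $AB$ and $(AB)^{-1}$; yet already $AB\notin\langle x,Ax^{\epsilon}B\rangle$ (take $A=a$, $B=b$: every reduced word in $\{x,axb\}$ is reduced in $\{x,a,b\}$ and has the wrong shape). Your proposed remedies---Lemma~\ref{conjug} and the freedom in choosing $c'_{F'}$---cannot repair this, because the conjugating elements lie outside the span you control and no choice of base point changes the conclusion above. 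The induction therefore does not close.

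This is precisely why the paper does not induct by deleting an edge. Instead it fixes a cotree edge $e$, restricts $\mathbb{G}$ and $T$ to the interior of the meander of $l_{e,T}$, and invokes the separately proved Lemma~\ref{generate2}, whose induction establishes the stronger statement that the \emph{ordered product} of all the $\l_{c_F,T}$ (in some permutation, with signs) equals the outer boundary loop. That product identity is what allows one to exhibit each $l_{e,T}$ explicitly as a word in the facial lassos, bypassing the obstruction you ran into.
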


\begin{proof}
The equality $\l_{c_{F},T} = [v,\underline{c}_{F}]_{T} c_{F} [v,\underline{c}_{F}]_{T}^{-1}$ allows us to see that $\l_{c_{F},T}$ is a lasso of meander $c_{F}$ and spoke $[v,\underline{c}_{F}]_{T}$. 

As seen in Lemma \ref{generate1} and Remark \ref{rq:nbfree}, $RL_{v}(\mathbb{G})$ is a free group of rank $\# \mathbb{F}^{b}$. Thus, it remains to show that $(\l_{c_{F},T})_{F \in \mathbb{F}^{b}}$ generates $RL_{v}(\mathbb{G})$. Using Lemma \ref{generate1}, it is enough to show that for every $e \in \mathbb{E}\setminus T$, $l_{e, T}$ is a product of elements of the form $\l_{c_{F},T}^{\pm 1}$. Let $e$ be an edge which is not in $T$. As $T$ is a tree, there exist $c$, $p$ and $p'$ three simple paths in $T$ which do not intersect, except at the point $\overline{c}=\underline{p}=\underline{p'}$, such that: 
\begin{itemize}
\item $[v,\underline{e}]_{T} = c \ \!p,$ 
\item $[v,\overline{e}]_{T} = c\ \!p', $
\item the meander $m$ of $l_{e, T}$ is $pep'^{-1}$. 
\end{itemize}
Let $v'$ be any point of $\mathbb{G}$ inside the meander of $l_{e,T}$. Since $T$ is a tree, $[v,v']_{T}$ must begin with the path $c$. If not, it would create a non degenerate loop in $T$. Define $\mathbb{G}'$ (resp. $T'$) the restriction of $\mathbb{G}$ (resp. $T$) to the closure of the inside of the meander $m$ of $l_{e,T}$. An example is given in Figure  \ref{restric1}. Let $\overline{c}$ be the root of $T'$. For every bounded face $F$ of $\mathbb{G}$ inside $m$, $\l_{c_{F},T}=c\ \!\l_{c_{F},T'} c^{-1}$ where $\l_{c_{F},T'}$ is the facial lasso based at $\overline{c}$ defined in $\mathbb{G}'$ thanks to $T'$. 

Applying the upcoming Lemma \ref{generate2} to $\mathbb{G}'$ endowed with $T'$, $m$ can be written as a product of lassos of the form $\l_{c_{F},T'}^{\pm 1}$, thus $l_{e, T}$ can be written as a product of lassos of the form $\l_{c_{F},T}^{\pm 1}$. 
\end{proof}

\begin{figure}%[here]
 \centering
  \includegraphics[width=270pt]{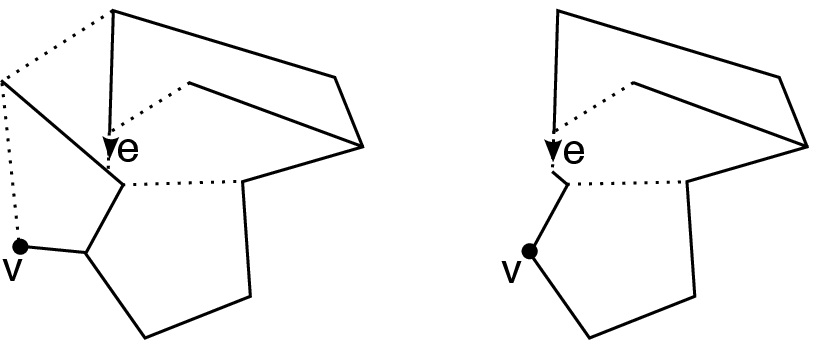}
 \caption{The restriction of $\mathbb{G}$ used in Proposition \ref{generate}.}
 \label{restric1}
\end{figure}

\begin{lemme}
\label{generate2}
Let us suppose that $v$ is actually a vertex on the boundary of the unbounded face. Let $(F_{i})_{i=1}^{ \# \mathbb{F}^{b}}$ be an enumeration of the bounded faces. Let $l_{\infty}$ be the only loop in $P(\mathbb{G})$ based at $v$ with anti-clockwise orientation which represents the non-oriented facial cycle of the unbounded face $F_{\infty}$. There exists a permutation $\sigma$ of $\{1,  ..., \# \mathbb{F}^{b}\}$ and an application $\epsilon: \{1, ...,n \} \to \{-1,1\}$ such that the equality  
\begin{align*}
\l_{c_{F_{\sigma(n)}},T}^{\epsilon(n)} \l_{c_{F_{\sigma(n-1)}},T}^{\epsilon(n-1)} ... \l_{c_{F_{\sigma(1)}},T}^{\epsilon(1)} = l_{\infty} 
\end{align*}
holds in $RL_{v}(\mathbb{G})$. Besides, for any integer $k\in \{1,...,n\}$, $\epsilon(k)$ is equal to $1$ if and only if $c_{F_{\sigma(k)}}$ is oriented anti-clockwise. 
\end{lemme}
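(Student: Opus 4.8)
The plan is to argue by induction on the number $n=\#\mathbb{F}^{b}$ of bounded faces. Replacing $c_{F}$ by $c_{F}^{-1}$ changes $\l_{c_{F},T}$ into $\l_{c_{F},T}^{-1}$ (since $\l_{c_{F},T}=[v,\underline{c}_{F}]_{T}\,c_{F}\,[v,\underline{c}_{F}]_{T}^{-1}$, as in the proof of Proposition \ref{generate}), so it is enough to treat the case where every $c_{F}$ is oriented anti-clockwise and to produce an ordering with $\l_{c_{F_{\sigma(n)}},T}\cdots \l_{c_{F_{\sigma(1)}},T}=l_{\infty}$; the general statement, together with the rule for $\epsilon(k)$, then follows by reinstating the given orientations. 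For $n=1$ the graph $\mathbb{G}$ is, up to trees which disappear in reduced loops, a single cycle, so $RL_{v}(\mathbb{G})$ is free of rank one generated by $\l_{c_{F_{1}},T}$ (Lemma \ref{generate1} and Remark \ref{rq:nbfree}); as $l_{\infty}$ lies in this group and represents the same non-oriented cycle as $\partial F_{1}$, one gets $l_{\infty}=\l_{c_{F_{1}},T}^{\pm 1}$ with the sign fixed by the orientation, which is the base case.

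For the inductive step I would first select a bounded face $F$ adjacent to the unbounded one: since there is a bounded face, the outer boundary $\partial F_{\infty}$ contains an edge $e$ that is not a bridge and has a bounded face $F$ on its other side. I choose the spanning tree $T$ so that $e\notin T$, which is possible as $e$ is not a bridge. Deleting $e$ produces a finite planar graph $\mathbb{G}'=\mathbb{G}\setminus\{e,e^{-1}\}$ which is still connected, has the same vertices and the same spanning tree $T$, whose unbounded face is $F\cup e\cup F_{\infty}$, and whose bounded faces are exactly the $n-1$ bounded faces of $\mathbb{G}$ other than $F$. Since $v\in \partial F_{\infty}\subseteq\partial F_{\infty}'$, the point $v$ still lies on the boundary of the unbounded face of $\mathbb{G}'$, and by Proposition \ref{generate} the loops $(\l_{c_{F'},T})_{F'\neq F}$ freely generate $RL_{v}(\mathbb{G}')$; crucially these are literally the same lassos as in $\mathbb{G}$, because the faces, the cycles $c_{F'}$ and the tree $T$ are unchanged.

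The heart of the proof is the geometric identity
\begin{align}
l_{\infty}=\l_{c_{F},T}\, l_{\infty}'
\end{align}
in $RL_{v}(\mathbb{G})$, where $l_{\infty}'$ is the anti-clockwise outer loop of $\mathbb{G}'$ based at $v$. To establish it I would track the traversal of the outer boundary: deleting $e$ forces the outer boundary of $\mathbb{G}'$ to make a detour along $\partial F\setminus\{e\}$ where that of $\mathbb{G}$ crossed $e$, and this detour is exactly the effect of inserting the facial cycle $c_{F}$ around $F$; after conjugating both outer loops back to $v$ along $T$, the discrepancy is precisely the lasso $\l_{c_{F},T}$. Granting this, I apply the induction hypothesis to $\mathbb{G}'$ to write $l_{\infty}'$ as an ordered product of the $\l_{c_{F'},T}$ over the bounded faces of $\mathbb{G}'$, and prepending $\l_{c_{F},T}$ (i.e. setting $\sigma(n)$ to be the index of $F$) yields the desired product over all $n$ bounded faces of $\mathbb{G}$.

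I expect the main obstacle to be the rigorous proof of the displayed identity, and in particular the orientation bookkeeping: one must check that the detour around $F$ contributes $\l_{c_{F},T}$ with exponent $+1$ exactly when $c_{F}$ is anti-clockwise, and must handle the possibility that $F$ and $F_{\infty}$ share more than the single edge $e$ (the extra shared edges become edges of $\mathbb{G}'$ that are traversed back and forth and cancel in the reduced loop $l_{\infty}'$). Keeping the base point $v$ fixed throughout and using that each $\l_{c_{F},T}$ is, by Proposition \ref{generate}, a $v$-based lasso with spoke inside $T$ is what makes the conjugations compatible and turns the final product into an identity of reduced loops rather than merely of cycles.
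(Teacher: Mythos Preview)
Your approach has a real gap, and it is exactly at the identity you single out as the ``heart of the proof''. First a minor point: $T$ is part of the given data, so you cannot ``choose the spanning tree $T$ so that $e\notin T$''; what you should do instead is pick $e$ among the non-tree edges appearing in $l_{\infty}$ (there is at least one, since $l_{\infty}$ is a nontrivial reduced loop). But even with that fix the displayed identity $l_{\infty}=\l_{c_{F},T}\,l_{\infty}'$ is simply \emph{false} in general.

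Writing $l_{\infty}=\alpha\, e\,\beta$ and $l_{\infty}'=\alpha\,\gamma^{-1}\beta$ (with $\gamma$ the arc of $\partial F$ from $\overline{e}$ back to $\underline{e}$ avoiding $e$), one has $l_{\infty}(l_{\infty}')^{-1}=\alpha\,(e\gamma)\,\alpha^{-1}$: a lasso whose meander represents $\partial F$ and whose spoke is $\alpha$. On the other hand $\l_{c_{F},T}=[v,\underline{c_{F}}]_{T}\,c_{F}\,[v,\underline{c_{F}}]_{T}^{-1}$ has spoke $[v,\underline{c_{F}}]_{T}$. These two lassos are conjugate (Lemma~\ref{conjug}) but they are equal in $RL_{v}(\mathbb{G})$ only when the spokes reduce to the same path, and this is governed by the \emph{base point} $\underline{c_{F}}$ of the given $c_{F}$, which you are not free to choose. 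Concretely, take the hexagon $v_{0}\cdots v_{5}$ with chords $v_{0}v_{2}$ and $v_{0}v_{4}$, base vertex $v=v_{0}$, tree $T=\{v_{0}v_{1},v_{1}v_{2},v_{2}v_{3},v_{0}v_{5},v_{5}v_{4}\}$, and let $F$ be the quadrilateral $v_{0}v_{2}v_{3}v_{4}$. The only non-tree edge of $l_{\infty}$ is $e=v_{3}v_{4}$, so your recipe forces this choice; if $c_{F}$ is the anti-clockwise loop \emph{based at $v_{0}$}, then $\l_{c_{F},T}=v_{0}v_{2}v_{3}v_{4}v_{0}$ has length $4$, whereas $l_{\infty}(l_{\infty}')^{-1}$ reduces to $v_{0}v_{1}v_{2}v_{3}v_{4}v_{0}v_{2}v_{1}v_{0}$, of length $8$. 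One checks directly that neither $\l_{c_{F},T}\,l_{\infty}'$ nor $l_{\infty}'\,\l_{c_{F},T}$ equals $l_{\infty}$. The underlying reason is that $\partial F$ carries three non-tree edges, so $\l_{c_{F},T}$ takes three genuinely distinct values as the base point of $c_{F}$ varies, and only two of them arise as $l_{\infty}(l_{\infty}')^{-1}$ or $(l_{\infty}')^{-1}l_{\infty}$.

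The paper's proof avoids this obstruction by not peeling one face at a time. It writes $l_{\infty}=p_{1}e_{1}\cdots e_{n}p_{n+1}$ with the $p_{i}$ in $T$, uses $l_{\infty}=l_{e_{1},T}\cdots l_{e_{n},T}$, and recurses into the subgraphs sitting inside the meanders of the $l_{e_{i},T}$; when $n=1$ it performs a further decomposition that \emph{explicitly uses the given base point of $c_{F}$} (writing $c_{F}=a\,e^{-1}b$), and in one sub-case even modifies the spanning tree for the subgraph. That extra care is precisely what absorbs the base-point dependence that breaks your identity.
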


\begin{proof}
In this proof, all the equalities will hold in $RL_{v}(\mathbb{G})$: from now on we will omit to specify this. The last assertion comes from a topological index argument. Let us suppose that there exists a permutation $\sigma$ of $\{1,  ..., \# \mathbb{F}^{b}\}$ and an application $\epsilon: \{1, ...,n \} \to \{-1,1\}$ such that $\l_{c_{F_{\sigma(n)}},T}^{\epsilon(n)} \l_{c_{F_{\sigma(n-1)}},T}^{\epsilon(n-1)} ... \l_{c_{F_{\sigma(1)}},T}^{\epsilon(1)} = l_{\infty}.$ We can compute the index of $l_{\infty}$: $\mathfrak{n}_{l_{\infty}} = \mathfrak{n}_{\l_{c_{F_{\sigma(n)}},T}}{\epsilon(n)}+ ...+\mathfrak{n}_{\l_{c_{F_{\sigma(1)}},T}}{\epsilon(1)}. $ For any  bounded face of $\mathbb{G}$, namely $F$, we can evaluate the last equality for any $x\in F$. This implies that for any $ i \in \{1,...,n\}$, $1 = \mathfrak{n}_{\l_{c_{F_{\sigma(i)}},T}}\epsilon(i),$ hence the second assertion.

Let us show the first part of Lemma \ref{generate2}. The proof goes by induction on the number $\# \mathbb{F}^{b}$ of bounded faces. For a graph with only one bounded face $F$ the result is true since $l_{\infty} = l_{c_{F}}^{\epsilon}$, with $\epsilon$ being $-1$ of $1$, depending on the orientation of~$c_{F}$. Let us suppose that $\# \mathbb{F}^{b} >1$. There exists a unique way to write $l_{\infty}$ as $p_{1}e_{1}p_2 e_2... e_n p_{n}$ with $p_{i}$ a path in $T$ (which can be constant) and $e_{i}$ an edge in $\mathbb{E}\setminus T$ bounding $F_{\infty}$ for any $i$ in $\{1, ..., n\}$. Let us decompose the graph $\mathbb{G}$ in $n$ subgraphs. The $i$-th subgraph $\mathbb{G}_{i}$ is the part of $\mathbb{G}$ which is inside the meander $m_{i}$ of $l_{e_{i},T}$. The vertex $v_{i}=\underline{m}_{i}$ will be the chosen point on the boundary of $\mathbb{G}_i$, then: 
\begin{itemize}
\item the restriction $T_{i}$ of $T$ to $\mathbb{G}_{i}$ is still a spanning tree of $\mathbb{G}_{i}$, 
\item for any bounded face $F$ in $\mathbb{G}_i$, $\l_{c_F, T} = [v, v_{i}]_{T} \l_{c_F, T_i}[v, v_{i}]_{T}^{-1}$, where $\l_{c_F, T_i}$ is the facial lasso based at $v_i$ defined in the graph $\mathbb{G}_i$.
\end{itemize}

If $n>1$, each of the graphs $\mathbb{G}_{i}$ has strictly less than $\# \mathbb{F}^{b}$ bounded faces. An example is drawn in Figure \ref{Decoupe2}. By induction the result holds for $\mathbb{G}_{i}$, based at $v_{i}$ and endowed with $T_{i}$. It follows that $l_{e_{i},T}$, which is equal to $[v,v_{i}]_{T} m_{i}[v,v_{i}]_{T} ^{-1}$, is an ordered product of all the facial lasso (or their inverse) associated with the faces $F$ in $\mathbb{G}_{i}$. But the family $(\mathbb{G}_{i})_{i}$ induces a partition of the set of bounded faces of $\mathbb{G}$. As $l_{\infty}=l_{e_{1},T}...l_{e_{n},T}$ it is now clear that the result holds. 
\begin{figure}%[here]
 \centering
  \includegraphics[width= 270pt]{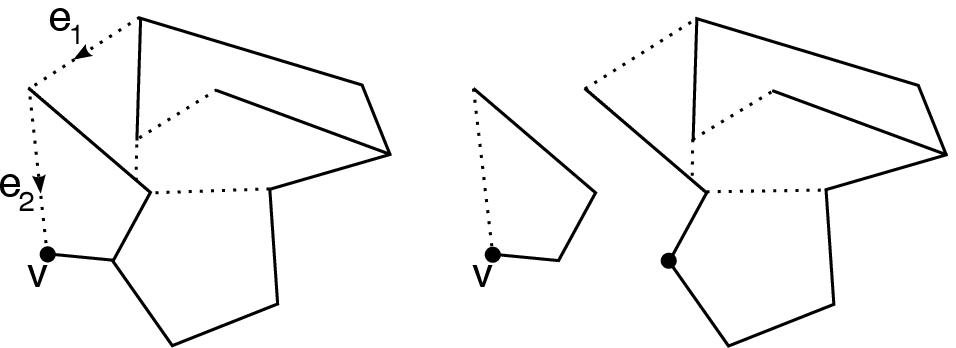}
 \caption{Decomposition when $n>1$.}
 \label{Decoupe2}
\end{figure}

It remains the case where $n=1$. In this case, $l_{\infty} = p e p'$, with $p$ and $p'$ two simple paths in $T$ and $e$ an edge in $\mathbb{E} \setminus T$ bounding $F_{\infty}$. We have to find a new way of decomposing $\mathbb{G}$ in order to apply the induction hypothesis. Let $F$ be the only bounded face which is surrounded by $e$. We can suppose $c_{F}$ turning clockwise thus it can be decomposed as $c_{F}=ae^{-1}b$. Consider the loop: $l =[v,\underline{c}_{F}]_{T} a e^{-1} [\underline{e},v]_{T}$. This is a lasso and as before we consider $\mathbb{G}_{l}$, which is the graph $\mathbb{G}$ restricted to the closure of the interior of the meander $m_{l}$ of $l$. We base this graph at $v_{l}=\underline{m}_{l}$. 

First of all, if $\mathbb{G}_l$ has the same number of faces than $\mathbb{G}$, as in Figure \ref{Decoupe4}, then the equality $l = l_\infty^{-1}$ must hold and thus one has $l_\infty = ([v,\underline{e}]_{T} b [\underline{c}_{F},v]_{T} )\l_{c_F}^{-1}$. The path $\tilde{l} = [v,\underline{e}]_{T} b [\underline{c}_{F},v]_{T}$ is a loop based at $v$ which represents the non-oriented facial cycle of the unbounded face of the graph obtained when one removes $e$ to $\mathbb{G}$. On this graph, $T$ is still a spanning tree and this graph has one less bounded face. The induction hypothesis allows us to conclude. 

\begin{figure}%[here]
 \centering
  \includegraphics[width= 270pt]{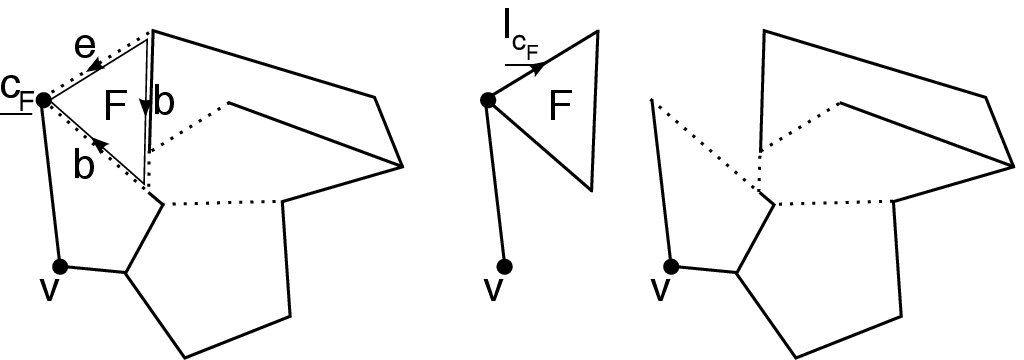}
 \caption{Decomposition when $\mathbb{G}_l$ has as many faces as $\mathbb{G}$.}
 \label{Decoupe4}
\end{figure}

In the case where $\mathbb{G}_l$ has less faces than $\mathbb{G}$, as in Figure \ref{Decoupe3}, the restriction of $T$ in $m_{l}$ is not a spanning tree. We will define $T_{l}$ to be the restriction of $T$ in $\overline{{\sf Int}(m_{l})}$ to which one adds all the edges in the path $a$ and we set its root equal to $v_{l}$. With these modifications, $T_{l}$ is a spanning tree of $\mathbb{G}_{l}$ and for any bounded face $F$ of $\mathbb{G}_l$, we have $\l_{c_F, T} = s_l \l_{c_F, T_l}s_l^{-1}$, where $\l_{c_F, T_l}$ is the facial lasso based at $v_l$ defined in $\mathbb{G}_l$ thanks to $T_l$ and $s_{l}$ is the spoke of $l$. We define also $l' = [v,\underline{c}_{F}]_{T} a [\overline{e},v]_{T}$ and $\mathbb{G}_{l'}$ the part of $\mathbb{G}$ inside $\overline{{\sf Int}(m_{l'})}$ of $l'$. The restriction of $T$ to $\mathbb{G}_{l'}$ is denoted $T_{l'}$. In this case $T_{l'}$ is a spanning tree of $\mathbb{G}_{l'}$. Besides, for any bounded face $F$ in $\mathbb{G}_{l'}$, $\l_{c_F, T} = s_{l'} \l_{c_F, T_{l'}}s_{l'}^{-1}$, where $\l_{c_F, T_{l'}}$ is the facial lasso based at $v_{l'}= \underline{m}_{l'}$ defined in $\mathbb{G}_{l'}$ thanks to $T_{l'}$ and $s_{l'}$ is the spoke of $l'$. In the case we are studying, $\mathbb{G}_{l}$ and $\mathbb{G}_{l'}$ have strictly less bounded faces than $\mathbb{G}$, thus we can apply the induction hypothesis. Using the link between facial lassos in $\mathbb{G}_{l}$ (resp. in $\mathbb{G}_{l'}$) and in $\mathbb{G}$, there exists an ordering on the bounded faces of $\mathbb{G}_{l}$ (resp. $\mathbb{G}_{l'}$) such that $l$ (resp. $l'$) is the ordered product of the facial lassos $(\l_{c_{F},T}^{\pm1})_{F \in \mathbb{F}^{b}_{l}}$ (resp. $(\l_{c_{F},T}^{\pm1})_{F\in\mathbb{F}^{b}_{l'}}$), where $\mathbb{F}_l^{b}$ (resp. $\mathbb{F}_l^{b'}$) is the set of bounded faces of $\mathbb{G}_l$ (resp. $\mathbb{G}_{l'}$).  Since $l_{\infty} = [v,\underline{e}]_{T} e [\overline{e},v]_{T} = l^{-1}l'$ and as any bounded face $F$ of $\mathbb{G}$ is either a bounded face of $\mathbb{G}_l$ or $\mathbb{G}_{l'}$, we can conclude that there exists a permutation $\sigma$ of $\{1,  ..., \# \mathbb{F}^{b}\}$ and an application $\epsilon: \{1, ...,n \} \to \{-1,1\}$ such that: 
\begin{align*}
\l_{c_{F_{\sigma(n)}},T}^{\epsilon(n)} \l_{c_{F_{\sigma(n-1)}},T}^{\epsilon(n-1)} ... \l_{c_{F_{\sigma(1)}},T}^{\epsilon(1)} = l_{\infty}.
\end{align*}
\begin{figure}%[here]
 \centering
  \includegraphics[width= 270pt]{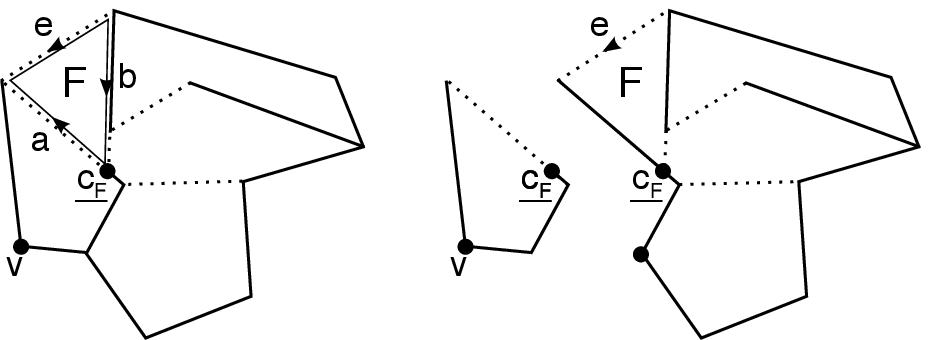}
 \caption{Decomposition when $\mathbb{G}_l$ has less faces than $\mathbb{G}$.}
 \label{Decoupe3}
\end{figure}
This allows us to conclude.
\end{proof}

Let us finish with a proposition which will be needed in order to prove Proposition \ref{creaYMpure2}. 

\begin{proposition}
\label{combi1}
Let $l_1$ and $l_2$ be two simple loops in $\mathbb{G}$ such that $\overline{{\sf Int(}l_1)}$ and $\overline{{\sf Int(}l_2)}$ are disjoint. There exists a spanning tree $T$, rooted at $v$, such that for any family of facial loops $\left(c_F\right)_{F \in \mathbb{F}^{b}}$ the following assertions hold: 
\begin{enumerate}
\item for every loop $l$ in $P(\mathbb{G})$ included in $\overline{{\sf Int(}l_1)}$, $\Big[[v,\underline{l}]_T l [v,\underline{l}]_T^{-1}\Big]_{\simeq}$ is a product in $RL_{v}(\mathbb{G})$ of elements of $ \left\{\l_{c_F,T}^{\pm 1} ; F \in \mathbb{F}^{b}, F \subset \overline{{\sf Int(}l_1)} \right\}$, 
\item for every loop $l$ in $P(\mathbb{G})$ included in $\overline{{\sf Int(}l_2)}$, $\Big[[v,\underline{l}]_T l [v,\underline{l}]_T^{-1}\Big]_{\simeq}$ is a product in $RL_{v}(\mathbb{G})$ of elements of $ \left\{\l_{c_F,T}^{\pm 1} ; F \in \mathbb{F}^{b}, F \subset \overline{{\sf Int(}l_2)}\right\}$.
\end{enumerate}
\end{proposition}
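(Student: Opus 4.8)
The plan is to build a single spanning tree $T$ of $\mathbb{G}$ that is adapted to the two disjoint disks, meaning that its restriction to each region is itself a spanning tree of that region, and then to reduce the statement to an application of Proposition \ref{generate} inside each region.

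First I would set $\mathbb{G}_i = \mathbb{G} \cap \overline{{\sf Int}(l_i)}$ for $i \in \{1,2\}$. Since $l_i \in P(\mathbb{G})$ is a simple loop bounding a disk and the bounded faces of $\mathbb{G}$ are homeomorphic to disks, nothing inside $l_i$ can be disconnected from $l_i$ (otherwise an annular, non-disk face would appear), so $\mathbb{G}_i$ is a finite planar graph; its bounded faces are exactly the faces $F \in \mathbb{F}^{b}$ with $F \subset \overline{{\sf Int}(l_i)}$, and its unbounded face is bounded by $l_i$. Because $\overline{{\sf Int}(l_1)}$ and $\overline{{\sf Int}(l_2)}$ are disjoint, $\mathbb{G}_1$ and $\mathbb{G}_2$ share no vertex and no edge. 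I would then pick an arbitrary spanning tree $T_i$ of $\mathbb{G}_i$, observe that $T_1 \sqcup T_2$ is a forest in $\mathbb{G}$, and extend it to a spanning tree $T$ of $\mathbb{G}$ (possible since $\mathbb{G}$ is connected and every forest extends to a spanning tree), which I root at $v$. A one-line cycle argument shows that $T_i$ is exactly the set of $T$-edges having both endpoints in $V(\mathbb{G}_i)$: any further such edge would close a cycle with $T_i$, which already spans $V(\mathbb{G}_i)$.

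The key geometric consequence is a \emph{single entry point} property: there is a vertex $r_i \in \mathbb{G}_i$ (the vertex of $\mathbb{G}_i$ closest to $v$ in $T$, equal to $v$ if $v \in \mathbb{G}_i$) such that, in the set of reduced paths, $[v,w]_T = s_i\,[r_i,w]_{T_i}$ for every $w \in \mathbb{G}_i$, where $s_i := [v,r_i]_T$ is a fixed spoke meeting $\mathbb{G}_i$ only at $r_i$. This follows because $T$ is a tree and $T_i$ is a connected spanning subtree of $V(\mathbb{G}_i)$: if $[v,w]_T$ entered $V(\mathbb{G}_i)$ at two distinct vertices, or left and re-entered it, one could build a cycle in $T$ using the connectivity of $T_i$. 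I expect proving this single-entry property cleanly to be the main obstacle, everything else being essentially bookkeeping.

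Finally, given a loop $l$ in $P(\mathbb{G})$ included in $\overline{{\sf Int}(l_i)}$ (hence $l \in P(\mathbb{G}_i)$), I would write
\begin{align*}
\Big[[v,\underline{l}]_T\, l\, [v,\underline{l}]_T^{-1}\Big]_{\simeq} = s_i\, \mu\, s_i^{-1}, \qquad \mu := \Big[[r_i,\underline{l}]_{T_i}\, l\, [r_i,\underline{l}]_{T_i}^{-1}\Big]_{\simeq} \in RL_{r_i}(\mathbb{G}_i).
\end{align*}
Proposition \ref{generate}, applied to $\mathbb{G}_i$ with the tree $T_i$ rooted at $r_i$, expresses $\mu$ as a product of the facial lassos $\l_{c_F,T_i}^{\pm 1}$ over the bounded faces $F$ of $\mathbb{G}_i$, that is over the $F \in \mathbb{F}^{b}$ with $F \subset \overline{{\sf Int}(l_i)}$. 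Since each such $c_F$ is based at a vertex $\underline{c_F} \in \mathbb{G}_i$, the relation $[v,\underline{c_F}]_T = s_i\,[r_i,\underline{c_F}]_{T_i}$ yields $\l_{c_F,T} = s_i\, \l_{c_F,T_i}\, s_i^{-1}$, so conjugating the decomposition of $\mu$ by $s_i$ turns it into a product of the $\l_{c_F,T}^{\pm 1}$ with $F \subset \overline{{\sf Int}(l_i)}$. This proves assertion (1); assertion (2) is identical after exchanging the roles of $l_1$ and $l_2$, using the same tree $T$ and the root $r_2$. The argument is valid for any family of facial loops $(c_F)_{F}$, since the only feature of $c_F$ used is its base point $\underline{c_F}$.
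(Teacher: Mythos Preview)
Your proof is correct and follows essentially the same strategy as the paper: build a spanning tree $T$ whose restriction to each region $\mathbb{G}_i$ is itself a spanning tree $T_i$, apply Proposition~\ref{generate} inside $\mathbb{G}_i$ with $T_i$ rooted at the entry vertex $r_i$, and then conjugate by the spoke $s_i=[v,r_i]_T$ using the identity $[v,w]_T=s_i\,[r_i,w]_{T_i}$ to pass from $\l_{c_F,T_i}$ to $\l_{c_F,T}$. The only cosmetic difference is in the construction of $T$: the paper starts from the boundary paths $e_1^{1}\cdots e_1^{n-1}$ and $e_2^{1}\cdots e_2^{m-1}$ (which makes the connectedness of $T\cap\mathbb{G}_i$ immediate), whereas you start from arbitrary spanning trees $T_i$ of the $\mathbb{G}_i$ and extend, then recover the same property via your ``single entry point'' argument.
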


\begin{proof}
We can decompose $l_1$ and $l_2$ as a concatenation of edges of $\mathbb{G}$: $l_1 =e_1^{1}...e_1^{n}$ and $l_2 =e_2^{1}... e_2^{m}.$ The set $\left\{e_1^{1}, ... , e_1^{n-1}, e_2^{1}, ..., e_2^{m-1}\right\}$ can be extended as a spanning tree $T$ of the graph $\mathbb{G}$, rooted at $v$. Thanks to  the construction, the restriction  $T_1$ of $T$ to $\overline{{\sf Int(}l_1)}$ is a spanning tree of the restriction $\mathbb{G}_1$ of $\mathbb{G}$ to $\overline{{\sf Int(}l_1)}$. We set $v_1$ to be equal to $\underline{e_1^{1}}$: this is the root of $T_1$. Applying Proposition \ref{generate}, for any loop $l$ inside $l_1$, $\Big[[v_1,\underline{l}]_{T_1} l [v_1,\underline{l}]_{T_1}^{-1}\Big]_{\simeq}$ is a product of elements of $\left\{ \l_{c_F, T_1}^{\pm 1}; F \in \mathbb{F}^{b}, F \subset \overline{{\sf Int(}l_1)}\right\}$. 
For any vertex $w$ in $\mathbb{G}_1$, $[v,w]_T = [v,v_1]_T[v_1,w]_{T_1}$ in $RL_{v}(\mathbb{G})$. Thus for any face $F \in \mathbb{F}^{b}$ such that $F$ is included in $\overline{{\sf Int(}l_1)}$, $\l_{c_F, T} = [v,v_1]_{T} \l_{c_F,T_1}[v,v_1]_{T}^{-1}$ in $RL_{v}(\mathbb{G})$ and for any loop $l$ in $P(\mathbb{G})$ included in $\overline{{\sf Int(}l_1)}$, $[v,\underline{l}]_{T} l [v_1, \underline{l}]_{T}^{-1}$ is equal in $RL_{v}(\mathbb{G})$ to $[v,v_1]_{T}[v_1,\underline{l}]_{T_1} l [v_1,\underline{l}]_{T_1}^{-1}[v,v_1]_{T}^{-1}.$
Thus the loop $[v,\underline{l}]_{T} l [v_1, \underline{l}]_{T}^{-1}$ is a product of elements of $ \left\{\l_{c_F,T}^{\pm 1} ; F \in \mathbb{F}^{b}, F \subset \overline{{\sf Int(}l_1)}\right\}$. The same holds for the loop $l_2$. 
\end{proof}

\section[Unicity and construction]{Random holonomy fields and the group of reduced loops}

Let us explain two applications of the group of reduced loops which concern the uniqueness and construction of random holonomy fields on the plane.

\begin{proposition}
\label{unic}
Let $\mu$ and $\nu$ be two stochastically continuous measures on $\big(\mathcal{M}ult(P(\mathbb{R}^{2}), G), \mathcal{B}\big)$ which are invariant by gauge transformations. The two assertions are equivalent: 
\begin{enumerate}
\item $\mu$ and $\nu$ are equal, 
\item there exist  $v\in \mathbb{R}^{2}$ and $A_v$ a good subspace of $L_v(\mathbb{R}^{2})$, such that for any finite planar graph $\mathbb{G}$ in $\mathcal{G}\big(A_v\big)$ which has $v$ as a vertex, there exist  a rooted spanning tree $T$ and a family of facial loops $\left(c_F\right)_{F \in \mathbb{F}^{b}}$ of $\mathbb{G}$ such that the law of $(h(\l_{c_F, T}))_{F \in \mathbb{F}^{b}}$ is the same under $\mu$ and under $\nu$.
\end{enumerate}
\end{proposition}

\begin{proof}
It is an easy application of the multiplicative property of random holonomy fields, Proposition \ref{unicite1} and Proposition \ref{generate}. 
\end{proof}

\begin{proposition}
\label{constructfinal}
Suppose that for any finite planar graph $\mathbb{G}$ in $\mathcal{G}\big({\sf Aff}(\mathbb{R}^{2})\big)$, we are given a diagonal conjugation-invariant measure $\mu_\mathbb{G}$ on $G^{\#\mathbb{F}^{b}}$, a rooted spanning tree $T$ and a family of facial loops $(c_F)_{F \in \mathbb{F}^{b}}$. For any finite planar graph $\mathbb{G}$, there is only one possibility to extend $\mu_\mathbb{G}$ as a gauge-invariant random field on $P(\mathbb{G})$, which will be also denoted by $\mu_\mathbb{G}$, such that the law of $\left(h(\l_{c_F,T})\right)_{F \in \mathbb{F}^{b}}$ under $\mu_\mathbb{G}$ is the same as the law of the canonical projections under the measure $\mu_\mathbb{G}$ on $G^{\#\mathbb{F}^{b}}$. If $\left(\mu_\mathbb{G}\right)_{\mathbb{G} \in \mathcal{G}({\sf Aff}(\mathbb{R}^{2}))}$ is uniformly locally stochastically $\frac{1}{2}$-H\"{o}lder continuous, if for any finite planar graphs $\mathbb{G}$ and $\mathbb{G}'$ in $\mathcal{G}\big({\sf Aff}(\mathbb{R}^{2})\big)$ such that $\mathbb{G} \preccurlyeq \mathbb{G}'$ and for any family of facial loops $(c_F)_{F \in \mathbb{F}^{b}}$ of $\mathbb{G}$, $\left(h(\mathbf{l}_{c_F, T})\right)_{F \in \mathbb{F}^{b}}$ has the same law under $\mu_{\mathbb{G}}$ as under $\mu_{\mathbb{G}'}$, then there exists a unique stochastically continuous random holonomy field $\mu$ on the plane such that for any finite planar graph $\mathbb{G}$ in $\mathcal{G}\big({\sf Aff}(\mathbb{R}^{2})\big)$, for any rooted spanning tree $T$ and for any family of facial loops $(c_F)_{F \in \mathbb{F}^{b}}$ of $\mathbb{G}$, the law of $\left(h(\mathbf{l}_{c_F, T})\right)_{F \in \mathbb{F}^{b}}$ is the same under $\mu$ as under $\mu_\mathbb{G}$. 
\end{proposition}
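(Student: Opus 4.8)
The plan is to build the field graph by graph through the loop paradigm, then glue the pieces into a projective limit and extend to the plane using the uniform H\"older regularity.

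First I would treat a single finite planar graph $\mathbb{G}$ together with its chosen rooted spanning tree $T$ and family of facial loops $(c_F)_{F \in \mathbb{F}^{b}}$. By Proposition \ref{generate} the facial lassos $(\l_{c_F,T})_{F \in \mathbb{F}^{b}}$ freely generate $RL_v(\mathbb{G})$, so evaluation on these lassos furnishes a bijection $\mathcal{M}ult_P(L_v(\mathbb{G}),G) \simeq G^{\#\mathbb{F}^{b}}$. Since the $\l_{c_F,T}$ are loops based at $v$, the gauge group $J_{L_v(\mathbb{G})} \simeq G$ acts by diagonal conjugation (Remark \ref{diagconj}), and under the bijection above this action is exactly the diagonal conjugation on $G^{\#\mathbb{F}^{b}}$. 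Hence the diagonal-conjugation-invariant measure $\mu_\mathbb{G}$ on $G^{\#\mathbb{F}^{b}}$ pulls back to a gauge-invariant measure on $\mathcal{M}ult_P(L_v(\mathbb{G}),G)$. Pushing this measure through the map $\iota$ of Lemma \ref{inv3}, restricting to $\mathcal{I}$ and extending via Proposition \ref{exten} (that is, applying Proposition \ref{crea1}) produces a gauge-invariant random holonomy field on $P(\mathbb{G})$ whose law of $(h(\l_{c_F,T}))_F$ is the prescribed $\mu_\mathbb{G}$. Uniqueness of this extension is Proposition \ref{unicite1}: every loop of $L_v(\mathbb{G})$ is a reduced word in the $\l_{c_F,T}$, so the joint law of $(h(\l_{c_F,T}))_F$ determines the law of $(h(l_1),\dots,h(l_n))$ for all loops based at $v$, which pins down the gauge-invariant field. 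This settles the first assertion.

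Next I would show that the fields $(\mu_\mathbb{G})$ form a projective system for the restriction maps. For $\mathbb{G} \preccurlyeq \mathbb{G}'$, both $\mu_{\mathbb{G}'} \circ \rho^{-1}_{P(\mathbb{G}),P(\mathbb{G}')}$ and $\mu_\mathbb{G}$ are gauge-invariant fields on $P(\mathbb{G})$. Choosing any family of facial loops of $\mathbb{G}$, the compatibility hypothesis says that $(h(\l_{c_F,T}))_F$ has the same law under $\mu_\mathbb{G}$ as under $\mu_{\mathbb{G}'}$; as these lassos lie in $P(\mathbb{G}) \subset P(\mathbb{G}')$ and freely generate $RL_v(\mathbb{G})$, the uniqueness argument of the first step forces the two fields to agree, i.e. $\mu_{\mathbb{G}'} \circ \rho^{-1}_{P(\mathbb{G}),P(\mathbb{G}')} = \mu_\mathbb{G}$. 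Because $\mathcal{G}\big({\sf Aff}(\mathbb{R}^{2})\big)$ is directed for $\preccurlyeq$ (Lemma \ref{goodspace}), this yields a genuine projective family of uniformly locally stochastically $\frac{1}{2}$-H\"older continuous gauge-invariant fields. Feeding this family into Proposition \ref{exten4} — for any finite set $L$ of loops based at $v$ in ${\sf Aff}(\mathbb{R}^{2})$ one picks a graph $\mathbb{G}$ with $L \subset L_v(\mathbb{G})$ and sets $\mu_L = \mu_\mathbb{G}\circ\rho^{-1}_{L,P(\mathbb{G})}$, well defined by the projectivity just proved and uniformly H\"older by hypothesis — produces a unique stochastically continuous gauge-invariant random holonomy field $\mu$ on the plane with $\mu \circ \rho^{-1}_{P(\mathbb{G}),P(\mathbb{R}^{2})} = \mu_\mathbb{G}$ for every finite $\mathbb{G}$. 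Since $\mu$ restricts to $\mu_\mathbb{G}$ on $P(\mathbb{G})$, the law of $(h(\l_{c_F,T}))_F$ under $\mu$ coincides with that under $\mu_\mathbb{G}$ for every graph and every choice of rooted spanning tree and facial loops, and uniqueness of $\mu$ follows from Proposition \ref{unic}.

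The main obstacle, I expect, is the projectivity step: one must verify that mere equality of the laws of the facial lassos under $\mu_\mathbb{G}$ and $\mu_{\mathbb{G}'}$ forces equality of the full gauge-invariant fields on $P(\mathbb{G})$. This is precisely where the free generation of Proposition \ref{generate} is indispensable, together with the observation that changing the root or the family of facial loops only conjugates or re-words the generators inside the free group $RL_v(\mathbb{G})$, so that no information beyond the joint law of the chosen generating lassos is ever required.
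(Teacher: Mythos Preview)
Your proof is correct and follows essentially the same route as the paper: free generation of $RL_v(\mathbb{G})$ by the facial lassos (Proposition \ref{generate}) to transport $\mu_\mathbb{G}$ onto $\mathcal{M}ult_P(L_v(\mathbb{G}),G)$, Proposition \ref{crea1} to extend to $P(\mathbb{G})$, then Proposition \ref{exten4} together with Lemma \ref{goodspace} to pass to the plane, with uniqueness from Proposition \ref{unic}. Your write-up is in fact more explicit than the paper's, which compresses the projectivity verification into the single sentence ``An application of Proposition \ref{exten4} and Lemma \ref{goodspace} allows us to construct the desired $\mu$''; your spelling out of why the compatibility hypothesis on the facial-lasso laws forces $\mu_{\mathbb{G}'}\circ\rho^{-1}_{P(\mathbb{G}),P(\mathbb{G}')}=\mu_\mathbb{G}$ is exactly the content hidden in that sentence.
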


\begin{proof}
For any finite planar graph $\mathbb{G}$ in $\mathcal{G}\big({\sf Aff}(\mathbb{R}^{2})\big)$ and any vertex $v$ of $\mathbb{G}$, there exists a natural measurable function from ${\sf Hom}\big(RL_{v}(\mathbb{G}),G^{\vee}\big)$ to the multiplicative functions $\mathcal{M}ult_{P(\mathbb{G})}(L_{v}(\mathbb{G}),G)$: we can transport any measure from the first space to the second. Using the freeness of the generating families $\mathbf{l}_{c_F, T}$, the multiplicity property of random holonomy fields and Proposition \ref{crea1}, we can extend $\mu_\mathbb{G}$ as a gauge-invariant random field on $P(\mathbb{G})$. This gauge-invariant random field does not depend on the choice of $v$. An application of Proposition \ref{exten4} and Lemma \ref{goodspace} allows us to construct the desired $\mu$. The uniqueness of $\mu$ is a consequence of Proposition \ref{unic}.  
\end{proof}

\part{Construction of Planar Markovian Holonomy Fields}

\chapter[Braids and Probabilities I]{Braids and Probabilities I: an Algebraic Point of View and Ginite Random Sequences}

For any finite planar graph $\mathbb{G}$, we have constructed in the last section a set of generating family of facial lassos of $\mathbb{G}$: what is the transformation which sends one generating family to an other? It has to be noticed that, as soon as the root of the spanning tree is chosen, for any generating family of lassos constructed in the last section, their product, up to some suitable permutation, is always equal to the same loop. This remark and Artin's theorem, Theorem \ref{artin}, motivate the study of the group of braids. 

\label{Braids}

\section{Generators, relations, actions}

A geometric definition of the braid group was given in the introduction. One can also define the braid group using a generator-relation presentation. Let $n$ be an integer greater than 2.
\begin{definition}
The braid group with $n$ strands $\mathcal{B}_{n}$ is the group with the following presentation: 
\begin{align*}
\left<\big(\beta_{i}\big)_{i=1}^{n-1} \text{\huge{\textbar} \normalsize} \forall i, j \in \{1,..., n-1\}, \begin{matrix} | i - j | = 1\Longrightarrow \beta_{i}\beta_{j}\beta_{i}= \beta_{j}\beta_{i}\beta_{j}\\\!\!\!\!\!\!\!\!\!\!\!| i-j | > 1\Longrightarrow \beta_{i} \beta_{j} = \beta_{j} \beta_{i} \end{matrix} \right>.
\end{align*}
\end{definition}

The elements $(\beta_{i})_{i=1}^{n-1}$ we defined in the introduction satisfy the braid group relations. An example of the first relation between $\beta_{i}$ and $\beta_{j}$ when $| i-j| = 1$ is given in Figure \ref{fig:braidrel}.

\begin{figure}%[here]
 \centering
  \includegraphics[width=150pt]{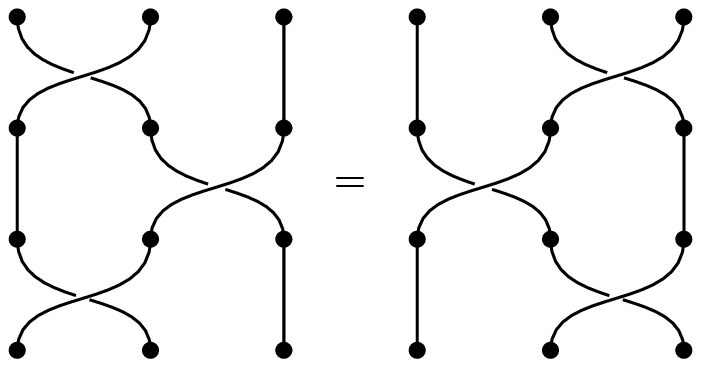}
 \caption{The braid relation}
 \label{fig:braidrel}
\end{figure}

This presentation of the braid group is not intuitive, yet it allows us to recall some natural actions of the braid group $\mathcal{B}_{n}$: one on the free group of rank $n$ and one on $G^{n}$. Let $\mathbb{F}_{n}$ be the free group of rank $n$ generated by $e_{1},   ..., e_{n}$ and let $G$ be any group. 

\begin{definition}
\label{actionlibre}
The natural action of $\mathcal{B}_{n}$ on $\mathbb{F}_{n}$ is given by: 
\begin{align*}
\beta_{i} e_{i} &= e_{i+1}, \\
\beta_{i} e_{i+1} &= e_{i+1} e_{i} e_{i+1}^{-1}, \\
\beta_{i} e_{j} &= e_{j}, \text{ for any j } \notin \{i,i+1\}. 
\end{align*} 
\end{definition}
There exists a diagrammatic way to compute the action: one puts $e_1$, ..., $e_n$ at the bottom of a diagram representing $\beta$, then propagates these $e_1, ..., e_n$ in the diagram from the bottom to the top with the rule that, at each crossing, the value on the string which is behind does not change and the value on the upper string is conjugated by the value of the other so that the product from right to left remains unchanged. At the end one gets a $n$-uple $(f_1,...,f_n)$ at the top of the diagram: the braid sends $e_i$ on $f_i$.

\begin{definition}
The natural action of $\mathcal{B}_{n}$ on $G^{n}$ is given by: 
\begin{align}
\label{action}
\beta_{i} \bullet (x_{1},  ...,x_{i-1},x_{i},x_{i+1},  ...,x_{n}) = (x_{1},  ...,x_{i-1},x_{i} x_{i+1} x_{i}^{-1}, x_{i},...,x_{n}), 
\end{align}
for any integer $i\in\{1,..., n-1\}$ and $n$-tuple $(x_{i})_{i=1}^{n}$ in $G^{n}$.
\end{definition}
There exists also a diagrammatic way to compute the action: one puts $x_1$, ..., $x_n$ at the upper part of a diagram representing $\beta$, then propagates these $x_1$, ..., $x_n$ in the diagram from the top to the bottom with the rule that, at each crossing, the value on the string which is behind does not change and the value on the upper string is conjugated by the value of the other so that the product from left to right remains unchanged. At the end one gets a $n$-uple $(y_1,...,y_n)$ at the top of the diagram: the braid sends $(x_1,...,x_n)$ on $(y_1,...,y_n)$.

Let $h$ be a $G$-valued multiplicative function on the free group. This means that $h\left(x^{-1}\right)=h(x)^{-1}$ and $h(xy) = h(y)h(x)$ for any $x$ and $y$ in $\mathbb{F}_{n}$. For any $n$-uple $(f_1, ..., f_n)$ of elements of $\mathbb{F}_{n}$, we define $h(f_1,...,f_n) = (h(f_1), ..., h(f_n))$. The following lemma shows how both actions are linked: it is a consequence of the diagrammatic formulation of both actions. 

\begin{lemme}
\label{actionmieux}
For any braid $\beta \in \mathcal{B}_n$, $h\left(\beta \bullet (e_1,...,e_n) \right) = \beta^{-1} \bullet h(e_1,...,e_n).$
\end{lemme}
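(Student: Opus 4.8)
The plan is to verify Lemma~\ref{actionmieux} on the generators $\beta_i$ of $\mathcal{B}_n$ and then extend to arbitrary braids by a group-theoretic argument. The statement to be proved is the identity
\begin{align*}
h\left(\beta \bullet (e_1,\dots,e_n)\right) = \beta^{-1} \bullet h(e_1,\dots,e_n),
\end{align*}
where the left action is that of Definition~\ref{actionlibre} on $\mathbb{F}_n$ and the right action is that of Equation~(\ref{action}) on $G^n$. Since the $\beta_i$ generate $\mathcal{B}_n$, it suffices to check the identity for each $\beta_i$ and then show the set of braids satisfying it is closed under composition and inversion; this is the standard strategy for proving an identity relating two actions through an intertwiner.

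First I would treat the generator case. Fix $i$ and compute both sides. On the left, Definition~\ref{actionlibre} gives $\beta_i \bullet (e_1,\dots,e_n)$ as the tuple equal to $e_j$ for $j\notin\{i,i+1\}$, with the $i$-th slot $e_{i+1}$ and the $(i+1)$-th slot $e_{i+1}e_ie_{i+1}^{-1}$. Applying $h$ entrywise and using the anti-multiplicativity $h(xy)=h(y)h(x)$ and $h(x^{-1})=h(x)^{-1}$, the $i$-th entry becomes $h(e_{i+1})$ and the $(i+1)$-th entry becomes $h(e_{i+1}^{-1})h(e_i)h(e_{i+1}) = h(e_{i+1})^{-1}h(e_i)h(e_{i+1})$. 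On the right, $\beta_i^{-1}$ acts on $G^n$; from Equation~(\ref{action}), $\beta_i$ sends the tuple $(x_1,\dots,x_n)$ to the tuple with $i$-th entry $x_ix_{i+1}x_i^{-1}$ and $(i+1)$-th entry $x_i$, so $\beta_i^{-1}$ sends $(x_1,\dots,x_n)$ to the tuple with $i$-th entry $x_{i+1}$ and $(i+1)$-th entry $x_{i+1}^{-1}x_ix_{i+1}$. Substituting $x_k=h(e_k)$ yields exactly the tuple computed on the left, establishing the identity for $\beta_i$.

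Next I would promote this to all of $\mathcal{B}_n$. Let $\Phi(\beta)$ denote the tuple $h(\beta\bullet(e_1,\dots,e_n))$ and $\Psi(\beta)=\beta^{-1}\bullet h(e_1,\dots,e_n)$; the goal is $\Phi=\Psi$ on $\mathcal{B}_n$. The cleanest route is the functorial one: the key observation is that $\beta\mapsto\big(\alpha\mapsto h(\beta\bullet\alpha)\big)$ respects composition because $\mathbb{F}_n$ carries a left $\mathcal{B}_n$-action, while $\beta\mapsto\beta^{-1}\bullet(-)$ is a right action turned into a left action by inversion. Concretely, for a product $\beta\beta'$ one writes $h((\beta\beta')\bullet(e_1,\dots,e_n)) = h(\beta\bullet(\beta'\bullet(e_1,\dots,e_n)))$; since $h$ is multiplicative and the action of $\beta$ on $\mathbb{F}_n$ is by substitution, one shows $h(\beta\bullet(w_1,\dots,w_n)) = \beta^{-1}\bullet h(w_1,\dots,w_n)$ for \emph{any} tuple of words $(w_1,\dots,w_n)$ by induction on word length, using that $h$ converts the automorphism-substitution into the conjugation action. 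Granting this compatibility lemma, the generator computation propagates multiplicatively and the identity holds for all $\beta$.

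The main obstacle is the bookkeeping in this propagation step: one must be careful that the left action of $\mathcal{B}_n$ on $\mathbb{F}_n$ (an action by automorphisms, where $\beta\beta'$ acts by first $\beta'$ then $\beta$) corresponds correctly, under the order-reversing $h$, to the action on $G^n$, and that the direction of conjugation matches. The author's own justification is that the identity is simply a consequence of the two diagrammatic descriptions of the actions — reading a braid diagram top-to-bottom versus bottom-to-top with conjugation at each crossing — so an alternative and arguably cleaner proof is to invoke the diagrammatic calculus directly: propagating $(e_1,\dots,e_n)$ upward and applying $h$ produces the same labelled crossings as propagating $(h(e_1),\dots,h(e_n))$ downward, with the order-reversal in $h$ exactly accounting for the passage from the ``product from right to left'' convention in $\mathbb{F}_n$ to the ``product from left to right'' convention in $G^n$. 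I expect the diagrammatic argument to be the intended one, with the inductive verification above serving as its rigorous underpinning.
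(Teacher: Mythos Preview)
Your generator computation is correct, and the overall strategy (check on generators, then propagate) is sound. The paper itself does not give a detailed proof: it simply says the lemma is a consequence of the diagrammatic formulation of both actions, which you also mention as an alternative.

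However, the specific intermediate claim you use for the propagation is false. You assert that $h(\beta\bullet(w_1,\dots,w_n)) = \beta^{-1}\bullet h(w_1,\dots,w_n)$ for \emph{any} tuple of words $(w_1,\dots,w_n)$. Take $n=2$, $\beta=\beta_1$, and the tuple $(e_1,e_1)$. Applying $a_{\beta_1}$ entrywise gives $(e_2,e_2)$, so the left side is $(h(e_2),h(e_2))$. But $\beta_1^{-1}\bullet(y_1,y_2)=(y_2,\,y_2^{-1}y_1y_2)$, so $\beta_1^{-1}\bullet(h(e_1),h(e_1))=(h(e_1),h(e_1))$. The two sides disagree whenever $h(e_1)\neq h(e_2)$. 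The action on $G^n$ only sees the tuple of values and cannot know that both slots originally held the same generator.

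The correct propagation quantifies over all anti-multiplicative $h$, not over all word tuples with a fixed $h$. Your generator identity in fact holds for every such $h$ (the computation only used the values $h(e_i),h(e_{i+1})$). The inductive step is then: for a product $\beta\beta'$, write $h\circ a_{\beta\beta'}=(h\circ a_\beta)\circ a_{\beta'}$, apply the identity for $\beta'$ to the anti-multiplicative map $h':=h\circ a_\beta$ to obtain $h'(a_{\beta'}(e_i))=(\beta'^{-1}\bullet(h'(e_1),\dots,h'(e_n)))_i$, and then use the identity for $\beta$ with $h$ to replace $(h'(e_1),\dots,h'(e_n))$ by $\beta^{-1}\bullet(h(e_1),\dots,h(e_n))$. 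This yields $(\beta\beta')^{-1}\bullet h(e_1,\dots,e_n)$ as required.
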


With the $n$-diagrams picture in mind, it is obvious that the application, which sends a braid on the permutation obtained by erasing the information at each crossing, is a homomorphism: it is the one which sends $\beta_{i}$ on the transposition $(i, i+1)$ for any integer ~$i \in \{1,  ...,n-1\}$. 

\begin{lemme}
\label{braidperm}
The operation of erasing the information at each crossing induces a natural homomorphism from $\mathcal{B}_{n}$ to $\mathfrak{S}_{n}$. We will denote the image of $\beta$ by $\sigma_{\beta}.$
\end{lemme}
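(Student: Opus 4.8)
The plan is to use the presentation of $\mathcal{B}_n$ stated just above, together with the universal property of groups defined by generators and relations. I would first define a candidate map on the generators by sending $\beta_i$ to the transposition $s_i := (i, i+1) \in \mathfrak{S}_n$ for each $i \in \{1, \ldots, n-1\}$. This is exactly what erasing the crossing information does to the elementary braid $\beta_i$: the diagram of $\beta_i$ crosses the two strands in positions $i$ and $i+1$ and leaves all the others fixed, so forgetting which strand passes over the other leaves precisely the permutation exchanging $i$ and $i+1$.

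Next I would check that the images $s_i = (i,i+1)$ satisfy the defining relations of $\mathcal{B}_n$. For $|i-j| > 1$ the transpositions $(i,i+1)$ and $(j,j+1)$ act on disjoint pairs of indices, hence commute, so $s_i s_j = s_j s_i$ holds in $\mathfrak{S}_n$. For $|i-j| = 1$, say $j = i+1$, a direct computation on the three indices $i, i+1, i+2$ shows that $s_i s_{i+1} s_i = (i, i+2) = s_{i+1} s_i s_{i+1}$, which is exactly the braid relation. Since both families of defining relations of $\mathcal{B}_n$ are respected, the universal property of the presentation yields a unique group homomorphism $\sigma : \mathcal{B}_n \to \mathfrak{S}_n$ with $\sigma(\beta_i) = (i,i+1)$, and we then write $\sigma_\beta = \sigma(\beta)$.

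Finally I would argue that this homomorphism coincides with the erasing-of-crossings operation. Any braid $\beta$ is represented by a word in the $\beta_i^{\pm 1}$, and erasing the crossing information replaces each factor $\beta_i^{\pm 1}$ by the transposition $(i,i+1)$; note that $(i,i+1)$ is its own inverse, so the sign is irrelevant. This is precisely the value of $\sigma$ on that word. The only point requiring genuine care is that the erasing operation be well defined on equivalence classes of $n$-diagrams representing the same braid, and this is handled automatically by passing through the presentation rather than manipulating diagrams directly: well-definedness is exactly the statement that the braid relations map to identities in $\mathfrak{S}_n$, which the relation check above establishes. I therefore do not expect any serious obstacle; the substance of the proof is the verification of the two relations in $\mathfrak{S}_n$, which is elementary.
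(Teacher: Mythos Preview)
Your proof is correct and is precisely the standard rigorous argument. The paper does not actually give a proof of this lemma: it merely states it, having remarked just before that ``with the $n$-diagrams picture in mind, it is obvious'' and that the map sends $\beta_i$ to $(i,i+1)$. Your use of the presentation and its universal property is exactly what makes this obvious remark into a proof, so there is nothing to add.
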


\section{Artin theorem and the group of reduced loops}
For any braid $\beta$ with $n$ strands, the action $a_{\beta}$ of $\beta$ on $\mathbb{F}_{n}$ is an automorphism of $\mathbb{F}_{n}$: there exists a morphism from $\mathcal{B}_{n}$ in $\mathbb{A}ut(\mathbb{F}_{n})$ which is moreover injective. In \cite{Artin} and \cite{Artin2}, Artin gave a sufficient and necessary condition for an automorphism of $\mathbb{F}_{n}$ to be the induced action of a braid in $\mathcal{B}_{n}$.
\begin{theorem}
\label{artin}
An automorphism $\mathbf{a}$ of $\mathbb{F}_{n}$ is the induced action of a braid in $\mathcal{B}_{n}$ if and only if the two following conditions hold: 
\begin{description}
\item[Conjugacy property] for any $i$ in $\{1,  ..., n\}$, $\mathbf{a}(e_{i})$ is in the same conjugacy class as one of the elements of $(e_{j})_{j=1}^{n}$. 
\item[Product invariance] $\mathbf{a}(e_{n}\ ...\ e_{1}) = e_{n}\ ...\ e_{1}$. 
\end{description}
\end{theorem}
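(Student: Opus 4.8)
The plan is to prove the two implications separately, the converse (``only if'') being a direct verification and the main (``if'') implication an induction on word length.

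For the \emph{only if} direction, I would check that both the conjugacy property and product invariance are preserved under composition with each generating automorphism $a_{\beta_i}$ and its inverse, and that the identity automorphism, induced by the trivial braid, satisfies them. The conjugacy property is immediate from Definition \ref{actionlibre}, since each $\beta_i$ sends every $e_j$ to a conjugate of a generator, and this is stable under composition. For product invariance, the braid relation gives the one-line computation $a_{\beta_i}(e_{i+1}e_i) = (e_{i+1}e_ie_{i+1}^{-1})\,e_{i+1} = e_{i+1}e_i$, whence $a_{\beta_i}(e_n\cdots e_1) = e_n\cdots e_1$; since both properties pass to arbitrary products of the $\beta_i^{\pm 1}$, every braid-induced automorphism of $\mathbb{F}_n$ satisfies the two conditions.

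For the \emph{if} direction, I would first extract the underlying permutation. The conjugacy property yields, for each $i$, an index $\mu_i$ and a reduced word $A_i$ with $\mathbf{a}(e_i) = A_i e_{\mu_i} A_i^{-1}$. Passing to the abelianization $\mathbb{F}_n^{\mathrm{ab}}\simeq\mathbb{Z}^n$, the automorphism $\mathbf{a}$ induces the assignment $i\mapsto \mu_i$ on the standard basis; since an induced automorphism of $\mathbb{Z}^n$ is invertible, $i\mapsto\mu_i$ must be a permutation (compatible with $\sigma_{\beta}$ of Lemma \ref{braidperm}). I would then induct on the quantity $d(\mathbf{a}) = \sum_{i=1}^n |A_i|$, where $|\cdot|$ denotes reduced length. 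In the base case $d=0$ we have $\mathbf{a}(e_i) = e_{\mu_i}$, and comparing reduced words in the identity $e_{\mu_n}\cdots e_{\mu_1} = e_n\cdots e_1$ (there is no cancellation, the $\mu_i$ being distinct) forces $\mu=\mathrm{id}$, so $\mathbf{a}$ is the identity. For the inductive step, I note that composing $\mathbf{a}$ with any $a_{\beta_j}^{\pm1}$ again satisfies both hypotheses (product invariance because $a_{\beta_j}$ fixes $e_n\cdots e_1$), and that $\mathbf{a}$ is braid-induced if and only if $\mathbf{a}\circ a_{\beta_j}^{\pm1}$ is; hence it suffices to find $j$ and a sign for which $d(\mathbf{a}\circ a_{\beta_j}^{\pm1}) < d(\mathbf{a})$.

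The heart of the argument is a length count on the product invariance identity $\mathbf{a}(e_n)\cdots\mathbf{a}(e_1) = e_n\cdots e_1$. Before reduction the left side has length $n + 2d$ while the right side has length $n$, so exactly $2d$ letters cancel; moreover the $n$ central generators $e_{\mu_j}$ all survive (each factor is reduced, and they account precisely for the $n$ letters of $e_n\cdots e_1$), so all cancellation is concentrated at the junctions $A_{j+1}^{-1}A_j$ between consecutive factors. When $d>0$ I would locate an adjacent pair at which genuine cancellation occurs and verify that composing with a suitable $a_{\beta_j}^{\pm1}$, which replaces the pair $\bigl(\mathbf{a}(e_j),\mathbf{a}(e_{j+1})\bigr)$ by a conjugated pair such as $\bigl(\mathbf{a}(e_j)^{-1}\mathbf{a}(e_{j+1})\mathbf{a}(e_j),\ \mathbf{a}(e_j)\bigr)$, strictly decreases $\sum_i|A_i|$. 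The induction hypothesis then makes the shortened automorphism a braid action, and recomposing with $a_{\beta_j}^{\mp1}$ exhibits $\mathbf{a}$ itself as one. I expect this combinatorial localization of a length-reducing junction to be the main obstacle; the verifications of hypothesis-preservation and of the base case are routine bookkeeping by comparison.
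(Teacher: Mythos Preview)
The paper does not supply its own proof of Theorem~\ref{artin}: it is stated as a classical result and attributed to Artin's original papers \cite{Artin,Artin2}. There is therefore no paper-proof to compare against; your sketch is essentially Artin's original argument.

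Your ``only if'' direction is complete. For the ``if'' direction, the induction scheme on $d(\mathbf a)=\sum_i|A_i|$ and the base case are correct. The point you flag as ``the main obstacle'' is indeed the whole content of the theorem: one must show that when $d>0$ there is always an adjacent pair $(j,j{+}1)$ and a sign such that precomposing with $a_{\beta_j}^{\pm1}$ strictly decreases $d$. Your justification for this (the length count forcing all cancellation into the junctions $A_{j+1}^{-1}A_j$, with the central letters $e_{\mu_j}$ surviving) is not yet a proof: the claim that each $e_{\mu_j}$ survives is exactly what needs to be established, not assumed, and in Artin's argument this requires a careful case analysis of how the reduced word $e_n\cdots e_1$ can be partitioned among the factors $A_ie_{\mu_i}A_i^{-1}$. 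In particular one must rule out the possibility that a junction cancels completely and then absorbs an adjacent $e_{\mu_j}$, and one must check that the specific braid move chosen (not just some move) gives a strict decrease. As written, the inductive step is a plausible outline but not a proof; filling it in is nontrivial and is precisely the work done in \cite{Artin,Artin2}.
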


\begin{remarque}
\label{braidconj}
Let $\beta$ be a braid in $\mathcal{B}_{n}$ and $a_{\beta}$ the induced action on $\mathbb{F}_{n}$. For each $i$ in $\{1,   ..., n\}$, $a_{\beta}(e_{i})$ is conjugated to $e_{\sigma_{\beta}(i)}$ and this property characterizes~$\sigma_{\beta}$.
\end{remarque}

Let $\mathbb{G}=(\mathbb{V}, \mathbb{E}, \mathbb{F})$ be a finite planar graph, $v$ be a vertex of $\mathbb{G}$, $T$ and $T'$ be two spanning trees of $\mathbb{G}$ rooted at $v$, $\left(c_F\right)_{F \in \mathbb{F}^{b}}$ and $(c'_F)_{F \in \mathbb{F}^{b}}$ be two families of oriented facial loops oriented anti-clockwise. There exist two freely generating families of the group of reduced loops of $\mathbb{G}$ associated with $\left(c_F\right)_{F \in \mathbb{F}^{b}}$ and $(c'_F)_{F \in \mathbb{F}^{b}}$. Using Artin theorem, we can characterize the transformation which sends one family on the other. 

\begin{proposition}
\label{tresseetgene}
There exists an enumeration of the bounded faces $(F_i)_{i=1}^{\# \mathbb{F}^{b}}$ and a braid $\beta$ in $\mathcal{B}_{\#\mathbb{F}^{b}}$ such that: 
\begin{align*}
\beta \bullet \big(\l_{c_{F_i}, T}\big)_{i=1}^{\#\mathbb{F}^{b}} = \big(\l_{c'_{F_{\sigma(i)}}, T'}\big)_{i=1}^{\#\mathbb{F}^{b}}, 
\end{align*}
where $\sigma = \sigma_\beta$ and where $\beta$ is seen as acting on the free group generated by $\big(\l_{c_{F_i}, T}\big)_{i=1}^{\#\mathbb{F}^{b}}$.
\end{proposition}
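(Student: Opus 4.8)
We have a finite planar graph $\mathbb{G}$ with a vertex $v$, two rooted spanning trees $T, T'$ and two families of anticlockwise-oriented facial loops $(c_F)$ and $(c'_F)$. Each gives a free generating family of $RL_v(\mathbb{G})$ via the reduced facial lassos $\l_{c_F,T}$ and $\l_{c'_F,T'}$. We want to show the transformation sending one family to the other is (up to permutation of the faces) realized by the action of a braid.

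**The plan**

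The plan is to apply Artin's theorem (Theorem \ref{artin}). Since $RL_v(\mathbb{G})$ is a free group of rank $\#\mathbb{F}^b$ (by Proposition \ref{generate}, or Lemma \ref{generate1} with Remark \ref{rq:nbfree}), both families $\big(\l_{c_{F},T}\big)_F$ and $\big(\l_{c'_{F},T'}\big)_F$ are free bases. After fixing an enumeration $(F_i)$ of the bounded faces, there is a unique automorphism $\mathbf{a}$ of the free group on the symbols $x_i := \l_{c_{F_i},T}$ determined by sending each $x_i$ to the element of $RL_v(\mathbb{G})$ written in the $x$-basis that equals $\l_{c'_{F_{\sigma(i)}},T'}$, for a permutation $\sigma$ to be chosen. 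I would verify that $\mathbf{a}$ satisfies the two Artin conditions; then $\mathbf{a}=a_\beta$ for some braid $\beta$, and unwinding Lemma \ref{actionmieux}'s diagrammatic conventions gives exactly $\beta\bullet\big(\l_{c_{F_i},T}\big)_i=\big(\l_{c'_{F_{\sigma(i)}},T'}\big)_i$ with $\sigma=\sigma_\beta$.

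First I would establish the \textbf{conjugacy property}. By Proposition \ref{generate}, each $\l_{c_F,T}$ (resp.\ $\l_{c'_F,T'}$) is a facial lasso whose meander represents the non-oriented facial cycle $\partial F$; by Lemma \ref{conjug}, two lassos in $L_v(\mathbb{G})$ whose meanders represent the same cycle are conjugate in $RL_v(\mathbb{G})$. Hence for each face $F$ the generators $\l_{c_F,T}$ and $\l_{c'_F,T'}$ are conjugate (they have the same anticlockwise orientation, so no inversion is needed). This forces $\mathbf{a}(x_i)=\l_{c'_{F_{\sigma(i)}},T'}$ to be conjugate in the $x$-basis to $x_{\sigma(i)}$, which is precisely the conjugacy condition and, by Remark \ref{braidconj}, pins down $\sigma_\beta=\sigma$.

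Next I would establish the \textbf{product invariance}. This is where Lemma \ref{generate2} is the crucial ingredient, and it is the main obstacle. Taking $v$ on the boundary of the unbounded face (one may change the root, conjugating both families by a tree path, which does not affect the braid relation — a point needing a short justification via the remark after Definition \ref{def:lassogenerat}), Lemma \ref{generate2} gives, for each family, an ordering and signs with the ordered product of the generators equal to the single loop $l_\infty$ bounding the unbounded face; moreover all $\epsilon(k)=1$ since the faces are oriented anticlockwise. Thus both ordered products equal the \emph{same} reduced loop $l_\infty\in RL_v(\mathbb{G})$, independent of $T$, $c_F$. Choosing the enumeration $(F_i)$ so that $x_n\cdots x_1$ is this ordered product for the $T$-family, the $\sigma$-reordered product $\l_{c'_{F_{\sigma(n)}},T'}\cdots\l_{c'_{F_{\sigma(1)}},T'}$ also equals $l_\infty$. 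This yields $\mathbf{a}(x_n\cdots x_1)=x_n\cdots x_1$, matching the product-invariance convention of Theorem \ref{artin} (with the $e_n\cdots e_1$ ordering). The delicate part is aligning the two enumerations and the permutation $\sigma$ so that the orderings produced by Lemma \ref{generate2} for $T$ and for $T'$ are compatible with a \emph{single} $\sigma$; I expect to argue that since both orderings realize the same target $l_\infty$ and the generators are conjugate face-by-face, the permutation relating them is exactly the $\sigma$ already determined by the conjugacy property, and the two Artin conditions together then certify the existence of $\beta$ with $\sigma_\beta=\sigma$.
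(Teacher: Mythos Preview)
Your approach is correct and is essentially the paper's proof: use Proposition~\ref{generate} and Lemma~\ref{conjug} for the conjugacy condition, Lemma~\ref{generate2} for product invariance, then apply Artin's theorem and read off $\sigma_\beta=\sigma$ via Remark~\ref{braidconj}. Your closing worry about ``aligning'' the permutation is unnecessary: fix the enumeration $(F_i)$ by applying Lemma~\ref{generate2} to $T$, then \emph{define} $\sigma$ by applying Lemma~\ref{generate2} to $T'$ (relative to that enumeration); the conjugacy condition holds for \emph{any} $\sigma$ since $\l_{c'_{F_{\sigma(i)}},T'}$ is conjugate to $\l_{c_{F_{\sigma(i)}},T}=x_{\sigma(i)}$, so there is no compatibility to check---$\sigma_\beta=\sigma$ is then a consequence, not a constraint.
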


\begin{proof}
For any bounded face $F$ of $\mathbb{G}$, the first part of Proposition \ref{generate} asserts that $\l_{c_{F},T}$ and $\l_{c'_{F},T'}$ are facial lassos based at $v$ whose meanders represent the facial cycle~$\partial F$ oriented anti-clockwise. By Lemma \ref{conjug}, we deduce that $\l_{c'_{F},T'}$ is conjugated to $\l_{c_{F},T}$ in $RL_{v}(\mathbb{G})$. 
Besides, thanks to Lemma \ref{generate2}, we can find an enumeration of the bounded faces $(F_{i})_{i=1}^{\# \mathbb{F}^{b}}$ and a permutation $\sigma$ of $\{1,  ..., \# \mathbb{F}^{b}\}$ such that: 
\begin{enumerate}
\item $\l_{c_{F_{n}},T}\ \l_{c_{F_{n-1}},T}...\ \l_{c_{F_{1}},T} = l_{\infty}, $
\item $\l_{c'_{F_{\sigma(n)}},T'}\ \l_{c'_{F_{\sigma(n-1)}},T'}...\ \l_{c'_{F_{\sigma(1)}},T'} = l_{\infty},$
\end{enumerate}
in $RL_v(\mathbb{G})$, where $l_{\infty}$ is the facial loop based at $v$, turning anti-clockwise, representing the non-oriented facial cycle $\partial F_{\infty}$. Besides, Proposition \ref{generate} tells us that both $(\l_{c_{F}})_{F \in \mathbb{F}^{b}}$ and $(\l_{c'_{F}})_{F \in \mathbb{F}^{b}}$ are free families of generators of the free group $RL_{v}(\mathbb{G})$. A natural automorphism of $RL_{v}(\mathbb{G})$ is defined by:
\begin{align*}
\forall i\in \{1,   ..., \# \mathbb{F}^{b} \},\ \mathbf{a}(\l_{c_{F_{i}}, T}) = \l_{c'_{F_{\sigma(i)}},T'}. 
\end{align*}
This automorphism of free group satisfies the conditions of Artin's theorem given in  Theorem~\ref{artin}. There exists a braid $\beta$ such that $\mathbf{a}$ is equal to $\mathbf{a}_{\beta}$, the action induced by $\beta$ on the free group $RL_{v}(\mathbb{G})$ with free generators $(\l_{c_{F},T})_{F \in \mathbb{F}^{b}}$. Using Remark \ref{braidconj}, it is straightforward to see that $\sigma$ is equal to $\sigma_{\beta}$. \end{proof}

\section{Braids and finite sequence of random variables}
In the last section, the transformations between families of loops of the form $\left(\l_{c_F,T}\right)_{F\in\mathbb{F}^{b}}$ have been characterized. In the context of random holonomy fields, a random variable is associated with any loop: it is natural to study the action of the braid groups on finite sequence of random variables. When one has to deal with non-commutative random variables (i.e. random variables in a non-commutative group), this action is in some sense more appropriate than the symmetrical group action which is often studied in the mathematical literature. This leads to a theory of braidability which is more efficient than the exchangeability concept for sequences of random variables in a non-commutative group. Let $n$ be an integer strictly greater than $1$ and $G$ be an arbitrary topological group.

\begin{definition}
The braid group $\mathcal{B}_n$ acts on the set of $n$-tuple of $G$-valued random variables according to the formula:
\begin{align}
\label{action1}
\beta_{i} \bullet\! (X_{1},  ..., X_{i-1}, X_{i}, X_{i+1},  ..., X_{n})\! =\! (X_{1},  ..., X_{i-1}, X_{i}X_{i+1}X_{i}^{-1}, X_i,..., X_{n})
\end{align}
for any $i \in \{1,  ...,n-1 \}$.
\end{definition} 

Recall the notation $\sigma_{\beta}$ which was defined in Lemma \ref{braidperm}.

\begin{definition}
\label{def:purelyinv}
Let $(X_{1},   ..., X_{n})$ be a finite sequence of $G$-valued random variables. It is {\em purely invariant by braids} if for any braid $\beta \in \mathcal{B}_{n}$ one has the equality in law: 
\begin{align*}
\beta \bullet (X_{1},   ..., X_{n}) = \sigma_{\beta} \bullet (X_{1},  ..., X_{n}), 
\end{align*}
where $\sigma \bullet (X_{1},  ..., X_{n}) = \left(X_{\sigma^{-1}(1)},  ..., X_{\sigma^{-1}(n)}\right)$ for any permutation $\sigma \in \mathfrak{S}_n$.

It is {\em invariant by braids} if for any braid $\beta \in \mathcal{B}_n$, one has the equality in law: 
\begin{align*}
\beta \bullet (X_{1},  ..., X_{n}) = (X_{1},  ..., X_{n}). 
\end{align*}
 \end{definition}

Let us recall that if $m$ is a probability measure on $G$, the support of $m$, denoted by ${\sf Supp}(m)$, is the smallest closed subset of $G$ of measure $1$ for $m$. The closure of the subgroup generated by the support of $m$ is denoted by $H_{m}$. If $X$ is a $G$-valued random variable and $m$ its law, we define ${\sf Supp}(X) = {\sf Supp}(m)$ and $H_X= H_m$. Let $T$ be a finite index set such that $\# T \geq 2$.

\begin{definition}
Let $\left(X_t\right)_{t \in T}$ be a sequence of $G$-valued random variables. We say that $\left(X_t\right)_{t \in T}$ is {\em auto-invariant by conjugation} if for any different elements $i$ and $j$ in $T$ and for any $g \in~{\sf Supp}(X_j)$, we have the equality in law: 
\begin{align} \label{inv}
g X_i g^{-1} = X_i. 
\end{align}
This definition can be extended to collections of measures on $G$. 
\end{definition}

The first result on random sequences which are purely invariant by braids is the following proposition. 
\begin{proposition}
\label{quasi-inv}
A finite sequence of independent $G$-valued random variables is auto-invariant by conjugation if and only if it is purely invariant by braids. 
\end{proposition}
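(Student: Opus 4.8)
The plan is to reduce everything to the behaviour of the two elementary braids $\beta_i$ and $\beta_i^{-1}$ and then to propagate the information along words in the generators. Throughout I write $\mathbf{X} = (X_1, \ldots, X_n)$, denote by $m_k$ the law of $X_k$, and recall from Lemma \ref{braidperm} that $\sigma_{\beta_i} = (i, i+1)$. The computational backbone is the pair of identities for the braid action (\ref{action1}),
\begin{align*}
\beta_i \bullet \mathbf{X} &= (\ldots, X_i X_{i+1} X_i^{-1}, X_i, \ldots), \\
\beta_i^{-1} \bullet \mathbf{X} &= (\ldots, X_{i+1}, X_{i+1}^{-1} X_i X_{i+1}, \ldots),
\end{align*}
where only the coordinates $i$ and $i+1$ are displayed.

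First I would isolate a generator lemma: \emph{for every independent auto-invariant sequence $\mathbf{Y}$ one has $\beta_i^{\pm 1} \bullet \mathbf{Y} =_{\mathrm{law}} \sigma_{\beta_i} \bullet \mathbf{Y}$.} Since the transformation touches only the coordinates $i, i+1$, which are independent of the rest, it suffices to compare the pairs at those positions. Conditioning on $Y_i = g$ (resp. $Y_{i+1} = h$) and using independence turns the conjugated coordinate into the law $g\, m_{i+1}\, g^{-1}$ (resp. $h^{-1} m_i h$); the auto-invariance relation (\ref{inv}), which is exactly the statement that $m_{i+1}$ (resp. $m_i$) is fixed under conjugation by every element of $\mathrm{Supp}(Y_i)$ (resp. $\mathrm{Supp}(Y_{i+1})$), makes these conditional laws coincide with those of $\sigma_{\beta_i} \bullet \mathbf{Y}$. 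I would note that invariance under conjugation by $g$ is equivalent to invariance under conjugation by $g^{-1}$, which is what lets me treat $\beta_i$ and $\beta_i^{-1}$ simultaneously.

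For the implication auto-invariance $\Rightarrow$ pure invariance I would argue by induction on the length of a word $\beta = \beta_{i_1}^{\epsilon_1} \cdots \beta_{i_k}^{\epsilon_k}$. Writing $\gamma$ for the tail, the induction hypothesis gives $\gamma \bullet \mathbf{X} =_{\mathrm{law}} \sigma_\gamma \bullet \mathbf{X}$; applying the fixed measurable map $\beta_{i_1}^{\epsilon_1} \bullet (\cdot)$ preserves equality of laws, so $\beta \bullet \mathbf{X} =_{\mathrm{law}} \beta_{i_1}^{\epsilon_1} \bullet (\sigma_\gamma \bullet \mathbf{X})$. The decisive point is that $\sigma_\gamma \bullet \mathbf{X}$ is merely a permutation of $\mathbf{X}$, hence still independent and, since (\ref{inv}) is symmetric in the index set, still auto-invariant; the generator lemma therefore applies to it and yields $\beta_{i_1}^{\epsilon_1} \bullet (\sigma_\gamma \bullet \mathbf{X}) =_{\mathrm{law}} \sigma_{\beta_{i_1}}^{\epsilon_1} \bullet \sigma_\gamma \bullet \mathbf{X} = \sigma_\beta \bullet \mathbf{X}$, using that $\beta \mapsto \sigma_\beta$ is a homomorphism. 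The subtlety to flag here is that the braid action destroys independence, so one genuinely must route the argument through the permuted sequence $\sigma_\gamma \bullet \mathbf{X}$ rather than through $\gamma \bullet \mathbf{X}$ itself.

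For the converse I would exploit pure invariance of a single cleverly chosen braid to recover (\ref{inv}) for an arbitrary pair $i \neq j$. Assuming $i < j$, set $\gamma = \beta_{i+1}^{-1} \beta_{i+2}^{-1} \cdots \beta_{j-1}^{-1}$, which, read off from the second displayed identity, transports the strand carrying $X_j$ leftwards to position $i+1$ \emph{without conjugating it}, while leaving position $i$ equal to $X_i$. Applying $\beta_i$ then places $(X_i X_j X_i^{-1}, X_i)$ at positions $i, i+1$, whereas the permutation $\sigma_{\beta_i \gamma}$ places $(X_j, X_i)$ there. Pure invariance for $\beta_i \gamma$, marginalised to these two coordinates, gives $(X_i X_j X_i^{-1}, X_i) =_{\mathrm{law}} (X_j, X_i)$; disintegrating over the common second coordinate shows $g\, m_j\, g^{-1} = m_j$ for $m_i$-almost every $g$, and since the stabiliser $\{g : g m_j g^{-1} = m_j\}$ is closed it must contain $\mathrm{Supp}(X_i)$. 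Exchanging the roles of $i$ and $j$ covers $j < i$, so (\ref{inv}) holds for every pair. I expect the main obstacle to be the bookkeeping in this last step: one must verify that the chosen word moves the $j$-th strand unchanged, so that no spurious conjugations by the intermediate supports appear, and that the underlying permutation is indeed the expected transposition of the two relevant positions; once the braid action formula is applied carefully, the conditioning argument closes the proof.
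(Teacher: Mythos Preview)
Your proof is correct and follows essentially the same strategy as the paper's: reduce the forward direction to the elementary generators via independence and the homomorphism $\beta\mapsto\sigma_\beta$, and for the converse apply pure invariance to a single well-chosen braid that brings strands $i$ and $j$ adjacent, then disintegrate. Your forward direction is actually spelled out more carefully than the paper's (which is quite terse), and your converse braid $\beta_i\gamma$ differs cosmetically from the paper's $\beta_{(i,j)}=\beta_i^{-1}\cdots\beta_{j-2}^{-1}\beta_{j-1}\cdots\beta_i$, but both extract the same pair equality $(X_i X_j X_i^{-1},X_i)=_{\mathrm{law}}(X_j,X_i)$ and conclude identically.
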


\begin{proof} 
Let $(X_1, ..., X_n)$ be a finite sequence of $G$-valued random variables which are independent. Let us suppose that $(X_1, ..., X_n)$ is auto-invariant by conjugation. Since $\beta \mapsto \sigma_{\beta}$ is a morphism and using the independence of the variables, we just have to show that $(X_{2}, X_{2}^{-1}X_{1}X_{2})$ and $(X_{2}, X_{1})$ have the same law. This result follows from the independence of the variables $X_1$ and $X_2$ and from the invariance by conjugation of the law of $X_{1}$ by any element $g$ in the support of $X_2$. 

Now, let us suppose instead that $(X_1, ..., X_n)$  is purely invariant by braids. Let $i<j$ be two integers in $\{1,..., n\}$. Let $\beta_{(i,j)}$ be the braid defined by: 
\begin{align*}
\beta_{(i,j)} = \beta_{i}^{-1}...\beta_{j-2}^{-1}\beta_{j-1}...\beta_{i  }.
\end{align*}
An example of such a braid is shown in Figure \ref{Quasi-invariance}. By considering only the $i^{th}$ and $j^{th}$ positions in the equality in law $\beta_{(i,j)} \bullet \left(X_1, ...,X_n\right) = \sigma_{\beta_{i,j}} \bullet  \left(X_1, ...,X_n\right)$, we get the following equality in law: $$(X_i,X_j) = (X_i, X_i X_j X_i^{-1}).$$ By disintegration and using the independence of the variables, one gets the desired result.
\end{proof}

\begin{figure}%[here]
 \centering
  \includegraphics[width=250pt]{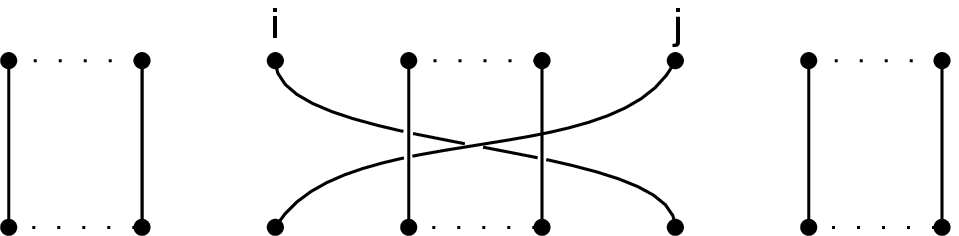}
 \caption{The braid $\beta_{(i,j)}$.}
 \label{Quasi-invariance}
\end{figure}

The proof of Proposition \ref{quasi-inv} is straightforward, but looking at the following equality in law: $(X_{2}^{-1}X_{1}X_{2}, X_{2}^{-1}X_{1}^{-1}X_{2} X_{1} X_{2}) = (X_{1}, X_{2}),$ where $(X_1,X_2)$ is an auto-invariant by conjugation couple of random variables, one can see that it gives identities which, at first glance, do not seem trivial. A last remark to be made about Proposition \ref{quasi-inv} is that there exist finite sequences of non-independent $G$-valued random variables which are purely invariant by braids.

\chapter{Planar Yang-Mills Fields}
\label{planaryangmills}

In this chapter, we construct a family of planar Markovian holonomy fields: the planar Yang-Mills fields. First, we construct the pure planar Yang-Mills fields. Then, in Section \ref{generalisationdeconstruction}, we generalize the construction in order to construct all planar Yang-Mills fields. 

\section{Construction of pure planar Yang-Mills fields}
\label{sec:constructionpure}

 In order to construct pure planar Yang Mills fields, given any L\'{e}vy process $Y$ which is invariant by conjugation by $G$, we define, in Proposition \ref{defplanarHF}, for any finite planar graph $\mathbb{G}$, an random holonomy field on $P(\mathbb{G})$ associated to $Y$. In Propositions \ref{creaYMpure} and \ref{creaYMpure2}, we show that these random holonomy fields allow us to define a family of random holonomy fields on $\mathbb{R}^{2}$ which is a strong planar Markovian holonomy field. In Section \ref{generalisationdeconstruction}, we weaken the condition on the LŽvy process by using our results about the extension of the structure group. First of all, let us recall the definition of L\'{e}vy processes.

\begin{definition}
\label{Levydefinition}
A L\'{e}vy process $(Z_t)_{t \geq 0}$ is a random  c\`{a}dl\`{a}g process from $\mathbb{R}^{+}$ to $G$, with independent and stationary right increments. This means: 
\begin{enumerate}
\item $\forall\ 0 \leq t_0 < ...< t_n, \big(Z_{t_{i-1}}^{-1}Z_{t_i}\big)_{i=1}^{n}$ are independent, 
\item $\forall\ 0\leq s<t$, $Z_s^{-1}Z_t$ has the same law as $Z_{t-s}$. 
\end{enumerate}
We say that $(Z_t)_{t \geq 0}$ is invariant by conjugation by $G$, or conjugation-invariant, if and only if for any $g \in G$, the process $(g^{-1}Z_tg)_{t \geq 0}$ has the same law as $(Z_t)_{t \geq 0}$. 
\end{definition}

There is a correspondence between continuous semi-groups of convolution of probability measures starting from the Dirac measure on the neutral element of $G$ and L\'{e}vy processes. Let us consider $Y = (Y_t)_{t \geq 0}$ a conjugation-invariant L\'{e}vy process on $G$ which is fixed until the end of Section \ref{sec:constructionpure}. 

\begin{proposition} 
\label{defplanarHF}
Let $\mathbb{G}$ be a finite planar graph, let $vol$ be a measure of area. There exists a unique random holonomy field $\mathbb{E}^{Y,\mathbb{G}}_{vol}$ on $P(\mathbb{G})$, whose weight is equal to $1$, such that for any rooted spanning tree $T$ of $\mathbb{G}$, any family $\left(c_{F}\right)_{F\in\mathbb{F}^{b}}$ of facial loops of $\mathbb{G}$, each oriented anti-clockwise, under $\mathbb{E}^{Y,\mathbb{G}}_{vol}$: 
\begin{enumerate}
\item the random variables $\big(h\left(\l_{c_{F},T}\right)\big)_{F\in \mathbb{F}^{b}}$ are independent, 
\item for any $F\in \mathbb{F}^{b}$, $h(\l_{c_{F},T})$ has the same law as $Y_{vol(F)}$. 
\end{enumerate}
The family $\left( \mathbb{E}^{Y,\mathbb{G}}_{vol}\right)_{\mathbb{G}, vol}$ is the {\em discrete planar Yang-Mills field} associated with $Y$. 
\end{proposition}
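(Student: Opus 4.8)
The plan is to first build a candidate measure from one convenient choice of spanning tree and facial loops, and then to use the braid machinery of Section~\ref{Braids} to show that the two defining properties are in fact insensitive to that choice. Fix a vertex $v$ of $\mathbb{G}$, a rooted spanning tree $T_0$ and a family of anti-clockwise facial loops $(c_F^0)_{F\in\mathbb{F}^b}$. By Proposition~\ref{generate} the lassos $(\mathbf{l}_{c_F^0,T_0})_{F\in\mathbb{F}^b}$ freely generate $RL_v(\mathbb{G})$, so one may specify a measure on $\mathcal{M}ult_{P(\mathbb{G})}(L_v(\mathbb{G}),G)$ by declaring the images of these free generators to be independent, the image of $\mathbf{l}_{c_F^0,T_0}$ having the law of $Y_{vol(F)}$. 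Since $Y$ is conjugation-invariant, each such law is invariant under conjugation, whence the product measure on $G^{\#\mathbb{F}^b}$ is invariant under the diagonal conjugation action of $G$; this is precisely the hypothesis needed to apply Proposition~\ref{crea1} (equivalently the first part of Proposition~\ref{constructfinal}) to extend it to a gauge-invariant random holonomy field $\mathbb{E}^{Y,\mathbb{G}}_{vol}$ of weight $1$ on $P(\mathbb{G})$. By construction this measure satisfies properties (1) and (2) for the distinguished choice $(T_0,(c_F^0))$.

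The heart of the proof is to verify that (1) and (2) continue to hold for an arbitrary pair $(T,(c_F))$. Here I would argue as follows. Under $\mathbb{E}^{Y,\mathbb{G}}_{vol}$ the family $(h(\mathbf{l}_{c_F^0,T_0}))_{F}$ is, by construction, a finite sequence of independent variables, and since each $Y_{vol(F)}$ is conjugation-invariant the sequence is auto-invariant by conjugation; Proposition~\ref{quasi-inv} then shows that it is purely invariant by braids. Now Proposition~\ref{tresseetgene} supplies an enumeration $(F_i)$ of the bounded faces and a braid $\beta\in\mathcal{B}_{\#\mathbb{F}^b}$ with $\beta\bullet(\mathbf{l}_{c_{F_i}^0,T_0})_i=(\mathbf{l}_{c_{F_{\sigma(i)}},T})_i$, where $\sigma=\sigma_\beta$. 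Applying the multiplicative function $h$ and Lemma~\ref{actionmieux} turns the braid action on loops into the braid action on the associated random variables, giving the equality $(h(\mathbf{l}_{c_{F_{\sigma(i)}},T}))_i=\beta^{-1}\bullet(h(\mathbf{l}_{c_{F_i}^0,T_0}))_i$. Pure invariance by braids lets me replace $\beta^{-1}\bullet$ by the permutation action $\sigma_\beta^{-1}\bullet$ at the level of laws, and after reindexing by $\sigma$ this identifies the joint law of $(h(\mathbf{l}_{c_F,T}))_F$ with that of $(h(\mathbf{l}_{c_F^0,T_0}))_F$. Hence $(h(\mathbf{l}_{c_F,T}))_F$ is again an independent family with $h(\mathbf{l}_{c_F,T})\sim Y_{vol(F)}$, which is exactly (1) and (2) for $(T,(c_F))$.

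Uniqueness is then immediate: any gauge-invariant random holonomy field of weight $1$ satisfying (1) and (2) has a prescribed law on the freely generating family $(\mathbf{l}_{c_F^0,T_0})_F$, so by Proposition~\ref{unicite1} it is determined, and thus $\mathbb{E}^{Y,\mathbb{G}}_{vol}$ is the unique such field. I expect the decisive and most delicate step to be the second paragraph, namely correctly tracking the interplay between the three actions at stake — the braid action on $RL_v(\mathbb{G})$ from Proposition~\ref{tresseetgene}, its transport to random variables through Lemma~\ref{actionmieux}, and the pure braid-invariance furnished by Proposition~\ref{quasi-inv} — and in particular keeping the bookkeeping of the permutation $\sigma_\beta$ and of the inversion $\beta\mapsto\beta^{-1}$ consistent, so that the final reindexing genuinely matches the two families face by face.
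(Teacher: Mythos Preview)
Your argument is essentially identical to the paper's: build the measure from one choice $(T_0,(c_F^0))$ via Proposition~\ref{constructfinal}, then use Proposition~\ref{tresseetgene}, Lemma~\ref{actionmieux}, and Proposition~\ref{quasi-inv} to show the law of the lasso variables is unchanged under a different choice $(T,(c_F))$.

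There is one small gap. Proposition~\ref{tresseetgene} is stated for two spanning trees \emph{rooted at the same vertex}, whereas the proposition you are proving allows $T$ to be rooted anywhere. As written, you fix a vertex $v$ and a tree $T_0$ rooted at $v$, then invoke Proposition~\ref{tresseetgene} for an arbitrary $(T,(c_F))$; but if $T$ is rooted at some $v'\neq v$, that proposition does not apply directly. The paper handles this first: changing the root of $T$ from $v$ to $v'$ conjugates every $\mathbf{l}_{c_F,T}$ by the same path $[v',v]_T$, and by gauge invariance (Remark~\ref{changementdepoint}) this leaves the joint law of $(h(\mathbf{l}_{c_F,T}))_F$ unchanged. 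You should insert this one-line reduction before appealing to Proposition~\ref{tresseetgene}.
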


\begin{proof}
Let $\mathbb{G}$ be a finite planar graph, let $vol$ be a measure of area. For any positive real $t$, let us denote by $m_t$ the law of $Y_t$. For any rooted spanning tree $T$ and any family $\left(c_{F}\right)_{F\in\mathbb{F}^{b}}$ of facial loops oriented anti-clockwise, we define the measure $\mathbb{E}^{Y,\mathbb{G}}_{vol, T,\left(c_{F}\right)_{F\in\mathbb{F}^{b}}}$ on $\big(\mathcal{M}ult(P(\mathbb{G}), G), \mathcal{B}\big)$ as the unique gauge-invariant probability measure such that, under $\mathbb{E}^{Y,\mathbb{G}}_{vol, T,\left(c_{F}\right)_{F\in\mathbb{F}^{b}}}$: 
\begin{enumerate}
\item the random variables $\big(h\left(\l_{c_{F},T}\right)\big)_{F\in \mathbb{F}^{b}}$ are independent, 
\item for any $F\in \mathbb{F}^{b}$, $h(\l_{c_{F},T})$ has the same law as $Y_{vol(F)}$,  
\end{enumerate}
where we remind the reader that the loops $\l_{c_{F},T}$ where defined in Definition \ref{def:lassogenerat}. Since $\otimes_{F \in \mathbb{F}^{b}}m_{vol(F)}$ is invariant by diagonal conjugation, by applying the first part of Proposition \ref{constructfinal} we see that the definition makes sense. We will show that the probability measure $\mathbb{E}^{Y,\mathbb{G}}_{vol, T,\left(c_{F}\right)_{F\in\mathbb{F}^{b}}}$ neither depends on the choice of $T$, nor on the choice of $\left(c_{F}\right)_{F\in\mathbb{F}^{b}}$. Thanks to  the uniqueness property in this last definition, we have to prove that given another rooted spanning tree $T'$ and another family of facial loops $(c'_{F})_{F\in\mathbb{F}^{b}}$ oriented anti-clockwise, under $\mathbb{E}^{Y,\mathbb{G}}_{vol, T,\left(c_{F}\right)_{F\in\mathbb{F}^{b}}}$, $\big(h(\l_{c'_{F},T'})\big)_{F\in \mathbb{F}^{b}}$ has the same law as $\big(h\left(\l_{c_{F},T}\right)\big)_{F\in \mathbb{F}^{b}}$.

First of all, let us prove that one can suppose that $T$ and $T'$ are rooted at the same vertex $v$ of $\mathbb{G}$. Let $v$ be the root of $T$, let $v'$ be a vertex of $\mathbb{G}$ and let us define the rooted spanning tree $\tilde{T}$ as the tree $T$ rooted at $v'$. When we change the root of $T$ from $v$ to $v'$ we conjugate every of the $\l_{c_F,T}$ by the same path $[v',v]_T$. By Remark \ref{changementdepoint}, since $\mathbb{E}^{Y,\mathbb{G}}_{vol,T,\left(c_{F}\right)_{F\in\mathbb{F}^{b}}}$ is gauge-invariant, under $\mathbb{E}^{Y,\mathbb{G}}_{vol,T,\left(c_{F}\right)_{F\in\mathbb{F}^{b}}}$, $\big(h(\l_{c_F,\tilde{T}})\big)_{F \in \mathbb{F}^{b}}$ has the same law as $\big(h(\l_{c_F,T})\big)_{F \in \mathbb{F}^{b}}$, namely $\otimes_{F \in \mathbb{F}^{b}}m_{vol(F)}$: $\mathbb{E}^{Y,\mathbb{G}}_{vol, T,\left(c_{F}\right)_{F\in\mathbb{F}^{b}}} = \mathbb{E}^{Y,\mathbb{G}}_{vol, \tilde{T},\left(c_{F}\right)_{F\in\mathbb{F}^{b}}}$.

Now let us assume that $T$ and $T'$ are rooted at the same vertex. By Proposition \ref{tresseetgene}, there exists an enumeration $(F_i)_{i=1}^{\#\mathbb{F}^{b}}$ of the bounded faces of $\mathbb{G}$, a braid $\beta$ in $\mathcal{B}_{\#\mathbb{F}^{b}}$ such that: 
\begin{align*}
\beta \bullet \left(\l_{c_{F_i}, T}\right)_{i=1}^{\#\mathbb{F}^{b}} = \left(\l_{c'_{F_{\sigma_{\beta}(i)}}, T'}\right)_{i=1}^{\#\mathbb{F}^{b}}. 
\end{align*}
Using Lemma \ref{actionmieux}, $
h\left(\beta \bullet \left(\l_{c_{F_i}, T}\right)_{i=1}^{\#\mathbb{F}^{b}}\right) =\beta^{-1}\bullet  \left( h(\l_{c_{F_i}, T}) \right)_{i=1}^{\#\mathbb{F}^{b}}$, 
and thus: 
\begin{align*}
\beta^{-1}\bullet  \left( h(\l_{c_{F_i}, T}) \right)_{i=1}^{\#\mathbb{F}^{b}} = \sigma_{\beta^{-1}} \bullet \left( h(\l_{c'_{F_i}, T'}) \right)_{i=1}^{\#\mathbb{F}^{b}}.
\end{align*}
Applying the Proposition \ref{quasi-inv}, under $\mathbb{E}^{Y,\mathbb{G}}_{vol, T,\left(c_{F}\right)_{F\in\mathbb{F}^{b}}}$, the following equality in law holds: 
\begin{align*}
 \beta^{-1} \bullet \left(h(\l_{c_{{F}_{i}},T})\right)_{i=1}^{\# \mathbb{F}^{b}} = \sigma_{\beta^{-1}} \bullet \left(h(\l_{c_{{F}_{i}},T})\right)_{i=1}^{\# \mathbb{F}^{b}}.
\end{align*}
From this, we get the equality in law under $\mathbb{E}^{Y,\mathbb{G}}_{vol, T,\left(c_{F}\right)_{F\in\mathbb{F}^{b}}}$: $
\Big(h(\l_{c'_{{F}_{i}},T'})\Big)_{i=1}^{\# \mathbb{F}^{b}} = \Big(h(\l_{c_{{F}_{i}},T})\Big)_{i=1}^{\# \mathbb{F}^{b}}.$
\end{proof}

This proposition allows us not to have to choose a special rooted tree for each graph in order to construct planar Yang-Mills fields. More importantly, it will allow us to show the independence property and the area-preserving homeomorphism invariance of the family of random holonomy fields which we will construct thanks to Proposition \ref{constructfinal}.

\begin{proposition}
\label{creaYMpure}
There exists a unique family of gauge-invariant stochastically continuous random holonomy fields $\big(\mathbb{E}^{Y}_{vol}\big)_{vol}$, whose weight is equal to $1$, such that for any measure of area $vol$, for any finite planar graph $\mathbb{G}$, for any rooted spanning tree $T$ of $\mathbb{G}$ and any family of facial loops $\left(c_{F}\right)_{F\in\mathbb{F}^{b}}$ oriented anti-clockwise, under $\mathbb{E}^{Y}_{vol}$: 
\begin{enumerate}
\item the random variables $\left(h\left(\l_{c_{F},T}\right)\right)_{F\in \mathbb{F}^{b}}$ are independent, 
\item for any $F\in \mathbb{F}^{b}$, $h(\l_{c_{F},T})$ has the same law as $Y_{vol(F)}$. 
\end{enumerate}
The family $\big(\mathbb{E}^{Y}_{vol}\big)_{vol}$ is the {\em planar Yang-Mills field} associated with $(Y_t)_{t \geq 0}$.
\end{proposition}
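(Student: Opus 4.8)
The plan is to deduce Proposition \ref{creaYMpure} from the already established Proposition \ref{defplanarHF} by an application of Proposition \ref{constructfinal}. For each finite planar graph $\mathbb{G}$ in $\mathcal{G}\big({\sf Aff}(\mathbb{R}^{2})\big)$ and each measure of area $vol$, I would feed Proposition \ref{constructfinal} the measure $\mu_{\mathbb{G}} = \bigotimes_{F\in\mathbb{F}^{b}} m_{vol(F)}$ on $G^{\#\mathbb{F}^{b}}$, where $m_t$ denotes the law of $Y_t$, together with any rooted spanning tree $T$ and any family $(c_F)_{F\in\mathbb{F}^{b}}$ of facial loops oriented anti-clockwise. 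Since $Y$ is conjugation-invariant, each $m_t$ is conjugation-invariant, so $\mu_{\mathbb{G}}$ is invariant under diagonal conjugation; the gauge-invariant extension furnished by the first part of Proposition \ref{constructfinal} is exactly the field $\mathbb{E}^{Y,\mathbb{G}}_{vol}$ of Proposition \ref{defplanarHF}. It then remains to verify the two hypotheses of Proposition \ref{constructfinal}: uniform local stochastic $\frac{1}{2}$-H\"{o}lder continuity of $\big(\mu_{\mathbb{G}}\big)_{\mathbb{G}}$, and the compatibility of the facial-lasso laws under refinement.

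For the H\"{o}lder estimate I would take a simple loop $l$ bounding a disk $D$ with $l\subset S$, place it inside a graph $\mathbb{G}$ in which $l\in P(\mathbb{G})$, $D$ is a union of bounded faces, and $\underline{l}$ lies on the boundary. Applying Lemma \ref{generate2} to the subgraph enclosed by $l$, with $l$ and all facial loops oriented anti-clockwise so that every exponent equals $1$, the holonomy $h(l)$ is an ordered product of the independent variables $\big(h(\l_{c_F,T})\big)_{F\subset D}$, each distributed as $Y_{vol(F)}$. Because the convolution semigroup of $Y$ satisfies $m_s * m_t = m_{s+t}$, this product has the law of $Y_{vol(D)}$ regardless of the order of the factors, so $\int d_G\big(e,h(l)\big)\,d\mathbb{E}^{Y,\mathbb{G}}_{vol} = \mathbb{E}\big[d_G(e,Y_{vol(D)})\big]$. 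I would then invoke the standard bound $\mathbb{E}\big[d_G(e,Y_t)\big]\le K\sqrt{t}$ for L\'{e}vy processes on a compact Lie group, whose constant $K$ depends only on $Y$; this gives the desired inequality with $K$ independent of $\mathbb{G}$, of $vol$ and of $S$, hence uniform local $\frac{1}{2}$-H\"{o}lder continuity.

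For the compatibility under $\mathbb{G}\preccurlyeq\mathbb{G}'$, I would fix a family $(c_F)_{F\in\mathbb{F}^{b}}$ of facial loops of $\mathbb{G}$ and note that each bounded face $F$ of $\mathbb{G}$ is subdivided by $\mathbb{G}'$ into faces of pairwise disjoint interiors. Using Proposition \ref{combi1} together with Proposition \ref{generate}, I would express each $h(\l_{c_F,T})$, now computed in $\mathbb{G}'$, as a product of the $\mathbb{G}'$-facial holonomies supported in $\overline{F}$; these are independent with laws $m_{vol(F')}$, so by the semigroup property the product has law $m_{vol(F)}$. Since distinct faces of $\mathbb{G}$ use disjoint blocks of $\mathbb{G}'$-faces, the variables $\big(h(\l_{c_F,T})\big)_F$ are jointly independent under $\mu_{\mathbb{G}'}$, whence they carry the same joint law under $\mu_{\mathbb{G}'}$ as under $\mu_{\mathbb{G}}$.

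With both hypotheses verified, Proposition \ref{constructfinal} produces a unique stochastically continuous random holonomy field $\mathbb{E}^{Y}_{vol}$ on the plane whose facial-lasso holonomies have the prescribed law, which is precisely properties (1) and (2) of the statement; uniqueness of the whole family $\big(\mathbb{E}^{Y}_{vol}\big)_{vol}$ follows from the uniqueness clause of Proposition \ref{constructfinal}, ultimately from Proposition \ref{unic}. I expect the H\"{o}lder step to be the main obstacle: one must correctly identify the law of $h(l)$ as $Y_{vol(D)}$ through the ordered-product decomposition (arranging orientations so that the relation $m_s*m_t=m_{s+t}$ applies irrespective of the order of the factors) and secure the uniform bound $\mathbb{E}\big[d_G(e,Y_t)\big]\le K\sqrt{t}$, which is what ties the geometric area of faces to the required regularity.
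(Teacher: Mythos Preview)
Your overall strategy via Proposition~\ref{constructfinal} is the paper's, and your direct H\"older argument (decomposing $h(l)$ through Lemma~\ref{generate2} and invoking $\mathbb{E}[d_G(e,Y_t)]\le K\sqrt{t}$, which is Proposition~\ref{croissanceLevy}) is a legitimate alternative to the paper's route; the paper instead proves compatibility first and then uses it to reduce the H\"older bound to the one-face graph $\mathbb{G}(l)$ before applying the same L\'evy estimate.

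There are, however, two genuine gaps. First, your compatibility step leans on Proposition~\ref{combi1}, whose hypothesis is that the two simple loops have \emph{disjoint closures}; the facial boundaries $c_F$ of $\mathbb{G}$ share edges, so that hypothesis fails, and you have not exhibited a single spanning tree $T'$ of $\mathbb{G}'$ for which each $\l_{c_F,T}$ decomposes in $RL_v(\mathbb{G}')$ using only the generators $(\l_{c'_{F'},T'})_{F'\subset F}$ simultaneously for all $F$. The paper avoids this global tree construction by a descending induction on $[\mathbb{G}_1,\mathbb{G}_2]$ through three elementary coarsenings (erase a degree-$2$ vertex; erase a hanging edge; merge two adjacent faces across a shared edge), the last case being where the identity $m_s*m_t=m_{s+t}$ enters. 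Second, and more clearly missing: Proposition~\ref{constructfinal} only yields properties (1) and (2) for graphs in $\mathcal{G}\big({\sf Aff}(\mathbb{R}^{2})\big)$, whereas the proposition claims them for \emph{every} finite planar graph. The paper closes this by approximating an arbitrary $\mathbb{G}$ via Theorem~\ref{approxgraph}, transporting the tree and facial loops by the approximating homeomorphisms $\psi_n$, and passing to the limit using the stochastic continuity of $\mathbb{E}^{Y}_{vol}$ together with the continuity of $t\mapsto m_t$; you have not addressed this step.
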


In order to prove this result, we will need the following statement, from \cite{Levy}, which allows us to bound the distance of a L\'{e}vy process to the neutral element.  

\begin{proposition} \label{croissanceLevy}
There exists $K>0$ such that $\mathbb{E}\big[d_G\left(1,Y_t\right)\big]\leq K\sqrt{t}$ for any $t \geq 0$.
\end{proposition}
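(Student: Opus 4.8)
The plan is to bound the second moment $\mathbb{E}\big[d_G(e,Y_t)^2\big]$ linearly in $t$ and then to conclude by the Cauchy--Schwarz inequality, namely
\begin{align*}
\mathbb{E}\big[d_G(e,Y_t)\big] \le \Big(\mathbb{E}\big[d_G(e,Y_t)^2\big]\Big)^{1/2},
\end{align*}
where $e=1$ denotes the neutral element. Note that one cannot hope to bound the first moment by $Ct$ in general: for a process whose generator has a nontrivial diffusion part, $\mathbb{E}[d_G(e,Y_t)]$ behaves like $\sqrt t$, so the square-root exponent is sharp and the second-moment route is the natural one. When $G$ is finite the estimate is immediate: with the discrete distance $d_G(e,Y_t)=\mathbf{1}_{Y_t\neq e}$, one has $\mathbb{E}[d_G(e,Y_t)]=\mathbb{P}(Y_t\neq e)\le 1-e^{-\lambda t}\le\lambda t$, where $\lambda$ is the total jump rate of the continuous-time random walk $Y$, which gives the bound (with room to spare) for $t\le 1$, while for $t\ge 1$ one uses $d_G\le 1\le\sqrt t$. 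I therefore focus on the case $\dim G\ge 1$.

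First I would build a smooth test function comparable to $d_G(e,\cdot)^2$. Since $G$ is compact, its injectivity radius is positive; fix $0<r_1<r_0<\mathrm{inj}(G)$, so that $f=d_G(e,\cdot)^2$ is smooth on the geodesic ball $B(e,r_0)$. Choosing a smooth cutoff $\eta$ with $\eta\equiv 1$ on $B(e,r_1)$, $\mathrm{supp}\,\eta\subset B(e,r_0)$, $0\le\eta\le 1$, and a constant $M\ge\mathrm{diam}(G)^2$, set
\begin{align*}
\psi=\eta f+(1-\eta)M.
\end{align*}
Then $\psi\in C^\infty(G)$, $\psi(e)=0$, $\psi=d_G(e,\cdot)^2$ on $B(e,r_1)$, and $\psi\ge r_1^2=:c_0>0$ on $G\setminus B(e,r_1)$ (a convex combination of two quantities that are both $\ge r_1^2$ there), while $\psi$ stays bounded by $M$.

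Next I would feed $\psi$ into the generator. By Hunt's structure theorem the Lévy process $Y$ is a Feller process whose generator $\mathcal{L}$ is defined on $C^\infty(G)$, and Dynkin's formula (started at $Y_0=e$) gives
\begin{align*}
\mathbb{E}\big[\psi(Y_t)\big]=\psi(e)+\int_0^t\mathbb{E}\big[(\mathcal{L}\psi)(Y_s)\big]\,ds\le t\,\|\mathcal{L}\psi\|_\infty=:C_1 t,
\end{align*}
the norm being finite because $\mathcal{L}\psi$ is continuous on the compact group $G$. Splitting over $B(e,r_1)$ and its complement, using $\psi\ge c_0$ there (Chebyshev) and $d_G\le\mathrm{diam}(G)$, I obtain
\begin{align*}
\mathbb{E}\big[d_G(e,Y_t)^2\big]&=\mathbb{E}\big[\psi(Y_t)\mathbf{1}_{Y_t\in B(e,r_1)}\big]+\mathbb{E}\big[d_G(e,Y_t)^2\mathbf{1}_{Y_t\notin B(e,r_1)}\big]\\
&\le C_1 t+\mathrm{diam}(G)^2\,\mathbb{P}\big(Y_t\notin B(e,r_1)\big)\le C_1 t+\frac{\mathrm{diam}(G)^2}{c_0}\,\mathbb{E}\big[\psi(Y_t)\big]\le C_2\,t,
\end{align*}
with $C_2=C_1\big(1+\mathrm{diam}(G)^2/c_0\big)$. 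Combined with Cauchy--Schwarz this yields the statement with $K=\sqrt{C_2}$.

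The one non-elementary ingredient, and hence the \emph{main obstacle}, is the justification of Dynkin's formula: that every smooth function lies in the domain of the generator $\mathcal{L}$ and that $\mathcal{L}\psi$ is bounded. This is precisely what Hunt's theorem for Lévy processes on Lie groups provides, since it exhibits $\mathcal{L}$ in Hunt form and thereby shows $\mathcal{L}$ maps $C^\infty(G)$ into $C(G)$. Everything else—smoothness of $d_G^2$ inside the injectivity radius, the cutoff construction, and the Chebyshev splitting—is routine.
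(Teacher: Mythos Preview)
The paper does not supply a proof of this proposition; it is quoted from \cite{Levy} as an auxiliary input to the proof of Proposition~\ref{creaYMpure}. Your argument is correct and self-contained: the smooth surrogate $\psi$ for $d_G(e,\cdot)^2$, the appeal to Hunt's theorem to place $C^\infty(G)$ in the domain of $\mathcal{L}$ with $\mathcal{L}\psi\in C(G)$, Dynkin's formula yielding $\mathbb{E}[\psi(Y_t)]\le C_1 t$, the Chebyshev splitting, and the final Cauchy--Schwarz step all go through as written. This second-moment-plus-Hunt route is the standard one and is essentially what appears in L\'{e}vy's monograph, so there is nothing further to compare.
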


\begin{proof}[Proof of Proposition \ref{creaYMpure}] Let $vol$ be a measure of area on the plane. We will apply Proposition \ref{constructfinal} to the family of measures $\big(\mathbb{E}^{Y,\mathbb{G}}_{vol}\big)_{\mathbb{G} \in \mathcal{G}({\sf Aff}\left(\mathbb{R}^{2}\right))}$. Then we will study the restriction to general finite planar graphs of the random holonomy field that we will have defined. In order to do all this, we have to prove a compatibility condition and a uniform locally stochastically $\frac{1}{2}$-H\"{o}lder continuity property for the family $\big(\mathbb{E}^{Y,\mathbb{G}}_{vol}\big)_{\mathbb{G} \in \mathcal{G}({\sf Aff}\left(\mathbb{R}^{2}\right))}$. 

{\bf Compatibility condition: }Let $\mathbb{G}_1$ and $\mathbb{G}_2$ be two graphs in $\mathcal{G}\big({\sf Aff}(\mathbb{R}^{2})\big)$ such that $\mathbb{G}_1 \preccurlyeq \mathbb{G}_2$. Let us consider $m$ a vertex of $\mathbb{G}_1$ and $\mathbb{G}_2$. Using Proposition \ref{defplanarHF}, it is enough to show that $\mathbb{G}_1$ satisfies the following property: 
$$\hspace{+4pt}(\mathcal{H}) \left\{
   \hspace{-2pt} \begin{array}{ll}\text{
there exists a family of facial loops} \left(c_F\right)_{F \in \mathbb{F}_1^{b}} \text{ oriented anti-clockwise and}\\ \text{a spanning tree } T_1 \text{ of } \mathbb{G}_1 \text{ rooted at } m\text{, such that under } \mathbb{E}^{Y,\mathbb{G}_2}_{vol}: \\
\text{\ \ 1. the random variables $\big(h\left(\l_{c_{F},T}\right)\big)_{F\in \mathbb{F}^{b}_1}$ are independent, }\\
\text{\ \ 2. for any $F\in \mathbb{F}^{b}_1$, $h\left(\l_{c_{F},T}\right)$ has the same law as $Y_{vol(F)}$. }
\end{array}
\right.$$

We show this by an induction argument on the finite set $\big[\mathbb{G}_1, \mathbb{G}_2\big]$ which is equal to $\big\{\mathbb{G}, \mathbb{G}_1 \preccurlyeq \mathbb{G} \preccurlyeq \mathbb{G}_2 \big\}$, endowed with the partial order $\preccurlyeq$. It is clearly true that $(\mathcal{H})$ holds for $\mathbb{G} = \mathbb{G}_2$. Consider a finite planar graph $\mathbb{G}$ in $ \big[\mathbb{G}_1, \mathbb{G}_2\big]$ satisfying $(\mathcal{H})$, we will show that there exists $\mathbb{G}' \in \big[\mathbb{G}_1, \mathbb{G}\big[$ for which $(\mathcal{H})$ is still valid. Thanks to  Proposition \ref{defplanarHF}, property $(\mathcal{H})$ holds for $\mathbb{G}$ for any family of facial loops $\left(c_F\right)_{F \in \mathbb{F}^{b}}$ oriented anti-clockwise and any choice of spanning tree $T$ rooted at $m$. Since $\mathbb{G}_1 \preccurlyeq \mathbb{G}$, at least one of the following assertions is true: 
\begin{enumerate}
\item there exist an edge of $\mathbb{G}_1$, $e$, and a vertex $v$ of $\mathbb{G}$ of degree two such that $v \in e\big((0,1)\big)$, 
\item there exists a face $F_1$ of $\mathbb{G}_1$, bounded or not, such that the restriction of $\mathbb{G}$ to $F_1$ has a unique face $F_0$ and $\partial F_0$, oriented anti-clockwise, contains a sequence of the form $ee^{-1}$ with the interior of $e$ included in $F_1$, 
\item there exists a face $F_1$ of $\mathbb{G}_1$ which contains more than one face of $\mathbb{G}$. 
\end{enumerate}

Let us consider the three possibilities. 

(1) Let us consider a family of facial loops $\left(c_F\right)_{F \in \mathbb{F}^{b}}$ for $\mathbb{G}$, oriented anti-clockwise, none of which is based at $v$, and a choice of spanning tree $T$ of $\mathbb{G}$ rooted at $m$. Let $e_1$ and $e_2$ be the two edges of $\mathbb{G}$ such that $e = e_1e_2$ and $\overline{e_1} = v$. We consider $\mathbb{G}'$, the graph defined by: $$\left(\mathbb{V}', \mathbb{E}', \mathbb{F}'\right) = \big(\mathbb{V}\setminus \{v\}, \mathbb{E} \setminus\left\{e_1^{\pm1},e_2^{\pm1}\right\} \cup\left\{(e_1e_2)^{\pm1}\right\},\mathbb{F}\big).$$ By construction $\mathbb{G}' \in\big[\mathbb{G}_1, \mathbb{G}\big[$. Besides, $(c_{F})_{F \in \mathbb{F}'^{b}}$ is still a family of facial loops for $\mathbb{G}'$ oriented anti-clockwise and $T' = \big(T\setminus\left\{e_1^{\pm1},e_2^{\pm1}\right\}\big) \cup \left\{(e_1e_2)^{\pm1}\right\}$ is a spanning tree of $\mathbb{G}'$ rooted at $m$. It is now obvious that $\mathbb{G}'$ satisfies property $(\mathcal{H})$ with the choices of $\left(c_{F}\right)_{F \in \mathbb{F}'^{b}}$ and $T'$. 

(2) We will consider that $F_1$ is bounded, the unbounded case is similar. In this case, let $v$ be the vertex of $e$ of degree $1$ and define ${F}'_0 = F_0\cup e\big((0,1)\big) \cup \{v \}$. Consider any family of facial loops for $\mathbb{G}$ oriented anti-clockwise, $\left(c_F\right)_{F \in \mathbb{F}^{b}}$, such that $\underline{c_{F_0}} \ne v $. Let us choose any spanning tree of $\mathbb{G}$ rooted at $m$, $T$. We consider $\mathbb{G}'$, the graph defined by: $$\big(\mathbb{V}',\mathbb{E}',\mathbb{F}'\big) = \big(\mathbb{V}\setminus{v}, \mathbb{E}\setminus\{e, e^{-1}\}, (\mathbb{F} \setminus{F_0}) \cup F'_0\big).$$
The spanning tree $T$ of $\mathbb{G}$ must include the unoriented edge $\left\{e, e^{-1}\right\}$ in order to cover $v$, thus we can define $T' = T\setminus\{e, e^{-1}\}$. The facial loop $c_{F_0}$ contains the sequence $ee^{-1}$. We define ${c'}_{F'_0}$ from $c_{F_0}$ by removing this sequence. For any other face $F \in \mathbb{F}'$, we set ${c'}_{F} = c_{F}$. For any face $F \in \mathbb{F}'^{b}$, using the identification between $F_0$ and $F'_0$, $\l_{c_{F},T} = \l_{c'_{F},T'}$ in $RL_{m}(\mathbb{G})$, and by Remark \ref{equalityequiv}, $h(\l_{c_{F},T})= h(\l_{c'_{F},T'})$. The graph $\mathbb{G}'$ satisfies property $(\mathcal{H})$ with the choices of $\left({c}'_{F'}\right)_{F' \in \mathbb{F}'^{b}}$ and $T'$. 
 
(3) We will study this case under the hypothesis that $F_{1}$ is bounded, the unbounded case being easier. The key point will be the semigroup property satisfied by the marginal distributions of the L\'{e}vy process $Y$. Let $F_r$ and $F_l$ be two faces of $\mathbb{G}$ contained in $F_1$ and adjacent, sharing an edge $e$ on their boundaries. We can find a facial loop oriented anti-clockwise representing the boundary of $F_r$ (resp. $F_l$) of the form $c_{F_r} = e_1...e_n e$ (resp. $c_{F_l} = e^{-1}e'_1...e'_m$). Let $F_{r,l} = F_r \cup F_l \cup e\big((0,1)\big)$. We complete the family $\left(c_{F_r}, c_{F_l}\right)$ in order to have a family of facial loops $\left(c_F\right)_{F \in \mathbb{F}^{b}}$ oriented anti-clockwise for $\mathbb{G}$. Let us consider  $\mathbb{G}'$, the graph defined by: 
\begin{align*}
\left(\mathbb{V}', \mathbb{E}', \mathbb{F}'\right) = \Big(\mathbb{V}, \mathbb{E}\setminus\left\{e,e^{-1}\right\},( \mathbb{F}\setminus\left\{F_r,F_l\right\})\cup F_{r,l}\Big). 
\end{align*}
It is still a finite planar graph. Let us consider $T$ any spanning tree of $\mathbb{G}'$ rooted at $m$: it is also a spanning tree of $\mathbb{G}$ rooted at $m$. Let ${c}'_{F_{r,l}} = e_1...e_ne'_1...e'_m$. For any other face $F'$ of $\mathbb{G}'$ different from $F_{r,l}$, $F'$ is a face of $\mathbb{G}$ and we set ${c}'_{F'} = c_{F'}$. Once these choices made, it needs only a simple verification to check that the following equalities hold in $RL_{m}(\mathbb{G})$:  
\begin{align*}
\l_{c'_{F_{r,l}},T} &= \l_{c_{F_{r}},T}\l_{c_{F_{l}},T}, \\
\l_{c'_{F'},T} &= \l_{c_{F'},T}, \forall F' \in \mathbb{F}'^{b}, F' \ne F_{r,l}. 
\end{align*}
Using the multiplicativity of $h$: 
\begin{align*}
h(\l_{c'_{F_{r,l}},T}) &= h(\l_{c_{F_{l}},T}) h(\l_{c_{F_{r}},T}), \\
h(\l_{c'_{F'},T}) &= h(\l_{c_{F'},T}), \forall F' \in \mathbb{F}'^{b}, F' \ne F_{r,l}. 
\end{align*}
Let us recall that under $\mathbb{E}^{Y,\mathbb{G}_2}_{vol}$, 
\begin{enumerate}
\item the random variables $\left(h\left(\l_{c_{F},T}\right)\right)_{F\in \mathbb{F}^{b}}$ are independent, 
\item for any $F\in \mathbb{F}^{b}$, $h(\l_{c_{F},T})$ has the same law as $Y_{vol(F)}$. 
\end{enumerate}
Using the semigroup property of the marginal distributions of the process $Y$, we can conclude that $\mathbb{G}'$ satisfies $(\mathcal{H})$ with the choices of $({c'}_{F'})_{F' \in \mathbb{F}'^{b}}$ and~$T$. 

By descending induction, it follows that $\mathbb{G}_1$ satisfies property $(\mathcal{H})$. Let us prove the uniform $\frac{1}{2}$-H\"{o}lder continuity.

{\bf Uniform $\frac{1}{2}$-H\"{o}lder continuity: }Let $\mathbb{G}$ be a finite planar graph with piecewise affine edges. Let $l$ be a simple loop in $\mathbb{G}$ bounding a disk $D$. A consequence of what we have just seen is that the law of $h(l)$ under $\mathbb{E}^{Y,\mathbb{G}}_{vol}$ is the same as under $\mathbb{E}^{Y,\mathbb{G}(l)}_{vol}$, where $\mathbb{G}(l)$ is the graph containing only the edge $l$ (see Example \ref{G(l)}). Thus: 
\begin{align*}
\int_{\mathcal{M}ult(P(\mathbb{G}), G)} d_{G}\big(1,h(l)\big) \mathbb{E}^{Y,\mathbb{G}}_{vol}(dh) &= \int_{\mathcal{M}ult(P(\mathbb{G}(l)), G)} d_{G}\big(1,h(l)\big) \mathbb{E}^{Y,\mathbb{G}(l)}_{vol}(dh) \\ &= \mathbb{E}\left[d_G\left(1,Y_{vol(D)}\right)\right] \leq K \sqrt{vol(D)}, 
\end{align*}
where the last inequality comes from Proposition \ref{croissanceLevy} and where $K$ depends only on $G$. The family $(\mathbb{E}^{Y,\mathbb{G}}_{vol})_{\mathbb{G} \in \mathcal{G}({\sf Aff}\left(\mathbb{R}^{2}\right))}$ is uniformly locally stochastically $\frac{1}{2}$-H\"{o}lder continuous.

Thus, Proposition \ref{constructfinal} can be applied in order to construct a stochastically continuous random holonomy field $\mathbb{E}^{Y}_{vol}$ such that for any finite planar graph $\mathbb{G} \in \mathcal{G}({\sf Aff}\left(\mathbb{R}^{2}\right))$, for any rooted spanning tree $T$ of $\mathbb{G}$ and any family of facial loops $\left(c_F\right)_{F \in \mathbb{F}^{b}}$ oriented anti-clockwise, under $\mathbb{E}^{Y}_{vol}$: 
\begin{enumerate}
\item the random variables $\big(h\left(\l_{c_{F},T}\right)\big)_{F\in \mathbb{F}^{b}}$ are independent, 
\item for any $F\in \mathbb{F}^{b}$, $h\left(\l_{c_{F},T}\right)$ has the same law as $Y_{vol(F)}$. 
\end{enumerate}
It remains to prove that this property is true for any finite planar graph, not necessarily with piecewise affine edges. Let $(m_t)_{t \in \mathbb{R}^{+}}$ be the continuous semi-group of convolution associated with $(Y_t)_{t \geq 0}$. Let $\mathbb{G}=(\mathbb{V}, \mathbb{E}, \mathbb{F})$ be a finite planar graph, let $T$ be a rooted spanning tree and let $\big(c_F\big)_{F \in \mathbb{F}^{b}}$ be a family of facial loops oriented anti-clockwise. Let us consider a sequence of finite planar graphs $\big(\mathbb{G}_n=(\mathbb{V}_n, \mathbb{E}_n, \mathbb{F}_n)\big)_{n \in \mathbb{N}}$ in $\mathcal{G}\big({\sf Aff}(\mathbb{R}^{2})\big)$ and $(\psi_n)_{n \in \mathbb{N}}$ a sequence of orientation-preserving homeomorphisms which satisfy the conditions of Theorem~\ref{approxgraph}. For any integer $n$, $\left(\psi_n(c_F)\right)_{F \in \mathbb{F}^{b}}$ is a family of facial loops for $\mathbb{G}_n$ which is oriented anti-clockwise and $\psi_n(T)$ is a spanning tree of $\mathbb{G}_n$. Using the discussion we had before, the law of $\big(h(\l_{\psi_n(c_F), \psi_n(T)})\big)_{F \in \mathbb{F}^{b}}$ under $\mathbb{E}^{Y}_{vol}$ is $\bigotimes\limits_{F \in \mathbb{F}^{b}}m_{vol(\psi_n(F))}$. As for any edge $e \in \mathbb{E}$, $\left(\psi_n(e)\right)_{n \geq 0}$ converges to $e$ for the convergence with fixed endpoints, for any face $F\in \mathbb{F}^{b}$, one has $\l_{\psi_n(c_F), \psi_n(T)} \underset{n \to \infty}{\longrightarrow} \l_{c_F,T} $ for the fixed endpoints convergence. Besides, using condition $4$ of Theorem \ref{approxgraph} and the continuity of $(m_t)_{t \in \mathbb{R}^{+}}$, $\bigotimes\limits_{F \in \mathbb{F}^{b}}m_{vol(\psi_n(F))} \underset{n \to \infty}{\longrightarrow} \bigotimes\limits_{F \in \mathbb{F}^{b}}m_{vol(F)}.$ Since $\mathbb{E}^{Y}_{vol}$ is stochastically continuous, under $\mathbb{E}^{Y}_{vol}$, the law of $\big(h(\l_{c_F,T})\big)_{F \in \mathbb{F}^{b}}$ is $\bigotimes\limits_{F \in \mathbb{F}^{b}}m_{vol(F)}$. 
\end{proof}

Let us remark that, in the latest argument, we actually proved that the family $\big(\mathbb{E}^{Y,\mathbb{G}}_{vol}\big)_{\mathbb{G}, vol}$ is continuously area-dependent. For any L\'{e}vy process which is invariant by conjugation, we have constructed a family of gauge-invariant stochastically continuous random holonomy fields. In the following, we will show that this family is a strong planar Markovian holonomy field.

\begin{proposition}
\label{creaYMpure2}
The family of random holonomy fields $\big(\mathbb{E}^{Y}_{vol}\big)_{vol}$ is a constructible stochastically continuous strong planar Markovian holonomy field. 
\end{proposition}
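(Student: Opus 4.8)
The plan is to push everything down to the discrete level and then invoke the extension machinery of Theorem \ref{exten3}. Writing $\mathbb{E}^{Y,\mathbb{G}}_{vol}$ for the measures of Proposition \ref{defplanarHF}, I first record that the restriction of $\mathbb{E}^{Y}_{vol}$ to $\mathcal{M}ult(P(\mathbb{G}),G)$ is exactly $\mathbb{E}^{Y,\mathbb{G}}_{vol}$: both are gauge-invariant of weight $1$ and, by the defining property established in Proposition \ref{creaYMpure} together with the uniqueness in Proposition \ref{defplanarHF} (equivalently Proposition \ref{unic}), they assign the same law $\bigotimes_{F\in\mathbb{F}^{b}}m_{vol(F)}$ to every family $\big(h(\l_{c_F,T})\big)_{F\in\mathbb{F}^{b}}$. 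Consequently, proving that $\big(\mathbb{E}^{Y}_{vol}\big)_{vol}$ is constructible amounts to checking that $\big(\mathbb{E}^{Y,\mathbb{G}}_{vol}\big)_{\mathbb{G},vol}$ satisfies the four axioms $\mathbf{DP_1}$, $\mathbf{DP_2}$, $\mathbf{DP_3}$, $\mathbf{DP_4}$ of a strong discrete planar Markovian holonomy field.

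Three of these axioms are immediate from the facial-lasso description. Axiom $\mathbf{DP_4}$, the projective consistency $\mathbb{E}^{\mathbb{G}_2}_{vol}\circ\rho^{-1}_{P(\mathbb{G}_1),P(\mathbb{G}_2)}=\mathbb{E}^{\mathbb{G}_1}_{vol}$ for $\mathbb{G}_1\preccurlyeq\mathbb{G}_2$, holds at once since both sides are restrictions of the single measure $\mathbb{E}^{Y}_{vol}$ and restriction maps compose; it is also the content of the compatibility property $(\mathcal{H})$ proved inside Proposition \ref{creaYMpure}. For $\mathbf{DP_1}$, let $\psi$ be an orientation-preserving homeomorphism with $\psi(\mathbb{G})=\mathbb{G}'$ and $vol(F)=vol'(\psi(F))$ on bounded faces. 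Because $\psi$ preserves orientation it carries a spanning tree and an anti-clockwise family of facial loops of $\mathbb{G}$ to data of the same kind on $\mathbb{G}'$, sending each bounded face $F$ to a bounded face of equal $vol'$-area; hence under $\mathbb{E}^{\mathbb{G}'}_{vol'}\circ\psi^{-1}$ the law of $\big(h(\l_{c_F,T})\big)_{F}$ is again $\bigotimes_{F}m_{vol(F)}$, and the uniqueness of Proposition \ref{defplanarHF} forces $\mathbb{E}^{\mathbb{G}'}_{vol'}\circ\psi^{-1}=\mathbb{E}^{\mathbb{G}}_{vol}$. Axiom $\mathbf{DP_3}$ then follows by specializing $\psi$ to the identity, as already noted after Definition \ref{discreteplanarHF}.

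The only substantial point is the $\mathcal{I}$-independence $\mathbf{DP_2}$, and this is precisely what Proposition \ref{combi1} was designed for. Given two simple loops $l_1,l_2$ of $\mathbb{G}$ with $\overline{{\sf Int}(l_1)}$ and $\overline{{\sf Int}(l_2)}$ disjoint, choose the spanning tree $T$ rooted at $v$ supplied by Proposition \ref{combi1}. Using the loop paradigm (Lemma \ref{multpara}, Lemma \ref{inv1}) I reduce any gauge-invariant function of $\big\{h(p):p\in P(\mathbb{G})\cap P(\overline{{\sf Int}(l_1)})\big\}$ to a diagonally-invariant function of the holonomies of loops based at $v$ lying in $\overline{{\sf Int}(l_1)}$; by the first assertion of Proposition \ref{combi1} these holonomies are words in $\big\{h(\l_{c_F,T}):F\subset\overline{{\sf Int}(l_1)}\big\}$, the conjugation by the spoke that arises when moving the base point being absorbed by the diagonal invariance. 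Thus the invariant $\sigma$-field attached to $l_1$ is contained in $\sigma\big(h(\l_{c_F,T}):F\subset\overline{{\sf Int}(l_1)}\big)$, and symmetrically for $l_2$. Since these two families of facial lassos are indexed by disjoint sets of faces and all the $\big(h(\l_{c_F,T})\big)_{F\in\mathbb{F}^{b}}$ are independent under $\mathbb{E}^{Y,\mathbb{G}}_{vol}$, the two $\sigma$-fields are independent, which is exactly $\mathbf{DP_2}$. I expect this to be the main obstacle, chiefly in the bookkeeping of the base-point changes and in verifying that the generating loops can be kept inside each region $\overline{{\sf Int}(l_i)}$.

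Finally, $\big(\mathbb{E}^{Y,\mathbb{G}}_{vol}\big)_{\mathbb{G},vol}$ is regular: its uniform local $\frac{1}{2}$-H\"{o}lder continuity was established in the proof of Proposition \ref{creaYMpure}, and its continuous area-dependence was observed immediately afterwards. Being a regular strong discrete planar Markovian holonomy field, Theorem \ref{exten3} yields a unique stochastically continuous strong planar Markovian holonomy field extending it; by the uniqueness in that theorem (equivalently Proposition \ref{unic} together with the density of ${\sf Aff}(\mathbb{R}^{2})$ from Lemma \ref{dense}) this extension coincides with $\big(\mathbb{E}^{Y}_{vol}\big)_{vol}$. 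Hence $\big(\mathbb{E}^{Y}_{vol}\big)_{vol}$ is a stochastically continuous strong planar Markovian holonomy field whose restriction to graphs is the discrete field analysed above, i.e. it is constructible, which completes the proof.
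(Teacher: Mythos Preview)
Your overall architecture is the paper's own: verify that $\big(\mathbb{E}^{Y,\mathbb{G}}_{vol}\big)_{\mathbb{G},vol}$ is a regular strong discrete planar Markovian holonomy field and then invoke Theorem \ref{exten3}. Your treatments of $\mathbf{DP_1}$, $\mathbf{DP_3}$, $\mathbf{DP_4}$ and of regularity match the paper's.

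There is, however, a genuine gap in your handling of $\mathbf{DP_2}$. The axiom $\mathbf{DP_2}$ only assumes ${\sf Int}(l_1)\cap{\sf Int}(l_2)=\emptyset$, whereas you begin your argument with the stronger hypothesis that $\overline{{\sf Int}(l_1)}\cap\overline{{\sf Int}(l_2)}=\emptyset$. This is not a harmless slip: Proposition \ref{combi1} is stated, and proved, only for disjoint \emph{closures} (its spanning tree is built by completing edges of both boundary loops, which can create a cycle if the loops share an edge or a vertex). So what you have actually established is $\mathbf{wDP_2}$, not $\mathbf{DP_2}$.

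The paper closes this gap by a separate approximation step. Using the stochastic continuity in law of $\big(\mathbb{E}^{Y,\mathbb{G}}_{vol}\big)_{\mathbb{G},vol}$ (a consequence of the regularity already obtained) and Theorem \ref{approxgraph}, one reduces to piecewise affine graphs and then approximates the given configuration by a nearby graph $\mathbb{G}'$ carrying loops $l_1',l_2'$ with genuinely disjoint closures, together with a path $p$ so that each loop in $\overline{{\sf Int}(l_2)}$ based at the common point is approximated by $p\,l'\,p^{-1}$ with $l'\subset\overline{{\sf Int}(l_2')}$. The $\mathcal{I}$-independence of the original families then follows from the (ordinary) independence in $\mathbb{G}'$ via Remark \ref{I-indep-et-indep-normale} and the change-of-base-point invariance, and one passes to the limit. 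You need to insert this reduction (or an equivalent argument) before appealing to Proposition \ref{combi1}.
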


\begin{proof}
We have already shown that the family  $\big(\mathbb{E}^{Y,\mathbb{G}}_{vol}\big)_{\mathbb{G}, vol}$ satisfies the Axiom $\mathbf{DP_4}$, is continuously area-dependent and locally stochastically $\frac{1}{2}$-H\"{o}lder continuous. By Theorem \ref{exten3}, it remains to check that $\big(\mathbb{E}^{Y,\mathbb{G}}_{vol}\big)_{\mathbb{G}, vol}$ satisfies the three Axioms $\mathbf{DP_1}$, $\mathbf{DP_2}$ and $\mathbf{DP_3}$ in Definition~\ref{discreteplanarHF}.

Besides, %using Proposition \ref{creaYMpure}, 
$\big(\mathbb{E}^{Y,\mathbb{G}}_{vol}\big)_{\mathbb{G}, vol}$ is stochastically continuous in law: using a continuity argument, it is enough to show that $\mathbf{wDP_2}$ holds instead of $\mathbf{DP_2}$. Let us give briefly the arguments which allow us to do so: first of all, using Theorem \ref{approxgraph}, we see that it is enough to consider graphs with piecewise affine edges. Let us suppose that the Axiom $\mathbf{wDP_2}$ holds, let us consider $\mathbb{G}$ a finite planar graph with piecewise affine edges and two loops $l_1$ and $l_2$ in $P(\mathbb{G})$ such that ${\sf Int}(l_1)\cap {\sf Int}(l_2) = \emptyset$. The only interesting case is when $\overline{{\sf Int}(l_1)}\cap \overline{{\sf Int}(l_2)} \ne \emptyset$: let us consider a point $v$ in this intersection. Using Remark \ref{changementdepoint}, it is enough to prove that for any family of loops $(l^{1}_{i})_{i=1}^{n}$ (resp. $(l^2_{i})_{i=1}^{m}$) in $\overline{{\sf Int}(l_1)}$ (resp. in $\overline{{\sf Int}(l_2)}$) based at $v$, $\left(h(l_i^{1})\right)_{i=1}^{n}$ is $\mathcal{I}$-independent of $\left(h(l_i^{2})\right)_{i=1}^{m}$ under the measure $\mathbb{E}^{Y,\mathbb{G}}_{vol}$. Let us consider such families of loops $(l^{1}_{i})_{i=1}^{n}$ and $(l^2_{i})_{i=1}^{m}$. One can always approximate $\mathbb{G}$ by a finite planar graph $\mathbb{G}'$ with piecewise affine edges such that there exist $l_1'$ and $l_2'$ two loops in $\mathbb{G}'$ and $p$ a path in $\mathbb{G}'$ such that: 
\begin{enumerate}
\item $\overline{{\sf Int}(l_1')} \cap \overline{{\sf Int}(l_2')} = \emptyset$,
\item for any loop $l$ in $\overline{{\sf Int}(l_1)}$ based at $v$, there exists a loop in  $\overline{{\sf Int}(l_1')}$ which approximates $l$ for the convergence with fixed endpoints, 
\item for any loop $l$ in $\overline{{\sf Int}(l_2)}$ based at $v$, there exist a loop in $\overline{{\sf Int}(l_2')}$, denoted by $l'$ such that $p l' p^{-1}$ approximates $l$  for the convergence with fixed endpoints.  
\end{enumerate}

Let us consider the graph $\mathbb{G}'$, the two loops $l_1'$ and $l_2'$ and the path $p$ given by the last assertion. We can approximate $(l^{1}_{i})_{i=1}^{n}$ and $(l^2_{i})_{i=1}^{m}$ by two families $(l'^{1}_{i})_{i=1}^{n}$ and $(pl'^2_{i}p^{-1})_{i=1}^{m}$ such that the first one is in $\overline{{\sf Int}(l'_1)}$ and $(l'^{2}_{i})_{i=1}^{m}$ is in $\overline{{\sf Int}(l'_2)}$. The $\mathcal{I}$-independence of $(l^{1}_{i})_{i=1}^{n}$ and $(l^2_{i})_{i=1}^{m}$ under the measure $\mathbb{E}^{Y,\mathbb{G}}_{vol}$ would be a consequence of the $\mathcal{I}$-independence of $(l'^{1}_{i})_{i=1}^{n}$ and $(pl'^2_{i}p^{-1})_{i=1}^{m}$ under  $\mathbb{E}^{Y,\mathbb{G}'}_{vol}$, which is equivalent to the the $\mathcal{I}$-independence of $(l'^{1}_{i})_{i=1}^{n}$ and $(l'^2_{i})_{i=1}^{m}$ under   $\mathbb{E}^{Y,\mathbb{G}'}_{vol}$. Using Remark  \ref{I-indep-et-indep-normale}, this is equivalent to the independence of $(l'^{1}_{i})_{i=1}^{n}$ and $(l'^2_{i})_{i=1}^{m}$ under   $\mathbb{E}^{Y,\mathbb{G}'}_{vol}$ which is granted since we supposed that the Axiom $\mathbf{wDP_2}$ holds.

Let us prove that $\big(\mathbb{E}^{Y,\mathbb{G}}_{vol}\big)_{\mathbb{G}, vol}$ satisfies the three Axioms $\mathbf{DP_1}$, $\mathbf{wDP_2}$ and $\mathbf{DP_3}$.

$\mathbf{\mathbf{DP_1}}$: Consider $vol$ and $vol'$ two measures of area on $\mathbb{R}^{2}$, $\mathbb{G}$ and $\mathbb{G}'$ two finite planar graphs and $\psi$ a homeomorphism which preserves the orientation. Let us suppose that $\psi(\mathbb{G}) = \mathbb{G}'$ and for any $F \in \mathbb{F}^{b}$, $vol(F)= vol'(\psi(F))$. Let $\left(c'_F\right)_{F \in \mathbb{F}'^{b}}$ be a family of facial loops oriented anti-clockwise for $\mathbb{G}'$ and let $T'$ be a rooted spanning tree of $\mathbb{G}'$. 
We consider $\big(c_F = \psi^{-1}\big(c'_{\psi(F)}\big)\big)_{F \in \mathbb{F}^{b}}$ and $T = \psi^{-1} (T')$. The family $\left(c_F\right)_{F \in \mathbb{F}^{b}}$ is a family of facial loops for $\mathbb{G}$ which are oriented anti-clockwise and $T$ is a rooted spanning tree of $\mathbb{G}$. Recall the notations used in the proof of Proposition \ref{defplanarHF}. By construction, we have the equality: 
\begin{align*}
\mathbb{E}^{Y,\mathbb{G}'}_{vol', T', \left(c'_F\right)_{F \in \mathbb{F}'^{b}}} \circ \psi^{-1} = \mathbb{E}^{Y,\mathbb{G}}_{vol, T, \left(c_F\right)_{F \in \mathbb{F}^{b}}}, 
\end{align*}
where we denoted also by $\psi$ the induced application from $\mathcal{M}ult(P(\mathbb{G}'), G)$ to $\mathcal{M}ult(P(\mathbb{G}), G)$ induced by the homeomorphism $\psi$. Using the Proposition \ref{defplanarHF}, we get $\mathbb{E}^{Y,\mathbb{G}'}_{vol'} \circ \psi^{-1} = \mathbb{E}^{Y,\mathbb{G}}_{vol}. $

$\mathbf{\mathbf{wDP_2}}$: Let $vol$ be a measure of area on $\mathbb{R}^{2}$, $\mathbb{G} = (\mathbb{V}, \mathbb{E}, \mathbb{F})$ be a finite planar graph in $\mathcal{G}\big({\sf Aff}\left(\mathbb{R}^{2}\right)\!\big)$, $m$ be a vertex of $\mathbb{G}$ and $L_1$ and $L_2$ be two simple loops in $P(\mathbb{G})$ whose closure of the interiors are disjoint. As an application of Proposition \ref{combi1}, we can consider $T$ a spanning tree rooted at $m$, such that for any family of facial loops $\left(c_F\right)_{F \in \mathbb{F}^{b}}$ oriented anti-clockwise: 
\begin{enumerate}
\item for every loop $l$ in $P(\mathbb{G})$ inside $\overline{{\sf Int(}L_1)}$, $\Big[[m,\underline{l}]_T l [m,\underline{l}]_T^{-1}\Big]_{\simeq}$ is a product in $RL_{m}(\mathbb{G})$ of elements of $ \left\{\l_{c_F,T}^{\pm 1} ; F\in \mathbb{F}^{b}, F \subset \overline{{\sf Int(}L_1)} \right\}$, 
\item for every loop $l$ in $P(\mathbb{G})$ inside $\overline{{\sf Int(}L_2)}$, $\Big[[m,\underline{l}]_T l [m,\underline{l}]_T^{-1}\Big]_{\simeq}$ is a product in $RL_{m}(\mathbb{G})$ of elements of $ \left\{\l_{c_F,T}^{\pm 1} ; F\in \mathbb{F}^{b}, F \subset \overline{{\sf Int(}L_2)}\right\}$.
\end{enumerate}

Let us consider $\left(p_i\right)_{i=1}^{n}$ (resp. $(p'_j)_{j=1}^{n'}$), some paths in $P(\mathbb{G})$ which are inside $\overline{{\sf Int(}L_1)}$ (resp. $\overline{{\sf Int(}L_2)}$). Recall the definition given by Equality (\ref{moyennisee}). For any continuous function $f_1$ (resp. $f_2$) defined on $G^{n}$ (resp. on $G^{n'}$), using the gauge-invariance, $\mathbb{E}^{Y,\mathbb{G}}_{vol}\left[f_1\left( \left( h(p_i)\right)_{i=1}^{n}\right) f_2 \left( \left( h(p'_j)\right)_{j=1}^{n'}\right)\right] $ is equal to: 
\begin{align*}
\mathbb{E}^{Y,\mathbb{G}}_{vol}\left[\hat{f_{1}}_{J_{p_1,...,p_n}} \left( h(p_1), ..., h(p_n)\right)    \hat{f_2}_{J_{p'_1,...,p'_{n'}}} \left( h(p'_1), ..., h(p'_{n'}) \right) \right]. 
\end{align*}
Thus, it is also equal to: 
\begin{align*}
\mathbb{E}^{Y,\mathbb{G}}_{vol}\left[\hat{f_{1}}_{J_{p_1,...,p_n}} \left( \left(h(\tilde{l}_i)\right)_{i=1}^{n}\right)    \hat{f_2}_{J_{p'_1,...,p'_{n'}}}\left( \left(h(\tilde{l}_i')\right)_{i=1}^{n'}\right) \right]. 
\end{align*}
where, for any $i \in \{1,...,n\}$, $\tilde{l}_i=[m,\underline{p_i}]_T p_i[m,\overline{p_i}]_T^{-1}$ and for any $i \in \{1,...,n'\}$, $\tilde{l}'_i = [m,\underline{p'_i}]_T p'_i[m,\overline{p'_i}]_T^{-1}$. Recall the form of $T$ given in the proof of Proposition \ref{combi1}: this implies that there exist $(l_i)_{i=1}^{n}$ some loops in $\overline{{\sf Int(}L_1)}$ and $(l'_i)_{i=1}^{n'}$ some loops in $\overline{{\sf Int(}L_2)}$ such that for any $i \in \{1,...,n\}$ and any $j \in \{1,...,n'\}$, 
\begin{align*}
\tilde{l}_i &= [m,\underline{l_i}]_{T} l_i [m,\underline{l_i}]_T^{-1},\\
\tilde{l}_j' &= [m,\underline{l_j'}]_{T} l_j' [m,\underline{l_j'}]_T^{-1}, 
\end{align*}
in $RL_{m}(\mathbb{G})$. Using the properties satisfied by $T$: 
\begin{align*}
\sigma \left( \left(h(\tilde{l}_i)\right)_{i=1}^{n} \right)&\subset \sigma\left(\left\{h(\l_{c_F,T}) ; F\in \mathbb{F}^{b}, F \subset \overline{{\sf Int(}L_1)} \right\}\right),\\
\sigma \left( \left(h(\tilde{l}'_i)\right)_{i=1}^{n'} \right) &\subset \sigma\left( \left\{h(\l_{c_F,T}); F\in \mathbb{F}^{b}, F \subset \overline{{\sf Int(}L_2)}\right\}\right).
\end{align*}
We recall that $\overline{{\sf Int(}L_1)} \cap \overline{{\sf Int(}L_2)} = \emptyset$, thus the two $\sigma$-fields: 
\begin{align*}
\sigma\left(\left\{h(\l_{c_F,T}) ; F\in \mathbb{F}^{b}, F \subset \overline{{\sf Int(}L_1)} \right\}\right)&, \\
\sigma\left( \left\{h(\l_{c_F,T}); F\in \mathbb{F}^{b}, F \subset \overline{{\sf Int(}L_2)}\right\}\right)&
\end{align*}
 are independent under $\mathbb{E}^{Y, \mathbb{G}}_{vol}$. Thus $\mathbb{E}^{Y,\mathbb{G}}_{vol}\left[f_1\left( \left( h(p_i)\right)_{i=1}^{n}\right) f_2 \left( \left( h(p'_j)\right)_{j=1}^{n'}\right)\right] $ is equal to: 
 \begin{align*}
\mathbb{E}^{Y,\mathbb{G}}_{vol}\left[\hat{f_{1}}_{J_{p_1,...,p_n}} \left( \left(h(\tilde{l}_i)\right)_{i=1}^{n}\right) \right] \mathbb{E}^{Y,\mathbb{G}}_{vol}\left[   \hat{f_2}_{J_{p'_1,...,p'_{n'}}}\left( \left(h(\tilde{l}_i')\right)_{i=1}^{n'}\right) \right], 
 \end{align*}
 which is equal to $\mathbb{E}^{Y,\mathbb{G}}_{vol}\left[f_1\left( \left( h(p_i)\right)_{i=1}^{n}\right) \right] \mathbb{E}^{Y,\mathbb{G}}_{vol} \left[f_2 \left( ( h(p'_j))_{j=1}^{n'}\right)\right]$:  the axiom $\mathbf{wDP_2}$ is satisfied. 

$\mathbf{DP_3}$: Let $l$ be a simple loop, let $vol$ and $vol'$ be two measures of area on $\mathbb{R}^{2}$ which are equal in the interior of $l$. Let $\mathbb{G}$ be a finite planar graph included in $\overline{Int(l)}$. The bounded faces of $\mathbb{G}$ are in the interior of $l$ thus for any bounded face $F$ of $\mathbb{G}$, $vol(F) = vol'(F)$. By definition, it is clear that $\mathbb{E}^{Y,\mathbb{G}}_{vol} = \mathbb{E}^{Y,\mathbb{G}}_{vol'}$.

We have proved all the conditions on $\big(\mathbb{E}^{Y,\mathbb{G}}_{vol}\big)_{\mathbb{G}, vol}$ we needed in order to apply Theorem \ref{exten3}. The family of holonomy fields $\big(\mathbb{E}^{Y}_{vol}\big)_{vol}$ is a constructible stochastically continuous strong planar Markovian holonomy field. \end{proof}

For this new construction, we used the loop paradigm which links the multiplicative functions on a set $P$ and the pre-multiplicative functions on its set of loops. The edge paradigm given by the Equation (\ref{lequationedgeparadigme}) can be used to give an explicit formula for discrete planar Yang-Mills fields associated with a conjugation invariant L\'{e}vy process with density. 

\begin{proposition}
\label{densitee}
Let us suppose that for any positive real $t$, $Y_t$ has a density $Q_t$ with respect to the Haar measure. 
Let $\big(\mathbb{E}^{Y,\mathbb{G}}_{vol}\big)_{\mathbb{G}, vol}$ be the discrete planar Yang-Mills field associated with $Y$. For any finite planar graph $\mathbb{G}$ and for any measure of area $vol$: 
\begin{align}
\mathbb{E}^{Y,\mathbb{G}}_{vol}(dh) = \prod_{F \in \mathbb{F}^{b}} Q_{vol(F)} \big(h(\partial F)\big) \bigotimes_{e \in \mathbb{E}^{+}} dh(e),
\end{align}
where $\partial F$ is the anti-clockwise oriented facial cycle associated with $F$, the notation $Q_{vol(F)} \big(h(\partial F )\big)$ means that we consider $Q_{vol(F)} \big(h(c)\big)$ where $c$ represents $\partial F$ (this does not depend on the choice of $c$ since $Q_{vol(F)}$ is invariant by conjugation) and $\bigotimes_{e \in \mathbb{E}^{+}} dh(e)$ is the push-forward of $\bigotimes_{e \in \mathbb{E}^{+}} dg_e$ on $\mathcal{M}ult(P, G)$ by the edge paradigm identification. It is independent of the choice of orientation $\mathbb{E}^{+}$.
\end{proposition}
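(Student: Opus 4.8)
The plan is to read the right-hand side as an explicit measure $\nu := \prod_{F\in\mathbb{F}^{b}} Q_{vol(F)}\big(h(\partial F)\big)\bigotimes_{e\in\mathbb{E}^{+}} dh(e)$ on $G^{\mathbb{E}^{+}}\simeq \mathcal{M}ult(P(\mathbb{G}),G)$ (via the edge paradigm \eqref{lequationedgeparadigme}) and to identify it with $\mathbb{E}^{Y,\mathbb{G}}_{vol}$ through the characterization of Proposition \ref{defplanarHF}. First I would dispose of the easy structural claims. Independence of the orientation $\mathbb{E}^{+}$ holds because reversing an edge $e$ changes the coordinate $h(e)$ into $h(e)^{-1}$, under which the Haar factor $dh(e)$ is invariant (Haar on the compact group $G$ is inversion-invariant), while the density only involves the holonomy along the fixed facial cycle and $Q_{vol(F)}$ is conjugation-invariant, so it is unaffected. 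Gauge-invariance of $\nu$ is equally formal: a gauge transformation $j\in G^{\mathbb V}$ preserves $\bigotimes_{e} dh(e)$ by bi-invariance of Haar and conjugates each $h(\partial F)$ (since $\partial F$ is a loop), so the density is preserved by conjugation-invariance of $Q$.

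The heart of the proof is a change of variables computing the joint law of the facial lasso holonomies $\big(h(\mathbf{l}_{c_{F},T})\big)_{F\in\mathbb{F}^{b}}$ under $\nu$, for a fixed rooted spanning tree $T$ and family $(c_F)$ of anti-clockwise facial loops. Fixing the orientation and splitting $G^{\mathbb{E}^{+}}=G^{T^{+}}\times G^{(\mathbb{E}\setminus T)^{+}}$, I would introduce the map $\Theta$ sending $(h(e))_{e\in\mathbb{E}^{+}}$ to the tree holonomies $(h(e))_{e\in T^{+}}$ together with $\big(h(\mathbf{l}_{c_F,T})\big)_{F}$. For a fixed value of the tree component the passage from non-tree holonomies to their lassos, $h(e)\mapsto h(l_{e,T})=h([\overline e,v]_T)\,h(e)\,h([v,\underline e]_T)$, is a coordinatewise two-sided translation of $G^{(\mathbb{E}\setminus T)^{+}}$, hence Haar-preserving. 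By Lemma \ref{generate1} the loops $(l_{e,T})_{e\in(\mathbb{E}\setminus T)^{+}}$ freely generate $RL_v(\mathbb{G})$, and by Proposition \ref{generate} so do $(\mathbf{l}_{c_F,T})_F$; since both families have cardinality $\#\mathbb{F}^{b}$ (Remark \ref{rq:nbfree}), the second is the image of the first under an automorphism of the free group $RL_v(\mathbb{G})$. Any such automorphism is a product of elementary Nielsen transformations, each inducing a Haar-measure-preserving bijection of $G^{\#\mathbb{F}^{b}}$ (coordinate permutation, inversion $g\mapsto g^{-1}$, and $g_i\mapsto g_ig_j$, all Haar-preserving by bi-invariance and inversion-invariance). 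Composing, I conclude that for each fixed tree component the second component of $\Theta$ is a Haar-preserving bijection $G^{(\mathbb{E}\setminus T)^{+}}\to G^{\mathbb{F}^{b}}$.

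To finish I would push $\nu$ forward by $\Theta$. Because $\mathbf{l}_{c_F,T}$ is a lasso whose meander represents the anti-clockwise cycle $\partial F$, its holonomy $h(\mathbf{l}_{c_F,T})$ is a conjugate of $h(\partial F)$, so by conjugation-invariance of $Q$ the density rewrites as $\prod_F Q_{vol(F)}\big(h(\mathbf{l}_{c_F,T})\big)$, depending only on the face-coordinates and not on the tree holonomies. Applying the Haar-preserving change of variables and integrating out the tree holonomies, each contributing $\int_G dh(e)=1$, the push-forward of $\nu$ onto the face-coordinates is exactly $\bigotimes_{F\in\mathbb{F}^{b}} Q_{vol(F)}(x_F)\,dx_F=\bigotimes_{F} m_{vol(F)}$, where $m_t$ is the law of $Y_t$. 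Hence the total mass of $\nu$ is $\prod_F\int_G Q_{vol(F)}\,d\mathrm{Haar}=1$, and under $\nu$ the variables $\big(h(\mathbf{l}_{c_F,T})\big)_F$ are independent with $h(\mathbf{l}_{c_F,T})$ distributed as $Y_{vol(F)}$. Thus $\nu$ is a gauge-invariant weight-one random holonomy field meeting the two defining properties of Proposition \ref{defplanarHF}, so $\nu=\mathbb{E}^{Y,\mathbb{G}}_{vol}$.

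The main obstacle is the change-of-variables step, namely packaging the passage from edge holonomies to facial lasso holonomies as a Haar-preserving transformation. Its key inputs are that the two natural generating families are free bases of the same free group, that free-group automorphisms act on $G^{n}$ by Haar-preserving maps, and that conjugation-invariance of $Q$ absorbs both the spokes of the lassos and the discrepancy between $\mathbf{l}_{c_F,T}$ and $\partial F$. Some care is needed with the order-reversing convention for multiplicative functions (equivalently, working with homomorphisms into $G^{\vee}$), but this does not affect the Haar computations since Haar measure is bi-invariant.
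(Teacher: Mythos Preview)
Your proof is correct and takes a genuinely different, more self-contained route than the paper's.

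The paper argues indirectly: it first invokes \cite{Levy} to assert that the density formula itself defines a stochastically continuous in law weak discrete planar Markovian holonomy field, then appeals to the forward-referenced characterization Theorem \ref{caracterisation0} to reduce the comparison with $\big(\mathbb{E}^{Y,\mathbb{G}}_{vol}\big)_{\mathbb{G},vol}$ to the single strip graph $\mathbb{N}^{2}\cap(\mathbb{R}^{+}\times[0,1])$, and finally checks on that graph, via Lemma \ref{haar}, that $\big(h(L_{n,0})\big)_n$ is i.i.d.\ with the correct law under the density measure. Your argument instead works on an arbitrary graph $\mathbb{G}$ and appeals directly to the uniqueness in Proposition \ref{defplanarHF}: you exhibit a Haar-preserving change of variables $G^{\mathbb{E}^{+}}\to G^{T^{+}}\times G^{\mathbb{F}^{b}}$ (tree holonomies first, then non-tree lasso holonomies, then the Nielsen automorphism taking $(l_{e,T})$ to $(\mathbf{l}_{c_F,T})$), rewrite the density via conjugation-invariance, and read off independence and marginals. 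Your approach avoids both the external input from \cite{Levy} and the heavy forward reference to Theorem \ref{caracterisation0}; the price is the appeal to Nielsen's theorem that automorphisms of a finitely generated free group are products of elementary Nielsen transformations. The paper's route, by contrast, ties this proposition into the global architecture of the characterization theory, which has expository value but makes the logical dependencies less local.
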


Recall the definition of $L_{i,j}$ in Definition \ref{base}. In order to make the proof simple, we will use the upcoming Theorem \ref{caracterisation0} which roughly asserts that a stochastically continuous planar Markovian holonomy field is characterized by the law of the random sequence $(h(L_{n,0}))_{n \in \mathbb{N}}$.

\begin{proof}
A slight modification of Section 4.3 in \cite{Levy} shows that: 
\begin{align}
\label{densite}
\bigg(\prod_{F \in \mathbb{F}^{b}} Q_{vol(F)} ( h(\partial F)) \bigotimes_{e \in \mathbb{E}^{+}} dh(e)\bigg)_{\mathbb{G}, vol}
\end{align}
is a stochastically continuous in law discrete planar Markovian holonomy field. Let $\mathbb{G}=(\mathbb{V}, \mathbb{E}, \mathbb{F})$ be the planar graph $\mathbb{N}^{2}\cap\big(\mathbb{R}^{+} \times [0,1]\big)$. Let us consider $\mathbb{E}^{+}$ an orientation of $\mathbb{E}$. As an application of Theorem \ref{caracterisation0}, we only have to check that for any positive real $\alpha$, $\big(h(L_{n,0})\big)_{n \in \mathbb{N}}$ has the same law under $\mathbb{E}^{Y,\mathbb{G}}_{\alpha dx}$ as under $\prod\limits_{F \in \mathbb{F}^{b}} Q_{\alpha} \big(h(\partial F)\big) \bigotimes\limits_{e \in \mathbb{E}^{+}} dh(e)$. Since the value of $\alpha$ will not matter we suppose that $\alpha$ equals to $1$. 

By Proposition \ref{defplanarHF}, under $\mathbb{E}^{Y,\mathbb{G}}_{dx}$, $\big(h(L_{n,0})\big)_{n \in \mathbb{N}}$ are i.i.d. random variables which have the same law as $Y_1$. It remains to prove that this property is true under $\prod\limits_{F \in \mathbb{F}^{b}} Q_{1} \big(h(\partial F)\big) \bigotimes\limits_{e \in \mathbb{E}^{+}} dh(e)$.

Under the probability law $\bigotimes_{e \in \mathbb{E}^{+}} dh(e)$, $\big(h(e)\big)_{e \in \mathbb{E}^{+}}$ are i.i.d. and Haar distributed. Using the multiplicativity property of random holonomy fields, for any integer $n$, $h(L_{n,0})$ is a product of elements of $\left(h(e)\right)_{e \in \mathbb{E}} = (h(e))_{e \in \mathbb{E}^{+}} \cup (h(e)^{-1})_{e \in \mathbb{E}^{+}}$. An important remark is that for any integer $n$, the edge $e^{r}_{n,0}$, defined in the Notation \ref{not:er}, appears only once in the reduced decomposition of $L_{n,0}$ and in no other reduced decomposition of $L_{m,0}$ with $m\ne n$. Applying Lemma \ref{haar}, one has that under $ \bigotimes_{e \in \mathbb{E}^{+}} dh(e)$, $\big(h(L_{n,0})\big)_{n \in \mathbb{N}}$ are independent and each of them is a Haar random variable. Recall the notation $\partial c_{i,j}$ defined in Definition \ref{base}. Since $Y_{1}$ is invariant by conjugation, for any bounded face $F$ in $\mathbb{F}$, there exists an integer $n \in \mathbb{N}$ such that $Q_{1}\left( \partial F\right) = Q_{1}\big(h(\partial c_{n,0})\big) = Q_{1}\big(h(L_{n,0})\big)$. 
Let $f: G^{\mathbb{N}} \to \mathbb{R}$ be a measurable function, the following sequence of equality holds: 
\begin{align*}
\int_{\mathcal{M}ult\left(P(\mathbb{G}),G\right) } &f\left(\left(h(L_{n,0})\right)_{n\in \mathbb{N}}\right)\prod_{F \in \mathbb{F}^{b}} Q_1\left(h(\partial F)\right) \bigotimes_{e \in \mathbb{E}^{+}} dh(e)\\
&= \int_{\mathcal{M}ult\left(P(\mathbb{G}),G\right) } f\left(\left(h(L_{n,0})\right)_{n\in \mathbb{N}}\right)\prod_{n \in \mathbb{N}} Q_1\left(h(\partial c_{n,0})\right) \bigotimes_{e \in \mathbb{E}^{+}} dh(e) \\&= \int_{\mathcal{M}ult\left(P(\mathbb{G}),G\right) } f\left(\left(h(L_{n,0})\right)_{n \in \mathbb{N}}\right) \prod_{n \in\mathbb{N}} Q_1\left(h(L_{n,0})\right) \bigotimes_{e \in \mathbb{E}^{+}} dh(e)\\
&= \int_{G^{\mathbb{N}}} f\left(\left(g_n\right)_{n \in\mathbb{N}}\right) \bigotimes_{n \in \mathbb{N}} \left(Q_1(g_{n}) dg_{n}\right),
\end{align*}
which is the assertion we had to prove. \end{proof}

\begin{lemme}
\label{haar}
Let $(\alpha_{i})_{i=1}^{\infty}$ be a sequence of independent $G$-valued random variables which are Haar distributed. Let $(\beta_{j})_{j=1}^{\infty}$ be a sequence of $G$-valued random variables such that for every $j\in \mathbb{N}^{*}$, $\beta_{j}$ is a product of elements of $\{ \alpha_{i}, \alpha_{i}^{-1}, i\in \mathbb{N}^{*}\}$: $\beta_{j} = w_{j}\big( (\alpha_{i}, \alpha_{i}^{-1})_{i \in \mathbb{N}^{*}} \big),$ with $w_{j}$ being a finite word. 

Suppose that for any $j\in \mathbb{N}^{*}$, there exists an index $i_{j}$ such that $\alpha_{i_{j}}$ appears exactly once in $w_{j}$ and in no other word $(w_{j'})_{j'\ne j}$. Then $(\beta_{j})_{j=1}^{\infty}$ is a family of independent Haar distributed random variables.
\end{lemme}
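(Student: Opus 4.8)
The plan is to exploit the fact that each $\beta_j$ carries a \emph{private} generator $\alpha_{i_j}$ occurring in $w_j$ and in no other word, and to absorb all the remaining factors into this Haar-distributed generator. First I would observe that the map $j \mapsto i_j$ is injective: if $i_j = i_{j'}$ with $j \ne j'$, then $\alpha_{i_j}$ would occur in both $w_j$ and $w_{j'}$, contradicting the hypothesis. Setting $I = \{i_j : j \in \mathbb{N}^*\}$, the hypothesis applied to each $j' \ne j$ shows that $w_j$ contains none of the private generators $\alpha_{i_{j'}}$; hence $w_j$ meets $I$ only through the single occurrence of $\alpha_{i_j}$. This yields the factorization
\begin{align*}
\beta_j = u_j\, \alpha_{i_j}^{\epsilon_j}\, v_j,
\end{align*}
where $\epsilon_j \in \{-1,1\}$ and $u_j, v_j$ are (possibly empty) words in the non-private generators $\{\alpha_i, \alpha_i^{-1} : i \notin I\}$ only.

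The second ingredient is the bi-invariance and inversion-invariance of the normalized Haar measure on the compact group $G$: for any fixed $u, v \in G$, any $\epsilon \in \{-1,1\}$ and any bounded measurable $f \colon G \to \mathbb{R}$,
\begin{align*}
\int_G f(u\, a^{\epsilon}\, v)\, da = \int_G f(g)\, dg,
\end{align*}
where $da$ and $dg$ denote the Haar probability measure. Inversion-invariance follows from the uniqueness of Haar measure, since $a \mapsto a^{-1}$ pushes it forward to a left-invariant probability measure, hence to itself. I would then compute finite-dimensional distributions directly: fixing distinct indices $j_1, \ldots, j_n$ and bounded measurable $f_1, \ldots, f_n$, the goal is the factorization
\begin{align*}
\mathbb{E}\Big[\textstyle\prod_{k=1}^{n} f_k(\beta_{j_k})\Big] = \prod_{k=1}^{n} \int_G f_k(g)\, dg.
\end{align*}

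Since the private indices $i_{j_1}, \ldots, i_{j_n}$ are distinct and none of the words $u_{j_k}, v_{j_k}$ involves any private generator, I can apply Fubini's theorem and the independence of the $\alpha_i$ to integrate out the variables $\alpha_{i_{j_1}}, \ldots, \alpha_{i_{j_n}}$ one at a time. When integrating over $\alpha_{i_{j_k}}$, only the factor $f_k(\beta_{j_k})$ depends on it (every other $\beta_{j_{k'}}$ avoids the index $i_{j_k}$), so with the remaining variables frozen the integral gives $\int_G f_k(u_{j_k}\, a^{\epsilon_{j_k}}\, v_{j_k})\, da = \int_G f_k\, dg$ by the invariance above. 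This value is a constant independent of all other variables, so successive integration produces exactly the claimed product. As the finite-dimensional laws of $(\beta_j)$ are the product Haar measures, the family is independent with each $\beta_j$ Haar distributed.

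I do not expect a serious obstacle: the argument is the standard \emph{absorption} property of the Haar measure. The only points requiring care are the bookkeeping ensuring $u_j$ and $v_j$ avoid every private generator, so that integrating out $\alpha_{i_{j_k}}$ disturbs no other factor, and the legitimacy of integrating the independent private generators one at a time; both are settled by the injectivity of $j \mapsto i_j$ together with Fubini's theorem and the independence of the $\alpha_i$.
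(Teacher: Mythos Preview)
Your proposal is correct and follows essentially the same absorption argument as the paper: isolate the private generator $\alpha_{i_j}$ in $\beta_j$, observe the remaining factors do not involve it, and use bi-invariance of Haar measure to integrate it out. The paper's version is organized slightly differently---it fixes one $j$ and shows $\beta_j$ is Haar and independent of $(\beta_{j'})_{j'\ne j}$ by integrating only $\alpha_{i_j}$, without needing the stronger observation that $u_j,v_j$ avoid \emph{all} private generators---but the mechanism is identical, and your more explicit bookkeeping (injectivity of $j\mapsto i_j$, the $\epsilon_j$ exponent) is a welcome refinement.
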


\begin{proof}
Let $k$ be any positive integer and let $(j, j_1, ..., j_k)$ be a $k+1$-tuple of positive integers. Let $i_j \in \mathbb{N}^{*}$ such that $\alpha_{i_{j}}$ appears exactly once in $w_j$ and in no other word $(w_{j'})_{j' \ne j}$. Let $F: G^{k} \to \mathbb{R}$ and $f: G \to \mathbb{R}$ be two continuous functions. There exist $w_1$ and $w_2$ two words in $(\alpha_{i}, \alpha_i^{-1})_{i \ne i_j}$, $J$ a subset of $\mathbb{N}\setminus \{ i_j\}$ and $\tilde{F}$ a continuous function from $G^{\# J}$ to $\mathbb{R}$ such that a.s.: 
\begin{align*}
f(\beta_{j})F(\beta_{j_1}...\beta_{j_k}) =f( w_1 \alpha_{i_j} w_2) \tilde{F}\left((\alpha_{i})_{i \in J}\right). 
\end{align*}
Using the translation invariance of the Haar measure: 
\begin{align*}
\mathbb{E}\left[f(\beta_{j})F(\beta_{j_1}...\beta_{j_k}) \right] &= \mathbb{E}\left[ f( w_1 \alpha_{i_j} w_2) \tilde{F}\left((\alpha_{i})_{i \in J}\right)\right]\\
&= \int_{G} \mathbb{E}\left[f(w_1 x w_2) \tilde{F}\left((\alpha_{i})_{i \in J}\right)\right] dx\\
&=  \int_{G} \mathbb{E}\left[f(x) \tilde{F}\left((\alpha_{i})_{i \in J}\right)\right] dx\\
&= \left(\int_{G} f(x) dx\right) \mathbb{E}\left[\tilde{F}\left((\alpha_{i})_{i \in J}\right)\right].
\end{align*}
Thus for any $j \in \mathbb{N}^{*}$, $\beta_j$ is a Haar random variable which is independent of $\left(\beta_j, j \ne i\right)$. \end{proof}

\section{Construction of general planar Yang-Mills fields}
\label{generalisationdeconstruction}
In the last subsection, we considered only L\'{e}vy processes which were invariant by conjugation by $G$. Actually for any $G$-valued self-invariant by conjugation L\'{e}vy process $Y$, one can construct a planar Markovian holonomy field associated to $Y$. Let us recall the notion of support of a process. 
\begin{definition}
\label{supportlevy}
Let $Y=\big(Y_t\big)_{t \in \mathbb{R}^{+}}$ be a random process. The support of $Y$ is $H_Y = \overline{\big<\bigcup\limits_{t \in \mathbb{R}^{+}} H_{Y_t}\big>}.$ \end{definition}

An other formulation is to say that the support of a process, say $Y$, is the smallest closed group such that for any $t \in \mathbb{R}^{+}$, $\mathbb{P}(Y_t \in H_Y) = 1$. If $Y$ is a L\'{e}vy process, we can consider $Y$ as a process living in $H_{Y}$.

\begin{remarque}
Let $Y$ be a L\'{e}vy process and let us suppose that for any $t\geq 0$, $e$ is in ${\sf Supp}(Y_t)$. Then for any $t>0$, $H_Y=H_{Y_t}$. Indeed, using the property that $Y$ is a L\'{e}vy process, for any $0 \leq t<s$, ${\sf Supp}(Y_{s}) = \overline{{\sf Supp}(Y_{t}) {\sf Supp}(Y_{s-t})}$. Thus, using the condition on the support of $Y_{t}$, $H_{Y_t}$ is increasing in $t$. Yet, using the same argument, we see that $H_{Y_{2t}} = H_{Y_t}$, thus $H_{Y_t}$ does not depend on $t>0$. This remark explains why we impose that for any $t\geq 0$, $e \in {\sf Supp}(Y_t)$ in the upcoming Proposition \ref{unicity2-proc}. 
\end{remarque}

The notion of invariance by conjugation for a random process can be weakened. 

\begin{definition}
\label{self-invariance}
A $G$-valued process $(Y_t)_{t \geq 0}$ is self-invariant by conjugation if it is invariant by conjugation by $H_{Y}$. 
\end{definition}

Let $\eta$ be a finite Borel measure on $G^{n}$. For any $g$ in $G$, the measure $\eta^{g}$ on $G^{n}$ is the unique measure such that for any continuous function $f: G^{n} \to \mathbb{R}$: 
\begin{align}
\label{eq:diagconj}
\eta^g(f) = \int_{G} f(g^{-1}g_1g, ..., g^{-1}g_ng) \eta(dg_1,...,dg_n).
\end{align} 

We can now construct a planar Yang-Mills field associated with any self-invariant by conjugation L\'{e}vy process. 

\begin{theorem}
\label{defplanarHFgeneral} 
For every $G$-valued self-invariant by conjugation L\'{e}vy process~$Y$, there exists a unique stochastically continuous strong planar Markovian holonomy field $\big(\mathbb{E}^{Y}_{vol}\big)_{vol}$, called the planar Yang-Mills field associated with $Y$, such that for any measure of area $vol$, for any finite planar graph $\mathbb{G}$, for any rooted spanning tree $T$ of $\mathbb{G}$ and any family of facial loops $\left(c_{F}\right)_{F\in\mathbb{F}^{b}}$ oriented anti-clockwise, under $\mathbb{E}^{Y}_{vol}$, the law of $\big(h(\l_{c_{F},T})\big)_{F \in \mathbb{F}^{b}}$ is: $$\int_G \big( \otimes_{F \in \mathbb{F}^{b}} m_{vol(F)}\big)^{g} dg,$$
where $(m_t)_{t \geq 0}$ is the semi-group of convolution of measures associated with $Y$. 
\end{theorem}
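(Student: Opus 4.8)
The plan is to reduce to the conjugation-invariant case already settled in Propositions \ref{creaYMpure} and \ref{creaYMpure2}, and then to extend the structure group from the support group $H_Y$ to $G$ by the procedure of Section \ref{restricexten1}. Since $Y$ is self-invariant by conjugation (Definition \ref{self-invariance}), it is by definition invariant by conjugation by $H_Y$, and by Definition \ref{supportlevy} it may be regarded as an $H_Y$-valued L\'{e}vy process. As such it is conjugation-invariant by $H_Y$ in the sense of Definition \ref{Levydefinition}, so Proposition \ref{creaYMpure} yields an $H_Y$-valued family $\big(\mathbb{E}^{Y,H_Y}_{vol}\big)_{vol}$ of gauge-invariant stochastically continuous random holonomy fields whose facial lasso holonomies $\big(h(\l_{c_F,T})\big)_{F\in\mathbb{F}^{b}}$ are independent with $h(\l_{c_F,T})$ distributed as $Y_{vol(F)}$; by Proposition \ref{creaYMpure2} the associated discrete family $\big(\mathbb{E}^{Y,H_Y,\mathbb{G}}_{vol}\big)_{\mathbb{G},vol}$ is a constructible, regular, strong discrete planar Markovian holonomy field over $H_Y$.

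Next I would apply Proposition \ref{extensiondiscret} to extend the structure group. Setting $\mathbb{E}^{Y,\mathbb{G}}_{vol}=\mathbb{E}^{Y,H_Y,\mathbb{G}}_{vol}\circ\hat{i}_{P(\mathbb{G})}^{-1}$ with the notation of Notation \ref{rest}, Proposition \ref{extensiondiscret} guarantees that $\big(\mathbb{E}^{Y,\mathbb{G}}_{vol}\big)_{\mathbb{G},vol}$ is a $G$-valued strong discrete planar Markovian holonomy field with the same regularity, hence it is again regular. Theorem \ref{exten3} then produces a unique stochastically continuous strong planar Markovian holonomy field $\big(\mathbb{E}^{Y}_{vol}\big)_{vol}$ restricting to it on every finite planar graph. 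This establishes existence of the object claimed.

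It then remains to identify the law of $\big(h(\l_{c_F,T})\big)_{F\in\mathbb{F}^{b}}$. All facial lassos $\l_{c_F,T}$ are based at the single vertex $v$, so the relevant partial gauge group is $G$ acting by diagonal conjugation. Unwinding the definition of the hat-extension (Proposition \ref{exten} together with the averaging formula \eqref{moyennisee}), for any continuous $f\colon G^{\#\mathbb{F}^{b}}\to\mathbb{R}$ one gets
\begin{align*}
\mathbb{E}^{Y,\mathbb{G}}_{vol}\Big[f\big((h(\l_{c_F,T}))_{F}\big)\Big]
=\int_G \mathbb{E}^{Y,H_Y,\mathbb{G}}_{vol}\Big[f\big((g^{-1}h(\l_{c_F,T})g)_{F}\big)\Big]\,dg,
\end{align*}
since $\hat f_{G}$ is diagonal-conjugation invariant and therefore agrees on the invariant $\sigma$-field. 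As the inner law is $\bigotimes_{F}m_{vol(F)}$, comparison with the conjugation operation \eqref{eq:diagconj} identifies the outcome as $\int_G\big(\otimes_{F\in\mathbb{F}^{b}}m_{vol(F)}\big)^{g}\,dg$, which is the asserted law; self-invariance by conjugation by $H_Y$ ensures this does not depend on the choices of $T$ and $(c_F)_F$.

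Finally, uniqueness of $\big(\mathbb{E}^{Y}_{vol}\big)_{vol}$ follows from Proposition \ref{unic}: for each measure of area $vol$ the stochastically continuous gauge-invariant field $\mathbb{E}^{Y}_{vol}$ is determined by the laws of the families $\big(h(\l_{c_F,T})\big)_{F\in\mathbb{F}^{b}}$ over all finite planar graphs in a good subspace $\mathcal{G}(A_v)$, and these laws have just been prescribed. I expect the main obstacle to be the law computation of the third paragraph, namely verifying precisely that the gauge-invariant extension $\hat{i}_{P(\mathbb{G})}^{-1}$ transforms the independent product law $\otimes_F m_{vol(F)}$ into its Haar average under diagonal $G$-conjugation; the rest is an assembly of Propositions \ref{creaYMpure2}, \ref{extensiondiscret}, Theorem \ref{exten3} and Proposition \ref{unic}.
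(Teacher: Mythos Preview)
Your proof is correct and follows essentially the same approach as the paper: reduce to the $H_Y$-valued conjugation-invariant case via Propositions \ref{creaYMpure} and \ref{creaYMpure2}, then extend the structure group to $G$ using Proposition \ref{extensiondiscret}, and read off the law of the facial lassos from the definition of the hat-extension. The only minor difference is in the order of operations: the paper applies the structure-group extension directly at the level of the continuous field (noting that Proposition \ref{extensiondiscret} also applies there), whereas you extend the discrete field first and then invoke Theorem \ref{exten3}; both routes are valid and yield the same object.
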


Let us notice that on $\big(\mathbb{E}^{Y}_{vol}, \mathcal{B}\big)$, $\big(h(\l_{c_F, T})\big)_{F \in \mathbb{F}^{b}}$ is not, in general, a sequence of independent variables.

\begin{proof}
The unicity part uses the same arguments as usual. Let us prove the existence of the stochastically continuous strong planar Markovian holonomy field $\big(\mathbb{E}^{Y}_{vol}\big)_{vol}$. Let $Y=\big(Y_t\big)_{t \geq 0}$ be a $G$-valued self-invariant by conjugation L\'{e}vy process and $H_Y$ be the support of $Y$. Using the discussion after Definition \ref{supportlevy}, we can see the process $Y$ as a $H_Y$-valued L\'{e}vy process which is invariant by conjugation (by $H_Y$). Thus, applying Propositions \ref{creaYMpure} and \ref{creaYMpure2}, there exists a $H_Y$-valued stochastically continuous strong planar Markovian holonomy field such that for any measure of area $vol$, any finite planar graph $\mathbb{G}$, for any rooted spanning tree $T$ of $\mathbb{G}$ and any family of facial loops $\left(c_{F}\right)_{F\in\mathbb{F}^{b}}$ oriented anti-clockwise, under $\mathbb{E}^{Y}_{vol}$: 
\begin{enumerate}
\item the random variables $\left(h\left(\l_{c_{F},T}\right)\right)_{F\in \mathbb{F}^{b}}$ are independent, 
\item for any $F\in \mathbb{F}^{b}$, $h(\l_{c_{F},T})$ has the same law as $Y_{vol(F)}$. 
\end{enumerate}

Recall that Proposition \ref{extensiondiscret} can also be applied to planar (continuous) Markovian holonomy fields. Thus we can extend the group on which $\mathbb{E}^{Y}_{vol}$ is defined, from $H_Y$ to $G$: we will denote it $\mathbb{E}^{Y}_{vol}$. It is a $G$-valued stochastically continuous strong planar Markovian holonomy field and by definition, for any measure of area $vol$, for any finite planar graph $\mathbb{G}$, for any rooted spanning tree $T$ of $\mathbb{G}$ and any family of facial loops $\left(c_{F}\right)_{F\in\mathbb{F}^{b}}$ oriented anti-clockwise, under $\mathbb{E}^{Y}_{vol}$, the law of $\big(h(\l_{c_{F},T})\big)_{F \in \mathbb{F}^{b}}$ is: $$\int_G \left(\otimes_{F \in  \mathbb{F}^{b}}m_{vol(F)}\right)^{g} dg.$$
This ends the proof of the theorem. 
\end{proof}

\begin{definition}
By construction, the planar Yang-Mills field associated with a self-invariant by conjugation L\'{e}vy process $Y$ is constructible. Its restriction to multiplicative functions on finite planar graphs is called the {\em discrete planar Yang-Mills field} associated with $Y$, we will denote it $\big( \mathbb{E}^{Y,\mathbb{G}}_{vol}\big)_{\mathbb{G}, vol}$. 
\end{definition}

We are led to classify the planar Yang-Mills fields according to their degree of symmetry. In Section \ref{Charac}, we will prove equivalent conditions in order to classify planar Yang-Mills fields. 

\begin{definition}
\label{defclassification}
Let $\big( \mathbb{E}^{Y}_{vol} \big)_{vol}$ be a planar Yang-Mills field associated with a $G$-valued self-invariant by conjugation L\'{e}vy process $Y=\big(Y_t\big)_{t \in \mathbb{R}^{+}}$.  The planar Yang-Mills field $\big( \mathbb{E}^{Y}_{vol} \big)_{vol}$ and the L\'{e}vy process $Y$ are {\em pure} if $\big(Y_t\big)_{t\geq 0}$ is invariant by conjugation by $G$ and {\em mixed} if not pure. They are also {\em non-degenerate} if $H_Y = G$ and {\em degenerate} if $H_Y \ne G$. The same definition holds for the discrete planar Yang-Mills field associated with $Y$. 
\end{definition}

According to this definition, any planar Yang-Mills field is either pure non-degenerate, pure degenerate or mixed degenerate.

\part{Characterization of Planar Markovian Holonomy Fields}

\chapter[Braids and probabilities II]{Braids and probabilities II: a geometric point of view, infinite random sequences and random processes}
\label{Braids2}
In order to characterize planar Markovian holonomy fields, we will use intensively the invariance by area-preserving homeomorphisms. The braid group will appear again, as the diffeotopy group of the $n$-punctured disk which is studied in the following section. 

\section{Braids as the diffeotopy group of the $n$-punctured disk}

Let $\mathbb{D}$ be the disk of center $0$ and radius $1$ and $Q_{n}=\{ q_{k} = \frac{2k-1-n}{n},\ 1\leq k \leq n\}$. Let $\text{Diff}(\mathbb{D}, Q_{n}, \partial{\mathbb{D}})$ be the group of diffeomorphisms of $\mathbb{D}$ which fix the set $Q_{n}$ and fix pointwise a neighborhood of $\partial{\mathbb{D}}$. The class of isotopy of the identity mapping in $\text{Diff}(\mathbb{D}, Q_{n}, \partial{\mathbb{D}})$ is a normal subgroup called $\text{Diff}_{0}(\mathbb{D}, Q_{n}, \partial{\mathbb{D}})$. The diffeotopy group of the disk with $n$ points is $\mathcal{M}_{n}(\mathbb{D}) = \text{Diff}(\mathbb{D}, Q_{n}, \partial{\mathbb{D}}) \Big/ \text{Diff}_{0}(\mathbb{D}, Q_{n}, \partial{\mathbb{D}}).$ An important theorem is that: 
\begin{align}
\label{identification}
\mathcal{M}_{n} (\mathbb{D}) \simeq \mathcal{B}_{n}.
\end{align}

This isomorphism is constructed by sending some special elements, the half-twists or Dehn-twists, on the canonical family of generators of the braid groups denoted by $(\beta_i)_{i=1}^{n-1}$. A half-twist permutes the points $q_{k}$ and $q_{k+1}$ for some $k$ and does not move the other points $\left(q_i\right)_{i \notin \{k,k+1\}}$. For a precise definition, one considers for $1\leq k\leq n$, $t_k$ the isotopy class of the diffeomorphism $\tilde{t}_k$ equals to identity outside the disk of radius $\frac{2}{n}$ centered at $\frac{q_k+q_{k+1}}{2}$ and defined by $\tilde{t}_k(x) = \psi \circ t \circ \psi^{-1}$, where:
\begin{align*}
&\psi: x \mapsto \frac{n}{2}\left( x-\frac{q_k+q_{k+1}}{2}\right),\\
&t(re^{i \theta}) = r e^{i 2 \pi \big(\theta+\alpha( r)\big)}, 
\end{align*}
and $\alpha$ is a smooth bijection from $[0,1]$ to itself, which is equal to $0$ on a neighborhood of $1$ and to $\frac{1}{2}$ at $\frac{1}{2}$. 

This geometric construction of the braid group allows us to recover the action of the braid group which was given by Definition \ref{actionlibre}. Indeed, the group $\mathcal{M}_{n}(\mathbb{D})$ acts on the fundamental group of $\mathbb{D} \setminus Q_{n}$ which is isomorphic to $\mathbb{F}_{n}$ the free group of rank $n$. We will take $-\frac{i}{2}$ as the base point for the fundamental group of $\mathbb{D} \setminus Q_{n}$. Let $x_{k}$ be the homotopy class of the loop based at $-\frac{i}{2}$ which goes only around $q_{k}$ anti-clockwise. 
One can verify that the action of $\mathcal{M}_{n}(\mathbb{D})$ on $\mathbb{F}_{n}$, with the identification given in (\ref{identification}), is the action given by  Definition \ref{actionlibre}. 

Given a finite planar graph $\mathbb{G}$, the fundamental group of $\mathbb{G}$ is isomorphic to the fundamental group of the disk without one point in each of the bounded faces: $\pi_1(\mathbb{G}) \simeq \pi_{1}\left(\left(\mathbb{D}\setminus Q_{| \mathbb{F}^{b}|}\right)\right)$. Thus, we get a natural action of a braid group on the group $\pi_{1}(\mathbb{G})$ which is isomorphic to the reduced group of loops on $\mathbb{G}$ defined in Section \ref{sec:red}. A consequence of the existence of such action is the upcoming Proposition \ref{braidable}.

\section{A de-Finetti theorem for the braid group}
\label{infinite-sec}

Proposition \ref{quasi-inv} implies that every finite sequence of i.i.d. random variables which is auto-invariant by conjugation is invariant by braids. It is natural to wonder if one can characterize finite sequence of random variables which are invariant by braids. As for exchangeable sequences of random variables, it is easier to work with infinite sequence of random variables.

\begin{definition}
An infinite random sequence $\xi~= \big(\xi_{i}\big)_{i \in \mathbb{N}^{*}}$ in $G$ is braidable (or braid-invariant or invariant by braids) if for any integer $n$ greater that $1$ and any braid $\beta \in \mathcal{B}_{n}$, the following equality in law holds: 
$$\beta \bullet \big(\xi_{i}\big)_{1\leq i \leq n } = \big(\xi_{i}\big)_{1\leq i \leq n }.$$ 
\end{definition}

Let us recall that $\xi$ is spreadable if for any increasing sequence of positive integers $(k_i)_{i\geq 1}$, we have the equality in law: 
$$\big(\xi_{k_{i}}\big)_{1\leq i} = \big(\xi_{i}\big)_{1\leq i}.$$
These properties seem to be quite different, yet we are going to prove that one condition is weaker than the other.

\begin{lemme}
\label{braidspread}
Any braidable infinite family of random variables is spreadable. 
\end{lemme}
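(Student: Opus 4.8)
The plan is to show that spreadability, which requires invariance under arbitrary strictly increasing injections of the index set, follows from braid-invariance, which a priori only gives invariance under the particular automorphisms of finite initial segments coming from braid actions. The key observation is that the braid action on $(\xi_i)_{i\le n}$ contains, as special cases, the permutation-type rearrangements one needs to realize a shift of indices.

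First I would reduce spreadability to a single elementary move. A standard fact (used for exchangeability) is that an infinite sequence is spreadable if and only if it is invariant under every deletion of one coordinate, i.e. for every $k\ge 1$ one has the equality in law $(\xi_1,\dots,\xi_{k-1},\xi_{k+1},\xi_{k+2},\dots)=(\xi_1,\xi_2,\dots)$; iterating such single deletions generates all strictly increasing reindexings. So it suffices to prove, for each fixed $k$ and each fixed truncation length $N$, the equality in law
\begin{align*}
(\xi_1,\dots,\xi_{k-1},\xi_{k+1},\dots,\xi_{N+1})=(\xi_1,\dots,\xi_N).
\end{align*}
Equivalently, I want a braid $\beta\in\mathcal{B}_{N+1}$ whose action, after reading off the appropriate $N$ coordinates, realizes the order-preserving shift that skips position $k$.

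Next I would exhibit the relevant braid explicitly. Recall from Definition \ref{def:purelyinv} and the action (\ref{action1}) that $\beta_i$ sends $(\dots,X_i,X_{i+1},\dots)$ to $(\dots,X_iX_{i+1}X_i^{-1},X_i,\dots)$; this is a conjugation followed by a transposition of positions, so on the level of \emph{positions} $\beta_i$ acts by the transposition $(i,i+1)$ through the homomorphism $\beta\mapsto\sigma_\beta$ of Lemma \ref{braidperm}. The braid $\beta_{(k,N+1)}=\beta_k^{-1}\cdots\beta_N^{-1}$ (a ``cabling'' move of the type drawn in Figure \ref{Quasi-invariance}) carries the strand in position $k$ over to position $N+1$ while shifting positions $k+1,\dots,N+1$ each down by one; the crucial point is that along this particular braid the strands that pass \emph{behind} are never conjugated, so the values landing in positions $1,\dots,N$ are exactly $\xi_1,\dots,\xi_{k-1},\xi_{k+1},\dots,\xi_{N+1}$ with no conjugation. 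Applying braid-invariance to $\beta_{(k,N+1)}\in\mathcal{B}_{N+1}$ gives $\beta_{(k,N+1)}\bullet(\xi_1,\dots,\xi_{N+1})=(\xi_1,\dots,\xi_{N+1})$ in law, and projecting onto the first $N$ coordinates yields precisely the desired single-deletion identity.

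The main obstacle I anticipate is the bookkeeping at the crossings: one must verify that the specific word $\beta_{(k,N+1)}$ really sends the chosen strand across \emph{under} the others (not over), so that the surviving coordinates are the untouched variables $\xi_i$ rather than conjugated versions of them. This is exactly the diagrammatic rule recalled after Definition \ref{actionlibre} (at each crossing the value on the strand behind is unchanged), so I would argue it by the same strand-tracing used in the proof of Proposition \ref{quasi-inv}, where the analogous braid $\beta_{(i,j)}$ was analyzed. Once the action of $\beta_{(k,N+1)}$ on positions is identified with the order-preserving injection skipping $k$, and the absence of conjugation on the retained strands is checked, letting $N\to\infty$ and combining all $k$ gives invariance under every strictly increasing map, i.e. spreadability.
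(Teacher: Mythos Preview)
Your approach is essentially the same as the paper's: construct a braid in which the strands you want to keep are the ones that pass \emph{behind} at every crossing, so the diagrammatic rule guarantees they emerge unconjugated, and then project. The paper does this in one shot by building, for any $\mathbf{k}=(k_1<\cdots<k_n)$, a braid $\beta_{\mathbf{k}}\in\mathcal{B}_{k_n}$ with $(\beta_{\mathbf{k}}\bullet(g_1,\dots,g_{k_n}))_i=g_{k_i}$ for all $i\le n$; your single-deletion braid is exactly the special case $\mathbf{k}=(1,\dots,k-1,k+1,\dots,N+1)$, and your preliminary reduction to single deletions is a correct but unnecessary detour.

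One bookkeeping point: with the paper's convention $(\beta_1\beta_2)\bullet x=\beta_1\bullet(\beta_2\bullet x)$, the word you want is $\beta_N^{-1}\beta_{N-1}^{-1}\cdots\beta_k^{-1}$ (apply $\beta_k^{-1}$ first), not $\beta_k^{-1}\cdots\beta_N^{-1}$. Your verbal description of the braid (strand $k$ goes over, strands $k+1,\dots,N+1$ slide left underneath and stay unconjugated) is exactly right; only the product order needs reversing. This is precisely the ``main obstacle'' you flagged, and once fixed your argument goes through.
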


\begin{proof}
Let $\mathbf{k}~= \left(k_{1} < k_{2} < ... < k_{n}\right)$ be a finite strictly increasing sequence of integers. Let $\beta_{\mathbf{k}}$ be the braid: 
\begin{align*}
\beta_{\mathbf{k}} =\beta_{n}^{-1}...\beta_{k_n-1}^{-1}\beta_{2}^{-1}... \beta_{k_2-1}^{-1}\beta_{1}^{-1}... \beta_{k_1-1}^{-1}
\end{align*}
We have drawn in Figure \ref{betakspearfig} the braid $\beta_{\mathbf{k}}$ with $\mathbf{k} = (2,3,6)$. Since for $i=1,  ...,n$, the lines linking $(i,1)$ to $(k_{i},0)$ are behind in the diagram, this braid verifies that for any element $(g_{1},  ...,g_{k_{n}})$ of $G^{k_{n}}$, for any integer $i$ between $1$ and $n$, $\big(\beta_{\mathbf{k}}\bullet (g_{1},...,g_{k_{n}})\big)_{i} = g_{k_{i}}$. 
\begin{figure}
 \centering
  \includegraphics[width=200pt]{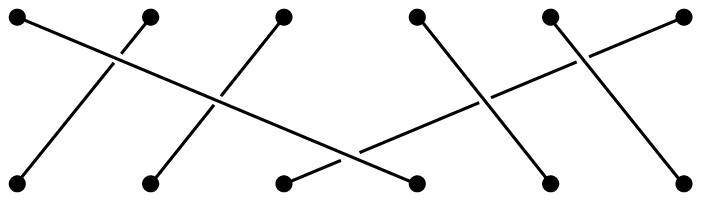}
 \caption{The braid $\beta_{\mathbf{k}}$, $\mathbf{k} = (2,3,6)$.}
 \label{betakspearfig}
\end{figure} 
Let $\xi = (\xi_{i})_{i \in \mathbb{N}^{*}}$ be a braidable random sequence. The following equality in law holds $\beta_{\mathbf{k}} \bullet \big(\xi_{i}\big)_{1\leq i \leq k_n} = \big(\xi_{i}\big)_{1\leq i\leq k_n}.$ By restricting it for $i$ between $0$ and $n-1$, we get the equality in law $(\xi_{k_{1}},... \xi_{k_{n}}) = (\xi_{1},... \xi_{n}),$ from which one can conclude that $\xi$ is spreadable. 
\end{proof}

Let $m$ be a probability measure on $G$. We denote by $m^{\otimes \infty}$ the measure on $G^{\mathbb{N}^{*}}$ such that the unidimensional projections are independent and identically distributed with law $m$. Let $\xi$ be an infinite random sequence in $G$.

\begin{definition}
 Let $\mathcal{A}$ be a $\sigma$-field. The sequence $\xi$ is i.i.d. conditionally to $\mathcal{A}$ if there exists a random measure $\eta$ on $G$, $\mathcal{A}$-measurable, such that the conditional distribution of $\xi$ given $\mathcal{A}$ is $\eta^{\otimes \infty}$: $\mathbb{P}\big[\xi \in . \mid \mathcal{A} \big] = \eta^{\otimes \infty}$. It is conditionally i.i.d. if there exists a $\sigma$-field $\mathcal{A}$ such that it is i.i.d. conditionally to~$\mathcal{A}$.
\end{definition}

If $\xi$ is i.i.d. conditionally to $\mathcal{A}$, its law is of the form: $$\int_{\mathcal{M}_{1}(G)} m^{\otimes \infty} d\nu(m),$$ where $\nu$ is the law of $\eta$. If we just want to keep in mind the form of the law of $\xi$, we will say that $\xi$ is a {\em mixture of i.i.d. random sequences.} Let us state an extension of de Finetti-Ryll-Nardzewski's theorem for the braid group. 

\begin{theorem}
\label{definettiseq}
The sequence $\xi$ is braidable if and only if it is i.i.d. conditionally to $\mathcal{T}_{\xi}=\cap_{n \in \mathbb{N}^{*}}\sigma\left(\xi_k, k \geq n\right)$ and conditionally to $\mathcal{T}_{\xi}$, almost surely the law of $\xi_1$ is invariant by conjugation by its own support. 
\end{theorem}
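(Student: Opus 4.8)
The plan is to reduce the braidable case to a spreadable one, invoke the classical Ryll-Nardzewski theorem to obtain conditional independence and identical distribution, and then extract the extra symmetry—conjugation invariance by the support—directly from the braid relations. This mirrors the structure of Proposition~\ref{quasi-inv}, which already handled the finite independent case, so the key new ingredient here is passing to the infinite conditional setting.

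\textbf{Step 1: the easy implication.} First I would show that a conditionally i.i.d. sequence whose common conditional law is self-invariant by conjugation is braidable. Since the braid group is generated by the $(\beta_i)$, and $\beta \mapsto \sigma_\beta$ is a homomorphism, it suffices to treat a single generator $\beta_i$ acting on two consecutive coordinates. Working conditionally on $\mathcal{T}_\xi$, the coordinates are i.i.d.\ with law $\eta$, and the transformation $(x_i,x_{i+1}) \mapsto (x_i x_{i+1} x_i^{-1}, x_i)$ preserves the law $\eta \otimes \eta$ precisely because $\eta$ is invariant under conjugation by any element of its own support—this is exactly the computation in the first half of the proof of Proposition~\ref{quasi-inv}, now applied fibrewise. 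Integrating against the law of $\eta$ gives braidability.

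\textbf{Step 2: the main implication.} For the converse, suppose $\xi$ is braidable. By Lemma~\ref{braidspread}, $\xi$ is spreadable, hence by the Ryll-Nardzewski theorem $\xi$ is i.i.d.\ conditionally to its tail $\sigma$-field $\mathcal{T}_\xi$; write $\eta$ for the directing random measure, which is $\mathcal{T}_\xi$-measurable. It remains to prove that, almost surely, $\eta$ is invariant by conjugation by $\mathsf{Supp}(\eta)$. The idea is to reuse the braids $\beta_{(i,j)}$ from the proof of Proposition~\ref{quasi-inv}: applying braid-invariance with $\beta_{(1,2)}$ and restricting to the first two coordinates yields the equality in law
\begin{align*}
(\xi_1, \xi_2) = (\xi_1, \xi_1 \xi_2 \xi_1^{-1}).
\end{align*}
I would then condition on $\mathcal{T}_\xi$: since conditionally $\xi_1,\xi_2$ are independent with common law $\eta$, and since conjugation acts only on finitely many coordinates while $\mathcal{T}_\xi$ is the tail, the conditional version of this identity reads $\eta \otimes \eta = (\eta \otimes \eta) \circ c^{-1}$ where $c(x,y)=(x,xyx^{-1})$. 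By disintegrating in the first variable, for $\eta$-almost every $g$ one gets $g \eta g^{-1} = \eta$, i.e.\ $\eta$ is invariant under conjugation by $\eta$-almost every $g$, and by a density/closedness argument this extends to all of $\mathsf{Supp}(\eta)$.

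\textbf{Main obstacle.} The delicate point is the interchange of conditioning on the tail $\mathcal{T}_\xi$ with the braid action, which acts on finitely many initial coordinates. One must verify that the directing measure $\eta$ is genuinely $\mathcal{T}_\xi$-measurable and that the finite-dimensional identity $(\xi_1,\xi_2)\stackrel{d}{=}(\xi_1,\xi_1\xi_2\xi_1^{-1})$ lifts to a conditional identity $\eta$-almost surely; this requires care because the braid action does not commute with the shift and the tail field is not generated by any finite block. The cleanest route is to apply braid-invariance not to the first two coordinates but, via spreadability, to a pair of coordinates $(\xi_k,\xi_{k+1})$ with $k$ large, and let $k\to\infty$ so that the conditioning decouples and the conditional law converges to $\eta$; then the measurability and the almost-sure conjugation invariance of $\eta$ follow from a martingale convergence argument together with the continuity of the conjugation map on the compact group $G$.
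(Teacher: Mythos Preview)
Your approach is essentially the paper's: use Lemma~\ref{braidspread} to get spreadability, invoke Ryll-Nardzewski for conditional i.i.d., then extract the conjugation invariance from the braid relations via Proposition~\ref{quasi-inv}. The easy direction is identical.

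Where you diverge is in the handling of your ``main obstacle,'' and here you are making it harder than necessary. The passage from unconditional to conditional braidability does not require any limiting or martingale argument. A braid $\beta\in\mathcal{B}_n$, viewed as an element of $\mathcal{B}_{n+m}$ for every $m\geq 0$ by adding trivial strands, acts on the full sequence $(\xi_i)_{i\geq 1}$ while fixing $\xi_{n+1},\xi_{n+2},\ldots$ pointwise; braidability of $\xi$ therefore says that the law of the \emph{entire} sequence is invariant under this action. Since the tail $\sigma$-field $\mathcal{T}_\xi$ is generated by the fixed coordinates, the map $\beta\bullet$ is $\mathcal{T}_\xi$-measurable and law-preserving, so the conditional law given $\mathcal{T}_\xi$ is braid-invariant as well. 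This is the one-line observation the paper uses (``conditionally to $\mathcal{T}_\xi$, $\xi$ is still braidable''), after which Proposition~\ref{quasi-inv} applies directly to the conditionally i.i.d.\ sequence (i.i.d.\ implies exchangeable, so braidable plus exchangeable gives purely braid-invariant). Your proposed detour through $(\xi_k,\xi_{k+1})$ with $k\to\infty$ would also work, but it is unnecessary once you notice that the braid action literally fixes the tail.
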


\begin{proof}
An application of Proposition \ref{quasi-inv} to any subsequence of the form $(\xi_{n})_{n=1}^{N}$ shows that the second condition implies the fact that $\xi$ is braidable. 

Now, let us suppose that $\xi$ is braidable. As a consequence of Lemma \ref{braidspread}, the infinite sequence $\xi$ is spreadable. Using the de Finetti-Ryll-Nardzewski's theorem (Theorem $1.1$ of \cite{Kall}), $\xi$ is  i.i.d. conditionally to $\mathcal{T}_\xi$. Besides, conditionally to $\mathcal{T}_{\xi}$, $\xi$ is still braidable: an application of Proposition \ref{quasi-inv} shows that conditionally to $\mathcal{T}_{\xi}$, $(\xi_n)_{n\in \mathbb{N}^{*}}$ is an i.i.d. sequence of random variables invariant by conjugation by their own support: the second condition holds.  
\end{proof}

If $\xi$ is braidable, the law of $\xi$ is of the form $\int_{\mathcal{M}_{1}(G)} m^{\otimes \infty} d\nu(m),$ where $\nu$-a.s., $m$ is almost surely invariant by conjugation by its own support. In the next theorem, we give a condition under which one can characterize the mixture which appears in the last theorem. In order to do so, we consider the diagonal conjugation of $G$ on $G^{\mathbb{N}^{*}}$ defined for any $g\in G$ and $(x_n)_{n \in \mathbb{N}^{*}} \in G^{\mathbb{N}^{*}}$ by $g.(x_n)_{n \in \mathbb{N}^{*}}= (g^{-1}x_ng)_{n \in \mathbb{N}^{*}}$. Recall the notion of $\mathcal{I}$-independence defined in Definition \ref{Iindependence}. 

\begin{definition}\label{proprietepdesincrements}
The sequence $\xi$ satisfies the property $(\mathcal{P})$ if for any integer $n \in \mathbb{N}^{*}$, $(\xi_k)_{k \leq n}$ and $(\xi_k)_{k > n}$ are $\mathcal{I}$-independent: for any $n,m\geq 0$, any continuous functions which are invariant by diagonal conjugation $f: G^{n} \to \mathbb{R}$ and $g:G^{m} \to \mathbb{R}$, $f(\xi_1,...,\xi_n)$ and $g(\xi_{n+1}, ..., \xi_{n+m})$ are independent. 
\end{definition}

\begin{theorem}
\label{finalsuite}
Let $\xi$ be an sequence of random variables in $G$ such that: 
\begin{enumerate}
\item $\xi$ is braidable, 
\item it is invariant (in law) by diagonal conjugation: for any $g \in G$, $g.\xi$ has the same law as $\xi$,
\item it satisfies the property $(\mathcal{P})$. 
\end{enumerate}
There exists $m_0$ a probability measure on $G$, invariant by conjugation by its own support, such that the law of $\xi$ is: 
\begin{align*}
\int_{G}  \left(m_0^{\otimes \infty}\right)^{g} dg, 
\end{align*}
where $\left(m_0^{\otimes \infty}\right)^{g}$ is defined using a similar equation as (\ref{eq:diagconj}).
\end{theorem}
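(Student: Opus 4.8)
The plan is to refine Theorem \ref{definettiseq} by pinning down the mixing measure. By that theorem, braidability alone already gives that $\xi$ is i.i.d. conditionally to its tail $\sigma$-field $\mathcal{T}_\xi$, directed by a random probability measure $\eta$ (which is $\mathcal{T}_\xi$-measurable and almost surely invariant by conjugation by its own support); writing $\nu$ for the law of $\eta$ on $\mathcal{M}_1(G)$, the law of $\xi$ is $\int_{\mathcal{M}_1(G)} m^{\otimes\infty}\,d\nu(m)$. So the whole problem reduces to showing that $\nu$ is the pushforward of the Haar measure of $G$ under $g \mapsto m_0^g$ for some fixed $m_0$; equivalently, that $\nu$ is $G$-conjugation-invariant and concentrated on a single conjugacy orbit. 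The representative $m_0$ then inherits the self-conjugation-invariance from Theorem \ref{definettiseq}, and the stated formula follows at once since $(m_0^g)^{\otimes\infty} = (m_0^{\otimes\infty})^g$.

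First I would extract $G$-invariance of $\nu$ from hypothesis $(2)$. Conjugating every coordinate of $\xi$ by $g$ turns the directing measure $\eta$ into $\eta^g$, so $g.\xi$ has law $\int m^{\otimes\infty}\,d\big((C_g)_*\nu\big)(m)$, where $C_g\colon m \mapsto m^g$ denotes the conjugation map on $\mathcal{M}_1(G)$. Since $g.\xi$ and $\xi$ have the same law, the uniqueness of the de Finetti directing measure forces $(C_g)_*\nu = \nu$ for every $g$, that is, $\nu$ is invariant under the conjugation action of $G$ on $\mathcal{M}_1(G)$.

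The core of the argument is to use property $(\mathcal{P})$ (condition $(3)$) to show that $\nu$ sits on one orbit. The key point is that the conditional law of large numbers, valid because conditionally on $\mathcal{T}_\xi$ the $\xi_i$ are i.i.d. with law $\eta$, makes the empirical measures converge to $\eta$: for any bounded continuous $G$-conjugation-invariant function $\Phi$ on $\mathcal{M}_1(G)$, both $F_n := \Phi\big(\tfrac1n\sum_{i=1}^n \delta_{\xi_i}\big)$ and $G_m := \Phi\big(\tfrac1m\sum_{i=n+1}^{n+m}\delta_{\xi_i}\big)$ converge almost surely and boundedly to $\Phi(\eta)$. Because the empirical-measure map is equivariant under conjugation and $\Phi$ is invariant, each $F_n$ is a diagonal-conjugation-invariant function of the past $(\xi_1,\dots,\xi_n)$ and each $G_m$ of a finite future block $(\xi_{n+1},\dots,\xi_{n+m})$; property $(\mathcal{P})$ then gives $F_n \perp G_m$. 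Letting $m\to\infty$ and then $n\to\infty$, and pushing independence through these bounded almost-sure limits, yields $\Phi(\eta)\perp\Phi(\eta)$, so $\Phi(\eta)$ is almost surely constant. Since bounded continuous invariant functions separate the (compact) conjugacy orbits in the metrizable space $\mathcal{M}_1(G)$, the orbit of $\eta$ is almost surely constant, hence $\nu$ charges a single orbit $\{m_0^g : g\in G\}$. Combined with the $G$-invariance from the previous step, $\nu$ is the unique invariant probability on this homogeneous space, namely the Haar pushforward, which is exactly the claimed formula.

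The hard part will be the two limiting passages in the independence statement and, more delicately, the fact that orbit-invariant functions of $\eta$ can be recovered from \emph{finite past blocks}: $\eta$ is a priori only tail-measurable, hence future-measurable, so it is essential that the conditional i.i.d.\ structure together with the equivariance of the empirical measure let one realize $\Phi(\eta)$ both as a limit of invariant past-functions and as a limit of invariant future-functions straddling the same split point. I would also be careful to invoke the two soft ingredients that make the scheme rigorous: the uniqueness of the de Finetti mixing measure, and the separation of conjugacy orbits by continuous invariant functions, which uses the compactness of $G$ and the metrizability of $\mathcal{M}_1(G)$.
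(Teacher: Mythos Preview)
Your argument is correct, and it reaches the same conclusion as the paper by a genuinely different route. After the common starting point (Theorem \ref{definettiseq} yielding the random directing measure $\eta$ with law $\nu$), the paper does not isolate the $G$-invariance of $\nu$ nor argue via orbits in $\mathcal{M}_1(G)$. Instead, for each $k\ge 1$ and each diagonal-conjugation-invariant continuous $f:G^k\to\mathbb{R}$, it observes that property $(\mathcal{P})$ together with spreadability makes the block sequence $\big(f(\xi_{ik+1},\dots,\xi_{ik+k})\big)_{i\ge0}$ an i.i.d.\ sequence of bounded reals; the strong law then gives a deterministic limit $l^k(f)$, while the conditional-i.i.d.\ structure identifies the same limit as $\eta^{\otimes k}(f)$. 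Thus $\eta^{\otimes k}(f)$ is almost surely constant for every $(k,f)$, so the law of $\xi$ restricted to the invariant $\sigma$-field $\mathcal{I}$ equals $m_0^{\otimes\infty}$ for any $m_0$ picked in the $\nu$-full intersection of the corresponding level sets; hypothesis (2) then upgrades this to the stated formula. Your version replaces the $k$-block i.i.d.\ argument by a past/future independence argument for invariant functionals of the empirical measure, and replaces the direct identification on $\mathcal{I}$ by the pair of soft facts that continuous $G$-invariant functions separate orbits on the compact metrizable space $\mathcal{M}_1(G)$ and that a $G$-invariant probability on a single orbit is the Haar pushforward. The paper's path is a bit more elementary (it never leaves $G^k$ and needs no quotient-space topology), while yours is more conceptual and makes the role of hypothesis (2) transparent as pinning down $\nu$ itself rather than only the $\mathcal{I}$-law; your use of the uniqueness of the de Finetti mixing measure to obtain $(C_g)_*\nu=\nu$ is a step the paper bypasses entirely. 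One small point to make explicit in your write-up: to pass from ``$F_n\perp G_m$ for every $m$'' to ``$F_n\perp\Phi(\eta)$'' you should invoke a monotone-class argument, since the invariant functions of future blocks form a $\pi$-system generating a $\sigma$-field that contains $\Phi(\eta)$.
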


\begin{proof}
Let $\xi$ be an infinite braidable sequence of $G$-valued random variables which is invariant by diagonal conjugation and satisfies the property $(\mathcal{P})$. As a consequence of Theorem \ref{definettiseq}, there exists a random measure $\eta$ on $G$, which is almost surely invariant by conjugation by its own support, such that the conditional distribution of $\xi$ given $\mathcal{T}_{\xi}$ is $\eta^{\otimes \infty}$. Let $\nu$ be the law of $\eta$, the law of $\xi$ is $\int_{\mathcal{M}^{1}(G)} m^{\otimes\infty} d\nu(m).$ Since $\xi$ is invariant by diagonal conjugation by $G$, we only have to show that there exists a probability measure $m_0$ such that the law of $\xi$ on the invariant $\sigma$-field $\mathcal{I}$ is equal to $m_0^{\otimes \infty}$. Let us remark that this would imply also that $m_0$ is invariant by conjugation by any element of its own support. Let $k$ be a positive integer and $f: G^{k} \to \mathbb{R}$ be a continuous function invariant by diagonal conjugation: as $G^{k}$ is compact, $f$ is bounded. As the sequence $\xi$ satisfies the property $(\mathcal{P})$, $\left(f\left(\xi_{ik+1}, ..., \xi_{ik+k}\right)\right)_{i \geq 0}$ is an i.i.d. sequence of bounded random variables. Thus, by the law of large numbers, there exists a real $l^k(f)$ such that: 
\begin{align*}
\frac{1}{n} \sum_{0\leq i< n} f\left(\xi_{ik+1}, ..., \xi_{ik+k}\right)  \underset{n \to \infty}{\to} l^{k}(f) \text{ a.s.}. 
\end{align*}
Yet, by disintegration and the law of large numbers, we get also that: 
 \begin{align*}
\frac{1}{n} \sum_{0\leq i< n} f\left(\xi_{ik+1}, ..., \xi_{ik+k}\right) \underset{n \to \infty}{\to} \eta^{\otimes k}(f), \text{ a.s.}.  
\end{align*}
The random variable $\eta^{\otimes k}(f)$ is thus almost surely constant. Let us define the set of measures: 
\begin{align*}
\Omega_{f,k} = \left\{m \in \mathcal{M}_{1}(G), m^{\otimes k}(f) = l^{k}(f)\right\}. 
\end{align*}
We just proved that for any positive integer $k$, for any continuous function $f: G^{k} \to \mathbb{R}$, invariant by diagonal conjugation, 
\begin{align*}
\nu(\Omega_{f,k}) = 1. 
\end{align*}
Let us consider $\mathcal{F}_{k}$ a dense countable set of continuous functions which are invariant by diagonal conjugation of $G$ on $G^{k}$. Our previous discussion allows us to write the following equality: 
\begin{align*}
\nu\bigg(\bigcap_{k \in \mathbb{N}^{*}}\bigcap_{f \in \mathcal{F}_k}\Omega_{f,k}\bigg) = 1. 
\end{align*}

Let us take a measure $m_0 \in \bigcap_{k \in \mathbb{N}^{*}}\bigcap_{f \in \mathcal{F}_k}\Omega_{f,k}$: for any positive integer $k$ and any continuous function $f$ on $G^{k}$ invariant by diagonal conjugation, $$m^{\otimes k}(f)=m_{0}^{\otimes k}(f) \text{, } \nu(dm) \text{ a.s. }$$
We have just managed to prove that on the invariant $\sigma$-field, the law of $\xi$ is $m_0^{\otimes \infty}$: the law of $\xi$ is thus $\int_{G} (m_{0}^{\otimes \infty})^{g} dg$ and $m_0$ is invariant by conjugation by its own support. 
\end{proof}

We will give conditions which ensure the fact that $\xi$ is actually a sequence of i.i.d. random variables. Let us remark that it is enough to show that the measure $m_0$, given by Theorem \ref{finalsuite}, is invariant by conjugation by $G$.

\section{Degeneracy of the mixture}
\label{degen-sec}
Since the measure $m_0$ is invariant by conjugation by its own support, one way in order to show that $m_0$ is invariant by conjugation by $G$ is to prove that ${\sf Supp}(m_0) = G$.

\subsection{Non-degeneracy case} 
\label{sec:non-degeneracy}
Let $G$ be a finite group whose neutral element is $e$. 
\begin{proposition}
\label{unicity1}
Let $\xi$ be an infinite sequence of $G$-valued random variables which is braidable, invariant by diagonal conjugation, satisfies the property $(\mathcal{P})$ and such that $e \in {\sf Supp}(\xi_1)$. The following assertions are equivalent: 
\begin{enumerate}
\item $\xi$ is a sequence of i.i.d. random variables which support generates $G$, 
\item there exists an integer $k$ such that ${\sf Supp}(\prod_{i=1}^{k}\xi_i) = G$. 
\end{enumerate}
\end{proposition}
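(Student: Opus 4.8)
The plan is to prove the equivalence $(1)\Leftrightarrow(2)$ for the braidable, diagonally-conjugation-invariant sequence $\xi$ satisfying $(\mathcal{P})$ with $e \in {\sf Supp}(\xi_1)$, using the structural description of the law of $\xi$ provided by Theorem \ref{finalsuite}. The implication $(1)\Rightarrow(2)$ is the easy direction, so I would dispose of it first; the content is in $(2)\Rightarrow(1)$, where I expect the main obstacle to be controlling the support of the limiting measure $m_0$.

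\textbf{The easy direction.} Suppose $\xi$ is an i.i.d.\ sequence whose common support $S={\sf Supp}(\xi_1)$ generates $G$. Since $e \in S$ and $\langle S\rangle = G$, I would argue that the support of the product $\prod_{i=1}^{k}\xi_i$ grows with $k$: because the variables are independent, ${\sf Supp}(\prod_{i=1}^{k}\xi_i) = \overline{S^{k}}$ (the closure of the set of all length-$k$ products of support elements), and since $e\in S$ we have $S^{k}\subset S^{k+1}$, so this is an increasing family of subsets whose union generates $G$. As $G$ is finite, $\overline{S^{k}} = G$ for $k$ large enough, which is assertion $(2)$.

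\textbf{The hard direction.} Suppose conversely that ${\sf Supp}(\prod_{i=1}^{k}\xi_i) = G$ for some $k$. By Theorem \ref{finalsuite} (whose three hypotheses are exactly the standing assumptions on $\xi$), there is a probability measure $m_0$ on $G$, invariant by conjugation by its own support, such that the law of $\xi$ is $\int_G (m_0^{\otimes\infty})^{g}\,dg$. As noted just after that theorem, to conclude that $\xi$ is genuinely i.i.d.\ it suffices to show that $m_0$ is invariant by conjugation by all of $G$, and the strategy of Section \ref{degen-sec} is to prove the stronger statement ${\sf Supp}(m_0) = G$; then $m_0$ being invariant by conjugation by its own support gives full $G$-invariance, the mixing over $g$ becomes trivial, and $\xi \sim m_0^{\otimes\infty}$ is i.i.d.\ with support generating $G$, which is $(1)$. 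The core computation is therefore to relate the hypothesis ${\sf Supp}(\prod_{i=1}^{k}\xi_i)=G$ to ${\sf Supp}(m_0)$. Conditionally on the tail, $\xi$ is $\eta^{\otimes\infty}$ and, under $(\mathcal{P})$, the argument in the proof of Theorem \ref{finalsuite} pins the relevant invariant statistics of $\eta$ to those of $m_0$; I would first reduce to the deterministic measure $m_0$ and use $e\in{\sf Supp}(m_0)$ (inherited from $e\in{\sf Supp}(\xi_1)$). Then ${\sf Supp}(\prod_{i=1}^{k}\xi_i) = \overline{{\sf Supp}(m_0)^{k}}$, and the hypothesis forces $\overline{{\sf Supp}(m_0)^{k}} = G$.

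\textbf{Finishing the support argument.} Set $S_0 = {\sf Supp}(m_0)$. The equality $\overline{S_0^{k}} = G$ together with $e\in S_0$ (hence $S_0\subset S_0^{2}\subset\cdots\subset S_0^{k}$) shows that $S_0$ generates $G$ as a group. The delicate point I expect to be the real obstacle is upgrading ``$S_0$ generates $G$'' to ``$S_0 = G$'': in general a generating set need not be the whole group. Here I would exploit that $m_0$ is invariant by conjugation by every element of $S_0$, so $S_0$ is stable under conjugation by its own elements, hence stable under conjugation by $\langle S_0\rangle = G$; thus $S_0$ is a union of conjugacy classes. Combined with $e\in S_0$ and with the product condition $S_0^{k}=G$, a counting/orbit argument on the finite group $G$ should force $S_0=G$: a conjugation-stable generating subset containing $e$ whose $k$-fold product set is all of $G$ cannot be a proper subset without contradicting the self-conjugation-invariance of $m_0$ (any proper conjugation-stable generating set would let one localize the mass away from some class, contradicting that products reach all of $G$ with positive probability). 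Once $S_0 = G$, invariance of $m_0$ by conjugation by $S_0=G$ gives the desired full invariance, and the proof of $(2)\Rightarrow(1)$ is complete.
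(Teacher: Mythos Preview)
Your overall strategy is right---invoke Theorem~\ref{finalsuite} to obtain $m_0$, then show $m_0$ is $G$-conjugation-invariant---but the execution of the hard direction has a genuine gap at the support computation, and your final target is actually false.

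The law of $\xi$ is $\int_G (m_0^{\otimes\infty})^g\,dg$, so the law of $\prod_{i=1}^k\xi_i$ is $\int_G (m_0^{*k})^g\,dg$, and its support is $\bigcup_{g\in G} g^{-1}S_0^{k}g$, \emph{not} $S_0^{k}$ (where $S_0={\sf Supp}(m_0)$). Your claim ${\sf Supp}(\prod_{i=1}^k\xi_i)=\overline{S_0^{k}}$ drops the conjugation, and with it your deduction that $\langle S_0\rangle=G$ becomes circular: you use $\langle S_0\rangle=G$ to argue $S_0$ is $G$-conjugation-stable, but you obtained $\langle S_0\rangle=G$ from the incorrect support formula. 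Worse, the goal $S_0=G$ that you set yourself is too strong and can fail: take $G=\mathfrak{S}_3$ and $m_0$ uniform on $\{e,(12),(13),(23)\}$; then $m_0$ is invariant by conjugation by its support (indeed by all of $G$), $e\in S_0$, $S_0^{2}=G$, yet $S_0\neq G$. The proposition only asks that the support \emph{generate} $G$, i.e.\ $H_{m_0}=G$, and that is what one must prove.

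The paper closes this gap with Jordan's theorem (Theorem~\ref{Jordan}): a proper subgroup of a finite group cannot meet every conjugacy class. Since $e\in S_0$, the sets $S_0^{n}$ increase; by finiteness of $G$ there is $N\geq k$ with $S_0^{N}=H_{m_0}$ (the first ``fact'' stated before Theorem~\ref{Jordan}). For this $N$ one has $\bigcup_{g\in G} g^{-1}H_{m_0}\,g={\sf Supp}\big(\prod_{i=1}^{N}\xi_i\big)=G$, and Jordan's theorem forces $H_{m_0}=G$. Since $m_0$ is invariant by conjugation by $S_0$, hence by $\langle S_0\rangle=H_{m_0}=G$, the mixture collapses and $\xi\sim m_0^{\otimes\infty}$. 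Your ``counting/orbit argument'' is exactly where Jordan's theorem must enter; without it the step does not go through.
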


In order to prove this proposition, we will need the two following facts which hold only when $G$ is finite. The first assertion is that for any measure $m$ on $G$ such that $e\in {\sf Supp}(m)$, there exists an integer $k$ such that for any $k'\geq k$, ${\sf Supp}(m^{*k'})$ is equal to $H_{m}$, where $H_m$ was defined after Definition \ref{def:purelyinv}. The second asserts that no subgroup of $G$ can intersect every conjugacy classes of $G$: it is Jordan's theorem given below.

\begin{theorem} \label{Jordan}
Let $H$ be a subgroup of $G$. Let us suppose that: $$G = \bigcup\limits_{g \in G} g^{-1}Hg,$$ then $G=H$. 
\end{theorem}

\begin{proof}
The proof is taken from Serre's lesson \cite{Serre} and can be summarized in a simple calculation: 
$$\#G \leq \#\bigg(\bigcup_{g \in G} g^{-1}Hg\bigg)=  \#\bigg(\bigcup\limits_{g \in G} \left(g^{-1}Hg\setminus{e} \right) \cup \{e\} \bigg)\leq \frac{\#G}{\#H} ( \# H-1)+1, $$ which can hold if and only if $H=G$.
\end{proof}

We can now handle the proof of Proposition \ref{unicity1}. 

\begin{proof}[Proof of Proposition \ref{unicity1}]
It is quite obvious that the assertion $1$ implies the assertion $2$. It remains to prove the other implication.
As a consequence of Theorem \ref{finalsuite}, there exists a probability measure $m_0$ invariant by conjugation by its own support such that the law of $\xi$ is $\int_{G} (m_0^{\otimes \infty})^{g} dg$. Let $m_0$ be such a probability measure. As $e\in {\sf Supp}(\xi_1)$, $e$ is in the support of $m_0$: the support of $m_0^{*k}$ is increasing in $k$ and so is the support of $\prod_{i=1}^{k}\xi_i$. Let $k$ be an integer such that ${\sf Supp}\big(\prod_{i=1}^{k}\xi_{i}\big)=G$: for any $k'\geq k$, ${\sf Supp}\big(\prod_{i=1}^{k'}\xi_{i}\big)=G$. Let $N \geq k$ such that ${\sf Supp}(m_0^{*N})=H_{m_0^{*N}}$. Since $N$ is greater that $k$, ${\sf Supp}\big(\prod_{i=1}^{N}\xi_i\big) = G$. On the other side, since the law of $\prod_{i=1}^{N}\xi_i$ is $\int_G (m_0^{*N})^{g} dg$, its support is equal to $\cup_{g \in G} g^{-1} H_{m_0^{*N}} g$. Thus, one has the equality: 
$$G= \bigcup_{g \in G} g^{-1} H_{m_0^{*N}}g$$
which implies, by Jordan's theorem (Theorem \ref{Jordan}), that $H_{m_0^{*N}}=G$ and then $H_{m_0}=G$: the support of $m_0$ generates the group $G$. We recall that $m_0$ was invariant by conjugation by its support, hence by $H_{m_0}$: $m_0$ is invariant by conjugation by $G$ and the law of $\xi$ is thus $m_0^{\otimes \infty}$.
\end{proof}

For an arbitrary compact Lie group, it is not true in general that for any measure $m$ on $G$ such that $e \in {\sf Supp}(m)$, there exists $k$ such that ${\sf Supp}(m^{*k}) = H_{m}$. Thus, in order to deal with any compact Lie group, we will substitute this fact by the It\^{o}-Kawada's theorem (Theorem \ref{Itokawada}). 

As for Jordan's theorem, it does not hold when $G$ is infinite, as in every compact Lie group, any maximal torus intersects all the conjugacy classes. Thus, we have to impose that the subgroup $H$ intersects every conjugacy class ``as much as'' $G$ does, which is the meaning of the condition imposed in the upcoming Proposition \ref{Jordangen}. Doing so, we will be able to prove the following proposition which holds for any arbitrary compact Lie group. From now on, let $G$ be a compact Lie group.

\begin{proposition}
\label{unicity2}
 Let $\xi$ be an infinite sequence of $G$-valued random variables which is braidable, invariant by diagonal conjugation and satisfies the property $(\mathcal{P})$. Let us suppose that $e\in {\sf Supp}(\xi_1)$. The following assertions are equivalent: 
\begin{enumerate}
\item $\xi$ is a sequence of i.i.d. random variables which support generates $G$. 
\item the random variables $\prod_{k=1}^{n}\xi_k$ converge in law to a Haar random variable as $n$ goes to infinity.
\end{enumerate}
\end{proposition}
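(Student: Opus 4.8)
The plan is to lean on Theorem~\ref{finalsuite}, which under the three standing hypotheses (braidability, invariance by diagonal conjugation, and property~$(\mathcal{P})$) already furnishes a probability measure $m_0$ on $G$, invariant by conjugation by its own support, such that the law of $\xi$ equals $\int_{G}\left(m_0^{\otimes\infty}\right)^{g}\,dg$. Since $e$ is fixed by conjugation, the assumption $e\in{\sf Supp}(\xi_1)$ is equivalent to $e\in{\sf Supp}(m_0)$, and I would record this reduction first. The computational pivot of both implications is the law of the partial products: pushing $\left(m_0^{\otimes\infty}\right)^{g}$ forward by the product map $(x_k)_k\mapsto x_1\cdots x_n$ sends the conjugated i.i.d. coordinates $g^{-1}\zeta_k g$ to $g^{-1}(\zeta_1\cdots\zeta_n)g$, so that $S_n:=\prod_{k=1}^{n}\xi_k$ has law $\int_{G}\left(m_0^{*n}\right)^{g}\,dg$, the operation $(\cdot)^g$ being as in~(\ref{eq:diagconj}).

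For the implication $1\Rightarrow 2$ the argument is direct and does not even require Theorem~\ref{finalsuite}: if $\xi$ is i.i.d. with common law $m$, then $S_n$ has law $m^{*n}$. Because ${\sf Supp}(m)$ generates $G$ and contains $e$, its support is not contained in a coset of a proper closed normal subgroup of $G$ (if ${\sf Supp}(m)\subseteq hN$ then $e\in hN$ forces $h\in N$, whence ${\sf Supp}(m)\subseteq N\subsetneq G$, a contradiction). It\^{o}-Kawada's theorem (Theorem~\ref{Itokawada}) then yields $m^{*n}\to{\sf Haar}_G$, which is assertion~$2$.

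For the converse $2\Rightarrow 1$, I would work inside $H:=H_{m_0}$. Viewing $m_0$ as a measure on the compact group $H$, its support generates $H$ by definition and contains $e$, so the same coset argument makes It\^{o}-Kawada's theorem applicable and gives $m_0^{*n}\to{\sf Haar}_H$ weakly. Since $G$ is compact, the conjugation-averaging map $\mu\mapsto\int_G\mu^{g}\,dg$ is weakly continuous (it sends $\mu$ to $f\mapsto\mu(\bar f)$, where $\bar f(x)=\int_G f(g^{-1}xg)\,dg$ is the continuous central average of $f$), so the law of $S_n$ converges weakly to $\int_G({\sf Haar}_H)^{g}\,dg$. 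Hypothesis~$2$ identifies this limit with ${\sf Haar}_G$, giving the measure equality
\begin{align*}
\int_G\left({\sf Haar}_H\right)^{g}\,dg={\sf Haar}_G.
\end{align*}
This is precisely the quantitative statement that a Haar-random conjugate of a Haar-distributed element of $H$ is Haar-distributed on $G$, i.e. that $H$ meets the conjugacy classes of $G$ ``as much as'' $G$ does; feeding it into the generalized Jordan theorem (Proposition~\ref{Jordangen}) forces $H=G$. Once $H_{m_0}=G$, the measure $m_0$, being invariant by conjugation by its own support, is invariant by conjugation by $G$; hence $\left(m_0^{\otimes\infty}\right)^{g}=m_0^{\otimes\infty}$ for every $g$, the mixture collapses, and $\xi$ is i.i.d. with law $m_0$ whose support generates $H_{m_0}=G$, which is assertion~$1$.

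The main obstacle is exactly the passage from the support level to the group level in $2\Rightarrow 1$: for a general compact Lie group the equality of supports $\overline{\bigcup_g g^{-1}Hg}=G$ does \emph{not} imply $H=G$ (a maximal torus is a counterexample), so one cannot degrade the measure equality above to the elementary Jordan theorem~\ref{Jordan} and must retain the full averaged Haar measure and invoke Proposition~\ref{Jordangen}. The two genuinely external inputs are therefore It\^{o}-Kawada's theorem, which replaces the finite-group fact that ${\sf Supp}(m^{*k})$ eventually stabilizes to $H_m$, and the generalized Jordan theorem; the remainder is bookkeeping of the conjugation-averaging identity for $S_n$ together with the weak continuity of that averaging.
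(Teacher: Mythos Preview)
Your proof is correct and follows essentially the same approach as the paper's: both directions rest on It\^{o}--Kawada together with Theorem~\ref{finalsuite}, and the key step in $2\Rightarrow 1$ is exactly the identity $\int_G \lambda_{g^{-1}H_{m_0}g}\,dg=\lambda_G$ fed into Proposition~\ref{Jordangen}. Your added remarks (weak continuity of the conjugation-averaging map, the maximal-torus counterexample) make explicit what the paper leaves implicit, and the word ``normal'' in your aperiodicity check is superfluous but harmless since your coset argument works for any closed subgroup.
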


In order to prove this proposition, let us introduce It\^{o}-Kawada's theorem and a measurable version of the theorem of Jordan which holds for any compact Lie group. 

\begin{definition}
Let $m$ be a probability measure on $G$. It is:
\begin{itemize}
\item {\em aperiodic} if its support ${\sf Supp}(m)$ is not contained in a left or right proper coset of a proper closed subgroup of $G$,
\item {\em non-degenerate} if $H_{m} = G$.
\end{itemize}
\end{definition}

\begin{remarque}
\label{aperiodique}
 It is obvious that $m$ is non-degenerate if it is seen as a measure on $H_{m}$. Besides, if ${e} \in {\sf Supp}(m)$ then $m$ is aperiodic. 
\end{remarque}

Under the condition of aperiodicity and non-degeneracy, It\^{o}-Kawada's theorem (Theorem $3.3.5.$ of \cite{Itokawa1}, first proved in \cite{Itokawa2}) explains the behavior of $m^{*n}$ when $n$ goes to infinity. 

\begin{theorem}[It\^{o}-Kawada's theorem] \label{Itokawada}
Let $\mu$ be a non-degenerate and aperiodic probability measure on $G$. The sequence $\mu^{*n}$ converges in distribution to the normalized Haar measure on $G$ as $n$ goes to infinity. 
\end{theorem}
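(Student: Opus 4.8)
The plan is to forget the probabilistic formulation and argue entirely through the non-commutative Fourier transform provided by the Peter--Weyl theorem. For each irreducible unitary representation $\pi$ of $G$ on a finite-dimensional space $V_\pi$, I would set $\widehat{\mu}(\pi) = \int_G \pi(g)\, d\mu(g)$, an endomorphism of $V_\pi$. The first observation is that convolution is intertwined with composition, $\widehat{\mu \ast \nu}(\pi) = \widehat{\mu}(\pi)\,\widehat{\nu}(\pi)$, so that $\widehat{\mu^{\ast n}}(\pi) = \widehat{\mu}(\pi)^{n}$. Because the matrix coefficients of the irreducible representations are dense in $C(G)$ and the $\mu^{\ast n}$ are probability measures, I would reduce the desired weak convergence to the single statement that $\widehat{\mu}(\pi)^{n} \to 0$ for every \emph{non-trivial} irreducible $\pi$; the trivial representation contributes $1$ on both sides, matching the Haar value, while every other Fourier coefficient of the Haar measure vanishes. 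Since $\pi(g)$ is unitary and $\mu$ is a probability measure, $\|\widehat{\mu}(\pi)\| \leq 1$, so it is enough to show that the spectral radius of $\widehat{\mu}(\pi)$ is strictly less than $1$.

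The heart of the proof, and the step I expect to be most delicate, is the exclusion of eigenvalues of modulus one. I would suppose $\widehat{\mu}(\pi) v = \lambda v$ with $\|v\| = 1$ and $|\lambda| = 1$, and analyze the identity $\int_G \langle \pi(g) v, v\rangle\, d\mu(g) = \lambda$. As each integrand has modulus at most $1$ while the average has modulus exactly $1$, the equality case of the triangle inequality should force $\langle \pi(g) v, v\rangle = \lambda$ for all $g \in {\sf Supp}(\mu)$, and the equality case of Cauchy--Schwarz should then upgrade this to $\pi(g) v = \lambda v$ on ${\sf Supp}(\mu)$. The set of $g$ for which $v$ is an eigenvector of $\pi(g)$ is a closed subgroup $K$ of $G$ on which $g \mapsto \lambda_g$ (the corresponding eigenvalue) is a continuous character, constant equal to $\lambda$ on ${\sf Supp}(\mu)$; non-degeneracy, i.e. $H_\mu = G$, would then give $K = G$, so $g \mapsto \lambda_g$ extends to a character of the whole group.

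At this point both hypotheses enter, and carefully pinning down where each is used is the main obstacle. If $\lambda = 1$, the character is trivial on ${\sf Supp}(\mu)$, hence trivial on $G$, producing a non-zero $\pi$-invariant vector and contradicting the irreducibility and non-triviality of $\pi$. If $\lambda \neq 1$, the character $g \mapsto \lambda_g$ is non-trivial, so ${\sf Supp}(\mu)$ lies in the coset $\{g : \lambda_g = \lambda\}$ of the proper closed subgroup $\ker(\lambda_\bullet)$, contradicting aperiodicity. Either way no eigenvalue of modulus $1$ can exist.

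Finally, since $V_\pi$ is finite-dimensional, every eigenvalue of $\widehat{\mu}(\pi)$ then has modulus strictly below $1$, so its spectral radius is $<1$ and Gelfand's formula yields $\|\widehat{\mu}(\pi)^{n}\| \to 0$. Feeding this back into the Peter--Weyl density argument of the first paragraph gives $\int_G f\, d\mu^{\ast n} \to \int_G f\, d\lambda_{\mathrm{Haar}}$ for every $f \in C(G)$, which is precisely convergence in distribution to the normalized Haar measure. The one point to watch throughout is the bookkeeping of conventions (left versus right convolution) so that the Fourier multiplicativity holds with the ordering matching the increments used elsewhere in the paper, but this does not affect the spectral-radius estimate.
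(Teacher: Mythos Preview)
The paper does not give its own proof of this theorem; it simply cites it as Theorem 3.3.5 of \cite{Itokawa1}, first proved in \cite{Itokawa2}, and uses it as a black box. Your argument is the standard Fourier-analytic proof via Peter--Weyl, and it is correct: the equality-case analysis forcing $\pi(g)v=\lambda v$ on ${\sf Supp}(\mu)$, the passage to a global character via $H_\mu=G$, and the dichotomy $\lambda=1$ versus $\lambda\neq1$ exactly isolate where non-degeneracy and aperiodicity respectively enter. The only cosmetic point is that the spectral-radius conclusion needs finite-dimensionality of $V_\pi$ (which you note), since an operator of norm $\le 1$ with spectral radius $<1$ need not be power-bounded-to-zero in infinite dimensions; here Jordan form or Gelfand suffices.
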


Let us state our generalization of Jordan's theorem, Theorem \ref{Jordan}, valid for any compact Lie group. Recall that for any compact Lie group $K$, we denote by $\lambda_K$ the normalized Haar measure on $K$.

\begin{proposition}
 \label{Jordangen}
Let $H$ be a closed subgroup of $G$. If 
\begin{align*}
\int_G \lambda_{g^{-1}Hg} dg = \lambda_G, 
\end{align*}
then $G=H$. 
\end{proposition}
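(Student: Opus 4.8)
The plan is to test the measure identity against characters and convert it into a statement about invariant vectors, where it becomes transparent. First I would rewrite the hypothesis in integrated form: since $\int_G f\, d\lambda_{g^{-1}Hg} = \int_H f(g^{-1}hg)\, d\lambda_H(h)$ for every continuous $f$ on $G$, the assumption $\int_G \lambda_{g^{-1}Hg}\, dg = \lambda_G$ means
$$\int_G \int_H f(g^{-1}hg)\, d\lambda_H(h)\, dg = \int_G f\, d\lambda_G \qquad \text{for all } f \in C(G).$$
I would then specialize $f = \chi_\rho$, the character of an arbitrary irreducible unitary representation $(\rho, V_\rho)$ of $G$. Because $\chi_\rho$ is a class function, $\chi_\rho(g^{-1}hg) = \chi_\rho(h)$, so the inner integral is independent of $g$ and the left-hand side collapses, yielding
$$\int_H \chi_\rho\, d\lambda_H = \int_G \chi_\rho\, d\lambda_G \qquad \text{for every irreducible } \rho.$$

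Next I would invoke the standard fact that for a compact group $K$ acting on a finite-dimensional $V$, the averaging operator $\int_K \rho(k)\, d\lambda_K(k)$ is the orthogonal projection onto the invariants $V^K$, whence $\int_K \chi_\rho\, d\lambda_K = \dim V_\rho^K$. Applying this to $K = H$ and $K = G$, the displayed equality becomes $\dim V_\rho^H = \dim V_\rho^G$ for every irreducible $\rho$. For the trivial representation both sides are $1$, while for every nontrivial irreducible $\rho$ one has $V_\rho^G = 0$; hence the hypothesis forces $V_\rho^H = 0$ for all nontrivial irreducible $\rho$.

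Finally I would conclude by Peter--Weyl together with Frobenius reciprocity. The $G$-representation $L^2(G/H) \simeq \mathrm{Ind}_H^G \mathbf{1}$ contains each irreducible $\rho$ with multiplicity $\dim \mathrm{Hom}_H(V_\rho|_H, \mathbf{1}) = \dim V_\rho^H$ (passing to the contragredient does not change the dimension of the space of invariants). By the previous step only the trivial representation occurs, and with multiplicity one, so $L^2(G/H)$ is one-dimensional. Since $G/H$ is a homogeneous space equipped with a $G$-invariant probability measure, a one-dimensional $L^2$-space forces $G/H$ to be a single point, i.e. $H = G$.

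The main obstacle is conceptual rather than computational: one must resist settling for a support argument. The support of $\int_G \lambda_{g^{-1}Hg}\, dg$ is only the set of elements conjugate into $H$, namely $\overline{\bigcup_g g^{-1}Hg}$, and this can already equal $G$ when $H$ is a proper subgroup (for instance a maximal torus), so the support identity is strictly weaker than the hypothesis and does not yield $H=G$. The genuine content is the equality of the \emph{full} measures, and it is precisely this that the character/multiplicity computation exploits: representations possessing nonzero $H$-invariants but no $G$-invariants (the adjoint representation in the torus case) are exactly the witnesses that detect a proper $H$, and their absence is what the hypothesis guarantees.
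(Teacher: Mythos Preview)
Your proof is correct and follows essentially the same approach as the paper's: both reduce to the character identity $\int_K \chi_\rho\, d\lambda_K = \dim V_\rho^K$ and the Peter--Weyl description of $H$-invariant functions on $G$. The only difference is packaging: the paper argues by contrapositive, starting from a nonconstant function on $H\backslash G$ to produce an irreducible $\rho$ with $\dim V_\rho^H > 0 = \dim V_\rho^G$ and hence a character witnessing $\int_H \chi_\rho \neq \int_G \chi_\rho$, whereas you run the same ideas forward and invoke Frobenius reciprocity to conclude $L^2(G/H)$ is one-dimensional.
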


\begin{proof}
We want to prove that  $\int_G \lambda_{g^{-1}Hg}  \neq \lambda_G$: it is enough to construct a continuous function $\phi: G \to \mathbb{R}$, invariant by conjugation,  such that $\lambda_H(\phi) \neq \lambda_G(\phi)$. The space $H\setminus G$ of right cosets of $H$ is a nice topological space: it is a differentiable manifold and there exists $\tilde{f}$ a non-constant real continuous function on $H\setminus G$. Let $p: G \to H \setminus G$ be the canonical projection, the function $f = \tilde{f} \circ p$ is a real non-constant square-integrable function $f$ on $G$ invariant by left multiplication by $H$: for any $g \in G$, for any $h\in H$, $f(g)= f(hg).$ One can also assume that $f$ is of zero mean on $G$. 

Let $E=\big\{ \phi \in L^{2}(G),\ \int_{G}\phi(g)d\lambda_{G}(g)=0\big\}$ be the space of square-integrable zero mean functions on $G$. The group $G$ acts on $E$, by left multiplication on the argument and this representation has no non-zero fixed point. On the other hand, the restriction of this representation on $H$ has at least one fixed point, namely $f$. 
We can decompose $E$ as a sum of finite dimensional irreducible representations of~$G$: $$E = \bigoplus\limits_{i=1}^{\infty} E_{i}.$$ None of the $E_{i}$ is the trivial representation of $G$ as we have restricted the action of $G$ to zero mean functions. The action of $H$ on $E$ admits a fixed point $f$. We can decompose $f$ on $\bigoplus\limits_{i=1}^{\infty} E_{i}$. As, for any integer $i$, the space $E_i$ is invariant under the action of $H$, there exists at least an integer $i_{0}$ such that $E_{i_{0}}$ seen as a $H$-module is not irreducible. We denote by $\chi_{i_{0}}$ the character of the $G$-module $E_{i_{0}}$. By the classical theory of character, 
\begin{align*}
\int_{G} \chi_{i_{0}} d\lambda_{G} &= \dim(E_{i_{0}}^{G}) = 0, 
\end{align*}
whereas: 
\begin{align*}
\int_{G} \chi_{i_{0}} d\lambda_{H} &= \int_{H} \chi_{i_{0}} d\lambda_{H} = \dim(E_{i_{0}}^{H}) \geq 1,
\end{align*}
where $E_{i_0}^{G}$ and $E_{i_0}^{H}$ are the vector spaces of fixed points in $E_{i_0}$ under the actions of $G$ and $H$. Thus, we just found a central function $\chi_{i_0}$ such that: 
\begin{align*}\int_{G} \chi_{i_0} d\lambda_{G} \neq \int_{G} \chi_{i_0} d\lambda_{H}.\end{align*}
This ends the proof. \end{proof}

%We give an other formulation of Proposition \ref{Jordangen} which seems interesting to us. 
%
%\begin{proposition}
%\label{Jordangen2}
%Let $G$ be a compact Lie group, let $H$ be a closed subgroup of $G$. If for any $G$-invariant function $f$ the equality $\lambda_{G}(f)=\lambda_{H}(f)$ holds, then $G=H$. 
%\end{proposition}

We have now all the tools in order to prove Proposition \ref{unicity2}.

\begin{proof}[Proof of Proposition \ref{unicity2}]
As a consequence of Theorem \ref{Itokawada}, Remark \ref{aperiodique} and the fact that $e \in {\sf Supp}(\xi_1)$, it is easy to see that the condition $1$ implies condition $2$. Let us prove the other implication. 

Let $\xi$ be an infinite sequence of $G$-valued random variables which is braidable, invariant by diagonal conjugation and which satisfies the property $(\mathcal{P})$. Let us suppose that $e \in {\sf Supp}(\xi_1)$. As a consequence of Theorem \ref{finalsuite}, there exists a probability measure $m_0$ invariant by conjugation by its own support such that the law of $\xi$ is $\int_{G} (m_0^{\otimes \infty})^{g} dg$. Let $m_0$ be such a probability measure. Using the hypothesis on $\xi_1$, $e \in  {\sf Supp}(m_0)$. 

Let us suppose that $\prod_{k=1}^{n} \xi_k$ converges in law to a Haar random variable. As we have seen in Remark \ref{aperiodique}, the measure $m_0$ is aperiodic and non-degenerate if seen as a measure on $H_{m_0}$. Using It\^{o}-Kawada's theorem, when $n$ goes to infinity, $m_0^{*n}$ converges to the Haar probability measure on $H_{m_0}$ which we denote by $\lambda_{H_{m_0}}$. For any integer $n$, the law of $\prod\limits_{i=1}^{n}\xi_i$ is $\int_{G} \left(m_0^{*n}\right)^{g} dg$ and thus, using the hypothesis on the law of $\prod\limits_{i=1}^{n}\xi_i$ and our previous discussion, one gets the equality: 
\begin{align*}
 \lambda_G = \int_G \lambda_{g^{-1}H_{m_0}g}dg.
\end{align*}
By Proposition \ref{Jordangen}, it follows that $H_{m_0} = G$. Since the measure $m_0$ is invariant by conjugation by $H_{m_0}$, it is invariant by conjugation by $G$: the law of $\xi$ is $m_0^{\otimes \infty}$.
\end{proof}

\subsection{ General case }
\label{sec:general}

The following theorem gives weaker conditions on $\xi$ in order to understand the case when $\xi$ is an infinite sequence of i.i.d. random variables such that $H_{\xi_1} \ne G$. We recall that $G$ is a compact Lie group. 

\begin{theorem} 
\label{unicity3}
Let $\xi$ be an infinite sequence of $G$-valued random variables which is braidable, invariant by diagonal conjugation and satisfies the property $(\mathcal{P})$. The following assertions are equivalent: 
\begin{enumerate}
\item the sequence $\xi$ is a sequence of i.i.d. random variables invariant by conjugation by $G$, 
\item there exists $\nu$ a probability measure on $G$ such that for any positive integer $n$, the law of $\prod_{k=1}^{n}\xi_k$ is $\nu^{*n}$.
\end{enumerate}
\end{theorem}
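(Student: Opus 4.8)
The plan is to establish the nontrivial implication $2\Rightarrow 1$; the converse is immediate, since if $\xi$ is i.i.d.\ with common law $m_0$ then $\prod_{k=1}^n\xi_k$ has law $m_0^{*n}$ and one simply takes $\nu=m_0$. For the forward direction I would begin from Theorem~\ref{finalsuite}: the three standing hypotheses on $\xi$ furnish a probability measure $m_0$ on $G$, invariant by conjugation by its own support $H_{m_0}$, such that the law of $\xi$ equals $\int_G\left(m_0^{\otimes\infty}\right)^g dg$. The entire argument then reduces to proving that $m_0$ is in fact invariant by conjugation by all of $G$: once this is known, $m_0^g=m_0$ for every $g$, the mixture collapses to $m_0^{\otimes\infty}$, and $\xi$ is i.i.d.\ and $G$-conjugation-invariant as claimed.

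The first concrete step is to compute the law of the partial products. Conditionally on the mixing parameter $g$ (Haar distributed), the $\xi_k$ are i.i.d.\ with law $m_0^g$, so $\prod_{k=1}^n\xi_k$ has conditional law $(m_0^g)^{*n}$; since conjugation is a group automorphism it commutes with convolution, whence $(m_0^g)^{*n}=(m_0^{*n})^g$ and the unconditional law of $\prod_{k=1}^n\xi_k$ is $\int_G (m_0^{*n})^g dg$. Writing $C(m)=\int_G m^g dg$ for the conjugation-average, assumption~2 at $n=1$ identifies $\nu=C(m_0)$, and for general $n$ it becomes the operator-level identity $C(m_0^{*n})=C(m_0)^{*n}$.

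I would then pass to the non-commutative Fourier transform. For an irreducible unitary representation $\pi$ of dimension $d_\pi$, the Schur averaging formula $\int_G \pi(g)^{-1}B\,\pi(g)\,dg=\frac{\operatorname{tr}(B)}{d_\pi}\mathrm{Id}$ gives $\widehat{C(m)}(\pi)=\frac{\operatorname{tr}(\hat m(\pi))}{d_\pi}\mathrm{Id}$. Setting $A=\hat{m_0}(\pi)$ and $\bar\lambda=\frac{\operatorname{tr}(A)}{d_\pi}$, and using $\widehat{m^{*n}}(\pi)=\hat m(\pi)^n$, the relation $C(m_0^{*n})=C(m_0)^{*n}$ translates into the scalar identities $\frac{\operatorname{tr}(A^n)}{d_\pi}=\bar\lambda^{\,n}$ for every $n\ge 1$. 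Summing the associated series shows the generating function $\sum_{n\ge 0}\frac{\operatorname{tr}(A^n)}{d_\pi}z^n=\frac{1}{d_\pi}\operatorname{tr}\!\big((\mathrm{Id}-zA)^{-1}\big)$ equals $\frac{1}{1-\bar\lambda z}$; comparing the poles of this rational function (a partial-fractions argument on $\frac1{d_\pi}\sum_i \frac{1}{1-z\lambda_i}$) forces every eigenvalue $\lambda_i$ of $A$ to equal $\bar\lambda$.

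The main obstacle is that a matrix with a single eigenvalue need not be scalar, so one must rule out a nilpotent part, and this is precisely where the clause "invariant by conjugation by its own support" of Theorem~\ref{finalsuite} is indispensable. Since $m_0$ is central on $H_{m_0}$, the operator $A=\int_{H_{m_0}}\pi(x)\,dm_0(x)$ commutes with $\pi(h)$ for all $h\in H_{m_0}$; decomposing $\pi|_{H_{m_0}}$ into $H_{m_0}$-isotypic components, on which $\pi$ acts as $\rho_\alpha\otimes\mathrm{Id}$, Schur's lemma shows that $A$ equals a scalar $c_\alpha\,\mathrm{Id}$ on each component. Hence $A$ is diagonalizable, and combined with the eigenvalue computation this yields $A=\bar\lambda\,\mathrm{Id}$. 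As this holds for every irreducible $\pi$, the measure $m_0$ is invariant by conjugation by $G$, and the proof concludes as explained above.
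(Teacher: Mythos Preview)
Your proof is correct and follows essentially the same approach as the paper: invoke Theorem~\ref{finalsuite} to obtain $m_0$, show via Fourier analysis and Schur averaging that assumption~2 forces each $\widehat{m_0}(\pi)$ to have a single eigenvalue, and then use the $H_{m_0}$-centrality of $m_0$ together with the decomposition of $\pi|_{H_{m_0}}$ and Schur's lemma to conclude that $\widehat{m_0}(\pi)$ is diagonalizable, hence scalar. The paper packages the middle two steps as separate results (Proposition~\ref{onemargprop} and Theorem~\ref{onemarg}) and appeals to Newton's identities rather than your generating-function argument for the single-eigenvalue step, but the substance is identical.
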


Before proving this theorem, let us recall some basic, yet crucial, results about representations and integration. First of all, Peter-Weyl's theorem asserts that the set of matrix elements of irreducible representations 
\begin{align*}
\left\{ g \mapsto v(\pi(g)w), \pi \in \hat{G}, v \in V_{\pi}^{*}, w \in V_{\pi} \right\}, 
\end{align*} 
where $\hat{G}$ is the set of irreducible representations of $G$, is dense for the uniform norm in the set of continuous functions on $G$. Thus, any measure $m$ on $G$ is fully characterized by its Fourier coefficients defined as: 
\begin{align*}
\forall \pi \in \hat{G}\text{, } \pi(m) = \int_{G} \pi(g) m(dg). 
\end{align*}
Secondly, let $\pi: G \to Gl(V)$ be an irreducible representation of dimension $d_\pi$. Let $A$ be a matrix acting on $V$. By the Schur's lemma, $$\int_G \pi(g) A \pi(g)^{-1}dg = \frac{Tr(A)}{d_\pi} Id.$$

Let $m$ be a probability measure on $G$. 
\begin{definition}
The measure $m$ {\em quasi-invariant by conjugation} if there exists $\nu$ a probability measure on $G$ such that \text{for any }$n \in \mathbb{N}$ 
$$\int_G (m^{g})^{*n} dg = \nu^{*n}.$$
\end{definition}

The main characterization of quasi-invariant by conjugation probability measures is given by the following proposition. 

\begin{proposition}
\label{onemargprop}
The measure $m$ is quasi-invariant by conjugation if and only if for any irreducible representation $\pi$ of $G$, the matrix $\pi(m)$ has only one eigenvalue. 
\end{proposition}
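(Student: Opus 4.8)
The plan is to transport the entire statement to the Fourier side by Peter-Weyl's theorem, turning it into a linear-algebra condition on the matrices $\pi(m)$, and then to exploit Schur's lemma together with the fact that a finite multiset of complex numbers is determined by its power sums. The computational backbone is the following. For a unitary irreducible representation $\pi \in \hat{G}$ of dimension $d_\pi$, the definition of the conjugate measure gives $\pi(m^g) = \pi(g)^{-1}\pi(m)\pi(g)$, hence $\pi\big((m^g)^{*n}\big) = \pi(g)^{-1}\pi(m)^n\pi(g)$; integrating over $g$ against the Haar measure and applying Schur's lemma (in the form $\int_G \pi(g)^{-1}A\,\pi(g)\,dg = \frac{Tr(A)}{d_\pi}Id$, deduced from the stated version via $g \mapsto g^{-1}$) yields
$$\pi\Big(\int_G (m^g)^{*n}\,dg\Big) = \frac{Tr(\pi(m)^n)}{d_\pi}\,Id, \qquad n \geq 0.$$

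For the direct implication, assume $m$ is quasi-invariant by conjugation with witnessing measure $\nu$. Applying $\pi$ to $\int_G (m^g)^{*n}\,dg = \nu^{*n}$ and using the backbone identity gives $\pi(\nu)^n = \frac{Tr(\pi(m)^n)}{d_\pi}Id$ for all $n$. The case $n=1$ shows $\pi(\nu) = \lambda_\pi Id$ is scalar with $\lambda_\pi = Tr(\pi(m))/d_\pi$; substituting back gives $Tr(\pi(m)^n) = d_\pi\,\lambda_\pi^{\,n}$ for every $n \geq 1$. Writing $\mu_1,\dots,\mu_{d_\pi}$ for the eigenvalues of $\pi(m)$ counted with algebraic multiplicity, this says that the power sums of the multiset $\{\mu_i\}$ agree, for all $n$, with those of the multiset of $d_\pi$ copies of $\lambda_\pi$. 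Since a multiset of $d_\pi$ complex numbers is determined by its first $d_\pi$ power sums (Newton's identities recover the elementary symmetric functions, hence the characteristic polynomial), all $\mu_i$ equal $\lambda_\pi$, i.e. $\pi(m)$ has a single eigenvalue.

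For the converse, suppose each $\pi(m)$ has a single eigenvalue $\lambda_\pi$, so that $Tr(\pi(m)^n) = d_\pi\,\lambda_\pi^{\,n}$ for all $n$. I would take the explicit candidate $\nu = \int_G m^g\,dg$, a genuine probability measure (a Haar average of the probability measures $m^g$), which satisfies $\pi(\nu) = \frac{Tr(\pi(m))}{d_\pi}Id = \lambda_\pi Id$ by the $n=1$ case of the backbone identity. Then $\pi(\nu^{*n}) = \pi(\nu)^n = \lambda_\pi^{\,n}Id$, while the backbone identity and the hypothesis give $\pi\big(\int_G (m^g)^{*n}\,dg\big) = \frac{Tr(\pi(m)^n)}{d_\pi}Id = \lambda_\pi^{\,n}Id$. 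Thus $\int_G (m^g)^{*n}\,dg$ and $\nu^{*n}$ share all Fourier coefficients, so they coincide by Peter-Weyl's theorem, which is exactly quasi-invariance of $m$.

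The only non-formal ingredient is the elementary lemma used in the direct implication: matching every power sum with those of a point-mass multiset forces a single eigenvalue. I expect this to be the main (though mild) obstacle, resolved cleanly by Newton's identities, noting that the eigenvalues of $\pi(m)$ are complex numbers of modulus at most one. A minor point of care is the convolution–Fourier convention, but since every computation involves only powers of a single matrix and conjugations, the ordering in $\pi(m_1 * m_2)$ is immaterial here.
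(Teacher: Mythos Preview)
Your proof is correct and follows essentially the same route as the paper: both reduce the question via Peter--Weyl to a Fourier-side identity, use Schur's lemma to compute $\pi\big(\int_G (m^g)^{*n}\,dg\big)=\frac{Tr(\pi(m)^n)}{d_\pi}Id$, take $\nu=\int_G m^g\,dg$ as the witnessing measure (forced by the case $n=1$), and conclude by the fact that a finite multiset of complex numbers is determined by its power sums. The paper phrases this last step as ``the link between the traces of the positive powers of a finite matrix and the set of its eigenvalues'', which is exactly your Newton's-identities argument.
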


\begin{proof}
Let $\pi$ be an irreducible representation of $G$ and let $n$ be a positive integer. Let us compute $\pi\left(\int_G(m^{*n})^{g}dg\right)$ and $\pi\left(\left(\int_Gm^{g}dg\right)^{*n} \right)$: 
\begin{align*}
\pi\left(\int_G(m^{*n})^{g}dg\right) &= \int_{G^{2}} \pi(gg'g^{-1}) m^{*n}(dg') dg \\&= \int_{G} \pi(g) \left(\int_G\pi(g') m^{*n}(dg')\right) \pi(g)^{-1} dg
\\&=  \frac{1}{d_{\pi}} Tr\left( \pi(m)^{n}\right) Id, \\
\pi\left(\left(\int_Gm^{g}dg\right)^{*n} \right) &= \pi\left(\left(\int_Gm^{g}dg\right) \right)^{n} \\
&= \left(\frac{1}{d_\pi} Tr(\pi(m))\right)^{n}Id.
\end{align*}

Proposition \ref{onemargprop} is equivalent to the following assertion: for any positive integer~$n$, 
\begin{align}
\label{eq:quasi}
\int_{g} (m^{g})^{*n }dg = \left( \int_{g} m^{g} dg \right)^{*n}
\end{align}
if and only if for any irreducible representation $\pi$ of $G$, the matrix $\pi(m)$ has only one eigenvalue. Yet, using the remark about Peter-Weyl's theorem, Equality (\ref{eq:quasi}) holds for any positive integer $n$ if and only if for any irreducible representation $\pi$, for any positive integer $n$: 
\begin{align*}
\pi\left(\int_{g} (m^{g})^{*n }dg\right) = \pi\left(\left( \int_{g} m^{g} dg \right)^{*n}\right),
\end{align*}
hence if and only if for any irreducible representation $\pi$, for any positive integer $n$: 
\begin{align*}
Tr\left(\pi(m)^{n}\right) = Tr\left( \left(\frac{Tr(\pi(m))}{d_\pi} Id \right)^{n}\right).
\end{align*}

The proposition is a consequence of the link between the traces of the positive powers of a finite matrix and the set of its eigenvalues and the fact that the matrix $\frac{Tr(\pi(\nu))}{d_\pi} Id$ has only one eigenvalue.
 \end{proof}

It is natural to wonder if a quasi-invariant by conjugation probability measure is invariant by conjugation. The answer is no and we will construct a counter-example in the symmetric group $\mathfrak{S}_{3}$. 

\begin{lemma}
\label{contre-ex}
Let $\left(\mu_t\right)_{t\geq 0}$ (reps. $\left(\eta_t\right)_{t\geq 0}$) be the continuous semi-group of convolution of measures starting from $\delta_{id}$ on the symmetric group $\mathfrak{S}_3$, associated with the jump measure $m$ (resp. $m_0$): 
\begin{align*}
&m((12))\ =0, &&m_0((12))\ =1,\\ 
&m ((13))\ = 1, &&m_0((13))\ =1,\\
&m((23))\ = 2, && m_0((23))\ = 1, \\ 
&m((123))=2, &&m_0((123))=1,\\
&m((132)) =0, &&m_0((132))=1. 
\end{align*}
The measure $\mu_1$ is quasi-invariant by conjugation and for any $n \in \mathbb{N}$, 
\begin{align*}
\int_{G} (\mu_1^{g})^{*n} dg = \eta_1^{*n}.
\end{align*} 
\end{lemma}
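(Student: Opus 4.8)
The plan is to reduce everything to Fourier analysis on $\mathfrak{S}_3$ via the characterization of quasi-invariance in Proposition \ref{onemargprop}. First I would record that, since the two jump measures have equal total mass $|m| = |m_0| = 5$ and $\delta_{id}$ is the convolution unit, the convolution semigroups are convolution exponentials, namely $\mu_1 = e^{-5}\sum_{k \geq 0}\frac{1}{k!}m^{*k}$ and $\eta_1 = e^{-5}\sum_{k \geq 0}\frac{1}{k!}m_0^{*k}$. Applying term by term the algebra morphism $\pi(\mu) = \int_G \pi(g)\,\mu(dg)$ attached to any representation $\pi$ then gives $\pi(\mu_1) = e^{-5}e^{\pi(m)}$ and $\pi(\eta_1) = e^{-5}e^{\pi(m_0)}$. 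The irreducible representations of $\mathfrak{S}_3$ are the trivial one, the signature $\varepsilon$, and the $2$-dimensional standard representation $\rho$, so all computations reduce to these three.

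For the quasi-invariance I would invoke Proposition \ref{onemargprop}: it suffices to check that $\pi(\mu_1)$ has a single eigenvalue for each irreducible $\pi$, which is automatic for the two one-dimensional representations. For $\rho$, fixing $\omega = e^{2i\pi/3}$ and the realisation
\[
\rho((123)) = \begin{pmatrix} \omega & 0 \\ 0 & \omega^2 \end{pmatrix}, \qquad \rho((12)) = \begin{pmatrix} 0 & 1 \\ 1 & 0 \end{pmatrix},
\]
one derives $\rho((13))$ and $\rho((23))$ from the group law and evaluates $\rho(m) = \rho((13)) + 2\rho((23)) + 2\rho((123))$, which I expect to satisfy $\mathrm{Tr}\,\rho(m) = -2$ and $\det \rho(m) = 1$, hence characteristic polynomial $(\lambda+1)^2$ and the single eigenvalue $-1$. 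Then $\rho(\mu_1) = e^{-5}e^{\rho(m)}$ has the single eigenvalue $e^{-6}$, and Proposition \ref{onemargprop} yields that $\mu_1$ is quasi-invariant by conjugation. Since $\rho(m) \neq -\mathrm{Id}$, the matrix $\rho(\mu_1)$ is not scalar, so by Schur's lemma $\mu_1$ is not invariant by conjugation, which is the whole point of the example.

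It remains to identify the mixing measure. By the definition of quasi-invariance there is a probability measure $\nu$ with $\int_G (\mu_1^g)^{*n}\,dg = \nu^{*n}$ for all $n$, and taking $n = 1$ forces $\nu = \int_G \mu_1^g\,dg$; so I only have to prove $\int_G \mu_1^g\,dg = \eta_1$. Both measures are central, $\eta_1$ because $m_0$ is a class function and $\int_G \mu_1^g\,dg$ by construction, and a central measure is determined by its (scalar) Fourier coefficients, so it suffices to match them on the three irreducibles. On each irreducible $\pi$, Schur's lemma gives $\pi\!\left(\int_G \mu_1^g\,dg\right) = \frac{\mathrm{Tr}\,\pi(\mu_1)}{d_\pi}\,\mathrm{Id}$, while $\pi(\eta_1) = e^{-5}e^{\pi(m_0)}$. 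For $\rho$ the key computation is $\rho(m_0) = -\mathrm{Id}$, the transpositions contributing $0$ because $1 + \omega + \omega^2 = 0$ and the two $3$-cycles contributing $-\mathrm{Id}$, so $\rho(\eta_1) = e^{-6}\,\mathrm{Id} = \tfrac{1}{2}\mathrm{Tr}\,\rho(\mu_1)\,\mathrm{Id}$; the one-dimensional cases follow from $\varepsilon(m) = \varepsilon(m_0) = -1$ and the trivial values. Matching all Fourier coefficients gives $\int_G \mu_1^g\,dg = \eta_1$ by Peter-Weyl, whence $\int_G (\mu_1^g)^{*n}\,dg = \eta_1^{*n}$ for every $n$.

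The main obstacle is purely computational: setting up a consistent realisation of the standard representation and carefully evaluating $\rho(m)$ and $\rho(m_0)$, in particular getting $\det \rho(m) = 1$ correct. Everything conceptual — the convolution-exponential formula, Peter-Weyl determining central measures, and Schur's lemma for the conjugation average — is already available in the material preceding Proposition \ref{onemargprop}.
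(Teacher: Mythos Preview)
Your proof is correct and follows essentially the same route as the paper: reduce via Proposition \ref{onemargprop} to checking that $\pi(m)$ has a single eigenvalue for each irreducible $\pi$, then match this eigenvalue with that of $\pi(m_0)$ to identify the mixing measure with $\eta_1$ by Peter--Weyl. The paper simply states the reduction and leaves the computation of $\rho(m)$ on the standard representation as an exercise, whereas you carry it out explicitly (and correctly); your extra remark that $\rho(\mu_1)$ is not scalar, hence $\mu_1$ is genuinely not conjugation-invariant, is a useful addition that makes the point of the counterexample explicit.
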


\begin{proof}
We have to check that the condition of Proposition \ref{onemargprop} is fulfilled by $\mu_1$. Actually we only have to show that for any $\pi\in \widehat{\mathfrak{S}_3}$, $\pi(m)$ has only one eigenvalue, which is equal to the one of $\pi(m_0)$. The group $\mathfrak{S}_3$ has only three irreducible representations two of which have dimension one. It remains to compute $\pi(m)$ where $\pi$ is the representation of $\mathfrak{S}_3$ on $\{(a, b, c) \in \mathbb{R}^{3}, a+b+c=0 \}.$ We leave this calculation as an exercise. 
\end{proof}

The quasi-invariance by conjugation property does not imply the invariance by conjugation property, yet, we have the following theorem. 

\begin{theorem}
\label{onemarg}
Let $m$ be a probability measure on $G$. Let us suppose that $m$ is invariant by conjugation by its own support. Then, $m$ is quasi-invariant by conjugation if and only if $m$ is invariant by conjugation by $G$. 
\end{theorem}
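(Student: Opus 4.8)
The plan is to prove the two implications separately, the forward one being immediate and the converse carrying all the content.

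First I would dispose of the easy direction. If $m$ is invariant by conjugation by $G$, then $m^{g} = m$ for every $g \in G$, so for every $n$ one has $\int_{G} (m^{g})^{*n}\,dg = m^{*n}$, and therefore $m$ is quasi-invariant by conjugation with witnessing measure $\nu = m$.

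For the converse, suppose that $m$ is invariant by conjugation by its own support and quasi-invariant by conjugation. The first observation is that the stabilizer $\{ g \in G : m^{g} = m \}$ is a closed subgroup of $G$, since the conjugation action on finite measures is continuous; by hypothesis it contains ${\sf Supp}(m)$, hence it contains the closed subgroup $H_{m} = \overline{\langle {\sf Supp}(m) \rangle}$. In other words, $m$, viewed as a probability measure on the compact group $H_{m}$, is a central (conjugation-invariant) measure on $H_{m}$. I would then exploit this centrality representation-theoretically. Fix an irreducible representation $\pi$ of $G$ and decompose its restriction to $H_{m}$ into irreducible $H_{m}$-subrepresentations $\pi_{\mid H_{m}} = \bigoplus_{i} W_{i}$. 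Since $m$ is supported on $H_{m}$, the operator $\pi(m) = \int_{H_{m}} \pi(g)\,m(dg)$ preserves each $W_{i}$ and acts there as $\rho_{i}(m)$, where $\rho_{i}$ is the irreducible $H_{m}$-representation carried by $W_{i}$. Because $m$ is central on $H_{m}$, the matrix $\rho_{i}(m)$ commutes with $\rho_{i}(H_{m})$, so Schur's lemma forces $\rho_{i}(m)$ to be a scalar; consequently $\pi(m)$ is diagonalizable. On the other hand, Proposition \ref{onemargprop} tells us that quasi-invariance by conjugation is equivalent to $\pi(m)$ having a single eigenvalue for every irreducible $\pi$. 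A diagonalizable matrix with a single eigenvalue is a scalar multiple of the identity, so $\pi(m) = \lambda_{\pi}\, Id$ for each irreducible $\pi$. Finally $\pi(m^{g}) = \pi(g)^{-1} \pi(m) \pi(g) = \pi(m)$ for every $g \in G$ and every irreducible $\pi$, and Peter-Weyl's theorem then yields $m^{g} = m$ for all $g \in G$, i.e. $m$ is invariant by conjugation by $G$.

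The main obstacle is precisely the diagonalizability step: the eigenvalue condition supplied by Proposition \ref{onemargprop} controls only the spectrum of $\pi(m)$, not its semisimplicity, so one genuinely needs the hypothesis that $m$ is invariant by conjugation by its support — upgraded to invariance by $H_{m}$ — in order to rule out nontrivial Jordan blocks. Once $\pi(m)$ is known to be simultaneously diagonalizable and monospectral, the identification with a scalar and the passage back to $m$ through Peter-Weyl are routine.
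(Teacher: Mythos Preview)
Your proof is correct and follows essentially the same route as the paper: decompose $\pi_{\mid H_m}$ into irreducibles, apply Schur's lemma to get diagonalizability of $\pi(m)$, combine with the single-eigenvalue condition from Proposition~\ref{onemargprop} to force $\pi(m)$ to be scalar, and conclude via Peter--Weyl. You are in fact slightly more careful than the paper in explicitly justifying the passage from invariance by the support to invariance by $H_m$ through the closed-stabilizer argument.
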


\begin{proof}
Almost by definition, any probability measure which is invariant by conjugation is quasi-invariant by conjugation. It remains to prove the ``only if" part of the theorem. Let $m$ be a quasi-invariant by conjugation probability measure. Using Proposition \ref{onemargprop}, for any irreducible representation $\pi$ of $G$, $\pi(m)$ has only one eigenvalue. Any irreducible representation $\pi \in \hat{G}$ defines by restriction a representation of $H_{m}$. Since $H_{m}$ is a closed subgroup of $G$, it is a compact Lie group, thus we can apply Peter-Weyl's theorem which allows us to decompose any representation of $H_{m}$ as a direct sum of irreducible representations: $$\pi = \bigoplus_{i=1}^{n} \pi_i,$$ with $\pi_i \in \hat{H}_{m}$. Since $m$ is invariant by conjugation by $H_{m}$, thanks to Schur's lemma, for any $i \in \{1,...,n\}$, $\pi_i(m)$ is a scalar matrix, hence $\pi(m)$ is diagonal. As it has only one eigenvalue, it is a multiple of the identity. Let $\pi \in \hat{G}$ acting on $V$, let $w \in V$, $v \in V^{*}$ and let $h \in G$: 
\begin{align*}
\int_{G} v(\pi(hgh^{-1})w) m(dg) &= v\left( \pi(h) \left( \int_G\pi(g) m(dg) \right) \pi(h)^{-1} w\right) \\
&=v\left( \left( \int_G\pi(g) m(dg) \right) w\right) \\
&= \int_G v(\pi(g) w) m(dg).
\end{align*}
Thus, using Peter-Weyl's theorem, $m$ is invariant by conjugation by $G$. 
\end{proof}

We have now all the tools in order to prove Theorem \ref{unicity3}.
\begin{proof}[Proof of Theorem \ref{unicity3}]
Let $\xi$ be an infinite sequence of $G$-valued random variables which is braidable, invariant by diagonal conjugation and which satisfies the property $(\mathcal{P})$. As a consequence of Theorem \ref{finalsuite}, there exists a probability measure $m_0$ invariant by conjugation by its own support such that the law of $\xi$ is $\int_{G} (m_0^{\otimes \infty})^{g} dg$. 

Let us suppose that $\xi$ is a sequence of i.i.d. random variables, then one can take as $m_0$ the law of $\xi_1$: the law of $\xi$ is equal to $m_0^{\otimes \infty}$ and thus, for any $n$, the law of $\prod_{k=1}^{n} \xi_{k}$ is $m_0^{*n}$. 

Instead of assuming that $\xi$ is a sequence of i.i.d. random variables, let us suppose that there exists $\nu$ a probability measure on $G$ such that for any $n$, the law of $\prod_{k=1}^{n} \xi_{k}$ is $\nu^{*n}$. The law of $\prod_{k=1}^{n} \xi_{k}$ is $\int_{G} (m_0^{*k})^{g} dg$: it shows that the probability measure $m_0$ is quasi-invariant by conjugation. Yet, it is also invariant by conjugation by its own support. By Theorem \ref{onemarg}, $m_0$ is invariant by conjugation by $G$ and thus $\xi$ is a sequence of i.i.d. random variables. 
\end{proof}

\section{Processes and the braid group}

Sections \ref{infinite-sec} and \ref{degen-sec} can be generalized in order to get similar results for $G$-valued processes indexed by $\mathbb{R}^{+}$. 

\subsection{Definitions}

We define the rational increments of a process as Kallemberg does in \cite{Kall}. Let $(X_t)_{t \in \mathbb{R}^{+}}$ be a $G$-valued random process indexed by $\mathbb{R}^{+}$.
\begin{definition}
\label{increment}
We define the (rational) increments of $(X_t)_{t \in \mathbb{R}^{+}}$ for $n$ in $\mathbb{N}^{*}\cup (\mathbb{N}^{*})^{-1} $ and $j \geq 1 $ as: 
\begin{align}
X_{n, j} = X_{\frac{j-1}{n}}^{-1}X_{\frac{j}{n}} .
\end{align}
The process $(X_t)_{t \in \mathbb{R}^{+}}$ has spreadable (resp. braidable) increments if for every $n\in \mathbb{N}^{*}\cup (\mathbb{N}^{*})^{-1}$, the sequence $\left(X_{n, i}\right)_{0< i}$ is spreadable (resp. braidable). 
\end{definition}

In the following we will need a weak version of the notion of independence of increments which will replace the property $(\mathcal{P})$ in the study of processes with braidable increments. 
\begin{definition}
The process $(X_t)_{t \in \mathbb{R}^{+}}$ has $\mathcal{I}$-independent increments if for any increasing sequence of real $0=t_0<t_1<...<t_n<...$ the sequence $(X_{t_{n-1}}^{-1}X_{t_n})_{n\in \mathbb{N}^{*}}$ satisfies the property $(\mathcal{P})$ defined in Definition \ref{proprietepdesincrements}.
\end{definition}

Let us consider a L\'{e}vy process $(X_t)_{t \in \mathbb{R}^{+}}$.
\begin{definition}
The process $(X_t)_{t \in \mathbb{R}^{+}}$ has auto-invariant by conjugation increments if for any $0=t_0\leq t_1\leq t_2...\leq t_k$, the sequence of increments $(X_{t_{i-1}}^{-1}X_{t_i})_{i=1}^{k}$ is auto-invariant by conjugation . 
\end{definition}

Recall the notion of self-invariance defined in Definition \ref{self-invariance}. In the next proposition, we link the different notions of invariance by conjugation that can be applied to a L\'{e}vy process. 

\begin{proposition}
\label{prop-support}
The three following conditions are equivalent: 
\begin{enumerate}
\item $X$ has auto-invariant by conjugation increments,
\item for any $t \in \mathbb{R}^{+}$, $X_{t}$ is invariant by conjugation by its own support,
\item $X$ is invariant by conjugation by $H_X$, thus self-invariant by conjugation. 
\end{enumerate}
\end{proposition}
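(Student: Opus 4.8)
The plan is to move everything onto the convolution semigroup $(m_t)_{t\geq 0}$ of one-dimensional marginals of $X$, where $m_t$ is the law of $X_t$. The backbone is the observation that a L\'evy process is determined by this semigroup, so that conjugating the whole process by a fixed $g\in G$ produces the L\'evy process whose marginal semigroup is $(m_t^{g})_{t\geq 0}$, with $m_t^{g}$ the conjugate measure defined as in (\ref{eq:diagconj}) (the law of $g^{-1}X_tg$). Consequently $X$ is invariant by conjugation by $g$ if and only if $m_t^{g}=m_t$ for every $t\geq 0$. Writing $K_t=\{g\in G:\ m_t^{g}=m_t\}$, which is a closed subgroup of $G$, this says $g\in C:=\bigcap_{t>0}K_t$. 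This reduction turns the three conditions into purely measure-theoretic statements about the $K_t$ and the supports ${\sf Supp}(m_t)$.

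First I would record the translations. Condition (2) is exactly ${\sf Supp}(m_t)\subseteq K_t$ for all $t$. Since $H_X=\overline{\langle\bigcup_{t>0}{\sf Supp}(m_t)\rangle}$ and each $K_s$ is a closed subgroup, condition (3), that $X$ is invariant by conjugation by $H_X$, is equivalent to $H_X\subseteq C$, i.e. to ${\sf Supp}(m_{s'})\subseteq K_s$ for all $s,s'>0$. Finally, because the increments of a L\'evy process are independent with laws $m_{t_i-t_{i-1}}$, unwinding the definition of auto-invariance by conjugation over all finite partitions shows that condition (1) is equivalent to the very same statement: ${\sf Supp}(m_{s'})\subseteq K_s$ for all $s,s'>0$ (each gap $t_i-t_{i-1}$ may be chosen freely). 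Thus (1) and (3) are literally the same condition, so $(1)\Leftrightarrow(3)$ is immediate, and specialising to $s'=s$ shows that each of them implies (2).

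All the content therefore lies in the implication $(2)\Rightarrow(3)$: assuming ${\sf Supp}(m_t)\subseteq K_t$ for every $t$, I must transfer invariance across times and prove ${\sf Supp}(m_{s'})\subseteq K_s$ for all $s,s'$. Two structural facts drive this. The first is superadditivity $K_u\cap K_v\subseteq K_{u+v}$, immediate from $m_{u+v}=m_u*m_v$ and $(m_u*m_v)^{g}=m_u^{g}*m_v^{g}$; in particular $g\in K_{s'}$ forces $g\in K_{ns'}$ for all $n$. The second is a cancellation lemma $K_u\cap K_{s+u}\subseteq K_s$: for each irreducible representation $\pi$ the matrices $A_t:=\pi(m_t)$ form a continuous multiplicative semigroup with $A_0=\mathrm{Id}$, hence $A_t=\exp(tL_\pi)$ is invertible for every $t$; from $g\in K_u\cap K_{s+u}$ one obtains $\pi(g)A_s\pi(g)^{-1}A_u=A_sA_u$, and invertibility of $A_u$ then forces $\pi(g)$ to commute with $A_s$, i.e. $g\in K_s$.

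The main obstacle is that superadditivity and the cancellation lemma, taken together, only propagate invariance within the arithmetic grid $s'\mathbb{N}$, whereas $(3)$ demands all times $s$; so one needs an independent supply of times $u$ with $g\in K_u$. Here I expect to invoke the fine structure of the convolution semigroup. When $e\in{\sf Supp}(m_t)$, the supports are nested and, by the remark following Definition \ref{supportlevy}, $H_{X_t}=H_X$, so $(2)$ already gives ${\sf Supp}(m_{s'})\subseteq K_{s'}$ with $K_{s'}$ the full stabiliser; the complementary deterministic-drift contribution lives in the abelian closed subgroup generated by a one-parameter semigroup, on which conjugation by $H_X$ is trivial, so invariance there is automatic. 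Reassembling these two contributions should yield ${\sf Supp}(m_{s'})\subseteq K_s$ for all $s,s'$, hence $(3)$; I regard isolating and controlling the drift part cleanly as the genuinely delicate point. Once $(2)\Rightarrow(3)$ is secured, the equivalences $(1)\Leftrightarrow(3)\Rightarrow(2)\Rightarrow(3)$ close, and the final clause of the statement is simply the definition of self-invariance (Definition \ref{self-invariance}), $H_X$ being the support of the process.
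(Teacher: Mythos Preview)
Your reformulation in terms of the stabiliser subgroups $K_t=\{g\in G:\ m_t^{g}=m_t\}$ is correct and efficient, and your identification $(1)\Leftrightarrow(3)$ together with $(3)\Rightarrow(2)$ is fine. The cancellation lemma $K_u\cap K_{s+u}\subseteq K_s$ via invertibility of $\pi(m_t)=\exp(tL_\pi)$ is a nice observation, but it turns out to be unnecessary here.

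The genuine gap is in your treatment of $(2)\Rightarrow(3)$. Your argument in the case $e\in{\sf Supp}(m_t)$ is correct (and already contains the right idea: $K_t$ is a closed subgroup, so ${\sf Supp}(m_t)\subseteq K_t$ forces $H_{X_t}\subseteq K_t$, and under that hypothesis $H_{X_t}=H_X$). But the complementary ``drift decomposition'' is not justified: a L\'evy process on a noncommutative compact group does not in general factor as a deterministic one-parameter subgroup times a process with $e$ in its support, and even when some such factorisation exists, ``reassembling'' the two invariances is not automatic because the two pieces need not commute. You flag this yourself as the delicate point, and indeed it does not go through as stated.

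The fix is simpler than the route you propose and requires no case distinction; it is essentially what the paper does. Fix $s>0$. By $(2)$ applied at time $s/n$, ${\sf Supp}(m_{s/n})\subseteq K_{s/n}$; since $K_{s/n}$ is a closed subgroup, this upgrades to $H_{X_{s/n}}\subseteq K_{s/n}$, and your own superadditivity gives $K_{s/n}\subseteq K_{ks/n}$ for every $k\geq 1$. In particular $\overline{{\sf Supp}(m_{s/n})^{k}}={\sf Supp}(m_{ks/n})\subseteq H_{X_{s/n}}\subseteq K_s$, so ${\sf Supp}(m_q)\subseteq K_s$ for every $q\in s\,\mathbb{Q}^{+}$. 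Weak continuity of $t\mapsto m_t$ then gives ${\sf Supp}(m_{s'})\subseteq K_s$ for every $s'>0$, hence $H_X\subseteq K_s$, which is $(3)$. The missing ingredient in your outline was precisely to exploit that each $K_t$ is a closed \emph{subgroup} at the subdivided times $s/n$ (you only used this in the special case $e\in{\sf Supp}(m_t)$), rather than trying to decompose the process.
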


\begin{proof}
We will show that $1$ implies $2$, $3$ implies $1$ and at last $2$ implies $3$. 

Let us assume that $X$ has auto-invariant by conjugation increments. Let $t \in \mathbb{R}^{+}$, then $(X_t, X_{t}^{-1}X_{2t})$ is auto-invariant by conjugation: $X_t$ is invariant by conjugation by the support of $X_{t}^{-1}X_{2t}$. As $X$ is a L\'{e}vy process, $X_{t}^{-1}X_{2t}$ and $X_t$ has the same law, thus the same support: $X_t$ is invariant by conjugation by its own support. 
 
Now, let us show that $3$ implies $1$. Let us assume that $X$ is invariant by conjugation by $H_X$. By definition for any $t \in \mathbb{R}^{+}$, ${\sf Supp}(X_t) \subset H_X$. Let $t_1 < t_2$ and $t_3 < t_4$ be four non negative reals. As the process $X$ is invariant by conjugation by $H_X$, $X_{t_1}^{-1}X_{t_2}$ is invariant by ${\sf Supp}(X_{t_{4}-t_3})$  and thus by ${\sf Supp}(X_{t_{3}}^{-1}X_{t_{4}})$. This implies easily that $X$ has auto-invariant by conjugation increments. 

It remains to prove that $2$ implies $3$. Let us assume that for any $t\in \mathbb{R}^{+}$, $X_t$ is invariant by conjugation by its own support. Let us first remark that, if $U$ and $V$ are two random independent variables in $G$, ${\sf Supp}(UV) = \overline{{\sf Supp}(U) .{\sf Supp}(V)}.$
Besides, if they are both invariant in law by conjugation by a set $S$, $UV$ is also invariant by conjugation by $S$. Moreover, if $U$ is invariant by conjugation by any element of $S$, it is invariant by conjugation by any element of the closure of the semi-group generated by $S$: $\overline{\bigcup_{k=1}^{\infty}S^{k}}$, which, in the case where $G$ is compact, is a group. Let $n$ be a positive integer and let $t$ be a positive real. Using the hypothesis on $X$, $X_{\frac{t}{n}}$ is invariant by conjugation by ${\sf Supp}(X_{\frac{t}{n}})$. Taking $n$ independent copies of $X_{\frac{t}{n}}$ and applying the remarks above, we find that $X_{t}$ is still invariant by conjugation by ${\sf Supp}(X_{\frac{t}{n}})$ and thus also for any integer $k\geq 1$, by $\overline{{\sf Supp}(X_{\frac{t}{n}})^{k}} = {\sf Supp}(X_{\frac{k}{n}t})$, or by the semi-group generated by ${\sf Supp}(X_{\frac{k}{n}t})$, which is nothing but $H_{X_{\frac{k}{n}t}}$. Thus, $X_t$ is invariant by conjugation by: 
$$\overline{\bigcup_{q \in \mathbb{Q}^{+}} H_{X_{qt}}}.$$

Since the laws of $\big(X_t\big)_{t \geq 0}$ form a continuous semi-group of convolution of measures, for any $s \geq 0$, $X_s \in \overline{\bigcup\limits_{q \in \mathbb{Q}^{+}} H_{X_{q.t}}}$ a.s. and thus $H_{X_s} \subset \overline{\bigcup_{q \in \mathbb{Q}^{+}} H_{X_{qt}}},$ hence the equality: $$H_X = \overline{\bigcup_{q \in \mathbb{Q}^{+}} H_{X_{qt}}}.$$
Thus, for any positive real $t$, $X_t$ is invariant by conjugation by $H_X$. Using the fact that $X$ has independent and stationary increments, it implies that the L\'{e}vy process $X$ is self-invariant by conjugation. \end{proof}

\subsection{Generalized B\"ulhmann's theorem}
We can now state the generalization of B\"uhlmann's theorem (Theorem 1.19 of \cite{Kall}) for the braid group. 
\begin{theorem} 
\label{definetti}
Let $X$ be a  $G$-valued stochastically continuous process indexed by $\mathbb{R}_{+}$ with $X_{0}=e$. The following conditions are equivalent: 
\begin{enumerate}
\item X has braidable increments,
\item X is a mixture of self-invariant by conjugation L\'{e}vy processes. 
\end{enumerate}
The $\sigma$-field which makes the rational increments, as defined in Definition \ref{increment}, conditionally i.i.d. is the $\sigma$-field $\mathcal{T} = \cap_{t\in \mathbb{Q}^{+}} \sigma(X_{t}^{-1}X_{s}, s>t) $. 
Besides, the following conditions are equivalent: 
\begin{enumerate}
\item X is invariant by conjugation by $G$ and has braidable and $\mathcal{I}$-independent increments, 
\item there exists a self-invariant by conjugation L\'{e}vy process $Y$, such that the law of $X$ is $UYU^{-1}$, where $U$ is a Haar variable on $G$ independent of $Y$. 
\end{enumerate}
\end{theorem}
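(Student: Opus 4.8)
The plan is to reduce both equivalences to the sequence-level results of Sections \ref{infinite-sec} and \ref{degen-sec}, together with the classical B\"uhlmann theorem for spreadable increments (Theorem $1.19$ of \cite{Kall}), using Proposition \ref{prop-support} as the dictionary between symmetries of the increments and self-invariance of the process.

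For the first equivalence, the implication $(2) \Rightarrow (1)$ is the easy one. If $X$ is, conditionally on some $\sigma$-field, a self-invariant by conjugation L\'evy process, then by Proposition \ref{prop-support} its increments are auto-invariant by conjugation, and for each mesh $n$ the rational increments $(X_{n,i})_i$ form an i.i.d.\ auto-invariant by conjugation sequence; each finite truncation is then purely invariant by braids by Proposition \ref{quasi-inv}, and being i.i.d.\ it is also invariant under $\sigma_\beta$, hence invariant by braids. Integrating over the mixing variable preserves this equality in law. For $(1) \Rightarrow (2)$, I first note that braidable increments are spreadable increments, applying Lemma \ref{braidspread} to each $(X_{n,i})_i$, so B\"uhlmann's theorem exhibits $X$ as a mixture of L\'evy processes, the rational increments being i.i.d.\ conditionally on $\mathcal{T} = \cap_{t \in \mathbb{Q}^+}\sigma(X_t^{-1}X_s, s>t)$. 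It then remains to upgrade \emph{L\'evy} to \emph{self-invariant by conjugation}: conditionally on $\mathcal{T}$ the increment sequence is still braidable (by the same bookkeeping as in the proof of Theorem \ref{definettiseq}), so Theorem \ref{definettiseq} shows that, conditionally and almost surely, the law of $X_{1/n}$ is invariant by conjugation by its own support for every $n$. This is exactly condition $(2)$ of Proposition \ref{prop-support} at rational times, which yields condition $(3)$, self-invariance. Hence conditionally on $\mathcal{T}$ the process $X$ is a self-invariant by conjugation L\'evy process.

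For the second equivalence, $(2) \Rightarrow (1)$ is again direct: conjugating a self-invariant L\'evy process $Y$ by a fixed element produces another self-invariant L\'evy process, so conditionally on $U$ the process $UYU^{-1}$ has braidable increments by the first part, and integrating gives braidable increments; invariance by conjugation by $G$ follows from the left invariance of the Haar law of $U$, since $g^{-1}U$ is again Haar and independent of $Y$; and $\mathcal{I}$-independence of the increments holds because the increments of $UYU^{-1}$ are those of $Y$ conjugated simultaneously by $U$, which is invisible to functions invariant by diagonal conjugation, while the increments of $Y$ are genuinely independent. For $(1) \Rightarrow (2)$ I combine the first equivalence with Theorem \ref{finalsuite}: invariance by conjugation by $G$ gives invariance by diagonal conjugation, and $\mathcal{I}$-independent increments give property $(\mathcal{P})$ for each mesh-$n$ increment sequence, so Theorem \ref{finalsuite} forces the law of $(X_{n,i})_i$ to be $\int_G (m_{1/n}^{\otimes\infty})^{g}\, dg$ for a deterministic $m_{1/n}$ invariant by conjugation by its own support. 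The measures $(m_{1/n})_n$ inherit the convolution-semigroup consistency of the increments of $X$, hence define a single self-invariant by conjugation L\'evy process $Y$, and the above representation means precisely that the rational increments of $X$ at every mesh agree in law with those of $UYU^{-1}$ for a common Haar variable $U$ independent of $Y$; by stochastic continuity and $X_0 = e$ this determines all finite-dimensional laws, so $X \stackrel{d}{=} UYU^{-1}$.

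The main obstacle I anticipate is the final step of $(1)\Rightarrow(2)$: ensuring that the deterministic measures $m_{1/n}$ extracted scale-by-scale from Theorem \ref{finalsuite} are mutually consistent and that the conjugating variable $U$ can be chosen common to all meshes, so that the pieces assemble into one process $UYU^{-1}$ rather than independent conjugations at each resolution. This is where the stochastic continuity of $X$ and the fact that all meshes describe the same process must be used carefully; the conditional-braidability step in the first equivalence, justifying that the braid symmetry survives conditioning on $\mathcal{T}$, is the other point requiring care.
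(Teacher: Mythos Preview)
Your treatment of the first equivalence matches the paper's. For $(1)\Rightarrow(2)$ the paper does not cite B\"uhlmann's theorem as a black box but instead introduces the Polish-space-valued sequences $Y_n^k(t)=X\big(\tfrac{k-1}{n}\big)^{-1}X\big(t+\tfrac{k-1}{n}\big)$, $t\in\mathbb{Q}\cap[0,1/n]$, shows them spreadable, and applies de~Finetti--Ryll-Nardzewski directly; it then upgrades to self-invariance exactly as you indicate, via the argument of Theorem~\ref{definettiseq} and Proposition~\ref{prop-support}. This is only a repackaging of what you wrote.

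For the second equivalence the paper says only that it ``is deduced from Theorem~\ref{finalsuite} applied to the increments of $X$'', so your outline is more explicit than what the paper provides. The obstacle you flag is genuine if you insist on assembling $Y$ from the scale-by-scale measures $m_{1/n}$: Theorem~\ref{finalsuite} only pins down $m_{1/n}$ through its restriction to the invariant $\sigma$-field, so consistent choices across meshes are not automatic. The clean bypass is to exploit the first equivalence more fully rather than rebuilding $Y$ from scratch. You already know that, conditionally on $\mathcal{T}$, the process $X$ is a self-invariant L\'evy process; call its (random) law $\eta$. Applying the \emph{proof} of Theorem~\ref{finalsuite} to the mesh-$n$ increments shows that $\eta_{1/n}^{\otimes k}(f)$ is almost surely constant for every $k$ and every continuous diagonally $G$-invariant $f$; since every finite-rational-dimensional distribution of a L\'evy process factors through products of increments at a common mesh, this forces the restriction of $\eta$ to the invariant $\sigma$-field on $G^{\mathbb{Q}^+}$ to be almost surely deterministic. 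Now take $Y$ to be any realization $\eta(\omega)$ on that almost-sure event: it is a self-invariant L\'evy process, the law of $X$ agrees with that of $Y$ on $\mathcal{I}$, and the $G$-conjugation invariance of $X$ then gives that the full law of $X$ is $\int_G Y^g\,dg$, i.e.\ the law of $UYU^{-1}$. No gluing of the $m_{1/n}$ and no separate choice of $U$ at each scale is required.
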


\begin{proof}
Let us consider the first part of the theorem. Let us show that the condition $2$ implies the first one: it is enough to show that any self-invariant by conjugation L\'{e}vy process has braidable increments. Let $Z$ be a self-invariant by conjugation L\'{e}vy process. By Proposition \ref{prop-support}, for any $n \in \mathbb{N}^{*}\cup (\mathbb{N}^{*})^{-1}$, the sequence of increments $(Z_{n,j})_{j}$, defined in Definition \ref{increment}, is a sequence of i.i.d. random variables which are invariant by conjugation by their own support. Hence, by Theorem \ref{definettiseq}, it is braidable: the process $Z$ has braidable increments. 

Now, let us consider $X$ a $G$-valued stochastically continuous process indexed by $\mathbb{R}_{+}$ with $X_{0}=e$. Let us suppose that $X$ has braidable increments. Following the proof of Theorem $1.19$ of \cite{Kall}, we introduce the processes: 
\begin{align*}
Y_{n}^{k}(t) = X\left(\frac{k-1}{n}\right)^{-1}X\left(t+\frac{k-1}{n}\right),  \ \ \ t \in \mathbb{Q} \cap [0,n^{-1}],\ k \in \mathbb{N}^{*},\ n \in \mathbb{N}^{*}
\end{align*}
Let $n$ be a positive integer. Using the same arguments used in Lemma \ref{braidspread}, notice that the sequence $(Y_{n}^{k})_{k \in \mathbb{N}^{*}}$ is spreadable. Applying to these sequences the deFinetti-Ryll-Nardzewski's theorem (Theorem $1.1$ and Corollary $1.6$ in \cite{Kall}) which is valid for sequences in Polish spaces, we conclude that for any $n \in \mathbb{N}^{*}$, conditionally to the tail $\sigma$-field $\mathcal{T}_n$, the sequence $(Y_{n}^{k})_{k \in \mathbb{N}^{*}}$ is a sequence of i.i.d. random variables. We considered $ t \in \mathbb{Q}^{+} \cap [0,n^{-1}]$ in the definition of $Y_n^{k}$ as the product of a countable family of Polish spaces is still a Polish space. 
The $\sigma$-field $\mathcal{T}_n$ we are conditioning on is a.s. independent of $n$: we call it $\mathcal{T}$. Given $\mathcal{T}$, $X$ has conditionally stationary independent (rational) increments. For any $t\in \mathbb{Q}^{+}$, let  $m_t$ be the law of $X_{t}$ conditionally to $\mathcal{T}$: for any $t \in \mathbb{Q}^{+}$ and any $s\in \mathbb{Q}^{+}$, almost surely $m_t *m_s = m_{t+s}$. Besides, using the stochastic continuity of $X$ and the fact that $X_0 = e$, one has that almost surely $(m_{t})_{t \in \mathbb{Q}^{+}}$ is uniformly continuous. We can extend the semi-group $(m_{t})_{t \in \mathbb{Q}^{+}}$ in order to get a semi-group $(m_t)_{t \in \mathbb{R}^{+}}$: by stochastic continuity the process $X$ is then a mixture of L\'{e}vy processes. Let us consider a positive rational number $q$. Using a similar argument as in the proof of Theorem \ref{definettiseq}, applied to the sequence $(X_{nq})_{n \in \mathbb{N}}$, conditionally on $\mathcal{T}$, the random variable $X_{q}$ is invariant by conjugation by its own support. Using a continuity argument allows us to extend the result for any $q \in \mathbb{R}^{+}$. 
%Ca utilise le fait que si $m_n \to m$ alors $Supp(m) = \cap_{N} \overline{\cup_{n\geq N} {\sf Supp}(m_n)}$.
The result follows from Proposition \ref{prop-support}:  $X$ is a mixture of self-invariant by conjugation L\'{e}vy process. 

The second part of the theorem is deduced from Theorem \ref{finalsuite} applied to the increments of $X$. 
\end{proof}

\subsection{Degeneracy of the mixture}
\label{deg-proc}
In this subsection, Sections \ref{sec:non-degeneracy} and \ref{sec:general} are generalized in the setting of processes. The proofs will be omitted since the results follow directly from their counterpart in the setting of sequences and from Theorem \ref{definetti}. Recall Definition \ref{defclassification}, where the notions of pure/mixed, degenerate/non-degenerate L\'{e}vy processes were defined. Let us state the consequence of Proposition \ref{unicity1}. 

\begin{proposition}
\label{unicity1-proc}
Let $G$ be a finite group. Let $X$ be a $G$-valued stochastically continuous process invariant by conjugation by $G$ such that $X_0=e$ and which has braidable and $\mathcal{I}$-independent increments. The following assertions are equivalent: 
\begin{enumerate}
\item $X$ is a pure non-degenerate L\'{e}vy process, 
\item there exists $t\in \mathbb{R}^{+}$ such that ${\sf Supp}(X_t) = G$. 
\end{enumerate}
If one of the two conditions holds then for any $t\in \mathbb{R}^{+}$, ${\sf Supp}(X_t) = G$. 
\end{proposition}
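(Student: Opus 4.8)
The plan is to apply Proposition \ref{unicity1-proc}, whose hypotheses match those of this statement almost verbatim. Since $X$ is invariant by conjugation by $G$, has braidable and $\mathcal{I}$-independent increments, and satisfies $X_0 = e$, the increments form a sequence to which the finite-group results of Section \ref{sec:non-degeneracy} apply. The equivalence of the two numbered assertions is then essentially a restatement of Proposition \ref{unicity1-proc}, so the real content lies in the final sentence: if one of the conditions holds, then ${\sf Supp}(X_t) = G$ for \emph{every} $t \in \mathbb{R}^{+}$, not merely for some $t$.

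First I would observe that if $X$ is a pure non-degenerate L\'{e}vy process, then by definition $H_X = G$, and by Proposition \ref{prop-support} the process is invariant by conjugation by $H_X = G$. The key input is the remark following Definition \ref{supportlevy}: since $X_0 = e \in {\sf Supp}(X_t)$ for all $t$ (which follows from stochastic continuity at $0$ together with $X_0=e$), the group $H_{X_t}$ is independent of $t>0$ and equals $H_X = G$. Because the law of $X_t$ is invariant by conjugation by $G$ and $H_{X_t}=G$, It\^{o}-Kawada's theorem (Theorem \ref{Itokawada}) or a direct support argument shows that the support must fill out $G$; more directly, invariance by conjugation by $G$ forces ${\sf Supp}(X_t)$ to be a union of conjugacy classes whose generated group is $G$, and the semigroup relation ${\sf Supp}(X_{s+t}) = \overline{{\sf Supp}(X_s){\sf Supp}(X_t)}$ propagates fullness.

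The main obstacle, and the step I would treat most carefully, is showing that ${\sf Supp}(X_t)=G$ holds simultaneously for all $t$ rather than for a single $t$. The natural route is to first invoke the already-established remark that $H_{X_t}$ is constant in $t>0$ and equals $H_X$. Under condition (2), there exists $t_0$ with ${\sf Supp}(X_{t_0})=G$, so in particular $H_{X_{t_0}} = G$, hence $H_X = G$ and thus $X$ is non-degenerate; combined with the standing conjugation-invariance by $G$, this gives that $X$ is pure non-degenerate, which is condition (1). Conversely, from condition (1) one has $H_X = G$, and invariance by conjugation by $G$ means each $X_t$ has conjugation-invariant law with $H_{X_t}=G$; since the support is then a conjugation-invariant closed set generating $G$ and containing $e$, one argues as in the proof of Proposition \ref{unicity1} that the monotone increasing supports ${\sf Supp}(X_t)$ stabilize at $G$, and in fact equal $G$ for every $t>0$.

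To make the last point rigorous I would reduce to the finite-group facts recalled before Theorem \ref{Jordan}: for a measure $m$ on the finite group $G$ with $e\in{\sf Supp}(m)$, the supports ${\sf Supp}(m^{*k})$ increase and stabilize to $H_m$. Applying this to $m_t$, the law of $X_t$, and using that the conjugation-invariant law with $H_{X_t}=G$ means $m_t$ is already invariant by conjugation by $G$ (not merely by its support), one sees that ${\sf Supp}(m_t)$ is conjugation-stable; since it generates $G$ and $G$ is covered by the conjugates of the subgroup it would otherwise generate, Jordan's theorem (Theorem \ref{Jordan}) forces ${\sf Supp}(m_t)=G$ for each $t>0$. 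Thus the ``for any $t$'' conclusion follows uniformly, completing the equivalence.
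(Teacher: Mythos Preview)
There is a genuine gap in your direction (2) $\Rightarrow$ (1). You write that $H_X=G$, ``combined with the standing conjugation-invariance by $G$, gives that $X$ is pure non-degenerate.'' But conjugation-invariance of $X$ by $G$ is a \emph{hypothesis} of the proposition; it says nothing about whether $X$ is a L\'{e}vy process at all. By Theorem~\ref{definetti} one only knows that $X$ has the law of $UYU^{-1}$ for some self-invariant L\'{e}vy process $Y$, and condition (1) asks precisely that this mixture be trivial, i.e.\ that $Y$ itself be $G$-conjugation-invariant. Showing $H_X=G$ does not yield this; what is needed is $H_Y=G$, and that is exactly where Jordan's theorem enters. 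You have misplaced the ``real content'': it is (2) $\Rightarrow$ (1), not the final sentence.

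The paper's route (outlined before Lemma~\ref{support}) is to replace the support fact used in the proof of Proposition~\ref{unicity1} by Lemma~\ref{support}: for a L\'{e}vy process on a finite group, ${\sf Supp}(Y_t)=H_Y$ for every $t>0$. Then ${\sf Supp}(X_t)=\bigcup_{g\in G} g^{-1}H_Y\,g$, and if this equals $G$, Jordan's theorem (Theorem~\ref{Jordan}) forces $H_Y=G$; self-invariance of $Y$ then makes $Y$ invariant by conjugation by $G$, so $X=Y$ in law is a pure non-degenerate L\'{e}vy process. The same Lemma~\ref{support} also gives the final sentence in one line (${\sf Supp}(X_t)=H_X=G$ for all $t$); your attempt to derive it via Jordan fails because ${\sf Supp}(m_t)$ is conjugation-stable but not a subgroup, so Theorem~\ref{Jordan} does not apply to it. Finally, your opening sentence invokes ``Proposition~\ref{unicity1-proc}'' to prove itself; presumably you meant Proposition~\ref{unicity1}.
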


Let us remark that, in order to prove the last proposition, we have to replace the property of ${\sf Supp}(m^{*k})$ we used in the proof of Proposition \ref{unicity1} by the following straightforward lemma.

\begin{lemme}
\label{support}
Let $G$ be a finite group and let $\big(Y_t\big)_{t \geq 0}$ be a L\'{e}vy process on $G$. For any real $t \geq 0$, ${\sf Supp}(Y_t)= H_{Y}$. \end{lemme}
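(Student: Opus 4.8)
The plan is to work directly with the supports $S_t := {\sf Supp}(Y_t)$ and to exploit the finiteness of $G$, which lets us dispense with all the closures in Definition \ref{supportlevy}: for a finite group a nonempty subsemigroup is automatically a subgroup, so $H_{Y_t}=\langle S_t\rangle$ and $H_Y=\langle\bigcup_{t\ge 0}S_t\rangle$. The two structural inputs are the semigroup property of supports and the presence of the neutral element. Since $Y$ is a L\'evy process, $Y_{s+t}=Y_s\,(Y_s^{-1}Y_{s+t})$ with $Y_s^{-1}Y_{s+t}$ independent of $Y_s$ and distributed as $Y_t$; applying the support formula ${\sf Supp}(UV)=\overline{{\sf Supp}(U){\sf Supp}(V)}$ recalled in the proof of Proposition \ref{prop-support} gives $S_{s+t}=S_sS_t$ (no closure needed), and by iteration $S_t=S_{t/n}^{\,n}$ for every $n\ge 1$. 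Moreover, stochastic continuity of $Y$ together with $Y_0=e$ forces $\mathbb{P}(Y_s=e)\to 1$ as $s\downarrow 0$, hence $e\in S_s$ for all small $s$; writing $S_t=S_{t/n}^{\,n}$ with $t/n$ small shows $e\in S_t$ for every $t\ge 0$. In particular $S_s\subseteq S_{s+t}$, so $t\mapsto S_t$ is non-decreasing.

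The heart of the argument is to upgrade ``$S_t$ generates a subgroup'' to ``$S_t$ \emph{is} a subgroup''. Fix $t>0$ and set $a_n=S_{t/n}$. Because $t/n$ decreases with $n$, the family $(a_n)_{n\ge 1}$ is non-increasing, and since $G$ is finite it stabilizes: there is $n_0$ with $a_n=A$ for all $n\ge n_0$. For even $n\ge 2n_0$ one has $A^2=a_n^2=S_{2t/n}=S_{t/(n/2)}=a_{n/2}=A$, so $A$ is a subsemigroup containing $e$, hence a subgroup. Consequently $S_t=a_n^{\,n}=A^{\,n}=A$ for $n\ge n_0$, i.e. $S_t$ is itself a subgroup.

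It remains to see that $S_t$ is independent of $t$ and equals $H_Y$. Knowing that $S_{t/2}$ is a subgroup, we get $S_t=S_{t/2}^{\,2}=S_{t/2}$, and iterating $S_t=S_{t/2^k}$ for all $k$; combined with monotonicity this forces $t\mapsto S_t$ to be constant on $(0,\infty)$, say $S_t\equiv S$. Then $S=\bigcup_{t>0}S_t$ is a subgroup that generates $H_Y$, so $H_Y=\langle S\rangle=S$, giving $S_t=H_Y$ for every $t>0$, which is the content used in the sequel. The main obstacle is precisely the stabilization step for $(a_n)$: one must avoid the tempting but false group-theoretic shortcut ``a generating set containing $e$ equals the group,'' and instead observe that $S_{t/n}$ itself settles down to a fixed idempotent set $A=A^2$, from which subgroup-ness of each $S_t$ then follows.
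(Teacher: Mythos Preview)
Your argument is correct and complete for $t>0$. The paper does not actually supply a proof of this lemma: it is introduced as ``the following straightforward lemma'' right after Proposition~\ref{unicity1-proc} and left to the reader. The one-line argument the author presumably has in mind is the compound-Poisson description of a L\'evy process on a finite group: the law of $Y_t$ is $e^{-ct}\sum_{k\ge 0}\frac{(ct)^k}{k!}\,\mu^{*k}$ for some rate $c\ge 0$ and jump measure $\mu$, so for $t>0$ one has ${\sf Supp}(Y_t)=\bigcup_{k\ge 0}{\sf Supp}(\mu^{*k})$, which is the subsemigroup generated by ${\sf Supp}(\mu)\cup\{e\}$, independent of $t$; in a finite group this is a subgroup and hence equals $H_Y$.

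Your route is genuinely different: you avoid any structure theory and work purely from the axioms, using the semigroup identity $S_{s+t}=S_sS_t$, the monotonicity $e\in S_t$, and a stabilization argument on $a_n=S_{t/n}$ to show each $S_t$ is itself a subgroup. This is more elementary in the sense that it does not import the L\'evy--Khintchine/Hunt description, whereas the compound-Poisson argument is shorter but assumes that machinery. Both yield the result for $t>0$; as you correctly observe, the case $t=0$ in the statement is degenerate (since $Y_0=e$ gives ${\sf Supp}(Y_0)=\{e\}$), and only the $t>0$ case is invoked in Proposition~\ref{unicity1-proc}.
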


Let us state the consequence of Proposition \ref{unicity2}. From now on, let $G$ be a compact Lie group.

\begin{proposition}
\label{unicity2-proc}
 Let $X$ be a $G$-valued stochastically continuous process invariant by conjugation by $G$ such that $X_0=e$ and which has braidable and $\mathcal{I}$-independent increments.  Let us suppose that $e$ is in ${\sf Supp}(X_t)$ for any $t \in \mathbb{R}^{+}$. The following conditions are equivalent: 
\begin{enumerate}
\item the process $X$ is a pure non-degenerate L\'{e}vy process, 
\item the random variable $X_t$ converges in law to a Haar random variable on $G$ when $t$ goes to infinity. 
\end{enumerate}
\end{proposition}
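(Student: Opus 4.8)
The plan is to deduce Proposition \ref{unicity2-proc} directly from its counterpart for infinite sequences, namely Proposition \ref{unicity2}, together with B\"uhlmann's theorem for the braid group, Theorem \ref{definetti}. This is exactly the reduction strategy announced at the start of Section \ref{deg-proc}, so the proof is meant to be short. First I would fix the rational increments of $X$ at scale $n=1$, that is, the sequence $\xi = \big(\xi_i\big)_{i \in \mathbb{N}^{*}}$ with $\xi_i = X_{i-1}^{-1}X_i = X_{1,i}$. Since $X$ has braidable increments by hypothesis, $\xi$ is braidable; since $X$ is invariant by conjugation by $G$, $\xi$ is invariant by diagonal conjugation; and since $X$ has $\mathcal{I}$-independent increments, $\xi$ satisfies the property $(\mathcal{P})$ of Definition \ref{proprietepdesincrements}. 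Moreover $e \in {\sf Supp}(X_1) = {\sf Supp}(\xi_1)$ by the standing assumption. Thus $\xi$ verifies all the hypotheses of Proposition \ref{unicity2}.

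Next I would translate the two conditions of Proposition \ref{unicity2-proc} into the two conditions of Proposition \ref{unicity2} applied to $\xi$. The key observation is the telescoping identity
\begin{align*}
\prod_{k=1}^{n}\xi_k = X_n,
\end{align*}
where the product is taken in the order dictated by the group law so that it collapses to $X_{n}$ (here one must be careful with the multiplication convention, but $X_0 = e$ and the definition of increments make this immediate). Hence condition $2$ of Proposition \ref{unicity2-proc}, the convergence in law of $X_t$ to a Haar variable as $t \to \infty$, is implied by --- and, along the integer subsequence $t=n$, equivalent to --- the convergence in law of $\prod_{k=1}^{n}\xi_k$ to a Haar variable, which is precisely condition $2$ of Proposition \ref{unicity2}. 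The direction I care about (condition $2$ implies condition $1$) therefore gives that $\xi$ is an i.i.d. sequence whose support generates $G$; by Theorem \ref{definetti} this means the conditioning $\sigma$-field $\mathcal{T}$ is trivial in law and $X$ is a genuine (non-mixed) L\'evy process with $H_X = G$, i.e. a pure non-degenerate L\'evy process. For the reverse implication, if $X$ is a pure non-degenerate L\'evy process then its marginals $X_t$ are aperiodic (because $e \in {\sf Supp}(X_t)$, cf. Remark \ref{aperiodique}) and non-degenerate, so It\^o-Kawada's theorem, Theorem \ref{Itokawada}, yields the convergence of $X_t$ to the Haar measure.

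The one point requiring genuine care --- and the main obstacle --- is handling the passage from the integer-indexed subsequence $(X_n)_{n \in \mathbb{N}}$ back to the full real parameter, and conversely checking that the braidability and $\mathcal{I}$-independence of the \emph{process} transfer to the increment sequence at the single scale $n=1$. The latter is built into Definition \ref{increment} and the definition of $\mathcal{I}$-independent increments, so it is immediate; the former uses the stochastic continuity of $X$ together with the fact that $X$ is already known, via Theorem \ref{definetti}, to be a mixture of self-invariant by conjugation L\'evy processes, so that convergence of $X_t$ in law along $t \to \infty$ is controlled by the convergence of the marginals of the constituent L\'evy semi-groups and does not depend on whether $t$ runs through integers or reals. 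In short I would write: ``The equivalence follows by applying Proposition \ref{unicity2} to the sequence of increments $\big(X_{i-1}^{-1}X_i\big)_{i \in \mathbb{N}^{*}}$, which is braidable, invariant by diagonal conjugation and satisfies $(\mathcal{P})$ by hypothesis, upon noting that $\prod_{k=1}^{n}\xi_k = X_n$ and invoking the stochastic continuity of $X$ and Theorem \ref{definetti} to pass between the discrete and continuous parameter.''
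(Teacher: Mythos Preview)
Your approach is correct and matches the paper's own: the paper omits the proof entirely, saying only that it ``follows directly from its counterpart in the setting of sequences and from Theorem \ref{definetti},'' which is exactly the reduction you carry out.

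One point deserves tightening. You apply Proposition \ref{unicity2} only to the scale-$1$ sequence $\xi_i = X_{i-1}^{-1}X_i$ and then assert that ``by Theorem \ref{definetti} the conditioning $\sigma$-field $\mathcal{T}$ is trivial.'' But the tail $\sigma$-field $\mathcal{T}$ of Theorem \ref{definetti} is built from increments at \emph{all} rational scales, not just scale $1$; knowing that $(\xi_i)_i$ is i.i.d.\ forces the conditional law $m_1$ to be deterministic, yet the random semigroup $(m_t)$ could a priori still be random at $t=1/2$ (convolution square roots are not unique). The clean fix is to run the same argument at every scale $n\in\mathbb{N}^*$: the sequence $(X_{n,j})_j$ is braidable, diagonally conjugation-invariant, satisfies $(\mathcal{P})$, has $e\in{\sf Supp}(X_{n,1})$, and $\prod_{k=1}^{N}X_{n,k}=X_{N/n}\to$ Haar, so Proposition \ref{unicity2} applies and yields that the increments at each scale $1/n$ are i.i.d.\ with a $G$-conjugation-invariant law whose support generates $G$. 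This gives independent stationary $G$-invariant increments on all rational times, and stochastic continuity extends it to $\mathbb{R}^+$; hence $X$ is a pure non-degenerate L\'evy process, as claimed.
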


In order to conclude this section, it remains to state the consequence of Theorem~\ref{unicity3}. 

\begin{theorem}
\label{unicity3-proc}
Let $X$ be a $G$-valued stochastically continuous process invariant by conjugation by $G$ such that $X_0=e$ and which has braidable and $\mathcal{I}$-independent increments. The following assertions are equivalent: 
\begin{enumerate}
\item the process $X$ is a pure L\'{e}vy process, 
\item there exists a $G$-valued L\'{e}vy process $Z$ such that for any $t \in \mathbb{R}^{+}$, $X_t$ has the same law as $Z_t$. 
\end{enumerate}
\end{theorem}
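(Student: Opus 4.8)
The plan is to transfer Theorem~\ref{unicity3} from sequences to the process setting by applying it to the rational increments of $X$, in the same way that Theorem~\ref{definetti} was deduced from its sequential counterpart. The implication $(1)\Rightarrow(2)$ is immediate: if $X$ is a pure L\'evy process then it is itself a $G$-valued L\'evy process, so one may simply take $Z=X$, and then $X_t$ trivially has the same law as $Z_t$ for every $t$. All the work lies in $(2)\Rightarrow(1)$.

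For the converse, I would fix an integer $N\in\mathbb{N}^{*}$ and consider the sequence of increments $\xi^{N}=(X_{N,j})_{j\geq 1}$, where $X_{N,j}=X_{(j-1)/N}^{-1}X_{j/N}$ as in Definition~\ref{increment}. The first step is to check that $\xi^{N}$ satisfies the three hypotheses of Theorem~\ref{unicity3}. Braidability of $\xi^{N}$ is exactly the assumption that $X$ has braidable increments; invariance in law by diagonal conjugation holds because conjugating the whole sequence $\xi^{N}$ by $g\in G$ produces the increment sequence of the conjugated process $(g^{-1}X_t g)_{t}$, which has the same law as $X$ since $X$ is invariant by conjugation by $G$; and property $(\mathcal{P})$ is precisely the $\mathcal{I}$-independence of the increments specialized to the grid $t_n=n/N$.

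Next I would identify the products appearing in Theorem~\ref{unicity3}. Because $X_0=e$, the product telescopes: $\prod_{k=1}^{m}X_{N,k}=X_{m/N}$. By hypothesis $(2)$ there is a L\'evy process $Z$ with $X_{t}$ distributed as $Z_{t}$ for every $t$; putting $\nu=\mathrm{law}(Z_{1/N})$ and using that $Z$ has stationary independent increments gives that the law of $\prod_{k=1}^{m}X_{N,k}=X_{m/N}$ equals $\nu^{*m}$ for all $m$. Thus condition~$(2)$ of Theorem~\ref{unicity3} is satisfied by $\xi^{N}$, and the theorem yields that $\xi^{N}$ is a sequence of i.i.d.\ random variables whose common law $m_{0}^{N}$ is invariant by conjugation by $G$. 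This is the crucial point: it collapses the a priori mixture produced by the de~Finetti--Ryll-Nardzewski argument (Theorem~\ref{definettiseq}), and it is exactly here that the quasi-invariance machinery (Proposition~\ref{onemargprop} together with Theorem~\ref{onemarg}) embedded in the proof of Theorem~\ref{unicity3} does the real work.

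Finally I would reassemble these grid-wise statements into a genuine pure L\'evy process. The conclusion holds for every $N$, so for any rational times $0\leq t_0<\dots<t_k$ one chooses a common grid $1/N$ and writes each increment $X_{t_{i-1}}^{-1}X_{t_i}$ as a product of consecutive i.i.d.\ grid increments; this shows that these increments are independent and that the law of $X_{s}^{-1}X_{t}$ depends only on $t-s$. Stochastic continuity of $X$ then extends the property from rational to real times, so $X$ is a L\'evy process, and since each $m_{0}^{N}$ is invariant by conjugation by $G$ every marginal $X_t$ has conjugation-invariant law; hence $X$ is a pure L\'evy process in the sense of Definition~\ref{defclassification}. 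I expect the reassembly step to be the main obstacle: Theorem~\ref{unicity3} only controls one time grid at a time, so the care is in verifying the consistency of the i.i.d.\ conclusions across grids (automatic here, since all marginals coincide with those of $Z$) and in promoting the rational-grid L\'evy structure to the full process by stochastic continuity.
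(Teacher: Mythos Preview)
Your proposal is correct and follows essentially the approach the paper indicates: the paper omits the proof, stating only that it ``follows directly from [its] counterpart in the setting of sequences and from Theorem~\ref{definetti}.'' You have carried out exactly this reduction by applying Theorem~\ref{unicity3} to the grid increments and then reassembling.

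One small organizational remark: the paper's mention of Theorem~\ref{definetti} suggests a slight shortcut for your final reassembly step. Since $X$ already satisfies the hypotheses of the second part of Theorem~\ref{definetti}, one knows from the outset that $X$ has the law of $UYU^{-1}$ for some self-invariant L\'evy process $Y$ and an independent Haar variable $U$. The sequential argument (your application of Theorem~\ref{unicity3}) then shows that each $m_{0}^{N}$, which is the law of $Y_{1/N}$, is invariant by conjugation by $G$; stochastic continuity transports this to all $t$, so $Y$ is pure and hence $X=UYU^{-1}$ has the same law as $Y$. This avoids rebuilding the L\'evy property from scratch across grids, but your direct reassembly is equally valid and not materially longer.
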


\begin{remarque}
If the increments of $X$ are exchangeable then the theorem is no more valid. Let us consider the L\'{e}vy process $Y$ (respectively $Z$) associated with $(\mu_{t})_{t \geq 0}$ (respectively $(\eta_t)_{t \geq 0}$) defined in Lemma \ref{contre-ex}. Let $U$ be a random Haar variable which is independent from $Y$ and $Z$ and let $X = UYU^{-1}$. The process $X$ is stochastically continuous, invariant by conjugation by $G$, has exchangeable and $\mathcal{I}$-independent increments and for any $t \geq 0$, $X_t$ has the same law as $Z_t$. Yet, the process $X$ is not a pure LŽvy process since $Y$ is not invariant by conjugation by $G$.  
\end{remarque}

\chapter[Characterization of Planar Markovian Holonomy Fields]{Characterization of Stochastically Continuous in Law Weak Discrete Planar Markovian Holonomy Fields}
\label{carsec}

The characterization of stochastically continuous in law weak discrete planar Markovian holonomy fields is given by the following theorem whose proof will be the main goal of this section. 

\begin{theorem}
\label{caracterisation}
Let $\left(\mathbb{E}^{\mathbb{G}}_{vol}\right)_{\mathbb{G}, vol}$ be a $G$-valued stochastically continuous in law weak discrete planar Markovian holonomy field. There exists a $G$-valued L\'{e}vy process, $\left(Y_t\right)_{t \geq 0}$, self-invariant by conjugation such that $\left(\mathbb{E}^{\mathbb{G}}_{vol}\right)_{\mathbb{G}, vol}$ is the weak discrete planar Yang-Mills field associated with $\left(Y_t\right)_{t \geq 0}$. This means that for any measure of area $vol$ and any graph $\mathbb{G} \in \mathcal{G}\left({\sf Aff}(\mathbb{R}^{2})\right)$,  $\mathbb{E}^{\mathbb{G}}_{vol}$ is equal to  $\mathbb{E}^{Y,\mathbb{G}}_{vol},$ where $\left(\mathbb{E}^{Y,\mathbb{G}}_{vol}\right)_{\mathbb{G}, vol}$ is the discrete planar Yang-Mills field associated with $\left(Y_t\right)_{t \geq 0}$.

If $G$ is Abelian, the L\'{e}vy process is unique and is characterized by the fact that for any simple loop $l$ in ${\sf Aff}(\mathbb{R}^{2})$, under $\mathbb{E}^{\mathbb{G}(l)}_{vol}$ (see Example \ref{G(l)}), $h(l)$ has the same law as $Y_{vol({\sf Int(}l))}$. 
\end{theorem}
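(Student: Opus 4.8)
The plan is to reduce the statement to the de-Finetti-type theorems for braidable sequences and processes established earlier (Theorems \ref{definetti}, \ref{unicity3}) by extracting from the holonomy field a well-chosen family of random variables indexed by area. The central observation is that the family $(L_{n,0})_{n \in \mathbb{N}}$ of nested facial lassos in the graph $\mathbb{N}^{2} \cap (\mathbb{R}^{+} \times [0,1])$ (Definition \ref{base}) provides exactly such a family: first I would fix the measure of area $vol = dx$ (by $\mathbf{wP_1}$/$\mathbf{wDP_1}$ and Proposition \ref{MoserDac}-type area transport, the general case reduces to the Lebesgue case since areas of faces are the only invariant data), and consider the sequence of $G$-valued random variables $X_n = h(L_{1,0} L_{2,0} \cdots L_{n,0})$, or equivalently work directly with the increments $\xi_n = h(L_{n,0})$ under $\mathbb{E}^{\mathbb{G}}_{vol}$.

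First I would verify that the sequence $(\xi_n)_{n \in \mathbb{N}}$ satisfies the three hypotheses of Theorem \ref{finalsuite} and Theorem \ref{unicity3}. Braidability comes from Axiom $\mathbf{wDP_1}$: an area-preserving piecewise diffeomorphism realizing a half-twist of two adjacent cells of $\mathbb{N}^{2}$ (using the identification of the braid group with the diffeotopy group of the punctured disk, equation (\ref{identification})) permutes the facial lassos while preserving all face areas, so the induced map on $\mathcal{M}ult$ combined with Proposition \ref{tresseetgene} and Lemma \ref{actionmieux} yields the braid-invariance in law of $(\xi_n)$. Invariance by diagonal conjugation follows from gauge-invariance of $\mathbb{E}^{\mathbb{G}}_{vol}$ (Remark \ref{diagconj}, Remark \ref{changementdepoint}). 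The property $(\mathcal{P})$, i.e. the $\mathcal{I}$-independence of $(\xi_k)_{k \le n}$ and $(\xi_k)_{k>n}$, is exactly the content of Axiom $\mathbf{wDP_2}$ applied to two simple loops enclosing the first $n$ cells and the remaining cells respectively, since these have disjoint interiors. By Theorem \ref{definetti} and Theorem \ref{finalsuite}, there is then a probability measure $m_0$, invariant by conjugation by its own support, governing the mixture, and this corresponds to the one-step distribution of a self-invariant by conjugation L\'evy process $(Y_t)_{t \geq 0}$, where $Y_{\text{area}}$ has the law dictated by the single-cell marginal $h(L_{n,0})$.

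Next I would promote this to a statement about the process indexed by area rather than by the integer lattice: using the subdivision arguments already used in the compatibility step of Proposition \ref{creaYMpure} (the semigroup property obtained by merging adjacent faces via $\mathbf{wDP_1}$ and the multiplicativity of $h$), together with stochastic continuity in law, the marginals $h(L_{\text{cell of area } t})$ form a continuous convolution semigroup, hence define a genuine L\'evy process $(Y_t)_{t\ge 0}$. Theorem \ref{definetti} (second part, or via self-invariance) guarantees this L\'evy process is self-invariant by conjugation. Then I would invoke the characterization via facial lassos: for an arbitrary graph $\mathbb{G} \in \mathcal{G}({\sf Aff}(\mathbb{R}^{2}))$, Proposition \ref{unic} reduces the equality $\mathbb{E}^{\mathbb{G}}_{vol} = \mathbb{E}^{Y,\mathbb{G}}_{vol}$ to checking that the law of $(h(\l_{c_F,T}))_{F\in\mathbb{F}^{b}}$ agrees; but braidability plus $\mathbf{wDP_2}$ plus the area-marginal identification force this joint law to be $\int_G (\otimes_F m_{vol(F)})^g dg$, which is precisely the definition of the discrete planar Yang-Mills field (Theorem \ref{defplanarHFgeneral}). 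For the Abelian case, conjugation acts trivially so $m_0$ is literally the law of $Y_{vol({\sf Int}(l))}$ and uniqueness is immediate from the fact that a L\'evy process on an Abelian group is determined by its one-dimensional marginals.

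I expect the main obstacle to be the passage from the fixed combinatorial lattice $\mathbb{N}^{2}$ to \emph{all} finite planar graphs with \emph{arbitrary} face areas in a uniform way, i.e. showing that the single measure $m_0$ extracted from the $\mathbb{N}^{2}$-example governs every graph simultaneously with the correct area-dependence. This requires carefully threading the area-preserving homeomorphism invariance ($\mathbf{wDP_1}$, whose subtlety is flagged in Remark \ref{rem:erreur}) through the merging-of-faces inductions to guarantee that the convolution semigroup structure and the self-invariance are genuinely those of one fixed L\'evy process, and that the mixing measure does not secretly depend on the chosen graph; the $\mathcal{I}$-independence versus genuine independence distinction (Remark \ref{I-indep-et-indep-normale}) must be handled precisely whenever loops share base points, exactly as in the proof of Proposition \ref{creaYMpure2}.
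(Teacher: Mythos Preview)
Your proposal is correct and follows essentially the same route as the paper: extract the process $Z_t=h(L_0^t)$ (equivalently the increments $h(L_{n,0})$), verify braidability via half-twist area-preserving homeomorphisms (Axiom $\mathbf{wDP_1}$), $\mathcal{I}$-independence via $\mathbf{wDP_2}$, diagonal-conjugation invariance via gauge invariance, and then apply Theorem~\ref{definetti} to obtain the self-invariant L\'evy process. The ``main obstacle'' you flag---passing from the $\mathbb{N}^2$ lattice back to all piecewise-affine graphs---is exactly what the paper isolates as the standalone Theorem~\ref{caracterisation0} (proved via the generic-graph approximation of Lemma~\ref{gene} and the $\mathbb{N}^2$ embedding of Corollary~\ref{injectiondansn}); this, rather than Proposition~\ref{unic} (which concerns continuous fields on $P(\mathbb{R}^2)$), is the correct reduction tool, and once it is in hand the conclusion follows immediately from Lemma~\ref{casyangmills}.
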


\begin{remarque}
If $G$ is a non Abelian group, the L\'{e}vy process $\left(Y_t \right)_{t \geq 0}$ is not unique: it is unique up to an equivalence. Two L\'{e}vy processes are equivalent if they have the same law when we restrict their law to the invariant $\sigma$-field on $G^{\mathbb{R}^{+}}$. Theorem \ref{caracterisation} asserts that there exists a one-to-one correspondence between the set of equivalence classes of $G$-valued self-invariant by conjugation L\'{e}vy processes and the set of $G$-valued stochastically continuous in law weak discrete planar Markovian holonomy fields.
\end{remarque}

\section[Proof of the characterization theorem]{Proof of Theorem \ref{caracterisation}}

In Section \ref{1}, we show that the two-dimensional time objects which are the planar Markovian holonomy fields are characterized by a one-dimensional time process. In Section \ref{2}, it is shown that this one-dimensional time process has $\mathcal{I}$-independent increments: this allows us to prove Theorem \ref{caracterisation} when the group $G$ is abelian. In general, the result follows from the braidability of the one-dimensional time process which is proved in Section \ref{3}.

\subsection{First correspondence for stochastically continuous in law weak discrete planar Markovian holonomy fields.}
\label{1}
We can go further than Proposition \ref{unic} when one considers stochastically continuous in law weak discrete planar Markovian holonomy fields. 
\begin{definition}
 For any $0 \leq s \leq t$ we define $\partial c_{s}^{t} = (s,0) \to (t,0) \to (t,1) \to (s,1) \to (s,0),$ and $p_{s} = (0,0) \to (s,0)$. These paths can be seen as paths on the same finite planar graph. Recall the notion of reduced loops defined in the beginning of Section \ref{sec:red}. The reduced loop $L_s^{t}$ is: 
 \begin{align*}
 L_{s}^{t} &= [p_{s} \partial c_{s}^{t} p_s^{-1}]_{\simeq},.
 \end{align*} 
 \end{definition}
 
 \begin{remarque}
 \label{levyloop}
These loops satisfy the following equalities: 
\begin{align*}
L_{r}^{t}\ \ \  &= L_{s}^{t}L_{r}^{s},\ \ \  \forall\ 0\leq r\leq s\leq t, \\
L_{i}^{i+1} &= L_{i,0}\ \ ,\ \ \  \forall\  i \in \mathbb{N},
\end{align*}
where we considered the reduced product in the first equality and where the family $(L_{i,j})_{i,j}$ was defined in Definition \ref{base}. 
\end{remarque}

The following lemma is a straightforward application of Theorem \ref{defplanarHFgeneral}. 

\begin{lemme}
\label{casyangmills}
Let $Y$ be a $G$-valued self-invariant by conjugation L\'{e}vy process. Let $U$ be a Haar variable on $G$ which is independent from $Y$. Let $\left(\mathbb{E}^{Y}_{vol}\right)_{vol}$ be the planar Yang-Mills field associated with $Y$.
Under the measure $\mathbb{E}^{Y}_{dx}$, $\left(h(L_{0}^{t})\right)_{t \in \mathbb{R}^{+}}$ has the same law as $\left(UY_{t}U^{-1}\right)_{t \in \mathbb{R}^{+}}$.  
\end{lemme}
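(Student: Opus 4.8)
The plan is to identify the finite-dimensional distributions of $\left(h(L_0^t)\right)_{t\ge 0}$ under $\mathbb{E}^Y_{dx}$ with those of $\left(UY_tU^{-1}\right)_{t\ge 0}$; since a $G$-valued process indexed by $\mathbb{R}^{+}$ is determined by its finite-dimensional laws, this is enough. Fix reals $0=t_0<t_1<\dots<t_n$ and consider the ``ladder'' graph $\mathbb{G}\in\mathcal{G}\left({\sf Aff}(\mathbb{R}^{2})\right)$ whose edges are the horizontal segments $(t_{i-1},0)\to(t_i,0)$ and $(t_{i-1},1)\to(t_i,1)$ for $1\le i\le n$, together with the vertical segments $(t_i,0)\to(t_i,1)$ for $0\le i\le n$. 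Its bounded faces are the rectangles $F_i$ with boundary $\partial c_{t_{i-1}}^{t_i}$, which have area $vol(F_i)=t_i-t_{i-1}$ under $dx$. I would root at $v=(0,0)$ the spanning tree $T$ consisting of the bottom path and all vertical edges, and take the anti-clockwise facial loops $c_{F_i}=\partial c_{t_{i-1}}^{t_i}$. Then $[v,\underline{c_{F_i}}]_T=p_{t_{i-1}}$, so by Proposition \ref{generate} and Definition \ref{def:lassogenerat} the associated facial lasso is exactly $\l_{c_{F_i},T}=L_{t_{i-1}}^{t_i}$.

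Applying Theorem \ref{defplanarHFgeneral} to this data, the law of $\left(h(L_{t_{i-1}}^{t_i})\right)_{i=1}^{n}$ under $\mathbb{E}^Y_{dx}$ is $\int_G\left(\bigotimes_{i=1}^{n} m_{t_i-t_{i-1}}\right)^{g}\,dg$, where $m_s$ denotes the law of $Y_s$. By the definition of $\eta^g$ in \eqref{eq:diagconj}, this is the law of $\left(U^{-1}W_iU\right)_{i=1}^{n}$, where $U$ is Haar, independent of the family $(W_i)_{i=1}^{n}$ of independent variables with $W_i\sim m_{t_i-t_{i-1}}$. Because $Y$ is a Lévy process, $(W_1,\dots,W_n)$ has the same law as the increment vector $\left(Y_{t_{i-1}}^{-1}Y_{t_i}\right)_{i=1}^{n}$.

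Next I would pass from the increments to the loops $L_0^{t_k}$. Iterating the first identity of Remark \ref{levyloop} gives $L_0^{t_k}=L_{t_{k-1}}^{t_k}L_{t_{k-2}}^{t_{k-1}}\cdots L_{t_0}^{t_1}$ as a reduced loop, so the order-reversing multiplicativity $h(c_1c_2)=h(c_2)h(c_1)$ yields $h(L_0^{t_k})=\prod_{i=1}^{k}h(L_{t_{i-1}}^{t_i})$, the product being taken in increasing order of $i$. Substituting the law found above, the vector $\left(h(L_0^{t_k})\right)_{k=1}^{n}$ has the same law as $\left(U^{-1}(W_1\cdots W_k)U\right)_{k=1}^{n}$, and since jointly in $k$ the product $W_1\cdots W_k$ is distributed as the telescoping product $\left(Y_{t_0}^{-1}Y_{t_1}\right)\cdots\left(Y_{t_{k-1}}^{-1}Y_{t_k}\right)=Y_{t_k}$ (using $Y_{t_0}=Y_0=e$), we obtain $\left(h(L_0^{t_k})\right)_{k=1}^{n}\overset{d}{=}\left(U^{-1}Y_{t_k}U\right)_{k=1}^{n}$. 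Finally, on a compact group the Haar measure is invariant under inversion, so replacing $U$ by $U^{-1}$ shows this equals the law of $\left(UY_{t_k}U^{-1}\right)_{k=1}^{n}$, which is the claim.

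The step I expect to require the most care is the telescoping: one must recombine the independent conjugated increments $U^{-1}W_iU$ into $U^{-1}Y_{t_k}U$ \emph{jointly} over all $k$, which rests precisely on matching the joint law of $(W_i)$ with the independent stationary increments of $Y$ and on the fact that a single Haar conjugator $U$ is shared across all faces (this is what the single integral $\int_G(\cdots)^{g}\,dg$ encodes). Keeping the order of multiplication consistent with the convention $h(c_1c_2)=h(c_2)h(c_1)$, and the conjugation direction consistent between $\eta^g$ and $UY_tU^{-1}$, is the remaining bookkeeping.
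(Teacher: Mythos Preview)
Your proof is correct and is precisely the ``straightforward application of Theorem \ref{defplanarHFgeneral}'' that the paper invokes without further detail. You have carefully spelled out the ladder graph, the spanning tree, the identification $\l_{c_{F_i},T}=L_{t_{i-1}}^{t_i}$, and the telescoping via Remark \ref{levyloop}; all the bookkeeping (multiplication order, conjugation direction, Haar inversion invariance) is handled correctly.
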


The following theorem asserts that the process $\left(h(L_{0}^{t})\right)_{t \in \mathbb{R}^{+}}$ characterizes any stochastically continuous in law weak discrete planar Markovian holonomy field.

\begin{theorem}
\label{caracterisation0}
Let $\left(\mathbb{E}^{\mathbb{G}}_{vol}\right)_{\mathbb{G}, vol}$ and $\left(\tilde{\mathbb{E}}^{\mathbb{G}}_{vol}\right)_{\mathbb{G}, vol}$ be two stochastically continuous in law weak discrete planar Markovian holonomy fields. The three following assertions are equivalent: 
\begin{enumerate}
\item $\left(\mathbb{E}^{\mathbb{G}}_{vol}\right)_{\mathbb{G}, vol}$ and $\left(\tilde{\mathbb{E}}^{\mathbb{G}}_{vol}\right)_{\mathbb{G}, vol}$ are equal, 
\item $\left(h(L_0^{t})\right)_{t \in \mathbb{R}^{+}}$ has the same law under $\mathbb{E}_{dx}^{\sf Aff}$ as under $\tilde{\mathbb{E}}_{dx}^{\sf Aff}$,
\item for any positive real $\alpha$, $\left(h(L_{n,0})\right)_{n \in \mathbb{N}}$ has the same law under $\mathbb{E}^{\mathbb{N}^{2}}_{\alpha dx}$ as under $\tilde{\mathbb{E}}^{\mathbb{N}^{2}}_{\alpha dx}$. 
\end{enumerate}
\end{theorem}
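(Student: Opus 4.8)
\textbf{Proof strategy for Theorem \ref{caracterisation0}.}

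The plan is to establish the implications $1 \Rightarrow 2 \Rightarrow 3 \Rightarrow 1$, exploiting the special structure of the loops $L_0^t$ and $L_{n,0}$ together with the key facts already assembled: the uniqueness principle for gauge-invariant random holonomy fields (Proposition \ref{unic}), the fact that the family $(L_{i,j})_{(i,j)\in\mathbb{N}^2}$ freely generates $RL_0(\mathbb{N}^2)$ (Lemma \ref{generativeexemple}), and the stochastic continuity in law which lets us pass from the $\mathbb{N}^2$-graph to arbitrary graphs and from the discrete to the ${\sf Aff}$-level via Remark \ref{graphaff}.

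The implication $1 \Rightarrow 2$ is immediate, since equality of the two families of measures certainly forces equality of the law of any fixed finite tuple $(h(L_0^{t_1}),\dots,h(L_0^{t_k}))$ under $\mathbb{E}_{dx}^{\sf Aff}$, and the process law is determined by its finite-dimensional marginals. For $2 \Rightarrow 3$ I would use Remark \ref{levyloop}: the equalities $L_i^{i+1}=L_{i,0}$ and the multiplicativity of the reduced product express each $h(L_{n,0})$ (under $vol=\alpha\,dx$) as an increment of the process $\big(h(L_0^t)\big)_{t}$ evaluated at integer-spaced times rescaled by $\alpha$. More precisely, $h(L_{n,0})$ corresponds to $h(L_{n}^{n+1})$, and $L_r^t = L_s^t L_r^s$ lets me recover the joint law of $(h(L_{n,0}))_{n\in\mathbb{N}}$ from the law of the process $(h(L_0^t))_t$; invariance under area-preserving homeomorphisms (Axiom $\mathbf{wDP_1}$) handles the scaling by $\alpha$, reducing the case of general $\alpha$ to $\alpha=1$ after a linear change of coordinates stretching the unit squares. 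Hence assertion $2$ transfers to assertion $3$ for every $\alpha$.

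The main work, and the main obstacle, is the implication $3 \Rightarrow 1$. Here I would argue as follows. Fix a measure of area $vol$. By stochastic continuity in law and Theorem \ref{approxgraph}, it suffices to prove equality of $\mathbb{E}^{\mathbb{G}}_{vol}$ and $\tilde{\mathbb{E}}^{\mathbb{G}}_{vol}$ for finite planar graphs $\mathbb{G}\in\mathcal{G}({\sf Aff}(\mathbb{R}^2))$, and by Proposition \ref{unic} it is enough to match, for each such $\mathbb{G}$ with a chosen rooted spanning tree $T$ and family of facial loops $(c_F)_{F\in\mathbb{F}^b}$, the joint law of $\big(h(\l_{c_F,T})\big)_{F\in\mathbb{F}^b}$. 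The difficulty is that assertion $3$ only controls the \emph{one-parameter} family $(h(L_{n,0}))_n$ on the single graph $\mathbb{N}^2\cap(\mathbb{R}^+\times[0,1])$, i.e.\ a sequence of facial lassos stacked in a single horizontal strip, whereas a general graph has facial lassos arranged in a genuinely two-dimensional pattern. Bridging this gap is precisely the content of the subsequent Sections \ref{1}, \ref{2} and \ref{3}: one first shows (Theorem \ref{caracterisation0}'s role, fed into the later sections) that the law of the strip-process $(h(L_0^t))_t$ together with the $\mathcal{I}$-independence and braidability of its increments determines the full field. Concretely, the strategy is to reconstruct the law of any two-dimensional configuration of facial lassos from the one-dimensional data by repeatedly using area-preserving homeomorphisms to bring clusters of faces into a horizontal strip, using Axiom $\mathbf{wDP_2}$ (independence) to split disjoint regions, and using the semigroup/gluing relations among the $\l_{c_F,T}$ (as in the proofs of Proposition \ref{creaYMpure} and Lemma \ref{generate2}) to assemble the joint law. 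I expect the delicate point to be verifying that the invariance and independence axioms suffice to pin down all the \emph{joint} (and not merely marginal) distributions of the facial-lasso family, since the non-abelian diagonal-conjugacy classes are not determined by individual marginals (cf.\ the warning after Definition \ref{Iindependence}); this is exactly why the braidability argument of Section \ref{3} is needed and why the theorem only claims the equivalence at the level that will later be upgraded to the full characterization in Theorem \ref{caracterisation}.
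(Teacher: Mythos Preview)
Your handling of $1\Rightarrow 2$ and of the equivalence $2\Leftrightarrow 3$ via the scaling homeomorphism $(x,y)\mapsto(\alpha x,y)$ and Remark \ref{levyloop} is fine and matches the paper. The problem is your treatment of the hard implication.

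You defer the proof of $3\Rightarrow 1$ to ``the subsequent Sections \ref{1}, \ref{2} and \ref{3}'' and invoke braidability and Axiom $\mathbf{wDP_2}$. This is both circular and unnecessary. Circular, because Theorem \ref{caracterisation} (proved in those sections) \emph{uses} Theorem \ref{caracterisation0} in its final step (via Lemma \ref{casyangmills}); so you cannot appeal to that material here. Unnecessary, because the paper's proof of $2\Rightarrow 1$ never uses $\mathbf{wDP_2}$ or braidability at all: it is a purely geometric argument using only $\mathbf{wDP_1}$, $\mathbf{wDP_4}$, stochastic continuity in law, and Corollary \ref{injectiondansn}.

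The idea you are missing is that the two-dimensional array of facial lassos can be \emph{unrolled} into a one-dimensional strip by an explicit area-preserving piecewise diffeomorphism, with no independence argument whatsoever. The paper first reduces (via Lemma \ref{gene} and Corollary \ref{injectiondansn}) to showing $\mathbb{E}^{\mathbb{N}^2}_{vol'}=\tilde{\mathbb{E}}^{\mathbb{N}^2}_{vol'}$ for every $vol'$. Since $(L_{i,j})_{i,j}$ generates $RL_0(\mathbb{N}^2)$, one must match the joint law of $(h(L_{i,j}))_{i,j}$. The key step (Figures \ref{Graphes1}--\ref{Graphes2}) is to deform the lassos $L_{i,j}$ so that they meet only at the base point, and then exhibit a $\mathbb{G}_1$--$\mathbb{G}_2$ piecewise diffeomorphism sending the $n\times n$ block of faces onto a $1\times n^2$ strip with a suitable $vol''$; Axiom $\mathbf{wDP_1}$ then identifies the joint law of $(h(L_{i,j}))_{0\le i,j<n}$ under $\mathbb{E}^{\mathbb{N}^2}_{vol'}$ with that of $(h(L_{k,0}))_{0\le k<n^2}$ under $\mathbb{E}^{\mathbb{N}^2}_{vol''}$. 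A further horizontal rescaling reduces the strip with general $vol''$ to increments $(h(L_{t_i}^{t_{i+1}}))$ under $\mathbb{E}^{\sf Aff}_{dx}$, which are determined by assertion 2. No splitting into independent pieces is ever performed; the whole joint law is transported at once by a single homeomorphism.
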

We remind the reader that $\mathbb{E}_{dx}^{\sf Aff}$ and $\mathbb{E}_{dx}^{\mathbb{N}^{2}}$ were defined in Remark \ref{graphaff}. The proof will consist in proving the equivalence between the conditions $1$ and $2$, then between $2$ and~$3$.

\begin{proof}
Since condition $1$ clearly implies condition $2$, let us show that condition $2$ implies condition $1$. Let us suppose that $\left(h(L_0^{t})\right)_{t \in \mathbb{R}^{+}}$ has the same law under $\mathbb{E}_{dx}^{\sf Aff}$ as under $\tilde{\mathbb{E}}_{dx}^{\sf Aff}$. Let $vol$ be a measure of area and let $\mathbb{G}$ be a finite planar graph in $\mathcal{G}\left({\sf Aff}\left(\mathbb{R}^2\right)\right)$. We have to show that $\mathbb{E}^{\mathbb{G}}_{vol} = \tilde{\mathbb{E}}^{\mathbb{G}}_{vol}$. The proof will consist in a sequence of simplifications, changing the graph and the measure of area little by little. By Proposition \ref{unicite1}, these measures are characterized by the way they integrate functions of the form: $h \mapsto f(h(l_1), ..., h(l_m))$, where $f$ is a continuous function invariant by conjugation and $l_1$, ..., $l_n$ are elements of $L_v(\mathbb{G})$, where $v$ is any given vertex of $\mathbb{G}$. Thus, we have to show that $\left(\mathbb{E}^{\mathbb{G}}_{vol}\right)_{\mid L_{v}(\mathbb{G})} = \left(\tilde{\mathbb{E}}^{\mathbb{G}}_{vol}\right)_{\mid L_{v}(\mathbb{G})}.$ Since these two measures are defined on multiplicative functions on loops, we can suppose that $\mathbb{G}$ is simple. Let us consider a sequence of generic and simple graphs $(\mathbb{G}_{n})_{n\geq 0}$ which approximate the graph $\mathbb{G}$ in the sense of  Lemma \ref{gene}. Using the stochastic continuity in law of $\left(\mathbb{E}^{\mathbb{G}}_{vol}\right)_{\mathbb{G}, vol}$ \big(resp. $\left(\tilde{\mathbb{E}}^{\mathbb{G}}_{vol}\right)_{\mathbb{G}, vol}$\big), the measure $\left(\mathbb{E}^{\mathbb{G}}_{vol}\right)_{\mid L_{v}(\mathbb{G})}$ \big(resp. $\left(\tilde{\mathbb{E}}^{\mathbb{G}}_{vol}\right)_{\mid L_{v}(\mathbb{G})}$\big) can be recovered using the sequence of measures $\left(\left(\mathbb{E}^{\mathbb{G}_n}_{vol}\right)_{\mid L_{v}(\mathbb{G}_n)}\right)_{n\in \mathbb{N}}$ \big(resp.  $\left(\left(\tilde{\mathbb{E}}^{\mathbb{G}_n}_{vol}\right)_{\mid L_{v}(\mathbb{G}_n)}\right)_{n\in \mathbb{N}}$ \big) . This allows us to suppose that $\mathbb{G}$ is also generic. 

Using Corollary \ref{injectiondansn}, there exists $\mathbb{G}'$ a subgraph of the $\mathbb{N}^{2}$ graph such that the set of $\mathbb{G}-\mathbb{G}'$ piecewise diffeomorphisms is not empty: let $\psi$ be such a homeomorphism. Let $vol'$ be a measure of area on $\mathbb{R}^{2}$ such that for any bounded face $F$ of $\mathbb{G}$, $vol'(\psi(F)) = vol(F)$. Using the Axiom $\mathbf{wDP_{1}}$, we know that $\mathbb{E}^{\mathbb{G}'}_{vol'}\circ \psi^{-1} = \mathbb{E}^{\mathbb{G}}_{vol}. $ By definition of $\mathbb{E}^{\mathbb{N}^{2}}_{vol'}$, the measure $\mathbb{E}^{\mathbb{G}'}_{vol'}$ is equal to $\left(\mathbb{E}^{\mathbb{N}^{2}}_{vol'}\right)_{\mid \mathcal{M}ult(P(\mathbb{G}'), G)}$. The same discussion holds for $\left(\tilde{\mathbb{E}}^{\mathbb{G}}_{vol}\right)_{\mathbb{G}, vol}$. Thus, if we show that, for any measure of area $vol'$, $\mathbb{E}^{\mathbb{N}^{2}}_{vol'} = \tilde{\mathbb{E}}^{\mathbb{N}^{2}}_{vol'},$ we will get that $\left(\mathbb{E}^{\mathbb{G}}_{vol}\right)_{\mathbb{G}, vol}$ and $\left(\tilde{\mathbb{E}}^{\mathbb{G}}_{vol}\right)_{\mathbb{G}, vol}$ are equal.

Let $vol'$ be a measure of area on $\mathbb{R}^{2}$. Since $\{L_{i,j}, i,j \in \mathbb{N}^2\}$ is a family which generates $RL_{0}(\mathbb{N}^{2})$ and since we are considering gauge-invariant measures, we only have to prove that $(h(L_{i,j}))_{i,j}$ has the same law under $\mathbb{E}^{\mathbb{N}^{2}}_{vol'}$ as under $\tilde{\mathbb{E}}^{\mathbb{N}^{2}}_{vol'}$. Let us show that it is actually enough to know that $(h(L_{n,0}))_{n\in \mathbb{N}}$ has the same law under $\mathbb{E}^{\mathbb{N}^{2}}_{vol'}$ as under $\tilde{\mathbb{E}}^{\mathbb{N}^{2}}_{vol'}$. 

Let us consider the two finite planar graphs $\mathbb{G}$ and $\tilde{\mathbb{G}}$ drawn in Figure \ref{Graphes1}. Let $vol''$ be a measure of area such that $vol''_{\mid F_{\infty}} = vol'_{\mid \tilde{F}_{\infty}}$, where $F_{\infty}$ (resp. $\tilde{F}_{\infty}$) is the unbounded face of $\mathbb{G}$ (resp.~$\tilde{\mathbb{G}}$). Besides, we impose that the following condition holds for $vol''$: 
\begin{align}
\label{volegal}
\forall i \in \{1, ..., 5\}, vol''(F'_i)=vol'(F_{i}). 
\end{align} 

\begin{figure}%[here]
 \centering
  \includegraphics[width=355pt]{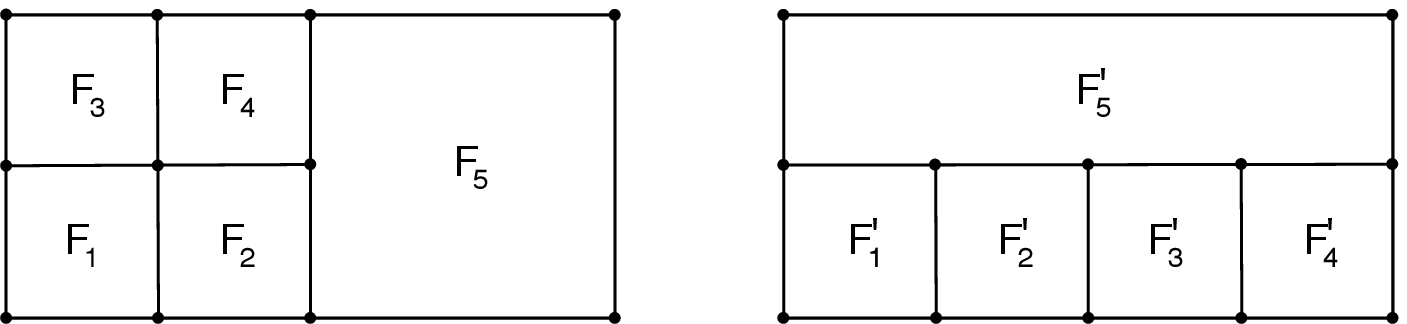}
 \caption{Graphs $\mathbb{G}$ and $\tilde{\mathbb{G}}$.}
 \label{Graphes1}
\end{figure}
\begin{figure}%[here]
 \centering
  \includegraphics[width=355pt]{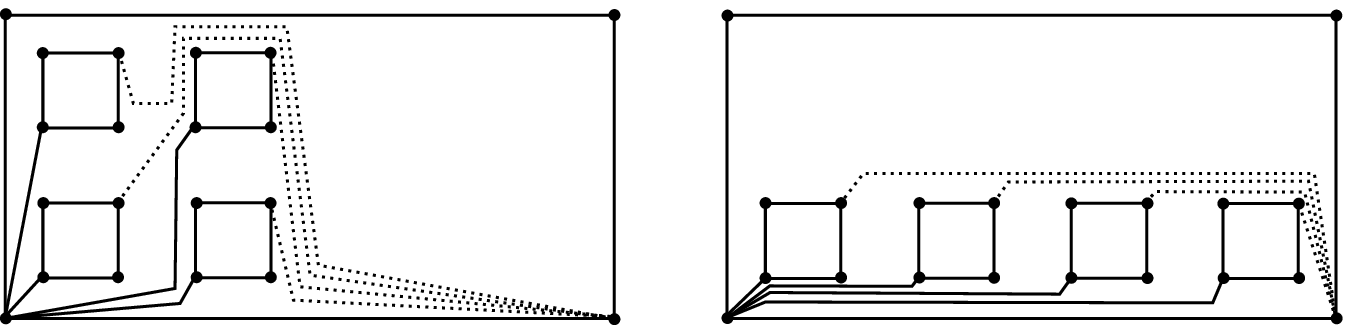}
 \caption{Graphs $\mathbb{G}_1$ and $\mathbb{G}_2$.}
 \label{Graphes2}
\end{figure}

The loops $(L_{i,j})_{i,j \in \{0,1\}}$ belong to $L_{(0,0)} (\mathbb{G})$ and the loops $(L_{i,0})_{i \in [\mid 0,3\mid]}$ are in $L_{(0,0)} (\tilde{\mathbb{G}})$. Let us approximate the loops $\left(L_{i,j}\right)_{i,j\in \{0,1\}}$ by loops whose intersection is reduced to the base point. Such loops are drawn in bold in the left part of Figure \ref{Graphes2}. The two graphs $\mathbb{G}_1$ and $\mathbb{G}_2$ drawn in Figure \ref{Graphes2} satisfy the hypothesis of Theorem \ref{homeograph} and are in $\mathcal{G}\left({\sf Aff}(\mathbb{R}^{2})\right)$: they are homeomorphic. Thus, by Proposition \ref{triang}, there exists an orientation-preserving $\mathbb{G}_1-\mathbb{G}_2$ piecewise diffeomorphism which we denote by $\psi$. We can suppose, up to a modification of $\mathbb{G}_1$ and $\mathbb{G}_2$ which will not change the general form of both graphs and thus will not invalidate the discussion, that $vol'(F) = vol''(\psi(F))$ for any bounded face $F$ of $\mathbb{G}_1$. This last assertion is essentially due to the condition (\ref{volegal}) on $vol''$. Using Axiom $\mathbf{wDP_1}$ and using the stochastic continuity in law property, we conclude that under $\mathbb{E}^{\mathbb{N}^{2}}_{vol'}$ (resp. $\tilde{\mathbb{E}}^{\mathbb{N}^{2}}_{vol'}$), $(h(L_{0,1}),h(L_{0,0}), h(L_{1,1}), h(L_{1,0}))$ has the same law as $(h(L_{i,0}))_{i\in [\mid 0, 3\mid]}$ under $\mathbb{E}^{\mathbb{N}^{2}}_{vol''}$ (resp. $\tilde{\mathbb{E}}^{\mathbb{N}^{2}}_{vol''}$). A slight generalization of these arguments allows us to show that for any integer $n$ there exists a measure of area $vol''$ such that under $\mathbb{E}^{\mathbb{N}^{2}}_{vol}$ (resp. $\tilde{\mathbb{E}}^{\mathbb{N}^{2}}_{vol}$), $(h(L_{0,n-1}), ..., h(L_{0,0}), ..., h(L_{n-1,n-1}), ..., h(L_{n-1,0}))$ has the same law as $(h(L_{i,0}))_{i\in [\mid 0, n^{2}-1\mid]}$ under $\mathbb{E}^{\mathbb{N}^{2}}_{vol''}$ (resp. $\tilde{\mathbb{E}}^{\mathbb{N}^{2}}_{vol''}$). 

Thus it is now enough to show that for any measure of area $vol''$,  $(h(L_{i,0}))_{i\in \mathbb{N}}$ has the same law under $\mathbb{E}^{\mathbb{N}^{2}}_{vol''}$ as under $\tilde{\mathbb{E}}^{\mathbb{N}^{2}}_{vol''}$. Let $n \in \mathbb{N}$, let $\mathbb{S}$ be the graph defined as the intersection  of the $\mathbb{N}^{2}$ graph and $[0,n+1]\times [0,1]$. Let us consider an orientation-preserving homeomorphism $\psi$ such that its restriction on $\mathbb{R}^{+}\times \mathbb{R}$ is given by: 
\begin{align*}
\psi: \ \ \ \ \mathbb{R}^{+}\times \mathbb{R} &\to \mathbb{R}^{2}\\
(x,y) &\mapsto \left(vol''\left([0,x] \times [0,1]\right) , y\right). 
\end{align*}
The image of $\mathbb{S}$ by $\psi$, $\psi(\mathbb{S})$ is a simple graph in $\mathcal{G}({\sf Aff}(\mathbb{R}^{2}))$ and for any bounded face $F$ of $\mathbb{S}$, $vol''(F) = dx(\psi(F))$. Let us define for any $i \in \{0,...,n+1\}$, $t_i = vol''(L_{0}^{i})$. We can apply the Axiom $\mathbf{wDP_1}$ to the two graphs $\mathbb{S}$ and $\psi(\mathbb{S})$, to the two measures of area $vol''$ and $dx$ and to $\psi$. It shows that under $\mathbb{E}^{\mathbb{N}^{2}}_{vol''}$ (resp. $\tilde{\mathbb{E}}^{\mathbb{N}^{2}}_{vol''}$), $(h(L_{i,0}))_{i=0}^{n}$ has the same law as $(h(L_{t_i}^{t_{i+1}}))_{i=0}^{n}$ under $\mathbb{E}_{dx}^{\sf Aff}$ (resp. $\tilde{\mathbb{E}}_{dx}^{\sf Aff}$). It remains to show that for any integer $n$, any sequence of positive reals $t_0< ...< t_n$, $(h(L_{t_i}^{t_{i+1}}))_{i=0}^{n}$ has the same law under $\mathbb{E}_{dx}^{\sf Aff}$ as under $\tilde{\mathbb{E}}_{dx}^{\sf Aff}$. Yet, we started with the fact that $(h(L_0^{t}))_{t \in \mathbb{R}^{+}}$ has the same law under $\mathbb{E}_{dx}^{\sf Aff}$ as under $\tilde{\mathbb{E}}_{dx}^{\sf Aff}$. This allows us to conclude. 

Let us prove the equivalence between the conditions $2.$ and $3.$ Suppose that for any positive real $\alpha$, $(h(L_{n,0}))_{n \in \mathbb{N}}$ has the same law under $\mathbb{E}^{\mathbb{N}^{2}}_{\alpha dx}$ as under $\tilde{\mathbb{E}}^{\mathbb{N}^{2}}_{\alpha dx}$. By the Axiom $\mathbf{wDP_4}$, we can change $\mathbb{E}^{\mathbb{N}^{2}}_{\alpha dx}$ (resp. $\tilde{\mathbb{E}}^{\mathbb{N}^{2}}_{\alpha dx}$) by $\mathbb{E}^{{\sf Aff}}_{\alpha dx}$ (resp. $\tilde{\mathbb{E}}^{{\sf Aff}}_{\alpha dx}$). As an application of the Axiom $\mathbf{wDP_1}$, with $\psi$ given by $\psi: (x,y) \mapsto (\alpha x, y),$ the random vector $\left(h(L_{\alpha.n}^{\alpha(n+1)})\right)_{n \in \mathbb{N}}$ has the same law under $\mathbb{E}_{dx}^{\sf Aff}$ as under $\tilde{\mathbb{E}}_{dx}^{\sf Aff}$. Using the fact that for any positive integers $p$ and $q$: $h\bigg(L_{0}^{\frac{p}{q}}\bigg) = \prod\limits_{i=0}^{p-1} h\left(L_{\frac{i}{q}}^{\frac{i+1}{q}}\right),$ we can conclude that $(h(L_t))_{t \in \mathbb{Q}^{+}}$ has the same law under $\mathbb{E}_{dx}^{\sf Aff}$ as under $\tilde{\mathbb{E}}_{dx}^{\sf Aff}$ and by stochastically continuity the same assertion holds for $(h(L_t))_{t \in \mathbb{R}^{+}}$. 
Thus, condition $3.$ implies condition $2$. The other implication can be proved using the same arguments.\end{proof}

\subsection{Law of the conjugacy classes and the Abelian case}
\label{2}
Let $\left(\mathbb{E}^{\mathbb{G}}_{vol}\right)_{\mathbb{G},vol}$ be a stochastically continuous in law weak discrete planar Markovian holonomy field and $\mathbb{E}_{dx}^{\sf Aff}$ be the usual expectation associated with it. 

\begin{definition}
\label{process}
Until the end of this section, we set, for any $0 \leq s \leq t$, $Z_s^{t} = h(L_s^{t})$ and $Z_t = Z_0^{t}.$ 
\end{definition}

\begin{remarque}
\label{debutlevy}
Using the multiplicativity property of random holonomy fields and Remark \ref{levyloop}, for any $0 \leq r \leq s \leq t $, $Z_r^{t} =  Z_r^{s}Z_s^{t}$, hence for any $0\leq s\leq t$, $$Z_{s}^{t}=(Z_s)^{-1} Z_t.$$
Since $Z_s^{t} = h\!\left(p_s \partial c_s^{t}p_s^{-1}\right)$, by Remark \ref{changementdepoint}, under $\mathbb{E}_{dx}^{\sf Aff}$, $Z_s^{t}$ has the same law as $h(\partial c_s^{t})$. Besides the left translation by $s$ sends $dx$ on itself and $\partial c_s^{t}$ on $\partial c_0^{t-s}$: applying Axiom $\mathbf{wDP_1}$ (Definition \ref{discreteplanarHF}), we get that under $\mathbb{E}_{dx}^{\sf Aff}$, 
\begin{align*}
Z_s^{t} \text{ has the same law as } Z_{t-s}.
\end{align*}
Moreover, using the stochastic continuity property, under $\mathbb{E}_{dx}^{\sf Aff}$, the process $(Z_t)_{t\geq0}$ is stochastically continuous and $Z_0$ is equal to the neutral element of $G$. 
\end{remarque}

A simple but important lemma is the following.

\begin{lemme}
 \label{conjugclass}
Under $\mathbb{E}_{dx}^{\sf Aff}$, for any $t_0>0$, for any finite subset $T$ of $[0,t_0]$ and any finite subset $T'$ of $[t_0,\infty[$, $(Z_{t})_{t \in T}$ and $(Z_{t_0}^{-1}Z_{t})_{t \in T'}$ are $\mathcal{I}$-independent. This means that for any continuous functions $f: G^{T} \to \mathbb{R}$ and $f': G^{T'} \to \mathbb{R}$ invariant by diagonal conjugation by $G$, 
\begin{align*}
\mathbb{E}_{dx}^{\sf Aff}\! \left[f\left(\left(Z_t\right)_{t \in T}\right)\! f'\left(\left(Z_{t_0}^{-1}Z_t\right)_{t \in T'} \right)\!\right]\! \!=\! \mathbb{E}_{dx}^{\sf Aff}\! \left[f\left(\left(Z_t\right)_{t \in T}\right)\right] \mathbb{E}_{dx}^{\sf Aff}\! \left[f'\left(\left(Z_{t_0}^{-1}Z_t\right)_{t \in T'}\right)\right]\!\!. 
\end{align*} 
\end{lemme}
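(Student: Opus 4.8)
The plan is to reduce the $\mathcal{I}$-independence of the two time-separated families to the spatial $\mathcal{I}$-independence axiom $\mathbf{wDP_2}$ (or rather the independence granted by it together with Remark \ref{I-indep-et-indep-normale}). The key geometric observation is that the loops $L_t = h^{-1}(Z_t)$ for $t \leq t_0$ all live inside the rectangle over $[0,t_0]$, while the increment loops $L_{t_0}^t = [p_{t_0}\partial c_{t_0}^t p_{t_0}^{-1}]_{\simeq}$ for $t \geq t_0$ can, after a change of base point, be represented by loops living inside the rectangle over $[t_0,\infty[$. These two regions have disjoint interiors (they share only the vertical segment at abscissa $t_0$), so the spatial independence axiom should apply once we have moved everything into the correct half-planes.

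First I would fix a finite $T \subset [0,t_0]$ and $T' \subset [t_0,\infty[$ and choose a finite planar graph $\mathbb{G} \in \mathcal{G}({\sf Aff}(\mathbb{R}^2))$ (a suitable piece of $\mathbb{N}^2$-type graph over $[0,\max T']\times[0,1]$) containing all the relevant loops $\{L_t : t \in T\}$ and $\{L_{t_0}^t : t \in T'\}$; this is legitimate since $\mathbb{E}^{\sf Aff}_{dx}$ is the projective limit of the $\mathbb{E}^{\mathbb{G}}_{dx}$ by Remark \ref{graphaff}, and the quantities in the lemma depend only on finitely many loops. Next I would identify the two simple loops $l_1$ and $l_2$ bounding the left rectangle $[0,t_0]\times[0,1]$ and the right rectangle $[t_0,N]\times[0,1]$ respectively; their closed interiors meet only along the segment $\{t_0\}\times[0,1]$, so $\overline{{\sf Int}(l_1)}$ and $\overline{{\sf Int}(l_2)}$ are not disjoint but overlap on this edge. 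This is the delicate point: axiom $\mathbf{wDP_2}$ requires disjoint closures, so I cannot apply it directly to $l_1$ and $l_2$.

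To circumvent this, I would use the base-point freedom (Remark \ref{changementdepoint}) exactly as in the proof of Proposition \ref{creaYMpure2}: the loops $Z_{t_0}^t = Z_{t_0}^{-1}Z_t$ are, up to conjugation by $Z_{t_0}$, represented by loops $h(p_{t_0}\partial c_{t_0}^t p_{t_0}^{-1})$ based at $0$, and since all the functions $f,f'$ we integrate are invariant by diagonal conjugation, the $\mathcal{I}$-measurable quantities they produce are insensitive to this conjugation and to the choice of base point. Concretely, I would approximate the configuration (as is done at the start of the proof of Proposition \ref{creaYMpure2}) by a nearby graph $\mathbb{G}'$ carrying strictly disjoint simple loops $l_1'$ and $l_2'$ together with a spoke path $p$, such that the left family is carried inside $\overline{{\sf Int}(l_1')}$ and the right family is carried inside $\overline{{\sf Int}(l_2')}$ after conjugation by $h(p)$. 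Then $\mathbf{wDP_2}$ gives the genuine independence of the two families of holonomies associated to loops inside $\overline{{\sf Int}(l_1')}$ and inside $\overline{{\sf Int}(l_2')}$, and Remark \ref{I-indep-et-indep-normale} (disjoint endpoint sets) upgrades/translates this into the $\mathcal{I}$-independence statement for the original diagonal-conjugation-invariant functions.

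The main obstacle, as flagged, is the overlap of the two rectangles along the common vertical edge at $t_0$, which prevents a naive application of $\mathbf{wDP_2}$; the resolution is the approximation-plus-conjugation argument, relying on stochastic continuity in law to pass to the limit and on the diagonal-conjugation invariance of the test functions to absorb the conjugating element $h(p)$. I would finish by invoking Remark \ref{debutlevy}, which already records $Z_s^t = Z_s^{-1}Z_t$ and the identification of the law of $Z_s^t$ with that of $Z_{t-s}$, so that the increments $(Z_{t_0}^{-1}Z_t)_{t\in T'}$ are genuinely the right objects; assembling these pieces yields the factorization
\begin{align*}
\mathbb{E}_{dx}^{\sf Aff}\!\left[f\!\left((Z_t)_{t\in T}\right) f'\!\left((Z_{t_0}^{-1}Z_t)_{t\in T'}\right)\right] = \mathbb{E}_{dx}^{\sf Aff}\!\left[f\!\left((Z_t)_{t\in T}\right)\right]\,\mathbb{E}_{dx}^{\sf Aff}\!\left[f'\!\left((Z_{t_0}^{-1}Z_t)_{t\in T'}\right)\right],
\end{align*}
which is exactly the claimed $\mathcal{I}$-independence.
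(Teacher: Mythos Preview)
Your approach is correct and follows the same overall strategy as the paper --- perturb so that the two rectangles become genuinely disjoint, use diagonal-conjugation invariance to drop the spoke, apply $\mathbf{wDP_2}$, and pass to the limit via stochastic continuity --- but you take an unnecessarily heavy route through the graph-approximation machinery of Proposition~\ref{creaYMpure2}.

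The paper's proof is considerably simpler: instead of approximating the whole configuration by a nearby graph with disjoint loops $l_1',l_2'$ and a spoke $p$, it just assumes without loss of generality that $T'\subset\, ]t_0,\infty[$, picks any $t_0'\in\, ]t_0,\min T']$, and works with the increments $Z_{t_0'}^{-1}Z_t$ instead of $Z_{t_0}^{-1}Z_t$. Since $Z_{t_0'}^{-1}Z_t=h(p_{t_0'})^{-1}h(\partial c_{t_0'}^{t})h(p_{t_0'})$ and $f'$ is conjugation-invariant, one has $f'\big((Z_{t_0'}^{-1}Z_t)_{t\in T'}\big)=f'\big((h(\partial c_{t_0'}^{t}))_{t\in T'}\big)$; the loops $\partial c_{t_0'}^{t}$ live in $\overline{{\sf Int}(\partial c_{t_0'}^{\max T'})}$, which is now \emph{strictly} disjoint from $\overline{{\sf Int}(L_0^{t_0})}$ because $t_0'>t_0$. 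Axiom $\mathbf{wDP_2}$ applies directly, and a single limit $t_0'\to t_0$ (using stochastic continuity of $\mathbb{E}^{\sf Aff}_{dx}$) recovers the statement. No auxiliary graph $\mathbb{G}'$, no generic-graph approximation, and no appeal to the argument of Proposition~\ref{creaYMpure2} are needed.
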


\begin{proof}
Let $t_0>0$, $T$ be a finite subset of $[0,t_0]$ and $T'$ be a finite subset of $[t_0,\infty[$. Obviously we can suppose that $T' \subset ]t_0, \infty[$. Let $t_0'$ be any real strictly greater that $t_0$ such that $T' \subset [t_0', \infty[$. We remind the reader that for any $t \in T'$, 
$$Z_{t_{0}'}^{-1} Z_t  = h\big(p_{t_0'} \partial c_{t_0'}^{t} p_{t_0'}^{-1}\big),$$
thus for any continuous functions $f: G^{T} \to \mathbb{R}$ and $f': G^{T'} \to \mathbb{R}$ invariant by diagonal conjugation by $G$: 

$$\mathbb{E}_{dx}^{\sf Aff} \Big[f\big(\left(Z_t\right)_{t \in T}\big) f'\left(\left(Z_{t_0'}^{-1}Z_t\right)_{t \in T'} \right)\Big] = 
\mathbb{E}_{dx}^{\sf Aff} \Big[f\big( (h(L_0^{t}))_{t \in T}\big) f' \big((h(\partial c_{t_0'}^{t}))_{t \in T'} \big)\Big]. $$
Let us denote by $t_1$ the maximum of $T'$. The loop $L_{0}^{t}$ is in $\overline{{\sf Int(}L_0^{t_0})}$ for any $t\in T$ and the loop $\partial c_{t_0'}^{t}$ is in $\overline{{\sf Int(}\partial c_{t_0'}^{t_1})}$ for any $t \in T'$. Besides $\overline{{\sf Int(}L_{0}^{t_0})} \cap \overline{{\sf Int(}\partial c_{t'_0}^{t_1})} = \emptyset$. Using the Axiom $\mathbf{wDP_2}$: 
\begin{align*}
\mathbb{E}_{dx}^{\sf Aff}\! \left[f\left(\left(Z_t\right)_{t \in T}\right)\! f'\left(\left(Z_{t_0'}^{-1}Z_t\right)_{t \in T'} \right)\!\right]\!\! =\! \mathbb{E}_{dx}^{\sf Aff}\! \left[f\left(\left(Z_t\right)_{t \in T}\right)\right] \mathbb{E}_{dx}^{\sf Aff}\! \left[f'\left(\left(Z_{t_0'}^{-1}Z_t\right)_{t \in T'}\right)\right]\!\!. 
\end{align*} 
The stochastic continuity of $\mathbb{E}_{dx}^{\sf Aff}$ and taking the limit $t_0' \to t_0$ allows us to conclude the proof. 
\end{proof}

When $G$ is Abelian, for any $n$-tuple $(g_1, ..., g_n)$ of elements of $G$, the diagonal conjugacy class of $(g_1,..., g_n)$ is reduced to $\big\{(g_1, ..., g_n)\big\}$. Thus, the last lemma asserts that $\sigma \big(\{Z_t, t \leq t_0\} \big)$ is independent of $\sigma\big(\{Z_{t_0}^{-1}Z_t,t\geq t_0\}\big)$. Using Remark \ref{debutlevy}, this implies that $(Z_t)_{t \in \mathbb{R}^{+}}$ is a L\'{e}vy process. Applying Theorem \ref{caracterisation0} and Lemma \ref{casyangmills}, we deduce that $\big(\mathbb{E}^{\mathbb{G}}_{vol}\big)_{\mathbb{G},vol}$ is the planar Yang-Mills field associated with the L\'{e}vy process $(Z_t)_{t\geq0}$. The Abelian part of Theorem \ref{caracterisation} is thus proved.  

When $G$ is not Abelian, we have to get rid of the conjugacy classes in Lemma \ref{conjugclass}: it is what we intend to do in the following subsection. 

\subsection{Braidability and the non-Abelian case}
\label{3}

Recall the notions and notations set in Definition \ref{increment}.

\begin{proposition}
\label{braidable} 
Under $\mathbb{E}_{dx}^{\sf Aff}$, the process $(Z_{t})_{t \in \mathbb{R}^{+} }$ has braidable increments. 
\end{proposition}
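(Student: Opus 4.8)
The plan is to reduce braidability to the invariance in law of the increments under each elementary braid $\beta_i$, and to realize $\beta_i$ as an area-preserving deformation of the plane to which Axiom $\mathbf{wDP_1}$ applies. Fix $n\in\mathbb{N}^{*}$ and set $X_j:=Z_{(j-1)/n}^{-1}Z_{j/n}$; by Remark \ref{debutlevy} one has $X_j=h(L_{(j-1)/n}^{j/n})$, so writing $L_j:=L_{(j-1)/n}^{j/n}$ these are the holonomies of facial lassos, all based at $0$, of the strip graph whose bounded faces are the squares $[(j-1)/n,j/n]\times[0,1]$, each of $dx$-area $1/n$. For fixed $N$, the set of braids $\beta\in\mathcal{B}_N$ with $\beta\bullet(X_1,\dots,X_N)\overset{d}{=}(X_1,\dots,X_N)$ is a subgroup of $\mathcal{B}_N$: the action of Definition \ref{action1} is given by a fixed continuous map $G^N\to G^N$, so applying it to tuples equal in law preserves equality in law, whence the stabilizer is closed under product and inverse. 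Thus it suffices to prove the invariance for each generator $\beta_i$.

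For $\beta_i$ I would use the identification $\mathcal{M}_N(\mathbb{D})\simeq\mathcal{B}_N$ of (\ref{identification}) together with the fact, recalled when introducing the diffeotopy group, that the diffeotopy action on $\pi_1$ — identified with $RL_0$ of the strip graph, freely generated by $(L_j)$ thanks to Proposition \ref{generate} — coincides with the algebraic braid action of Definition \ref{actionlibre}. Concretely, I would take an orientation-preserving homeomorphism $\psi$ of the plane fixing $0$ that realizes the half-twist exchanging the squares $F_i$ and $F_{i+1}$; arranged to be area-preserving (equivalently, to send each bounded face to a face of equal $dx$-area, which is possible since the swapped squares both have area $1/n$), it satisfies the hypotheses of Axiom $\mathbf{wDP_1}$, and the automorphism of $RL_0$ it induces is $\mathbf{a}_{\beta_i}$, so that $[\psi(L_j)]=\mathbf{a}_{\beta_i}(L_j)$ as reduced loops.

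With such a $\psi$, Axiom $\mathbf{wDP_1}$ yields $\mathbb{E}^{\sf Aff}_{dx}\big[f(h(L_1),\dots,h(L_N))\big]=\mathbb{E}^{\sf Aff}_{dx}\big[f(h(\psi(L_1)),\dots,h(\psi(L_N)))\big]$ for every continuous $f$. Since equivalent reduced loops carry equal holonomy (Remark \ref{equalityequiv}) and $h$ is pre-multiplicative, $h(\psi(L_j))=h(\mathbf{a}_{\beta_i}(L_j))$, and Lemma \ref{actionmieux} rewrites this as $\big(h(\psi(L_j))\big)_{j}=h\big(\beta_i\bullet(L_1,\dots,L_N)\big)=\beta_i^{-1}\bullet\big(h(L_1),\dots,h(L_N)\big)$. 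Hence $(X_1,\dots,X_N)\overset{d}{=}\beta_i^{-1}\bullet(X_1,\dots,X_N)$, so $\beta_i^{-1}$ — and therefore every $\beta_i$ and all of $\mathcal{B}_N$ — stabilizes the law, which is precisely the braid-invariance of the increments. The case $n\in(\mathbb{N}^{*})^{-1}$ would follow identically after rescaling the strip by $(x,y)\mapsto(\alpha x,y)$ through $\mathbf{wDP_1}$, exactly as in the proof of Theorem \ref{caracterisation0}.

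The main obstacle is the geometric realization: the genuine half-twist is a smooth area-preserving diffeomorphism with curved image edges, so $\psi(\mathbb{G})$ need not lie in $\mathcal{G}({\sf Aff}(\mathbb{R}^{2}))$, which is what the weak discrete axiom $\mathbf{wDP_1}$ requires. I would handle this as in the proof of Theorem \ref{caracterisation0}: first approximate the lassos $L_j$ by loops meeting only at $0$ (Lemma \ref{gene}, Corollary \ref{injectiondansn}) so the ambient graph is simple with affine edges, then produce a piecewise-affine orientation-preserving $\mathbb{G}-\mathbb{G}'$ piecewise diffeomorphism in the prescribed isotopy (braid) class by Proposition \ref{triang}, adjusting face areas by means of $\mathbf{wDP_3}$, and finally pass to the limit using the stochastic continuity in law of $\big(\mathbb{E}^{\mathbb{G}}_{vol}\big)_{\mathbb{G},vol}$. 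Verifying that this piecewise-affine representative still induces $\mathbf{a}_{\beta_i}$ on $RL_0$ is the delicate bookkeeping on which the whole argument hinges.
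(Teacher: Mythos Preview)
Your proposal is essentially correct and follows the same route as the paper: reduce to generators $\beta_i$, approximate the lassos by loops meeting only at the base point, realize the half-twist by a $\mathbb{G}_1$--$\mathbb{G}_2$ piecewise diffeomorphism via Proposition~\ref{triang}, apply $\mathbf{wDP_1}$, and pass to the limit by stochastic continuity in law. The paper's argument is exactly this, carried out ``graphically'' with two explicit pictures (Figure~\ref{Graphes3}).

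The one place where the paper is sharper than your outline is precisely the point you flag as delicate. Rather than invoking the abstract diffeotopy identification and then checking that a piecewise-affine representative still induces $\mathbf{a}_{\beta_i}$ on $RL_0$, the paper draws $\mathbb{G}_1$ and $\mathbb{G}_2$ so that $\mathbb{G}_2$ is the \emph{horizontal flip} of $\mathbb{G}_1$. This buys two things at once: (i) the face areas match automatically ($dx(F_k)=dx(F'_k)$), so no adjustment via $\mathbf{wDP_3}$ is needed; and (ii) the effect of the homeomorphism on the two bold loops is read off directly from the picture --- the $i$-th lasso is sent to a loop equivalent to the $(i{+}1)$-th, while the $(i{+}1)$-th is sent to a conjugate of the $i$-th by the $(i{+}1)$-th --- which is exactly the $\beta_i$-action, with no appeal to Lemma~\ref{actionmieux} or to the diffeotopy-group isomorphism required. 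So your abstract framing is sound, but the concrete flip construction is what dissolves the bookkeeping you were worried about.
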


\begin{proof}
The proof will be essentially graphical. The braid group with $m$ strands is generated by the elementary braids $(\beta_i)_{i=1}^{m-1}$ defined in Section \ref{Braids}. This allows us to reduce the braidability condition to the fact that for any $n \in \mathbb{N}^{*}\cup (\mathbb{N}^{*})^{-1} $, any positive integers $m$ and $i$ such that $i<m$, the following equality in law holds: 
\begin{align*}
\beta_i \bullet \big(Z_{n,1},...,Z_{n,m}\big) = \big(Z_{n,1},...,Z_{n,m}\big).
\end{align*}
The proof does not depend on the value of $n$, we will suppose it is equal to $1$. We remind the reader that $Z_{1,i} = h\left(p_{i-1} \partial c_{i-1}^{i} p_{i-1}^{-1}\right)$: we have to understand the law of the random variables associated with $m$ lassos. Using the stochastic continuity of $\mathbb{E}_{dx}^{\sf Aff}$, we can ``shrink" the meander of these lassos and we can suppose that their intersection is reduced to the base point as we did in the proof of Theorem \ref{caracterisation0}. Let $i$ be a positive integer, we will focus only on what happens in the interior of $\partial c_{i-1}^{i+1}$. Let us consider the graphs $\mathbb{G}_1$ and $\mathbb{G}_2$ drawn in Figure \ref{Graphes3}. They represent what is happening in $\partial c_{i-1}^{i+1}$: the loops in bold represent the part of the $i^{th}$ and $i+1^{th}$ lassos inside $\partial c_{i-1}^{i+1}$ and we added to it two paths in dots in order to consider simple graphs. The two graphs satisfy the hypothesis of Theorem \ref{homeograph} thus, they are homeomorphic. Let us consider an orientation-preserving homeomorphism $\phi$ between $\mathbb{G}_1$ and $\mathbb{G}_2$ which sends $F_i$ on $F'_i$ for any $i \in \{1, ..., 5\}$. By Proposition \ref{triang}, there exists an orientation-preserving $\mathbb{G}_1-\mathbb{G}_2$ piecewise diffeomorphism $\psi$ which is equivalent to $\phi$ on $\mathbb{G}_1$: it sends $F_i$ on $F'_i$ for any $i \in \{1, ..., 5\}$. It is possible to take it such that  $\psi$ is the identity on the unbounded face of $\mathbb{G}_1$. Besides, one can remark that $\mathbb{G}_2$ is the horizontal flip of $\mathbb{G}_1$: for any integer $i \in \{1,...,5\}$, $dx(F_i) = dx(F'_i)$. Thus for any bounded face $F$ of $\mathbb{G}_1$, $dx(\psi(F))=dx(F)$. Using the area-preserving homeomorphism invariance, namely Axiom $\mathbf{wDP_1}$, $\mathbb{E}^{\mathbb{G}_2}_{dx} \circ \psi^{-1} = \mathbb{E}^{\mathbb{G}_1}_{dx}$. Letting the shrinking parameter to zero in this equality allows us to recover the following equality in law: under $\mathbb{E}_{dx}^{\sf Aff}$, $\beta_i \bullet \big(Z_{1,1},...,Z_{1,m}\big) = \big(Z_{1,1},...,Z_{1,m}\big)$. 
\end{proof}

\begin{figure}%[here]
 \centering
 \includegraphics[width=355pt]{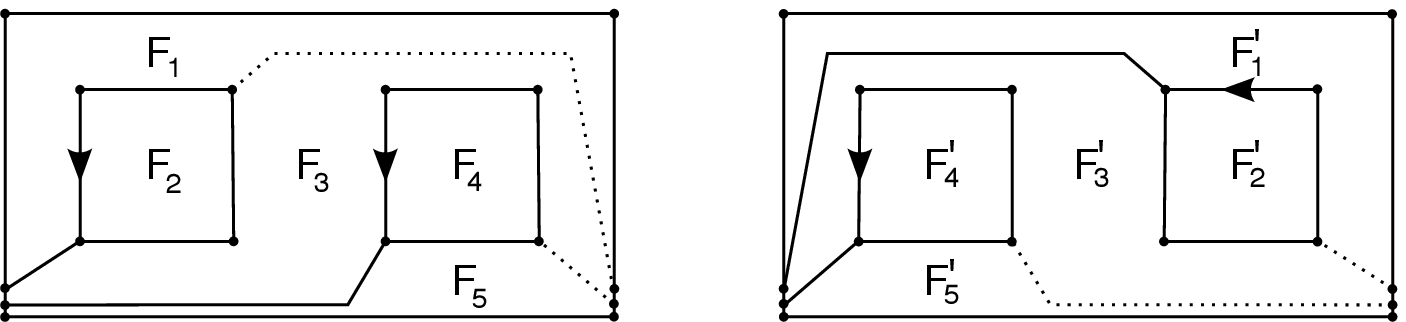}
 \caption{The graphs $\mathbb{G}_1$ and $\mathbb{G}_2$.}
 \label{Graphes3}
\end{figure}

Recall that we are working under $\mathbb{E}_{dx}^{\sf Aff}$. Using the results of Section \ref{2} and \ref{3}, we already know that the process $Z=(Z_t)_{t \in \mathbb{R}^{+}}$ is invariant by conjugation by $G$ and has braidable and $\mathcal{I}$-independent increments. By Theorem \ref{definetti}, there exists a self-invariant by conjugation L\'{e}vy process $Y$, such that the law of $Z$ is $UYU^{-1}$, where $U$ is a Haar variable on $G$ independent of $Y$. Lemma \ref{casyangmills}, combined with Theorem \ref{caracterisation0} allows us to finish the proof of Theorem \ref{caracterisation}.

\section[Consequences of the characterization theorem]{Consequences of Theorem \ref{caracterisation}}

\begin{theorem}
For a discrete planar Markovian holonomy field, the following conditions are equivalent: 
\begin{itemize}
\item it is stochastically continuous in law, 
\item it is regular. 
\end{itemize} 
If the discrete planar Markovian holonomy field is a weak one, then one can replace the regularity condition by the locally stochastically $\frac{1}{2}$-H\"{o}lder continuity.
\end{theorem}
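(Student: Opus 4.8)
The plan is to prove the equivalence of the two regularity notions by establishing both implications, relying crucially on the characterization Theorem \ref{caracterisation} which tells us that any stochastically continuous in law (weak) discrete planar Markovian holonomy field is a planar Yang-Mills field associated with some self-invariant by conjugation L\'evy process $Y$.

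First I would treat the direction that is already essentially contained in the excerpt. By Corollary \ref{coroll:regularity}, any regular (resp. locally stochastically $\frac{1}{2}$-H\"older continuous, in the weak case) discrete planar Markovian holonomy field is stochastically continuous in law. This gives one implication for free in both the strong and the weak settings, so no work is needed there; I would simply cite this corollary.

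The substantial direction is the converse: that stochastic continuity in law forces regularity. Here the strategy is to invoke the structure theorem. Given a stochastically continuous in law (weak) discrete planar Markovian holonomy field, Theorem \ref{caracterisation} identifies it as the discrete planar Yang-Mills field $\big(\mathbb{E}^{Y,\mathbb{G}}_{vol}\big)_{\mathbb{G},vol}$ associated with a self-invariant by conjugation L\'evy process $Y$. The key point is then that in the proof of Proposition \ref{creaYMpure} (and its consequence Proposition \ref{creaYMpure2}, extended to the general case via Theorem \ref{defplanarHFgeneral}) it is shown directly that the family $\big(\mathbb{E}^{Y,\mathbb{G}}_{vol}\big)_{\mathbb{G},vol}$ is continuously area-dependent and uniformly locally stochastically $\frac{1}{2}$-H\"older continuous. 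The H\"older bound there comes from Proposition \ref{croissanceLevy}, $\mathbb{E}[d_G(1,Y_t)]\leq K\sqrt{t}$, applied to the loop $\mathbb{G}(l)$ bounding a disk $D$, giving $\int d_G(1,h(l))\,\mathbb{E}^{Y,\mathbb{G}}_{vol}(dh)=\mathbb{E}[d_G(1,Y_{vol(D)})]\leq K\sqrt{vol(D)}$ with $K$ depending only on $G$. Since by Definition the conjunction of these two properties is exactly regularity, the field is regular. For the weak case, I would note that the locally stochastically $\frac{1}{2}$-H\"older continuity is precisely the part that is established this way, which is why it suffices to replace the full regularity condition by this weaker one.

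The main obstacle I anticipate is bookkeeping rather than conceptual: one must carefully transport the regularity estimates, which were proved for the planar Yang-Mills fields indexed by graphs in $\mathcal{G}({\sf Aff}(\mathbb{R}^2))$ and for arbitrary L\'evy processes only after the extension-of-structure-group argument of Theorem \ref{defplanarHFgeneral}, back to the abstract field we started with, checking that the identification furnished by Theorem \ref{caracterisation} is compatible with the H\"older and area-continuity bounds on \emph{all} finite planar graphs (not merely the piecewise affine ones). This compatibility is handled by the approximation result Theorem \ref{approxgraph} together with the stochastic continuity, exactly as in the final paragraphs of the proof of Proposition \ref{creaYMpure}, so I would indicate this reduction and then cite those computations rather than redo them.
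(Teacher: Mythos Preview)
Your proposal is correct and follows essentially the same approach as the paper: one direction is Corollary \ref{coroll:regularity}, and for the converse you invoke Theorem \ref{caracterisation} to identify the field with a discrete planar Yang-Mills field associated with a self-invariant by conjugation L\'evy process, then import the regularity estimates from the proof of Proposition \ref{creaYMpure}. The paper handles the passage from piecewise affine to general graphs exactly as you anticipate, via stochastic continuity in law of both fields, and notes (as you do through Theorem \ref{defplanarHFgeneral}) that a slight modification is needed because $Y$ is only self-invariant rather than $G$-invariant by conjugation.
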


\begin{proof}
We already saw in Corollary \ref{coroll:regularity} that, depending if we are working with weak or strong discrete planar Markovian holonomy fields, the regularity or the locally stochastically $\frac{1}{2}$-H\"{o}lder continuity implies the stochastically continuity in law of the discrete planar Markovian holonomy field. 

Besides if a discrete planar Markovian holonomy field $\left(\mathbb{E}^{\mathbb{G}}_{vol}\right)_{\mathbb{G}, vol}$ is stochastically continuous in law, its restriction to the piecewise affine graphs is a stochastically continuous in law weak discrete planar Markovian holonomy field. By Theorem \ref{caracterisation}, there exists a planar Yang-Mills field $\left(\mathbb{E}^{Y}_{vol}\right)_{vol}$ such that $$\left(\mathbb{E}^{\mathbb{G}}_{vol}\right)_{\mathbb{G} \in \mathcal{G}({\sf Aff}(\mathbb{R}^{2})), vol} = \left((\mathbb{E}^{Y}_{vol})_{|\mathcal{M}ult(P(\mathbb{G}), G)}\right)_{\mathbb{G} \in \mathcal{G}({\sf Aff}(\mathbb{R}^{2})), vol}.$$ Using the stochastic continuity in law of both of the fields, this equality holds without the restriction on the graphs. Using the proof of Proposition \ref{creaYMpure}, up to a slight modification since $Y$ is only self-invariant by conjugation, $\left((\mathbb{E}^{Y}_{vol})_{|\mathcal{M}ult(P(\mathbb{G}), G)}\right)_{\mathbb{G}, vol}$ is locally stochastically $\frac{1}{2}$-H\"{o}lder continuous and continuously area-dependent, thus $\left(\mathbb{E}^{\mathbb{G}}_{vol}\right)_{\mathbb{G}, vol}$ is also locally stochastically $\frac{1}{2}$-H\"{o}lder continuous and continuously area-dependent. 
\end{proof}

In Section \ref{planarmarkovianholonomyfields} we have defined four different notions of planar Markovian holonomy fields. By now, we know that, by restriction, a strong planar Markovian holonomy field defines a weak continuous one. Using the results of Section \ref{weakconstructibility}, a weak planar Markovian holonomy field defines, when restricted, a weak discrete planar Markovian holonomy field. Theorem \ref{caracterisation} now allows us to show that the four different notions are in some sense equivalent when one considers stochastically continuous objects. Indeed, a stochastically continuous in law weak discrete planar Markovian holonomy field is the restriction of a planar Yang-Mills field, which by the results of Section \ref{planaryangmills} was shown to be a stochastically continuous strong planar Markovian holonomy field. Besides, by construction, any planar Yang-Mills field is constructible. This discussion allows us to state the following theorems.

\begin{theorem}
\label{maintheoremdecaract}
Let $\left(\mathbb{E}_{vol}\right)_{vol}$ be a family of stochastically continuous random holonomy fields. We have equivalence between: 
\begin{enumerate}
\item $\left(\mathbb{E}_{vol}\right)_{vol}$ is a strong planar Markovian holonomy field, 
\item $\left(\left(\mathbb{E}_{vol}\right)_{| \mathcal{M}ult({\sf Aff}(\mathbb{R}^{2}), G)}\right)_{vol}$ is a weak planar Markovian holonomy field,
\item $\left(\left(\mathbb{E}_{vol}\right)_{| \mathcal{M}ult(P(\mathbb{G}), G)}\right)_{\mathbb{G}, vol}$ is a strong discrete planar Markovian holonomy field, 
\item $\left(\left(\mathbb{E}_{vol}\right)_{| \mathcal{M}ult(P(\mathbb{G}), G)}\right)_{\mathbb{G} \in \mathcal{G}({\sf Aff}(\mathbb{R}^{2})), vol}$ is a weak discrete planar Markovian holonomy field, 
\item $\left(\mathbb{E}_{vol}\right)_{vol}$ is a planar Yang-Mills field associated with a L\'{e}vy process which is self-invariant by conjugation. 
\end{enumerate}
Thus, any stochastically continuous strong planar Markovian holonomy field is constructible. 
\end{theorem}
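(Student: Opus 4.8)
The plan is to prove Theorem~\ref{maintheoremdecaract} by establishing a cycle of implications that ties together the four notions of planar Markovian holonomy field and the planar Yang-Mills fields, using the major results already assembled in the excerpt. The strategy is to show $(5) \Rightarrow (1) \Rightarrow (2) \Rightarrow (4) \Rightarrow (5)$ and $(1) \Rightarrow (3) \Rightarrow (4)$, so that all five statements become equivalent. The final sentence of the theorem, that any stochastically continuous strong planar Markovian holonomy field is constructible, then follows immediately from the equivalence $(1) \Leftrightarrow (5)$ together with the fact, recorded in Proposition~\ref{creaYMpure2} and Theorem~\ref{defplanarHFgeneral}, that every planar Yang-Mills field is constructible.

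First I would handle the easy downward restrictions, which are essentially bookkeeping. The implication $(1) \Rightarrow (2)$ is the observation, made just after Definition~\ref{weakplanarHF} and relying on Remark~\ref{I-indep-et-indep-normale}, that a strong planar Markovian holonomy field restricts to a weak one; here the $\mathcal{I}$-independence of Axiom $\mathbf{P_2}$ becomes the genuine independence of $\mathbf{wP_2}$ precisely because the relevant families of paths have disjoint endpoint sets. Next, $(2) \Rightarrow (4)$ is Theorem~\ref{weakconst}: any weak planar Markovian holonomy field is constructible, so its restriction to graphs in $\mathcal{G}({\sf Aff}(\mathbb{R}^2))$ is a weak discrete planar Markovian holonomy field. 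Similarly $(1) \Rightarrow (3)$ is the strong analogue, and $(3) \Rightarrow (4)$ is again a restriction to piecewise affine graphs together with the passage from $\mathcal{I}$-independence to independence. For all these steps I would invoke the stochastic continuity hypothesis to guarantee that the restricted objects inherit stochastic continuity in law, using Remark~\ref{restric}.

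The crucial upward step is $(4) \Rightarrow (5)$, and this is where the heavy machinery of the paper is deployed. Starting from a weak discrete planar Markovian holonomy field that is stochastically continuous in law, Theorem~\ref{caracterisation} produces a self-invariant by conjugation L\'evy process $(Y_t)_{t \geq 0}$ such that the given field coincides with the discrete planar Yang-Mills field associated with $Y$. By Theorem~\ref{defplanarHFgeneral}, this process $Y$ determines a genuine stochastically continuous strong planar Markovian holonomy field $(\mathbb{E}^Y_{vol})_{vol}$, and by Propositions~\ref{creaYMpure} and \ref{creaYMpure2} its restriction to piecewise affine graphs agrees with the discrete Yang-Mills field. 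The point is that both $(\mathbb{E}_{vol})_{vol}$ and $(\mathbb{E}^Y_{vol})_{vol}$ are stochastically continuous, so by the uniqueness statement in Proposition~\ref{unicite1} (together with the density of ${\sf Aff}(\mathbb{R}^2)$ from Lemma~\ref{dense}) an equality on piecewise affine graphs forces equality everywhere; hence $(\mathbb{E}_{vol})_{vol}$ is exactly the planar Yang-Mills field associated with $Y$, giving $(5)$. Finally $(5) \Rightarrow (1)$ is Proposition~\ref{creaYMpure2} in the pure case and Theorem~\ref{defplanarHFgeneral} in general.

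The main obstacle is the $(4) \Rightarrow (5)$ direction, but its difficulty has in fact already been absorbed into Theorem~\ref{caracterisation}, whose proof occupies the bulk of the chapter and rests on the braidability of the one-dimensional time process together with the de~Finetti-type Theorem~\ref{definetti}. Given that theorem, the remaining work in the present statement is to carefully track stochastic continuity through the restriction and extension operations, so that the uniqueness Proposition~\ref{unicite1} can be applied to promote the equality on $\mathcal{G}({\sf Aff}(\mathbb{R}^2))$ to an equality of full planar holonomy fields. I would take particular care that the extension of the structure group from $H_Y$ to $G$, via Proposition~\ref{extensiondiscret}, is compatible with the stochastic continuity used in the uniqueness argument, since this is the one place where the self-invariant (rather than fully conjugation-invariant) nature of $Y$ could cause a subtlety.
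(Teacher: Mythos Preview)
Your proposal is essentially correct and mirrors the paper's own argument, which is the discussion immediately preceding the theorem: the cycle $(5)\Rightarrow(1)\Rightarrow(2)\Rightarrow(4)\Rightarrow(5)$ via Theorem~\ref{defplanarHFgeneral}/Proposition~\ref{creaYMpure2}, the remark after Definition~\ref{weakplanarHF}, Theorem~\ref{weakconst}, and Theorem~\ref{caracterisation}, with uniqueness from Proposition~\ref{unicite1} and Lemma~\ref{dense} to lift the graph-level equality to all of $P(\mathbb{R}^2)$.

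One point deserves correction. You call $(1)\Rightarrow(3)$ ``the strong analogue'' of Theorem~\ref{weakconst} and list it among the ``easy downward restrictions, essentially bookkeeping''. It is neither: the paper explicitly warns (just before Definition~\ref{constructible1}) that Axiom $\mathbf{DP_1}$ is \emph{not} a discrete version of $\mathbf{P_1}$, since $\mathbf{DP_1}$ only requires the homeomorphism to match face areas, not to transport $vol$ to $vol'$ globally. The Moser--Dacorogna argument in Theorem~\ref{weakconst} genuinely uses the piecewise-affine structure (polygons with Lipschitz boundary), and no ``strong analogue'' for arbitrary graphs is proved anywhere in the paper. The correct route to $(3)$ is the one you already identify for the constructibility statement: $(1)\Rightarrow(5)$ via the main cycle, and then $(5)\Rightarrow(3)$ because planar Yang-Mills fields are constructible by construction (Proposition~\ref{creaYMpure2}, Theorem~\ref{defplanarHFgeneral}). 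Since you do have $(5)\Rightarrow(3)\Rightarrow(4)$ available, your direct $(1)\Rightarrow(3)$ is redundant and can simply be dropped; the equivalence of all five conditions still follows.
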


\begin{theorem}
\label{strongconst}
Any $G$-valued stochastically continuous in law weak discrete planar Markovian holonomy field is the restriction of a unique $G$-valued stochastically continuous strong planar Markovian holonomy field. 
\end{theorem}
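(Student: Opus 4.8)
The plan is to deduce Theorem~\ref{strongconst} as a direct corollary of the constructions already assembled in the paper, chaining together the characterization theorem with the existence and constructibility results for planar Yang-Mills fields. The statement asserts both existence of a stochastically continuous strong planar Markovian holonomy field extending a given weak discrete one, and its uniqueness, so I would treat these two halves separately.

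For existence, I would start from an arbitrary $G$-valued stochastically continuous in law weak discrete planar Markovian holonomy field $\left(\mathbb{E}^{\mathbb{G}}_{vol}\right)_{\mathbb{G}, vol}$. By Theorem~\ref{caracterisation}, there exists a $G$-valued L\'{e}vy process $\left(Y_t\right)_{t \geq 0}$, self-invariant by conjugation, such that $\left(\mathbb{E}^{\mathbb{G}}_{vol}\right)_{\mathbb{G}, vol}$ is precisely the weak discrete planar Yang-Mills field $\left(\mathbb{E}^{Y,\mathbb{G}}_{vol}\right)_{\mathbb{G}, vol}$ associated with $Y$. Now Theorem~\ref{defplanarHFgeneral} supplies a stochastically continuous strong planar Markovian holonomy field $\left(\mathbb{E}^{Y}_{vol}\right)_{vol}$ associated with this same process $Y$, and by the Definition following Theorem~\ref{defplanarHFgeneral}, this planar Yang-Mills field is constructible, with its restriction to finite planar graphs being exactly the discrete planar Yang-Mills field $\left(\mathbb{E}^{Y,\mathbb{G}}_{vol}\right)_{\mathbb{G}, vol}$. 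Restricting further to the graphs in $\mathcal{G}\left({\sf Aff}(\mathbb{R}^{2})\right)$ recovers the weak discrete field we started with, so $\left(\mathbb{E}^{Y}_{vol}\right)_{vol}$ is the desired strong planar Markovian holonomy field extending $\left(\mathbb{E}^{\mathbb{G}}_{vol}\right)_{\mathbb{G}, vol}$.

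For uniqueness, suppose $\left(\mathbb{E}_{vol}\right)_{vol}$ is any stochastically continuous strong planar Markovian holonomy field whose restriction to the piecewise affine graphs equals $\left(\mathbb{E}^{\mathbb{G}}_{vol}\right)_{\mathbb{G}, vol}$. By the equivalences established in Theorem~\ref{maintheoremdecaract}, $\left(\mathbb{E}_{vol}\right)_{vol}$ is itself a planar Yang-Mills field associated with some self-invariant by conjugation L\'{e}vy process, and its restriction to $\mathcal{G}\left({\sf Aff}(\mathbb{R}^{2})\right)$ is a stochastically continuous in law weak discrete field. Two such strong fields having the same piecewise affine restriction must agree on $\mathcal{M}ult({\sf Aff}(\mathbb{R}^{2}), G)$; since each $\mathbb{E}_{vol}$ is stochastically continuous, Proposition~\ref{unicite1} together with the density Lemma~\ref{dense} forces equality of the two strong fields on all of $\mathcal{M}ult(P(\mathbb{R}^{2}), G)$.

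The main obstacle, if any, is purely bookkeeping: one must check that the restriction operations compose correctly, namely that restricting the constructed strong field $\left(\mathbb{E}^{Y}_{vol}\right)_{vol}$ first to all finite planar graphs (via constructibility) and then to the piecewise affine graphs genuinely reproduces the input weak discrete field, rather than merely a field sharing the same associated L\'{e}vy process up to the equivalence noted in the Remark after Theorem~\ref{caracterisation}. Here the gauge-invariance of all the measures involved is essential: since weak discrete planar Markovian holonomy fields are determined by the laws of holonomies of facial lassos on invariant $\sigma$-fields, and Theorem~\ref{caracterisation} already identifies the input field with $\left(\mathbb{E}^{Y,\mathbb{G}}_{vol}\right)_{\mathbb{G}, vol}$ on the nose (not merely up to equivalence of the process), the two restrictions coincide. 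I expect the entire argument to occupy only a short paragraph once these references are invoked in sequence.
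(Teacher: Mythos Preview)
Your proposal is correct and follows essentially the same route as the paper: the paper presents Theorem~\ref{strongconst} as an immediate consequence of the discussion preceding it, which chains Theorem~\ref{caracterisation} (to identify the weak discrete field as a weak discrete Yang-Mills field), Theorem~\ref{defplanarHFgeneral} (to produce the associated strong field), and constructibility (to verify the restriction matches), with uniqueness coming from stochastic continuity and density of ${\sf Aff}(\mathbb{R}^2)$ via Proposition~\ref{unicite1} and Lemma~\ref{dense}. Your invocation of Theorem~\ref{maintheoremdecaract} in the uniqueness argument is unnecessary (and mildly circular, since both theorems are stated together as consequences of the same discussion), but the substantive step you give---two stochastically continuous gauge-invariant fields agreeing on piecewise affine loops must coincide---is exactly right and stands on its own.
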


We encourage the reader to have a look at the diagram page \pageref{diagram} where we drawn the different links between all the notions introduced or used in this paper. The last consequence of Theorem \ref{caracterisation} is the Proposition \ref{restrictiondiscret}. Before giving the proof of this proposition, we will construct an explicit $G$-valued stochastically continuous in law discrete planar Markovian holonomy field $(\mathbb{E}_{vol}^{\mathbb{G}})_{\mathbb{G}, vol}$ which satisfies the hypothesis of Proposition \ref{restrictiondiscret} but for which the natural restriction defined by the Equation (\ref{restrict}) is not a discrete planar Markovian holonomy field. 

\label{ex:restriction}For this, we consider the symmetrical group $G = \mathfrak{S}_3$. Let $H$ be the subgroup of $G$ isomorphic to $\mathbb{Z}/3\mathbb{Z}$ which contains the neutral element $e$ and the two $3$-cycles $c$ and $c^{2}$. Let $X$ be a $H$-valued L\'{e}vy process which jumps only by multiplication by $c$. As $H$ is abelian, $X$ is a self-invariant by conjugation $G$-valued L\'{e}vy process and, because of the condition on the jumps, for any positive time $t$: 
\begin{align}
\label{ineg}
\mathbb{P}[X_t=c] \ne \mathbb{P}\left[X_t=c^{2}\right]. 
\end{align}
Let $\left(\mathbb{E}^{X}_{vol}\right)_{vol}$ be the $G$-valued planar Yang-Mills field associated with $X$ and let us consider the restriction $\big(\mathbb{E}^{X,\mathbb{G}}_{vol}\big)_{\mathbb{G}, vol}$ of $\left(\mathbb{E}^{X}_{vol}\right)_{vol}$ to the finite planar graphs: it is a $G$-valued stochastically continuous in law discrete planar Markovian holonomy field. Since $H$ is normal in $G$, $\big(\mathbb{E}^{X,\mathbb{G}}_{vol}\big)_{\mathbb{G}, vol}$ satisfies the conditions stated in Proposition \ref{restrictiondiscret}. The natural restriction of $\big(\mathbb{E}^{X,\mathbb{G}}_{vol}\big)_{\mathbb{G}, vol}$, as defined by the Equation (\ref{restrict}), and denoted by $\big(\tilde{\mathbb{E}}^{X,\mathbb{G}}_{vol}\big)_{\mathbb{G}, vol}$, is neither a strong nor a weak $H$-valued discrete planar Markovian holonomy field since it does not satisfy the weak independence property. Indeed, let us consider two loops $l$, $l'$ and a path $p$ as drawn in Figure \ref{restrictionfig} and let us suppose that they are drawn on a finite planar graph $\mathbb{G}$. Let us suppose that $vol({\sf Int}(l)) = vol({\sf Int}(l')) = 1$. If $\big(\tilde{\mathbb{E}}^{X,\mathbb{G}}_{vol}\big)_{\mathbb{G}, vol}$ satisfied the weak independence property, under $\tilde{\mathbb{E}}^{X,\mathbb{G}}_{vol}$, $h(l)$ and $h(l')$ would be independent. Recall Equation (\ref{eq:egalitenoninde}), where $f_1$ and $f_2$ are, in this case, any functions on $H$ since $H$ is abelian. Under $\mathbb{E}^{X,\mathbb{G}}_{vol}$, the random couple $(h(l), h(pl'p^{-1}))$ has the same law as $(UAU^{-1}, UBU^{-1})$ where $A$, $B$ and $U$ are three independent random variables such that $U$ is a Haar variable on $G$ and the two random variables $A$ and $B$ have the same law as $X_1$. The two random variables $UAU^{-1}$ and $UBU^{-1}$ are not independent since the Equality (\ref{ineg}) implies that: $$\mathbb{P}\big[ (UAU^{-1}, UBU^{-1}) = (c,c)\big]\ne  \mathbb{P}[UAU^{-1} = c] \mathbb{P}[UBU^{-1}=c].$$ This proves that under  $\tilde{\mathbb{E}}^{X,\mathbb{G}}_{vol}$, $h(l)$ and $h(l')$ are not independent. 

Using Theorem \ref{caracterisation}, we can now prove Proposition \ref{restrictiondiscret}.

\label{restrictiondiscretpage}
\begin{proof}[Proof of Proposition \ref{restrictiondiscret}]
As a consequence of Theorem \ref{caracterisation}, $\big(\mathbb{E}_{vol}^{\mathbb{G}}\big)_{\mathbb{G}, vol}$ is a discrete planar Yang-Mills field associated with a self-invariant by conjugation L\'{e}vy process $(Y_t)_{t \geq 0}$. Besides, the law of a simple loop $l$ under $\mathbb{E}^{\mathbb{G}(l)}_{vol}$ is equal to $\int_{G}m_{vol({\sf Int(}l))}^{g} dg$ where $m_{vol({\sf Int(}l))}$ is the law of $Y_{vol({\sf Int(}l))}$. Hence, under the hypothesis of Proposition \ref{restrictiondiscret}, for any positive real $t$, $Y_t$ is almost surely in $H$: we can find a modification of $(Y_t)_{t \geq 0}$ which is $H$-valued. Using Theorem \ref{defplanarHFgeneral}, we can define a $H$-valued planar Yang-Mills field, associated with the $H$-valued L\'{e}vy process $(Y_t)_{t \geq 0}$, whose restriction to planar graphs is a $H$-valued stochastically continuous in law discrete planar Markovian holonomy field which satisfies the required conditions. 
\end{proof}

	%-----------------------------------------------------------------------
% Beginning of chap1.tex
%-----------------------------------------------------------------------
%
%  AMS-LaTeX sample file for a chapter of a monograph, to be used with
%  an AMS monograph document class.  This is a data file input by
%  chapter.tex.
%
%  Use this file as a model for a chapter; DO NOT START BY removing its
%  contents and filling in your own text.
% 
%%%%%%%%%%%%%%%%%%%%%%%%%%%%%%%%%%%%%%%%%%%%%%%%%%%%%%%%%%%%%%%%%%%%%%%%

\chapter[Classification of Planar Markovian Holonomy Fields]{Classification of Stochastically Continuous Strong Planar Markovian Holonomy Fields}
\label{Charac}

Let $\big(\mathbb{E}_{vol}\big)_{vol}$ be a stochastically continuous strong planar Markovian holonomy field: it is a planar Yang-Mills field to which is associated a L\'{e}vy process $Y=(Y_t)_{t \geq 0}$. In Definition \ref{defclassification}, the notions of pure non-degenerate/pure degenerate/mixed degenerate planar Yang-Mills field were defined according to the degree of symmetry and the support of $Y$.  In this section, we will see equivalent conditions for $\big(\mathbb{E}_{vol}\big)_{vol}$  to be in each of these categories. The theorems explained below are straightforward applications of Theorem \ref{caracterisation} and Section \ref{deg-proc}. Indeed, by definition, $\big(\mathbb{E}_{vol}\big)_{vol}$ is a pure non-degenerate (resp. pure) planar Yang-Mills field if and only if $Y$ is pure non-degenerate (resp. pure). Applying Proposition \ref{unicity1-proc} (resp. Proposition \ref{unicity2-proc}, resp. Theorem \ref{unicity3-proc}) to the process $Z_t$ defined in Definition \ref{process} allows us to prove Theorem \ref{unicity1-f} (resp. Theorem \ref{unicity2-f}, resp. Theorem \ref{unicity3-f}). 

The first theorem gives an equivalent condition, when $G$ is a finite group, for $\big(\mathbb{E}_{vol}\big)_{vol}$ to be a pure non-degenerate planar Yang-Mills field.

\begin{theorem}
\label{unicity1-f}
Let $G$ be a finite group, let $\big(\mathbb{E}_{vol}\big)_{vol}$ be a stochastically continuous strong planar Markovian holonomy field. It is a pure non-degenerate planar Yang-Mills field if and only if for any simple loop $l$, for any measure of area $vol$, the support of $h(l)$ under $\mathbb{E}_{vol}$ is $G$. 
\end{theorem}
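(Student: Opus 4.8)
The plan is to apply Theorem \ref{unicity1-proc} to the one-dimensional time process $Z_t = h(L_0^t)$ associated with $\big(\mathbb{E}_{vol}\big)_{vol}$, exactly as the preamble to this theorem already indicates. First I would invoke the characterization Theorem \ref{caracterisation} to write $\big(\mathbb{E}_{vol}\big)_{vol}$ as the planar Yang-Mills field attached to a self-invariant by conjugation L\'evy process $Y$, and recall that by definition $\big(\mathbb{E}_{vol}\big)_{vol}$ is pure non-degenerate if and only if $Y$ is pure non-degenerate. The link between $Y$ and the observable process is Lemma \ref{casyangmills}: under $\mathbb{E}^{Y}_{dx}$, the process $\big(h(L_0^t)\big)_{t \geq 0} = (Z_t)_{t \geq 0}$ has the law of $\big(U Y_t U^{-1}\big)_{t \geq 0}$ with $U$ a Haar variable independent of $Y$. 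Since conjugating by a Haar variable does not change whether $Y$ is a pure non-degenerate L\'evy process (it only amounts to a global conjugation), $Y$ is pure non-degenerate if and only if $Z$ is a pure non-degenerate L\'evy process.

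Next I would verify that $Z = (Z_t)_{t \geq 0}$ satisfies the hypotheses of Proposition \ref{unicity1-proc}: by Remark \ref{debutlevy} it is stochastically continuous with $Z_0 = e$, by the results of Section \ref{3} (Proposition \ref{braidable}) it has braidable increments, and by Lemma \ref{conjugclass} it has $\mathcal{I}$-independent increments; being a conjugate of a L\'evy process, $Z$ is invariant by conjugation by $G$ in law. Proposition \ref{unicity1-proc} then tells us that $Z$ is a pure non-degenerate L\'evy process if and only if there exists $t \in \mathbb{R}^{+}$ with ${\sf Supp}(Z_t) = G$, and moreover that this forces ${\sf Supp}(Z_t) = G$ for \emph{all} $t$. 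The remaining task is to translate the condition ``${\sf Supp}(Z_t) = G$ for some/all $t$'' into the stated condition ``the support of $h(l)$ under $\mathbb{E}_{vol}$ is $G$ for any simple loop $l$ and any measure of area $vol$''.

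For this translation I would argue as follows. Fix a simple loop $l$ bounding a disk of $vol$-area $a = vol({\sf Int}(l))$. By the area-preserving homeomorphism invariance (Axiom $\mathbf{P_1}$) and the locality property, the law of $h(l)$ under $\mathbb{E}_{vol}$ depends only on $a$ and coincides, up to global conjugation by a Haar element coming from gauge-invariance, with the law of $Y_a$; equivalently it has the same support as $Z_a = h(L_0^a)$ after taking the conjugation-orbit closure. Since $Z_a$ is already invariant by conjugation, ${\sf Supp}(h(l)) = {\sf Supp}(Z_a)$. As $l$ and $vol$ range over all choices, $a$ ranges over all of $\mathbb{R}^{+}_{*}$, so the condition ``${\sf Supp}(h(l)) = G$ for every simple loop and every $vol$'' is exactly ``${\sf Supp}(Z_t) = G$ for every $t > 0$''. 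Combining with the equivalence from Proposition \ref{unicity1-proc} (which makes ``some $t$'' and ``all $t$'' interchangeable) yields the theorem in both directions.

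The main obstacle I anticipate is the careful handling of the conjugation by the Haar variable $U$ in Lemma \ref{casyangmills}: I must check that passing from $Y_t$ to $U Y_t U^{-1}$ preserves the property of having full support, and that the support of $h(l)$ genuinely equals ${\sf Supp}(Z_a)$ rather than merely its conjugacy-class closure. Because $G$ is finite here, $U Y_t U^{-1}$ ranges over $\bigcup_{g} g\,{\sf Supp}(Y_t)\,g^{-1}$, so full support of $Z_t$ is equivalent to $\bigcup_g g\,{\sf Supp}(Y_t)\,g^{-1} = G$; I would need Lemma \ref{support} (${\sf Supp}(Y_t) = H_Y$ for all $t$) together with the finiteness of $G$ to see that this union equals $G$ precisely when $Y$ is non-degenerate, i.e. $H_Y = G$, which is the heart of why the one-loop support condition detects non-degeneracy. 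This is exactly the point where Jordan's theorem (Theorem \ref{Jordan}) enters, already packaged inside Proposition \ref{unicity1-proc}, so no new difficulty arises beyond bookkeeping.
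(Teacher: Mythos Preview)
Your proposal is correct and follows exactly the route the paper sketches in the preamble to Chapter~\ref{Charac}: invoke Theorem~\ref{caracterisation} to identify the field with a planar Yang--Mills field for some self-invariant $Y$, verify via Remark~\ref{debutlevy}, Lemma~\ref{conjugclass} and Proposition~\ref{braidable} that $Z_t=h(L_0^t)$ satisfies the hypotheses of Proposition~\ref{unicity1-proc}, and then translate the support condition on $Z_t$ into the support condition on $h(l)$ using the explicit law $\int_G m_a^g\,dg$ from Theorem~\ref{defplanarHFgeneral} (or, equivalently, area-preserving invariance). Your final paragraph correctly isolates the one genuine point of substance---that ${\sf Supp}(Z_t)=\bigcup_g g H_Y g^{-1}=G$ forces $H_Y=G$ via Jordan's theorem---and rightly observes that this is already absorbed in the proof of Proposition~\ref{unicity1-proc} (through Proposition~\ref{unicity1} and Lemma~\ref{support}), so nothing new is needed.
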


Let $G$ be a compact Lie group and $\big(\mathbb{E}_{vol}\big)_{vol}$ be a $G$-valued stochastically continuous strong planar Markovian holonomy field. The second theorem gives an equivalent condition for $\big(\mathbb{E}_{vol}\big)_{vol}$ to be a pure non-degenerate planar Yang-Mills field.

\begin{theorem}
\label{unicity2-f}
Let us suppose that for any loop $l$ and any measure of area $vol$, under $\big(\mathbb{E}_{vol}\big)_{vol}$, $e$ is in the support of $h(l)$. The planar Markovian holonomy field $\big(\mathbb{E}_{vol}\big)_{vol}$ is a pure non-degenerate planar Yang-Mills field if and only if for any sequence of simple loops $\big(l_{n}\big)_{n\in\mathbb{N}}$ in $\mathbb{R}^{2}$ satisfying $vol \big({\sf Int(}l_{n})\big) \underset{n \to \infty}{\longrightarrow} \infty$, one has: $$\mathbb{E}_{vol} \circ h(l_{n})^{-1}  \underset{n \to \infty}{\longrightarrow} \lambda_{G},$$ where $\lambda_G$ is the Haar measure on $G$.
\end{theorem}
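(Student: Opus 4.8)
The plan is to reduce Theorem \ref{unicity2-f} to its one-dimensional counterpart, Proposition \ref{unicity2-proc}, via the characterization theorem. By Theorem \ref{caracterisation}, the stochastically continuous strong planar Markovian holonomy field $\big(\mathbb{E}_{vol}\big)_{vol}$ is a planar Yang-Mills field associated with a self-invariant by conjugation L\'{e}vy process $Y=(Y_t)_{t\geq0}$. Working under $\mathbb{E}_{dx}^{\sf Aff}$, I would consider the process $Z=(Z_t)_{t\in\mathbb{R}^{+}}$ defined in Definition \ref{process} by $Z_t = h(L_0^{t})$. The results of Sections \ref{2} and \ref{3} (Lemma \ref{conjugclass} and Proposition \ref{braidable}) show that $Z$ is invariant by conjugation by $G$ and has braidable and $\mathcal{I}$-independent increments, so $Z$ satisfies precisely the hypotheses of Proposition \ref{unicity2-proc}, provided we also verify the support condition $e\in{\sf Supp}(Z_t)$ for all $t$.

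First I would translate the two ingredients of the statement into conditions on $Z$. The hypothesis that $e$ lies in the support of $h(l)$ for every loop $l$ and every measure of area, applied in particular to the loops $L_0^{t}$ under $\mathbb{E}_{dx}^{\sf Aff}$ (whose law is that of $h(\partial c_0^{t})$ by Remark \ref{debutlevy}), gives $e\in{\sf Supp}(Z_t)$ for every $t\in\mathbb{R}^{+}$. This is exactly the standing assumption of Proposition \ref{unicity2-proc}. Next, by definition $\big(\mathbb{E}_{vol}\big)_{vol}$ is a pure non-degenerate planar Yang-Mills field if and only if the associated L\'{e}vy process is pure and non-degenerate, which by Lemma \ref{casyangmills} and Theorem \ref{caracterisation0} is equivalent to $Z$ being a pure non-degenerate L\'{e}vy process. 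Thus condition $1$ of Proposition \ref{unicity2-proc} matches our target property.

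It then remains to identify the convergence condition. Condition $2$ of Proposition \ref{unicity2-proc} asserts that $Z_t$ converges in law to a Haar random variable as $t\to\infty$. I would show this is equivalent to the condition stated in Theorem \ref{unicity2-f}, namely that $\mathbb{E}_{vol}\circ h(l_n)^{-1}\to\lambda_G$ for every sequence of simple loops with $vol({\sf Int}(l_n))\to\infty$. For the forward direction, given any such sequence, by Axiom $\mathbf{wDP_1}$ and $\mathbf{wDP_3}$ together with the area-preserving homeomorphism invariance, the law of $h(l_n)$ under $\mathbb{E}_{vol}$ depends only on $vol({\sf Int}(l_n))$ and equals the law of $Z_{vol({\sf Int}(l_n))}$ (up to the global conjugation by a Haar variable $U$, which does not affect convergence to $\lambda_G$ since $\lambda_G$ is conjugation-invariant); letting $t_n = vol({\sf Int}(l_n))\to\infty$ gives the claim from $Z_{t_n}\to\lambda_G$. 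Conversely, given the Haar convergence of $Z_t$, any sequence of simple loops with areas tending to infinity realizes a sequence $t_n\to\infty$, and the same law-identification yields $\mathbb{E}_{vol}\circ h(l_n)^{-1}\to\lambda_G$.

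The main obstacle I anticipate is the careful bookkeeping of the global Haar conjugation $U$ that appears because $Y$ is only self-invariant (not $G$-invariant) by conjugation: one must check that passing between the law of $h(l_n)$ and the marginal law of $Z_{t_n}=UY_{t_n}U^{-1}$ is harmless for the specific limit $\lambda_G$, using that $U\lambda_G U^{-1}=\lambda_G$ and that $Y_t$ itself is pure non-degenerate exactly when its support generates $G$. Once this correspondence between the loop-indexed field and the one-parameter process $Z$ is pinned down cleanly, the theorem follows immediately by invoking Proposition \ref{unicity2-proc}, so the substance of the argument lies entirely in the reduction rather than in any new probabilistic estimate.
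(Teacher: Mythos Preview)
Your proposal is correct and follows essentially the same approach as the paper: reduce to Proposition \ref{unicity2-proc} applied to the process $Z_t=h(L_0^t)$, using Theorem \ref{caracterisation} and the results of Sections \ref{2}--\ref{3} to verify its hypotheses. The paper dispatches this in one sentence, and your extra care about the Haar conjugation $U$ is in fact unnecessary since the law of $h(l_n)$ under $\mathbb{E}_{vol}$ already coincides with the law of $Z_{t_n}$ (both equal $\int_G m_{t_n}^g\,dg$), but this does not affect correctness.
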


One could remove the condition on the support of $h(l)$ if one could understand the support of any L\'{e}vy process which is invariant by conjugation. The third theorem gives an equivalent condition for $\big(\mathbb{E}_{vol}\big)_{vol}$ to be a pure planar Yang-Mills field.

\begin{theorem}
\label{unicity3-f}
The planar Markovian holonomy field $\big(\mathbb{E}_{vol}\big)_{vol}$ is a pure planar Yang-Mills field if and only there exists a L\'{e}vy process $(X_t)_{t\geq0}$ such that for any simple loop $l$, for any measure of area $vol$, the law of $h(l)$ under $\big(\mathbb{E}_{vol}\big)_{vol}$ is the law of $X_{vol({\sf Int(}l))}$.
If this condition holds, then $(X_t)_{t\geq0}$ is invariant by conjugation and it is the unique L\'{e}vy process associated with $\big(\mathbb{E}_{vol}\big)_{vol}$. 
\end{theorem}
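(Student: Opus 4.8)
The plan is to reduce the statement to Theorem~\ref{unicity3-proc}, applied to the one-dimensional time process extracted from $\big(\mathbb{E}_{vol}\big)_{vol}$. First I would invoke Theorem~\ref{maintheoremdecaract} (equivalently Theorem~\ref{caracterisation}) to present $\big(\mathbb{E}_{vol}\big)_{vol}$ as the planar Yang-Mills field associated with a self-invariant by conjugation L\'{e}vy process $Y$, unique up to equivalence. By Definition~\ref{defclassification}, $\big(\mathbb{E}_{vol}\big)_{vol}$ is a pure planar Yang-Mills field if and only if $Y$ is invariant by conjugation by $G$, so the whole theorem becomes a statement about $Y$.

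Next I would bring in the process $Z=(Z_t)_{t\geq 0}$, $Z_t=h(L_0^{t})$, of Definition~\ref{process}, considered under $\mathbb{E}_{dx}^{\sf Aff}$. From Sections~\ref{2} and~\ref{3} (Lemma~\ref{conjugclass} and Proposition~\ref{braidable}) together with Remark~\ref{debutlevy}, the process $Z$ is stochastically continuous, satisfies $Z_0=e$, is invariant by conjugation by $G$, and has braidable and $\mathcal{I}$-independent increments; thus $Z$ fulfils the hypotheses of Theorem~\ref{unicity3-proc}. Moreover, by Lemma~\ref{casyangmills}, $Z$ has the law of $UYU^{-1}$ with $U$ Haar-distributed and independent of $Y$. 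A short reduction using the change of base point (Remark~\ref{changementdepoint}) and the axiom $\mathbf{P_1}$ shows that for any simple loop $l$ and any measure of area $vol$, the law of $h(l)$ under $\mathbb{E}_{vol}$ coincides with the law of $Z_{vol({\sf Int(}l))}$; hence the condition appearing in Theorem~\ref{unicity3-f}, namely the existence of a L\'{e}vy process $X$ with $h(l)\overset{d}{=}X_{vol({\sf Int(}l))}$ for all simple loops $l$, is exactly condition $2$ of Theorem~\ref{unicity3-proc} for $Z$ (with $X_t\overset{d}{=}Z_t$ for all $t\geq0$).

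It then remains to match the other condition of Theorem~\ref{unicity3-proc}, that $Z$ is a pure L\'{e}vy process, with the assertion that $Y$ is pure. If $Y$ is pure, then $Z\overset{d}{=}UYU^{-1}\overset{d}{=}Y$, so $Z$ is a pure L\'{e}vy process. Conversely, if $Z$ is a L\'{e}vy process, its one-dimensional marginals $\mu_t=\int_G m_t^{g}\,dg$ (where $m_t$ is the law of $Y_t$) form a convolution semigroup, which forces $m_t$ to be quasi-invariant by conjugation; since $Y$ is self-invariant, each $m_t$ is invariant by conjugation by its own support (Proposition~\ref{prop-support}), and Theorem~\ref{onemarg} upgrades this to invariance by conjugation by $G$, i.e. $Y$ is pure. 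This is precisely the mechanism internal to the proof of Theorem~\ref{unicity3-proc}, so invoking that theorem packages the argument. Combining the two equivalences yields the characterization. For the final assertions, when the condition holds $Y$ is pure, hence $m_t=\mu_t$ is the law of $X_t$; thus $X\overset{d}{=}Y$ is invariant by conjugation, and since a L\'{e}vy process is determined by its one-dimensional marginals, $X$ is the unique L\'{e}vy process associated with $\big(\mathbb{E}_{vol}\big)_{vol}$.

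The main obstacle I anticipate is the backward direction of the last equivalence: $Z$ is only equal in law to $UYU^{-1}$, not to $Y$ itself, so the mere fact that $Z$ is a (pure) L\'{e}vy process does not immediately return that $Y$ is pure. Bridging this gap is exactly where the quasi-invariance computation and Theorem~\ref{onemarg} are indispensable, and one must check that the self-invariance of $Y$, needed to apply Theorem~\ref{onemarg}, is indeed available, which it is by Theorem~\ref{caracterisation}.
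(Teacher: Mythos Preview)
Your proposal is correct and follows essentially the same route as the paper. The paper's own argument is extremely terse: the introductory paragraph to Chapter~\ref{Charac} simply says that by Definition~\ref{defclassification} the field is pure if and only if $Y$ is pure, and that applying Theorem~\ref{unicity3-proc} to the process $Z_t$ of Definition~\ref{process} yields Theorem~\ref{unicity3-f}. You have unpacked this, in particular the bridge ``$Z$ is a pure L\'{e}vy process $\Leftrightarrow$ $Y$ is pure'', which the paper leaves implicit; your resolution via quasi-invariance and Theorem~\ref{onemarg} is exactly the mechanism already built into the proof of Theorem~\ref{unicity3-proc} (through Theorem~\ref{unicity3}), so one may simply cite Theorem~\ref{unicity3-proc} rather than reproving it.
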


\part{Markovian Holonomy Fields}

\chapter{Markovian Holonomy Fields}
\label{Markov}
In this chapter, some definitions and results about Markovian holonomy fields, taken from \cite{Levy}, are recalled. In the next chapter, the free boundary expectation on the plane associated with any Markovian holonomy field will be defined. This is a planar Markovian holonomy field which will allow us to apply the results obtain previously and to characterize the spherical part of regular Markovian holonomy fields (Theorem \ref{theo}). 

\section{Measured marked surfaces with $G$-constraints}
Until the end of the paper, $M$ will be an oriented smooth compact surface with boundary and, from now on, as we only consider such surfaces, we will call them simply surfaces. 

\begin{definition}
To any connected component of the boundary of $M$ one can associate a non-oriented cycle (Definition ~\ref{cycle}). The union of these non-oriented cycles is denoted by $\mathcal{B}(M)$.

A collection of marks $\mathcal{C}$ on $M$ is a finite union of disjoint simple smooth non-oriented cycles in the interior of $M$. The couple $\left(M,\mathcal{C}\right)$ is called a {\em marked surface} and any element of $\mathcal{C}$ is called a {\em mark}.  

The orientation of $M$ induces an orientation on each connected component of the boundary: we denote by $\mathcal{B}^{+}(M)$ the subset of $\mathcal{B}(M)$ of positively oriented representative of each non-oriented cycle. The non-oriented cycles included in $\mathcal{C}$ does not carry a canonical orientation.
 \end{definition}

Let us recall that a non-oriented cycle is a set $\{c,c^{-1}\}$ where $c$ and $c^{-1}$ are oriented cycle, thus by definition a mark is an oriented cycle. Besides if $M$ has only one boundary, we will denote by $\partial M$ the positively oriented cycle associated with the unique boundary of $M$. Let $\left(M,\mathcal{C}\right)$ be a marked surface. 
\begin{definition}
A graph on $\left(M,\mathcal{C}\right)$ is a graph on $M$ such that each oriented cycle in $\mathcal{C}$ is represented by a loop in $L(\mathbb{G})$. 
\end{definition}

The Proposition 1.3.10 in \cite{Levy} asserts that for any graph $\mathbb{G}$ on $M$, any cycle of $\mathcal{B}(M)$ is represented by a loop in $L(\mathbb{G})$. Let $G$ be a Lie group fixed once for all. Let ${\sf Conj}(G)$ be the set of conjugacy classes of $G$. 

\begin{definition}
A set of $G$-constraints on $\left(M,\mathcal{C}\right)$ is a mapping $C$ from $\mathcal{C} \cup \mathcal{B}(M)$ to ${\sf Conj}(G)$ such that $C(l^{-1}) = C(l)^{-1}$ for any $l\in \mathcal{C} \cup \mathcal{B}(M)$. The family of sets of $G$-constraints on $\left(M,\mathcal{C}\right)$ is denoted by ${\sf Conj}_{G}\left(M,\mathcal{C}\right)$.
\end{definition}

\begin{notation}
\label{modcontr}
Let $C$ be a set of $G$-constraints, let $c$ be an oriented cycle in $\mathcal{C}\cup \mathcal{B}(M)$ and let $x$ be an element of $G$. We will denote by $C_{c \to x}$ the unique set of $G$-constraints such that: 
\begin{enumerate}
\item for any oriented cycle $c' \in \mathcal{C}\cup \mathcal{B}(M)$ different of $c$ and $c^{-1}$, $C_{l \to x}(c') = C(c'),$ 
\item $C_{c \to x}(c) = \left[x\right]$ and $C_{c \to x}(c^{-1}) =\left[x^{-1}\right],$ 
\end{enumerate}
where we recall that $[x]$ is the conjugacy class of $x$. Besides, we will denote by $c \to [x]$ the set of $G$-constraints defined on $\{c, c^{-1}\}$ which sends $c$ on $[x]$ and $c^{-1}$ on $[x^{-1}]$. 
\end{notation}

\begin{definition}
A measured marked surface with $G$-constraints is a quadruple $\left(M, vol,\mathcal{C}, C\right)$ where $\left(M,\mathcal{C}\right)$ is a marked surface, $vol$ is a measure of area on $M$ and $C$ is a set of $G$-constraints on $\left(M,\mathcal{C}\right)$. 

The isomorphism notion on the set of measured marked surfaces with $G$-constraints is the following: $\left(M, vol,\mathcal{C}, C\right)$ and $\left(M',vol',\mathcal{C}', C'\right)$ are isomorphic if and only if there exists a diffeomorphism $\psi: M \to M'$ such that: 
\begin{itemize}
\item $vol \circ \psi^{-1} = vol'$,
\item $\psi$ sends $\mathcal{C}$ on $\mathcal{C}'$,
\item $\forall\ l \in \mathcal{C}\cup \mathcal{B}(M), C'(\psi(l)) = C(l)$. 
\end{itemize}
\end{definition}

\section{Splitting of a surface}
An important notion for the definition of Markovian holonomy fields is the operation of splitting. We will not define this notion rigorously in this paper but instead we refer the reader to Section $1.1.2$ in \cite{Levy} for a rigorous definition.

Let $M$ be a surface, a splitting of $M$ is the data of a surface $M'$ and a gluing: $M' \to M$, which is an application which glues two boundary components of $M'$. The set which consists of the image of the boundary glued and its inverse is the joint of the gluing: we split according to this non-oriented cycle. 
Thus, we will consider a splitting as the inverse of the gluing: a splitting is the action to split a surface according to a non-oriented cycle drawn on it. We will also say that we split a surface according to a mark $l$ and by this, we mean that we split according to the non-oriented cycle $\{l,l^{-1}\}$. There is uniqueness (modulo isomorphism) of the splitting according to a mark on a surface $M$: the split surface of $M$ according to $l$ is denoted by $Spl_{l}(M)$.

Let $\left(M,\mathcal{C},vol, C\right)$ be a measured marked surface with $G$-constraints. Let $l$ be a mark in $\mathcal{C}$ and $f_{l}: Spl_{l}(M) \to M$ be the gluing associated with the splitting $Spl_{l}(M)$. Thanks to  the empty intersection of the marks on $M$, we can transport the marks on $Spl_{l}(M)$. We will denote them $Spl_{l}(\mathcal{C})$. Since outside a negligible subset, a gluing is a diffeomorphism, it is possible to transport the measure of area on $Spl_{l}(M)$ by setting $Spl_{l}(vol) = vol \circ f_l$. In order to transport the $G$-constraints on $Spl_{l}(M)$, we set $Spl_{l}(C)(l') = C(f_{l}(l'))$ for any $l' \in Spl_{l}(\mathcal{C}) \cup \mathcal{B}(Spl_{l}(M))$.

\section{Markovian holonomy fields}
The definition of a Markovian holonomy field was first stated in Definition $3.1.2$ of \cite{Levy}. In this paper, we only consider oriented Markovian holonomy fields: for sake of simplicity, we will call them Markovian holonomy fields. In the following definition, we add the condition that the measures are non-degenerate, which means that their weight is strictly positive. Besides, we change Axiom ${\mathbf{A_{4}}}$: in order to undestand this change in the definition, one can read Remark \ref{rem:erreur}. 

\begin{definition} 
\label{Markovcont}
A $G$-valued Markovian holonomy field, ${\sf HF}$, is the data, for each measured marked surface with $G$-constraints $(M, vol,\mathcal{C},C)$ of a non-degenerate finite measure ${\sf HF}_{(M, vol,\mathcal{C},C)}$ on $\left(\mathcal{M}ult\left(P(M),G\right), \mathcal{I}\right)$ such that: 
\begin{description}
\item[$\mathbf{A_{1}}$ ] For any $(M, vol,\mathcal{C},C)$, ${\sf HF}_{(M, vol,\mathcal{C},C)} \left(\exists l \in \mathcal{C} \cup \mathcal{B}(M), h(l) \notin C(l) \right) = 0$.
\item[$\mathbf{A_{2}}$ ] For any $\left(M, vol, \mathcal{C}\right)$ and any event $\Gamma$ in $\mathcal{I}$, the function which sends $C$ on ${\sf HF}_{(M, vol,\mathcal{C},C)}(\Gamma)$ is a measurable function on ${\sf Conj}_{G}\left(M,\mathcal{C}\right)$.
\item[$\mathbf{A_{3}}$ ] For any $(M, vol, \mathcal{C}, C)$ and any $l \in \mathcal{C}$, 
$${\sf HF}_{\left(M, vol, \mathcal{C}\setminus \{ l, l^{-1}\}, C_{\mid \mathcal{B}(M) \cup \mathcal{C}\setminus \{ l, l^{-1}\}}\right)} = \int_{G} {\sf HF}_{\left(M, vol,\mathcal{C},C_{l \to [x]}\right)} dx,$$
where $C_{l \to [x]}$ is defined in Notation \ref{modcontr}.
\item[$\mathbf{A_{4}}$ ] Let $\psi: (M, vol,\mathcal{C},C) \to (M',vol',\mathcal{C}',C')$ be a bi-Lipschitz homeomorphism which preserves the orientation such that $vol\circ \psi^{-1} = vol'$, $\psi(\mathcal{C}) = \mathcal{C}'$ and $C=C'\circ\psi$. The mapping from $\mathcal{M}ult(P(M'),G)$ to $\mathcal{M}ult\left(P(M),G\right) $ induced by $\psi$, also denoted $\psi$, satisfies: 
$${\sf HF}_{(M',vol',\mathcal{C}',C')} \circ \psi^{-1} = {\sf HF}_{(M, vol,\mathcal{C},C)}.$$
Moreover, let $\mathbb{G}$ (resp. $\mathbb{G}'$) be a graph on $(M,\mathcal{C})$ (resp. on $(M,\mathcal{C}')$), let $\phi$ be a homeomorphism from $(M, vol,\mathcal{C},C)$ to $(M',vol',\mathcal{C}',C')$ which sends $\mathbb{G}$ on $\mathbb{G}'$, which preserves the orientation and such that $vol\circ \phi^{-1} = vol'$, $\phi(\mathcal{C}) = \mathcal{C}'$ and $C=C'\circ\phi$. The mapping from $\mathcal{M}ult(P(\mathbb{G}'),G)$ to $\mathcal{M}ult\left(P(\mathbb{G}),G\right) $ induced by $\phi$, also denoted $\phi$, satisfies: 
$$\left({\sf HF}_{(M',vol',\mathcal{C}',C')}\right)_{| \mathcal{M}ult(P(\mathbb{G}'),G)} \circ \phi^{-1} =\left( {\sf HF}_{(M, vol,\mathcal{C},C)}\right)_{| \mathcal{M}ult(P(\mathbb{G}),G)}.$$

\item[$\mathbf{A_{5}}$ ] For any $(M_{1},vol_{1},\mathcal{C}_{1},C_{1})$ and $(M_{2},vol_{2},\mathcal{C}_{2},C_{2})$, one has the identity: 
$${\sf HF}_{\left(M_{1} \sqcup M_{2},vol_{1} \sqcup vol_{2},\mathcal{C}_{1} \sqcup \mathcal{C}_{2},C_{1} \sqcup C_{2} \right)} = {\sf HF}_{\left(M_{1},vol_{1},\mathcal{C}_{1},C_{1}\right)} \otimes {\sf HF}_{\left(M_{2},vol_{2},\mathcal{C}_{2},C_{2}\right)}.$$
\item[$\mathbf{A_{6}}$ ] For any $(M, vol,\mathcal{C},C)$, any $l \in \mathcal{C}$ and any gluing along $l$, $\psi: Spl_{l}(M) \to M$, one has: 
$${\sf HF}_{\left(Spl_{l}(M),Spl_{l}(vol),Spl_{l}(\mathcal{C}),Spl_{l}(C) \right)} = {\sf HF}_{\left(M, vol,\mathcal{C},C\right)} \circ \psi^{-1}.$$
\item[$\mathbf{A_{7}}$ ] For any $(M, vol,\emptyset, C)$ and for any $l$ in $\mathcal{B}(M)$, 
$$\int_{G} {\sf HF}_{\left(M, vol,\emptyset, C_{l \to x} \right)} (\mathbbm{1}) dx = 1.$$
\end{description}
\end{definition}

The Markovian holonomy fields are easier to understand them when they are exposed in a less formal way. A Markovian holonomy field is a {\em family of measures}. For each surface $M$ with marks, set of $G$-constraints and measure of area, we are given a {\em gauge-invariant random holonomy field on $M$} which  {\em satisfies the set of $G$-constraints} $(\mathbf{A_{1}})$. Moreover, the family of measures given by a Markovian holonomy field is {\em invariant under a class of area-preserving homeomorphisms}, $\mathbf{A_{4}}$, and satisfies a kind of {\em Markov property}, $\mathbf{A_{5}}$ and $\mathbf{A_{6}}$. 
The measures associated with $\left(M, vol, \mathcal{C}, C \right)$, seen as a function of the $G$-constraints, provide a {\em regular disintegration of ${\sf HF}_{\left(M, vol, \emptyset, C_{\mid \mathcal{B}(M)}\right)}$} (Axioms $\mathbf{A_{1}}$, $\mathbf{A_{2}}$ and $\mathbf{A_{3}}$). 
The last assumption is a {\em normalization} axiom. 

As for the planar Markovian holonomy fields, if not specified, all the Markovian holonomy fields will be $G$-valued, thus we will omit to specify it. In the definition of a Markovian holonomy field, we didn't specify any regularity condition on the field. In what follows, we will focus only on regular Markovian holonomy field in the following sense.

\begin{definition}
Let ${\sf HF}$ be a Markovian holonomy field. \begin{enumerate}
\item ${\sf HF}$ is stochastically continuous if, for any $(M, vol,\mathcal{C},C)$, ${\sf HF}_{(M, vol,\mathcal{C},C)}$ is stochastically continuous (Definition \ref{cont}). 
\item ${\sf HF}$ is Fellerian if, for any $\left(M, vol,\mathcal{C}\right)$, the function 
$$(t, C) \mapsto {\sf HF}_{(M, vol,\mathcal{C},C)}(\mathbbm{1}),$$ 
defined on $\mathbb{R}^{*}_{+} \times {\sf Conj}_{G}\left(M,\mathcal{C}\right)$ is continuous. 
\item ${\sf HF}$ is regular if it is both stochastically continuous and Fellerian. 
\end{enumerate}
\end{definition}

\section{Partition functions for oriented surfaces}

Given an even positive integer $g$  and $p$ a positive integer, let $\Sigma_{p, g}^{+}$ be the connected sum of $\frac{g}{2}$ tori with $p$ holes. For $g=0$ we define $\Sigma_{p,0}^{+}$ to be the sphere with $p$ holes. The classification of surfaces asserts that any connected oriented compact surface is diffeomorphic to one and exactly one of $\left\{\Sigma_{p, g}^{+}, p\in \mathbb{N}, g\in 2\mathbb{N}\right\}$. Besides, as a consequence of a theorem of Moser and as explained in Proposition 4.1.1 of \cite{Levy}, if $M$ and $M'$ are oriented, if $(M, vol,\emptyset, C)$ and $(M',vol',\emptyset, C')$ are two measured marked surfaces with $G$-constraints, then they are isomorphic if and only if: 
\begin{enumerate}
\item $M$ and $M'$ are diffeomorphic,
\item $vol(M)=vol'(M'),$
\item there exists a bijection $\psi = \mathcal{B}^{+}(M)\to \mathcal{B}^{+}(M')$ such that $C=C' \circ \psi$ on $\mathcal{B}^{+}(M).$
\end{enumerate}
Let ${\sf HF}$ be a Markovian holonomy field, we define the partition functions of ${\sf HF}$. 
\begin{definition}
Let $g $ be an even positive integer, $p$ be a positive integer and $t$ be a positive real. Let $vol$ be a measure of area on $\Sigma_{p, g}^{+}$ of total mass $t$. Let $\left\{b_{1}, b_{2},  ...,b_{p} \right\}$ be an enumeration of $\mathcal{B}^{+}(\Sigma_{p, g}^{+})$. We define the mapping: 
\begin{align*}
 Z^{+}_{p, g, t}(x_{1},  ...,x_{p}):\ \ \ \ \ \ G^{p} \ \ \ \ &\longrightarrow \mathbb{R}_{+}^{*}\\
(x_{1},  ...,x_{p}) &\ \mapsto \ Z^{+}_{p, g, t}(x_{1},  ...,x_{p})~= {\sf HF}_{\left(\Sigma_{p, g}^{+}, vol, \emptyset, (b_{i}\mapsto [x_{i}])_{i=1}^{p}\right)}(\mathbbm{1}),
\end{align*}
 It is called the partition function of the surface of genus $g$ with $p$ holes. Using the diffeomorphism invariance given by Axiom $\mathbf{A_4}$ and using Moser's theorem, it depends neither on the choice of $vol$ nor on the choice of the enumeration: $Z^{+}_{p, g, t}$ is a symmetric function. 
\end{definition}

\begin{remarque}
If $p=0$, we define $Z^{+}_{0,g, t}$ as the positive number which is equal to the mass of ${\sf HF}_{\left(\Sigma_{0,g}^{+}, vol, \emptyset, \emptyset\right)}$.
\end{remarque}

The discussion on the notion of isomorphism between measured marked surfaces with $G$-constraints implies that if $(M, vol,\emptyset, C)$ is a measured marked surface with $G$-constraints then there exist $p$ and $g$ such that $M$ is diffeomorphic to $\Sigma_{p, g}^{+}$:  
$${\sf HF}_{(M, vol,\emptyset, C)}(\mathbbm{1}) = Z^{+}_{p, g, vol(M)} (x_{1}, ...,x_{p}),$$
where $x_{1}$,  ...,$x_{p}$ are representatives of the $p$ constraints put on $\mathcal{B}^{+}(M)$.

The Fellerian condition satisfied by regular Markovian holonomy fields implies that their partition functions are continuous in $(t, x_{1}, ...,x_{p})$. Besides we can reformulate the axiom of normalization $\mathbf{A_{7}}$ (Definition \ref{Markovcont}) in terms of partition functions. If ${\sf HF}$ is a Markovian holonomy field, for any $t>0$, $$\int_{G}Z_{1,0,t}^{+} (g) dg =1,$$ that is to say: $Z_{1,0,t}^{+} dg$ is a probability measure on $G$. In one of the main theorems proved in Chapter $4$ of \cite{Levy}, L\'{e}vy characterized the family of probability measures $\left(Z_{1,0,t}^{+} dg\right)_{t>0}$. 

\begin{theorem}
\label{partitionlevy}
Let ${\sf HF}$ be a regular Markovian holonomy field. The probability measures $\left(Z_{1,0,t}^{+} dg\right)_{t>0}$ on $G$ are the one dimensional distributions of a unique conjugation-invariant L\'{e}vy process $(Y_{t})_{t\geq 0}$. Moreover, this L\'{e}vy process characterizes completely the partition functions of ${\sf HF}$. 
\end{theorem}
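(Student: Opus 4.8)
The statement to prove is Theorem \ref{partitionlevy}: for a regular Markovian holonomy field ${\sf HF}$, the family $\left(Z_{1,0,t}^{+}\, dg\right)_{t>0}$ forms the one-dimensional marginals of a unique conjugation-invariant L\'{e}vy process $(Y_t)_{t\geq 0}$, and this process determines all partition functions.

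The plan is to first establish that $\left(Z_{1,0,t}^{+}\, dg\right)_{t>0}$ is a continuous convolution semigroup of probability measures, invariant by conjugation, and then invoke the standard correspondence between such semigroups and conjugation-invariant L\'{e}vy processes. To see the semigroup property I would exploit the splitting and gluing axioms $\mathbf{A_6}$ together with the disintegration axiom $\mathbf{A_3}$. Concretely, consider the sphere with one hole $\Sigma_{1,0}^{+}$ of total area $s+t$, and insert a mark $l$ cutting it into two cylinders (spheres with two holes) of areas $s$ and $t$ respectively; equivalently, view $\Sigma_{1,0}^{+}$ as obtained by gluing two such pieces along a common boundary cycle. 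Applying $\mathbf{A_6}$ to relate the glued surface to the split one, then $\mathbf{A_3}$ to integrate out the internal mark constraint $l \to [x]$, and finally $\mathbf{A_5}$ and the diffeomorphism invariance $\mathbf{A_4}$ (via Moser's theorem, so that only the areas matter), one recovers an identity expressing $Z_{1,0,s+t}^{+}$ as a convolution of the measures attached to the two cylinders. The normalization axiom $\mathbf{A_7}$ guarantees that $Z_{1,0,t}^{+}\, dg$ is genuinely a probability measure, and the conjugation-invariance is built into the fact that the constraint $C$ takes values in ${\sf Conj}(G)$ so that partition functions are central (class) functions.

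After establishing that $\mu_t := Z_{1,0,t}^{+}\, dg$ satisfies $\mu_s * \mu_t = \mu_{s+t}$, the Fellerian regularity hypothesis supplies the continuity of $t \mapsto \mu_t$ needed to conclude that $(\mu_t)_{t>0}$ extends (with $\mu_0 = \delta_e$) to a continuous convolution semigroup starting from the Dirac mass at the neutral element. By the classical correspondence recalled in the paper between such semigroups and L\'{e}vy processes (see the discussion around Definition \ref{Levydefinition}), there is a unique conjugation-invariant L\'{e}vy process $(Y_t)_{t\geq 0}$ with these one-dimensional marginals; uniqueness of the semigroup forces uniqueness of the process. The last clause, that $(Y_t)_{t \geq 0}$ determines all partition functions $Z_{p,g,t}^{+}$, would be obtained by a decomposition argument: any surface $\Sigma_{p,g}^{+}$ can be cut along a suitable system of marks into elementary pieces (pairs of pants, handles built from cylinders), and repeated use of $\mathbf{A_6}$, $\mathbf{A_3}$, and $\mathbf{A_5}$ expresses $Z_{p,g,t}^{+}$ as an iterated convolution-type integral of the marginals $\mu_t$ against the structure constants of $G$; since these are all controlled by the semigroup $(\mu_t)$, the process $(Y_t)$ indeed fixes every partition function.

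The main obstacle I anticipate is the bookkeeping in the gluing/splitting step that yields the convolution identity: one must carefully track how the orientation conventions on $\mathcal{B}^{+}$ interact with the inversion $C(l^{-1}) = C(l)^{-1}$, and verify that after applying $\mathbf{A_6}$ the internal integration in $\mathbf{A_3}$ produces exactly a convolution product rather than some twisted variant. A secondary technical point is ensuring, via $\mathbf{A_4}$ and Moser's theorem, that the construction genuinely depends only on the areas of the pieces, so that the cut of a single sphere-with-hole into two cylinders is canonical up to the relevant isomorphism. These are the steps where the specific structure of the axioms must be used with care; the passage from convolution semigroup to L\'{e}vy process, by contrast, is a routine invocation of the theory recalled earlier in the text.
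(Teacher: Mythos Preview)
The paper does not prove this theorem; it is quoted from Chapter~4 of \cite{Levy}, so there is no ``paper's own proof'' to compare against. Your overall strategy---using the splitting axiom $\mathbf{A_6}$, the disintegration axiom $\mathbf{A_3}$, and Moser's theorem via $\mathbf{A_4}$ to extract a convolution semigroup, then invoking the standard correspondence with L\'{e}vy processes---is the right one and is essentially how the argument runs in \cite{Levy}.

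There is, however, a genuine geometric error in your decomposition. A sphere with one hole $\Sigma_{1,0}^{+}$ is a disk; a simple closed curve in its interior bounds a smaller disk, so splitting along such a mark yields one disk $\Sigma_{1,0}^{+}$ and one annulus $\Sigma_{2,0}^{+}$, \emph{not} two cylinders. Consequently the identity you obtain from $\mathbf{A_6}$, $\mathbf{A_5}$ and $\mathbf{A_3}$ is
\[
Z_{1,0,s+t}^{+}(y) \;=\; \int_{G} Z_{1,0,s}^{+}(x)\, Z_{2,0,t}^{+}(x^{-1},y)\, dx,
\]
which is not yet the convolution $\mu_s * \mu_t = \mu_{s+t}$. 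The missing step is to identify the annulus partition function: one cuts $\Sigma_{2,0}^{+}$ along a transversal arc joining its two boundary components, obtaining a disk whose boundary represents the product of the two original boundary holonomies (with appropriate orientations). This yields $Z_{2,0,t}^{+}(x,y) = Z_{1,0,t}^{+}(xy)$, and substituting back gives the genuine convolution identity. The same mechanism---reducing higher partition functions to $Z_{1,0}^{+}$ by cutting handles and pairs of pants into disks---is what underlies the final clause that $(Y_t)$ determines all $Z_{p,g,t}^{+}$; your sketch of that part is correct in spirit but would need this annulus-to-disk reduction made explicit.
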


We say that $Y=(Y_{t})_{t \geq 0}$ (resp. ${\sf HF}$) is the L\'{e}vy process (resp. a regular Markovian holonomy field) associated with ${\sf HF}$ (resp. to $Y$). Given this theorem, it is natural to wonder if every L\'{e}vy process which is conjugation-invariant is associated with a regular Markovian holonomy field. Of course, some other conditions must hold such as the existence of a conjugation-invariant square-integrable density. Indeed, as the constraints on the boundary are given by specifying a conjugacy class, $Z^{+}_{1, 0, t}(x)$ is a function of $\left[x\right]$. Besides, by definition of regularity, it must be continuous in $x$ thus square-integrable. To finish, let us remark that $Z_{1,0,t}^{+}(x)$ is strictly positive since we supposed that the measures $\left({\sf H}_{(M, vol,\mathcal{C},C)} \right)_{(M, vol,\mathcal{C},C)}$ are non-degenerate. Hence the natural following definition: 
 
\begin{definition}
Let $(Y_{t})_{t\geq 0}$ be a L\'{e}vy process on $G$. It is admissible if: 
\begin{itemize}
\item it is invariant by conjugation by $G$,
\item the distribution of $Y_{t}$ admits a strictly positive square-integrable density $Q_{t}$ with respect to the Haar measure on $G$ for any $t>0$.
\end{itemize}
\end{definition}

The discussion we just had allows us to write the following proposition.
\begin{proposition}
\label{Levyass}
Let ${\sf HF}$ be a regular Markovian holonomy field, the L\'{e}vy process $(Y_{t})_{t\geq 0}$ associated with ${\sf HF}$ is an admissible L\'{e}vy process. 
\end{proposition}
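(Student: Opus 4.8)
The plan is to read off admissibility directly from Theorem \ref{partitionlevy} together with the regularity and non-degeneracy hypotheses that are built into the definition of ${\sf HF}$. First I would invoke Theorem \ref{partitionlevy}: since ${\sf HF}$ is regular, the family $\left(Z_{1,0,t}^{+}\, dg\right)_{t>0}$ is precisely the collection of one-dimensional marginals of a unique conjugation-invariant L\'{e}vy process, and this process is by definition the process $(Y_t)_{t\geq 0}$ associated with ${\sf HF}$. This immediately yields the first requirement in the definition of an admissible L\'{e}vy process, namely invariance by conjugation by $G$.

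It then remains to check that for each $t>0$ the law of $Y_t$ has a strictly positive, square-integrable density with respect to the Haar measure $dg$ on $G$. The key observation is that this density is nothing but the partition function itself: since the law of $Y_t$ is $Z_{1,0,t}^{+}\, dg$, the Radon--Nikodym density $Q_t$ of $Y_t$ with respect to $dg$ is exactly $Z_{1,0,t}^{+}$. Moreover $Z_{1,0,t}^{+}(x) = {\sf HF}_{\left(\Sigma_{1,0}^{+}, vol, \emptyset, b\mapsto [x]\right)}(\mathbbm{1})$ depends only on the conjugacy class $[x]$, which is consistent with the conjugation-invariance already obtained.

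Strict positivity and square-integrability will then follow from the two standing assumptions on ${\sf HF}$. For strict positivity I would use that, by convention, the measures defining a Markovian holonomy field are non-degenerate, i.e.\ they have strictly positive weight; hence $Q_t(x) = Z_{1,0,t}^{+}(x) = {\sf HF}_{\left(\Sigma_{1,0}^{+},\dots\right)}(\mathbbm{1}) > 0$ for every $x\in G$. For square-integrability I would invoke the Fellerian part of regularity: the map $(t,x)\mapsto Z_{1,0,t}^{+}(x)$ is continuous, so for fixed $t>0$ the function $x\mapsto Q_t(x)$ is continuous on the compact group $G$, hence bounded, hence square-integrable against the finite Haar measure.

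There is essentially no serious obstacle here: the statement is a bookkeeping consequence of the definitions and of Theorem \ref{partitionlevy}. The only point requiring a little care is to make explicit that the density with respect to the Haar measure is exactly the partition function $Z_{1,0,t}^{+}$, and that its continuity in $x$ is inherited from the Fellerian hypothesis; everything else is read off directly from the definitions of regularity and of non-degeneracy.
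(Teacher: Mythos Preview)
Your proposal is correct and follows essentially the same approach as the paper: the paper's proof is precisely the short discussion preceding the proposition, which identifies $Q_t=Z_{1,0,t}^{+}$, obtains conjugation-invariance from Theorem \ref{partitionlevy}, strict positivity from non-degeneracy of the measures, and square-integrability from the Fellerian continuity of the partition function on the compact group $G$. There is nothing to add.
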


In fact, we get all the admissible L\'{e}vy processes by studying the L\'{e}vy processes which are associated with regular Markovian holonomy fields: this is given by Theorem $4.3.1$ in \cite{Levy}. 
\begin{theorem} 
\label{YMcrea}
Every admissible L\'{e}vy process $Y$ is associated with a regular Markovian holonomy field.
\end{theorem}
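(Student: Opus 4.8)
The plan is to recover the construction behind Lévy's Theorem $4.3.1$, building the Markovian holonomy field from the admissible process $Y$ rather than the other way around. Write $(m_t)_{t>0}$ for the convolution semigroup of $Y$ and $Q_t$ for its density, which by admissibility is strictly positive, square-integrable and invariant by conjugation. The first step is the \emph{discrete} construction: to each graph $\mathbb{G}$ on a measured marked surface with $G$-constraints $(M, vol,\mathcal{C},C)$ I would attach the measure on $\mathcal{M}ult(P(\mathbb{G}),G)\simeq G^{\mathbb{E}^{+}}$ whose density against the product Haar measure is $\prod_{F} Q_{vol(F)}\big(h(\partial F)\big)$, exactly as in the planar formula of Proposition \ref{densitee}, but now with the holonomies along the marks of $\mathcal{C}$ and the boundary cycles $\mathcal{B}(M)$ pinned to the prescribed conjugacy classes via $C$. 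Since each $Q_t$ is invariant by conjugation, the resulting measures are gauge-invariant and automatically satisfy Axiom $\mathbf{A_1}$; integrating out the constraint on one mark reproduces Axioms $\mathbf{A_2}$ and $\mathbf{A_3}$ (regular disintegration), and the product structure over connected components gives $\mathbf{A_5}$.

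The second step is to check that these discrete measures are compatible under refinement of graphs and behave correctly under the surgical operations. Compatibility rests entirely on the semigroup identity $Q_{s}\ast Q_{t}=Q_{s+t}$: subdividing an edge, deleting a contractible edge, or merging two adjacent faces of respective areas $s$ and $t$ into one of area $s+t$ each leave the law of the facial holonomies unchanged, by the same three-case analysis carried out in the proof of Proposition \ref{creaYMpure2}. The gluing Axiom $\mathbf{A_6}$ follows because splitting along a mark $l$ disconnects the faces on either side and turns $h(l)$ into a boundary holonomy, which is compatible with the product of face densities; the diffeomorphism invariance $\mathbf{A_4}$ holds because the discrete measure depends only on the combinatorial type of $\mathbb{G}$ and on the face areas; and the normalization $\mathbf{A_7}$ is the statement that $Z^{+}_{1,0,t}\,dg$ is a probability measure, which is just $\int_{G}Q_{t}(g)\,dg=1$.

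The third step is the passage to the continuous limit and the verification of regularity. Exactly as in the uniform Hölder estimate of Proposition \ref{creaYMpure}, the bound $\mathbb{E}\big[d_{G}(1,Y_{t})\big]\leq K\sqrt{t}$ of Proposition \ref{croissanceLevy} shows that the discrete field is uniformly locally stochastically $\frac{1}{2}$-Hölder continuous; applying the surface version of the extension theorem (Theorem \ref{exten2}, i.e.\ Lévy's two-dimensional continuity theorem) yields a stochastically continuous random holonomy field on each $(M, vol,\mathcal{C},C)$. The Fellerian property comes from the continuity of $t\mapsto Q_{t}$ together with square-integrability, which makes the partition functions $Z^{+}_{p,g,t}$ continuous in $(t,x_{1},\dots,x_{p})$. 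By construction $Z^{+}_{1,0,t}(x)=Q_{t}(x)$, so the one-dimensional distributions $Z^{+}_{1,0,t}\,dg$ are precisely the marginals of $Y$; hence by Theorem \ref{partitionlevy} the Lévy process associated with the field we have built is $Y$, which is the assertion of the theorem and the converse of Proposition \ref{Levyass}.

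The main obstacle is the third step on surfaces of positive genus and with boundary: the compatibility of the disintegration Axiom $\mathbf{A_3}$ with the gluing Axiom $\mathbf{A_6}$ must be checked simultaneously across all topologies, and the continuous limit requires the full strength of Lévy's Kolmogorov-type estimate rather than the planar argument used earlier in this paper. Since the combinatorial and analytic content here is genuinely that of Theorem $4.3.1$ of \cite{Levy}, I would either invoke that theorem directly or, if a self-contained argument is wanted, reproduce its proof with the convolution densities $Q_{t}$ playing the role of the heat kernel throughout.
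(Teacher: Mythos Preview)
Your proposal is correct and follows essentially the same approach as the paper: construct the discrete Yang-Mills field via the density $\prod_{F}Q_{vol(F)}(h(\partial F))$ against the constrained uniform measure ${\sf U}^{\mathbb{G}}_{M,\mathcal{C},C}$, verify the discrete axioms and the regularity estimates, and extend to the continuous field. The paper does not give the argument in full but defers to the specific propositions in \cite{Levy} (Propositions 4.3.10, 4.3.11, 4.3.15 and Theorem 3.2.9), which is exactly what your final paragraph proposes. One minor correction: the three-case refinement analysis you invoke is in the proof of Proposition \ref{creaYMpure}, not \ref{creaYMpure2}.
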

The proof of this assertion consists in constructing, just as we did for planar Markovian holonomy fields, for every admissible L\'{e}vy, a special Markovian holonomy field ${\sf YM}$ which will be called a Yang-Mills field. For this, L\'{e}vy used the edge paradigm for random holonomy fields. A Yang-Mills field is a kind of deformation of a uniform measure.

\section{Uniform measure and Yang-Mills fields}
Let $(M, vol,\mathcal{C},C)$ be a measured marked surface with $G$-constraints, endowed with a graph $\mathbb{G}= (\mathbb{V}, \mathbb{E}, \mathbb{F})$. The uniform measure on $\mathcal{M}ult\left(P(\mathbb{G}),G\right)$ is almost a product of Haar measures as for any orientation $\mathbb{E}^{+}$ of $\mathbb{G}$, $\mathcal{M}ult\left(P(\mathbb{G}),G\right) \simeq {G}^{\mathbb{E}^{+}}$. But one has to be careful: since $(M,\mathcal{C},C)$ is an oriented marked surface with $G$-constraints, the elements in $\mathcal{M}ult\left(P(\mathbb{G}),G\right)$ that we have to consider have to obey the constraints.

\begin{notation}
For any conjugacy class $\mathcal{O} \subset G$ and any integer $n\geq 1$, we denote by $\delta_{\mathcal{O}(n)}$ the natural extension to $G^{n}$ of the unique $G^{n}$-invariant probability measure on $$\mathcal{O}(n)~= \left\{(x_{1}, ...,x_{n}) \in G^{n}: x_{1}...x_{n} \in \mathcal{O}\right\}$$ under the $G^{n}$ action $(g_{1},  ...,g_{n})\bullet(x_{1}, ...,x_{n}) = (g_{1}x_{1}g_{2}^{-1},  ...,g_{n}x_{n}g_{1}^{-1})$. 
\end{notation}

Let $l_{1}, ...,l_{q}$ be $q$ disjoint simple loops in $L(\mathbb{G})$ such that $ \mathcal{C}\cup \mathcal{B}(M)$ is equal to $\{l_{1}, l_{1}^{-1},...,l_{q},l_{q}^{-1} \}$. For any $i\in \{1,...,q\}$, we can decompose $l_{i}=e_{i,1}...e_{i,n_{i}}$ with $e_{i,j} \in \mathbb{E}$ for any $i$ and $j$. Let $\mathbb{E}^{+}$ be an orientation of $\mathbb{G}$, such that for any $i \in \{1,...,q\}$ and $j \in \{1,...,n_{i} \},$ $e_{i, j}\in \mathbb{E}^{+}$. We label $e_{1}$,  ...,$e_{m}$ the other edges of $\mathbb{E}^{+}$. Recall that any measure constructed on ${G}^{\mathbb{E}^{+}}$ defines canonically a unique measure on $\mathcal{M}ult\left(P(\mathbb{G}),G\right)$. 
\begin{definition}
The uniform measure ${\sf U}^{\mathbb{G}}_{M,\mathcal{C}, C
}$ is the measure provided by the following measure on ${G}^{\mathbb{E}^{+}}$: 
\begin{align*}
dg_{1} \otimes ... \otimes dg_{m} \otimes \delta_{C(l_{1})(n_{1})} (dg_{1,n_{1}} ... dg_{1,1})\otimes ... \otimes \delta_{C(l_{q})(n_{q})} (dg_{q, n_{q}}...dg_{q,1}). 
\end{align*}
This probability measure on $\mathcal{M}ult\left(P(\mathbb{G}),G\right) $ does not depend on any of the choices we made. 
\end{definition}

 \begin{notation}
 \label{uniforme}
 We define also a similar measure without constraints for any surface $M$ (resp. $\mathbb{R}^{2}$) endowed with a graph (resp. a planar graph) $\mathbb{G}$. Let $\mathbb{E}^{+}$ be an orientation of $\mathbb{G}$. The measure on $\mathcal{M}ult\left(P(\mathbb{G}),G\right) $ seen on $G^{\mathbb{E}^{+}}$ as $\bigotimes_{e \in \mathbb{E}^{+}} dg_{e}$ is denoted by ${\sf U}^{\mathbb{G}}$. 
 \end{notation}

Yang-Mills fields can now be defined. 

\begin{definition}
\label{YMfields}
Let $(Y_{t} )_{t\geq 0}$ be an admissible L\'{e}vy process on $G$. For any positive real $t$, let $Q_t$ be the density of $Y_t$. A regular Markovian holonomy field ${\sf YM}$ is called a  Yang-Mills field (or sometimes a  Yang-Mills measure) associated with $(Y_t)_{t \geq 0}$ if for any measured marked surface with $G$-constraints $(M, vol,\mathcal{C},C)$ and any graph $\mathbb{G}$ on $\left(M,\mathcal{C}\right)$, 
\begin{align*}
\left({\sf YM}_{(M, vol,\mathcal{C},C)}\right)_{\mid \mathcal{M}ult(P(\mathbb{G}),G) } = \prod_{F \in \mathbb{F}} Q_{vol(F)} \left(h(\partial F )\right) {\sf U}^{\mathbb{G}}_{M,\mathcal{C}, C} (dh), 
\end{align*}
where $\partial F$ is the oriented facial cycle associated with $F$, defined in Definition $1.3.13$ of \cite{Levy} and the notation $Q_{vol(F)} \left(h(\partial F )\right)$ means that we consider $Q_{vol(F)} \left(h(c)\right)$ where $c$ represents $\partial F$: this does not depend on the choice of $c$ since $Q_{vol(F)}$ is invariant by conjugation. 
\end{definition}

A Yang-Mills field associated with $(Y_t)_{t \geq 0}$ is a regular Markovian holonomy field associated with $(Y_{t})_{t\geq 0}$: Theorem \ref{YMcrea} is a consequence of the following proposition.

\begin{proposition}
\label{LevyadmissibleYM}
For any $G$-valued admissible L\'{e}vy process $(Y_t )_{t \geq 0}$ there exists a unique Yang-Mills field ${\sf YM}$ associated with $( Y_t)_{t \geq 0}$. 
\end{proposition}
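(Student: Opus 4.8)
The plan is to construct ${\sf YM}$ directly through the edge paradigm, exactly as prescribed by Definition \ref{YMfields}, and then to check that the resulting family of measures is a regular Markovian holonomy field whose associated Lévy process is $(Y_t)_{t\geq 0}$. First I would fix, for each measured marked surface with $G$-constraints $(M,vol,\mathcal{C},C)$ and each graph $\mathbb{G}=(\mathbb{V},\mathbb{E},\mathbb{F})$ on $(M,\mathcal{C})$, the discrete measure
$${\sf YM}^{\mathbb{G}}_{(M,vol,\mathcal{C},C)}(dh) = \prod_{F \in \mathbb{F}} Q_{vol(F)}\big(h(\partial F)\big)\, {\sf U}^{\mathbb{G}}_{M,\mathcal{C},C}(dh),$$
and verify that it is well defined. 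The density is a genuine non-negative measurable function because each $Q_{vol(F)}$ is invariant by conjugation, so $Q_{vol(F)}(h(\partial F))$ does not depend on the representative chosen for the facial cycle; its finiteness and integrability against ${\sf U}^{\mathbb{G}}_{M,\mathcal{C},C}$ follow from the admissibility hypothesis, the square-integrability of the $Q_t$ being what makes the partition functions converge even in higher genus. Gauge-invariance is immediate, since ${\sf U}^{\mathbb{G}}_{M,\mathcal{C},C}$ is gauge-invariant and the density depends only on holonomies along facial cycles.

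The central step is the consistency of this family under refinement of graphs: for $\mathbb{G}\preccurlyeq \mathbb{G}'$ one must have ${\sf YM}^{\mathbb{G}'} \circ \rho_{P(\mathbb{G}),P(\mathbb{G}')}^{-1} = {\sf YM}^{\mathbb{G}}$. I would reduce this to a short list of elementary moves (subdividing an edge, adding a contractible edge producing a $ee^{-1}$ inside a face, and splitting one face into two by a new edge), each handled by integrating out the extra edge variable against the Haar factors of ${\sf U}^{\mathbb{G}'}$. The only substantial move, splitting a face $F$ into $F_1,F_2$, collapses precisely to the Chapman-Kolmogorov identity: the integral $\int_G Q_{vol(F_1)}(uy)Q_{vol(F_2)}(y^{-1}w)\,dy$ equals $Q_{vol(F_1)+vol(F_2)}(uw)=Q_{vol(F)}(uw)$ by the semigroup property $Q_s * Q_t = Q_{s+t}$ of the Lévy marginals. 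This is the analogue, for surfaces carrying constraints, of the computation behind Proposition \ref{densitee}, and I expect it to be the main obstacle: one must track the constraint factors $\delta_{\mathcal{O}(n)}$ entering ${\sf U}^{\mathbb{G}}_{M,\mathcal{C},C}$ and the orientation conventions on facial cycles carefully so that the convolution really lands on the merged face. Once consistency holds, Proposition \ref{limitproj} yields a gauge-invariant measure on piecewise-geodesic paths, and the uniform $\frac12$-Hölder estimate coming from Proposition \ref{croissanceLevy}, together with Theorem \ref{exten2}, extends it to a stochastically continuous measure on $\mathcal{M}ult(P(M),G)$.

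It then remains to verify the axioms $\mathbf{A_1}$–$\mathbf{A_7}$ of Definition \ref{Markovcont} and the regularity. Axiom $\mathbf{A_1}$ holds because ${\sf U}^{\mathbb{G}}_{M,\mathcal{C},C}$ is supported on multiplicative functions obeying the constraints; $\mathbf{A_2}$ and the Fellerian property follow from the joint continuity in $(t,x)$ of the densities $Q_t$; $\mathbf{A_5}$ from the factorization of the density over a disjoint union; and $\mathbf{A_7}$ from $\int_G Q_t\,dg = 1$ together with the identification $Z^+_{1,0,t} = Q_t$. Axioms $\mathbf{A_3}$ (disintegration in the marks) and $\mathbf{A_6}$ (invariance under splitting along a mark) are the delicate ones and I would read both off at the level of a graph adapted to the relevant mark: $\mathbf{A_3}$ by integrating the constraint class against $dx$, and $\mathbf{A_6}$ by matching the faces, areas and facial densities of $Spl_l(M)$ with those of $M$, the measure being unchanged because cutting along a graph loop does not alter the edge-and-face data feeding the formula. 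Axiom $\mathbf{A_4}$ follows since the discrete formula depends only on the combinatorics of $\mathbb{G}$ and the face areas, both preserved by the homeomorphisms considered. Finally, uniqueness is a consequence of Proposition \ref{unicite1}: the restriction of ${\sf YM}$ to each graph is forced by Definition \ref{YMfields}, and a stochastically continuous gauge-invariant measure is determined by these restrictions, so any two Yang-Mills fields associated with $(Y_t)_{t\geq 0}$ coincide.
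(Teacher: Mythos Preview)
Your proposal is correct and follows essentially the same route as the paper, which does not give a self-contained proof but defers to L\'evy's monograph: one shows that the family $\big({\sf YM}^{\mathbb{G}}_{(M,vol,\mathcal{C},C)}\big)$ is a regular (Fellerian, continuously area-dependent, locally stochastically $\tfrac12$-H\"older) \emph{discrete} Markovian holonomy field, and then invokes the general extension Theorem~3.2.9 of \cite{Levy}. Your sketch unpacks exactly these steps --- the consistency under elementary graph moves via Chapman--Kolmogorov, the H\"older bound from Proposition~\ref{croissanceLevy}, and the verification of $\mathbf{A_1}$--$\mathbf{A_7}$ --- so the two arguments coincide in substance.

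One point worth tightening: your justification of $\mathbf{A_4}$ covers only the graph-level half of the axiom. To get the full bi-Lipschitz invariance on $P(M)$ you also need that the continuous measure restricted to an \emph{arbitrary} graph (not just a piecewise-geodesic one) still coincides with the discrete formula; this is where the \emph{continuous area-dependence} enters, exactly as in the discussion preceding Theorem~\ref{exten3} and in Proposition~4.3.11 of \cite{Levy}. Once that is in place, the first half of $\mathbf{A_4}$ follows by approximating paths by piecewise-geodesic ones and using stochastic continuity together with Lemmas~\ref{lem:bilip} and~\ref{lem:bilip2}.
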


For this proposition one has to introduce, as we did for planar Markovian holonomy fields, a discrete analog of Markovian holonomy fields: the discrete Markovian holonomy fields. The definition of discrete Markovian holonomy fields can be found in Section $3.2$ of \cite{Levy}. Then one can show that the family of measures: $$\left(\left({\sf YM}_{(M, vol,\mathcal{C},C)}\right)_{\mid \mathcal{M}ult(P(\mathbb{G}),G) }\right)_{\left(M, vol,\mathcal{C},C, \mathbb{G}\right)}$$ is a Fellerian continuously area-dependent (Proposition $4.3.11$ in \cite{Levy}) and locally stochastically $\frac{1}{2}$-H\"{o}lder continuous (Proposition $4.3.15$ in \cite{Levy}) discrete Markovian holonomy field (Proposition 4.3.10 in \cite{Levy}) associated with $Y$. Then it is shown, in Theorem 3.2.9 of \cite{Levy}, that under these regularity conditions, every discrete Markovian holonomy field can be extended to a regular Markovian holonomy field. It has to be noticed that we changed the definition of Markovian holonomy fields (Axiom $\mathbf{A}_4$): this allows us to correct the arguments used in the proof of Axiom $\mathbf{A}_4$ in Theorem 3.2.9 of \cite{Levy} by using the one explained before Theorem \ref{exten3}. This allows to conclude for the proof of Proposition \ref{LevyadmissibleYM}. 

The definition of discrete Markovian holonomy field follows closely the definition of a Markovian holonomy field except for the invariance by homeomorphisms which becomes almost a combinatorial condition. It is the same difference between the Axioms $\mathbf{P_{1}}$ and $\mathbf{DP_{1}}$ of Definitions \ref{planarHF} and \ref{discreteplanarHF} in Section \ref{secdefplanarHF}.

\begin{remarque}
The difference between the assumption $\mathbf{A_{4}}$ in Definition 3.1.2 in \cite{Levy} and $\mathbf{D_{4}}$ in Definition 3.2.1 in \cite{Levy} makes the proof of Lemma 3.2.2. in the same book incomplete. Thus, it is not clear that any Markovian holonomy field defines by restriction a discrete Markovian holonomy field. 
\end{remarque}

This remark leads us to the following definition. 

\begin{definition}
Let ${\sf HF}$ be a Markovian holonomy field. It is {\em constructible} if the family of measures $\left(\left({\sf HF}_{(M, vol,\mathcal{C},C)}\right)_{\mid \mathcal{M}ult(P(\mathbb{G}),G) }\right)_{\left(M, vol,\mathcal{C},C, \mathbb{G}\right)}$ is a discrete Markovian holonomy field. 
\end{definition}

It is still an open question to know if any Markovian holonomy field is constructible. 

\section{Conjecture and main theorem }
We can resume the results of Proposition \ref{Levyass} and Theorem \ref{YMcrea} by the following diagram.

\begin{align*}
 \xymatrix{ \text{Regular Markovian holonomy fields } \ar@/^2.5pc/[r]^{\text{Partition function}} & \text{Admissible L\'{e}vy processes} \ar@/^2.5pc/[l]^{\text{Yang-Mills fields}}}
 \end{align*}

Besides, it was shown that the left arrow goes into the constructible regular Markovian holonomy fields and the composition of the two arrows is equal to the identity on the set of admissible L\'{e}vy processes. It is natural to wonder if the two arrows are each other inverse: this leads us to the following conjecture. 
\begin{conjecture}
\label{conjectureHF}
Every regular Markovian holonomy field is a Yang-Mills field. 
\end{conjecture}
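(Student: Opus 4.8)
The plan is to prove Conjecture \ref{conjectureHF} by reducing a regular Markovian holonomy field entirely to its planar data, then invoking the characterization of planar Markovian holonomy fields. Let ${\sf HF}$ be a regular Markovian holonomy field and let $Y=(Y_t)_{t\geq 0}$ be the admissible L\'{e}vy process associated with it by Theorem \ref{partitionlevy}. Let ${\sf YM}$ be the Yang-Mills field associated with $Y$ provided by Theorem \ref{YMcrea} and Proposition \ref{LevyadmissibleYM}. By construction the partition functions of ${\sf HF}$ and ${\sf YM}$ coincide (they are both determined by $Y$), so the two fields agree on surfaces without marks when one only records the weight. The real content is to upgrade this equality of partition functions to an equality of the full measures ${\sf HF}_{(M,vol,\mathcal{C},C)} = {\sf YM}_{(M,vol,\mathcal{C},C)}$ for every measured marked surface with $G$-constraints.

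First I would establish the local/planar building block. The key idea is that a regular Markovian holonomy field, when restricted to a disk, is a planar object. Concretely, for a disk $D\subset \mathbb{R}^2$ embedded in the sphere or in a larger surface, the family $\big({\sf HF}_{(M,vol,\mathcal{C},C)}\big)_{|\mathcal{M}ult(P(D),G)}$, as $vol$ and the induced area vary, should be shown to define a stochastically continuous strong planar Markovian holonomy field in the sense of Definition \ref{planarHF}: the gauge-invariance is built in, Axiom $\mathbf{A_4}$ of Definition \ref{Markovcont} yields $\mathbf{P_1}$, the splitting axioms $\mathbf{A_5}$--$\mathbf{A_6}$ together with the disintegration axioms $\mathbf{A_1}$--$\mathbf{A_3}$ yield the independence axiom $\mathbf{P_2}$ and the locality axiom $\mathbf{P_3}$. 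This is precisely the free boundary condition expectation construction alluded to in the introduction (the passage before Lemma \ref{area-preserv-exp}), which produces a planar Markovian holonomy field out of ${\sf HF}$. Once this planar field is identified, Theorem \ref{caracterisation}, together with Theorem \ref{maintheoremdecaract}, forces it to be the planar Yang-Mills field $\big(\mathbb{E}^{Y}_{vol}\big)_{vol}$ associated with $Y$, since the law of a simple loop $h(l)$ under ${\sf HF}$ restricted to $\mathbb{G}(l)$ is exactly the law of $Y_{vol({\sf Int}(l))}$ by the partition-function identification. Thus on every disk, ${\sf HF}$ and ${\sf YM}$ restrict to the same measure: they agree on the spherical part.

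Next I would globalize by cutting along marks. Using the splitting axiom $\mathbf{A_6}$ and the multiplicativity axiom $\mathbf{A_5}$, any surface $(M,vol,\mathcal{C},C)$ can be decomposed, after introducing an auxiliary family of simple cutting cycles, into pieces that are planar (disks or spheres with holes) glued along marks carrying prescribed conjugacy classes. Both ${\sf HF}$ and ${\sf YM}$ satisfy the same gluing and disintegration relations, so an equality of the local planar pieces conditioned on the holonomies along all cutting cycles propagates to an equality of the glued measures, provided one controls the conditional laws along the seams. This is where the disintegration axioms $\mathbf{A_1}$--$\mathbf{A_3}$ and the Fellerian continuity in the constraints are essential: they guarantee that the conditional measures are uniquely determined by the conjugacy classes $C$ along $\mathcal{C}\cup\mathcal{B}(M)$, and these conditional laws are planar in each piece, hence pinned down by the previous step. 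Assembling the pieces by $\mathbf{A_5}$ and $\mathbf{A_6}$ then yields ${\sf HF}_{(M,vol,\mathcal{C},C)} = {\sf YM}_{(M,vol,\mathcal{C},C)}$.

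The main obstacle I anticipate is precisely the gluing step in nonzero genus: cutting a genus-$g$ surface into planar pieces requires cycles that are not separating, and the holonomies along such handle cycles are not governed by the planar independence axiom $\mathbf{P_2}$, so the conditional law along a non-separating seam is not directly a planar datum. One must verify that the Markov property $\mathbf{A_6}$ of ${\sf HF}$ along these handle-cutting cycles matches that of ${\sf YM}$, i.e.\ that the ``interaction across a handle'' is the same for both fields. Since both share the same partition functions $Z^+_{p,g,t}$ and the same disintegration with respect to boundary constraints, I expect this matching to follow from an induction on genus using $\mathbf{A_6}$ to reduce the genus by one cut at a time, comparing the resulting conditional expectations as functions of the seam conjugacy classes via the Fellerian regularity; but making this comparison rigorous without assuming constructibility of ${\sf HF}$ is the delicate point, and it is here that the argument would have to be most careful.
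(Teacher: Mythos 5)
You are attempting to prove a statement that the paper itself presents as Conjecture \ref{conjectureHF} and does \emph{not} prove: it is left open. What the paper does establish is Theorem \ref{theo}, the equality of the \emph{spherical parts} of ${\sf HF}$ and ${\sf YM}$, and the first half of your proposal reproduces that argument faithfully: form the free boundary condition expectation $\big(\mathbb{E}^{{\sf HF}}_{vol}\big)_{vol}$ (Definition \ref{freebound}), show it is a stochastically continuous strong planar Markovian holonomy field (Theorem \ref{freecondplanarHF}), use the law of a simple loop (Remark \ref{loiduneboucle}, Theorem \ref{partitionlevy}) together with the characterization theorems (Theorems \ref{caracterisation} and \ref{unicity3-f}) to identify it with the pure planar Yang--Mills field $\big(\mathbb{E}^{Y}_{vol}\big)_{vol}$, and match the Yang--Mills side via Proposition \ref{egaliteentrelesdeuxconstructions}. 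Up to the statement ``${\sf HF}$ and ${\sf YM}$ agree on disks and spheres with holes,'' you are in step with the paper's proof of Theorem \ref{theo}.

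The globalization step is a genuine gap, and it is not a matter of being ``most careful'': it is precisely the open content of the conjecture. Axiom $\mathbf{A_6}$ is a pushforward statement, ${\sf HF}_{Spl_l(M)} = {\sf HF}_{(M,\dots)}\circ\psi^{-1}$, so knowing the field on the split, lower-genus surface constrains but does not determine the field on $M$: a loop that winds through a handle does not descend to a loop of $Spl_l(M)$, and the only trace the seam records, via $\mathbf{A_1}$--$\mathbf{A_3}$, is a \emph{conjugacy class}. The measures live on the invariant $\sigma$-field, and the joint law of $\big(h(a_i),h(b_i)\big)$ for the generators of $\pi_1(M)$ is exactly the data that neither the partition functions $Z^{+}_{p,g,t}$, nor any planar restriction, nor any split-surface datum captures; nothing you invoke excludes two fields sharing all of these while differing in the conditional ``twist'' across a non-separating seam. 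The Fellerian regularity you lean on concerns only the scalar $(t,C)\mapsto {\sf HF}_{(M,vol,\mathcal{C},C)}(\mathbbm{1})$, so it cannot pin down the conditional measures along the seam. Note also that your genus induction implicitly compares restrictions to graphs, i.e.\ it assumes ${\sf HF}$ is constructible, which the paper explicitly flags as open (the remark on the incompleteness of Lemma 3.2.2 of \cite{Levy}). The paper's closing comment after Theorem \ref{theo} states exactly where your argument stops: proving the conjecture requires extending the comparison to the remaining loops, including the generators of the fundamental group, and no mechanism in your proposal supplies that extension.
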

From this conjecture, it would be true that every regular Markovian holonomy field is constructible. In order to state the main result concerning this conjecture, we need the notion of planar mark. Let $M$ be an oriented smooth compact surface with boundary. 
\begin{definition}
A planar mark is a mark $l$ on $M$ such that $l$ cuts $M$ in two parts, one of which is of genus $0$. 
\end{definition}

\begin{theorem}
\label{theo}
Let $\left({\sf HF}_{(M, vol,\mathcal{C},C)}\right)_{(M, vol,\mathcal{C},C)}$ be a regular Markovian holonomy field and $(Y_{t})_{t\in\mathbb{R}^{+}}$ its associated $G$-valued L\'{e}vy process. Let us consider $\left({\sf YM}_{(M, vol,\mathcal{C},C)}\right)_{M, vol,\mathcal{C},C}$ the Yang-Mills field associated with $(Y_{t})_{t\in\mathbb{R}^{+}}$. 

Let $\left(M, vol,\emptyset, C\right)$ be a measured marked surface with $G$-constraints, let $l$ be a planar mark on $M$, let $M_1$ be a part of $M$ of genius $0$ defined by $l$ and let $m$ be a point in $M_1$.  The following equality holds: 
$$\left({\sf HF}_{(M, vol,\emptyset, C)}\right)_{\mid \mathcal{M}ult(L_m(M_1),G)} = \left({\sf YM}_{(M, vol,\emptyset, C)}\right)_{\mid \mathcal{M}ult(L_{m}(M_1),G)}.$$
Let $\mathcal{C}$ be a collection of marks on $M$ which do not intersect the mark $l$. Let us choose an orientation of $\mathcal{C}$ denoted by $\mathcal{C}^{+}$. Let $C$ be a set of $G$-constraints on $\mathcal{B}(M)$. We endow the set of $G$-constraints on $\mathcal{C}\cup \mathcal{B}(M)$ with the measure $d\lambda_{C_{\mid \mathcal{B}(M)}}$ coming from: $$\bigotimes\limits_{c \in \mathcal{C}^{+}} dg_{c} \otimes \bigotimes\limits_{b\in \mathcal{B}(M)^{+}} \delta_{C(b)}.$$
By disintegration, for any set of constraints on $\mathcal{B}(M)$, $d\lambda_{C_{\mid \mathcal{B}(M)}}$ almost surely: 
\begin{align*}
\left({\sf HF}_{(M, vol,\mathcal{C}, C)} \right)_{\mid \mathcal{M}ult(L_{m}(M_1),G)} = \left({\sf YM}_{(M, vol,\mathcal{C}, C)}\right)_{\mid \mathcal{M}ult(L_{m}(M_1),G)}.
\end{align*}

\end{theorem}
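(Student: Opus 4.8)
The plan is to reduce this statement about a general Markovian holonomy field on a surface to the already-established characterization of planar Markovian holonomy fields, via the construction of a free boundary expectation on the plane. The key observation is that the mark $l$ cuts off a genus-$0$ piece $M_1$, and loops based at $m\in M_1$ see only what happens inside $M_1$ together with the constraint carried by $l$; the genus-$0$ piece is essentially a disk (topologically a sphere with holes), so the holonomy of loops in $L_m(M_1)$ should behave like a \emph{planar} object.

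First I would use Axiom $\mathbf{A_6}$ (splitting) to split $M$ along the planar mark $l$, writing $M = M_1 \cup_l M_2$ with $M_1$ of genus $0$. By $\mathbf{A_5}$ and $\mathbf{A_6}$, the restriction of ${\sf HF}_{(M, vol,\emptyset, C)}$ (and of ${\sf YM}_{(M, vol,\emptyset, C)}$) to multiplicative functions on $P(M_1)$ is governed, after integrating out the constraint on $l$ via $\mathbf{A_3}$, by the measure on $M_1$ alone. The point is that $\left({\sf HF}_{(M,vol,\emptyset,C)}\right)_{\mid \mathcal{M}ult(L_m(M_1),G)}$ depends only on the genus-$0$ data once we condition on the conjugacy class of $h(l)$. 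The central step is then to construct, from the regular Markovian holonomy field ${\sf HF}$, a planar Markovian holonomy field on $\mathbb{R}^2$: identifying the genus-$0$ piece $M_1$ (with its single extra boundary being the mark $l$) with a disk embedded in the plane, and using the diffeomorphism-at-infinity extension (Lemma \ref{quasi-diff} and the discussion following it) to promote area-preserving homeomorphisms of $M_1$ to the area-preserving homeomorphism invariance required by $\mathbf{P_1}$ or $\mathbf{wP_1}$. The free boundary expectation (the object whose construction is announced in the introduction as Lemma \ref{area-preserv-exp}) should produce from ${\sf HF}$ a family $\big(\mathbb{E}_{vol}\big)_{vol}$ of random holonomy fields on the plane; I would verify that this family satisfies $\mathbf{P_1}$, $\mathbf{P_2}$, $\mathbf{P_3}$, using respectively $\mathbf{A_4}$ (homeomorphism invariance), the Markov/splitting axioms $\mathbf{A_5}$–$\mathbf{A_6}$ (for the $\mathcal{I}$-independence of loops in disjoint disks), and $\mathbf{A_4}$ again with the identity map (for locality).

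Once ${\sf HF}$ yields a stochastically continuous strong planar Markovian holonomy field, Theorem \ref{caracterisation} (equivalently Theorem \ref{defplanarHFgeneral}) identifies it as the planar Yang-Mills field associated with some self-invariant-by-conjugation L\'evy process. The partition-function analysis of \cite{Levy}, recalled in Theorem \ref{partitionlevy}, shows that the L\'evy process extracted from the planar restriction of ${\sf HF}$ must coincide (up to the equivalence on the invariant $\sigma$-field, which is exactly what is measured by $\mathcal{M}ult(L_m(M_1),G)$) with the L\'evy process $(Y_t)_{t\geq 0}$ associated with ${\sf HF}$ via its partition functions; indeed the law of $h(l)$ for a simple loop $l$ bounding a disk of area $t$ is precisely $Z^+_{1,0,t}$, which by Theorem \ref{partitionlevy} is the one-dimensional distribution of $(Y_t)$. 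Since the Yang-Mills field ${\sf YM}$ is by Definition \ref{YMfields} built from the same $(Y_t)$ and its planar restriction is, by construction, the planar Yang-Mills field associated with $(Y_t)$, the two planar restrictions agree; hence the restrictions to $\mathcal{M}ult(L_m(M_1),G)$ agree, which is the first equality of the theorem.

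For the second, marked, equality I would disintegrate with respect to $d\lambda_{C_{\mid\mathcal{B}(M)}}$: integrating out the constraints on $\mathcal{C}$ via repeated application of $\mathbf{A_3}$ recovers ${\sf HF}_{(M,vol,\emptyset,C)}$ from the family $\big({\sf HF}_{(M,vol,\mathcal{C},C)}\big)_C$, and by $\mathbf{A_2}$ the dependence on $C$ is measurable, so the first (unmarked) equality propagates $d\lambda_{C_{\mid\mathcal{B}(M)}}$-almost surely to the marked one provided the marks of $\mathcal{C}$ do not meet $l$, which is exactly the hypothesis. The main obstacle I anticipate is the careful verification that the free boundary expectation on the plane genuinely satisfies the independence axiom $\mathbf{P_2}$: translating the surface splitting and the conditional independence structure of ${\sf HF}$ (encoded in $\mathbf{A_5}$ and $\mathbf{A_6}$) into the $\mathcal{I}$-independence of holonomies of loops lying in disjoint planar disks requires handling the constraints on the cutting curves correctly and using Remark \ref{I-indep-et-indep-normale} to pass between $\mathcal{I}$-independence and ordinary independence where the endpoint sets are disjoint. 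A secondary subtlety is ensuring that the genus of $M_2$ does not interfere: since $m$ lies in the genus-$0$ part $M_1$ and all observed loops stay in $M_1$, the higher-genus contribution should factor out through the partition function on the conjugacy class of $h(l)$, but this needs the splitting to be set up so that loops in $L_m(M_1)$ are insensitive to $M_2$ beyond the marginal law of $h(l)$.
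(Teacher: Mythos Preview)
Your strategy is exactly the paper's: pass to the free boundary expectation on the plane, recognize it as a planar Markovian holonomy field, invoke the characterization theorem to identify it as a planar Yang-Mills field, and match the L\'evy process via the law of $h(l)$ on simple loops. Two points, however, are not fully justified in your outline.

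First, after Theorem~\ref{caracterisation} hands you a \emph{self-invariant} L\'evy process $Y'$ such that $\mathbb{E}^{{\sf HF}}_{vol}$ is the planar Yang-Mills field for $Y'$, you argue that $Y'$ coincides with the admissible process $Y$ because the law of $h(l)$ for a simple loop of area $t$ is $Z^{+}_{1,0,t}\,dg$. But this only says $\int_G (m'_t)^g\,dg = m_t$ for every $t$; as Lemma~\ref{contre-ex} shows, matching these one-dimensional $G$-averaged marginals does \emph{not} force $Y'$ to be $G$-conjugation-invariant in general. What closes the gap is precisely Theorem~\ref{unicity3-f} (equivalently Theorem~\ref{onemarg}): because $Y'$ is also self-invariant by conjugation, quasi-invariance upgrades to full $G$-invariance, hence $Y'=Y$ and the two planar Yang-Mills fields agree. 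The paper invokes this step explicitly via Remark~\ref{loiduneboucle} and Theorem~\ref{unicity3-f}; without it your ``up to equivalence on the invariant $\sigma$-field'' remark is an assertion, not a proof.

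Second, once $\mathbb{E}^{{\sf HF}}_{M,vol}=\mathbb{E}^{{\sf YM}}_{M,vol}$ for a disk $M$, disintegration with respect to $h(\partial M)$ only yields $\widehat{{\sf HF}}_{(M,vol,\emptyset,\partial M\to[x])}=\widehat{{\sf YM}}_{(M,vol,\emptyset,\partial M\to[x])}$ for \emph{almost every} $x$. The first assertion of the theorem, and the induction on the number of holes that handles a general genus-$0$ piece $M_1$ with several boundary components carrying the specific constraints $C$, require this equality for \emph{every} $x$. The paper removes the ``a.s.''\ by inserting an auxiliary mark $l'$ inside $M$ and rewriting the $x$-dependent measure, restricted to $P(\overline{{\sf Int}(l')})$, as an integral in $y$ of the $y$-constrained disk measure against the annulus partition function $Z^{+}_{2,0,s}(x,y^{-1})$; this convolution smooths the almost-sure equality into a pointwise one. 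Your outline does not contain this step, and without it the conclusion only holds for almost every boundary constraint, which is weaker than what is claimed.
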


In order to prove Conjecture \ref{conjectureHF}, one would have to generalize Theorem \ref{theo} in order to include all the remaining loops, including the generators of the fundamental group of the surface.

\chapter{The Free Boundary Condition on The Plane }

Let ${\sf HF}$ be a regular Markovian holonomy field which will be fixed until the end of the chapter. The measure ${\sf HF}_{\left(M, vol,\mathcal{C}, C\right)}$ is not in general a probability measure. One way to deal with probability measure would be to normalize it by their mass. Yet, a better way to get a probability measure is to consider the free boundary condition measure.

\section{Free boundary condition on a surface}
Let $M$ be a surface homomorphic to a disk $\Sigma^{+}_{0,1}$ endowed with a measure of area $vol$. 
\begin{definition}
\label{freesurface}
The free boundary condition expectation on $M$ associated with ${\sf HF}$ is the probability measure on $\big(\mathcal{M}ult\big(P(M),G\big), \mathcal{B}\big)$ such that for any positive integer $n$, any measurable function $f: G^{n} \to \mathbb{R}^{+}$ and any finite family $c_{1}$, ... $c_{n}$ of elements of $P(M)$: 
\begin{align*}
\mathbb{E}^{{\sf HF}}_{M, vol}\Big( f\big(h(c_{1}),  ...,h(c_{n})\big) \Big) = \int_{G} \widehat{{\sf HF}}_{\left(M, vol, \emptyset, \partial M\to [x]\right)} \Big(f\big(h(c_{1}),  ...,h(c_{n})\big) \Big) dx, 
\end{align*}
where $\widehat{{\sf HF}}_{M, vol, \emptyset, \partial M\to [x]}$ is the extension of ${{\sf HF}}_{M, vol, \emptyset, \partial M\to [x]}$ to the Borel $\sigma$-field given by Proposition \ref{exten}. 
\end{definition}

\begin{remarque}
In this definition we have extended the $\sigma$-field to the Borel $\sigma$-field, in a way such that the new measure becomes invariant by the gauge group. In order for the definition of $\mathbb{E}^{{\sf HF}}_{M, vol}$ to be consistent with the way we named it, one has to verify that it is indeed a probability measure. Since the constant function $\mathbbm{1}$ is gauge-invariant, $\hat{\mathbbm{1}}_{J_{c_{1},  ...,c_{n}}} = \mathbbm{1}$, thus $\mathbb{E}^{{\sf HF}}_{M, vol}(\mathbbm{1}) = \int_{G} {\sf HF}_{\left(M, vol, \emptyset, \partial M\to [x]\right)} (\mathbbm{1}) dx = 1,$ the last equality coming from the normalization Axiom $\mathbf{A_{7}}$ in Definition \ref{Markovcont}.
\end{remarque}

The free boundary condition expectation on $M$ of a Yang-Mills field is computed in the following lemma. 

\begin{lemme}
\label{YMfreegraph}
Let ${\sf YM}$ be the Yang-Mills field associated with a $G$-valued admissible L\'{e}vy process $(Y_{t})_{t\in \mathbb{R}^{+}}$. For any positive real $t$, let $Q_{t}$ be the density of $Y_{t}$. Let $\mathbb{G}$ be a graph on $M$: 
\begin{align*}
{\big(\mathbb{E}^{{\sf YM}}_{M, vol}\big)}_{\mid \mathcal{M}ult\left(P(\mathbb{G}),G\right) } = \prod_{F \in \mathbb{F}} Q_{vol(F)}\left(h(\partial F )\right) {\sf U}^{\mathbb{G}} (dh), 
\end{align*}
where ${\sf U}^{\mathbb{G}}$ was defined in Notation \ref{uniforme} and where we used the notation $Q_{vol(F)}\left(h(\partial F )\right)$ already used in Proposition \ref{densitee}. 
\end{lemme}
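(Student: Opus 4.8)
The plan is to unwind the definition of the free boundary condition expectation and reduce the computation to the known form of the Yang-Mills measure on a graph. First I would fix a graph $\mathbb{G} = (\mathbb{V}, \mathbb{E}, \mathbb{F})$ on $M$. Since $M$ is homeomorphic to the disk $\Sigma^{+}_{0,1}$, its boundary $\partial M$ is a single non-oriented cycle which, by Proposition 1.3.10 of \cite{Levy}, is represented by a loop in $L(\mathbb{G})$. Applying Definition \ref{freesurface} to a finite family of paths generating $\mathcal{M}ult(P(\mathbb{G}),G)$, I would write
\begin{align*}
\big(\mathbb{E}^{{\sf YM}}_{M, vol}\big)_{\mid \mathcal{M}ult(P(\mathbb{G}),G)} = \int_{G} \big(\widehat{{\sf YM}}_{(M, vol, \emptyset, \partial M \to [x])}\big)_{\mid \mathcal{M}ult(P(\mathbb{G}),G)}\, dx.
\end{align*}

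Next I would substitute the explicit form of the Yang-Mills field given by Definition \ref{YMfields}. For the constrained surface $(M, vol, \emptyset, \partial M \to [x])$ with the graph $\mathbb{G}$, the mark collection is empty but the boundary carries the constraint $\partial M \to [x]$, so
\begin{align*}
\big({\sf YM}_{(M, vol, \emptyset, \partial M \to [x])}\big)_{\mid \mathcal{M}ult(P(\mathbb{G}),G)} = \prod_{F \in \mathbb{F}} Q_{vol(F)}\big(h(\partial F)\big)\, {\sf U}^{\mathbb{G}}_{M, \emptyset, \partial M \to [x]}(dh).
\end{align*}
The key step is then to understand the constrained uniform measure ${\sf U}^{\mathbb{G}}_{M, \emptyset, \partial M \to [x]}$, integrate it against $dx$, and recognize the gauge-invariant extension. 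Using the definition of the uniform measure with a single boundary loop $l = \partial M = e_{1}\cdots e_{n}$, the measure on $G^{\mathbb{E}^{+}}$ is the product of Haar measures on the non-boundary edges times $\delta_{[x](n)}(dg_{n}\cdots dg_{1})$. Since $\delta_{[x](n)}$ is the $G^{n}$-invariant probability on the fibre $\{g_{1}\cdots g_{n} \in [x]\}$, integrating $\delta_{[x](n)}$ against $dx$ over $G$ exactly recovers the product of Haar measures $\bigotimes_{e \in \mathbb{E}^{+}} dg_{e}$, which is precisely ${\sf U}^{\mathbb{G}}$ of Notation \ref{uniforme}. I would make this step rigorous by checking that $\int_{G} \delta_{[x](n)}\, dx = \bigotimes_{i=1}^{n} dg_{i}$, which follows because averaging the conditional laws on the fibres of the product map $(g_{1},\dots,g_{n}) \mapsto g_{1}\cdots g_{n}$ against the law $dx$ of the product (the pushforward of Haar measure is Haar measure) reconstitutes the joint Haar law.

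The last point to address is the gauge-invariant extension $\widehat{\ \cdot\ }$ appearing in the free boundary definition. Here I would invoke Proposition \ref{exten} together with the fact that $\prod_{F \in \mathbb{F}} Q_{vol(F)}(h(\partial F))$ is itself a gauge-invariant density (each $Q_{vol(F)}$ is a central function, so $Q_{vol(F)}(h(\partial F))$ depends only on the conjugacy class of the holonomy around $F$), and that $\bigotimes_{e \in \mathbb{E}^{+}} dg_{e}$ is gauge-invariant under the $J_{P(\mathbb{G})}$-action by left/right translation on edge variables. Consequently the measure $\prod_{F \in \mathbb{F}} Q_{vol(F)}(h(\partial F))\, {\sf U}^{\mathbb{G}}(dh)$ is already gauge-invariant on the Borel $\sigma$-field, so it coincides with the extension of its own restriction to $\mathcal{I}$, and the operations of integrating over $x$ and extending to $\mathcal{B}$ commute. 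I expect the main obstacle to be the bookkeeping around the constrained uniform measure: verifying carefully that the orientation choices and the ordering of edges in the boundary loop are compatible, and that the averaging identity $\int_{G}\delta_{[x](n)}\,dx = \bigotimes_{i}dg_{i}$ holds with the correct normalization, rather than any deep difficulty. Once these are settled, comparison with Lemma \ref{YMfreegraph}'s right-hand side is immediate.
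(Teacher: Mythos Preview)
Your proposal is correct and follows essentially the same route as the paper: the paper's proof reduces the statement to the identity $\int_{G} {\sf U}^{\mathbb{G}}_{M,\emptyset,\partial M \to x}\, dx = {\sf U}^{\mathbb{G}}$, which in turn comes from $\int_{G}\delta_{[y](n)}\,dy = \bigotimes_{i} dx_i$ (quoted as Equality~(26) of Lemma~2.3.4 in \cite{Levy}). Your treatment is simply more explicit, in particular about the gauge-invariant extension $\widehat{\ \cdot\ }$, which the paper leaves implicit.
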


\begin{proof}
This follows from the fact that $\int_{G} {\sf U}^{\mathbb{G}}_{M,\emptyset,\partial M \to x} dx = {\sf U}^{\mathbb{G}} $ which is a consequence~of: $\int_{G} \bigg[ \int_{G^{n}} f \delta_{[y](n)}(dx_{1},  ...,dx_{n}) \bigg] dy = \int_{G^{n}} f dx_{1}... dx_{n},$ given by the Equality $(26)$ of Lemma $2.3.4$ in \cite{Levy}.
\end{proof}

\section{Free boundary condition on the plane}
\label{freeboundaryconditionontheplane}

\begin{proposition}
Let $(M, vol)$ and $(M', vol')$ be two measured compact two-dimensional sub-manifolds of $\mathbb{R}^{2}$ which are homeomorphic to the unit disk. Let us suppose that ${M}$ is included in the interior of $M'$, denoted by ${\sf Int}(M')$. Let us assume that $vol'_{\mid M} = vol$. The free boundary condition expectations on $M$ and $M'$ are related by: $\mathbb{E}^{{\sf HF}}_{M, vol} = \mathbb{E}^{{\sf HF}}_{M', vol'} \circ \rho^{-1}_{M, M'},$ where we remind the reader that $\rho_{M, M'}$ was defined in Notation \ref{restriction}. Thus, for any measure of area $vol$ on $\mathbb{R}^{2}$, the family: 
\begin{align*}
\left\{\left( \mathcal{M}ult \left( P\left( M\right), G\right), \mathcal{B}, \mathbb{E}^{{\sf HF}}_{M, vol_{\mid M}} \right)_{M \subset \mathbb{R}^{2}}, \left(\rho_{M, M'}\right)_{M \subset {\sf Int}(M')} \right\},
\end{align*} 
is a projective family of probability spaces. 
\end{proposition}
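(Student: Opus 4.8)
The plan is to reduce the statement about the free boundary condition expectation to the splitting/gluing axiom $\mathbf{A_6}$ together with the disintegration axiom $\mathbf{A_3}$ of the underlying Markovian holonomy field. First I would fix a finite family of paths $c_1,\dots,c_n$ in $P(M)$ and a measurable function $f:G^n\to\mathbb{R}^+$, and unwind both sides of the claimed equality using Definition \ref{freesurface}. The key geometric observation is that since $M\subset{\sf Int}(M')$ and both are disks, the boundary $\partial M$ is a simple loop in the interior of $M'$ which splits $M'$ into two pieces: the disk $M$ and an annulus $A=\overline{M'\setminus M}$ whose two boundary components are $\partial M$ and $\partial M'$. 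I would realize this by treating $\partial M$ as a mark: adding the mark $\partial M$ to $M'$ and splitting along it (via $\mathbf{A_6}$) produces the disjoint union of $(M,vol)$ and $(A,vol'_{\mid A})$, whose joint law factorizes by $\mathbf{A_5}$.

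The main computation would then go as follows. On the right-hand side, $\mathbb{E}^{{\sf HF}}_{M',vol'}$ integrates over the constraint $x'$ on $\partial M'$ the gauge-invariant extension of ${\sf HF}_{(M',vol',\emptyset,\partial M'\to[x'])}$. Since the paths $c_1,\dots,c_n$ lie in $M=\rho_{M,M'}$-image, their holonomies only depend on the restriction to $P(M)$. I would insert an auxiliary integral over a constraint $x$ on the cycle $\partial M$ using the disintegration axiom $\mathbf{A_3}$: adding the mark $\{\partial M,\partial M^{-1}\}$ to $M'$ and disintegrating expresses ${\sf HF}_{(M',vol',\emptyset,\cdot)}$ as $\int_G {\sf HF}_{(M',vol',\{\partial M\},\partial M\to[x])}\,dx$. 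Then $\mathbf{A_6}$ splits $M'$ along $\partial M$ into $M\sqcup A$, and $\mathbf{A_5}$ factorizes the measure into the product of ${\sf HF}_{(M,vol,\emptyset,\partial M\to[x])}$ and ${\sf HF}_{(A,vol'_{\mid A},\emptyset,(\partial M,\partial M')\to([x],[x']))}$. Integrating out the annulus contribution (whose mass against $\mathbbm 1$ is governed by the partition function, and whose boundary constraint $x'$ is integrated freely) I would use the normalization axiom $\mathbf{A_7}$ applied to the annulus, which forces the total integral over $x'$ of the annulus partition function to reduce to the factor needed so that only the disk factor ${\sf HF}_{(M,vol,\emptyset,\partial M\to[x])}$ survives, recovering exactly $\mathbb{E}^{{\sf HF}}_{M,vol}$ after extending to the Borel $\sigma$-field.

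The hard part will be handling the passage between the invariant $\sigma$-field $\mathcal{I}$ (on which ${\sf HF}$ is defined) and the Borel $\sigma$-field (on which $\mathbb{E}^{{\sf HF}}_{M,vol}$ lives after the gauge-invariant extension $\widehat{\ \ }$ of Proposition \ref{exten}). I would need to check that the averaging operation $f\mapsto\hat f_{J_{c_1,\dots,c_n}}$ defining the extension commutes with the restriction $\rho_{M,M'}$ and with the splitting: because $c_1,\dots,c_n\subset P(M)$, the relevant partial gauge group is built only on the endpoints inside $M$, and the extension performed inside $M'$ agrees with the one performed inside $M$. This compatibility is precisely what guarantees $\widehat{{\sf HF}}_{M',\dots}\circ\rho_{M,M'}^{-1}=\widehat{{\sf HF}}_{M,\dots}$ on cylinder functions supported on $P(M)$. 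Once the equality $\mathbb{E}^{{\sf HF}}_{M,vol}=\mathbb{E}^{{\sf HF}}_{M',vol'}\circ\rho^{-1}_{M,M'}$ is established, the projective-family claim is immediate: for any nested disks $M\subset{\sf Int}(M')\subset{\sf Int}(M'')$ exhausting $\mathbb{R}^2$, the compatibility $\rho_{M,M'}\circ\rho_{M',M''}=\rho_{M,M''}$ from Notation \ref{restriction} together with the just-proven relation shows that the measures $\mathbb{E}^{{\sf HF}}_{M,vol_{\mid M}}$ form a consistent projective system, so no further argument is needed there.
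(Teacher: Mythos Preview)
Your approach is correct and uses the same sequence of axioms ($\mathbf{A_3}$, $\mathbf{A_6}$, $\mathbf{A_5}$, $\mathbf{A_7}$) as the paper. The one notable difference is how the gauge-extension issue is handled. You work with arbitrary paths $c_1,\dots,c_n$ and argue directly that the averaging $f\mapsto\hat f_{J_{c_1,\dots,c_n}}$ only sees endpoints in $M$, hence commutes with restriction; this is correct but requires the extra verification you flag as ``the hard part''. The paper sidesteps this entirely by first invoking Proposition~\ref{unicite1}: since both sides are gauge-invariant measures, it suffices to test them against continuous conjugation-invariant functions $f$ of loops $l_1,\dots,l_n$ based at a single point $m\in M$. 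Such test functions are already $\mathcal{I}$-measurable, so the hat in $\widehat{{\sf HF}}$ disappears and the whole computation takes place on the invariant $\sigma$-field where ${\sf HF}$ lives natively. Your route is a little more hands-on; the paper's reduction via Proposition~\ref{unicite1} buys a cleaner bookkeeping at no real cost.
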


\begin{proof}
Since $\mathbb{E}^{{\sf HF}}_{M, vol}$ and $\mathbb{E}^{{\sf HF}}_{M', vol'} \circ \rho^{-1}_{M, M'}$ are gauge-invariant, it is enough, by Proposition \ref{unicite1}, to show that, for any positive integer $n$, for any continuous conjugation-invariant function $f$ on $G^{n}$ and any $n$-tuple of loops $l_{1},  ...,l_{n}$ in $M$ based at a fixed point $m$ of $M$: $\mathbb{E}^{{\sf HF}}_{M',vol'}\big[ f\big(h\left(l_{1}\right),  ...,h\left(l_{n}\right)\big)\big]\! =\! \mathbb{E}^{{\sf HF}}_{M, vol}\big[ f\big(h\left(l_{1}\right),  ...,h\left(l_{n}\right)\big)\big]\!.$ The l.h.s. is equal to: 
\begin{align*}
&\int_{G} \int_{\mathcal{M}(P(M'),G)} f\big(h(l_{1}),  ...,h(l_{n})\big) {\sf HF}_{\left(M', vol', \emptyset, \partial M' \to [x]\right)}(dh) dx \\
&= \int_{G} \int_{G} \int_{} f\big(h(l_{1}),  ...,h(l_{n})\big) {\sf HF}_{\left(M', vol', \partial M, \{\partial M' \to [x], \partial M \to [y]\} \right)}(dh) dy dx \\
&= \int_{G} \int_{G} \int_{\mathcal{M}(P((M' \setminus {\sf Int}(M)) \sqcup M),G)} f\big(h(l_{1}),  ...,h(l_{n})\big) 
\\& \ \ \ \ \ \ \ \ \ \ \ \ \ \ \ \ \ {\sf HF}_{\left((M' \setminus  {\sf Int}(M)) \sqcup {M}, vol'_{\mid (M' \setminus {\sf Int}(M))} \sqcup vol, \emptyset, \{{\partial M' \to [x], \partial M \to [y]} \}\right)}(dh) dy dx \\
&= \int_{G} \int_{G} \int_{\mathcal{M}(P(M' \setminus {\sf Int}(M)),G)} \int_{\mathcal{M}(P(M),G)} f\big(h(l_{1}),  ...,h(l_{n})\big) {\sf HF}_{ \left({M}, vol, \emptyset, \{{ \partial M \to [y]} \}\right)}(dh)\\ &{\ \ \ \ \ \ \ \ \ \ \ \ \ \ \ \ \ } {\sf HF}_{ \left(M' \setminus {\sf Int}(M), vol'_{\mid M' \setminus {\sf Int}(M)}, \emptyset, \{{\partial M' \to [x], \partial M \to [y]} \}\right)}(dh') dy dx \\
&= \int_{G} \int_{G} \int_{\mathcal{M}(P(M),G)} f\big(h(l_{1}),  ...,h(l_{n})\big) {\sf HF}_{ \left({M}, vol, \emptyset, \{{ \partial M \to [y]} \}\right)}(dh)\\ &{\ \ \ \ \ \ \ \ \ \ \ \ \ \ \ \ \ } {\sf HF}_{\left( M' \setminus {\sf Int}(M), vol'_{\mid M' \setminus {\sf Int}(M)}, \emptyset, \{{\partial M' \to [x], \partial M \to [y]} \}\right)}(\mathbbm{1}) dy dx\\
&= \int_{G} \int_{\mathcal{M}(P(M),G)} f\big(h(l_{1}),  ...,h(l_{n})\big) {\sf HF}_{ \left({M}, vol, \emptyset, \{{ \partial M \to [y]} \}\right)}(dh)\\ &{\ \ \ \ \ \ \ \ \ \ \ \ \ \ \ \ } \Big(\int_{G} {\sf HF}_{ \left(M' \setminus {\sf Int}({M}), vol'_{\mid M' \setminus {\sf Int}({M})}, \emptyset, \{{\partial M' \to [x], \partial M \to [y]} \}\right)}(\mathbbm{1}) dx \Big) dy\\
&=\int_{G} \int_{\mathcal{M}(P(M),G)} f\big(h(l_{1}),  ...,h(l_{n})\big) {\sf HF}_{ \left({M}, vol, \emptyset, \{{ \partial M \to [y]} \}\right)}(dh) dy \\
&=\mathbb{E}^{{\sf HF}}_{M, vol} \Big[f\big(h(l_{1}),  ...,h(l_{n})\big)\Big], 
\end{align*}
where we applied successively the definition of $\mathbb{E}^{{\sf HF}}_{M',vol'}$, the Axioms $\mathbf{A_{3}}$, $\mathbf{A_{6}}$ and $\mathbf{A_{5}}$. Then after a change of notation and a Fubini exchange of integrals, the normalization Axiom $\mathbf{A_{7}}$ with the definition of $\mathbb{E}^{{\sf HF}}_{M, vol}$ lead us to the result. 
\end{proof}

The free boundary expectation on the plane is the projective limit of this family of measured spaces. Let $vol$ be a measure of area on $\mathbb{R}^{2}$.

\begin{definition}
\label{freebound}
 The free boundary condition expectation on $\mathbb{R}^{2}$ associated with ${\sf HF}$, denoted by $\mathbb{E}_{vol}^{{\sf HF}}$, and defined on $\big(\mathcal{M}ult(P(\mathbb{R}^{2}),G), \mathcal{B}\big)$ is the projective limit of: 
\begin{align*}
\left\{\left( \mathcal{M}ult \left( P\left( M\right), G\right), \mathcal{B}, \mathbb{E}^{{\sf HF}}_{M, vol_{\mid M}} \right)_{M \subset \mathbb{R}^{2}}, \left(\rho_{M, M'}\right)_{M \subset {\sf Int}(M')} \right\}.
\end{align*}
This random holonomy field is gauge-invariant. 
\end{definition}

Lemma \ref{YMfreegraph} gives for any embedded planar graph $\mathbb{G}$ an explicit formula for the restriction on $\mathcal{M}ult(P(\mathbb{G}), G)$ of the free boundary condition expectation on the plane associated with a Yang-Mills field. Proposition \ref{embfinit} asserts that any finite planar graph $\mathbb{G}'$ can be seen as a subgraph of an embedded planar graph. It is thus possible to give an explicit formula for the restriction on $\mathcal{M}ult(P(\mathbb{G}'), G)$ of the free boundary condition expectation on the plane associated with a Yang-Mills field.

\begin{proposition}
\label{graphconditionlibre}
Suppose that $\mathbb{R}^{2}$ is endowed with a measure of area $vol$. Let $\mathbb{G} = (\mathbb{V}, \mathbb{E}, \mathbb{F})$ be a finite planar graph. Let $Y=(Y_{t})_{t \geq 0}$ be a $G$-valued admissible L\'{e}vy process with associated semigroup of densities $(Q_{t})_{t\geq 0}$.  Let ${\sf YM}$ be the Yang-Mills field associated with $Y$. The free boundary condition expectation on $\mathbb{R}^{2}$ of ${\sf YM}$ satisfies: 
\begin{align*}
\big(\mathbb{E}^{{\sf YM}}_{vol}\big)_{\mid \mathcal{M}ult(P(\mathbb{G}),G) } (dh) = \prod_{F \in \mathbb{F}^{b}} Q_{vol(F)} \big(h(\partial F)\big) {\sf U}^{\mathbb{G}} (dh),
\end{align*}
where $\partial F$ is the anti-clockwise oriented facial cycle associated with $F$ and where we used the same convention as before for $Q_{vol(F)} \big(h(\partial F)\big)$.
\end{proposition}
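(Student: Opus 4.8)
The statement to prove is Proposition \ref{graphconditionlibre}: an explicit density formula for the restriction to $\mathcal{M}ult(P(\mathbb{G}),G)$ of the free boundary condition expectation $\mathbb{E}^{{\sf YM}}_{vol}$ on the plane, where $\mathbb{G}$ is an \emph{arbitrary} finite planar graph (not necessarily embedded). The natural strategy is a two-step reduction: first handle the case of embedded planar graphs, where Lemma \ref{YMfreegraph} directly applies, and then pass from embedded graphs to arbitrary finite planar graphs via the subgraph result Proposition \ref{embfinit}.

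\textbf{Step 1: the embedded case.} First I would recall that by Definition \ref{freebound}, $\mathbb{E}^{{\sf YM}}_{vol}$ is the projective limit of the free boundary condition expectations $\mathbb{E}^{{\sf YM}}_{M, vol_{\mid M}}$ over compact disks $M \subset \mathbb{R}^{2}$. An embedded planar graph $\mathbb{G}$ is, by definition, a graph on some smooth connected compact surface with boundary $M$ embedded in $\mathbb{R}^{2}$, which we may take to be a disk. For such $M$, Lemma \ref{YMfreegraph} gives
\begin{align*}
\big(\mathbb{E}^{{\sf YM}}_{M, vol}\big)_{\mid \mathcal{M}ult(P(\mathbb{G}),G)} = \prod_{F \in \mathbb{F}} Q_{vol(F)}\big(h(\partial F)\big)\, {\sf U}^{\mathbb{G}}(dh).
\end{align*}
The subtlety is that the set of faces $\mathbb{F}$ of $\mathbb{G}$ considered as a graph on $M$ includes faces whose boundary meets $\partial M$, whereas the target formula only ranges over \emph{bounded} faces $\mathbb{F}^{b}$ of the planar graph. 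To reconcile these, I would enlarge $M$ to a bigger disk $M'$ with $M \subset {\sf Int}(M')$ and $vol'_{\mid M} = vol$, apply the projective compatibility $\mathbb{E}^{{\sf YM}}_{M, vol} = \mathbb{E}^{{\sf YM}}_{M', vol'} \circ \rho^{-1}_{M, M'}$, and observe that the faces of $\mathbb{G}$ which touched $\partial M$ now become genuine bounded faces of $\mathbb{G}$ viewed inside $M'$, except for one face that merges into the unbounded region. The single face of $\mathbb{G}$ that becomes (part of) the unbounded face $F_\infty$ contributes a factor that is integrated out against a Haar factor; here the key computation is $\int_G {\sf U}^{\mathbb{G}}_{M,\emptyset,\partial M\to x}\, dx = {\sf U}^{\mathbb{G}}$ already used in Lemma \ref{YMfreegraph}, together with the fact that $Q_t$ integrates to $1$, so that the outermost facial factor disappears, leaving exactly $\prod_{F \in \mathbb{F}^{b}} Q_{vol(F)}(h(\partial F))\, {\sf U}^{\mathbb{G}}(dh)$ in the projective limit.

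\textbf{Step 2: arbitrary finite planar graphs.} For a general finite planar graph $\mathbb{G}'$, Proposition \ref{embfinit} guarantees that $\mathbb{G}'$ is a subgraph of an embedded planar graph $\mathbb{G}$. I would then restrict the formula established in Step 1 for $\mathbb{G}$ along $\rho_{P(\mathbb{G}'),P(\mathbb{G})}$. Integrating out the holonomies along the edges of $\mathbb{G}$ that are not in $\mathbb{G}'$ requires showing that the fine faces of $\mathbb{G}$ regroup into the faces of $\mathbb{G}'$, with the product of the corresponding heat-kernel factors collapsing to a single factor by the convolution semigroup property $Q_s * Q_t = Q_{s+t}$ (conjugation-invariance of $Q_t$ ensuring the facial evaluations are well defined independently of base point and orientation), while the uniform measure ${\sf U}^{\mathbb{G}}$ pushes forward to ${\sf U}^{\mathbb{G}'}$ under edge erasure by the Markov/fusion arguments already exploited in the proofs of Propositions \ref{densitee} and \ref{defplanarHF}.

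\textbf{The main obstacle.} I expect the delicate point to be the bookkeeping in Step 1 of which faces of $\mathbb{G}$-as-a-graph-on-$M$ correspond to bounded planar faces and precisely how the boundary constraint $\partial M \to [x]$, after the free integration $\int_G \, dx$, eliminates exactly one facial density factor rather than altering the uniform measure incorrectly. Making this rigorous means being careful that the uniform measure ${\sf U}^{\mathbb{G}}_{M,\emptyset,\partial M \to x}$ with a constraint on $\partial M$ really integrates against $dx$ to the unconstrained ${\sf U}^{\mathbb{G}}$, which is the content of the displayed identity in Lemma \ref{YMfreegraph}; the genus-$0$/disk topology is what ensures the outer face is a single disk so that this collapse is clean. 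The convolution regrouping in Step 2 is routine given the tools already developed, so the real work is the careful topological identification of faces in the passage to the projective limit.
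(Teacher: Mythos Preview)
Your approach is correct and is genuinely different from the paper's. The paper invokes the forward-referenced Theorem \ref{freecondplanarHF} (that $(\mathbb{E}^{{\sf YM}}_{vol})_{vol}$ is a strong planar Markovian holonomy field) together with the characterization Theorem \ref{caracterisation0}: since both the free boundary expectation and the density formula define stochastically continuous in law weak discrete planar Markovian holonomy fields, it suffices to check that $(h(L_{n,0}))_{n\in\mathbb{N}}$ has the same law under each, for the graphs $\mathbb{G}_n=\mathbb{N}^{2}\cap([0,n]\times[0,1])$. This is then verified by extending $\mathbb{G}_n$ to an explicit embedded graph $\mathbb{G}_n'$ (adding a surrounding circle and one radial edge), applying Lemma \ref{YMfreegraph} on that disk, and restricting back to $\mathbb{G}_n$ via the projectivity (Axiom $\mathbf{wDP_4}$) of the density formula.

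Your route is the more elementary one: Lemma \ref{YMfreegraph} plus Proposition \ref{embfinit} plus projectivity of the density formula, without passing through the characterization machinery. This buys independence from the forward reference to Theorem \ref{freecondplanarHF}. The paper's route buys a minimal verification: it only has to match the two fields on the single family $(\mathbb{G}_n')_{n}$ rather than on all embedded graphs. One correction to your Step 1: the face bookkeeping you worry about is a non-issue. For an embedded graph $\mathbb{G}$ on a disk $M$, the boundary $\partial M$ is represented by edges of $\mathbb{G}$ (this is the definition of a graph on $M$), so the faces of $\mathbb{G}$ as a graph on $M$ are \emph{exactly} the bounded faces $\mathbb{F}^{b}$ of $\mathbb{G}$ viewed as a planar graph; the planar unbounded face is just $\mathbb{R}^{2}\setminus M$. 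Hence Lemma \ref{YMfreegraph} already gives the target formula with product over $\mathbb{F}^{b}$, and no disk enlargement is needed in Step 1. Your Step 2 then reduces precisely to the $\mathbf{wDP_4}$ axiom for the density formula, which is what the paper also uses (citing the proof of Proposition \ref{densitee}).
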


In order to simplify the proof, we will use the upcoming Theorem \ref{freecondplanarHF}. 
\begin{proof}
We have already seen in the proof of Proposition \ref{densitee} that 
\begin{align*}
 \left(\prod_{F \in \mathbb{F}^{b}} Q_{vol(F)} \big(h(\partial F)\big) {\sf U}^{\mathbb{G}} (dh)\right)_{\mathbb{G}, vol}
\end{align*}
is a stochastically continuous in law weak discrete planar Markovian holonomy field. Using Theorem \ref{freecondplanarHF} and the constructibility result of Section \ref{weakconstructibility}, the family $\left(\big(\mathbb{E}^{{\sf YM}}_{vol}\big)_{\mid \mathcal{M}ult\left(P(\mathbb{G}),G\right)}\right)_{\mathbb{G},vol}$ is a stochastically continuous in law weak discrete planar Markovian holonomy field. Recall the definition of $L_{i,j}$ in Definition \ref{base}. Using Theorem \ref{caracterisation0}, we only have to check that for any positive real $\alpha$, $\big(h(L_{n,0}) \big)_{n\in\mathbb{N}}$ has the same law under $\mathbb{E}^{{\sf YM}}_{\alpha dx}$ as under $\prod\limits_{F \in \mathbb{F}^{b}} Q_{\alpha} \big(h(\partial F) \big) {\sf U}^{\mathbb{N}^{2}} (dh)$, where $\mathbb{F}^{b}$ is the set of bounded faces of the $\mathbb{N}^{2}$ graph. The value of $\alpha$ will not be important, so we will suppose that $\alpha=1$. In fact, we will prove that for any positive integer $n$, $\left(\mathbb{E}^{{\sf YM}}_{dx}\right)_{\mid \mathcal{M}ult(P(\mathbb{G}_n),G)}(dh)= \prod\limits_{F \in \mathbb{F}_n^{b}} Q_{1}(h(\partial F)) {\sf U}^{\mathbb{G}_n}(dh)$, where $\mathbb{G}_n = (\mathbb{V}_n, \mathbb{E}_n, \mathbb{F}_n) = \mathbb{N}^{2} \cap\big( [0,n] \times [0,1]\big)$.

Let $\partial D(0,n+1)$ be the loop based at $(n+1,0)$ turning anti-clockwise, representing the cycle bounding the disk of radius $n+1$ centered at $(0,0)$. Let $\mathbb{G}_n' = (\mathbb{V}_n', \mathbb{E}_{n}',\mathbb{F}_n')$ be the graph defined by: 
\begin{itemize}
\item $\mathbb{E}_n' = \mathbb{E}_n \cup \big\{ \partial D(0,n+1), \partial D(0,n+1)^{-1}, e^{r}_{n,0}, (e^{r}_{n,0})^{-1} \big\}$,
\item $\mathbb{V}_n' = \mathbb{V}_n\cup\big\{ (n+1,0)\big\}$. 
\end{itemize}
The finite planar graph $\mathbb{G}_n'$ is an embedded graph and $\mathbb{G}_n$ is a subgraph of $\mathbb{G}_n'$. Using Lemma \ref{YMfreegraph}: 
$$\big(\mathbb{E}^{{\sf YM}}_{dx}\big)_{\mid \mathcal{M}ult(P(\mathbb{G}_n'),G)} (dh) = \prod_{F \in \mathbb{F}_n'^{b}} Q_{dx(F)} \big(h(\partial F)\big) {\sf U}^{\mathbb{G}'_n} (dh).$$ Since $ \left(\prod_{F \in \mathbb{F}^{b}} Q_{vol(F)} \big(h(\partial F)\big) {\sf U}^{\mathbb{G}} (dh)\right)_{\mathbb{G}, vol}$ is a weak discrete planar Markovian holonomy field, the restriction of $\prod_{F \in \mathbb{F}_n'^{b}} Q_{dx(F)} \big(h(\partial F)\big) {\sf U}^{\mathbb{G}'_n} (dh)$ to $\mathcal{M}ult(P(\mathbb{G}_n), G)$ is $\prod\limits_{F \in \mathbb{F}_n^{b}} Q_{1}(h(\partial F)) {\sf U}^{\mathbb{G}_n}(dh)$: $\left(\mathbb{E}^{{\sf YM}}_{dx}\right)_{\mid \mathcal{M}ult(P(\mathbb{G}_n),G)}(dh)= \prod\limits_{F \in \mathbb{F}_n^{b}} Q_{1}(h(\partial F)) {\sf U}^{\mathbb{G}_n}(dh).$
\end{proof}

This last proposition and Proposition \ref{densitee} show that the free boundary condition expectation on $\mathbb{R}^{2}$ of a Yang-Mills field associated with an admissible L\'{e}vy process $Y$ is the planar Yang-Mills field associated with $Y$. This implies the following result.

\begin{proposition}
\label{egaliteentrelesdeuxconstructions}
Let ${\sf YM}$ be the Yang-Mills field associated with an admissible L\'{e}vy process $Y=(Y_t)_{t \in \mathbb{R}^{+}}$. For any planar graph $\mathbb{G} = (\mathbb{V}, \mathbb{E}, \mathbb{F})$, any measure of area $vol$, any family of facial loops $(c_{F})_{F \in \mathbb{F}^{b}}$ oriented anti-clockwise and any rooted spanning tree $T$ of $\mathbb{G}$, under the free boundary condition on the plane $\mathbb{E}^{{\sf YM}}_{vol}$, the random variables $\big(h\left(\l_{c_{F},T}\right)\big)_{F\in \mathbb{F}^{b}}$ are independent and for any $F\in \mathbb{F}^{b}$, $h(\l_{c_{F},T})$ has the same law as $Y_{vol(F)}$. 
\end{proposition}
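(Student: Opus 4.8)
The plan is to deduce Proposition \ref{egaliteentrelesdeuxconstructions} directly from the immediately preceding results, so that very little genuinely new work is required; the statement is essentially a translation of the density formula of Proposition \ref{graphconditionlibre} into the language of facial lassos used in the construction of planar Yang-Mills fields. First I would recall that by Proposition \ref{graphconditionlibre}, for any finite planar graph $\mathbb{G}$ and any measure of area $vol$, the restriction of the free boundary condition expectation $\mathbb{E}^{{\sf YM}}_{vol}$ to $\mathcal{M}ult(P(\mathbb{G}),G)$ is precisely the density $\prod_{F \in \mathbb{F}^{b}} Q_{vol(F)}(h(\partial F)) \, {\sf U}^{\mathbb{G}}(dh)$, which is exactly the density appearing in Proposition \ref{densitee} for the discrete planar Yang-Mills field $\mathbb{E}^{Y,\mathbb{G}}_{vol}$ associated with $Y$. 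Hence the family $\big((\mathbb{E}^{{\sf YM}}_{vol})_{\mid \mathcal{M}ult(P(\mathbb{G}),G)}\big)_{\mathbb{G},vol}$ coincides graph-by-graph with $\big(\mathbb{E}^{Y,\mathbb{G}}_{vol}\big)_{\mathbb{G},vol}$, at least when $Q_t$ exists.

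Once this identification is made, the conclusion about the facial lassos is immediate from the defining property of the discrete planar Yang-Mills field: indeed, Proposition \ref{defplanarHF} asserts that under $\mathbb{E}^{Y,\mathbb{G}}_{vol}$, for any rooted spanning tree $T$ and any family $(c_F)_{F\in\mathbb{F}^{b}}$ of facial loops oriented anti-clockwise, the variables $\big(h(\l_{c_F,T})\big)_{F\in\mathbb{F}^{b}}$ are independent with $h(\l_{c_F,T})$ distributed as $Y_{vol(F)}$. Transporting this statement across the equality of measures gives exactly the assertion of the proposition for $\mathbb{E}^{{\sf YM}}_{vol}$. The extension from finite to infinite planar graphs poses no difficulty, since the facial lassos attached to the bounded faces of an infinite planar graph $\mathbb{G}$ already lie in $P(\mathbb{G}_f)$ for a suitable finite subgraph $\mathbb{G}_f \preccurlyeq \mathbb{G}$, and the restriction maps are compatible by Remark \ref{graphaff} and the projectivity of the whole construction.

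The one technical point I would need to address is that Proposition \ref{graphconditionlibre} and Proposition \ref{densitee} are stated under the hypothesis that $Y_t$ admits a density $Q_t$ with respect to the Haar measure, which holds for admissible L\'{e}vy processes by definition; since the statement of Proposition \ref{egaliteentrelesdeuxconstructions} is explicitly about the Yang-Mills field associated with an admissible L\'{e}vy process, the density $Q_t$ is available for all $t>0$, so there is no gap here. For completeness I would note that the same conclusion could alternatively be obtained without invoking the density, by combining Theorem \ref{caracterisation0} (which shows that a stochastically continuous in law weak discrete planar Markovian holonomy field is determined by the law of $(h(L_{n,0}))_{n}$) with the already-established fact, used in the proof of Proposition \ref{graphconditionlibre}, that $\big((\mathbb{E}^{{\sf YM}}_{vol})_{\mid \mathcal{M}ult(P(\mathbb{G}),G)}\big)_{\mathbb{G},vol}$ is such a field.

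The main obstacle, such as it is, is not mathematical depth but bookkeeping: one must be careful that the equality of the two families of measures is genuinely a graph-by-graph equality of gauge-invariant random holonomy fields, and that the independence and marginal-law statements are invariant under the choice of $T$ and $(c_F)_{F\in\mathbb{F}^{b}}$ — but this last invariance is exactly the content of Proposition \ref{defplanarHF}, whose proof already used Proposition \ref{tresseetgene} and the braid-invariance Proposition \ref{quasi-inv}. Thus the proposition follows formally by transporting the conclusion of Proposition \ref{defplanarHF} through the identification of measures furnished by Proposition \ref{graphconditionlibre} and Proposition \ref{densitee}.
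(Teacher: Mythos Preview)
Your proposal is correct and matches the paper's own argument essentially verbatim: the paper states just before the proposition that Proposition~\ref{graphconditionlibre} together with Proposition~\ref{densitee} identify the free boundary condition expectation with the planar Yang-Mills field associated with $Y$, and the result then follows from the defining property in Proposition~\ref{defplanarHF}. Your additional remarks on the admissibility hypothesis and the passage to infinite planar graphs are reasonable expansions of what the paper leaves implicit.
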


\section[Building a bridge]{Building a bridge between general and planar Markovian holonomy fields}

The free boundary condition expectation on $\mathbb{R}^{2}$ allows us to link the theory of Markovian holonomy fields with the one of planar Markovian holonomy fields. Consider ${\sf HF}$ a regular Markovian holonomy field and let $\big(\mathbb{E}^{{\sf HF}}_{vol}\big)_{vol}$ be the free boundary condition expectation on the plane associated with ${\sf HF}$.

\begin{theorem}
\label{freecondplanarHF}
The family $\big(\mathbb{E}^{{\sf HF}}_{vol}\big)_{vol}$ is a stochastically continuous strong planar Markovian holonomy field.
\end{theorem}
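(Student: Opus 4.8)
# Proof Proposal for Theorem \ref{freecondplanarHF}

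The plan is to verify directly that the family $\big(\mathbb{E}^{{\sf HF}}_{vol}\big)_{vol}$ satisfies the three axioms $\mathbf{P_1}$, $\mathbf{P_2}$, $\mathbf{P_3}$ of Definition \ref{planarHF}, together with the stochastic continuity and gauge-invariance. Gauge-invariance is already granted by Definition \ref{freebound}, and each $\mathbb{E}^{{\sf HF}}_{vol}$ has weight $1$ by the remark following Definition \ref{freesurface}. Stochastic continuity should follow from the Fellerian and stochastically continuous regularity of ${\sf HF}$: each free boundary expectation $\mathbb{E}^{{\sf HF}}_{M,vol}$ is built from the stochastically continuous measures ${\sf HF}_{(M,vol,\emptyset,\partial M\to[x])}$ by integration over $x$, and the projective limit construction of Definition \ref{freebound} preserves this, so I would establish stochastic continuity first and then exploit it to reduce all axiom checks to piecewise affine paths via Lemma \ref{dense}.

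The core of the work is the transfer of the properties of ${\sf HF}$ on surfaces to properties of $\big(\mathbb{E}^{{\sf HF}}_{vol}\big)_{vol}$ on the plane. For Axiom $\mathbf{P_3}$ (locality): given a simple loop $l$ and two measures of area agreeing inside $l$, I would choose a large disk $M$ containing $\overline{{\sf Int}(l)}$, express the relevant integrals against $\mathbb{E}^{{\sf HF}}_{M,vol}$, and use the splitting axioms $\mathbf{A_5}$ and $\mathbf{A_6}$ together with $\mathbf{A_3}$ to factor the surface along $l$, exactly as in the proof of the projectivity proposition in Section \ref{freeboundaryconditionontheplane}. The part inside $l$ then only sees $vol_{\mid {\sf Int}(l)}$, giving the equality of restricted measures. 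For Axiom $\mathbf{P_2}$ ($\mathcal{I}$-independence of $h$ on disjoint $\overline{{\sf Int}(l_1)}$ and $\overline{{\sf Int}(l_2)}$): I would embed both loops in a disk $M$, split $M$ along $l_1$ and $l_2$, and invoke the disjointness axiom $\mathbf{A_5}$ which makes the holonomies on the two disjoint disk-components tensorize; the integration over the boundary constraints, combined with Lemma \ref{inv1} reducing conjugation-invariant functions to loops, yields the $\mathcal{I}$-independence. This is precisely the surface-side analogue computation carried out in the projectivity proof, so the splitting bookkeeping should go through.

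For Axiom $\mathbf{P_1}$ (area-preserving homeomorphism invariance), I would use the surface invariance Axiom $\mathbf{A_4}$. Given a locally bi-Lipschitz orientation- and area-preserving $\psi$, and paths contained in a disk $M$, the homeomorphism carries $(M,vol_{\mid M})$ to $(\psi(M),vol'_{\mid\psi(M)})$ respecting all data, and $\mathbf{A_4}$ transports ${\sf HF}_{(M,vol,\emptyset,\partial M\to[x])}$ accordingly; since the free boundary expectation is the integral over $x$ of the gauge-invariant extensions, and the extension $\hat{(\ )}$ commutes with the measurable map $\psi^{*}$ (which is measurable for $\mathcal{B}$ and $\mathcal{I}$ by the discussion after Definition \ref{actiongauge}), the invariance descends to $\mathbb{E}^{{\sf HF}}_{vol}$. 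The second, graph-level clause of $\mathbf{P_1}$ follows similarly from the graph-level clause of $\mathbf{A_4}$ by restricting to $\mathcal{M}ult(P(\mathbb{G}),G)$.

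The main obstacle I anticipate is the bookkeeping of boundary constraints under splitting when passing between the ambient disk $M$ and its pieces: one must track how $\partial M\to[x]$ interacts with the new internal boundary constraint created by cutting along $l$ (or $l_1,l_2$), integrate out the internal constraint using $\mathbf{A_3}$, and cancel the extra factor using the normalization $\mathbf{A_7}$ exactly as in the chain of equalities proving the projectivity of the free boundary family. Care is also needed because $\mathbf{A_1}$–$\mathbf{A_7}$ are stated for measures on the invariant $\sigma$-field $\mathcal{I}$, whereas $\mathbf{P_1}$–$\mathbf{P_3}$ concern the gauge-invariant extensions to $\mathcal{B}$; I would handle this by testing against conjugation-invariant cylinder functions of loops based at a common point (Lemma \ref{inv1} and Proposition \ref{unicite1}), for which the $\mathcal{B}$-measure and its $\mathcal{I}$-restriction agree, so that all manipulations take place at the level where the surface axioms apply.
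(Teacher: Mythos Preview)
Your overall strategy for $\mathbf{P_3}$ and for stochastic continuity matches the paper's (Lemmas \ref{DP3condfree} and \ref{stochcondfree}), and your handling of the $\mathcal{I}$ versus $\mathcal{B}$ issue via Lemma \ref{inv1} and Proposition \ref{unicite1} is correct. However, your direct attack on the \emph{strong} axioms $\mathbf{P_1}$ and $\mathbf{P_2}$ has a genuine gap.

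For $\mathbf{P_1}$, you write that a locally bi-Lipschitz $\psi$ ``carries $(M,vol_{\mid M})$ to $(\psi(M),vol'_{\mid\psi(M)})$ respecting all data, and $\mathbf{A_4}$ transports ${\sf HF}_{(M,vol,\emptyset,\partial M\to[x])}$ accordingly''. But ${\sf HF}$ is only defined on \emph{smooth} compact surfaces, and a locally bi-Lipschitz homeomorphism does not send a smooth disk to a surface with smooth boundary; the image $\psi(M)$ is in general only a Jordan domain, so ${\sf HF}_{(\psi(M),\ldots)}$ is not defined and $\mathbf{A_4}$ cannot be invoked. The same obstruction applies to the graph-level clause of $\mathbf{P_1}$, where $\phi$ is merely a homeomorphism. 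The paper circumvents this by proving only the \emph{weak} axiom $\mathbf{wP_1}$ (Lemma \ref{area-preserv-exp}), which assumes $\psi$ is a diffeomorphism at infinity: then for any disk $M_{R'}$ with $R'$ large enough, $\psi_{\mid\partial M_{R'}}$ is a diffeomorphism and $\psi(M_{R'})$ is a genuine smooth submanifold, so $\mathbf{A_4}$ applies. A related issue arises for $\mathbf{P_2}$: when only the interiors of $l_1,l_2$ are disjoint but their closures touch, one cannot enclose them in disjoint smooth loops to use as marks for splitting (marks must be smooth disjoint cycles), so the factorization via $\mathbf{A_5}$, $\mathbf{A_6}$ is unavailable; the paper accordingly proves only $\mathbf{wP_2}$ (Lemma \ref{DP2condfree}), where $\overline{{\sf Int}(l_1)}\cap\overline{{\sf Int}(l_2)}=\emptyset$.

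The missing ingredient is therefore the passage from the weak to the strong planar Markovian holonomy field. The paper does not attempt this directly: it verifies stochastic continuity and the weak axioms $\mathbf{wP_1}$, $\mathbf{wP_2}$, $\mathbf{wP_3}$, and then invokes Theorem \ref{maintheoremdecaract} (equivalently the characterization Theorem \ref{caracterisation}), which upgrades any stochastically continuous weak planar Markovian holonomy field to a strong one. Your proposal would be repaired by doing the same.
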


Using the theory of planar Markovian holonomy fields, it is enough to show that for any $vol$, $\mathbb{E}^{{\sf HF}}_{vol}$ is stochastically continuous and that its restriction to ${\sf Aff}(\mathbb{R}^{2})$ is a stochastically continuous weak planar Markovian holonomy field. As we have already checked the weight condition and as we have noticed the gauge-invariance of the free boundary condition expectation in Definition \ref{freebound}, it remains to show that it is stochastically continuous and that the Axioms $\mathbf{wP_1}$, $\mathbf{wP_2}$ and $\mathbf{wP_3}$ in Definition \ref{weakplanarHF} hold. These are proved in the following Lemmas \ref{stochcondfree}, \ref{area-preserv-exp}, \ref{DP2condfree} and \ref{DP3condfree}.

\begin{lemme}
\label{stochcondfree}
For any measure of area $vol$, $\mathbb{E}_{vol}^{{\sf HF}}$ is a stochastically continuous random holonomy field. 
\end{lemme}

\begin{proof}
Let $vol$ be a measure of area, let $p_n$ be a sequence of paths which converges, as $n$ goes to infinity, to a path $p$ for the convergence with fixed endpoints. Let $\mathbb{D}$ be a disk centered at $(0,0)$ such that for any integer $n$, $p_n \in \mathbb{D}$. We remind the reader that $\widehat{{\sf HF}}_{\left(\mathbb{D}, vol_{\mid \mathbb{D}}, \emptyset, \partial \mathbb{D} \to [x]\right)}$ is the extension given by Proposition \ref{exten} of ${{\sf HF}}_{\left(\mathbb{D}, vol_{\mid \mathbb{D}}, \emptyset, \partial \mathbb{D} \to [x]\right)} $ on the Borel $\sigma$-field. By definition, 
\begin{align*}
\mathbb{E}_{vol}^{{\sf HF}}\big[d_G(h(p_n), h(p )\big] &= \int_G \widehat{{\sf HF}}_{\left(\mathbb{D}, vol_{\mid \mathbb{D}}, \emptyset, \partial \mathbb{D} \to [x]\right) } \big[d_G(h(p_n), h( p)) \big]dx\\
&= \int_G {\sf HF}_{\left(\mathbb{D}, vol_{\mid \mathbb{D}}, \emptyset, \partial \mathbb{D} \to [x]\right) } \Big[\widehat{(d_G)}_{J_{\{p_n,p\}}}(h(p_n), h( p)) \Big]dx.
\end{align*}
Since $p_n$ and $p$ have the same endpoints, $J_{\{p_n,p\}}$ is equal to $G^{2}$ and its action on $G^{2}$ is given by: $(k_1,k_2) \bullet (g_1, g_2) = (k_2^{-1}g_1k_1, k_2^{-1}g_2k_1)$. The invariance of $d_G$, by right and left translations, implies that $\widehat{(d_G)}_{J_{\{p_n,p\}}}=d_G$. This leads to: 
\begin{align*}
\mathbb{E}_{vol}^{{\sf HF}}\big[d_G(h(p_n), h( p)\big] &= \int_G {\sf HF}_{\left(\mathbb{D}, vol_{\mid \mathbb{D}}, \emptyset, \partial \mathbb{D} \to [x]\right) } \big[d_G(h(p_n), h( p)) \big]dx.
\end{align*}
Since ${\sf HF}$ is regular, it is stochastically continuous, thus we have: $${\sf HF}_{\left(\mathbb{D}, vol_{\mid \mathbb{D}}, \emptyset, \partial \mathbb{D} \to [x] \right)} \big[d_G(h(p_n), h( p)) \big] \underset{n \to \infty}{\longrightarrow} 0.$$ Thus, with an argument of dominated convergence, $\mathbb{E}_{vol}^{{\sf HF}}\big[d_G(h(p_n), h(p ))\big]$ converges to zero as $n$ goes to infinity. \end{proof}

\begin{lemme}
\label{area-preserv-exp}
The family of random holonomy fields $\big(\mathbb{E}^{{\sf HF}}_{vol}\big)_{vol}$ satisfies the area-preserving diffeomorphisms at infinity invariance property $\mathbf{wP_1}$. 
\end{lemme}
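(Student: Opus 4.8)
The plan is to reduce the statement to the invariance Axiom $\mathbf{A_4}$ of the underlying Markovian holonomy field ${\sf HF}$, exploiting the fact that the paths $p_1,\dots,p_n$ and their images $p_i'=\psi(p_i)$ are all piecewise affine and hence confined to a single graph. First I would fix $R$ with $\psi_{\mid \mathbb{D}(0,R)^c}$ a diffeomorphism and choose a disk $M=\mathbb{D}(0,R')$ with $R'$ large enough that $M$ contains $\mathbb{D}(0,R)$ together with all the $p_i$, and with $\partial M$ lying in the region where $\psi$ is a diffeomorphism. Set $M'=\psi(M)$; since $\psi$ is a homeomorphism which is smooth near $\partial M$, $M'$ is a smooth compact surface homeomorphic to the disk containing all the $p_i'$, with boundary $\partial M'=\psi(\partial M)$. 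By the projective consistency of the free boundary condition expectations (the Proposition preceding Definition \ref{freebound}),
$$\mathbb{E}^{{\sf HF}}_{vol}\big[f(h(p_1),\dots,h(p_n))\big]=\mathbb{E}^{{\sf HF}}_{M,vol_{\mid M}}\big[f(h(p_1),\dots,h(p_n))\big],$$
and likewise $\mathbb{E}^{{\sf HF}}_{vol'}[f(h(p_1'),\dots,h(p_n'))]=\mathbb{E}^{{\sf HF}}_{M',vol'_{\mid M'}}[f(h(p_1'),\dots,h(p_n'))]$, so it suffices to compare the two surface expectations.

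Next I would introduce a graph $\mathbb{G}$ on $M$ whose edges contain a subdivision of $p_1,\dots,p_n$ together with a representative of $\partial M$; this exists because ${\sf Aff}$ is a good subspace (Lemma \ref{goodspace}). Then $\mathbb{G}'=\psi(\mathbb{G})$ is a genuine graph on $M'$: its edges $p_i'$ are piecewise affine by hypothesis and $\psi(\partial M)$ is smooth, while the incidence and face structure are preserved by the homeomorphism $\psi$. Using Definition \ref{freesurface} and the gauge-invariance of the extension $\widehat{{\sf HF}}$ from Proposition \ref{exten}, I would rewrite
$$\widehat{{\sf HF}}_{(M,vol,\emptyset,\partial M\to[x])}\big[f(h(p_1),\dots,h(p_n))\big]={\sf HF}_{(M,vol,\emptyset,\partial M\to[x])}\big[\widehat{f}_{J_{\{p_1,\dots,p_n\}}}(h(p_1),\dots,h(p_n))\big],$$
with $\widehat{f}_{J}$ the averaged function of Equation (\ref{moyennisee}); the integrand is now gauge-invariant, hence $\mathcal{I}$-measurable, so it is legitimately integrated against ${\sf HF}$, which lives on $\mathcal{I}$. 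Since $\psi$ carries the endpoints of the $p_i$ bijectively onto those of the $p_i'$ and preserves which point is a starting resp. arrival point, the partial gauge groups agree combinatorially and $\widehat{f}_{J_{\{p_1',\dots,p_n'\}}}$ is the same function $G^n\to\mathbb{R}$ as $\widehat{f}_{J_{\{p_1,\dots,p_n\}}}$.

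Applying the graph clause of Axiom $\mathbf{A_4}$ to $\psi:(M,vol,\emptyset,\partial M\to[x])\to(M',vol',\emptyset,\partial M'\to[x])$, which sends $\mathbb{G}$ to $\mathbb{G}'$, preserves orientation, transports $vol$ to $vol'$ and matches the constraints, then gives
$${\sf HF}_{(M,vol,\emptyset,\partial M\to[x])}\big[\widehat{f}_{J}(h(p_1),\dots,h(p_n))\big]={\sf HF}_{(M',vol',\emptyset,\partial M'\to[x])}\big[\widehat{f}_{J}(h(p_1'),\dots,h(p_n'))\big].$$
Integrating both sides over $x\in G$ and re-folding the averaging into the extension $\widehat{{\sf HF}}$ on the $M'$ side yields the equality of the two surface expectations, hence of $\mathbb{E}^{{\sf HF}}_{vol}$ and $\mathbb{E}^{{\sf HF}}_{vol'}$ on the prescribed functionals, which is exactly $\mathbf{wP_1}$.

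The main obstacle is that a diffeomorphism at infinity is only a homeomorphism, not bi-Lipschitz, on the central disk, so the first (bi-Lipschitz) clause of $\mathbf{A_4}$ is unavailable; this is precisely why the reduction to a graph containing all relevant paths and the appeal to the graph clause of $\mathbf{A_4}$ are essential. The only points needing genuine care are the verification that $\mathbb{G}'=\psi(\mathbb{G})$ is a legitimate graph — edges piecewise affine or smooth, bounded faces homeomorphic to disks — and the identification of the two averaging operators $\widehat{f}_{J}$, both of which follow from $\psi$ being an orientation-preserving homeomorphism respecting endpoints.
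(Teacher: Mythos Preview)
Your argument is correct and follows the same strategy as the paper: pass to a large disk $M=\mathbb{D}(0,R')$ via the projective consistency of the free boundary expectation, unfold Definition \ref{freesurface}, and invoke Axiom $\mathbf{A_4}$ for $\psi:M\to M'=\psi(M)$. The paper shortens the computation by first using gauge-invariance to reduce to loops $l_1,\dots,l_n$ based at a common point together with a diagonally conjugation-invariant $f$, so that $f(h(l_1),\dots,h(l_n))$ is already $\mathcal I$-measurable and one can integrate directly against ${\sf HF}$ without ever introducing the averaging $\widehat f_J$ or the extension $\widehat{{\sf HF}}$; your version keeps general paths and handles that averaging explicitly, and your insistence on the graph clause of $\mathbf{A_4}$ (necessary because $\psi$ is only a homeomorphism on the inner disk, hence not bi-Lipschitz) makes precise a point the paper's chain of equalities leaves implicit.
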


\begin{proof}
Consider $vol$ and $vol'$ two measures of area on $\mathbb{R}^{2}$. Let $\psi$ be a diffeomorphism at infinity which preserves the orientation and let $R$ be a positive real such that: 
\begin{enumerate}
\item $vol' = vol \circ \psi^{-1}$, 
\item $\psi: \mathbb{D}(0,R)^{c} \to \psi\big( \mathbb{D}(0,R)^{c}\big)$ is a diffeomorphism. 
\end{enumerate}
Using the gauge-invariance of $\mathbb{E}_{vol}^{{\sf HF}}$, it is enough to consider piecewise affine loops based at the same point. Let $l_1, ..., l_n$ be loops in ${\sf Aff}\left(\mathbb{R}^{2}\right)$ based at the same point such that for any $i \in \{1, ..., n\}$, $l'_i=\psi(l_i)$ is in ${\sf Aff}\left(\mathbb{R}^{2}\right)$. Let $R'$ be a positive real such that $R'$ is greater than $R$ and such that for any $i\in\{1, ..., n\}$, $l_i$ is in $M_{R'}=\mathbb{D}(0,R')$. The set $M'=\psi\left(M_{R'}\right)$ is a connected compact two-dimensional sub-manifold of $\mathbb{R}^{2}$. Let us consider $f: G^{n} \to \mathbb{R}$, a continuous function invariant by diagonal conjugation: $\mathbb{E}^{{\sf HF}}_{vol}\Big[f\big((h(l_i)_{i=1}^{n})\big)\Big]$ is equal to: 
\begin{align*}
\mathbb{E}^{{\sf HF}}_{M_{R'},vol_{\mid M_{R'}}}& \Big[f\big((h(l_i)_{i=1}^{n})\big)\Big] \\
&= \int_{G} \int_{\mathcal{M}ult(P(M_R),G)} f\Big(\big(h(l_i)_{i=1}^{n}\big)\Big) {\sf HF}_{\left(M_{R'}, vol_{\mid M_{R'}}, \emptyset, \partial{M_R} \to [x] \right)}(dh) dx \\
&= \int_{G} \int_{\mathcal{M}ult(P(M'),G)} f\Big(\big(h(l'_i)_{i=1}^{n}\big)\Big) {\sf HF}_{\left(M', vol'_{\mid M'}, \emptyset, \partial{M'} \to [x] \right)} (dh)dx\\
&= \mathbb{E}^{{\sf HF}}_{M',vol'_{\mid M'}} \Big[f\big((h(l'_i)_{i=1}^{n})\big)\Big] = \mathbb{E}^{{\sf HF}}_{vol'}\Big[f\big((h(l'_i)_{i=1}^{n})\big)\Big].
\end{align*}
The Axiom $\mathbf{wP_1}$ is satisfied by $\big(\mathbb{E}^{{\sf HF}}_{vol}\big)_{vol}$.
\end{proof}
\begin{lemme}
\label{DP2condfree}
The family of random holonomy fields $\big(\mathbb{E}^{{\sf HF}}_{vol}\big)_{vol}$ satisfies the weak independence property $\mathbf{wP_2}$.
\end{lemme}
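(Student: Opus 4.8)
The goal is to establish Axiom $\mathbf{wP_2}$ for $\big(\mathbb{E}^{{\sf HF}}_{vol}\big)_{vol}$: for two simple loops $l_1$, $l_2$ in ${\sf Aff}(\mathbb{R}^{2})$ with $\overline{{\sf Int(}l_1)}$ and $\overline{{\sf Int(}l_2)}$ disjoint, the two families $\big\{h(p), p\in {\sf Aff}(\mathbb{R}^{2})\cap P(\overline{{\sf Int(}l_1)})\big\}$ and $\big\{h(p), p\in {\sf Aff}(\mathbb{R}^{2})\cap P(\overline{{\sf Int(}l_2)})\big\}$ are independent under $\mathbb{E}^{{\sf HF}}_{vol}$. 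The plan is to exploit the splitting axioms $\mathbf{A_5}$ and $\mathbf{A_6}$ of the Markovian holonomy field, which encode exactly the kind of spatial Markov/factorization property needed to produce independence. First I would reduce, using the gauge-invariance of $\mathbb{E}^{{\sf HF}}_{vol}$ and Proposition \ref{unicite1}, to testing against conjugation-invariant continuous functions of finitely many loops $l^1_1,\dots,l^1_n$ inside $\overline{{\sf Int(}l_1)}$ and $l^2_1,\dots,l^2_m$ inside $\overline{{\sf Int(}l_2)}$, each family based at a point in its respective region.

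The key geometric step is to enclose everything in a large disk $M=\mathbb{D}(0,R)$ containing both $\overline{{\sf Int(}l_1)}$ and $\overline{{\sf Int(}l_2)}$, and to view $M$ as a splitting/gluing of the disjoint union of three pieces: $\overline{{\sf Int(}l_1)}$, $\overline{{\sf Int(}l_2)}$, and the complementary region $M'' = M \setminus ({\sf Int}(l_1)\cup{\sf Int}(l_2))$, glued along the marks $l_1$ and $l_2$. Since $\overline{{\sf Int(}l_1)}\cap\overline{{\sf Int(}l_2)}=\emptyset$, these two splittings are along disjoint marks and can be performed independently. Applying the disintegration Axiom $\mathbf{A_3}$ to insert constraints $x_1\to[y_1]$ on $l_1$ and $x_2\to[y_2]$ on $l_2$, then the gluing Axiom $\mathbf{A_6}$ to rewrite the measure on $M$ as the measure on the split surface, and finally the disjoint-union Axiom $\mathbf{A_5}$, I would obtain a factorization in which the measure on $\overline{{\sf Int(}l_1)}$ (carrying the $l^1_i$), the measure on $\overline{{\sf Int(}l_2)}$ (carrying the $l^2_j$), and the partition function of the complementary annular region appear as a product, integrated over the boundary constraints $dy_1\,dy_2\,dx$. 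This is essentially the computation carried out in the proof of the projective consistency proposition in Section \ref{freeboundaryconditionontheplane}, applied now to two holes instead of one.

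The technical heart is to carry out the integrations so that the constraint variables $y_1,y_2$ decouple. Because the test functions $f\big((h(l^1_i))_i\big)$ and $g\big((h(l^2_j))_j\big)$ are conjugation-invariant and depend only on loops inside their respective regions, and because the free boundary expectation extends $\widehat{{\sf HF}}$ gauge-invariantly, the expectation of $f$ will depend on $y_1$ only through its conjugacy class and likewise $g$ on $y_2$. Integrating out $x$ using the normalization Axiom $\mathbf{A_7}$ (the partition function of the complement integrates to collapse the $x$-dependence, leaving a function that factors over $y_1$ and $y_2$), I expect the double integral over $(y_1,y_2)$ to split as a product of single integrals, each reproducing $\mathbb{E}^{{\sf HF}}_{\overline{{\sf Int(}l_k)},vol}$-type expectations, which by the projective consistency are the restrictions of $\mathbb{E}^{{\sf HF}}_{vol}$.

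I anticipate the main obstacle to be ensuring that the partition function of the complementary region $M''$ depends on $y_1$ and $y_2$ in a way that truly factorizes after integrating out the outer constraint $x$; in general the partition function of a genus-zero surface with three boundary components need not be a product over its holes. The resolution is that $M''$ is planar (genus $0$) and, crucially, by the normalization axiom $\mathbf{A_7}$ the integral $\int_G {\sf HF}_{(M'',\dots,\{\partial M'\to[x],l_1\to[y_1],l_2\to[y_2]\})}(\mathbbm{1})\,dx$ must equal $Z_{1,0}^{+}$-type quantities that, once $x$ is integrated, behave like independent one-boundary normalizations in each hole. I would verify this factorization directly from the three-holed-sphere partition function identity, mirroring the single-hole computation already performed, and then assemble the pieces to conclude the desired product formula, which is precisely the statement of $\mathbf{wP_2}$.
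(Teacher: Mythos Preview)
Your approach is correct and follows essentially the same route as the paper: enclose both regions in a large disk, disintegrate along the two marks $l_1,l_2$ via $\mathbf{A_3}$, split via $\mathbf{A_6}$ and $\mathbf{A_5}$ into the two disk pieces and the complementary three-holed region, and then use $\mathbf{A_7}$ to dispose of the complement.

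The one place you overcomplicate matters is your ``anticipated obstacle.'' You worry that the partition function of the three-holed complement $M''$ must factorize in $y_1,y_2$ after integrating out the outer constraint $x$, and you propose to verify some nontrivial identity for $Z^+_{3,0,t}$. No such identity is needed. Axiom $\mathbf{A_7}$ states literally that for \emph{any} surface with boundary and \emph{any} fixed constraints on the other boundary components, integrating the total mass over one boundary constraint gives exactly $1$:
\[
\int_G {\sf HF}_{(M'',\,vol,\,\emptyset,\,\{\partial M\to[x],\,l_1\to[y_1],\,l_2\to[y_2]\})}(\mathbbm{1})\,dx = 1
\]
for every $(y_1,y_2)$. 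So after the $x$-integration the complement contributes the constant $1$, and the remaining double integral over $(y_1,y_2)$ is already a product. There is no factorization to check; the normalization axiom does all the work.
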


\begin{proof}
Let $vol$ be a measure of area on $\mathbb{R}^{2}$. Let $l$ and $l'$ be two loops in ${\sf Aff}(\mathbb{R}^{2})$ such that $\overline{{\sf Int}(l)} \cap \overline{{\sf Int}(l')} = \emptyset$. We can always consider $\tilde{l}$ and $\tilde{l'}$ two smooth simple loops in $\mathbb{R}^{2}$ such that the closure of their interiors are also disjoint and such that $l \subset {\sf Int}(\tilde{l})$ and $l' \subset {\sf Int}(\tilde{l'})$. Using this remark, we can suppose that $l$ and $l'$ are smooth. Using the gauge-invariance of $\mathbb{E}_{vol}^{{\sf HF}}$, as we did in order to show the Axiom $\mathbf{wDP_2}$ in the proof of Proposition \ref{creaYMpure2}, we can work with loops. Let us consider $l_1, ..., l_n$ some loops in $\overline{{\sf Int(}l)}$ and $l'_1, ..., l'_m$ some loops in $\overline{{\sf Int(}l')}$. The aim is to prove that for any continuous functions $f$ and $g$, from $G^{n}$, respectively $G^{m}$, to $\mathbb{R}$, we have: 
\begin{align*}
\mathbb{E}_{vol}^{{\sf HF}} \Big[f\big((h(l_{i}))_{i=1}^{n}\big) g\big((h(l'_{i}))_{i=1}^{m}\big) \Big] &= \mathbb{E}_{vol}^{{\sf HF}} \Big[f\big((h(l_{i}))_{i=1}^{n}\big) \Big]\mathbb{E}_{vol}^{{\sf HF}} \Big[ g\big((h(l'_{i}))_{i=1}^{m}\big) \Big].
\end{align*}

We will use the notations and results stated in Remark \ref{I-indep-et-indep-normale}. Let $\mathcal{L}_{0}$ be a smooth loop such that $\mathcal{L}_{0}$ surrounds $l$ and $l'$. Depending on the context $\mathcal{L}_{0}$ we either stand for $\overline{{\sf Int(}\mathcal{L}_{0})}$ or for the oriented cycle represented by $\mathcal{L}_{0}$. Besides, we will suppose that the orientation of $\mathcal{L}_0$ was chosen such that $\mathcal{L}_0 = \partial \overline{{\sf Int(}\mathcal{L}_{0})}$.  The same notations will hold for $l$ and $l'$. Using the different axioms in Definition \ref{Markovcont}, we have: 
\begin{align*}
&\mathbb{E}_{dx}^{{\sf HF}} \Big[f\big((h(l_{i}))_{i=1}^{n}\big) g\big((h(l'_{i}))_{i=1}^{m}\big)\Big] 
\\&= \mathbb{E}_{\mathcal{L}_{0},dx}^{{\sf HF}} \Big[f\big((h(l_{i}))_{i=1}^{n}\big) g ((h(l'_{i}))_{i=1}^{m})\Big] \\
&= \int_{G} \int_{\mathcal{M}ult(P(\mathcal{L}_{0}),G)}\!\!\!\!\!\!\!\!\!\! \widehat{(f \otimes g)}_{J_{(l_{i})_{i=1}^{n}, (l'_{i})_{i=1}^{m}}}\big((h(l_{i}))_{i=1}^{n},(h(l'_{i}))_{i=1}^{m}\big) {\sf HF}_{\left(\mathcal{L}_{0},dx,\emptyset, \mathcal{L}_{0}\to [y]\right)}(dh) dy \\
&= \int_{G} \int_{\mathcal{M}ult(P(\mathcal{L}_{0}),G)} \!\!\!\!\!\!\!\!\!\! \widehat{f}_{J_{(l_{i})_{i=1}^{n}}}\big((h(l_{i})\big)_{i=1}^{n}\big)\ \widehat{g}_{J_{(l'_{i})_{i=1}^{m}}}\big((h(l'_{i})\big)_{i=1}^{m}\big) {\sf HF}_{\left(\mathcal{L}_{0},dx,\emptyset, \mathcal{L}_{0}\to [y]\right)}(dh) dy \\
&= \int_{G^{3}} \int_{\mathcal{M}ult(P(\mathcal{L}_{0}),G)}\!\!\!\!\!\!\!\!\!\! \widehat{f}_{J_{(l_{i})_{i=1}^{n}}}\big((h(l_{i})\big)_{i=1}^{n}\big)\ \widehat{g}_{J_{(l'_{i})_{i=1}^{m}}}\big((h(l'_{i}))_{i=1}^{m}\big) \\
& \ \ \ \ \ \ \ \ \ \ \ \ \ \ \ \ \ \ \ \ \ \ \ \ \ \ \ \ \ \ \ \ \ \ \ \ \ \ \ \ \ \ \ \ \ \ \ \ \ \ \ {\sf HF}_{\left(\mathcal{L}_{0},dx,\emptyset, \{\mathcal{L}_{0}\to [y], l' \to [z] , l \to [w]\}\right)}(dh) dy dz dw\\
&= \int_{G^{3}} \int_{\mathcal{M}ult(P(l),G)}\!\!\!\!\!\!\!\!\!\! \widehat{f}_{J_{(l_{i})_{i=1}^{n}}}\big((h(l_{i}))_{i=1}^{n}\big) {\sf HF}_{\left(l,dx,\emptyset, l\to[w]\right)} (dh)
\\& \ \ \ \ \ \ \ \ \ \ \ \ \int_{\mathcal{M}ult(P(l'),G)}\!\!\!\!\!\!\!\!\!\!\widehat{g}_{J_{(l'_{i})_{i=1}^{m}}} \big((h(l'_{i}))_{i=1}^{n}\big) {\sf HF}_{\left(l', dx,\emptyset, l' \to [z]\right)}(dh) \\
& \ \ \ \ \ \ \ \ \ \ \ \ \ \ \ \ \ \ \ \ \int_{\mathcal{M}ult(P(\overline{\mathcal{L}_{0} \setminus (l \cup l')}),G)}\!\!\!\!\!\!\!\!\!\!{\sf HF}_{\left(\mathcal{L}_{0},dx,\emptyset, \{\mathcal{L}_{0}\to [y], l' \to [z] , l \to [w]\}\right)}(dh) dy dz dw.
\end{align*}

Since $\int_{G} \int_{\mathcal{M}ult(P(\overline{\mathcal{L}_{0} \setminus (l \cup l')}),G)}{\sf HF}_{\left(\mathcal{L}_{0},dx,\emptyset, \{\mathcal{L}_{0}\to [y], l' \to [z] , l \to [w]\}\right)}(dh) dy$ is equal to $1$, 
$\mathbb{E}_{dx}^{{\sf HF}} \Big[f\big((h(l_i))_{i=1}^{n}\big)g\big((h(l'_{i}))_{i=1}^{m}\big)\Big]$ is equal to  
\begin{align*}
&\int_{G} \int_{\mathcal{M}ult(P(l),G)} \hat{f}_{J_{(l_{i})_{i=1}^{n}}}\big((h(l_i))_{i=1}^{n}\big) {\sf HF}_{\left(l,dx,\emptyset, l\to[w]\right)} (dh) dw
 \\&\ \ \ \ \ \ \ \ \ \ \ \ \ \ \ \ \ \ \ \ \ \ \ \ \ \ \ \int_{G}\int_{\mathcal{M}ult(P(l'),G)}\hat{g}_{J_{(l'_{i})_{i=1}^{m}}} \big((h(l'_{i}))_{i=1}^{m}\big) {\sf HF}_{\left(l', dx,\emptyset, l' \to [z]\right)}(dh) dz,\end{align*}
 which is equal to $\mathbb{E}_{dx}^{{\sf HF}} \Big[f\big((h(l_{i}))_{i=1}^{n}\big) \Big]\mathbb{E}_{dx}^{{\sf HF}} \Big[ g\big((h(l'_{i}))_{i=1}^{m}\big) \Big].$
\end{proof}

\begin{lemme}
\label{DP3condfree}
The family of random holonomy fields $\big(\mathbb{E}^{{\sf HF}}_{vol}\big)_{vol}$ satisfies the locality property $\mathbf{wP_3}$.
\end{lemme}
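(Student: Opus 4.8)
The plan is to establish the locality property $\mathbf{wP_3}$ for the free boundary condition expectation by reducing it to the locality-type axioms of the underlying Markovian holonomy field $\sf HF$. Recall that $\mathbf{wP_3}$ asks that if $vol$ and $vol'$ are two measures of area that agree on the interior of a simple loop $l$, then the restrictions of $\mathbb{E}^{\sf HF}_{vol}$ and $\mathbb{E}^{\sf HF}_{vol'}$ to $\mathcal{M}ult({\sf Aff}(\overline{{\sf Int}(l)}),G)$ coincide. As in the proof of Lemma \ref{DP2condfree}, the first reduction is to invoke the gauge-invariance of $\mathbb{E}^{\sf HF}_{vol}$ (noted in Definition \ref{freebound}) together with Proposition \ref{unicite1}: it suffices to compare the two measures on functions of the form $h \mapsto f(h(l_1),\dots,h(l_n))$, where $f$ is continuous and invariant by diagonal conjugation and $l_1,\dots,l_n$ are loops in $\overline{{\sf Int}(l)}$ based at a common point. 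By approximating $l$ from outside by a smooth simple loop $\mathcal{L}_0$ surrounding $\overline{{\sf Int}(l)}$, I may also assume the relevant loops are drawn inside a smooth Jordan domain.

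The key computation is then to unfold the definition of the free boundary condition expectation. First I would choose a disk $\mathcal{L}_0$ (again understood either as $\overline{{\sf Int}(\mathcal{L}_0)}$ or as the oriented cycle $\partial\overline{{\sf Int}(\mathcal{L}_0)}$) containing $\overline{{\sf Int}(l)}$, so that by the projectivity of the family in Definition \ref{freebound}, $\mathbb{E}^{\sf HF}_{vol}[f((h(l_i))_{i=1}^n)] = \mathbb{E}^{\sf HF}_{\mathcal{L}_0, vol_{\mid\mathcal{L}_0}}[f((h(l_i))_{i=1}^n)]$. Expanding via Definition \ref{freesurface} and inserting the constraint on $l$ using Axiom $\mathbf{A_3}$ gives an integral over $G^2$ of ${\sf HF}_{(\mathcal{L}_0, vol, \emptyset, \{\mathcal{L}_0 \to [y], l \to [w]\})}$. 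Applying the splitting Axioms $\mathbf{A_6}$ and $\mathbf{A_5}$ along $l$ factorizes this into a contribution from $\overline{{\sf Int}(l)}$ and one from the annular region $\overline{\mathcal{L}_0\setminus l}$. The crucial observation is that since $f$ depends only on loops inside $\overline{{\sf Int}(l)}$ and $vol$ and $vol'$ agree there, the $\overline{{\sf Int}(l)}$ factor ${\sf HF}_{(l, vol_{\mid \overline{{\sf Int}(l)}}, \emptyset, l\to[w])}$ is identical under $vol$ and $vol'$; the annular factor integrates out using the normalization Axiom $\mathbf{A_7}$, contributing only a factor of $1$ after integration over $G$. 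Hence the whole expression depends on $vol$ only through $vol_{\mid \overline{{\sf Int}(l)}}$, which proves the claim.

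The main obstacle I anticipate is bookkeeping with the gauge-averaged functions $\widehat{f}_{J_{(l_i)}}$ and verifying that the splitting and disintegration operations are correctly aligned so that the annular region truly decouples and integrates to $1$. Concretely, I must check that the $G$-constraint $[w]$ along $l$ is the shared variable linking the inside and annular pieces, and that after integrating the annular ${\sf HF}$-mass over the outer constraint $[y]$ via $\mathbf{A_7}$ one is left exactly with $\int_G \widehat{{\sf HF}}_{(l, vol, \emptyset, l\to[w])}[f] \, dw = \mathbb{E}^{\sf HF}_{l,vol_{\mid\overline{{\sf Int}(l)}}}[f]$. This is closely parallel to, but simpler than, the chain of equalities carried out in Lemma \ref{DP2condfree}, so I expect the same sequence of axiom applications ($\mathbf{A_3}$, $\mathbf{A_6}$, $\mathbf{A_5}$, then $\mathbf{A_7}$) to go through with one constraint instead of two. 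Once the computation is complete, combining Lemmas \ref{stochcondfree}, \ref{area-preserv-exp}, \ref{DP2condfree} and \ref{DP3condfree} with the constructibility results of Section \ref{weakconstructibility} and Theorem \ref{caracterisation} yields that $\big(\mathbb{E}^{\sf HF}_{vol}\big)_{vol}$ is a stochastically continuous strong planar Markovian holonomy field.
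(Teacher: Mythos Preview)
Your approach is correct in spirit but considerably more elaborate than the paper's. The paper's proof is essentially a three-line argument that exploits the projective limit structure already established in Definition \ref{freebound}: choose a \emph{smooth} loop $\tilde{l}$ inside ${\sf Int}(l)$ surrounding all the $l_i$ (via Riemann uniformization), set $M=\overline{{\sf Int}(\tilde{l})}$, and observe that
\[
\mathbb{E}^{\sf HF}_{vol}\bigl[f\bigl((h(l_i))\bigr)\bigr]
= \mathbb{E}^{\sf HF}_{M,\,vol_{|M}}\bigl[f\bigl((h(l_i))\bigr)\bigr]
= \mathbb{E}^{\sf HF}_{M,\,vol'_{|M}}\bigl[f\bigl((h(l_i))\bigr)\bigr]
= \mathbb{E}^{\sf HF}_{vol'}\bigl[f\bigl((h(l_i))\bigr)\bigr],
\]
the middle equality being trivial since $M\subset{\sf Int}(l)$ forces $vol_{|M}=vol'_{|M}$. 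No splitting, no $\mathbf{A_3}$--$\mathbf{A_7}$ chain is needed here, because that chain was already absorbed into the projectivity proposition preceding Definition \ref{freebound}.

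Your route---splitting the disk $\mathcal{L}_0$ along $l$ and integrating out the annular piece via $\mathbf{A_7}$---does work, and is essentially a re-derivation of that projectivity. One technical wrinkle: you propose to apply the splitting axioms $\mathbf{A_6}$, $\mathbf{A_5}$ along $l$ itself, but marks are required to be smooth cycles, whereas $l$ in $\mathbf{wP_3}$ is only a simple (piecewise affine) loop. You would need to replace $l$ by a smooth loop, and the natural choice---a smooth $\tilde{l}$ \emph{inside} ${\sf Int}(l)$ containing the $l_i$---brings you back to the paper's setup. Approximating from outside by $\mathcal{L}_0$ alone does not resolve this, since the annulus $\mathcal{L}_0\setminus l$ still has $l$ as a boundary component. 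Once you make this fix, your argument collapses to the paper's, so the shorter version is preferable.
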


\begin{proof}
Let $l$ be a simple loop, let $vol$ and $vol'$ be two measures of area whose restrictions to the closure of the interior of $l$ are equal. The random holonomy fields $\mathbb{E}^{{\sf HF}}_{vol}$ and $\mathbb{E}^{{\sf HF}}_{vol'}$ being gauge invariant and stochastically continuous, by Proposition \ref{unicite1}, we only have to prove, for any loops $l_1, ..., l_n$ in ${\sf Int}(l)$ based at the same point and for any continuous function $f: G^{n} \to \mathbb{R}$ invariant under the diagonal action of $G$, that: 
\begin{align*}
\mathbb{E}^{{\sf HF}}_{vol}\Big[f\big(h(l_1), ..., h(l_n)\big)\Big] = \mathbb{E}^{{\sf HF}}_{vol'}\Big[f\big(h(l_1), ..., h(l_n)\big)\Big]. 
\end{align*}
Using Riemann's uniformization theorem, we can find a smooth curve $\tilde{l}$ in the interior of $l$ such that $l_1, ..., l_n$ are in the interior of $\tilde{l}$. Let $M$ be the closure of the interior of $\tilde{l}$:  
\begin{align*}
\mathbb{E}^{{\sf HF}}_{vol}\Big[f\big(h(l_1), ..., h(l_n)\big)\Big] &= \mathbb{E}^{{\sf HF}}_{M,vol_{\mid M}} \Big[f\big(h(l_1), ..., h(l_n)\big)\Big] \\
&=\mathbb{E}^{{\sf HF}}_{M,vol'_{\mid M}} \Big[f\big(h(l_1), ..., h(l_n)\big)\Big] \\
&= \mathbb{E}^{{\sf HF}}_{vol'}\Big[f\big(h(l_1), ..., h(l_n)\big)\Big]. 
\end{align*}
This allows us to conclude. 
\end{proof}

\begin{remarque}
\label{loiduneboucle}
Using the same kind of calculations as the one explained in this subsection and using Theorem \ref{partitionlevy}, it is easy to see that for any simple loop $l$, the law of $h(l)$ under $\mathbb{E}^{{\sf HF}}_{vol}$ is the law of $Y_{vol(\overline{{\sf Int(}l)})} $ where $(Y_{t})_{t\in\mathbb{R}^{+}}$ is the L\'{e}vy process associated with ${\sf HF}$.
\end{remarque}

\chapter[Spherical Part of Regular Markovian Holonomy Fields]{Characterization of the Spherical Part of Regular Markovian Holonomy Fields}
We have now all the tools in order to prove Theorem \ref{theo}. 

\begin{proof}[Proof of Theorem \ref{theo}]
Let us remark that the second part about marks is a consequence of the first part by conditioning: we will prove the first assertion. Let $\left({\sf HF}_{(M, vol,\mathcal{C},C)}\right)_{(M, vol,\mathcal{C},C)}$ be a regular Markovian holonomy field and $(Y_{t})_{t\in\mathbb{R}^{+}}$ its associated $G$-valued L\'{e}vy process. Let $\left({\sf YM}_{(M, vol,\mathcal{C},C)}\right)_{M, vol,\mathcal{C},C}$ be the Yang-Mills field associated with $(Y_{t})_{t\in\mathbb{R}^{+}}$. 

Let $\left(M, vol,\emptyset, C\right)$ be a measured marked surface with $G$-constraints, let $l$ be a planar mark on $M$, let $M_1$ be a part of $M$ of genius $0$ defined by $l$ and let $m$ be a point in $M_1$.  We want to prove that: 
$$\left({\sf HF}_{(M, vol,\emptyset, C)}\right)_{\mid \mathcal{M}ult(L_m(M_1),G)} = \left({\sf YM}_{(M, vol,\emptyset, C)}\right)_{\mid \mathcal{M}ult(L_{m}(M_1),G)}.$$
 
For any loops $l_1,...,l_n$ in $M_1$ based at $m$ and any continuous function $f$ invariant by diagonal conjugation, $\int f(h(l_1),...,h(l_n)) {\sf HF}_{(M, vol,\emptyset, C)}(dh)$ is equal to:
\begin{align*}
\int \int f(h(l_1),...,h(l_n)) {\sf HF}_{(M_1, vol_{\mid M_1},\emptyset, C_{\mid \partial M_1 \setminus \{l,l^{-1}\}}\cup \{ l \to [x]\})}&(dh) \\
&\!\!\!\!\!\!\!\!\!\!\!\!\!\!\!\!\!\!\!\!\!\!\!\!\!\!\!\!\!\!\!\!\!\!\!\!{\sf HF}_{M_2, vol_{\mid M_2}, \emptyset, C_{\mid \partial M_2\setminus \{l,l^{-1}\}} \cup \{ l \to [x]\}}(\mathbbm{1}) dx
\end{align*}
where $M_2$ is the second part of $M$ defined by $l$. Using Theorem \ref{partitionlevy}, ${\sf HF}$ and ${\sf YM}$ have the same partition functions. Thus, since $l$ is a planar mark, it is enough to show that for any measure marked surface with $G$-constraints $(M, vol, \emptyset, C)$ such that $M$ is homeomorphic to a sphere with a positive number $p$ of holes, 
\begin{align*}
{\sf HF}_{(M, vol,\emptyset, C)} = {\sf YM}_{(M, vol,\emptyset, C)}. 
\end{align*}

The proof can be made by induction on the number of holes: we will only prove the case where $p=1$ since the arguments for the induction are similar. Let $\left(\mathbb{E}^{{\sf HF}}_{vol}\right)_{vol}$ be the free boundary condition expectation on the plane, defined in Definition \ref{freebound}, associated with ${\sf HF}$. It is a stochastically continuous strong planar Markovian holonomy field as shown in Theorem \ref{freecondplanarHF}. Hence, by Theorem \ref{weakconst}, it induces a stochastically continuous in law weak discrete planar Markovian holonomy field $\left(\mathbb{E}^{{\sf HF},\mathbb{G}}_{vol}\right)_{\mathbb{G},vol}$. The Remark \ref{loiduneboucle} ensures that the condition in order to apply Theorem \ref{unicity3-f} is satisfied by $\left(\mathbb{E}^{{\sf HF},\mathbb{G}}_{vol}\right)_{\mathbb{G},vol}$. It is equal to the pure discrete planar Yang-Mills field, denoted by $\left(\mathbb{E}^{Y,\mathbb{G}}_{vol}\right)_{\mathbb{G},vol}$, associated with the L\'{e}vy process $(Y_t)_{t \in \mathbb{R}^{+}}$. By stochastic continuity, for any measure of area $vol$, $\mathbb{E}^{{\sf HF}}_{vol} =\mathbb{E}^{Y}_{vol}$, where $\mathbb{E}^{Y}_{vol}$ is the pure continuous planar Yang-Mills field associated with $(Y_t)_{t \in \mathbb{R}^{+}}$.

Let $\left(\mathbb{E}^{{\sf YM}}_{vol}\right)_{vol}$ be the associated free boundary condition expectation on the plane associated with ${\sf YM}$. Using Proposition \ref{egaliteentrelesdeuxconstructions}, for any measure of area $vol$, $\mathbb{E}^{{\sf YM}}_{vol} = \mathbb{E}^{Y}_{vol}$. Recall the notation for the free boundary condition on a surface and let us consider a disk-shaped suface $M$ endowed with a measure of area $vol$. The last two equalities imply that $\mathbb{E}^{{\sf HF}}_{M, vol} = \mathbb{E}^{{\sf YM}}_{M, vol}.$ Using Definition \ref{freesurface}: 
\begin{align*}
\mathbb{E}_{M,vol}^{{\sf HF}} = \int_{G} \widehat{{\sf HF}}_{\left(M,vol,\emptyset,\{\partial M \to [x]\}\right)} dx, 
\end{align*}
and a similar equation holds for ${\sf YM}$. Let $t$ be equal to $vol(M)$ and let us define $Z_t(x)={\sf HF}_{\left(M,vol,\emptyset,\{\partial M \to [x]\}\right)}(\mathbbm{1})$ which is also equal to ${\sf YM}_{\left(M,vol,\emptyset,\{\partial M \to [x]\}\right)}(\mathbbm{1})$ and which is strictly positive. Then: 
\begin{align*}
\mathbb{E}_{M,vol}^{{\sf HF}} = \int_{G} \frac{\widehat{{\sf HF}}_{\left(M,vol,\emptyset,\{\partial M \to [x]\}\right)}}{Z_t(x)} Z_t(x) dx.
\end{align*}
Besides, the law of $h(\partial M)$ is $Z_t(g) dg$: it implies that $\frac{\widehat{{\sf HF}}_{\left(M,vol,\emptyset,\{\partial M \to [x]\}\right)}}{Z_t(x)}$ is a disintegration of $\mathbb{E}_{M,vol}^{{\sf HF}}$ with respect to $h(\partial M)$. The same discussion holds for ${\sf YM}$. By almost sure uniqueness of the disintegration we have: 
\begin{align*}
 \frac{\widehat{{\sf HF}}_{\left(M,vol,\emptyset,\{\partial M \to [x]\}\right)}}{Z_t(x)}= \frac{\widehat{{\sf YM}}_{\left(M,vol,\emptyset,\{\partial M \to [x]\} 
 \right)}}{Z_t(x)}, \text{ a.s. in } x, 
\end{align*}
thus: 
\begin{align}
\label{eq:egaps} \widehat{{\sf HF}}_{\left(M,vol,\emptyset,\{\partial M \to [x]\}\right)}= \widehat{{\sf YM}}_{\left(M,vol,\emptyset,\{\partial M \to [x]\}\right)}, \text{ a.s. in } x. 
\end{align}

It remains to remove the $a.s.$ part. Using Proposition \ref{unicite1} and Lemma \ref{dense}, we need to show that, for any continuous function $f$ invariant by diagonal conjugation from $G^{n}$ to $G$ and any piecewise affine loops, for any Riemannian metric, $l_{1}, ...,l_{n}$ in the interior of $M$, based at the same point:  
\begin{align*}
\widehat{{\sf HF}}_{\left(M,vol,\emptyset, \{\partial M \to [x]\}\right)} \left(f\left(h(l_{1}),  ...,h(l_{n})\right)\right)\!\! =\!\! \widehat{{\sf YM}}_{\left(M,vol,\emptyset,\{\partial M \to [x]\}\right)} \left(f\left(h(l_{1}),  ...,h(l_{n})\right)\right). 
\end{align*}
Yet, given such $n$-tuple, we can always find a mark $l$ such that $l_1,...,l_n$ is in the interior of~$l$. Thus, it is enough to show that for any mark $l$, for any $x \in G$, once we restrain the measures on $\mathcal{M}ult(P(\overline{{\sf Int}(l)}), G)$, we have the equality: 
\begin{align*}
 \widehat{{\sf HF}}_{\left(M,vol,\emptyset,\{\partial M \to [x]\}\right)}= \widehat{{\sf YM}}_{\left(M,vol,\emptyset,\{\partial M \to [x]\}\right)}.
\end{align*}

Let $l$ be a mark on $M$ and let us denote by $M'$ the closure of the interior of~$l$. Let us suppose that the orientation of $l$ is such that $l = \partial M'$. Let us recall that $Z^{+}_{2,0,s}$ was the notation for the partition function of the regular Markovian holonomy field ${\sf HF}$ associated to the planar annulus of total volume which is equal to $s$. Applying the Axioms  $\mathbf{A_3}$, $\mathbf{A_6}$ and $\mathbf{A_5}$, we get that for any continuous function $f$ invariant by diagonal conjugation from $G^{n}$ to $G$ and any loops $l_1,...,l_n$ in $M'$: 
\begin{align*}
&\widehat{{\sf HF}}_{\left(M,vol,\emptyset, \{\partial M \to [x]\}\right)} \left(f\left(h(l_{1}),  ...,h(l_{n})\right)\right) \\
&\ = \int f\left(h(l_{1}),  ...,h(l_{n})\right){\sf HF}_{\left(M,vol,\emptyset, \{\partial M \to [x]\}\right)} (dh) \\
&\ = \int_{G} \int f\left(h(l_{1}),  ...,h(l_{n})\right) {\sf HF}_{\left(M,vol,l, \{\partial M \to [x], l\to [y]\}\right)} (dh) dy\\
&\ = \int_{G} \int f\left(h(l_{1}),  ...,h(l_{n})\right) \\&\ \ \ \ \ \ {\sf HF}_{\left(M',vol_{\mid M'},\emptyset, \{\partial M' \to [y]\}\right)} \otimes {\sf HF}_{\left(M\setminus {\sf Int}({M'}),vol_{\mid M\setminus {\sf Int}({M'})},\emptyset, \{\partial M \to [x], \partial M' \to [y]\}\right)} (dh) dy \\
&\ = \int_{G} \int\ f\left(h(l_{1}),  ...,h(l_{n})\right) \\& \ \ \  \ \ \ {\sf HF}_{\left(M',vol_{\mid M'},\emptyset, \{\partial M' \to [y]\}\right)} (dh) Z^{+}_{2,0,vol(M\setminus {\sf Int}({M'}))}(x,y^{-1}) dy. 
\end{align*}
Thus, if we only consider the restriction on $\mathcal{M}ult\left(P(M'),G\right)$: 
$$\widehat{{\sf HF}}_{\left(M,vol,\emptyset, \{\partial M \to [x]\}\right)} = \int_{G}\widehat{{\sf HF}}_{\left(M',vol_{\mid M'},\emptyset, \{\partial M' \to [y]\}\right)}  Z^{+}_{2,0,vol(M\setminus {\sf Int}({M'}))}(x,y^{-1})dy.$$
Recall that the partition functions of ${\sf HF}$ and ${\sf YM}$ are equal. Thus, again if we only consider the restriction on $\mathcal{M}ult\left(P(M'),G\right)$:  
$$\widehat{{\sf YM}}_{\left(M,vol,\emptyset, \{\partial M \to [x]\}\right)} = \int_{G}\widehat{{\sf YM}}_{\left(M',vol_{\mid M'},\emptyset, \{\partial M' \to [y]\}\right)}  Z^{+}_{2,0,vol(M\setminus {\sf Int}({M'}))}(x,y^{-1})dy.$$
Once we restrain the measures on $\mathcal{M}ult\left(P(M'),G)\right)$, using Equation (\ref{eq:egaps}), for any $x \in G$: 
\begin{align*}
\widehat{{\sf HF}}_{\left(M,vol,\emptyset, \{\partial M \to [x]\}\right)} &= \int_{G}\widehat{{\sf HF}}_{\left(M',vol_{\mid M'},\emptyset, \{\partial M' \to [y]\}\right)}  Z^{+}_{2,0,vol(M\setminus {\sf Int}({M'}))}(x,y^{-1})dy \\&= \int_{G}\widehat{{\sf YM}}_{\left(M',vol_{\mid M'},\emptyset, \{\partial M' \to [y]\}\right)}  Z^{+}_{2,0,vol(M\setminus {\sf Int}({M'}))}(x,y^{-1})dy \\&= \widehat{{\sf YM}}_{\left(M,vol,\emptyset, \{\partial M \to [x]\}\right)}. \end{align*}
This proves that the Equality (\ref{eq:egaps}) now holds for any $x$ in $G$. \end{proof}

\begin{landscape}
\label{diagram}
\begin{align*}
 \xymatrix{ & \textbf{Planar Objects} && &&& \textbf{General Objects} \\
 		 & && &&& \\
 		 \text{Strong } \ar@/^1.5pc/[r]^{\text{Restriction}} & \text{Weak} \ar@/^.5pc/[d]^{\text{Thm. \ref{weakconst}}} && \text{Free boundary expectation} \ar@/^1.5pc/[ll]^{\text{Thm. \ref{freecondplanarHF}}} &&& \text{Regular Markovian H.F.} \ar@/^1.5pc/[dd]^{\text{Conj. \ref{conjectureHF}}} \ar@/^-2pc/[ddddlll]_{\text{Partition function, Thm. \ref{partitionlevy}\ \ \ \ }} \ar@/^1.5pc/[lll]^{\text{Def. \ref{freebound}}} \\
 		 & \text{Weak discrete} \ar@/^0.5pc/[d]^{\text{Thm. \ref{caracterisation}}} && &&& \\
		 \text{Strong discrete} \ar@/^1.5pc/[uu]^{\text{Thm. \ref{exten3}}}& \text{Planar Y.M.} \ar@/^1.5pc/[l]^{\text{Thm. \ref{defplanarHFgeneral}}} && &&& \text{Y.M. measures}\ar@/^1.5pc/[uu]^{\text{Def. \ref{YMfields}}} \\
		 & && &&& \\
		 \textbf{L\'{e}vy processes} & \text{Self-invariant by conjugation} \ar@/^0.5pc/[uu]^{\text{Thm. \ref{defplanarHFgeneral}}}&& \text{Admissible}\ar@/^1.5pc/[ll]^{\text{is}} \ar@/^-1.5pc/[uurrr]^{\text{Prop.  \ref{LevyadmissibleYM}}} }
 \end{align*}
 \end{landscape}

%\appendix
%    Include appendix "chapters" here.
%\include{}

\backmatter
%    Bibliography styles amsplain or harvard are also acceptable.
\bibliographystyle{amsalpha}

\bibliography{Gabriel-biblio}

%    See note above about multiple indexes.
%\printindex

\end{document}